\newtheorem{theorem}{Theorem}[section]
\newtheorem*{theorem*}{Theorem}
\newtheorem{proposition}[theorem]{Proposition}
\newtheorem*{proposition*}{Proposition}
\newtheorem{lemma}[theorem]{Lemma}
\newtheorem*{lemma*}{Lemma}
\newtheorem{corollary}[theorem]{Corollary}
\newtheorem*{conjecture*}{Conjecture}
\newtheorem{fact}[theorem]{Fact}
\newtheorem*{fact*}{Fact}
\newtheorem*{hypothesis*}{Hypothesis}
\theoremstyle{definition}
\newtheorem{definition}[theorem]{Definition}
\newtheorem*{definition*}{Definition}
\newtheorem{algorithm}[theorem]{Algorithm}
\theoremstyle{remark}
\newtheorem*{claim*}{Claim}
\newtheorem{remark}[theorem]{Remark}
\newtheorem*{remark*}{Remark}
\newtheorem*{observation*}{Observation}
\let\mathbb\varmathbb
\crefname{lemma}{Lemma}{Lemmas}
\crefname{fact}{Fact}{Facts}
\crefname{theorem}{Theorem}{Theorems}
\crefname{corollary}{Corollary}{Corollaries}
\crefname{claim}{Claim}{Claims}
\crefname{example}{Example}{Examples}
\crefname{algorithm}{Algorithm}{Algorithms}
\crefname{problem}{Problem}{Problems}
\crefname{definition}{Definition}{Definitions}
\newcommand{\Authornoteb}[2]{}
 \newcommand{\Authornoter}[2]{}
\newcommand{\Authornotecolored}[3]{}
\newcommand{\Authorcomment}[2]{}
\newcommand{\Authorfnote}[2]{}
\newcommand{\Pnote}{\Authornoter{P}}
\newcommand{\Anote}{\Authornoteb{A}}
\definecolor{forestgreen(traditional)}{rgb}{0.0, 0.27, 0.13}
\newcommand{\paren}[1]{(#1)}
\newcommand{\Paren}[1]{\left(#1\right)}
\newcommand{\abs}[1]{\lvert#1\rvert}
\newcommand{\set}[1]{\{#1\}}
\newcommand{\Set}[1]{\left\{#1\right\}}
\newcommand{\norm}[1]{\lVert#1\rVert}
\newcommand{\Norm}[1]{\left\lVert#1\right\rVert}
\newcommand{\iprod}[1]{\langle#1\rangle}
\newcommand{\Iprod}[1]{\left\langle#1\right\rangle}
\newcommand{\Esymb}{\mathbb{E}}
\newcommand{\Psymb}{\mathbb{P}}
\DeclareMathOperator*{\E}{\Esymb}
\DeclareMathOperator*{\ProbOp}{\Psymb}
\renewcommand{\Pr}{\ProbOp}
\newcommand{\seteq}{\mathrel{\mathop:}=}
\newcommand{\from}{\colon}
\newcommand{\mper}{\,.}
\newcommand{\mcom}{\,,}
\newcommand\bdot\bullet
\DeclareMathOperator{\Tr}{Tr}
\DeclareMathOperator{\poly}{poly}
\DeclareMathOperator{\argmax}{argmax}
\newcommand{\N}{\mathbb N}
\newcommand{\R}{\mathbb R}
\newcommand{\cA}{\mathcal A}
\newcommand{\cB}{\mathcal B}
\newcommand{\cC}{\mathcal C}
\newcommand{\cD}{\mathcal D}
\newcommand{\cE}{\mathcal E}
\newcommand{\cJ}{\mathcal J}
\newcommand{\cN}{\mathcal N}
\newcommand{\cS}{\mathcal S}
\newcommand{\cV}{\mathcal V}
\def\II{\mathbb{I}}
\newcommand{\tmu}{\tilde{\mu}}
\renewcommand{\leq}{\leqslant}
\renewcommand{\geq}{\geqslant}
\renewcommand{\ge}{\geqslant}
\let\epsilon=\varepsilon
\numberwithin{equation}{section}
\newcommand\MYcurrentlabel{xxx}
\newcommand{\MYstore}[2]{%
  \global\expandafter \def \csname MYMEMORY #1 \endcsname{#2}%
}
\newcommand{\MYload}[1]{%
  \csname MYMEMORY #1 \endcsname%
}
\newcommand{\MYnewlabel}[1]{%
  \renewcommand\MYcurrentlabel{#1}%
  \MYoldlabel{#1}%
}
\newcommand{\MYdummylabel}[1]{}
\newcommand{\torestate}[1]{%
  \let\MYoldlabel\label%
  \let\label\MYnewlabel%
  #1%
  \MYstore{\MYcurrentlabel}{#1}%
  \let\label\MYoldlabel%
}
\newcommand{\restatetheorem}[1]{%
  \let\MYoldlabel\label
  \let\label\MYdummylabel
  \begin{theorem*}[Restatement of \cref{#1}]
    \MYload{#1}
  \end{theorem*}
  \let\label\MYoldlabel
}
\newcommand{\restatelemma}[1]{%
  \let\MYoldlabel\label
  \let\label\MYdummylabel
  \begin{lemma*}[Restatement of \cref{#1}]
    \MYload{#1}
  \end{lemma*}
  \let\label\MYoldlabel
}
\newcommand{\restateprop}[1]{%
  \let\MYoldlabel\label
  \let\label\MYdummylabel
  \begin{proposition*}[Restatement of \cref{#1}]
    \MYload{#1}
  \end{proposition*}
  \let\label\MYoldlabel
}
\newcommand{\restatefact}[1]{%
  \let\MYoldlabel\label
  \let\label\MYdummylabel
  \begin{fact*}[Restatement of \prettyref{#1}]
    \MYload{#1}
  \end{fact*}
  \let\label\MYoldlabel
}
\newcommand{\restate}[1]{%
  \let\MYoldlabel\label
  \let\label\MYdummylabel
  \MYload{#1}
  \let\label\MYoldlabel
}
\newcommand{\e}{\epsilon}
\newcommand*{\tr}{\mathrm{tr}}
\DeclareMathOperator{\pE}{\tilde{\mathbb{E}}}
\newcommand{\1}{\bm{1}}
\newcommand{\spe}{\mathsf{spectral}}
\def\expec#1#2{{\bf \mathbb{E}}_{#1}[ #2 ]}
\def\expecf#1#2{{\bf \mathbb{E}}_{#1}\left[ #2 \right]}
\def\abs#1{\left|#1  \right|}
\def\norm#1{\left\| #1 \right\|}
\def \dtv{d_{\mathsf{TV}}}
\def\expec#1#2{{\bf \mathbb{E}}_{#1}[ #2 ]}
\def\expecf#1#2{{\bf \mathbb{E}}_{#1}\left[ #2 \right]}
\def\tzeta{\tilde{\zeta}}
\title{
  Outlier-Robust Clustering of Non-Spherical Mixtures
}
\author{
    Ainesh Bakshi\thanks{Carnegie Mellon University} \\ abakshi@cs.cmu.edu  \and Pravesh K. Kothari \thanksmark{1} \\praveshk@cs.cmu.edu
}
\begin{document}

\pagestyle{empty}


\maketitle
\thispagestyle{empty} 


\begin{abstract}
We give the first outlier-robust efficient algorithm for clustering a mixture of $k$ statistically separated $d$-dimensional Gaussians ($k$-GMMs). Concretely, our algorithm takes input an $\epsilon$-corrupted sample from a $k$-GMM and whp in $d^{\poly(k/\eta)}$ time,  outputs an approximate clustering that mis-classifies at most $k^{O(k)}(\epsilon+\eta)$ fraction of the points whenever every pair of mixture components are separated by $1-\exp(-\poly(k/\eta)^k)$ in total variation (TV) distance. Such a result was not previously known even for $k=2$. 

TV separation is the statistically weakest possible notion of separation and captures important special cases such as \emph{mixed linear regression} and \emph{subspace clustering}. In particular, it allows clustering of mixtures where all components have the same mean and covariances differ in a single unknown direction or are separated in Frobenius distance. 

Our main conceptual contribution is to distill simple analytic properties - (certifiable) \emph{hypercontractivity} and bounded-variance of degree $2$ polynomials and \emph{anti-concentration} of linear projections - that are necessary and sufficient for mixture models to be (efficiently) clusterable. As a consequence, our results extend to clustering mixtures of arbitrary affine transforms of the uniform distribution on the $d$-dimensional unit sphere. Even the information theoretic clusterability of separated distributions satisfying these two analytic assumptions was not known prior to our work and is likely to be of independent interest.

Our algorithms build on the recent sequence of works relying on \emph{certifiable anti-concentration} first introduced in~\cite{DBLP:journals/corr/abs-1905-05679,DBLP:journals/corr/abs-1905-04660}. Our techniques expand the sum-of-squares toolkit to show robust \emph{certifiability} of TV-separated Gaussian clusters in data. This involves giving a low-degree \emph{sum-of-squares proof} of statements that relate parameter (i.e. mean and covariances) distance to total variation distance  by relying only on degree $2$ polynomial concentration and anti-concentration. 

\end{abstract}

\clearpage


  \microtypesetup{protrusion=false}
  \setcounter{tocdepth}{1}
  \tableofcontents{}

  \microtypesetup{protrusion=true}

\clearpage

\pagestyle{plain}
\setcounter{page}{1}


\section{Introduction} \label{sec:intro}
A flurry of recent work has focused on designing outlier-robust efficient algorithms for statistical estimation for basic tasks such as estimating mean, covariance ~\cite{DBLP:conf/focs/LaiRV16,DBLP:conf/focs/DiakonikolasKK016,DBLP:conf/stoc/CharikarSV17,DBLP:journals/corr/abs-1711-11581,DBLP:journals/corr/SteinhardtCV17,DBLP:conf/soda/0002D019,DBLP:conf/icml/DiakonikolasKK017,DBLP:conf/soda/DiakonikolasKK018,DBLP:conf/colt/0002D0W19}, moment tensors~\cite{DBLP:journals/corr/abs-1711-11581} of distributions, regression~\cite{DBLP:conf/focs/DiakonikolasKS17,DBLP:conf/colt/KlivansKM18,DBLP:conf/icml/DiakonikolasKK019,DBLP:journals/corr/abs-1802-06485,DBLP:journals/corr/abs-1905-05679,DBLP:journals/corr/abs-1905-04660}, and clustering of spherical mixtures~\cite{DBLP:conf/focs/DiakonikolasKS17,KothariSteinhardt17,HopkinsLi17}. This progress (see~\cite{diakonikolas2019recent} for a recent survey) has come via fundamentally new algorithmic techniques such as agnostic filtering~\cite{DBLP:conf/focs/DiakonikolasKK016} and robust-learning frameworks based on the sum-of-squares method in both the strong contamination~\cite{KothariSteinhardt17,DBLP:journals/corr/abs-1711-11581,HopkinsLi17} and list-decodable learning models \cite{ben2002size,DBLP:journals/corr/abs-1905-05679,DBLP:journals/corr/abs-1905-04660,raghavendra2020list}.

In this paper, we extend this line of work by studying outlier-robust \emph{clustering} of mixtures of distributions that exhibit mean or covariance separation. As a corollary, we obtain a poly-time outlier-robust algorithm for clustering mixtures of $k$-Gaussians ($k$-GMMs) when each pair of components is separated in total variation (TV)\footnote{The TV distance between distributions with PDFs $p,q$ is defined as $\frac{1}{2}\int_{-\infty}^{\infty}|p(x)-q(x)|dx$.} distance. This is the information-theoretically weakest notion of separation, allows components of same mean but variances differing in an unknown direction\footnote{As an interesting example, consider the case of subspace clustering: mixture of standard Gaussians restricted to unknown distinct subspaces. The components have a TV distance of 1 regardless of how close the subspaces are and thus satisfy our assumptions. } or covariances separated in \emph{relative} Frobenius distance (see Fig~\ref{fig:separability conditions}) and includes well-studied problems such as \emph{mixed linear regression} and \emph{subspace clustering} as special cases.

\paragraph{Clustering all Hypercontractive and Anti-Concentrated Distributions.} The Gaussian Mixture Model has been the subject of a century-old line of research beginning with Pearson~\cite{pearson1894contributions}. A $k$-GMM $\sum_{r\leq k} p_r \cN(\mu(r),\Sigma(r))$ is a probability distribution sampled by choosing a component $r \sim [k]$ with probability $p_r$ and outputting a sample from the Gaussian distribution with mean $\mu(r)$ and covariance $\Sigma(r)$. In the $k$-GMM learning problem, the goal is to output an approximate \emph{clustering} of the input sample or estimate the parameters (the mean and covariances) of the components. Progress on provable algorithms for learning $k$-GMMs began with the influential work of Dasgupta~\cite{dasgupta1999learning} followed up by~\cite{sanjeev2001learning,vempala2004spectral,brubaker2008isotropic,brubaker2009robust} yielding  clustering algorithms that succeed under various separation assumptions. These assumptions, however, do not capture natural separated instances of Gaussians (e.g., see (b) or (c) in Fig~\ref{fig:separability conditions}). A more general approach~\cite{DBLP:conf/stoc/KalaiMV10,DBLP:conf/focs/MoitraV10,DBLP:journals/siamcomp/BelkinS15} circumvents clustering altogether by giving an efficient algorithm ( time $\sim d^{\poly(k)}$) for parameter estimation without any separation assumptions.

Our main result is a polynomial-time algorithm based on the sum-of-squares (SoS) method for clustering TV-separated $k$-GMMs in the presence of an $\epsilon$-fraction of fully adversarial outliers. Such a result was not known prior to our work even for $k=2$. Our algorithms actually succeed more generally for mixtures of all distributions that satisfy two well-studied analytic conditions: certifiable \emph{anti-concentration} and certifiable \emph{hypercontractivity} and thus apply, for e.g., to clustering mixtures of arbitrary affine transforms of uniform distribution on the unit sphere. We consider identifying clean analytic conditions that enable the existence of efficient clustering algorithms an important contribution of our work.

\paragraph{Techniques.}  
Our work is naturally related to the recent progress (see Chapter 4~\cite{TCS-086} for an exposition) on learning spherical mixtures\footnote{More generally, the SoS-based algorithms succeed when the means of the components are separated when compared to the maximum variance of the components in any direction.} of Gaussians~\cite{diakonikolas2018list,KothariSteinhardt17,HopkinsLi17} and more generally, all Poincaré distributions~\cite{KothariSteinhardt17}.  These results rely on subgaussian moment \emph{upper bounds} and extend to the outlier-robust setting. However, moment upper bounds are inherently insufficient to cluster non-spherical mixtures. Informally, this is because the property of having subgaussian moment upper bounds is closed under taking mixtures and thus cannot distinguish between a single Gaussian and mixture of a few.

Indeed, it was ``folklore'' that obtaining generalization of the results above to non-spherical mixtures will likely require algorithmic use of \emph{moment lower bounds}. A recent line of work begun by~\cite{DBLP:journals/corr/abs-1905-05679,DBLP:journals/corr/abs-1905-04660} and further built on in~\cite{bakshi2020list,raghavendra2020list} introduced \emph{certifiable anti-concentration} that allows algorithmically accessing moment lower-bounds to solve list-decodable variants (harsher outlier model than ours) of regression and subspace recovery. An important technical contribution of our work is to show that moment lower-bounds, inferred from anti-concentration inequalities along with certifiable hypercontractivity of degree-2 polynomials are enough to obtain the desired generalization for clustering of all TV-separated mixtures.

The key technical contribution of our work is a low-degree sum-of-squares proof of a basic statistical statement that gives a strong, dimension-independent bound relating closeness of distribution in \emph{total variation distance} (TV) to an appropriate \emph{parameter distance} between their means and covariances. Our proof of this basic result works for all distributions that satisfy (certifiable)  anti-concentration and hypercontractivity of degree-2 polynomials. To the best of our knowledge, even the information-theoretic relationship between total variation and parameter distances of such distributions was not known prior to our work. Further, in Section~\ref{sec:subgaussian-no-TV-vs-param}, we give a simple proof by exhibiting two (certifiably) hypercontractive (and, thus, also subgaussian) distributions that are $1-\eta$  close in TV distance but arbitrarily far in parameter distance showing that moment upper bounds are provably not enough for the TV vs parameter distance relationships to hold.

Along the way, we grow the general purpose SoS toolkit for algorithm design. For instance, we give low-degree sum-of-squares formulations of \emph{conditional} arguments using uniform polynomial approximators and basic matrix analytic facts (see for e.g. Lemma~\ref{lem:contraction-property}). As another application of our techniques, we give an outlier-robust algorithm for covariance estimation of all certifiable hypercontractive distributions with $\tilde{O}(\epsilon)$ relative Frobenius error guarantee. All prior works~\cite{DBLP:journals/corr/abs-1711-11581,DBLP:conf/focs/LaiRV16} either gave error guarantees in spectral norm, which only translate into dimension dependent guarantees for relative Frobenius distance, or worked only for the Gaussian distribution~\cite{DBLP:conf/focs/DiakonikolasKK016}). Combined with our outlier-robust clustering algorithm, we obtain a statistically optimal outlier-robust parameter estimation algorithms for mixtures of Gaussians.





\subsection{Our Results}\label{sec:results}
\paragraph{Outlier-Robust Clustering of $k$-GMMs.}
Our main result is an efficient algorithm for outlier-robust clustering of $k$-GMMs whenever every pair of components of the mixture are separated in total variation distance. Formally, our algorithms work  in the \emph{strong contamination} model studied in the bulk of the prior works on robust estimation where an adversary changes an arbitrary, potentially adversarially chosen $\epsilon$-fraction of the input sample before passing it on to the algorithm.

\begin{theorem}[Main Result, Outlier-Robust Clustering of $k$-GMMs]
\label{thm:clustering-robust-main}
Fix $\eta, \epsilon > 0$.
Let $\cD_r = \cN(\mu(r),\Sigma(r))$ for $r \leq k$ be $k$-Gaussians such that $d_{TV}(\cD_r, \cD_{r'}) \geq 1- \exp(-\poly(k/\eta))$ whenever $r \neq r'$.
Then, there exists an algorithm that takes input an $\epsilon$-corruption $Y$ of a sample $X = C_1 \cup C_2 \cup \ldots \cup C_k$ of size $n$, with equal sized clusters $C_i$ drawn i.i.d. from $\cD_i$ for each $r \leq k$, and with probability $\geq 0.99$, outputs an approximate clustering $Y = \hat{C}_1 \cup \hat{C}_2 \cup \ldots \cup \hat{C}_k$ satisfying $\min_{i \leq k} \frac{|\hat{C}_i\cap C_i|}{|C_i|}  \geq 1 - O(k^{2k}) (\epsilon+\eta)$. The algorithm succeeds whenever $n \geq d^{O(\poly(k/\eta))}$ and runs in time $n^{O\Paren{\poly( k/\eta))}}$.
\end{theorem}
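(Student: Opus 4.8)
The plan is to first reduce the Gaussian statement to a purely analytic one. Gaussians satisfy certifiable hypercontractivity of degree-$2$ polynomials and certifiable anti-concentration of linear forms (with $O(1)$ and $\poly(1/\delta)$ parameters respectively), and these certifiable properties are inherited, with slightly degraded constants, by the empirical distribution of $n \ge d^{O(\poly(k/\eta))}$ i.i.d.\ samples with high probability. So it suffices to design an algorithm that, given an $\epsilon$-corrupted sample $Y$ of $n$ points drawn equally from $k$ distributions satisfying these two analytic conditions and pairwise $(1-\beta)$-separated in TV with $\beta = \exp(-\poly(k/\eta))$, recovers the clustering up to $O(k^{2k})(\epsilon+\eta)$ error per cluster. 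From here I work with $Y$ directly.

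The core is a \emph{robust certifiability} lemma: there is a low-degree SoS proof that any $w \in \{0,1\}^n$ with $\sum_i w_i = n/k$ whose induced empirical mean $\mu = \tfrac kn \sum_i w_i y_i$ and covariance $\Sigma = \tfrac kn \sum_i w_i (y_i-\mu)(y_i-\mu)^{\mkern-1.5mu\mathsf T}$ make the reweighted points $\{y_i : w_i=1\}$ certifiably hypercontractive and anti-concentrated in the metric induced by $\Sigma$ must intersect some true cluster $C_r$ in a $\bigl(1-O(\poly(k))(\epsilon+\eta)\bigr)$-fraction of its mass. This is proved by giving an SoS version of the statement ``if two distributions satisfy the analytic conditions and are $(1-\beta)$-TV-close then their means and covariances are close in relative norm,'' together with its contrapositive: two data-derived putative clusters that are not parameter-close, yet both certifiably anti-concentrated and hypercontractive, admit a degree-$2$ polynomial (or linear) test separating them, hence cannot overlap. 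The mean-separation direction is handled by anti-concentration of linear projections; the covariance-only-separation regime (equal means, covariances differing in an unknown direction or in relative Frobenius distance) is handled via the conditional-argument toolkit built on uniform polynomial approximators and basic matrix facts (Lemma~\ref{lem:contraction-property}), which lets me argue inside SoS that an excess-variance direction for a degree-$2$ polynomial on one cluster yields a separating test.

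Next I set up the SoS relaxation. Variables: $w=(w_1,\dots,w_n)$, $\mu \in \R^d$, $\Sigma \in \R^{d\times d}$; constraints: $w_i^2 = w_i$, $\sum_i w_i = n/k$, the consistency equations for $\mu,\Sigma$ above, the certifiable-hypercontractivity inequalities for degree-$2$ polynomials of $(y_i-\mu)$ reweighted by $w$, and the certifiable-anti-concentration inequalities (polynomial approximators of small-ball indicators in the $\Sigma$-metric) reweighted by $w$. Since each true cluster indicator $\1_{C_r}$ is feasible up to the $\epsilon$-corruption, the degree-$O(\poly(k/\eta))$ relaxation is nonempty; solve it to get a pseudo-distribution $\pE$. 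Conditioning on a random seed index $i$ with $w_i = 1$, the robust certifiability lemma applied through SoS soundness gives that $\pE\bigl[\langle w, \1_{C_{r(i)}}\rangle \mid w_i=1\bigr] \ge (1-O(\poly(k))(\epsilon+\eta))\, n/k$ for the true cluster $r(i)$ of $i$, while $\pE\bigl[\langle w, \1_{C_{r'}}\rangle \mid w_i=1\bigr]$ is $O(\poly(k))(\epsilon+\eta)\,n/k$ for $r'\ne r(i)$. Equivalently, the $i$-th row of $M = \pE[ww^{\mkern-1.5mu\mathsf T}]$, suitably renormalized, is $O(\poly(k)(\epsilon+\eta))$-close in $\ell_1$ to $\1_{C_{r(i)}}$ for most $i$.

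Finally, round and iterate: threshold the $i$-th row of $M$ at $\tfrac12$ to recover $\hat C_{r(i)}$, greedily pick one fresh seed per not-yet-covered cluster, and assign leftover points arbitrarily. Each extracted cluster is misclassified by an $O(\poly(k))(\epsilon+\eta)$-fraction; the peeling over $k$ clusters — where removing one cluster perturbs the feasible reweightings for the remaining subproblem — compounds this to $O(k^{2k})(\epsilon+\eta)$. Sample complexity $d^{O(\poly(k/\eta))}$ is forced by requiring certifiable hypercontractivity and anti-concentration to hold empirically at degree $\poly(k/\eta)$, and the running time $n^{O(\poly(k/\eta))}$ is the cost of solving the corresponding SoS SDP. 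The main obstacle is the robust certifiability lemma: producing a genuinely \emph{low-degree} SoS proof of the TV-versus-parameter-distance inequality that is stable under an $\epsilon$-fraction of outliers and that covers the covariance-only-separated regime; once that is in hand, the relaxation, rounding, and peeling are assembled from now-standard SoS machinery.
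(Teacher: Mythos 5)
Your high-level scaffolding — reduction to certifiable hypercontractivity and anti-concentration, an SoS relaxation over cluster indicators with consistency constraints on $\mu,\Sigma$, a TV-versus-parameter-distance lemma proven in SoS, and rounding $M=\pE[ww^{\top}]$ — matches the paper. But there is a genuine gap that makes the proposal, as stated, not a proof of Theorem~\ref{thm:clustering-robust-main}: you assume the robust certifiability (simultaneous intersection) lemma has a degree-$O(\poly(k/\eta))$ SoS proof, and the rest is standard. This is exactly what fails. The natural SoS proof of the intersection bounds (the one built from the conditional-argument toolkit and threshold polynomial approximators you cite) requires degree $\Theta(s(\delta)^2 \log\kappa)$, where $\kappa = \sup_v \max_{i,j}\tfrac{v^{\top}\Sigma(i)v}{v^{\top}\Sigma(j)v}$ is the spread of the mixture. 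TV separation places \emph{no bound} on $\kappa$ — two TV-separated Gaussians can have variances differing by an arbitrary factor in some direction — so this direct route yields a running time that is not $n^{O(\poly(k/\eta))}$. The obstruction is concrete: to use the conditional/threshold argument you need a rough uniform upper bound on $v^{\top}\Sigma(w)v$ in terms of the $v^{\top}\Sigma(r)v$, and the only such bound available from the constraints scales with $\max_r v^{\top}\Sigma(r)v$, which may dwarf the two specific components you are trying to separate in direction $v$; getting the SoS proof to ignore the other components' variances is precisely what costs $\log\kappa$ degree (see Remark~\ref{remark:difficulty-rough-upper-bound}).

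The paper removes the spread dependence with a fundamentally different algorithmic structure that your proposal omits: a recursive partial-clustering scheme with a constant-degree verification subroutine (Section~\ref{sec:better-algo}). The two curious facts driving it are (i) a constant-degree SoS relaxation already certifies a \emph{partial} block structure in $M$ — a bipartition $S\cup L$ of $[k]$ such that each true cluster lands almost entirely on one side — because for the right bipartition the relevant ratio $\tfrac{v^{\top}\Sigma(i)v}{v^{\top}\Sigma(r')v}$ is at most $\Delta_{rob}^{O(k)}$ rather than $\kappa$; and (ii) while \emph{finding} a full clustering needs degree growing with $\log\kappa$, \emph{verifying} a purported cluster (i.e., refuting $\cV(\hat C)$ when $\hat C$ straddles two components) can be done at constant degree, because the intersections $m'(C_r),m'(C_{r'})$ are concrete scalars rather than variables, so the case analysis that forced the threshold polynomial is unnecessary. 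The algorithm then tries all $2^k$ bipartitions at each of $k$ recursion levels with outlier parameter growing by $O(k^3)$ per level; this, not error compounding during a one-pass peeling, is the source of the $k^{O(k)}$ factor. Your peeling argument — thresholding rows, seeding, assigning leftovers — is essentially the paper's \emph{simple} rounding (Algorithm~\ref{algo:rounding-for-pseudo-distribution-robust}), which achieves $O(\eta+k\epsilon)$ error in one shot but only after solving an SDP of degree $\Theta(\log\kappa)$. To complete your argument you must either (a) prove a $\kappa$-free low-degree SoS proof of the simultaneous intersection bound, which the paper's discussion strongly suggests does not exist by this route, or (b) adopt the recursive partial clustering plus constant-degree verification structure.
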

We can use off-the-shelf robust estimators for mean and covariance of Gaussians(~\cite{DBLP:conf/focs/DiakonikolasKK016}) in order to get statistically optimal estimates of the mean and covariances of the target $k$-GMM.
\begin{corollary}[Parameter Recovery from Clustering]
In the setting of Theorem~\ref{thm:clustering-robust-main}, with the same running time, sample complexity and success probability, our algorithm can output $\{\hat{\mu}(r),\hat{\Sigma}(r)\}_{r \leq k}$ such that for some permutation $\pi:[k] \rightarrow [k]$, $d_{TV}(\cN(\mu(r),\Sigma(r)), \cN(\hat{\mu}(\pi(r)),\hat{\Sigma}(\pi(r))) \leq \tilde{O}(k^{2k}(\epsilon+\eta))$.
\end{corollary}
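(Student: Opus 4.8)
The plan is to invoke \cref{thm:clustering-robust-main} to obtain the clustering, then run an off-the-shelf outlier-robust estimator for the mean and covariance of a \emph{single} Gaussian --- for instance that of~\cite{DBLP:conf/focs/DiakonikolasKK016} --- separately on each returned cluster, and finally translate the resulting parameter-distance guarantees into a total variation bound via the classical (easy) direction of the TV-versus-parameter correspondence for Gaussians. Throughout, write $\gamma \defeq O(k^{2k})(\epsilon+\eta)$ for the misclassification rate supplied by \cref{thm:clustering-robust-main}.

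First I would check that each returned cluster is itself a lightly corrupted i.i.d.\ Gaussian sample. From \cref{thm:clustering-robust-main}, $|\hat C_i \cap C_i| \ge (1-\gamma)|C_i| = (1-\gamma)\tfrac{n}{k}$ for every $i$. Since the sets $\hat C_i \cap C_i$ are pairwise disjoint and the $\hat C_i$ partition $Y$ with $|Y| = n$, we get $|\hat C_i| \le n - \sum_{j \ne i}(1-\gamma)\tfrac nk \le \tfrac nk\paren{1 + k\gamma}$, and hence $|\hat C_i \symdiff C_i| = O(k\gamma)\cdot \tfrac nk$. Thus $\hat C_i$ is obtained from the clean sample $C_i \sim \cN(\mu(i),\Sigma(i))^{\otimes n/k}$ by adversarially inserting and deleting at most an $O(k\gamma)$-fraction of its points --- exactly the input promised to a strong-contamination robust estimator for a single Gaussian --- where we will absorb the extra $\poly(k)$ factor into the $\tilde{O}(\cdot)$ of the final bound.

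Next I would apply the robust Gaussian estimator of~\cite{DBLP:conf/focs/DiakonikolasKK016} to each $\hat C_i$. Since $|\hat C_i| = \Theta(n/k) \ge d^{\Omega(\poly(k/\eta))}$, the sample is comfortably larger than needed, so with probability at least $1 - 0.01/k$ it outputs $\hat\mu(i),\hat\Sigma(i)$ with
\[
\Norm{\Sigma(i)^{-1/2}(\hat\mu(i)-\mu(i))}_2 \le \tilde{O}(\gamma)
\qquad\text{and}\qquad
\Norm{\Sigma(i)^{-1/2}\hat\Sigma(i)\,\Sigma(i)^{-1/2} - \Id}_F \le \tilde{O}(\gamma)\mper
\]
The crucial point is that the covariance error is controlled in \emph{relative Frobenius} (equivalently, affine-invariant) norm rather than merely in spectral norm; this is exactly what makes the final step dimension-free, and it is the only place the argument uses Gaussianity of the components --- for the broader distribution classes of the paper one would instead plug in the robust covariance estimator developed here. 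Running this estimator costs $\poly(n,d)$ time, which is subsumed by the $n^{O(\poly(k/\eta))}$ runtime of \cref{thm:clustering-robust-main}, and a union bound over the $k$ clusters keeps the overall failure probability below $0.01$.

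Finally I would invoke the standard inequality $d_{TV}\paren{\cN(\mu_1,\Sigma_1),\cN(\mu_2,\Sigma_2)} = O\paren{\Norm{\Sigma_1^{-1/2}(\mu_1-\mu_2)}_2 + \Norm{\Sigma_1^{-1/2}\Sigma_2\Sigma_1^{-1/2} - \Id}_F}$, valid whenever the right-hand side is at most a small absolute constant; this is the easy direction of the TV/parameter relationship and requires none of the SoS machinery of this paper. Combining with the bounds above gives $d_{TV}(\cN(\mu(i),\Sigma(i)),\cN(\hat\mu(i),\hat\Sigma(i))) \le \tilde{O}(\gamma) = \tilde{O}(k^{2k}(\epsilon+\eta))$ for every $i$, with $\pi$ the permutation already fixed by \cref{thm:clustering-robust-main} when it labels $\hat C_i$ to match $C_i$. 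I do not expect a genuine obstacle here: the one thing to be careful about is the point already noted --- the single-Gaussian robust estimator must give a \emph{relative Frobenius} covariance guarantee (a spectral-norm bound would only yield a dimension-dependent TV estimate) --- and with that in place the remainder is routine bookkeeping.
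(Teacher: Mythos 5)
Your proposal is correct and matches the paper's own (implicit) argument: the paper justifies this corollary with a one-line remark that one should apply the off-the-shelf robust Gaussian estimator of~\cite{DBLP:conf/focs/DiakonikolasKK016} to each returned cluster, and you have simply filled in the elementary bookkeeping (that each $\hat C_i$ is an $O(k\gamma)$-corrupted single-Gaussian sample, that the estimator's relative-Frobenius guarantee is what is needed, and the conversion to TV distance). One small point your accounting glosses over is that beyond the $O(k\gamma)\cdot n/k$ misclassified indices, up to $\epsilon n$ of the indices inside $\hat C_i \cap C_i$ may carry adversarially modified data points in $Y$ rather than clean Gaussian samples; since $\epsilon \le \gamma$ this does not change the $O(k\gamma)$ corruption rate, so the argument goes through as stated.
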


\paragraph{Discussion}
These are the first outlier-robust algorithms that work for clustering $k$-GMMs under information-theoretically optimal separation assumptions. Such results were not known even for $k=2$. To discuss the bottlenecks in prior works, it is helpful to use (see Prop~\ref{prop:tv-vs-param-for-gaussians} in Section~\ref{Sec:tv-vs-param-gaussian} for a proof) following consequence of two Gaussians with means $\mu(1),\mu(2)$ and covariances $\Sigma(1),\Sigma(2)$ being at a TV distance $\geq 1-\exp(-O(\Delta^2))$ in terms of the distance between their parameters.

\begin{figure}[htb!]
    \centering
    \includegraphics[scale=0.5]{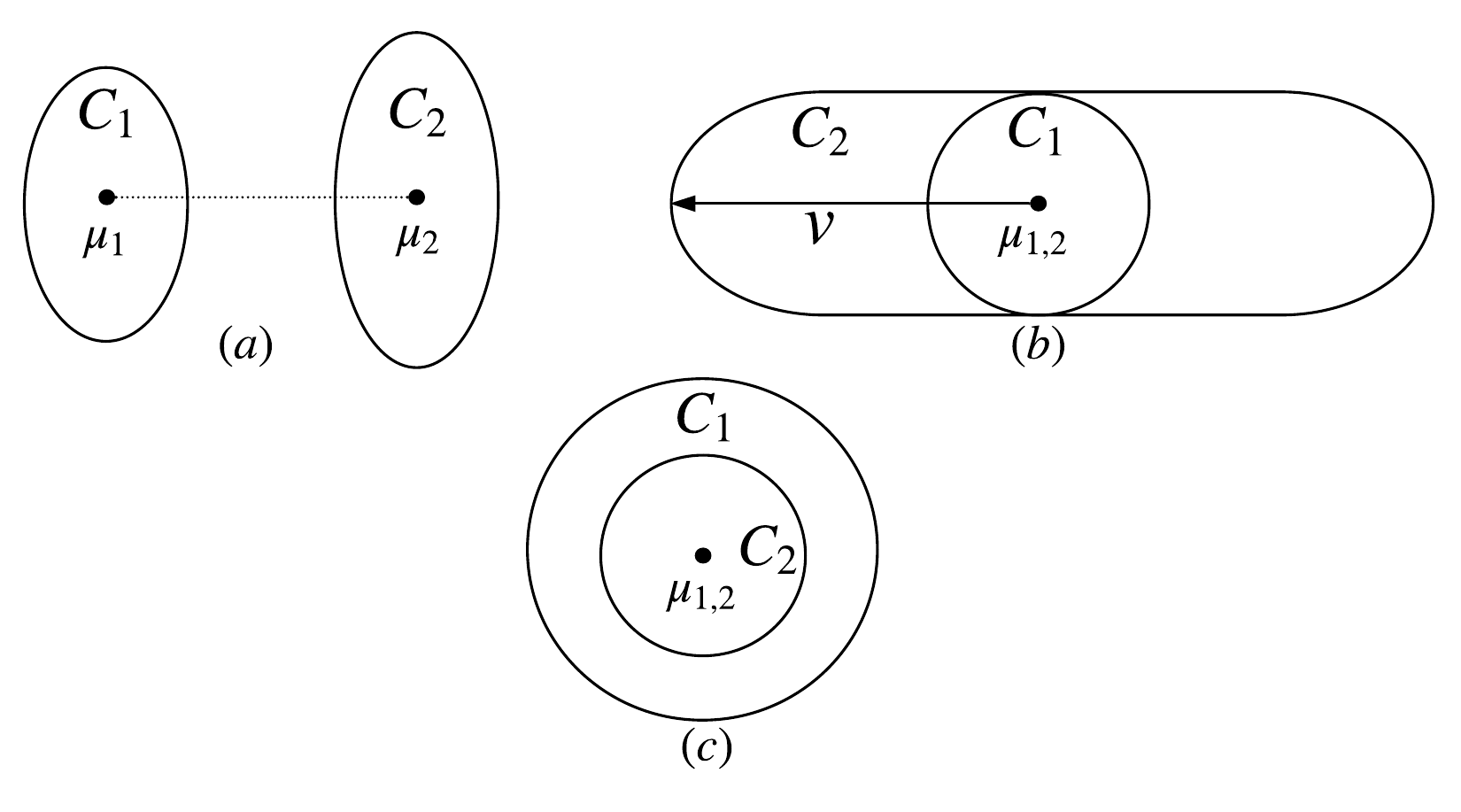}
    \caption{(a) Mean Separation (b) Spectral Separation (c) Relative Frobenius Separation}
    \label{fig:separability conditions}
\end{figure}
\vspace{-3mm}
\begin{definition}[$\Delta$-Separated Mixture Model] \label{def:separated-mixture-model}
An equi-weighted mixture $\cD_1, \cD_2,\ldots, \cD_k$ with parameters $\{\mu(i),\Sigma(i)\}_{i \leq k}$ is $\Delta$-separated if for every pair of distinct components $i,j$, one of the following three conditions hold ($\Sigma^{\dagger/2}$ is the  square root of pseudo-inverse of $\Sigma$):
\begin{enumerate}
  \item \textbf{Mean-Separation:} $\exists v \in \R^d$ such that $\langle \mu(i) - \mu(j), v \rangle^2 > \Delta^2 v^{\top} (\Sigma(i) + \Sigma(j)) v$,
  \item \textbf{Spectral-Separation:} $\exists v \in \R^d$ such that $v^{\top} \Sigma(i) v > \Delta v^{\top}\Sigma(j)v$,
  \item \textbf{Relative-Frobenius Separation:}\footnote{Unlike the other two distances, relative Frobenius distance is meaningful only for high-dimensional Gaussians. As an illustrative example, consider two $0$ mean Gaussians with covariances $\Sigma_1= I$ and $\Sigma_2 = (1+ \Theta(1/\sqrt{d}))I$. Then, for large enough $d$, the parameters are separated in relative Frobenius distance but not spectral or mean distance.} $\Sigma(i)$ and $\Sigma(j)$ have the same range space and $\Norm{ \Sigma(i)^{\dagger/2} \Sigma(j) \Sigma(i)^{\dagger/2}-I}^2_F > \Delta^2 \Norm{\Sigma(i)^{\dagger/2} \Sigma(j)^{1/2}}_{op}^4$.
\end{enumerate}
\end{definition}
The key bottleneck for known algorithms was handling separation in cases 2 and 3 above. 

\emph{Dependence on $k$.} The dependence on the number of components $k$ in our result is doubly exponential. A singly exponential lower bound in the statistical query model (for even the non-robust variant) was shown by Diakonikolas, Kane and Stewart~\cite{DBLP:conf/focs/DiakonikolasKS17}.

\emph{Dependence on $\epsilon$:} While the information-theoretically optimal bound on fraction of misclassified samples is $O(k\epsilon)$, we only obtain the weaker bound of $k^{O(k)}\epsilon$. Our algorithms in Sections~\ref{sec:clustering},~\ref{sec:outlier-robust-clustering} do obtain this the stronger $O(k\epsilon)$ guarantee at the cost of a larger running time. We believe it should be possible to match the optimal recovery guarantee without incurring this running time penalty.

\emph{Handling General Weights.} While we have not attempted to do it in this work, it seems possible to generalize our techniques to handle arbitrary mixing weights albeit with an exponential dependence on the reciprocal of the smallest mixing weight in both the running time and sample complexity on the algorithm.

\paragraph{Clustering and Parameter Recovery for all Reasonable Distributions.}
Our results apply more generally to mixture models where each component distribution $\cD$ satisfies two natural and well-studied analytic conditions: \emph{hypercontractivity} of degree 2 polynomials and \emph{anti-concentration} of all directional marginals. Our algorithmic results hold for distributions (e.g. Gaussians and affine transforms of uniform distribution on the unit sphere) that admit efficiently verifiable analogs (in the SoS proof system, see Sec~\ref{sec:preliminaries}) of these properties.
\begin{definition}[Certifiable Hypercontractivity]
\label{def:certifiable-hypercontractivity}
An isotropic distribution $\cD$ on $\R^d$ is said to be $h$-certifiably $C$-hypercontractive if there's a degree $h$ sum-of-squares proof of the following unconstrained polynomial inequality in $d \times d$ matrix-valued indeterminate $Q$:
\[ \E_{x \sim \cD} \paren{x^{\top}Qx}^{h} \leq \paren{Ch}^{h} \Paren{\E_{x \sim \cD} \paren{x^{\top}Qx}^2}^{h/2}\mper
\]

A set of points $X \subseteq \R^d$ is said to be  $C$-certifiably hypercontractive if the uniform distribution on $X$ is  $h$-certifiably $C$-hypercontractive.
\end{definition}

Hypercontractivity is an important notion in high-dimensional probability and analysis on product spaces~\cite{MR3443800-ODonnell14}. Kauers, O'Donnell, Tan and Zhou~\cite{DBLP:conf/soda/KauersOTZ14} showed certifiable hypercontractivity of Gaussians and more generally product distributions with subgaussian marginals. Certifiable hypercontractivity strictly generalizes the better known \emph{certifiable subgaussianity} property (studied first in~\cite{DBLP:journals/corr/abs-1711-11581}) that controls higher moments of linear polynomials. 

\paragraph{Certifiable anti-concentration.} In contrast to subgaussianity, anti-concentration forces \emph{lower-bounds} of the form $\Pr[\iprod{x,v}^2 \geq \delta \Norm{v}^2_2] \geq \delta'$ for all directions $v$. Certifiable anti-concentration was recently introduced in independent works of Karmalkar, Klivans and Kothari~\cite{DBLP:journals/corr/abs-1905-05679} and Raghavendra and Yau~\cite{DBLP:journals/corr/abs-1905-04660} and later used ~\cite{bakshi2020list,raghavendra2020list} for the related problems of list-decodable linear regression and subspace recovery\footnote{List-decodable versions of these problems generalize the ``mixture'' variants - mixed linear regression and subspace clustering - that are easily seen to be special cases of mixtures of $k$-Gaussians with TV separation 1.}.

Following~\cite{DBLP:journals/corr/abs-1905-05679}, we formulate  certifiable anti-concentration via a univariate, even polynomial $p_{\delta,\Sigma}$ that uniformly approximates the $0$-$1$ core-indicator $\1(\iprod{x,v}^2 \geq \delta v^{\top}\Sigma v)$ over a large enough interval around $0$. Let $q_{\delta,\Sigma}(x,v)$ be a multivariate (in $v$) polynomial defined by $q_{\delta, \Sigma}(x,v) = \Paren{v^{\top}\Sigma v}^{2s} p_{\delta,\Sigma}\Paren{\frac{\iprod{x,v}}{\sqrt{v^{\top}\Sigma v}}}$.Since $p_{\delta,\Sigma}$ is an even polynomial, $q_{\delta,\Sigma}$ is a polynomial in $v$.



\begin{definition}[Certifiable Anti-Concentration] \label{def:certifiable-anti-concentration}
A mean $0$ distribution $D$ with covariance $\Sigma$ is $2s$-certifiably $(\delta,C\delta)$-anti-concentrated if for $q_{\delta,\Sigma}(x,v)$ defined above,  there exists a degree $2s$ sum-of-squares proof of the following two unconstrained polynomial inequalities in indeterminate $v$:
\[
\Set{\iprod{x,v}^{2s} + \delta^{2s} q_{\delta,\Sigma}(x,v)^2 \geq \delta^{2s} \Paren{v^{\top} \Sigma v}^{2s}} \text{ , } \Set{\E_{x \sim D} q_{\delta,\Sigma}(x,v)^2 \leq C\delta \Paren{v^{\top}\Sigma v}^{2s}}\mper
\]
An isotropic subset $X \subseteq  \R^d$ is $2s$-certifiably $(\delta,C\delta)$-anti-concentrated if the uniform distribution on $X$ is $2s$-certifiably $(\delta,C\delta)$-anti-concentrated.
\label{def:certifiable-anti-concentration-homogenous}
\end{definition}
\begin{remark}
For natural examples, $s(\delta) \leq 1/\delta^c$ for some fixed constant $c$. For e.g., $s(\delta)= O(\frac{1}{\delta^2})$ for standard Gaussian distribution and the uniform distribution on the unit sphere (see~\cite{DBLP:journals/corr/abs-1905-05679} and ~\cite{bakshi2020list}). To simplify notation, we will assume $s(\delta)\leq \poly(1/\delta)$ in the statement of our results.
\end{remark}

Additionally, we need that the variance of degree-$2$ polynomials is bounded in terms of the Frobenius norm of the coefficients of the polynomial. Formally, 

\begin{definition}[Degree-$2$ Polynomials with Certifiably Bounded Variance]
\label{def:bounded_variance}
A mean $0$ distribution $\cD$ with covariance $\Sigma$ certifiably bounded variance degree $2$ polynomials if there is a  degree $2$ sum-of-squares proof of the following inequality in the indeterminate $Q \in \mathbb{R}^{d \times d}$
\[
\Set{ \E_{x\sim \cD} \Paren{x^{\top} Q x-\E_{x\sim \cD} x^{\top} Q x}^2 \leq C \Norm{\Sigma^{1/2}Q\Sigma^{1/2}}_F^2 } \mper
\]
\end{definition}

Our general result gives an outlier-robust clustering algorithm for separated mixtures of \emph{reasonable} distributions, i.e., one that satisfies both certifiable hypercontractivity, anti-concentration and has bounded variance of degree-$2$ polynomials(see Definition~\ref{def:nice_distribution}). Even the information-theoretic (and without outliers, i.e., $\epsilon = 0$) clusterability of such distributions was not known prior to our work.

\begin{theorem}[Outlier-Robust Clustering of Separated Mixtures, see Theorem~\ref{thm:main-robust-clustering-section} for precise bounds]
Fix $\eta > 0, \epsilon > 0$.
Let $\cD_r$ be a $\Delta$-separated mixture of \emph{reasonable} distributions (see Definition~\ref{def:nice_distribution}).
Then, there exists an algorithm that takes input an $\epsilon$-corruption $Y$ of a sample $X = C_1 \cup C_2 \cup \ldots C_k$, with true clusters $C_i$ of size $n/k$ drawn i.i.d. from $\cD_r$ for each $r \leq k$, and outputs an approximate clustering $Y = \hat{C}_1 \cup \hat{C}_2 \cup \ldots \cup \hat{C}_k$ satisfying $\min_{i \leq k} \frac{|\hat{C}_i\cap C_i|}{|C_i|}  \geq 1 - O(k^{2k}) (\epsilon+\eta)$.
The algorithm succeeds with probability  at least  $0.99$ over the draw of the original sample  $X$ whenever $n \geq d^{O(\poly(k/\eta))}$ and runs in time $n^{O\Paren{\poly( k/\eta))}}$ whenever $\Delta \geq \poly(k/\eta)^k$.
\label{thm:main-clustering-general-main}
\end{theorem}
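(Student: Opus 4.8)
The plan is to design a sum-of-squares relaxation that ``carves out one cluster at a time'' and to analyze it via a new, dimension-independent, low-degree SoS certificate showing that large overlap between two reweighted subsamples forces their empirical parameters to be close. First I would fix $\delta = \poly(\eta/k)$ and $s = s(\delta) = \poly(k/\eta)$, and record the deterministic regularity events that hold with probability $\ge 0.99$ when $n \ge d^{O(\poly(k/\eta))}$: each true cluster $C_r$, and in fact every $(1 - O(\epsilon/k))$-fraction subset of it, is certifiably $O(1)$-hypercontractive (Definition~\ref{def:certifiable-hypercontractivity}), $2s$-certifiably $(\delta, C\delta)$-anti-concentrated (Definition~\ref{def:certifiable-anti-concentration}), and has certifiably bounded-variance degree-$2$ polynomials (Definition~\ref{def:bounded_variance}), all with SoS degree at most $\poly(k/\eta)$. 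These follow from the population statements (Definition~\ref{def:nice_distribution}) together with standard matrix- and polynomial-concentration and a union bound over the low-dimensional coefficient space of the SoS certificates. On input $Y$, I then solve the SoS feasibility program over a pseudo-distribution in indeterminates $w \in \{0,1\}^n$ subject to $\sum_i w_i = n/k$, $w_i^2 = w_i$, together with the constraints that the uniform distribution on the selected subset $\{y_i : w_i = 1\}$ satisfies (the $w$-polynomial forms of) the three analytic axioms above; feasibility is witnessed, for every true cluster, by a trimming of that cluster's $0$--$1$ indicator, using the regularity events.

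The heart of the argument is a low-degree SoS \emph{intersection bound}. Writing $a^{(r)} \in \{0,1\}^n$ for the indicator of the retained (uncorrupted) points of $C_r$, I would derive, from the program's axioms via a degree-$\poly(k/\eta)$ SoS derivation, that $w$ cannot simultaneously capture a non-trivial fraction of two distinct clusters:
\[
\Paren{\tfrac{k}{n}\iprod{w,a^{(r)}}}\Paren{\tfrac{k}{n}\iprod{w,a^{(r')}}} \;\le\; \poly(k)\,(\epsilon+\eta) \qquad \text{for all } r \neq r'.
\]
Since $\sum_r \iprod{w,a^{(r)}} \ge n/k - \epsilon n$ (the $w$-selected set has size $n/k$ and meets the clean sample, partitioned by the $C_r$'s, in all but $\le \epsilon n$ points), this implies $w$ agrees with exactly one cluster indicator up to an $O(k^2)(\epsilon+\eta)$-fraction. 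The bound itself is proved by case analysis on the three modes of $\Delta$-separation (Definition~\ref{def:separated-mixture-model}) --- which for $k$-GMMs is implied by the TV hypothesis through Proposition~\ref{prop:tv-vs-param-for-gaussians}, and is assumed directly in the general case: in the \emph{mean-separated} case I would invoke certifiable anti-concentration of both $C_r$ and $C_{r'}$ along the witnessing direction $v$, replacing the core indicator $\1(\iprod{x,v}^2 \ge \delta v^\top \Sigma v)$ by the polynomial approximator $q_{\delta,\Sigma}$, to conclude the two $v$-projections essentially do not overlap; in the \emph{spectral-} and \emph{relative-Frobenius-separated} cases I would instead play the bounded-variance axiom for the degree-$2$ polynomial $x^\top Q x$ (for the witnessing $Q$) against certifiable hypercontractivity, arguing that a $w$-set straddling two clusters whose degree-$2$ moments differ by the factor $\Delta$ would have variance of $x^\top Q x$ exceeding $C\Norm{\Sigma^{1/2}Q\Sigma^{1/2}}_F^2$. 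The ``conditioning on $w$'' manipulations needed throughout are exactly the SoS facts packaged in Lemma~\ref{lem:contraction-property}, and the whole bound is the SoS version of the TV-versus-parameter-distance statement advertised in the introduction.

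Given a feasible pseudo-distribution $\pE$, I would form $M_{ij} = \pE[w_i w_j]$; applying the intersection bound inside $\pE$ (with SoS Cauchy--Schwarz) shows that $M$ is, up to $\poly(k)(\epsilon+\eta)$-error, block-diagonal with respect to the true clustering. A standard row-sampling-and-thresholding rounding then extracts a single set $\hat{C}_1$ with $|\hat{C}_1 \triangle C_{\pi(1)}| \le \poly(k)(\epsilon+\eta)\, n/k$ for some component $\pi(1)$. Deleting $\hat{C}_1$ from $Y$ leaves a $\poly(k)(\epsilon+\eta)$-corruption of a $(k-1)$-mixture that is still a $\Delta$-separated mixture of reasonable distributions whose clusters remain certifiably nice (here the subsampling-robustness recorded in the first paragraph is used); recursing $k-1$ further times and summing the geometrically growing errors yields the claimed $1 - O(k^{2k})(\epsilon+\eta)$ guarantee. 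Solving and rounding the program takes $n^{O(\poly(k/\eta))}$ time with $n \ge d^{O(\poly(k/\eta))}$ samples, and the hypothesis $\Delta \ge \poly(k/\eta)^k$ is precisely what makes the separation in the intersection bound dominate the error accumulated across all $k$ peeling rounds.

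The step I expect to be the main obstacle is the spectral and relative-Frobenius cases of the intersection bound: turning a multiplicative (resp.\ relative-Frobenius) gap between two empirical covariances into a genuinely \emph{low-degree} SoS refutation of large overlap. Unlike the mean-separated case there is no single anti-concentration inequality to apply --- one must combine the hypercontractive \emph{upper} bound on degree-$2$ moments with the bounded-variance axiom and a carefully chosen degree-$2$ test polynomial, while controlling the degree blow-up incurred when polynomial approximators replace indicator functions, and crucially making the resulting certificate dimension-free so that the relative-Frobenius case works in high dimension. This is exactly where the new analytic-to-SoS translation of the TV-versus-parameter relationship does the real work.
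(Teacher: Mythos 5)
Your high-level plan --- SoS relaxation, a low-degree intersection bound via anti-concentration/hypercontractivity/bounded-variance, block-diagonal structure of $\pE[ww^{\top}]$, and a rounding/recursion --- is the right skeleton, and the case analysis on the three modes of $\Delta$-separation is indeed how the intersection bound is organized in the paper (Lemmas~\ref{lem:intersection-bounds-from-spectral-separation}, \ref{lem:intersection-bounds-from-mean-separation}, \ref{lem:intersection-bounds-from-rel-frob-sep}). But there is a genuine gap: you assert a degree-$\poly(k/\eta)$ SoS derivation of the pairwise bound $w(C_r)\,w(C_{r'}) \le \poly(k)(\epsilon+\eta)$ for \emph{every} pair $r\neq r'$, and this is precisely what the paper cannot do. The ``real-world'' argument combining the anti-concentration lower bound (Lemma~\ref{lem:spectral-lower-bound-intersection-clustering}) with the anti-concentration upper bound (Lemma~\ref{lem:anti-conc-upper-bound-w}) requires a conditional step (``if $w(C_r)^2 \ge \delta$ then\dots''), which in SoS is implemented by multiplying through by a polynomial threshold approximator $\cJ(w(C_r))$ of degree $\Omega(\log(1/\rho)/\delta)$, and the quantity $1/\rho$ must dominate the ``rough'' spectral upper bound on $v^{\top}\Sigma(w)v$ (Lemma~\ref{lem:rough-bound-spectral-upper}), which for $k>2$ scales with the spread $\kappa = \max_{i,j} v^{\top}\Sigma(i)v / v^{\top}\Sigma(j)v$ because $w$ may leak mass onto components $i\notin\{r,r'\}$ with arbitrarily large variance in the test direction. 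The resulting SoS degree is $\poly(k/\eta)\cdot\log\kappa$, not $\poly(k/\eta)$. Since $\kappa$ can be astronomically large (e.g.\ $\Sigma(1)=I$, $\Sigma(2)=2^{2^d}I$ is $\Delta$-separated and reasonable), your Step 3 would then have unbounded running time. The paper spends all of Section~\ref{sec:better-algo} precisely on killing this dependence.

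The second, downstream issue is that your rounding step is the wrong one for the theorem as stated. At degree $\poly(k/\eta)$ the paper can only establish a \emph{partial} block-diagonal structure --- simultaneous intersection bounds across a single, data-dependent bipartition $S\cup L=[k]$ (Lemma~\ref{lem:sim-intersection-bipartition-bound}), chosen adaptively from the spectral ordering of the components. This yields only a split of $Y$ into two unions of true clusters, not a single cluster you can peel off. Consequently, the paper's Algorithm~\ref{algo:polynomial-rounding-for-pseudo-distribution-robust} recurses by brute-forcing over all $2^k$ candidate bipartitions of its rounded rows, and terminates via a separate constant-degree \emph{verification} subroutine (Lemma~\ref{lem:soundness-verification-main}), which works at degree independent of $\kappa$ because for a \emph{fixed} candidate cluster $\hat C$ one can derive a contradiction (rather than an intersection bound in free variables $w$), so no threshold polynomial and no $\log\kappa$ loss is incurred. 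Your ``peel one cluster at a time'' recursion would be correct for the intermediate Theorem~\ref{thm:main-robust-clustering-section} (running time $n^{O(\poly(k/\eta)\log\kappa)}$) but does not by itself give Theorem~\ref{thm:main-clustering-general-main}. To close the gap you would need both of the paper's extra ideas: (i) the constant-degree \emph{partial} intersection bound across a bipartition instead of all pairs, and (ii) the constant-degree verification as the base case of the recursion.
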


\paragraph{Robust Covariance Estimation in Relative Frobenius Distance.} In Section~\ref{sec:param_recovery}, we give an outlier-robust algorithm for covariance estimation for all certifiably hypercontractive distributions.

\begin{theorem}[Robust Parameter Covariance Estimation for Certifiably Hypercontractive Distributions] \label{thm:param-estimation-main}
Fix an $\epsilon > 0$ small enough fixed constant so that $Ct\epsilon^{1-4/t} \ll 1$\footnote{This notation means that we needed $Ct\epsilon^{1-4/t}$ to be at most $c_0$ for some absolute constant $c_0 > 0$}. For every even $t \in \N$, there's an algorithm that takes input $Y$ be an $\epsilon$-corruption of a sample $X$ of size $n \geq n_0 = d^{O(t)}/\epsilon^2$ from a $2t$-certifiably $C$-hypercontractive and certifiably $C$-bounded variance with unknown mean $\mu_*$ and covariance $\Sigma_*$ respectively and in time $n^{O(t)}$ outputs an estimate $\hat{\mu}$ and $\hat{\Sigma}$ satisfying:
\begin{enumerate}
\item $\Norm{\Sigma^{-1/2}(\mu_*-\hat{\mu})}_2 \leq O(Ct)^{1/2} \epsilon^{1-1/t}$,
\item $(1-\eta) \Sigma_* \preceq \hat{\Sigma} \preceq (1+\eta) \Sigma_*$ for $\eta \leq O(Ck) \epsilon^{1-2/t}$, and,
\item $\Norm{\Sigma_*^{-1/2} \hat{\Sigma} \Sigma_*^{-1/2}-I}_F \leq (Ct) O(\epsilon^{1-1/t})$.
\end{enumerate}
In particular, letting $t = O(\log(1/\epsilon))$ results in the error bounds of $\tilde{O}(\epsilon)$ in all the three inequalities above.
\end{theorem}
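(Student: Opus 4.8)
The plan is to follow the ``certify-and-project'' paradigm for robust estimation (as in \cite{DBLP:journals/corr/abs-1711-11581,KothariSteinhardt17}): set up a sum-of-squares feasibility program over a hypothetical clean reconstruction of the sample, and read off $\hat\mu,\hat\Sigma$ directly as low-degree pseudo-moments. Since all three target bounds are invariant under the affine map $x \mapsto \Sigma_*^{-1/2}x$, and since certifiable hypercontractivity and certifiable bounded variance of degree-$2$ polynomials are preserved (up to absolute constants in $C$) under invertible linear maps, I first reduce to the case $\tfrac12 I \preceq \Sigma_* \preceq 2I$: run an off-the-shelf robust covariance estimator with a spectral/\Lowner-order guarantee --- e.g.\ the SoS estimator for certifiably subgaussian distributions of \cite{DBLP:journals/corr/abs-1711-11581} (certifiable degree-$2$ hypercontractivity implies certifiable subgaussianity of linear forms) or the filter-based estimator of \cite{DBLP:conf/focs/LaiRV16} --- to get $\hat\Sigma_0$ with $\tfrac12\Sigma_* \preceq \hat\Sigma_0 \preceq 2\Sigma_*$, whiten the input as $\hat\Sigma_0^{-1/2}y_i$, run the main procedure, and un-whiten the output at the end (a congruence that preserves relative Frobenius error).

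For the main program, introduce $0/1$ indeterminates $w_1,\dots,w_n$ marking the authentic (uncorrupted) coordinates and vector indeterminates $x'_1,\dots,x'_n \in \R^d$ representing a clean reconstruction, with axioms $w_i^2 = w_i$, $\sum_i w_i \geq (1-\epsilon)n$, $w_i(x'_i - y_i) = 0$, together with the polynomial inequality schemas asserting that the uniform distribution on $X' = \{x'_i\}$ is $2t$-certifiably $2C$-hypercontractive and certifiably $2C$-bounded-variance (constraints on the empirical moment tensors of $X'$). A standard sampling lemma shows that for $n \geq d^{O(t)}/\epsilon^2$ the clean sample $X$ itself satisfies these moment conditions with constant $2C$ and has empirical mean and covariance within relative Frobenius error $\tilde{O}(\epsilon)$ of $\mu_*, \Sigma_*$; hence the program is feasible, witnessed by $w_i = \1[x_i = y_i]$ and $x'_i = x_i$. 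Let $\pE$ be a degree-$O(t)$ pseudo-expectation satisfying the program and output $\hat\mu = \frac1n\sum_i \pE[x'_i]$ and $\hat\Sigma = \pE\big[\frac1n\sum_i (x'_i - \mu')(x'_i - \mu')^{\top}\big]$, where $\mu'$ is the empirical mean of $X'$; the running time $n^{O(t)}$ is the cost of the corresponding SDP over $O(nd)$ variables.

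The crux is a degree-$O(t)$ SoS inequality certifying that the empirical covariance $\Sigma'$ of $X'$ is relative-Frobenius close to the clean empirical covariance $\Sigma_X$. First, an \emph{intersection bound}: both $\{i : w_i = 1\}$ and the authentic set have size $\geq (1-\epsilon)n$, so $a_i := w_i\,\1[x_i = y_i]$ satisfies $\sum_i a_i \geq (1-2\epsilon)n$ and $a_i(x'_i - x_i) = 0$ --- i.e.\ $X'$ and $X$ agree on a $(1-2\epsilon)$-fraction of coordinates, provably in SoS. Hence for a symmetric matrix indeterminate $Q$, modulo lower-order terms involving $\mu' - \mu_X$, the quantity $\langle \Sigma' - \Sigma_X, Q\rangle$ equals a sum over the $\leq 2\epsilon n$ disagreement coordinates; bounding it by \Holder's inequality (exponent $t$) inside SoS, then applying certifiable hypercontractivity to control $\frac1n\sum_i ((\bar x'_i)^{\top}Q\bar x'_i)^t$ by $(Ct)^t\big(\frac1n\sum_i ((\bar x'_i)^{\top}Q\bar x'_i)^2\big)^{t/2}$, then certifiable bounded variance to control $\frac1n\sum_i ((\bar x'_i)^{\top}Q\bar x'_i)^2$ by $C\,\Tr((\Sigma'Q)^2)$ up to a $\Tr(\Sigma'Q)^2$ term, and symmetrically from the $X$ side, yields
\[
\langle \Sigma' - \Sigma_X, Q\rangle \;\leq\; O(\sqrt{Ct})\,\epsilon^{1-1/t}\,\Paren{\Tr((\Sigma'Q)^2)^{1/2} + \Tr((\Sigma_X Q)^2)^{1/2}} + (\text{lower order})
\]
as an SoS inequality in $w, X', Q$ --- crucially, $\Tr((\Sigma'Q)^2)$ is a genuine polynomial, so no matrix square roots appear. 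Substituting $Q = \Sigma_X^{-1/2} R \Sigma_X^{-1/2}$ with $\Norm{R}_F \leq 1$ (using the coarse bound $\Sigma' \preceq O(1)\Sigma_*$, which follows from the axioms, the whitening, and a crude version of the same estimate), taking $\pE$, and choosing $R$ adaptively proportional to $\Sigma_X^{-1/2}(\pE[\Sigma'] - \Sigma_X)\Sigma_X^{-1/2}$ gives $\Norm{\Sigma_X^{-1/2}(\pE[\Sigma'] - \Sigma_X)\Sigma_X^{-1/2}}_F \leq O(Ct)\,\epsilon^{1-1/t}$; combined with the fact that $\Sigma_X$ is within relative Frobenius $\tilde{O}(\epsilon)$ of $\Sigma_*$, this is item $3$, and item $2$ is immediate from $\Norm{\cdot}_{op} \leq \Norm{\cdot}_F$ (its stated bound is even weaker). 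An identical intersection-plus-\Holder argument with certifiable subgaussianity of linear forms bounds $\Norm{\Sigma_*^{-1/2}(\mu' - \mu_X)}_2$ by $O(\sqrt{Ct})\,\epsilon^{1-1/t}$ in SoS, and applying $\pE$ gives item $1$. Finally, setting $t = O(\log(1/\epsilon))$ makes $\epsilon^{1-1/t} = O(\epsilon)$ with the $Ct$ prefactor $O(\log(1/\epsilon))$, giving the $\tilde{O}(\epsilon)$ conclusion.

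The main obstacle is executing the displayed estimate entirely inside the SoS proof system: fusing the intersection bound, \Holder's inequality, certifiable hypercontractivity, and certifiable bounded variance into one low-degree SoS certificate of relative-Frobenius closeness, while keeping every intermediate expression polynomial (working with $\Tr((\Sigma'Q)^2)$ rather than $\Norm{\Sigma'^{1/2}Q\Sigma'^{1/2}}_F$) and untangling the apparent circularity between ``$\Sigma'$ close to $\Sigma_*$'' and the coarse form of that fact used en route. The hypothesis $Ct\,\epsilon^{1-4/t} \ll 1$ is precisely what keeps all of these slack terms, and the constant-factor whitening bounds, below $1$ so that feasibility with constant $2C$ and the coarse bounds propagate.
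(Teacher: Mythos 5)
Your proposal follows essentially the same route as the paper: the same certification program over indeterminates $w_i, x'_i$ with intersection, hypercontractivity, and bounded-variance constraints; the same read-out $\hat\mu = \pE[\mu]$, $\hat\Sigma = \pE[\Sigma]$; and the same derivation chain (intersection bound, SoS Hölder, certifiable hypercontractivity to drop from degree $2t$ to degree $2$, certifiable bounded variance, then a self-improving substitution for $Q$). It also invokes, as the paper does, the spectral mean/covariance guarantees of \cite{DBLP:journals/corr/abs-1711-11581} (the paper's display \eqref{eq:covariance-estimation} and \eqref{eq:mean-estimation}) as the ``coarse'' ingredient.

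The one genuine divergence is your whitening preprocessing: you run an off-the-shelf spectral robust covariance estimator, whiten, then run the main program, whereas the paper applies $y \mapsto \Sigma_*^{-1/2}y$ purely as an analytical WLOG with no extra algorithmic step. This variation is sound (congruence preserves all three target distances, and certifiable hypercontractivity and bounded variance are affine-invariant, cf.\ \cref{lem:linear-invariance-bounded-variance}), and it would in principle let you state the bounded-variance constraint directly with a $\Norm{Q}_F^2$ right-hand side, since the post-whitening covariance is within a constant of $I$. What you buy is that you no longer need to carry the square-root indeterminate $\Pi$ with $\Pi^2 = \Sigma$ through the program, and you could hope to sidestep the paper's contraction lemma (\cref{lem:contraction-property}), which is a nontrivial SoS derivation of $\Norm{AB}_F \leq \Norm{A}_{op}\Norm{B}_F$ via Gaussian quadrature and is exactly what the paper uses to pass from $\Norm{\Pi Q \Pi}_F$ to $\Norm{Q}_F$ under a spectral bound.

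However, as written your proposal does not actually realize that simplification. You keep the bounded-variance right-hand side as $C\,\Tr((\Sigma' Q)^2)$ with $\Sigma'$ still an indeterminate, and then wave at a ``coarse bound $\Sigma' \preceq O(1)\Sigma_*$'' to control it. The step from the quadratic-form constraint $\{v^{\top}\Sigma' v \leq O(1)\Norm{v}_2^2\}$ to the polynomial inequality $\{\Tr((\Sigma' Q)^2) \leq O(1)\Norm{Q}_F^2\}$ is not a triviality inside the SoS system --- it is precisely the content of \cref{lem:contraction-property}, and you do not say how it is to be derived. Either state the bounded-variance constraint in the whitened coordinates with $\Norm{Q}_F^2$ on the right (feasible because the clean whitened covariance is between $\tfrac12 I$ and $2I$), which truly avoids the lemma, or invoke the contraction lemma explicitly; as phrased the step is a gap. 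Separately, you are right that the bounded-variance constraint contributes a $\Tr(\Sigma' Q)^2$ term (the uncentered moment), and you flag it; the paper's $\cA_{Y,\epsilon}$ drops the centering without comment. Resolving it (e.g.\ by first controlling the trace of the deviation and then the traceless part, or by verifying it is absorbed after the self-improving substitution) is needed in both your argument and the paper's as stated, and your proposal does not spell it out.
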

The first two guarantees above were shown in~\cite{DBLP:journals/corr/abs-1711-11581} for all certifiably subgaussian distributions.  ~\cite{DBLP:journals/corr/abs-1711-11581} also observed (see last paragraph of page 6 for a  counter example) that it is provably impossible to obtain dimension-independent error bounds in relative Frobenius distance assuming only certifiable subgaussianity. We prove that under the stronger assumption of certifiable \emph{hypercontractivity} along with certifiably bounded variance of degree $2$ polynomials, we can indeed obtain dimension-independent, information-theoretically optimal (for e.g. for Gaussians) error guarantees in relative Frobenius error. Prior works either obtained the weaker spectral error guarantee (that incurs a loss of $\sqrt{d}$ factor when translating into relative Frobenius distance)~\cite{DBLP:conf/focs/LaiRV16,DBLP:journals/corr/abs-1711-11581} or worked only for Gaussians~\cite{DBLP:conf/focs/DiakonikolasKK016}\footnote{We note that the algorithm of~\cite{DBLP:conf/focs/DiakonikolasKK016} for Gaussian distributions works in fixed polynomial time to obtain $\tilde{O}(\epsilon)$ error-estimate of the covariance in relative Frobenius distance  whereas our algorithm works more generally for all certifiably hypercontractive distributions but runs in time $d^{O(log^2(1/\epsilon))}$.}.

Combining this theorem with our clustering results above yields:
\begin{corollary}[Parameter Recovery from Clustering, General Case]
In the setting of either Theorem~\ref{thm:main-clustering-general-main}, there's an algorithm with same bounds on running time and sample complexity, that with probability at least $0.99$, outputs $\{\hat{\mu}(r),\hat{\Sigma}(r)\}_{r \leq k}$ such that for some permutation $\pi:[k] \rightarrow [k]$, for every $i$, $\hat{\mu}(\pi(i)),\hat{\Sigma}(\pi(i))$ is $\Delta$-close to $\mu,\Sigma$ in the three distances defined in Definition~\ref{def:separated-mixture-model} for $\Delta = \tilde{O}(k^{O(k)}(\epsilon+\eta))$.
\end{corollary}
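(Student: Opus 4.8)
The plan is to first cluster and then estimate parameters cluster by cluster. Concretely, run the outlier-robust clustering algorithm of Theorem~\ref{thm:main-clustering-general-main} (or Theorem~\ref{thm:clustering-robust-main} in the Gaussian case) on the $\epsilon$-corrupted sample $Y$ to obtain $Y = \hat C_1 \cup \cdots \cup \hat C_k$ with $\min_i |\hat C_i \cap C_i|/|C_i| \geq 1 - \epsilon'$ where $\epsilon' = O(k^{2k})(\epsilon+\eta)$; let $\pi$ be the induced matching of blocks to true clusters. The key observation is that each $\hat C_i$ agrees with $C_i$ on at least a $(1-\epsilon')$-fraction of $C_i$ and $C_i$ is itself an i.i.d.\ sample from $\cD_i$, so after discarding at most an $O(\epsilon')$-fraction of points, $\hat C_i$ is an $O(\epsilon')$-corruption of an i.i.d.\ sample from $\cD_i$.

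Since $\cD_i$ is \emph{reasonable} (Definition~\ref{def:nice_distribution}), it is in particular $2t$-certifiably $C$-hypercontractive and has certifiably $C$-bounded variance of degree-$2$ polynomials; these properties pass to every sufficiently large subsample of an i.i.d.\ draw from $\cD_i$ with high probability, using exactly the empirical certifiability lemmas already established for the analysis of Theorem~\ref{thm:main-clustering-general-main}. We may therefore apply the robust covariance estimator of Theorem~\ref{thm:param-estimation-main} to each $\hat C_i$ with $t = O(\log(1/(\epsilon+\eta)))$, obtaining $\hat\mu(i),\hat\Sigma(i)$ with $\Norm{\Sigma(i)^{-1/2}(\mu(i)-\hat\mu(i))}_2 \leq \tilde O(\epsilon')$, with $(1-\tilde O(\epsilon'))\Sigma(i) \preceq \hat\Sigma(i) \preceq (1+\tilde O(\epsilon'))\Sigma(i)$, and with $\Norm{\Sigma(i)^{-1/2}\hat\Sigma(i)\Sigma(i)^{-1/2}-I}_F \leq \tilde O(\epsilon')$. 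The running time is that of the clustering step plus $k$ invocations of Theorem~\ref{thm:param-estimation-main}, i.e.\ $n^{O(t)} = n^{O(\log(1/(\epsilon+\eta)))} \leq n^{O(\poly(k/\eta))}$ since $\epsilon+\eta \geq \eta$, and the sample complexity is unchanged.

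It remains to convert these three bounds into $\Delta$-closeness in the sense of Definition~\ref{def:separated-mixture-model} with $\Delta = \tilde O(\epsilon') = \tilde O(k^{O(k)}(\epsilon+\eta))$. For the mean distance, Cauchy--Schwarz gives $\iprod{\mu(i)-\hat\mu(i),v}^2 = \iprod{\Sigma(i)^{-1/2}(\mu(i)-\hat\mu(i)),\Sigma(i)^{1/2}v}^2 \leq \Norm{\Sigma(i)^{-1/2}(\mu(i)-\hat\mu(i))}_2^2\cdot v^\top\Sigma(i)v \leq \Delta^2\, v^\top(\Sigma(i)+\hat\Sigma(i))v$; spectral closeness is immediate from the two-sided operator inequality on $\hat\Sigma(i)$; and the relative-Frobenius closeness is precisely the third guarantee, using that $\hat\Sigma(i)$ and $\Sigma(i)$ share a range space because they are two-sided spectrally close. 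Composing with the TV-vs-parameter translation for Gaussians (Proposition~\ref{prop:tv-vs-param-for-gaussians}) recovers the TV-distance statement in the Gaussian special case, matching the earlier corollary.

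The only genuine obstacle is the transfer step: the clean points inside $\hat C_i$ form a \emph{data-dependent} subset of the i.i.d.\ sample, selected by the clustering algorithm, so one cannot directly invoke distributional certifiability. This is handled by showing the empirical certifiable hypercontractivity and bounded-variance inequalities hold \emph{simultaneously} for all subsets of size at least $(1-O(\epsilon'))n/k$ of the original sample, with high probability (a union bound over a net, or a direct degree-$O(t)$ moment concentration argument) — machinery that is in any case already developed for Theorem~\ref{thm:main-clustering-general-main}. Once this uniformity is in hand, everything else is routine, and one also checks the mild smallness hypothesis $Ct(\epsilon')^{1-4/t}\ll 1$ of Theorem~\ref{thm:param-estimation-main}, which holds for $\epsilon,\eta$ below a constant depending only on $k$.
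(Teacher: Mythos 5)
Your proposal is correct and follows the paper's intended two-stage approach: run the clustering algorithm of Theorem~\ref{thm:main-clustering-general-main} to get $\hat C_1,\dots,\hat C_k$, then feed each $\hat C_i$ to the robust estimator of Theorem~\ref{thm:param-estimation-main} with $t = O(\log(1/(\epsilon+\eta)))$, and finally translate the three guarantees of that theorem into the three parameter distances of Definition~\ref{def:separated-mixture-model}. The conversions you give (Cauchy--Schwarz for mean closeness, the two-sided \Lowner sandwich for spectral closeness, and the relative Frobenius bound directly) are the right ones, and your accounting of running time, sample size, and the smallness hypothesis $Ct(\epsilon')^{1-4/t}\ll 1$ all check out.

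The one place you over-engineer is the ``data-dependent subset'' step. You do not need uniform certifiability over all $(1-O(\epsilon'))n/k$-size subsets. Since $|\hat C_i| = |C_i| = n/k$ and $|\hat C_i\cap C_i| \geq (1-\epsilon')n/k$, the set $\hat C_i$ is literally an $\epsilon'$-corruption of $C_i$ in the strong contamination model: at most $\epsilon' n/k$ points of $C_i$ were replaced. Theorem~\ref{thm:param-estimation-main} is stated and proved against \emph{adversarial} corruptions, which are allowed to depend on the clean sample, so it already covers data-dependent corruptions such as the one produced by the clustering step. All that is needed probabilistically is that each $C_i$ is a good sample (i.e., that the uniform distribution on $C_i$ inherits certifiable hypercontractivity and bounded variance), which is exactly the event conditioned on in the analysis of Theorem~\ref{thm:main-clustering-general-main} via Lemma~\ref{lem:typical-samples-good}, and which holds for all $i$ simultaneously with high probability. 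With that observed, the extra net/union-bound machinery in your last paragraph can be dropped; otherwise the argument is correct and matches the paper.
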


\paragraph{Applications.} Two important special cases of mixtures of separated reasonable distributions are noiseless \emph{mixed linear regression} where we are given samples generated as $y = \iprod{\ell,x}$ where $x$ is drawn from $\N(0,I_d)$ and $\ell$ is chosen uniformly from an unknown list $(\ell_1,\ell_2,\ldots,\ell_k)$ and \emph{subspace clustering} - where the input is a mixture of isotropic Gaussians restricted to a $k$ unknown subspaces. There's extensive work~\cite{MR1028403,doi:10.1162/neco.1994.6.2.181,MR2757044,2013arXiv1310.3745Y,DBLP:journals/corr/BalakrishnanWY14,DBLP:conf/colt/ChenYC14,DBLP:conf/nips/Zhong0D16,DBLP:conf/aistats/SedghiJA16,DBLP:conf/colt/LiL18,Vidal2011-ag,Parsons2004-ur} on both these problems in signal processing and machine learning with recent push~\cite{DBLP:journals/corr/abs-1912-07629,DBLP:conf/colt/LiL18} in TCS to obtain efficient algorithms with provable guarantees. Both these cases are immediately seen as mixtures with pairwise separation of $\infty$ (for Gaussians,  this is equivalent to TV separation of 1). Thus, we immediately obtain efficient outlier-robust algorithms for these problems.

\subsection{Previous Version}
In a previous version of this paper, our main result was the clustering algorithm in Algorithm~\ref{algo:polynomial-rounding-for-pseudo-distribution-robust} where degree of the polynomial running time scales linearly in $O(\log \kappa)$ where $\kappa$ is the spread of the target mixture.

\subsection{Related Independent Works}

In an independent work, Diakonikolas, Hopkins, Kane and Karmalkar~\cite{DHKK20} obtained an efficient algorithm for clustering mixtures of $k$-Gaussians with components separated in TV distance. The running time of their algorithm has a slightly worse dependence on $k$ than our algorithm (invoked for Gaussian distribution). For a constant accuracy and fraction of outliers, our algorithm needs $n= d^{k^{O(k)}}$ samples and $n^{k^{O(k)}}$ time while the one in~\cite{DHKK20} needs $d^{F(k)}$ samples and $n^{F(k)}$ time where $F(k)$ is a function that grows  as a $\poly(k)$ size tower of exponentials in $k$.

\section{Overview}
\label{sec:overview}
In this section, we given an informal overview of our approach and main ideas.
All of our conceptual ideas appear in obtaining a clustering algorithm in the non-robust (without outliers) setting.
So we will restrict ourselves to this setting for most of this section.
The reader might find it helpful to use this overview as a ``chart'' to navigate the somewhat technical structure of our proof.

Formally, our results hold for $\Delta$-separated (in the sense of Definition~\ref{def:separated-mixture-model}) mixtures of all \emph{reasonable} distributions defined below.

\begin{definition}[Reasonable Distributions]
\label{def:nice_distribution}
An isotropic (i.e. mean $0$ and $I$-covariance) distribution $\cD$ on $\R^d$ is \emph{reasonable} if it satisfies the following two properties:
\begin{enumerate}
\item \emph{Certifiable Anti-Concentration Under $4$-wise Convolutions}: The distribution of $x\pm y \pm z\pm w$ for independent copies $x,y,z,w \sim \cD$ is certifiably $(\delta,C\delta)$ anti-concentrated %
for all $\delta > 0$.
\item  \emph{Certifiable Hypercontractivity Under $4$-wise Convolutions}: The distribution of $x\pm y \pm z \pm w$ for independent $x,y,z,w\sim \cD$ has certifiably  hypercontractive degree $2$ polynomials.
\item \emph{Certifiable Bounded Variance}: The distribution of $x \sim \cD$ has degree $2$ polynomials of certifiably $C$-bounded variance (Definition \ref{def:bounded_variance}). 
\end{enumerate}
Observe that if $\cD$ has $h$-certifiably $C$-hypercontractive degree 2 polynomials then it is also $h$-certifiably $C$-subgaussian.  For any $\mu,\Sigma\succ 0$, we denote $\cD(\mu,\Sigma)$ to be the distribution of the random variable $\Sigma^{1/2} x + \mu$ where $x \sim \cD$.
\end{definition}
In Section~\ref{sec:reasonable-distributions}, we prove that Gaussian distributions and affine transforms of uniform distribution on the unit sphere are reasonable distributions.



\paragraph{Setup.} The input to our algorithm is a sample $X$ of size $n$ from an equi-weighted mixture of $\{\cD(\mu(r),\Sigma(r))\}_{r \leq k}$ for some reasonable distribution $\cD$. Let $X = C_1 \cup C_2 \cup \ldots C_k$ be the partition of $X$ into true clusters unknown to the algorithm. We follow the high-level approach of using low-degree sum-of-squares proofs of \emph{certifiability}\footnote{We find the term \emph{certifiability} more accurate than the usual ``identifiability'' in this context. Formally, certifiability refers to checking that a purported solution is ``good'' while identifiability relates to a sample containing information about a certain parameter we desire to estimate. Certifiability implies identifiability - it gives a test that we can check for all possible candidate solutions with the guarantee that only true solutions will pass the checks. } to design efficient algorithms. 

The two key parts of our proofs are 1) giving a low-degree sum-of-squares proof of certifiability of approximate clusters and 2) a recursive clustering based on rounding pseudo-distributions. We discuss the high-level ideas behind both these pieces below.

\paragraph{Certifying Purported Clusters.}
In this approach, we ignore the algorithmic issues and focus simply on the issue of how to \emph{certify} that a given subset $\hat{C} \subseteq X$ - described by an associated set of indicator variables $w_1,w_2,\ldots, w_n$ of the samples included in $\hat{C}$ -  is (close to) a true cluster $C_r$ for some $r \leq k$. Let $w(C_r) = \frac{|\hat{C} \cap C_r|}{|C_r|}$ for every $r$. 

By standard concentration arguments (see Lemma~\ref{lem:typical-samples-good}), for $n$ large enough, the uniform distribution on $C_i$ for each $i$ is itself reasonable - that is, it satisfies the conditions in Def~\ref{def:nice_distribution-overview}. Further, the parameters of each  $C_r$ are close to the true parameters $\mu(r),\Sigma(r)$. Instead of introducing new notation, we will simply assume that $\mu(r),\Sigma(r)$ are the mean and covariances of $C_r$ (instead of the distribution that generates $C_r$). This slight abuse of notation doesn't meaningfully change our results or techniques.

Finally, another simple but useful observation is that for distributions that are uniform on subsets of $A,B \subseteq X$ of size $n/k$, the total variation distance equals $1-(k/n)|A\cap B|$. In particular, large TV distance corresponds to small intersection and vice-versa.

The only properties we know of the true clusters is that they are of size $n/k$ and that uniform distributions on them are reasonable distributions. Thus, the natural checks we can perform on $\hat{C}$ is to simply verify the properties of being certifiably hypercontractive and anti-concentrated. Our polynomial constraint system $\cA$ in Section~\ref{sec:clustering} in indicator variables $w$ encodes these checks.

Since we check only the properties that a true cluster $C_i$ would satisfy, it's clear that the true clusters should pass our checks. Thus, we can focus on proving \emph{soundness} of our test: if $\hat{C}$ passes the checks we made, then it must be close to one of the true clusters $C_i$s. The key ``bad case'' for us to rule out is when $w(C_r)$ and $w(C_{r'})$ are both large for some $r \neq r'$. In that case, the set $\hat{C}$ indicated by $w$ cannot be close to any single cluster $C_i$.

Indeed, bulk of our analysis goes into showing that for  every $r \neq r'$, $w(C_r)w(C_{r'})$ must be small whenever $w$ passes our checks above (see Lemma~\ref{lem:intersection-bounds-from-separation}). This immediately implies that $w(C_r)$ and $w(C_{r'})$ cannot simultaneously be large. We call such results \emph{simultaneous intersection bounds} because they control the simultaneous intersection of $\hat{C}$ with $C_r$ and $C_{r'}$.

\subsection{Enter TV vs Parameter Distance Lemmas}
When $w(C_r)$ and $w(C_{r'})$ are simultaneously larger than, say, $\eta$, the uniform distribution on $\hat{C}$ is $1-\eta$ close in TV  distance  to both $C_r$ and $C_{r'}$. On the other hand, since $C_r$ and $C_{r'}$ have $\Delta$-separated parameters, the parameters of the uniform distribution on $\hat{C}$ must be far from that of at least one of $C_r$ and $C_{r'}$ - say, $C_r$ WLOG (follows from a triangle-like inequality that is easy to prove for the notion of parameter distance in Def~\ref{def:separated-mixture-model}). In that case, we have a reasonable distribution (uniform distribution  on $\hat{C}$) that is close to another reasonable distribution (uniform on $C_r$) in TV distance but their parameters are far from each other! We will prove that this is not possible because:

\begin{center}
\emph{Reasonable distributions close in TV distance have close parameters. }
\end{center}

It is important to observe that such a statement is false even for subgaussian distributions - indeed, moment upper bounds (such as those that follow from subgaussianity) are simply not enough to give \emph{any} bound on the parameter distance of TV-close pairs at all. See Section~\ref{sec:subgaussian-no-TV-vs-param} for a simple proof. As might be apparent from the example in Section~\ref{sec:subgaussian-no-TV-vs-param}, anti-concentration (and the consequent moment lower bound) is crucial to prove such a statement.

There's a lot of work in statistics that proves such statements for natural families of distributions such as Gaussians (see for e.g.~\cite{devroye2018total}). In fact, all works that design outlier-robust estimation algorithms in the strong contamination model implicitly prove such a statement. This connection is made explicit in the work on robust moment estimation~\cite{DBLP:journals/corr/abs-1711-11581}. Our setting, however, differs from these works because we deal with the regime where the TV distance is close to $1$ (in contrast to the setting where TV distance is close to $0$ in the above works) outlier-robust estimation. See Section~\ref{sec:subgaussian-no-TV-vs-param} for an effect of the TV distance on our simple example.

For the special case of Gaussians, proving such a statement even for the regime where TV distance happens to be $\sim 1$ turns out to be elementary (see Proposition~\ref{prop:tv-vs-param-for-gaussians}). However such a proof, because it uses the PDF of the distribution heavily is unlikely to be expressible in low-degree sum-of-squares proof system - a key necessity for our algorithmic application.

But perhaps even more importantly, the proof for the Gaussian case above is opaque and doesn't reveal what properties of the distribution come into play for such a statement to be true. We show that the statement above holds for all hyper-contractive and anti-concentrated distributions. As a result, we obtain both, an argument that applies to more general class of distributions and a proof translatable (with some effort) into low-degree sum-of-squares proof system.

\paragraph{Proof Idea: Proving TV vs Parameter Distance Bounds via Variance Mismatch}
We will prove the TV vs parameter distance relationships for reasonable distributions by giving a low-degree sum-of-squares proof of the statement in the contrapositive form. In this form, the result informally says that if $\hat{C}$ (indicated by $w$) that defines a reasonable distribution cannot simultaneously have large intersections with two well-separated, reasonable distributions $C_r$ and $C_{r'}$. That is, the product $w(C_r)w(C_{r'})$ must be small.

To prove such a statement, we deal with each of the three ways (see Def~\ref{def:separated-mixture-model}) $C_r,C_{r'}$ can be separated one by one. In each of these cases, we will find a degree 2 polynomial in $x \sim \hat{C}$ (the purported cluster) that simultaneously has high variance if $w(C_r)$ and $w(C_{r'})$ are both large (since $C_r$ and $C_{r'}$ are separated). On the other hand, we will also show that for certifiably hyper-contractive $\hat{C}$, the polynomial above cannot have too large a variance. Taken together, these two statement yield a bound on the product $w(C_r)w(C_{r'})$.

In the following, we discuss the ideas that go into proving such statements for each of the three kinds of parameter separation.
We will also briefly discuss two basic additions to sum-of-squares toolkit that allow us to translate this proof into the low-degree SoS proof system. It turns out that the ``hardest'' case to deal is that of spectral separation. 

\subsection{Simultaneous intersection bounds from spectral separation} \label{sec:spectral-separation-overview}
For the purpose of this discussion, assume that the means $\mu(r) = \mu(r') = 0$. Since $C_r$ and $C_{r'}$ are spectrally separated, there exists a unit vector $v$ such that $\Delta_{\spe} v^{\top} \Sigma(r) v \leq  v^{\top} \Sigma(r') v$. We will use the polynomial $\iprod{x,v}^2$ for this $v$ as our ``mismatch'' marker as discussed above.

The key idea of the proof is to show that if $w(C_r)$ and $w(C_{r'})$ are simultaneously large, then, because of the stark difference in the behavior of $C_r$ and $C_{r'}$ in direction $v$, the degree 2 polynomial $\iprod{x,v}^2$ for $x \sim\hat{C}$ must have a large variance. We will prove this statement by using anti-concentration of $C_r$ and $C_{r'}$. On the other hand, we will show that since $\hat{C}$ is also anti-concentrated, $\Iprod{x,v}^2$ for $x \sim \hat{C}$ cannot have \emph{too} large a variance. Stringing together these bounds should, in principle, give us upper bound on $w(C_r)w(C_{r'})$.

While we manage to prove both the statements above via low-degree SoS proofs, putting them together turns out to be involved. It's easy to do this via a ``real-world'' argument. However, such a proof relies on case analysis that doesn't appear easy to SoSize. This is where we incur a dependence on the spread parameter $\kappa$. We explain these steps in more detail next.

\paragraph{Lower-Bound on the variance (Lemma~\ref{lem:spectral-lower-bound-intersection-clustering}).}
We start by considering (the reason will become clear in a moment) the random variable $z-z'$ where $z,z' \sim \hat{C}$ are independent uniform draws. Then, it's easy to compute that $z-z'$ has mean $0$ and covariance $2 \Sigma(w)$. Thus, in order to lower bound $v^{\top}\Sigma(w)v$, we can consider the polynomial $\E_{z,z' \sim \hat{C}} \iprod{z-z',v}^2$.

Here's the simple but important observation (and our reason for looking at $z-z'$).
With probability $w(C_r)$, $z \in C_r$ and with probability $w(C_{r'})$, $z' \in C_{r'}$.
Thus, $w(C_r)w(C_{r'})$ fraction of samples $z-z'$ from $\hat{C}$ are differences of independent samples from $C_r$ and $C_{r'}$.

Let's now understand the distribution of differences of independent samples from $C_r$ and $C_{r'}$.
The covariance of this distribution is $\Sigma(r) + \Sigma(r')$.
Further, since each of $C_r$ and $C_{r'}$ are anti-concentrated, so is the convolution obtained by taking differences of independent samples from $C_r$ and $C_{r'}$. Thus, $z-z'$ takes a value $\leq \delta \sqrt{v^{\top} \Paren{\Sigma(r) + \Sigma(r')}v}$ with probability at most $\sim \delta$. Thus, the contribution of $z-z'$ to $v^{\top} \Sigma(w)v$, when it's larger than the above bound, should be at least $\geq \Paren{w(C_r)w(C_{r'}) - \delta} \delta^2 v^{\top} \Paren{\Sigma(r) + \Sigma(r')}v \geq \delta^2 v^{\top} \Sigma(r')v$.


\paragraph{Upper bound on variance (Lemma~\ref{lem:anti-conc-upper-bound-w}).}
The main idea is to again rely on anti-concentration - but this time of $\hat{C}$ which is enforced by our constraint system $\cA$.
Now, we know that with $w(C_r)$ probability, $\hat{C}$ outputs a point from $C_r$.
Since these points are in $C_r$, their contribution to the variance of $\hat{C}$ cannot be larger than $v^{\top}\Sigma(r)v$. On the other hand, since $\hat{C}$ is anti-concentrated, the contribution to the variance of $\hat{C}$ from points shared with $C_r$ must be comparable to that of $\hat{C}$ if $w(C_r)$ is large. Stringing together these observations allows us to conclude that when $w(C_r)$ is large, $v^{\top}\Sigma(w)v$ must be comparable to $v^{\top}\Sigma(r)v$.

\paragraph{Combining Upper and Lower Bounds: Real Life vs SoS, dependence on $\kappa$.}
Observe that the first claim above showed a lower bound on $v^{\top}\Sigma(w)v$ in terms of $v^{\top}\Sigma(r')v$ when $w(C_r)w(C_{r'})$ is large. The second claim shows an upper bound on $v^{\top}\Sigma(w)v$ (when  $w(C_r)$ is large) in  terms of $v^{\top}\Sigma(r)v$. Combining this with the spectral separation condition $\Delta_{\spe} v^{\top}\Sigma(r)v \leq v^{\top}\Sigma(r')v$ should immediately yield a bound on $w(C_r)w(C_{r'})$.

This argument indeed can be done easily in ``real-world''("high"-degree SoS, see Lemma~\ref{lem:real-world-argument}) and complete the proof of Lemma~\ref{lem:intersection-bounds-from-spectral-separation}. However, the proof involves a case-analysis based on when $w(C_r) > \delta$ vs $w(C_r) \leq \delta$ separately. Such a case analysis appears hard to perform with a low-degree SoS proof. 

A natural strategy to do this in SoS requires, in addition, a ``rough'' bound on $v^{\top} \Sigma(w)v$.
We obtain this bound (Lemma~\ref{lem:rough-bound-spectral-upper}), again, by relying on anti-concentration of $\hat{C}$.
This rough bound essentially allows us to bound $v^{\top}\Sigma(w)v$ by (some multiple of) the maximum of $v^{\top} \Sigma(r)v$ as $r$ ranges over all the $k$ clusters.

\paragraph{The case of $k=2$ vs $k>2$.}
For the case of $k=2$, the rough bound above depends only on the clusters we are dealing with (since there are only two of them) and leads to a proof without any dependence on $\kappa$. For the case of $k>2$, however, the rough bound depends on $v^{\top} \Sigma(i)v$ for clusters $C_i$ for $i \not \in \{r,r'\}$ - the set we are currently dealing with and, in principle, could be arbitrarily large. We use our assumption on the \emph{spread} of the mixture to control $v^{\top} \Sigma(i)v$ for all such $i \not \in \{r,r'\}$.

\paragraph{Using uniform approximators for thresholds over $[0,1]$.}
A naive argument implementing the above reasoning loses a polynomial factor in $\kappa$ in the exponent. We lessen the blow by a technical trick using uniform polynomial approximators for thresholds (Lemma~\ref{lem:poly-approximate-threshold}) over the unit interval. We construct such polynomial by relying on standard tools from approximation theory in Section~\ref{sec:poly-approx-threshold} of the Appendix. These polynomials allow us to capture the conditional reasoning in the real-world proof above with a low-loss leading to a logarithmic dependence on the SoS degree on $\kappa$.

\subsection{Intersection Bounds from Relative Frobenius Separation}
Obtaining intersection bounds from mean separation turns out to be relatively stress free and uses ideas similar to the ones discussed in the spectral separation case above. So we move on to the case of Relative Frobenius separation here. For the sake of exposition here, we assume $\mu(r),\mu(r') = 0$ as before and set $\Sigma(r')=I$. Then, relative Frobenius  separation guarantees us that $\Norm{\Sigma(r)-I}_F^2 \geq \Delta_{cov}^2$.

Let's understand what happens to $\E_{\hat{C}}Q(x)$ - the expectation of this polynomial over the purported cluster $\hat{C}$ if it has a large intersection with both $C_r$ and $C_{r'}$.

\paragraph{Lower Bound on the  Variance of Q (Lemma~\ref{lem:Large-Intersection-Implies-High-Variance-frobenius}).}
Consider the polynomial $Q(x) = x^{\top}Qx$ for $Q = \Sigma(r) -I$.
By direct computation, the expectation of this polynomial on $C_r$ equals $\Norm{\Sigma(r)-I}_F^2 + \tr(\Sigma(r)-I)$.
While the expectation on $C_{r'}$ equals $\tr(\Sigma(r)-I)$.

Using \emph{hypercontractivity} of degree 2 polynomials over $C_r$ and $C_{r'}$, we show that the variance of the polynomial $Q(x)$ on $C_r$ and $C_{r'}$ is $\ll \Delta_{cov}^2$. Thus, on $\hat{C}$, for a $w(C_r)$ fraction of points $Q(x)$ would be $\approx \Norm{\Sigma(r)-I}_F^2 + \tr(\Sigma(r)-I)$ while for a $w(C_{r'})$ fraction of points, $Q(x)$ would be $\approx \tr(\Sigma(r)-I$. The difference in these values is  $|\E_{x\sim C_r}Q(x)-\E_{x \sim C_{r'}} Q(x)| = \Norm{\Sigma(r)-I}_F^2\geq\Delta_{cov}^2$. 
Thus, if $w(C_r)w(C_{r'})$ is large, $Q(x)$ must have a variance comparable to $w(C_r)w(C_{r'})\Delta_{cov}^2$ on $\hat{C}$. Thus, we expect that if $\hat{C}$ picks a significant mass from both $C_r$ and $C_{r'}$, then, $Q(x)$ must have a large variance on $\hat{C}$.

\paragraph{Upper Bound on the Variance of Q via SoSizing Contraction Lemma (Lemma~\ref{lem:variance-upper-bound-hypercontractivity}).}
In contrast to the the case of mean separation where we relied on anti-concentration of $\hat{C}$, we prove an upper bound on the variance of  $Q$ by relying on hypercontractivity of degree 2 polynomials of $\hat{C}$. A key step in this proof relies on \emph{SoSizing} a basic matrix inequality: For all $d \times d$ matrices $A,B$, $\Norm{AB}_F^2 \leq \Norm{A}_{op}^2 \Norm{B}_F^2$.



\subsection{Outlier-Robust Variant}
Making the algorithm in the discussion above outlier-robust is relatively straightforward. Observe that in this case, we do not get access to the original sample $X$ as above. Instead, we get an $\epsilon$-corruption of $X$, say $Y$ as input. Our goal is to give a clustering of $Y$ that corresponds to the clustering $X$ with at most $O(k\epsilon)$ points misclassified in any given cluster. Observe that this is the information-theoretically the best possible result we can expect since all the $\epsilon n$ outliers could end up corrupting a single chosen  true cluster.

Our key idea here is to introduce a new collection of variables $X'$ that ``guess'' the original sample  that generated $Y$. We add the constraint that $X$ and $Y$ intersect in $(1-\epsilon)$-fraction of the points to capture the only property of $X$ that we know.

We then use a version of the system of constraints $\cA$ with $X$ replaced by $X'$. Let $C'_1, C'_2,\ldots,C'_k$ be the clusters induced by taking the points with the same indices as in $C_i$ from $X'$. Note that in this case, $X'$ and $C'_i$s are indeterminates in our constraint system. Our proof from the previous section  generalizes with only a few changes to yield simultaneous intersection bounds on $w'(C_r')w'(C_{r'}')$. The intersection bounds with $Y$ then  follow by noting a (degree 2 SoS proof of) $|C'_i \cap C_i| \geq (1-2k\epsilon)|C_i|$.

\subsection{Recursive Clustering Algorithm}
\paragraph{Simple rounding with larger running time.}
Given our certifiability proofs that prove upper bounds on simultaneous intersection of $\hat{C}$ with true clusters, one can immediately obtain an algorithm for clustering mixtures of reasonable distributions that runs in time $n^{O(s(\poly(\eta/k)) \log (\kappa))}$ and obtain Theorems~\ref{thm:main-clustering-section} and~\ref{thm:main-robust-clustering-section}. These algorithms work by computing a pseudo-distribution $\tzeta$ on $w$ (the indicator of samples in $\hat{C}$) and rounding it. For the purpose of this overview, it is helpful to think of pseudo-distributions as giving us access to low-degree moments of a distribution on $w$ that satisfies the checks that we made (certifiable anti-concentration and hypercontractivity) in our certifiability proofs above. A pseudo-distribution of degree $t$ in $n$ variables can be computed in time $n^{O(t)}$ via semidefinite programming and satisfies all inequalities that can be derived from our checks (constraint system) via low-degree SoS proofs.

Our rounding algorithm is simple and is the same as the one described in Section 4.3 of the monograph~\cite{TCS-086} that gives a simpler proof of the recently obtained algorithm for clustering spherical mixtures~\cite{KothariSteinhardt17,HopkinsLi17}. We use the simultaneous intersection bounds to derive that the second moment matrix $\pE_{\tzeta}[ww^{\top}]$ of $w$ (indicating $\hat{C}$) is approximately block diagonal, with approximate clusters as blocks. This allows us to iteratively \emph{peel off} approximate clusters greedily - see the proof of Theorem~\ref{thm:main-clustering-section}. To establish this block diagonal structure our proof requires the pseudo-distribution to have a degree that scales with $\log \kappa$ where $\kappa$ is the spread of the mixture.

\paragraph{Spread-independent recursive rounding.}
In Section~\ref{sec:better-algo}, we give a more sophisticated rounding with a running time that does not depend on the spread $\kappa$. The conceptual idea behind this rounding is based on two curious facts that we establish in Section~\ref{sec:better-algo}.

\begin{enumerate}
	\item \emph{Simple rounding has non-trivial information at constant degrees}.
The first fact (Lemma~\ref{lem:partial-cluster-recovery}) shows that when we run the simple rounding with a pseudo-distribution $\tzeta$ of degree that \emph{does not} grow with $\log \kappa$, we can still prove that $\pE_{\tzeta}[ww^{\top}]$ has a \emph{partial block diagonal} structure. This structure allows us to prove that for the clustering $\hat{C}_1, \hat{C}_2, \ldots, \hat{C}_k$ output by our simple rounding above, there exists a (non-trivial) partition $S \cup L = [k]$ such that both $\cup_{i\in S} \hat{C}_i$ and $\cup_{j \in T} \hat{C}_j$ are  essentially unions of the true clusters.

The proof relies on two facts: 1) if no pair of components of the input mixture are spectrally separated, then, the spread $\kappa$ is small so our simple rounding already works. 2) Even when there's a pair of components that are spectrally separated, the SoS degree required in our simultaneous intersection bounds can be much smaller than $\kappa$. Concretely, our analysis in Lemma~\ref{lem:intersection-bounds-from-spectral-separation} and ~\ref{lem:intersection-bounds-from-spectral-separation-robust} yields a degree that scales with $\frac{v^{\top}\Sigma(i)v}{v^{\top}\Sigma(r')v}$ that we loosely upper bound by $\kappa  = \max_{i,j} \frac{v^{\top}\Sigma(i)v}{v^{\top}\Sigma(j)v}$. If $v^{\top}\Sigma(r')v$ is comparable to $\max_{i\leq k} v^{\top}\Sigma(i)v$, then, the SoS degree of the proof is much smaller than $\kappa$. We use this observation to show that there's a $S \subseteq [k]$ and a $O(1)$ degree SoS proof that bounds the simultaneous intersection of $\hat{C}$ with true clusters $C_i$ and $C_j$ whenever $i\in S$ and $j \not \in  S$. This is enough to obtain a \emph{partial cluster recovery} guarantee.

Thus, $\cup_{i\in S} \hat{C}_i$ can be treated as a mixture of ( $< k$) components along with a small fraction of outliers and we can recurse. Of  course, we do not know $S$, so our algorithm tries all the $2^k$ possible choices and recursively tries to cluster them.

\item \emph{Verifying clusters requires only constant-degree pseudo-distributions} (Lemma~\ref{lem:soundness-verification-main}). In order to run the recursive clustering algorithm suggested above, we need a subroutine that can efficiently verify that a given purported cluster is close to a true one.  While we cannot show that degree $O(s(\poly(\eta/k)))$ pseudo-distributions are enough to \emph{find} a clustering, we will prove that they are enough to \emph{verify} a purported clustering. Concretely, given a purported cluster $\hat{C}$, we show that there's a pseudo-distribution of constant degree (independent of $\kappa$) consistent with verification constraints (see Section~\ref{sec:verification}) iff $\hat{C}$ is close to a true cluster.

The ``completeness'' of the verification algorithm is easy to prove. The meat of the analysis is proving soundness - i.e. if a purported cluster $\hat{C}$ has an appreciable  intersection  with two different true clusters, then the  verification algorithm must output reject. 

A priori, such a result can appear a bit confusing - after all, we just spent most of this overview arguing SoS proofs of degree that grow with $\kappa$ for verifying purported clusters. The key technical difference (quite curious from a proof complexity perspective) is that in the setting of verification, we are trying to derive a contradiction from the assumption that the intersection bounds are simultaneously large for two distinct  true clusters. While in the simultaneous intersection bounds, the goal is similar statement but stated in terms of the  \emph{contrapositive}.
\end{enumerate}

\subsection{Covariance Estimation in Relative Frobenius Error}
Tools in this paper allow us to get an additional application - an outlier-robust algorithm to compute the covariance of a distribution with optimal \emph{relative Frobenius error}. Prior works~\cite{DBLP:conf/focs/LaiRV16,DBLP:journals/corr/abs-1711-11581} gave guarantees for covariance estimation in spectral distance (which implies only dimension dependent bounds on the relative Frobenius error) or worked only for Gaussian distributions~\cite{DBLP:conf/focs/DiakonikolasKK016}. We show an optimal $\tilde{O}(\epsilon)$ (independent of the dimension) error guarantee on relative Frobenius error in the presence of an $\epsilon$-fraction adversarial outliers whenever the target distribution is certifiably hypercontractive. Our algorithm is same as the one used in~\cite{DBLP:journals/corr/abs-1711-11581} but our analysis relies on certifiable hypercontractivity along with the SoS contraction lemma discussed above.

As a corollary of this result, we can take an accurate clustering output by our clustering algorithms for reasonable distributions and use our covariance estimation algorithm here to get statistically optimal estimates of mean and covariance in the distances presented in Definition~\ref{def:separated-mixture-model} thus obtaining outlier-robust parameter estimation algorithms from our outlier-robust clustering algorithm. 


\section{Preliminaries}
\label{sec:preliminaries}



Throughout this paper, for a vector $v$, we use $\norm{v}_2$ to denote the Euclidean norm of $v$. For a $n \times m$ matrix $M$, we use $\norm{M}_2 = \max_{\norm{x}_2=1} \norm{Mx}_2$ to denote the spectral norm of $M$ and $\norm{M}_F = \sqrt{\sum_{i,j} M_{i,j}^2}$ to denote the Frobenius norm of $M$. For symmetric matrices we use $\succeq$ to denote the PSD/Löwner ordering over eigenvalues of $M$. For a $n \times n$, rank-$r$ symmetric matrix $M$, we use $U\Lambda U^{\top}$ to denote the Eigenvalue Decomposition, where $U$ is a $n \times r$ matrix with orthonormal columns and $\Lambda$ is a $r \times r$ diagonal matrix denoting the eigenvalues. We use $M^{\dagger} = U \Lambda^{\dagger} U^{\top} $ to denote the Moore-Penrose pseudoinverse, where $\Lambda^{\dagger}$ inverts the non-zero eigenvalues of $M$. If $M \succeq 0$, we use $M^{\dagger/2} = U \Lambda^{\dagger/2} U^{\top}$ to denote taking the square-root of the non-zero eigenvalues. We use $\Pi = U U^{\top}$ to denote the Projection matrix corresponding to the column/row span of $M$. Since $\Pi = \Pi^2$, the pseudo-inverse of $\Pi$ is itself, i.e. $\Pi^{\dagger} = \Pi$.

\begin{definition}[$\sigma$-Sub-gaussian Distribution]
A random variable $x$ is drawn from a $\sigma$-Sub-gaussian distribution if for all $t\geq0$, 
$\Pr\left[|x| \geq t \right] \leq 2\exp(-t^2/\sigma^2)$.
\end{definition}

We work with $1$-Sub-gaussian distributions unless otherwise specified and drop the $1$ when clear from context. 

\paragraph{Probability Preliminaries.} We begin with standard convergence results for mean and covariance.

\begin{fact}[Empirical Mean for Sub-gaussians]
\label{fact:empirical_mean}
Let $\cD$ be a Sub-gaussian distribution on $\R^d$ with mean $\mu$ and covariance $\Sigma$ and let $x_1, x_2, \ldots x_n \sim \cD$. Then, with probability $1-\delta$,
\[
\Norm{ \frac{1}{n}\sum^n_{i=1} x_i - \mu }_2 \leq \sqrt{\frac{\Tr(\Sigma)}{n}} + \sqrt{\frac{\|\Sigma\|_2 \log(1/\delta)}{n}}
\]
\end{fact}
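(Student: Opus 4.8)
The plan is to split the deviation $\Norm{\frac1n\sum_i x_i - \mu}_2$ into its expectation and a fluctuation about that expectation, controlling the first by a second-moment computation and the second by a sharp sub-Gaussian tail bound. Write $Y_i = x_i - \mu$, so that the $Y_i$ are i.i.d., mean zero, with covariance $\Sigma$, and set $\bar Y = \frac1n\sum_{i=1}^n Y_i = \frac1n\sum_i x_i - \mu$. For the expectation, expand $\E \Norm{\bar Y}_2^2 = \frac1{n^2}\sum_{i,j}\E\iprod{Y_i,Y_j}$; the cross terms with $i\neq j$ vanish by independence and $\E Y_i = 0$, while $\E\Norm{Y_i}_2^2 = \Tr(\Sigma)$, so $\E\Norm{\bar Y}_2^2 = \Tr(\Sigma)/n$. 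Jensen's inequality then gives $\E\Norm{\bar Y}_2 \leq \sqrt{\Tr(\Sigma)/n}$, which is the first term in the claimed bound.

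For the fluctuation, I would write $\Norm{\bar Y}_2 = \sup_{\Norm{v}_2=1}\iprod{\bar Y,v}$ and note that $v \mapsto \iprod{\bar Y, v}$ is a sub-Gaussian process on the sphere: for a fixed unit $v$, $\iprod{\bar Y, v} = \frac1n\sum_i \iprod{Y_i,v}$ is an average of $n$ i.i.d.\ mean-zero sub-Gaussian scalars with variance proxy $O(v^\top\Sigma v) \leq O(\Norm{\Sigma}_2)$, hence is $O(\sqrt{\Norm{\Sigma}_2/n})$-sub-Gaussian, and increments $\iprod{\bar Y, v - w}$ have variance proxy $O(\Norm{\Sigma^{1/2}(v-w)}_2^2/n)$. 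Applying Talagrand's concentration inequality for suprema of sub-Gaussian processes — whose fluctuation scale is governed by the diameter $\sup_{\Norm{v}_2=1}\sqrt{v^\top\Sigma v/n}\leq\sqrt{\Norm{\Sigma}_2/n}$ rather than by the metric entropy — yields, with probability $1-\delta$, that $\Norm{\bar Y}_2 \leq \E\Norm{\bar Y}_2 + O\bigl(\sqrt{\Norm{\Sigma}_2\log(1/\delta)/n}\bigr)$. Combining the two steps gives the stated inequality.

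The step I expect to be the main obstacle is obtaining the fluctuation term with the \emph{correct} scale $\sqrt{\Norm{\Sigma}_2\log(1/\delta)/n}$ (and coefficient close to $1$): a crude $\epsilon$-net union bound over the sphere would contribute a spurious $\sqrt{d\Norm{\Sigma}_2/n}$ term in place of $\sqrt{\Tr(\Sigma)/n}$, which is far weaker when $\Sigma$ has low effective rank. The honest routes are either (i) Talagrand's inequality as above, or (ii) a direct bound on the moment generating function $\E\exp\bigl(\lambda(\Norm{\bar Y}_2 - \E\Norm{\bar Y}_2)\bigr)$ via a Gaussian-smoothing argument, which produces a sub-Gaussian tail with variance proxy exactly $\Norm{\Sigma}_2/n$. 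A conceptually simpler but slightly lossier alternative is to observe that $(y_1,\dots,y_n)\mapsto\Norm{\frac1n\sum_i y_i}_2$ is $\frac1n$-Lipschitz in each argument in Euclidean norm, and run a bounded-difference-style martingale argument with sub-Gaussian increments; this gives the same form up to absolute constants. The second-moment bound for the expectation is routine and tight.
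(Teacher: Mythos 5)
The paper states this fact without proof, so the question is whether your argument stands on its own. Your decomposition --- a second-moment plus Jensen bound for $\E\Norm{\bar Y}_2 \le \sqrt{\Tr(\Sigma)/n}$, then sub-Gaussian concentration of $\Norm{\bar Y}_2$ around its mean --- is the right shape, and the expectation step is correct. The gap is in the concentration step. Borell--TIS gives $\Pr\bigl[\Norm{\bar Y}_2 > \E\Norm{\bar Y}_2 + t\bigr] \le \exp\bigl(-t^2 n/(2\Norm{\Sigma}_2)\bigr)$ only when $\bar Y$ is Gaussian; for a sub-Gaussian vector in the paper's sense (one-dimensional marginals sub-Gaussian) there is no analog that concentrates $\sup_{\Norm{v}_2 = 1}\iprod{\bar Y,v}$ around its mean with variance proxy $\Norm{\Sigma}_2/n$, since the Gaussian proof rests on isoperimetry that does not transfer. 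Your bounded-difference alternative has the same problem in sharper form: the martingale increments $D_k = g(Y_k) - \E g(Y_k)$ for a $(1/n)$-Lipschitz $g$ are only guaranteed to be $O(\sqrt{\Tr(\Sigma)}/n)$-sub-Gaussian from the marginal hypothesis (via $|D_k| \lesssim \Norm{Y_k}_2/n$ with $\Norm{Y_k}_2 \sim \sqrt{\Tr(\Sigma)}$), not $O(\sqrt{\Norm{\Sigma}_2}/n)$-sub-Gaussian. Azuma then yields a fluctuation term of order $\sqrt{\Tr(\Sigma)\log(1/\delta)/n}$ rather than $\sqrt{\Norm{\Sigma}_2\log(1/\delta)/n}$, so this route is not ``the same form up to constants'' --- promoting the increments to the tighter scale requires a dimension-free Lipschitz-concentration property of each $Y_k$ that marginal sub-Gaussianity does not supply.

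The argument that works using only marginal sub-Gaussianity is the Gaussian-decoupling idea you gesture at in route~(ii), but applied to the \emph{squared} norm rather than to $\Norm{\bar Y}_2 - \E\Norm{\bar Y}_2$. With $g\sim\cN(0,I)$ independent of $\bar Y$,
\[
\E\exp\bigl(\lambda\Norm{\bar Y}_2^2\bigr) = \E_{g}\,\E_{\bar Y}\exp\bigl(\sqrt{2\lambda}\,\iprod{g,\bar Y}\bigr) \le \E_{g}\exp\bigl(\lambda\, g^{\top}(\Sigma/n)\, g\bigr),
\]
where the first equality is the Gaussian MGF applied pointwise in $\bar Y$ and Fubini, the inequality uses only that $\iprod{g,\bar Y}$ is sub-Gaussian with variance proxy $g^{\top}(\Sigma/n)g$ for each fixed $g$, and the right-hand side is the MGF of a Gaussian quadratic form. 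Optimizing the resulting Chernoff bound is precisely the Hsu--Kakade--Zhang tail bound $\Pr\bigl[\Norm{\bar Y}_2^2 > \tfrac1n\Tr(\Sigma) + \tfrac2n\sqrt{\Tr(\Sigma^2)\,t} + \tfrac2n\Norm{\Sigma}_2\, t\bigr] \le e^{-t}$; combining $\Tr(\Sigma^2)\le\Norm{\Sigma}_2\Tr(\Sigma)$ with $\sqrt{a^2 + 2ab + b^2} = a+b$ gives $\Norm{\bar Y}_2 \le \sqrt{\Tr(\Sigma)/n} + \sqrt{2\Norm{\Sigma}_2\log(1/\delta)/n}$ directly, with no expectation/fluctuation split and no appeal to Lipschitz concentration. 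This is almost certainly the intended derivation of the stated fact.
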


\begin{fact}[Empirical Covariance for Sub-gaussians, Proposition 2.1 \cite{MR2956207-Vershynin12}]
\label{fact:empirical_cov}
Let $\cD$ be a Sub-gaussian distribution on $\mathbb{R}^d$ with mean $\mu$ and covariance $\Sigma$ and let $x_1, x_2, \ldots x_n \sim \cD$. Then, with probability $1-\delta$,
\[
\Norm{ \frac{1}{n}\sum^n_{i=1} x_i x_i^{\top} - \Sigma }_2 \leq c \left(\sqrt{\frac{d}{n}} + \sqrt{ \frac{\log(1/\delta)}{n}}\right)
\]
\end{fact}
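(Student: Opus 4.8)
The plan is to run the classical $\epsilon$-net argument for the spectral norm of a sample covariance matrix. Write $\hat{\Sigma} = \tfrac1n\sum_{i=1}^n x_ix_i^\top$. Since $\hat{\Sigma}-\Sigma$ is symmetric, its spectral norm is attained as a quadratic form over the unit sphere:
\[
\Norm{\hat{\Sigma}-\Sigma}_2 \;=\; \sup_{\norm{v}_2=1}\Abs{v^\top(\hat{\Sigma}-\Sigma)v} \;=\; \sup_{\norm{v}_2=1}\Abs{\tfrac1n\sum_{i=1}^n\paren{\iprod{x_i,v}^2 - v^\top\Sigma v}},
\]
where I use the normalization $\mu = 0$, as in \cite{MR2956207-Vershynin12}, so that $\E\iprod{x,v}^2 = v^\top\Sigma v$ (the case $\mu\neq 0$ reduces to this by the usual centering). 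So it suffices to control this empirical average uniformly over all unit vectors.

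First I would fix a single unit vector $v$. Since $\cD$ is sub-gaussian, the one-dimensional marginal $\iprod{x,v}$ is an $O(1)$-sub-gaussian scalar random variable, hence $\iprod{x,v}^2$ is $O(1)$-sub-exponential with mean $v^\top\Sigma v$. Bernstein's inequality for an average of $n$ i.i.d.\ centered sub-exponential variables then supplies an absolute constant $c'>0$ with
\[
\Pr\Brac{\Abs{\tfrac1n\sum_{i=1}^n\paren{\iprod{x_i,v}^2 - v^\top\Sigma v}} > t}\;\leq\; 2\exp\paren{-c'n\min(t^2,t)}\quad\text{for every } t\geq 0.
\]
Next I would discretize: let $N$ be a $\tfrac14$-net of the unit sphere $\mathbb{S}^{d-1}$ with $\abs{N}\leq 9^d$, and use the elementary bound $\Norm{M}_2 \leq 2\max_{v\in N}\Abs{v^\top M v}$ valid for all symmetric $M$. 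Taking a union bound of the previous display over the at most $9^d$ points of $N$ gives
\[
\Pr\Brac{\Norm{\hat{\Sigma}-\Sigma}_2 > 2t}\;\leq\; 2\cdot 9^d\,\exp\paren{-c'n\min(t^2,t)}.
\]
Choosing $t = c''\paren{\sqrt{d/n}+\sqrt{\log(1/\delta)/n}}$ for a large enough absolute constant $c''$ makes the right-hand side at most $\delta$ once $n$ exceeds a fixed multiple of $d+\log(1/\delta)$ (so that $t\leq 1$, whence $\min(t^2,t)=t^2$ and $c'nt^2$ dominates $d\log 9+\log(2/\delta)$); outside that regime the asserted inequality is vacuous. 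Renaming constants finishes the proof.

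The only step needing genuine care is the very first one — extracting from the hypothesis ``$\cD$ is sub-gaussian on $\R^d$'' a \emph{uniform} $O(1)$ bound on $\Norm{\iprod{x,v}}_{\psi_2}$ over all unit $v$, which is precisely where the absolute constant $c$ in the statement absorbs the sub-gaussian norm of $\cD$; for a $1$-sub-gaussian random vector in the sense used in the paper this is immediate from unwinding the definition. Everything downstream is the routine net-plus-Bernstein scheme, and in fact the whole statement is Proposition 2.1 of \cite{MR2956207-Vershynin12}, so one may alternatively just cite it.
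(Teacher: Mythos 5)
The paper states this Fact with a citation to Vershynin's survey and gives no proof of its own, so there is nothing internal to compare against; your net-plus-Bernstein derivation is the standard proof (it is essentially Theorem~5.39 / Corollary~5.50 of the cited reference) and it is correct. Two small remarks. You rightly flag the implicit normalization $\mu=0$: for $\mu\neq 0$ the estimator $\frac{1}{n}\sum_i x_i x_i^\top$ has expectation $\Sigma+\mu\mu^\top$, so the Fact as literally written only makes sense when $\mu=0$ (or when ``covariance'' is read as the second-moment matrix), exactly as you note. Second, dismissing the regime $t>1$ as ``vacuous'' is a touch off --- when $n$ is small compared to $d$ the right-hand side $c\sqrt{d/n}$ is still a concrete finite claim --- but your underlying instinct is right: there the $\min(t^2,t)=t$ branch of Bernstein does not clear the $9^d$ union bound with an absolute constant. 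Vershynin's own statement handles this by putting $\max(\rho,\rho^2)$ on the right-hand side, where $\rho=\sqrt{d/n}+\sqrt{\log(1/\delta)/n}$; the paper quotes only the $\rho\le 1$ form, which is the only regime in which it invokes the Fact, so this is an informality of the source rather than a gap in your argument.
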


\begin{definition}[Hellinger Distance]
For probability distribution $p,q$ on  $\R^d$, let 
\[
h(P,Q) = \frac{1}{\sqrt{2}} \sqrt{ \int_{\R^d} \left(\sqrt{p(x)} - \sqrt{q(x)}\right)^2 dx}
\] 
be the Hellinger distance between them. 
\end{definition}

\begin{remark}
Hellinger distance between $p,q$ satisfies: $h(p,q)^2 \leq \dtv(p,q) \leq h(p,q)\sqrt{2 -h(p,q)^2}$.  
\end{remark}

\begin{fact}[Hellinger Distance between Gaussians] \label{fact:hellinger-gaussians}
\begin{equation*}
h(\cN(\mu,\Sigma), \cN(\mu',\Sigma'))^2 =1 - \frac{\det(\Sigma)^{1/4} \det(\Sigma')^{1/4}}{\det\left( \frac{\Sigma+\Sigma'}{2}\right)^{\frac{1}{2}}} \exp\left(-\frac{1}{8} (\mu - \mu)^{\top}\Paren{\frac{\Sigma + \Sigma'}{2}}^{-1} (\mu - \mu') \right)
\end{equation*}
\end{fact}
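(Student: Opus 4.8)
The plan is to reduce the computation to the Bhattacharyya affinity and then to a single Gaussian integral. Write $p=\cN(\mu,\Sigma)$ and $q=\cN(\mu',\Sigma')$ for the densities. Expanding the definition of the Hellinger distance and using $\int p=\int q=1$,
\[
h(p,q)^2=\frac12\int_{\R^d}\Paren{\sqrt{p(x)}-\sqrt{q(x)}}^2\,dx = 1-\int_{\R^d}\sqrt{p(x)q(x)}\,dx,
\]
so it suffices to show that $\rho:=\int_{\R^d}\sqrt{p(x)q(x)}\,dx$ equals the product of the determinant ratio and the exponential in the statement. (Here we tacitly assume $\Sigma,\Sigma'\succ 0$, as the stated formula does, so all integrals converge.)

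Next I would substitute the explicit densities and pull out normalizing constants:
\[
\sqrt{p(x)q(x)}=\frac{1}{(2\pi)^{d/2}\det(\Sigma)^{1/4}\det(\Sigma')^{1/4}}\exp\Paren{-\tfrac14(x-\mu)^{\top}\Sigma^{-1}(x-\mu)-\tfrac14(x-\mu')^{\top}\Sigma'^{-1}(x-\mu')}.
\]
Setting $M=\Sigma^{-1}+\Sigma'^{-1}$ and $b=\Sigma^{-1}\mu+\Sigma'^{-1}\mu'$, the quadratic in the exponent is $x^{\top}Mx-2b^{\top}x+(\mu^{\top}\Sigma^{-1}\mu+\mu'^{\top}\Sigma'^{-1}\mu')$, and completing the square rewrites it as $(x-M^{-1}b)^{\top}M(x-M^{-1}b)+\Paren{\mu^{\top}\Sigma^{-1}\mu+\mu'^{\top}\Sigma'^{-1}\mu'-b^{\top}M^{-1}b}$. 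Integrating $\exp\Paren{-\tfrac14(x-M^{-1}b)^{\top}M(x-M^{-1}b)}$ over $\R^d$ contributes $(2\pi)^{d/2}2^{d/2}\det(M)^{-1/2}$, so
\[
\rho=\frac{2^{d/2}}{\det(M)^{1/2}\det(\Sigma)^{1/4}\det(\Sigma')^{1/4}}\exp\Paren{-\tfrac14\Paren{\mu^{\top}\Sigma^{-1}\mu+\mu'^{\top}\Sigma'^{-1}\mu'-b^{\top}M^{-1}b}}.
\]

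It then remains to simplify both factors. For the prefactor I would use $M=\Sigma^{-1}+\Sigma'^{-1}=\Sigma^{-1}(\Sigma+\Sigma')\Sigma'^{-1}$, so $\det(M)=\det(\Sigma+\Sigma')/(\det(\Sigma)\det(\Sigma'))$, together with $\det(\Sigma+\Sigma')=2^d\det\Paren{\tfrac{\Sigma+\Sigma'}{2}}$; the powers of $2^{d/2}$ then cancel and the prefactor collapses to $\det(\Sigma)^{1/4}\det(\Sigma')^{1/4}/\det\Paren{\tfrac{\Sigma+\Sigma'}{2}}^{1/2}$. For the exponent I would invoke the matrix identity $P(P+Q)^{-1}Q=(P^{-1}+Q^{-1})^{-1}$ for $P,Q\succ 0$ (proved by clearing denominators) with $P=\Sigma^{-1}$, $Q=\Sigma'^{-1}$; combined with $I-(P+Q)^{-1}P=(P+Q)^{-1}Q$ this yields $\mu^{\top}\Sigma^{-1}\mu+\mu'^{\top}\Sigma'^{-1}\mu'-b^{\top}M^{-1}b=(\mu-\mu')^{\top}(\Sigma+\Sigma')^{-1}(\mu-\mu')$, and since $(\Sigma+\Sigma')^{-1}=\tfrac12\Paren{\tfrac{\Sigma+\Sigma'}{2}}^{-1}$ the factor $-\tfrac14$ becomes $-\tfrac18$. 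Plugging the two simplified pieces back into the expression for $\rho$ and recalling $h^2=1-\rho$ gives the claimed formula. There is no real obstacle here — this is the classical Gaussian-integral calculation — and the only places demanding care are the bookkeeping of the $2$ and $2^d$ factors in the prefactor and the small matrix-algebra identity used to collapse the cross term in the exponent.
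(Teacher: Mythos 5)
Your derivation is correct, and it is the standard one: expand the Hellinger distance as $h(p,q)^2 = 1 - \int\sqrt{pq}$, complete the square in the exponent of $\sqrt{pq}$, evaluate the resulting Gaussian integral to pick up the $(2\pi)^{d/2}2^{d/2}\det(M)^{-1/2}$ factor, and then collapse the determinant prefactor via $\det(\Sigma^{-1}+\Sigma'^{-1}) = \det(\Sigma+\Sigma')/(\det\Sigma\det\Sigma')$ and the quadratic residue via the parallel-sum identity $\Sigma^{-1}(\Sigma^{-1}+\Sigma'^{-1})^{-1}\Sigma'^{-1} = (\Sigma+\Sigma')^{-1}$. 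The paper states this result as a Fact without supplying a proof, so there is no argument to compare against; your computation fills that in correctly, including the powers-of-two bookkeeping and the $-\tfrac14 \to -\tfrac18$ rescaling when passing from $(\Sigma+\Sigma')^{-1}$ to $\bigl(\tfrac{\Sigma+\Sigma'}{2}\bigr)^{-1}$. One small remark: the formula as printed in the paper contains a typo, reading $(\mu-\mu)^{\top}$ where it should read $(\mu-\mu')^{\top}$; your derivation produces the correct $(\mu-\mu')^{\top}\bigl(\tfrac{\Sigma+\Sigma'}{2}\bigr)^{-1}(\mu-\mu')$.
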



Next, we define pseudo-distributions and sum-of-squares proofs. Detailed exposition of the sum-of-squares method and its usage in average-case algorithm design can be found in ~\cite{TCS-086} and the lecture notes~\cite{BarakS16}.

Let $x = (x_1, x_2, \ldots, x_n)$ be a tuple of $n$ indeterminates and let $\R[x]$ be the set of polynomials with real coefficients and indeterminates $x_1,\ldots,x_n$. We say that a polynomial $p\in \R[x]$ is a \emph{sum-of-squares (sos)} if there exist polynomials $q_1,\ldots,q_r$ such that $p=q_1^2 + \cdots + q_r^2$.

\subsection{Pseudo-distributions}

Pseudo-distributions are generalizations of probability distributions.
We can represent a discrete (i.e., finitely supported) probability distribution over $\R^n$ by its probability mass function $D\from \R^n \to \R$ such that $D \geq 0$ and $\sum_{x \in \mathrm{supp}(D)} D(x) = 1$.
Similarly, we can describe a pseudo-distribution by its mass function by relaxing the constraint $D\ge 0$ to passing certain low-degree non-negativity tests.

Concretely, a \emph{level-$\ell$ pseudo-distribution} is a finitely-supported function $D:\R^n \rightarrow \R$ such that $\sum_{x} D(x) = 1$ and $\sum_{x} D(x) f(x)^2 \geq 0$ for every polynomial $f$ of degree at most $\ell/2$.
(Here, the summations are over the support of $D$.)
A straightforward polynomial-interpolation argument shows that every level-$\infty$-pseudo distribution satisfies $D\ge 0$ and is thus an actual probability distribution.
We define the \emph{pseudo-expectation} of a function $f$ on $\R^d$ with respect to a pseudo-distribution $D$, denoted $\pE_{D(x)} f(x)$, as
\begin{equation}
  \pE_{D(x)} f(x) = \sum_{x} D(x) f(x) \,\mper
\end{equation}
The degree-$\ell$ moment tensor of a pseudo-distribution $D$ is the tensor $\E_{D(x)} (1,x_1, x_2,\ldots, x_n)^{\otimes \ell}$.
In particular, the moment tensor has an entry corresponding to the pseudo-expectation of all monomials of degree at most $\ell$ in $x$.
The set of all degree-$\ell$ moment tensors of probability distribution is a convex set.
Similarly, the set of all degree-$\ell$ moment tensors of degree $d$ pseudo-distributions is also convex.
Unlike moments of distributions, there's an efficient separation oracle for moment tensors of pseudo-distributions.

\begin{fact}[\cite{MR939596-Shor87,parrilo2000structured,MR1748764-Nesterov00,MR1846160-Lasserre01}]
  \label[fact]{fact:sos-separation-efficient}
  For any $n,\ell \in \N$, the following set has a $n^{O(\ell)}$-time weak separation oracle (in the sense of \cite{MR625550-Grotschel81}):
  \begin{equation}
    \Set{ \pE_{D(x)} (1,x_1, x_2, \ldots, x_n)^{\otimes d} \mid \text{ degree-d pseudo-distribution $D$ over $\R^n$}}\,\mper
  \end{equation}
\end{fact}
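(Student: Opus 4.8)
The plan is to exhibit the set as the coordinate projection of a spectrahedron of size $n^{O(\ell)}$ and separate over that spectrahedron using one approximate eigenvalue computation. Set $N=\binom{n+\lceil \ell/2\rceil}{\lceil \ell/2\rceil}=n^{O(\ell)}$ and recall that a level-$\ell$ pseudo-distribution $D$ over $\R^n$ is determined, as far as degree-$\le\ell$ pseudo-expectations go, by its \emph{moment matrix} $M\in\R^{N\times N}$ indexed by monomials $\alpha,\beta$ of degree at most $\ell/2$ with $M[\alpha,\beta]=\pE_D x^{\alpha+\beta}$. The defining conditions $\sum_x D(x)=1$ and $\sum_x D(x)f(x)^2\ge 0$ for all $f$ of degree $\le\ell/2$ translate \emph{exactly} into: $M[\mathbf{0},\mathbf{0}]=1$; the consistency equalities $M[\alpha,\beta]=M[\alpha',\beta']$ whenever $\alpha+\beta=\alpha'+\beta'$; and $M\succeq 0$. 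Since the degree-$\ell$ moment tensor $T$ records $\pE_D x^\gamma$ for every $|\gamma|\le\ell$, a consistent $M$ and such a $T$ determine each other by a fixed linear reading-off map (uniquely in both directions). Hence the set in the statement is the image under a fixed linear map of $\{M\succeq 0:\ M[\mathbf{0},\mathbf{0}]=1,\ M\text{ consistent}\}$, and each fiber of that map is a single point.

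\textbf{The oracle.} Given a candidate tensor $T$ with rational entries, first form the unique candidate matrix $M=M(T)$ via the reading-off rule and check the affine constraints $M[\mathbf{0},\mathbf{0}]=1$ and consistency; any violated equality is itself a linear functional on tensors that vanishes on the whole set but not at $T$, so it is a valid separating hyperplane, and there are only $N^{O(1)}=n^{O(\ell)}$ such checks. If all affine constraints hold, run an approximate symmetric eigendecomposition of $M$ in $\poly(N)=n^{O(\ell)}$ arithmetic operations, to the precision demanded by the weak-separation parameter. If the least eigenvalue is $\ge-\delta$, declare $T$ (approximately) in the set. Otherwise extract a unit vector $u=(u_\alpha)$ with $u^{\top}Mu<-\delta/2$ and set $f=\sum_{|\alpha|\le \ell/2} u_\alpha x^\alpha$; then for any level-$\ell$ pseudo-distribution $D'$ one has $\pE_{D'} f(x)^2=u^{\top}M(T')u$, so the map $T'\mapsto u^{\top}M(T')u$ is a linear functional that is nonnegative on the entire set and strictly negative at $T$, i.e.\ the desired separating hyperplane. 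All quantities have bit-length polynomial in the input size, giving an $n^{O(\ell)}$-time weak separation oracle in the sense of \cite{MR625550-Grotschel81}.

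\textbf{What is routine and what is not.} The routine parts are the linear-algebra bookkeeping (indexing monomials, the reading-off map, the bound $N=n^{O(\ell)}$) and the standard fact that PSD-testing and approximate eigendecomposition of an $N\times N$ symmetric matrix run in time polynomial in $N$ and the bit-length. The one genuinely delicate point, and where I would spend the most care, is the \emph{weak} qualifier: the spectrahedron need not be full-dimensional and $T$ is given only approximately, so one must fix the tolerance $\delta$ and a bounding ball, verify that a $\delta$-approximate negative eigenvector still yields a functional that separates $T$ from every point at distance at least the stipulated accuracy from the set, and confirm that rational eigenvector approximations of polynomial bit-length suffice. This is exactly the ``SDP with a weak separation oracle'' packaging carried out in \cite{MR939596-Shor87,parrilo2000structured,MR1748764-Nesterov00,MR1846160-Lasserre01}, which I would cite for the numerical details rather than reprove the ellipsoid-method accounting here.
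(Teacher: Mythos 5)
Your proof is correct and is the standard argument for this fact; the paper states it as a Fact with citations to Shor, Parrilo, Nesterov, and Lasserre and does not include its own proof, and what you have written is essentially the argument those references give (moment matrix as a spectrahedron, affine consistency constraints, PSD testing via approximate eigendecomposition, negative eigenvector gives a separating linear functional). You have also correctly flagged the only genuinely delicate point, namely the numerical and bit-length accounting needed to turn approximate eigencomputations into a weak separation oracle over a possibly degenerate spectrahedron, and it is reasonable to defer that to the cited sources. One small remark: the paper's phrasing uses both $d$ and $\ell$ where it clearly means a single degree parameter (a typo), and you have implicitly and sensibly resolved this by working with $\ell$ throughout; also, for the weak separation oracle in the Grötschel--Lovász--Schrijver sense to literally apply one needs to intersect with an explicit bounding ball, which the paper handles elsewhere by working with explicitly bounded constraint systems, so it is fine that you only mention the bounding ball in passing.
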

This fact, together with the equivalence of weak separation and optimization \cite{MR625550-Grotschel81} allows us to efficiently optimize over pseudo-distributions (approximately)---this algorithm is referred to as the sum-of-squares algorithm. The \emph{level-$\ell$ sum-of-squares algorithm} optimizes over the space of all level-$\ell$ pseudo-distributions that satisfy a given set of polynomial constraints (defined below).

\begin{definition}[Constrained pseudo-distributions]
  Let $D$ be a level-$\ell$ pseudo-distribution over $\R^n$.
  Let $\cA = \{f_1\ge 0, f_2\ge 0, \ldots, f_m\ge 0\}$ be a system of $m$ polynomial inequality constraints.
  We say that \emph{$D$ satisfies the system of constraints $\cA$ at degree $r$}, denoted $D \sdtstile{r}{} \cA$, if for every $S\subseteq[m]$ and every sum-of-squares polynomial $h$ with $\deg h + \sum_{i\in S} \max\set{\deg f_i,r}$, $\pE_{D} h \cdot \prod _{i\in S}f_i  \ge 0$.

  We write $D \sdtstile{}{} \cA$ (without specifying the degree) if $D \sdtstile{0}{} \cA$ holds.
  Furthermore, we say that $D\sdtstile{r}{}\cA$ holds \emph{approximately} if the above inequalities are satisfied up to an error of $2^{-n^\ell}\cdot \norm{h}\cdot\prod_{i\in S}\norm{f_i}$, where $\norm{\cdot}$ denotes the Euclidean norm\footnote{The choice of norm is not important here because the factor $2^{-n^\ell}$ swamps the effects of choosing another norm.} of the coefficients of a polynomial in the monomial basis.
\end{definition}

We remark that if $D$ is an actual (discrete) probability distribution, then we have  $D\sdtstile{}{}\cA$ if and only if $D$ is supported on solutions to the constraints $\cA$. We say that a system $\cA$ of polynomial constraints is \emph{explicitly bounded} if it contains a constraint of the form $\{ \|x\|^2 \leq M\}$.
The following fact is a consequence of \cref{fact:sos-separation-efficient} and \cite{MR625550-Grotschel81},

\begin{fact}[Efficient Optimization over Pseudo-distributions]
There exists an $(n+ m)^{O(\ell)} $-time algorithm that, given any explicitly bounded and satisfiable system\footnote{Here, we assume that the bit complexity of the constraints in $\cA$ is $(n+m)^{O(1)}$.} $\cA$ of $m$ polynomial constraints in $n$ variables, outputs a level-$\ell$ pseudo-distribution that satisfies $\cA$ approximately. \label{fact:eff-pseudo-distribution}
\end{fact}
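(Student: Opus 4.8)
The plan is to derive this from \cref{fact:sos-separation-efficient} together with the polynomial-time equivalence between weak separation and weak optimization for convex bodies~\cite{MR625550-Grotschel81} (the ellipsoid method). The crux is that imposing the constraints $\cA$ at degree $\ell$ only adds finitely many, efficiently checkable, convex constraints on the degree-$\ell$ moment tensor, so the feasible region remains a convex set with an efficient separation oracle.

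First I would unpack what $D \sdtstile{\ell}{} \cA$ requires: for every $S \subseteq [m]$ and every sum-of-squares polynomial $h$ with $\deg h + \sum_{i \in S}\max\{\deg f_i,\ell\} \le \ell$, one has $\pE_{D}\, h \cdot \prod_{i \in S} f_i \ge 0$. Since each $\max\{\deg f_i,\ell\} \ge 1$, only the $\binom{m}{\le \ell} = m^{O(\ell)}$ subsets $S$ with $|S| \le \ell$ are relevant. For a fixed such $S$, quantifying over all admissible $h$ is equivalent to positive semidefiniteness of a single ``localizing matrix'' $M_S$ of dimension $n^{O(\ell)}$, whose $(\alpha,\beta)$ entry is $\pE_D\, x^{\alpha} x^{\beta}\prod_{i\in S} f_i$ — an explicit linear function of the degree-$\ell$ moment tensor of $D$. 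Hence the set of degree-$\ell$ moment tensors of level-$\ell$ pseudo-distributions satisfying $\cA$ is the intersection of the convex set of \cref{fact:sos-separation-efficient} with $m^{O(\ell)}$ PSD constraints, and is therefore convex.

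Next I would assemble a weak separation oracle for this set running in time $(n+m)^{O(\ell)}$: on input a candidate tensor, first invoke the oracle of \cref{fact:sos-separation-efficient}; if it certifies membership there, then for each relevant $S$ compute a least eigenvalue of $M_S$ — a negative eigenvalue with eigenvector $u$ yields a separating hyperplane via the linear functional $Z \mapsto \langle Z, uu^{\top}\rangle$ on the corresponding subtensor, and if all $M_S \succeq 0$ we report membership. Each eigencomputation is on an $n^{O(\ell)}\times n^{O(\ell)}$ matrix and there are $m^{O(\ell)}$ of them, so the total time is $(n+m)^{O(\ell)}$. Finally I would run the ellipsoid algorithm with this oracle: explicit boundedness of $\cA$ (it contains $\{\|x\|^2 \le M\}$) confines every moment tensor in the set to an explicit Euclidean ball (giving the outer bound the method needs), and satisfiability means the set is nonempty, so the weak-optimization-from-weak-separation theorem of~\cite{MR625550-Grotschel81} outputs, in time polynomial in the oracle runtime and ambient dimension — hence $(n+m)^{O(\ell)}$ — a point within $2^{-n^{\ell}}$ of the set, i.e.\ a level-$\ell$ pseudo-distribution satisfying $\cA$ approximately.

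The one genuinely delicate point — and the reason the conclusion says ``approximately'' — is that this feasible body may be lower-dimensional (empty interior), so one cannot use the version of the GLS machinery that presupposes a guaranteed inscribed ball; instead one uses the \emph{weak} (inexact) form, where the $2^{-n^{\ell}}$ slack compensates for the possibly empty interior. All remaining steps are routine bookkeeping about degrees and dimensions.
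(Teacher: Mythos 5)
Your proposal is correct and follows exactly the route the paper indicates: the paper itself gives no written proof but states that Fact~\ref{fact:eff-pseudo-distribution} is a consequence of \cref{fact:sos-separation-efficient} together with the weak separation/optimization equivalence of~\cite{MR625550-Grotschel81}, which is precisely the argument you flesh out (augment the base separation oracle with $m^{O(\ell)}$ localizing-matrix PSD checks, then run the ellipsoid method using explicit boundedness for the outer ball and the weak/inexact GLS form to handle a possibly lower-dimensional feasible body). The one place to be slightly careful is that the paper's definition of $D\sdtstile{r}{}\cA$ has a typo in the degree condition, but your reading of it — only $m^{O(\ell)}$ subsets $S$ are relevant and each gives one PSD block of size $n^{O(\ell)}$ — is the intended one.
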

\paragraph{Basic Facts about Pseudo-Distributions.}

We will use the following Cauchy-Schwarz inequality for pseudo-distributions:
\begin{fact}[Cauchy-Schwarz for Pseudo-distributions]
Let $f,g$ be polynomials of degree at most $d$ in indeterminate $x \in \R^d$. Then, for any degree d pseudo-distribution $\tzeta$,
$\pE_{\tzeta}[fg] \leq \sqrt{\pE_{\tzeta}[f^2]} \sqrt{\pE_{\tzeta}[g^2]}$.
 \label{fact:pseudo-expectation-cauchy-schwarz}
\end{fact}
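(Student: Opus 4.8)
The plan is to run the textbook proof of the Cauchy--Schwarz inequality, replacing the step ``the expectation of a square is nonnegative'' by the defining property of a pseudo-distribution: $\pE_{\tzeta}[h^2] \ge 0$ for every polynomial $h$ of degree at most half the level of $\tzeta$. (We may assume the level of $\tzeta$ is at least $2d$, so that $\pE_{\tzeta}$ is defined and linear on $f^2$, $g^2$, $fg$ and nonnegative on squares of $\R$-linear combinations of $f$ and $g$; this is the convention under which the statement is used.) Write $a = \pE_{\tzeta}[f^2]$, $b = \pE_{\tzeta}[g^2]$, and $c = \pE_{\tzeta}[fg]$. Applying nonnegativity on squares to $h = f$ and $h = g$ gives $a \ge 0$ and $b \ge 0$, and the goal is to show $c \le \sqrt{ab}$.

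The key step is to extract two one-parameter families of scalar inequalities from nonnegativity on squares together with linearity of $\pE_{\tzeta}$. For every $\lambda \in \R$,
\[
0 \;\le\; \pE_{\tzeta}[(f - \lambda g)^2] \;=\; a - 2\lambda c + \lambda^2 b,
\]
and symmetrically $0 \le \pE_{\tzeta}[(\mu f - g)^2] = \mu^2 a - 2\mu c + b$ for every $\mu \in \R$. Now split into cases. If $b > 0$, the first expression is a nonnegative quadratic in $\lambda$ with positive leading coefficient, so its discriminant is nonpositive, i.e.\ $4c^2 - 4ab \le 0$, giving $c^2 \le ab$. If instead $a > 0$, the same argument applied to the second expression (viewed as a quadratic in $\mu$) again yields $c^2 \le ab$. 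Finally, if $a = b = 0$, the first inequality reads $-2\lambda c \ge 0$ for all $\lambda \in \R$, which forces $c = 0$. In every case one has $c \le |c| = \sqrt{c^2} \le \sqrt{ab} = \sqrt{\pE_{\tzeta}[f^2]}\sqrt{\pE_{\tzeta}[g^2]}$, which is the desired conclusion.

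There is essentially no obstacle here: the whole argument is a few-line discriminant computation. The only points that merit a sliver of care are the degenerate cases $\pE_{\tzeta}[f^2] = 0$ or $\pE_{\tzeta}[g^2] = 0$, where one cannot normalize and must instead let the free parameter range over all of $\R$ to pin down $c$; and the routine bookkeeping that the test polynomials $(f - \lambda g)^2$ and $(\mu f - g)^2$ have degree within the range on which the level-$\tzeta$ pseudo-distribution is guaranteed to be nonnegative on squares.
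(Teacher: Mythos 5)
Your proof is correct. The paper states this as a ``Fact'' with no proof of its own (it is standard in the SoS literature), so there is no paper argument to compare against; your discriminant argument is exactly the canonical one. Two small remarks. First, you are right to flag the degree bookkeeping: as written, the statement says ``degree $d$ pseudo-distribution'' while $f^2$, $g^2$, $fg$, and $(f-\lambda g)^2$ are degree-$2d$ polynomials, so the hypothesis should really read ``degree at least $2d$'' (or, equivalently, one takes $f,g$ of degree at most $d/2$). This is an imprecision in the paper's phrasing, not in your proof, and your explicit assumption resolves it. Second, a minor stylistic point: the symmetric quadratic in $\mu$ is redundant — if $a>0$ and $b=0$ the first family $a - 2\lambda c + \lambda^2 b = a - 2\lambda c \ge 0$ for all $\lambda$ already forces $c=0 \le \sqrt{ab}$, so the single quadratic in $\lambda$ suffices for all cases once you separate $b>0$ from $b=0$. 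This does not affect correctness, only economy.
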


\begin{fact}[Hölder's Inequality for Pseudo-Distributions] \label{fact:pseudo-expectation-holder}
Let $f,g$ be polynomials of degree at most $d$ in indeterminate $x \in \R^d$. 
Fix $t \in \N$. Then, for any degree $dt$ pseudo-distribution $\tzeta$,
$\pE_{\tzeta}[f^{t-1}g] \leq \paren{\pE_{\tzeta}[f^t]}^{\frac{t-1}{t}} \paren{\pE_{\tzeta}[g^t]}^{1/t}$.
\end{fact}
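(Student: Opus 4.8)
The plan is to deduce the inequality from a single scalar polynomial inequality---a sum-of-squares form of Young's inequality carrying a free scaling parameter---and then to push it through the pseudo-distribution using only the defining property $\pE_{\tzeta}[(\cdot)^2]\ge 0$, optimizing over the parameter at the end to recover the sharp Hölder exponents. I will carry this out for even $t$ (the case used in the sequel); for $t=1$ the statement is trivial, and the odd case is analogous.

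First I would fix $\lambda>0$ and consider the bivariate form
\[
P_\lambda(a,b)\;=\;\tfrac{t-1}{t}\,\lambda^{t/(t-1)}\,a^{t}\;+\;\tfrac{1}{t\lambda^{t}}\,b^{t}\;-\;a^{t-1}b .
\]
The key claim is that $P_\lambda$ is a sum of squares of degree-$(t/2)$ forms. Nonnegativity on $\R^2$ is immediate: on the quadrant $a,b\ge 0$ it is the weighted AM--GM (Young) inequality with weights $\tfrac{t-1}{t},\tfrac1t$ applied to $\lambda^{t/(t-1)}a^{t}$ and $\lambda^{-t}b^{t}$, while for even $t$ the term $a^{t-1}b$ is nonpositive whenever $ab\le 0$, so the other two (nonnegative) terms dominate; and $(a,b)\mapsto(-a,-b)$ is a symmetry. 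For the explicit SoS decomposition I would note that $P_\lambda$ vanishes, to even order, precisely along the Young equality line $b=\lambda^{t/(t-1)}a$, hence is divisible by $\paren{b-\lambda^{t/(t-1)}a}^2$; the cofactor is again a nonnegative binary form, now of degree $t-2$, so an induction on the even degree---with base case $P_\lambda=\tfrac12\paren{\lambda a-\lambda^{-1}b}^2$ at $t=2$---yields $P_\lambda=\sum_i q_i(a,b)^2$ with each $q_i$ a form of degree $t/2$. (Alternatively one may simply invoke Hilbert's theorem that every nonnegative binary form is a sum of squares.)

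Next I would substitute $a=f(x)$, $b=g(x)$: since $\deg f,\deg g\le d$, each $q_i(f,g)$ has degree at most $td/2$, so $q_i(f,g)^2$ has degree at most $td$, and a level-$td$ pseudo-distribution $\tzeta$ therefore satisfies $\pE_{\tzeta}\,q_i(f,g)^2\ge 0$, hence $\pE_{\tzeta}\,P_\lambda(f,g)\ge 0$. Abbreviating $F=\pE_{\tzeta}[f^t]$ and $G=\pE_{\tzeta}[g^t]$---both $\ge 0$ because $t$ is even, e.g.\ $F=\pE_{\tzeta}[(f^{t/2})^2]$---this rearranges to
\[
\pE_{\tzeta}[f^{t-1}g]\;\le\;\tfrac{t-1}{t}\,\lambda^{t/(t-1)}\,F\;+\;\tfrac{1}{t\lambda^{t}}\,G\qquad\text{for every }\lambda>0 .
\]
Minimizing the right-hand side over $\lambda>0$ is elementary calculus: the optimum is attained at $\lambda=\paren{G/F}^{(t-1)/t^{2}}$ with value $F^{(t-1)/t}G^{1/t}$, which is exactly the asserted bound. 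The degenerate cases $F=0$ or $G=0$ need no separate treatment: the infimum over $\lambda>0$ of the right-hand side is then $0$ (let $\lambda\to\infty$, resp.\ $\lambda\to 0^{+}$), forcing $\pE_{\tzeta}[f^{t-1}g]\le 0=F^{(t-1)/t}G^{1/t}$.

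The only step with real content is the sum-of-squares certification of $P_\lambda$; the pseudo-expectation manipulation and the scalar optimization are routine. I expect the bookkeeping there to be the main obstacle: verifying the claimed divisibility by $\paren{b-\lambda^{t/(t-1)}a}^2$, checking that the cofactor is genuinely a nonnegative form (so the induction applies), and confirming that every square in the resulting decomposition stays within degree $td$ so that a level-$td$ pseudo-distribution suffices. If one is content to cite Hilbert's theorem this step collapses to one line. As a remark, when $t$ is a power of two the whole statement also follows by directly iterating the Cauchy--Schwarz inequality for pseudo-distributions (\cref{fact:pseudo-expectation-cauchy-schwarz}), e.g.\ $\pE_{\tzeta}[f^{3}g]\le\pE_{\tzeta}[f^4]^{1/2}\pE_{\tzeta}[f^2g^2]^{1/2}\le\pE_{\tzeta}[f^4]^{3/4}\pE_{\tzeta}[g^4]^{1/4}$, but the parameterized Young's inequality is what is needed to cover all even $t$ uniformly.
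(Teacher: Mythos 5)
The paper lists this among ``Basic Facts about Pseudo-Distributions'' without proof, so there is no internal argument to compare against; I assess your proposal on its own terms. Your route---certify a $\lambda$-parameterized Young inequality $P_\lambda(a,b)=\frac{t-1}{t}\lambda^{t/(t-1)}a^t+\frac{1}{t\lambda^t}b^t-a^{t-1}b$ as a sum of squares of binary forms of degree $t/2$, substitute $a=f(x)$, $b=g(x)$, use that a level-$dt$ pseudo-distribution is nonnegative on squares of polynomials of degree at most $dt/2$, and then optimize over $\lambda>0$---is correct for even $t$. The arithmetic checks: with $F=\pE_{\tzeta}[f^t]$ and $G=\pE_{\tzeta}[g^t]$ (both nonnegative for even $t$, since $f^t=(f^{t/2})^2$), the minimizer is $\lambda=(G/F)^{(t-1)/t^2}$, the optimal value is $F^{(t-1)/t}G^{1/t}$, and the degenerate cases $F=0$ or $G=0$ are disposed of by the $\lambda\to\infty$ and $\lambda\to 0^+$ limits. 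Invoking Hilbert's theorem that nonnegative binary forms are sums of squares settles the only non-routine step; the divisibility/induction sketch you offer as an alternative is really just a re-derivation of that theorem, and it needs the caveat that the cofactor after extracting $(b-\lambda^{t/(t-1)}a)^2$ is a generic nonnegative binary form that may have no further real zeros, so the induction must be carried out for arbitrary nonnegative binary forms, not forms of the special shape $P_{\lambda'}$. Your remark that for $t$ a power of two the claim also follows by iterating \cref{fact:pseudo-expectation-cauchy-schwarz} is correct.

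The one genuine slip is the parenthetical that ``the odd case is analogous.'' It is not: for odd $t$ the form $P_\lambda$ is not nonnegative on $\R^2$ (already $a^t$ changes sign), so there is no SoS certificate to push through, and in fact the inequality as printed does not even typecheck for odd $t$ in general, since $\pE_{\tzeta}[f^t]$ and $\pE_{\tzeta}[g^t]$ need not be nonnegative and the fractional powers on the right-hand side are then undefined as real numbers. The Fact should be read with $t$ even, which is consistent with \cref{fact:sos-holder} (stated only for an even exponent) and with how it is used in this paper. With that reading your proof is complete; I would simply delete the claim about odd $t$.
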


\begin{corollary}[Comparison of Norms]
\label{fact:comparison-of-pseudo-expectation-norms}
Let $\tzeta$ be a degree $t^2$ pseudo-distribution over a scalar indeterminate $x$. Then, $\pE[x^{t}]^{1/t} \geq \pE[x^{t'}]^{1/t'}$ for every $t' \leq t$. 
\end{corollary}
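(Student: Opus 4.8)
The statement is the pseudo-distribution analogue of monotonicity of $L^p$-norms, so the plan is to derive it from Hölder's inequality for pseudo-distributions (\cref{fact:pseudo-expectation-holder}) by mimicking the classical proof. First I would reduce to a one-step comparison: it suffices to show $\pE_{\tzeta}[x^m]^{1/m}\le \pE_{\tzeta}[x^{m+1}]^{1/(m+1)}$ for every integer $m$ with $t'\le m<t$, since chaining these from $m=t'$ up to $m=t-1$ gives the claim (the case $t'=t$ being trivial). A degree-$t^2$ pseudo-distribution comfortably supports every invocation of Hölder along this chain, each of which only needs degree $O(t)$.

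For the one-step bound, apply \cref{fact:pseudo-expectation-holder} with $f=x$, $g=1$ and exponent parameter $m+1$: since $\deg f=\deg g=1$ and $\tzeta$ has degree at least $m+1$,
\[
\pE_{\tzeta}[x^{m}]=\pE_{\tzeta}\bigl[f^{(m+1)-1}g\bigr]\ \le\ \bigl(\pE_{\tzeta}[x^{m+1}]\bigr)^{\frac{m}{m+1}}\bigl(\pE_{\tzeta}[1]\bigr)^{\frac{1}{m+1}}=\bigl(\pE_{\tzeta}[x^{m+1}]\bigr)^{\frac{m}{m+1}},
\]
and raising both sides to the power $1/m$ yields the desired inequality. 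When $x$ is interpreted as a nonnegative quantity — the only case we actually need in the applications — all quantities above are nonnegative and nothing further needs to be checked.

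The one genuinely delicate point is the well-definedness of odd powers: a fractional power should only ever be applied to a nonnegative pseudo-expectation. This is automatic for even exponents, where $\pE_{\tzeta}[x^{2k}]\ge 0$ since it is the pseudo-expectation of a square, so I would route the chain through even exponents — running the step above with $f=x^2$ to get monotonicity of $k\mapsto \pE_{\tzeta}[x^{2k}]^{1/2k}$, and then bridging an odd exponent $2k+1$ to its even neighbours using Cauchy--Schwarz for pseudo-distributions (\cref{fact:pseudo-expectation-cauchy-schwarz}), noting that if $\pE_{\tzeta}[x^{2k+1}]<0$ the inequality holds trivially since its right-hand side is a power of a nonnegative quantity. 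I expect this parity bookkeeping to be the only real obstacle; the analytic content is entirely carried by the already-established Hölder and Cauchy--Schwarz inequalities for pseudo-distributions.
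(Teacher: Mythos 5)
Your proof is correct and uses precisely the route the paper intends: the corollary appears immediately after \cref{fact:pseudo-expectation-holder} and is meant to follow from it by the chaining argument you describe ($f=x$, $g=1$, exponent $m+1$, raise to $1/m$, telescope). You also correctly flag the only genuine wrinkle — that the statement as written is only meaningful when the relevant pseudo-moments are nonnegative (e.g.\ $x\ge 0$ or even exponents), which the paper leaves implicit since that is the only regime in which it is later invoked — and your even-exponent routing plus Cauchy–Schwarz bridging is the standard way to tidy this up.
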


\paragraph{Reweighting Pseudo-Distributions}

The following fact is easy to verify and has been used in several works (see~\cite{DBLP:conf/stoc/BarakKS17} for example).  
\begin{fact}[Reweightings] \label{fact:reweightings}
Let $\tmu$ be a pseudo-distribution of degree $k$ satisfying a set of polynomial constraints $\cA$ in variable $x$. 
Let $p$ be a sum-of-squares polynomial of degree $t$ such that $\pE[p(x)] \neq 0$.
Let $\tmu'$ be the pseudo-distribution defined so that for any polynomial $f$, $\pE_{\tmu'}[f(x)] = \pE_{\tmu}[ f(x)p(x)]/\pE_{\tmu}[p(x)]$. Then, $\tmu'$ is a pseudo-distribution of degree $k-t$ satisfying $\cA$. 
\end{fact}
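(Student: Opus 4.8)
The plan is to unwind the definitions directly, using only two elementary facts: that a product of sum-of-squares polynomials is again a sum of squares, and that a level-$k$ pseudo-distribution assigns non-negative pseudo-expectation to every sum-of-squares polynomial of degree at most $k$. (Implicitly we need $t\le k$ for the conclusion to even make sense; granting this, since $p=\sum_i q_i^2$ with $\deg q_i\le t/2\le k/2$, we get $\pE_{\tmu}[p]\ge 0$, and together with the hypothesis $\pE_{\tmu}[p]\neq 0$ this upgrades to $\pE_{\tmu}[p]>0$, so the reweighting $\tmu'(x)\defeq \tmu(x)\,p(x)/\pE_{\tmu}[p]$ is well defined.) First I would observe that $\tmu'$ is a finitely-supported function, its support being contained in that of $\tmu$, and that $\sum_x\tmu'(x)=\pE_{\tmu}[p]/\pE_{\tmu}[p]=1$, so $\tmu'$ is a legitimate mass function of total mass $1$.

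Next I would verify the level-$(k-t)$ non-negativity test. For any polynomial $g$ with $\deg g\le (k-t)/2$, writing $p=\sum_i q_i^2$ gives $g^2p=\sum_i (gq_i)^2$, a sum of squares of polynomials of degree at most $(k-t)/2+t/2=k/2$; hence $\pE_{\tmu}[g^2 p]\ge 0$ because $\tmu$ is a level-$k$ pseudo-distribution, and dividing by $\pE_{\tmu}[p]>0$ gives $\pE_{\tmu'}[g^2]\ge 0$. Thus $\tmu'$ is a level-$(k-t)$ pseudo-distribution.

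Finally I would check $\tmu'\sdtstile{0}{}\cA$: fix $S\subseteq[m]$ and a sum-of-squares polynomial $h$ with $\deg h+\sum_{i\in S}\deg f_i\le k-t$. Then $hp$ is again a sum of squares, and $\deg(hp)+\sum_{i\in S}\deg f_i\le (\deg h+t)+\sum_{i\in S}\deg f_i\le k$, so $\pE_{\tmu}[\,hp\cdot\prod_{i\in S}f_i\,]\ge 0$ by $\tmu\sdtstile{0}{}\cA$; dividing by $\pE_{\tmu}[p]>0$ yields $\pE_{\tmu'}[\,h\cdot\prod_{i\in S}f_i\,]\ge 0$, which is exactly the required inequality. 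There is no genuine obstacle here—the whole argument is definitional bookkeeping—the only point demanding care being that every multiplication by $p$ must be charged $t$ against the available level, which is precisely why the degree drops from $k$ to $k-t$.
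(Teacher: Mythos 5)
Your argument is correct. The paper does not supply a proof of this fact (it merely cites prior work), but what you wrote is precisely the standard definition-unwinding argument: you correctly note that $p$ being SoS together with $\pE_{\tmu}[p]\neq 0$ (and $t\le k$) forces $\pE_{\tmu}[p]>0$; that $g^2p$ and $hp$ are again sums of squares of controlled degree, which lets you charge the extra $t$ degrees to the level budget and pass from $\tmu$ to $\tmu'$; and that normalization and finite support are immediate. Nothing is missing.
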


\subsection{Sum-of-squares proofs}

Let $f_1, f_2, \ldots, f_r$ and $g$ be multivariate polynomials in $x$.
A \emph{sum-of-squares proof} that the constraints $\{f_1 \geq 0, \ldots, f_m \geq 0\}$ imply the constraint $\{g \geq 0\}$ consists of  polynomials $(p_S)_{S \subseteq [m]}$ such that
\begin{equation}
g = \sum_{S \subseteq [m]} p_S \cdot \Pi_{i \in S} f_i
\mper
\end{equation}
We say that this proof has \emph{degree $\ell$} if for every set $S \subseteq [m]$, the polynomial $p_S \Pi_{i \in S} f_i$ has degree at most $\ell$.
If there is a degree $\ell$ SoS proof that $\{f_i \geq 0 \mid i \leq r\}$ implies $\{g \geq 0\}$, we write:
\begin{equation}
  \{f_i \geq 0 \mid i \leq r\} \sststile{\ell}{}\{g \geq 0\}
  \mper
\end{equation}
For all polynomials $f,g\colon\R^n \to \R$ and for all functions $F\colon \R^n \to \R^m$, $G\colon \R^n \to \R^k$, $H\colon \R^{p} \to \R^n$ such that each of the coordinates of the outputs are polynomials of the inputs, we have the following inference rules.

The first one derives new inequalities by addition/multiplication:
\begin{equation} \label{eq:sos-addition-multiplication-rule}
\frac{\cA \sststile{\ell}{} \{f \geq 0, g \geq 0 \} } {\cA \sststile{\ell}{} \{f + g \geq 0\}}, \frac{\cA \sststile{\ell}{} \{f \geq 0\}, \cA \sststile{\ell'}{} \{g \geq 0\}} {\cA \sststile{\ell+\ell'}{} \{f \cdot g \geq 0\}}\mper
\end{equation}
The next one derives new inequalities by transitivity: 
\begin{equation} \label{eq:sos-transitivity}
\frac{\cA \sststile{\ell}{} \cB, \cB \sststile{\ell'}{} C}{\cA \sststile{\ell \cdot \ell'}{} C}\mcom 
\end{equation}
Finally, the last rule derives new inequalities via substitution:
\begin{equation} \label{eq:sos-substitution}
\frac{\{F \geq 0\} \sststile{\ell}{} \{G \geq 0\}}{\{F(H) \geq 0\} \sststile{\ell \cdot \deg(H)} {} \{G(H) \geq 0\}}\tag{substitution}\mper
\end{equation}

Low-degree sum-of-squares proofs are sound and complete if we take low-level pseudo-distributions as models.
Concretely, sum-of-squares proofs allow us to deduce properties of pseudo-distributions that satisfy some constraints.
\begin{fact}[Soundness]
  \label{fact:sos-soundness}
  If $D \sdtstile{r}{} \cA$ for a level-$\ell$ pseudo-distribution $D$ and there exists a sum-of-squares proof $\cA \sststile{r'}{} \cB$, then $D \sdtstile{r\cdot r'+r'}{} \cB$.
\end{fact}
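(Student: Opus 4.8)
The plan is to unwind the two definitions in play and then glue them together using only linearity of pseudo-expectation and the elementary fact that a product of sums of squares is again a sum of squares. Write $\cA=\{f_1\ge 0,\dots,f_m\ge 0\}$ and $\cB=\{g_1\ge 0,\dots,g_k\ge 0\}$. A degree-$r'$ SoS proof $\cA\sststile{r'}{}\cB$ supplies, for each $j\le k$, sum-of-squares polynomials $(p^{(j)}_S)_{S\subseteq[m]}$ with $g_j=\sum_{S\subseteq[m]}p^{(j)}_S\prod_{i\in S}f_i$ and $\deg\big(p^{(j)}_S\prod_{i\in S}f_i\big)\le r'$ for all $S$. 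The hypothesis $D\sdtstile{r}{}\cA$ says $\pE_D\big[h\prod_{i\in S}f_i\big]\ge 0$ for every $S\subseteq[m]$ and every SoS polynomial $h$ with $\deg h+\sum_{i\in S}\max\{\deg f_i,r\}\le\ell$. What must be shown is the defining family of inequalities of $D\sdtstile{rr'+r'}{}\cB$: for every $T\subseteq[k]$ and every SoS $h$ with $\deg h+\sum_{j\in T}\max\{\deg g_j,\,rr'+r'\}\le\ell$ one has $\pE_D\big[h\prod_{j\in T}g_j\big]\ge 0$.

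The core case is $T=\{j\}$. Substituting the representation of $g_j$ and using linearity, $\pE_D[h\,g_j]=\sum_{S\subseteq[m]}\pE_D\big[(h\,p^{(j)}_S)\prod_{i\in S}f_i\big]$. Since $h$ and $p^{(j)}_S$ are sums of squares, so is their product $h\,p^{(j)}_S$, so each summand has exactly the shape to which $D\sdtstile{r}{}\cA$ applies, hence is nonnegative, hence so is the whole sum. The case $T=\emptyset$ is immediate: $\pE_D[h]\ge 0$ for every SoS $h$ of degree $\le\ell$ is just the definition of a level-$\ell$ pseudo-distribution. For $|T|\ge 2$ one expands $\prod_{j\in T}g_j$ into a sum over tuples $(S_j)_{j\in T}$ of $\big(\prod_j p^{(j)}_{S_j}\big)\prod_j\prod_{i\in S_j}f_i$; collapsing every even power of an $f_i$ into the (still sum-of-squares) multiplier rewrites each term as $(\text{SoS})\cdot\prod_{i\in S'}f_i$ for a set $S'$, and the same termwise application of $D\sdtstile{r}{}\cA$ finishes it.

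The step I expect to require the most care is the degree bookkeeping that yields precisely $rr'+r'$. Fix $j$ and $S$: to invoke $D\sdtstile{r}{}\cA$ on $\pE_D\big[(h\,p^{(j)}_S)\prod_{i\in S}f_i\big]$ we need $\deg(h\,p^{(j)}_S)+\sum_{i\in S}\max\{\deg f_i,r\}\le\ell$. Using $\max\{a,r\}\le a+r$ (valid since $a\ge 0$), the left-hand side is at most $\deg h+\big(\deg p^{(j)}_S+\sum_{i\in S}\deg f_i\big)+|S|\,r$; now $\deg p^{(j)}_S+\sum_{i\in S}\deg f_i=\deg\big(p^{(j)}_S\prod_{i\in S}f_i\big)\le r'$, and, assuming harmlessly that no constraint of $\cA$ is a nonzero constant, $|S|\le\sum_{i\in S}\deg f_i\le r'$, so the whole quantity is at most $\deg h+r'+rr'$. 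Thus the hypothesis applies whenever $\deg h\le\ell-(rr'+r')$; and since $\deg g_j\le r'\le rr'+r'$ we have $\max\{\deg g_j,rr'+r'\}=rr'+r'$, so $\deg h\le\ell-(rr'+r')$ is exactly the range of $h$ allowed in the definition of $D\sdtstile{rr'+r'}{}\cB$. For $|T|\ge 2$ the same count runs with every quantity scaled by $|T|$, matching the $\sum_{j\in T}\max\{\deg g_j,rr'+r'\}=|T|(rr'+r')$ in the definition. Finally, if $D$ satisfies $\cA$ only approximately, the identical argument goes through with the accumulated error controlled by the (polynomially bounded) coefficient norms of the finitely many $p^{(j)}_S$, which I would note rather than spell out.
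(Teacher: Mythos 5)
The paper states this soundness result as a standard fact cited from the SoS literature and gives no proof in the text, so there is no in-paper argument to compare against; your derivation is the standard one and it is correct. Expanding each $g_j$ by its SoS certificate, pushing linearity through $\pE_D$, noting that a product of sums of squares is again a sum of squares, and collapsing even powers of the $f_i$ into the SoS multiplier when $|T|\ge 2$ is exactly the right route, and your degree bookkeeping --- $\max\{a,r\}\le a+r$, $|S|\le\sum_{i\in S}\deg f_i\le r'$, and $\deg\bigl(p^{(j)}_S\prod_{i\in S}f_i\bigr)\le r'$ --- correctly yields the stated $rr'+r'$ bound, both for $|T|=1$ and, after scaling every estimate by $|T|$, for general $T$. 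Two small remarks. First, your argument uses that each $p^{(j)}_S$ is a sum of squares; the paper's displayed definition of an SoS proof omits to say this explicitly, but it is the intended and universally used meaning (without it the proof system would be unsound). Second, the side condition that no constraint of $\cA$ is a nonzero constant is indeed harmless in the way you suggest: a positive constant constraint can be absorbed into the SoS multiplier without ever entering $S$, a zero constraint annihilates its terms, and a negative constant makes $D \sdtstile{r}{} \cA$ unsatisfiable so the claim is vacuous.
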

If the pseudo-distribution $D$ satisfies $\cA$ only approximately, soundness continues to hold if we require an upper bound on the bit-complexity of the sum-of-squares $\cA \sststile{r'}{} B$  (number of bits required to write down the proof). In our applications, the bit complexity of all sum of squares proofs will be $n^{O(\ell)}$ (assuming that all numbers in the input have bit complexity $n^{O(1)}$). This bound suffices in order to argue about pseudo-distributions that satisfy polynomial constraints approximately.

The following fact shows that every property of low-level pseudo-distributions can be derived by low-degree sum-of-squares proofs.
\begin{fact}[Completeness]
  \label{fact:sos-completeness}
  Suppose $d \geq r' \geq r$ and $\cA$ is a collection of polynomial constraints with degree at most $r$, and $\cA \vdash \{ \sum_{i = 1}^n x_i^2 \leq B\}$ for some finite $B$.

  Let $\{g \geq 0 \}$ be a polynomial constraint.
  If every degree-$d$ pseudo-distribution that satisfies $D \sdtstile{r}{} \cA$ also satisfies $D \sdtstile{r'}{} \{g \geq 0 \}$, then for every $\epsilon > 0$, there is a sum-of-squares proof $\cA \sststile{d}{} \{g \geq - \epsilon \}$.
\end{fact}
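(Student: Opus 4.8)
The plan is to argue by finite-dimensional convex duality (hyperplane separation), which is the standard route to sum-of-squares completeness. Work in the space $V = \R[x]_{\le d}$ of polynomials of degree at most $d$ and consider the convex cone
\[
\mathcal C \;=\; \Set{ \textstyle\sum_{S \subseteq [m]} \sigma_S \prod_{i \in S} f_i \;:\; \sigma_S \text{ sos},\ \deg\Paren{\sigma_S \textstyle\prod_{i\in S} f_i} \le d } \mcom
\]
so that, by definition of a degree-$d$ sum-of-squares proof, $\cA \sststile{d}{}\{p \ge 0\}$ holds iff $p \in \mathcal C$. Fix $\epsilon > 0$ and suppose for contradiction that $g + \epsilon \notin \mathcal C$. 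Assuming $\mathcal C$ is closed (see below), the separation theorem in the finite-dimensional space $V$ gives a linear functional $L \colon V \to \R$ with $L \ge 0$ on $\mathcal C$ and $L(g+\epsilon) < 0$; the remaining steps turn $L$ into a pseudo-distribution that contradicts the hypothesis.

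First I would prove that $\mathcal C$ is a closed cone --- this is the one nontrivial point and is exactly where the explicit-boundedness assumption $\cA \vdash \{\sum_i x_i^2 \le B\}$ enters. Each summand $\{\sigma \prod_{i\in S}f_i : \sigma \text{ sos}, \deg \le d\}$ is a linear image of the closed sos cone, but such images and their finite sums need not be closed in general. The fix is standard: using $\cA \sststile{d}{}\{B - \|x\|^2 \ge 0\}$ one shows that for $p$ ranging over a bounded subset of $\mathcal C$, the coefficient vectors of the $\sigma_S$ in some representation of $p$ can be taken in a fixed compact set, so a limit of elements of $\mathcal C$ is again in $\mathcal C$; equivalently, the dual cone $\mathcal C^\ast$ has compact base $\{L : L(1)=1,\ L \in \mathcal C^\ast\}$ --- namely the set of degree-$d$ pseudo-expectations with $\sdtstile{r}{}\cA$, which is compact because boundedness bounds all pseudo-moments --- whence $\mathcal C = \mathcal C^{\ast\ast}$ is closed. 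The same boundedness input yields $L(1) > 0$ (from $\|x\|^2,\ B - \|x\|^2 \in \mathcal C$ one gets $0 \le L(\|x\|^2) \le B\,L(1)$, and $L(1)=0$ would force $L \equiv 0$), so after rescaling we may assume $L(1) = 1$.

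Finally, $L$ is the pseudo-expectation of a level-$d$ pseudo-distribution $D$: we have $L(1)=1$, and for $\deg q \le d/2$ the square $q^2 \in \mathcal C$ (take $S = \emptyset$, $\sigma_\emptyset = q^2$), so $\pE_D[q^2] = L(q^2) \ge 0$. Moreover $D \sdtstile{r}{}\cA$, since for each $S \subseteq [m]$ and sos $h$ with $\deg h + \sum_{i\in S}\max\{\deg f_i, r\} \le d$ the polynomial $h \prod_{i\in S} f_i$ lies in $\mathcal C$ --- here $d \ge r' \ge r$ and $\deg f_i \le r$ keep all degrees within $d$ --- hence has nonnegative $L$-value. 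By the hypothesis of the statement, $D$ then satisfies $D \sdtstile{r'}{}\{g \ge 0\}$, and in particular (taking $S = \emptyset$, $h = 1$) $\pE_D[g] = L(g) \ge 0$. But then $L(g + \epsilon) = L(g) + \epsilon\,L(1) \ge \epsilon > 0$, contradicting $L(g + \epsilon) < 0$. Hence $g + \epsilon \in \mathcal C$, i.e. $\cA \sststile{d}{}\{g \ge -\epsilon\}$, as desired. The main obstacle is the closedness of $\mathcal C$ (together with the linked positivity $L(1)>0$), which is where explicit boundedness is indispensable; everything else is unwinding the definitions of pseudo-distributions and SoS proofs.
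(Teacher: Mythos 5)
Your overall convex-duality shape is the standard (and essentially the only) route to this completeness fact, and most of the bookkeeping is handled correctly: the identification of $\mathcal C^\ast$ (normalized) with a subset of degree-$d$ pseudo-distributions satisfying $\cA$, the degree arithmetic showing $h\prod_{i\in S}f_i \in \mathcal C$ whenever the $\sdtstile{r}{}$ degree bound is met, the use of the hypothesis to get $L(g)\ge 0$, and the observation that boundedness forces $L(1)>0$ for $L\neq 0$ in $\mathcal C^\ast$. The gap is entirely in your treatment of closedness of $\mathcal C$.

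The shortcut ``$\mathcal C^\ast$ has compact base, whence $\mathcal C = \mathcal C^{\ast\ast}$ is closed'' is not a valid inference. For any convex cone $\mathcal C$ in a finite-dimensional space, $\mathcal C^{\ast\ast} = \overline{\mathcal C}$ holds unconditionally; so the equality $\mathcal C = \mathcal C^{\ast\ast}$ \emph{is} the closedness assertion, not something one can deduce from compactness of the dual base. (Concretely, the open circular cone $\{z > \sqrt{x^2+y^2}\}\cup\{0\}$ in $\R^3$ has a dual cone with compact base but is itself not closed.) Your other route --- bound the coefficients of the multipliers $\sigma_S$ for a bounded sequence in $\mathcal C$ and extract a limit --- would give closedness if it went through, but it implicitly requires a ``stability'' property: there must be no nonzero tuple of sos $(\sigma_S)_S$ with $\sum_S \sigma_S \prod_{i\in S} f_i = 0$, and even granting that, one needs a quantitative lower bound for the linear map $(\sigma_S)\mapsto \sum_S\sigma_S\prod f_i$. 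This can fail when $\cA$ has redundant or antipodal constraints, and deriving it from $\cA\vdash\{\|x\|^2\le B\}$ alone is not a one-line remark. In fact the $\epsilon$ in the statement is there precisely because closedness of the degree-$d$ truncation is \emph{not} being assumed or proved.

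The fix is to not prove closedness at all. From your dual argument you already get $L(g)\ge 0$ for all $L\in\mathcal C^\ast$, hence $g\in \mathcal C^{\ast\ast}=\overline{\mathcal C}$. Separately, explicit boundedness puts $1$ in the \emph{interior} of $\mathcal C$: from $\cA\sststile{d}{}\{B-\|x\|^2\ge 0\}$ one derives $\cA\sststile{d}{}\{N - x^{2\beta}\ge 0\}$ for all $|\beta|\le d/2$ by iterated multiplication with sos factors, and then $\cA\sststile{d}{}\{N' \pm x^\alpha\ge 0\}$ for every monomial $x^\alpha$ of degree at most $d$ (write $\alpha=\beta+\gamma$ with $|\beta|,|\gamma|\le d/2$ and use $\tfrac12(x^\beta\mp x^\gamma)^2 + \tfrac12(N-x^{2\beta}) + \tfrac12(N-x^{2\gamma})$). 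Summing these shows $N'' + p \in \mathcal C$ for every $p$ with small enough coefficients, i.e.\ $1\in\mathrm{int}(\mathcal C)$. Finally, $\overline{\mathcal C} + \mathrm{int}(\mathcal C)\subseteq \mathcal C$ for any convex cone (if $q\in\mathrm{int}(\mathcal C)$ with $q+B_\delta\subseteq \mathcal C$ and $p'\in\mathcal C$ with $\|p-p'\|<\delta$, then $p+q = p' + (q + (p-p')) \in \mathcal C$), so $g+\epsilon = g + \epsilon\cdot 1 \in \mathcal C$, which is exactly $\cA\sststile{d}{}\{g\ge -\epsilon\}$. This replaces your closedness claim with two cheap facts and matches the form of the statement, $\epsilon$ included.
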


\paragraph{Basic Sum-of-Squares Proofs}

\begin{fact}[Operator norm Bound]
\label{fact:operator_norm}
Let $A$ be a symmetric $d\times d$ matrix and $v$ be a vector in $\mathbb{R}^d$. Then,
\[
\sststile{2}{v} \Set{ v^{\top} A v \leq \|A\|_2\|v\|^2_2 }
\]
\end{fact}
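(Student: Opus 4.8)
The plan is to reduce the claimed inequality to the statement that the matrix $\Normop{A} \Id - A$ is positive semidefinite and then to exhibit an explicit sum-of-squares decomposition of the quadratic form $v \mapsto v^\top (\Normop{A}\Id - A) v$. First I would diagonalize: since $A$ is symmetric, write the eigendecomposition $A = \sum_{i=1}^d \lambda_i u_i u_i^\top$ with $\{u_i\}$ an orthonormal eigenbasis and $\lambda_i \in \R$ the eigenvalues. By definition of the spectral norm, $\lambda_i \leq \Normop{A}$ for every $i$, so each coefficient $\Normop{A} - \lambda_i$ is nonnegative and admits a real square root.

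Next I would observe that
\[
\Normop{A}\Norm{v}_2^2 - v^\top A v = v^\top \Paren{\Normop{A}\Id - A} v = \sum_{i=1}^d \Paren{\Normop{A} - \lambda_i}\iprod{u_i, v}^2 = \sum_{i=1}^d \Paren{\sqrt{\Normop{A}-\lambda_i}\,\iprod{u_i,v}}^2 \mper
\]
The right-hand side is manifestly a sum of squares of polynomials of degree $1$ in the indeterminate $v$ (each $\iprod{u_i,v}$ is linear in $v$, and $\sqrt{\Normop{A}-\lambda_i}$ is a constant), so the whole expression has degree $2$. By definition of an SoS proof this is exactly a degree-$2$ certificate that $\Normop{A}\Norm{v}_2^2 - v^\top A v \geq 0$, i.e. $\sststile{2}{v}\Set{v^\top A v \leq \Normop{A}\Norm{v}_2^2}$.

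There is essentially no obstacle here; the only point worth a sentence is that we are using the first two identities above, which are polynomial identities (expand $\Id = \sum_i u_i u_i^\top$ and $A = \sum_i \lambda_i u_i u_i^\top$ and use $\iprod{u_i,v}^2 = v^\top u_i u_i^\top v$), so they hold as formal equalities of polynomials in $v$ and not merely numerically; this is what licenses calling the displayed decomposition a bona fide SoS proof. If one prefers to avoid square roots of the (possibly irrational) quantities $\Normop{A}-\lambda_i$, one can instead note that $\Normop{A}\Id - A \succeq 0$ and invoke the standard fact that a PSD matrix $M$ factors as $M = B^\top B$, whence $v^\top M v = \Norm{Bv}_2^2$ is a single square; either route gives the degree-$2$ bound.
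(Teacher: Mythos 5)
Your proof is correct. The paper states this fact without proof (it appears in the ``Basic Sum-of-Squares Proofs'' block of Section~3 as an unproved standard fact), so there is no paper proof to compare against; what you have written is the standard argument -- diagonalize the symmetric matrix, observe the eigenvalue gaps $\Normop{A} - \lambda_i$ are nonnegative, and exhibit $\Normop{A}\Norm{v}_2^2 - v^\top A v$ as an explicit sum of squares of linear forms -- and it is a valid degree-$2$ SoS certificate. The parenthetical alternative via the factorization $\Normop{A}\Id - A = B^\top B$ is equally standard and equally fine.
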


\begin{fact}[SoS Hölder's Inequality] \label{fact:sos-holder}
Let $f_i,g_i$ for $1 \leq i \leq s$ be indeterminates. 
Let $p$ be an even positive integer. 
Then, 
\[
\sststile{p^2}{f,g} \Set{  \Paren{\frac{1}{s} \sum_{i = 1}^s f_i g_i^{p-1}}^{p} \leq \Paren{\frac{1}{s} \sum_{i = 1}^s f_i^p}^q \Paren{\frac{1}{s} \sum_{i = 1}^s g_i^p}^{p-1}}\mper
\]
\end{fact}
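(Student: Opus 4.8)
The plan is to derive the inequality from the single primitive of SoS Cauchy--Schwarz by an iterated (doubling) argument. First, the factors of $1/s$ cancel: the left-hand side carries $s^{-p}$ and the right-hand side $s^{-1}\cdot s^{-(p-1)}=s^{-p}$, so it suffices to produce an SoS proof of degree $p^2$ in the indeterminates $f=(f_1,\dots,f_s)$, $g=(g_1,\dots,g_s)$ of
\[
\Paren{\sum_{i=1}^s f_i g_i^{p-1}}^{p}\leq \Paren{\sum_{i=1}^s f_i^{p}}\Paren{\sum_{i=1}^s g_i^{p}}^{p-1}\mper
\]
The primitive is the degree-$4$ identity $\sum_{i,j}\Paren{a_ib_j-a_jb_i}^2\geq 0$, which rearranges into the SoS Cauchy--Schwarz inequality $\sststile{4}{a,b}\Set{\Paren{\sum_i a_ib_i}^2\leq \Paren{\sum_i a_i^2}\Paren{\sum_i b_i^2}}$; substituting monomials in $f,g$ for the $a_i,b_i$ (via the substitution rule for SoS proofs) will generate every inequality we need.

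For the main argument, first take $p$ to be a power of $2$. Splitting $f_i g_i^{p-1}=\Paren{f_i g_i^{p/2-1}}\cdot g_i^{p/2}$ and applying Cauchy--Schwarz gives $\Paren{\sum_i f_i g_i^{p-1}}^2\leq\Paren{\sum_i f_i^2 g_i^{p-2}}\Paren{\sum_i g_i^{p}}$. Squaring and iterating with the split $f_i^{2^{j-1}} g_i^{p-2^{j-1}}=\Paren{f_i^{2^{j-1}} g_i^{(p-2^{j})/2}}\cdot g_i^{p/2}$ yields, by induction on $j$, an SoS inequality of degree $2^j p$:
\[
\Paren{\sum_i f_i g_i^{p-1}}^{2^j}\leq\Paren{\sum_i f_i^{2^j} g_i^{p-2^j}}\Paren{\sum_i g_i^{p}}^{2^j-1}\mper
\]
Every exponent that appears is a nonnegative integer precisely because $p$ is an even power of $2$, so each step is a legitimate polynomial substitution into Cauchy--Schwarz. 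Taking $j=\log_2 p$, the exponent of $f_i$ reaches $p$ and that of $g_i$ drops to $0$, which is exactly the claimed inequality, and the accumulated degree is $2^j p=p^2$.

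For a general even $p$, I would run the same doubling telescope up to the largest $j$ with $2^j\leq p$, and then, if $2^j<p$, insert a corrective Cauchy--Schwarz step that splits the exponent $2^{j+1}$ as $p+(2^{j+1}-p)$ (both nonnegative even integers, since $p$ is even), peeling off the desired factor $\sum_i f_i^{p}$ and leaving a residual sum $\sum_i f_i^{q} g_i^{p-q}$ with $q=2^{j+1}-p<p$; this residual is itself a Hölder inequality with strictly smaller $f$-exponent, which I would dispatch by induction on $q$ in the same way. The main --- and essentially the only --- obstacle in this case is bookkeeping: checking that every exponent produced along the way stays a nonnegative integer (so that each use of Cauchy--Schwarz is an honest polynomial identity rather than an illegal fractional substitution), and that the degree incurred by the corrective steps and the induction never exceeds $p^2$. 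There is no conceptual difficulty beyond SoS Cauchy--Schwarz; an alternative route is to SoS-ify the ``apply AM--GM to each term $f_i g_i^{p-1}$'' proof of scalar Hölder after homogenizing away the normalization, but the iterated Cauchy--Schwarz argument keeps the degree accounting the most transparent.
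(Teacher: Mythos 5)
The paper states Fact~\ref{fact:sos-holder} without proof (it is invoked as a standard tool of the SoS literature), so there is no paper proof to compare against; I'll evaluate your argument on its own. First, a note: the exponent $q$ in the statement is a typo for $1$, as the consistency check ``$p=2$ recovers Cauchy--Schwarz'' confirms. Your argument by iterated Cauchy--Schwarz is exactly the standard one, and for $p$ a power of two it is airtight: the invariant $\Paren{\sum_i f_i g_i^{p-1}}^{2^j}\leq\Paren{\sum_i f_i^{2^j}g_i^{p-2^j}}\Paren{\sum_i g_i^p}^{2^j-1}$ is maintained by a single Cauchy--Schwarz at each doubling, every monomial stays integral and nonnegative, and the accumulated degree at $j=\log_2 p$ is $p^2$. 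This is also the form in which the fact typically appears in the literature (e.g.\ Hopkins--Li), and, since the paper only ever invokes it with a free parameter $p=2t$, one is free to round $t$ up to the nearest power of two, so this case suffices for all of the paper's uses.

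The corrective step you sketch for general even $p$ does not, however, close the gap. Run the telescope to the largest $j$ with $2^j\leq p$, square once more, and apply your split $f_i^{2^j}g_i^{p-2^j}=\bigl(f_i^{p/2}\bigr)\cdot\bigl(f_i^{(2^{j+1}-p)/2}g_i^{p-2^j}\bigr)$; this yields $S_1^{2^{j+1}}\leq F\cdot S_q\cdot G^{2^{j+1}-2}$ with $S_1=\sum_i f_ig_i^{p-1}$, $F=\sum_i f_i^p$, $G=\sum_i g_i^p$, and $S_q=\sum_i f_i^q g_i^{p-q}$ for $q=2^{j+1}-p$. To remove $S_q$ you would invoke a generalized H\"older $S_q^p\leq F^qG^{p-q}$ ``by induction on $q$''; but substituting that bound after raising the master inequality to the $p$-th power gives
\[
S_1^{(p+q)p}\;\leq\;F^{p+q}\,G^{(p+q)(p-1)}\;=\;\bigl(F\,G^{p-1}\bigr)^{p+q}\mcom
\]
which is the $(p+q)$-th power of the target inequality $S_1^p\leq FG^{p-1}$, not the target itself. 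Extracting a low-degree SoS proof of $a\leq b$ from an SoS proof of $a^m\leq b^m$ (with $a,b$ SoS-nonnegative) is not free: it requires a cancellation step along the lines of Lemma~\ref{lem:cancellation-sos} or Lemma~\ref{lem:cancellation-SoS-constant-RHS}, neither of which applies in the form needed here, and in any case the degree accumulated along the way already exceeds $p^2$ before the cancellation is attempted. So for $p$ not a power of two your proposal leaves a genuine gap, both in correctness of the recursion and in the degree bookkeeping; you should either restrict the statement to powers of two (harmless for the paper), or give a different argument for general even $p$ and reassess the degree.
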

Observe that using $p = 2$ yields the SoS Cauchy-Schwarz inequality. 

\begin{fact}[SoS Almost Triangle Inequality] \label{fact:sos-almost-triangle}
Let $f_1, f_2, \ldots, f_r$ be indeterminates. Then,
\[
\sststile{2t}{f_1, f_2,\ldots,f_r} \Set{ \Paren{\sum_{i\leq r} f_i}^{2t} \leq r^{2t-1} \Paren{\sum_{i =1}^r f_i^{2t}}}\mper
\]
\end{fact}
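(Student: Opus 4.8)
The plan is to prove the statement by induction on $t$, producing at each step an explicit degree-$2t$ sum-of-squares certificate built from two elementary identities. The base case $t=1$ is the manifest degree-$2$ sum of squares
\[
r\sum_{i=1}^r f_i^2 - \Paren{\sum_{i=1}^r f_i}^2 = \sum_{1\le i<j\le r}(f_i-f_j)^2 \mper
\]
For the inductive step, assuming the claim at $t-1$, I would write $\Paren{\sum_i f_i}^{2t} = \Paren{\sum_i f_i}^2\cdot\Paren{\sum_i f_i}^{2(t-1)}$ and apply the inductive hypothesis to the second factor: since $\Paren{\sum_i f_i}^2$ is itself a sum of squares, multiplying the degree-$(2t-2)$ certified inequality $\Paren{\sum_i f_i}^{2(t-1)}\le r^{2t-3}\sum_j f_j^{2t-2}$ by it via the product rule~\eqref{eq:sos-addition-multiplication-rule} keeps the degree at $2t$. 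Applying the base-case bound $\Paren{\sum_i f_i}^2\le r\sum_i f_i^2$ once more then yields, at degree $2t$, the inequality $\Paren{\sum_i f_i}^{2t}\le r^{2t-2}\sum_{i,j}f_i^2 f_j^{2t-2}$.

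The heart of the step is then a two-variable weighted AM--GM bound, $f_i^2 f_j^{2t-2}\le \tfrac1t f_i^{2t}+\tfrac{t-1}{t}f_j^{2t}$, applied to each of the $r^2$ terms; summing over $i,j$ converts $r^{2t-2}\sum_{i,j}f_i^2 f_j^{2t-2}$ into $r^{2t-2}\big(\tfrac1t r\sum_k f_k^{2t}+\tfrac{t-1}{t}r\sum_k f_k^{2t}\big)=r^{2t-1}\sum_k f_k^{2t}$, closing the induction. The one genuinely computational point is the explicit sum-of-squares identity witnessing this bound,
\[
\tfrac1t f_i^{2t}+\tfrac{t-1}{t}f_j^{2t}-f_i^2 f_j^{2t-2} = \tfrac1t\sum_{\ell=0}^{t-2}(t-1-\ell)\Paren{\paren{f_i^2-f_j^2}f_i^{\ell}f_j^{t-2-\ell}}^2 \mcom
\]
which I would verify via the polynomial factorization $w^{t}-tw+(t-1)=(w-1)^2\sum_{\ell=0}^{t-2}(t-1-\ell)w^{\ell}$ (substitute $w=f_i^2/f_j^2$ and homogenize by $f_j^{2t}$); each square on the right has degree exactly $t$, so the certificate has degree $2t$ as needed.

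The main thing to be careful about is the sum-of-squares degree bookkeeping: every inequality in the chain must be obtained by multiplying a previously certified inequality by a polynomial that is \emph{itself} a sum of squares (so the product rule applies and nonnegativity is preserved), and by \emph{adding} certificates rather than invoking proof-transitivity~\eqref{eq:sos-transitivity}, which would blow the degree up multiplicatively; since $\Paren{\sum_i f_i}^2$, $\sum_i f_i^2$, $\sum_j f_j^{2t-2}$, and the AM--GM slacks are all manifestly sums of squares, the whole argument stays at degree $2t$. The only other nontrivial ingredient is the factorization identity displayed above, whose proof is a routine (if slightly tedious) polynomial computation. As a fallback if this bookkeeping became awkward, one can instead expand $\Paren{\sum_i f_i}^{2t}$ by the multinomial theorem and bound each of the $r^{2t}$ monomials $f_{i_1}\cdots f_{i_{2t}}$ by the $2t$-variable SoS-AM--GM $f_{i_1}\cdots f_{i_{2t}}\le \tfrac1{2t}\sum_{k}f_{i_k}^{2t}$ — provable at degree $2t$ by recursive pairing using SoS Cauchy--Schwarz, the $p=2$ case of \cref{fact:sos-holder} — and then collect coefficients (with the subtlety that the odd-arity base cases of that recursion are sum-of-squares only because the relevant quantities are perfect squares of the $f_i$'s).
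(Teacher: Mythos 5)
Your proof is correct. The paper states this as a \emph{Fact} and supplies no proof of its own (it is a standard workhorse in the sum-of-squares literature, usually derived by multinomial expansion plus SoS AM--GM, the route you sketch as a fallback), so there is no paper argument to compare against. Your main inductive argument is a clean alternative: the base case is the Lagrange identity $r\sum_i f_i^2 - (\sum_i f_i)^2 = \sum_{i<j}(f_i-f_j)^2$; the inductive step correctly factors $(\sum f_i)^{2t}$, multiplies the degree-$2(t-1)$ certificate by the square $(\sum f_i)^2$, multiplies the base-case certificate by the SoS $\sum_j f_j^{2t-2}$, and then closes with the two-variable weighted AM--GM. Both product-rule uses preserve degree because the multipliers are manifestly sums of squares, and you add certificates rather than compose proofs, so the degree stays at $2t$ throughout. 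The explicit SoS witness for $\tfrac1t f_i^{2t}+\tfrac{t-1}{t}f_j^{2t}-f_i^2 f_j^{2t-2}$ is also correct: the stated factorization $w^t - tw + (t-1) = (w-1)^2\sum_{\ell=0}^{t-2}(t-1-\ell)w^\ell$ checks out (compare coefficients or verify small cases $t=2,3,4$), homogenization with $w=f_i^2/f_j^2$ gives the displayed identity, and each summand $\paren{(f_i^2-f_j^2)f_i^\ell f_j^{t-2-\ell}}^2$ is a genuine square of degree exactly $2t$. The only caveat worth flagging is that your fallback route (multinomial expansion plus $2t$-ary SoS AM--GM) does, as you note, require care at the odd-arity stages of the pairing recursion, so the inductive argument you actually carry out is the cleaner of the two.
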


\begin{fact}[SoS AM-GM Inequality, see Appendix A of~\cite{MR3388192-Barak15}] \label{fact:sos-am-gm}
Let $f_1, f_2,\ldots, f_m$ be indeterminates. Then, 
\[
\sststile{m}{f_1, f_2,\ldots, f_m} \Set{ \Paren{\frac{1}{m} \sum_{i =1}^n f_i }^m \geq \Pi_{i \leq m} f_i} \mper
\]
\end{fact}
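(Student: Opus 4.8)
The plan is to prove the statement with its (implicit) nonnegativity hypotheses made explicit, namely $\{f_i \ge 0 : i \le m\} \sststile{O(m)}{} \Set{\Paren{\tfrac1m \sum_{i\le m} f_i}^m \ge \prod_{i \le m} f_i}$; in every use of this fact in the paper the $f_i$ are themselves squares, so these hypotheses are free (equivalently, the substitution $f_i = g_i^2$ makes the statement unconditional). I would proceed by induction on $m$. The base case $m=2$ is the degree-$2$ identity $\Paren{\tfrac{f_1+f_2}{2}}^2 - f_1 f_2 = \Paren{\tfrac{f_1-f_2}{2}}^2$ (and $m=1$ is trivial).

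For the inductive step, assuming the bound for $m-1$, I would set $B = \tfrac1{m-1}\sum_{i<m} f_i$ so that $\tfrac1m\sum_{i\le m} f_i = \tfrac{(m-1)B + f_m}{m}$. The inductive hypothesis applied to $f_1,\dots,f_{m-1}$ is a degree-$O(m)$ SoS proof from $\{f_i \ge 0\}$ that $B^{m-1} \ge \prod_{i<m} f_i$; multiplying it by the hypothesis $f_m \ge 0$ gives $f_m B^{m-1} \ge \prod_{i \le m} f_i$. It then suffices to produce a bounded-degree SoS proof from $\{B \ge 0, f_m \ge 0\}$ of the two-variable homogeneous inequality $\Paren{\tfrac{(m-1)B + f_m}{m}}^m \ge f_m B^{m-1}$, because adding the two bounds yields the claim; and since $B$ has degree $1$ in the $f_i$, substituting it into a degree-$O(m)$ certificate of that two-variable inequality keeps the overall degree $O(m)$, so the recursion never leaves the linear-degree regime.

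The hard part---which I expect to be the main obstacle---is the two-variable certificate: writing $a=B$, $x=f_m$, I need $p_m(a,x):=\bigl((m-1)a+x\bigr)^m - m^m x a^{m-1} \ge 0$ on $\{a\ge0,x\ge0\}$ with a degree-$O(m)$ SoS-with-constraints proof. The structural observation that makes this tractable is that $p_m$ vanishes to second order on the diagonal $a=x$---this is precisely the all-coordinates-equal equality case of AM-GM---so $(a-x)^2$ divides $p_m$, say $p_m=(a-x)^2 r_m$ with $r_m$ a binary form of degree $m-2$ that is nonnegative on the quadrant. When $m$ is even, $r_m$ turns out to be globally nonnegative, hence a sum of squares (as every nonnegative binary form is), so $p_m$ is unconditionally SoS of degree $m$. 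When $m$ is odd one must also use the constraints, writing $r_m = a\sigma_1 + x\sigma_2$ with $\sigma_1,\sigma_2$ sums of squares of degree $m-3$---the standard normal form for an odd binary form that is nonnegative on $\{a,x\ge0\}$---again of total degree $m$. (Concretely, for $m=3$ one gets the clean $p_3 = (a-x)^2(8a+x) = 8a(a-x)^2 + x(a-x)^2$.) The rest is routine bookkeeping: checking that each ``multiply two inequalities'' step is a legal bounded-degree SoS inference---this is exactly where the hypotheses $f_i \ge 0$ are used---and tracking degrees to confirm the final proof has degree $O(m)$. An alternative is Cauchy's forward-backward induction, whose forward doubling $m\mapsto 2m$ is a transparent SoS manipulation from the $m=2$ identity, but whose backward half reduces to essentially the same two-variable inequality.
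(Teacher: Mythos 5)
The paper does not prove Fact~\ref{fact:sos-am-gm}; it is stated with a citation and used as a black box, so there is no in-paper argument to compare against. Evaluating your proposal on its own: it is correct, and your opening observation is one the paper should have made explicit. As literally written the fact is \emph{false} (take $m=3$, $f_1=f_2=-1$, $f_3=2$: the power mean is $0$ but the product is $2$), so by soundness there can be no unconditional SoS certificate; what is really meant—and the only way the paper uses it, e.g.\ Lemma~\ref{lem:cancellation-SoS-constant-RHS} with $f_1=a^2$, $f_j=1$—is after substituting $f_i=g_i^2$, equivalently with $\{f_i\geq 0\}$ as hypotheses.

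Your induction, the reduction to the two-variable weighted inequality $p_m(a,x)=((m-1)a+x)^m-m^m x a^{m-1}\geq 0$ with $(a-x)^2\mid p_m$, and the degree bookkeeping are all sound. The one step you flag as the ``main obstacle''—the normal form of the cofactor $r_m$ in the odd case—is indeed the only place where more than a sentence is needed, and while the general binary-form Positivstellensatz you invoke is true, for your specific $r_m$ there is a short explicit argument that treats both parities uniformly and tightens $O(m)$ to exactly $m$. The telescoping identity
\[
t^m - m t + (m-1) \;=\; (t-1)^2\sum_{k=0}^{m-2}(k+1)\,t^{\,m-2-k}\mcom
\]
specialized at $t=\mu/a$ with $\mu=\tfrac{(m-1)a+x}{m}$ (so $\mu-a=\tfrac{x-a}{m}$), gives after clearing denominators
\[
p_m(a,x)\;=\;m^m\bigl(\mu^m-x\,a^{m-1}\bigr)\;=\;m^{m-2}(a-x)^2\sum_{k=0}^{m-2}(k+1)\,\mu^{\,m-2-k}a^{k}\mper
\]
Expanding each $\mu^{m-2-k}$ by the binomial theorem shows every coefficient of $r_m(a,x)=m^{m-2}\sum_k(k+1)\mu^{m-2-k}a^k$ is nonnegative (for $m=3$ this reproduces your $8a+x$). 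When $m$ is odd each monomial $a^i x^j$ of $r_m$ has $i+j=m-2$ odd, hence exactly one odd exponent, so grouping monomials by which variable carries the odd power writes $r_m=a\sigma_1+x\sigma_2$ with $\sigma_1,\sigma_2$ sums of squares of monomials of degree $\tfrac{m-3}{2}$; when $m$ is even you may similarly split into $\sigma_0+ax\,\sigma_3$, or just invoke Hilbert for nonnegative binary forms as you do. Substituting $B=\tfrac{1}{m-1}\sum_{i<m}f_i$ and adding the certificate of $f_m B^{m-1}\geq \prod_{i\leq m}f_i$ from the inductive hypothesis times $f_m\geq0$ then gives a degree-$m$ certificate, as stated.
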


The following fact is a simple corollary of the fundamental theorem of algebra:

\begin{fact}
For any univariate degree $d$ polynomial $p(x) \geq 0$ for all $x \in \R$,
$\sststile{d}{x} \Set{p(x) \geq 0}$.
 \label{fact:univariate}
\end{fact}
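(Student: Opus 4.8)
The plan is to invoke the classical fact---a direct corollary of the fundamental theorem of algebra---that a univariate polynomial that is nonnegative on all of $\R$ is a sum of two squares of polynomials of at most half the degree; this immediately supplies the desired degree-$d$ sum-of-squares certificate. First I would observe that if $p$ has degree $d$ and $p(x) \geq 0$ for all $x \in \R$, then $d$ is even and the leading coefficient of $p$ is positive, since otherwise $p(x) \to -\infty$ along one of the two directions. Next I would factor $p$ over $\C$ via the fundamental theorem of algebra: the real roots must occur with even multiplicity (at a root of odd multiplicity $p$ would change sign, contradicting nonnegativity), and the non-real roots come in conjugate pairs $\alpha, \bar\alpha$, each pair contributing a real irreducible quadratic factor $(x-\alpha)(x-\bar\alpha) = (x - \mathrm{Re}\,\alpha)^2 + (\mathrm{Im}\,\alpha)^2$, which is manifestly a sum of two squares.

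Then I would combine all the factors using the Brahmagupta--Fibonacci two-squares identity $(a^2+b^2)(c^2+d^2) = (ac-bd)^2 + (ad+bc)^2$, now with $a,b,c,d$ ranging over polynomials in $x$: a product of sums of two squares is again a sum of two squares. Folding the positive leading coefficient $c$ in as $(\sqrt{c})^2$ and the even-multiplicity real part $\ell(x)^2$ in as one of the two squares, this produces polynomials $q_1, q_2$ with $p = q_1^2 + q_2^2$ and $\deg q_1, \deg q_2 \leq d/2$. Consequently $q_1^2 + q_2^2$ is a polynomial identity of degree at most $d$ exhibiting $p$ as a sum of squares, which by definition is a degree-$d$ sum-of-squares proof, i.e. $\sststile{d}{x} \Set{p(x) \geq 0}$.

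There is essentially no substantive obstacle here beyond bookkeeping; the only point requiring a moment's care is the degree bound, namely tracking through the two-squares identity that the resulting $q_1, q_2$ have degree at most $d/2$ so that the SoS proof has degree $d$ rather than $2d$. The parity observations---that $d$ is even and that each real root has even multiplicity---are exactly what make the factorization into linear-squared and irreducible-quadratic pieces valid in the first place, so they should be stated explicitly before the factorization step.
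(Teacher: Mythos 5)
Your proposal is correct and is exactly the classical argument the paper has in mind when it labels Fact~\ref{fact:univariate} ``a simple corollary of the fundamental theorem of algebra'': factor $p$ over $\C$, note real roots have even multiplicity and non-real roots pair into irreducible quadratics $(x-\mathrm{Re}\,\alpha)^2 + (\mathrm{Im}\,\alpha)^2$, and collapse via the Brahmagupta--Fibonacci identity into $p = q_1^2 + q_2^2$ with $\deg q_i \le d/2$. Your degree bookkeeping is right, so this is indeed a degree-$d$ SoS certificate.
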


This can be extended to univariate polynomial inequalities over intervals of $\R$.
2\begin{fact}[Fekete and Markov-Lukacs, see \cite{laurent2009sums}]
For any univariate degree $d$ polynomial $p(x) \geq 0$ for $x \in [a, b]$,  $\Set{x\geq a, x \leq b} \sststile{d}{x} \Set{p(x) \geq 0}$.  \label{fact:univariate-interval}
\end{fact}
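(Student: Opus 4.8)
\emph{Proof proposal.} The plan is to reduce the claim to the unconstrained univariate case, \cref{fact:univariate}, via a rational change of variables that sweeps out $[a,b]$. Unwinding the definition of a degree-$d$ sum-of-squares proof from the axioms $x-a\ge 0$ and $b-x\ge 0$, it suffices to produce sum-of-squares polynomials $\sigma_0,\sigma_1,\sigma_2,\sigma_3$ with
\[
p \;=\; \sigma_0 + \sigma_1\,(x-a) + \sigma_2\,(b-x) + \sigma_3\,(x-a)(b-x),
\]
where each summand has degree at most $d$. In fact one always has the sharper (Markov--Lukacs) shape: $\sigma_1=\sigma_2=0$ with $\deg\sigma_0\le d$ and $\deg\sigma_3\le d-2$ when $d$ is even, and $\sigma_0=\sigma_3=0$ with $\deg\sigma_1,\deg\sigma_2\le d-1$ when $d$ is odd.

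To obtain such a decomposition, substitute $x=a+\tfrac{b-a}{t^2+1}$. As $t$ ranges over $\R$ this takes every value in $(a,b]\subseteq[a,b]$, so $P(t):=p\!\left(a+\tfrac{b-a}{t^2+1}\right)(t^2+1)^{d}$ is a polynomial in $t$ of degree at most $2d$ that is nonnegative on all of $\R$; by \cref{fact:univariate}, $P=\sum_i q_i^2$ with $\deg q_i\le d$. Since $P$ is even in $t$, replacing each $q_i$ by its even part $R_i(t^2)$ and odd part $tS_i(t^2)$ and using that the cross terms sum to $0$, we get $P(t)=\sum_i R_i(t^2)^2 + t^2\sum_i S_i(t^2)^2$, where $\deg q_i\le d$ forces the appropriate degree bounds on $R_i,S_i$. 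Setting $u=t^2$, this is a polynomial identity $p\!\left(a+\tfrac{b-a}{u+1}\right)(u+1)^d=\sum_i R_i(u)^2+u\sum_i S_i(u)^2$. Substituting $u=\tfrac{b-x}{x-a}$ (so $u+1=\tfrac{b-a}{x-a}$ and $a+\tfrac{b-a}{u+1}=x$) and then multiplying through by $\tfrac{(x-a)^d}{(b-a)^d}$ clears every power of $(x-a)$ in the denominators, turning the identity into $p(x)=\sigma_0+(x-a)(b-x)\sigma_3$ (for $d$ even) or $p(x)=(x-a)\sigma_1+(b-x)\sigma_2$ (for $d$ odd) with the $\sigma_j$ sums of squares of the promised degrees. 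Reading off $p_\emptyset=\sigma_0$, $p_{\{1\}}=\sigma_1$, $p_{\{2\}}=\sigma_2$, $p_{\{1,2\}}=\sigma_3$ gives the degree-$d$ sum-of-squares proof, which can then be used verbatim inside pseudo-distribution arguments by soundness (\cref{fact:sos-soundness}).

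There is no conceptual obstacle here; the only delicate point is the degree and parity bookkeeping when transporting the $\R$-certificate back onto $[a,b]$. One must split on whether $d=2m$ or $d=2m+1$, check that the even part of $q_i$ has degree $\le d$ and its odd part degree $\le d$ (so that after clearing the $(x-a)$'s every $\sigma_j$ lands at degree $\le d$), and verify that $u\mapsto\tfrac{b-x}{x-a}$ is a legitimate substitution, being an identity of rational functions in the free variable $u$. If one prefers a proof that never leaves $[a,b]$, an alternative is induction on $d$: a root of $p$ in the open interval $(a,b)$ has even multiplicity, so one can factor out a perfect square $(x-r)^2$ and recurse on degree $d-2$; a root at an endpoint lets one factor out $(x-a)$ or $(b-x)$ and recurse on degree $d-1$; and if $p>0$ on $[a,b]$ then either $p$ is globally nonnegative, hence a sum of squares by \cref{fact:univariate}, or it has a real root $r\notin[a,b]$, in which case $x-r=(x-a)+(a-r)$ (if $r<a$) or $r-x=(b-x)+(r-b)$ (if $r>b$) is a nonnegative combination of the two axioms and can be factored out, recursing on degree $d-1$. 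In either approach the substantive work is the degree accounting.
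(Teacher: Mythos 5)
The paper states this fact without proof, citing~\cite{laurent2009sums} for the Markov--Lukacs theorem, so there is no internal proof to compare against; your argument supplies a correct and self-contained one along the standard route. The substitution $x=a+\tfrac{b-a}{t^2+1}$ turns $p\ge 0$ on $[a,b]$ into an even, globally nonnegative $P$ of degree at most $2d$; Fact~\ref{fact:univariate} gives $P=\sum_i q_i^2$ with $\deg q_i\le d$; splitting each $q_i$ into even and odd parts kills the odd cross-terms because $P$ is even; and the back-substitution $u=\tfrac{b-x}{x-a}$ followed by multiplication by $(x-a)^d/(b-a)^d$ clears all denominators. Your degree accounting is right: for $d=2m$ one has $\deg R_i\le m$ and $\deg S_i\le m-1$, so $p=\sigma_0+(x-a)(b-x)\sigma_3$ with $\deg\sigma_0\le d$ and $\deg\sigma_3\le d-2$; for $d=2m+1$ one has $\deg R_i,\deg S_i\le m$ and the odd power $(x-a)^{2m+1}$ splits as $(x-a)^{2m}\cdot(x-a)$, giving $p=(x-a)\sigma_1+(b-x)\sigma_2$ with $\deg\sigma_1,\deg\sigma_2\le d-1$; in both cases every summand $p_S\prod_{i\in S}f_i$ has degree at most $d$, which is exactly a degree-$d$ SoS proof. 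The one point you should make explicit in a write-up is why the resulting identity holds at $x=a$ even though the parametrization only covers $(a,b]$: after clearing denominators both sides are polynomials in $x$ that agree on the infinite set $(a,b]$, hence agree identically. Your alternative induction sketch (factoring out interior double roots, endpoint roots, or a nonnegative affine combination of the two axioms when the root lies outside $[a,b]$) is also sound, with the same kind of degree bookkeeping at each step.
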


\section{Clustering Mixtures of Reasonable Distributions} \label{sec:clustering}
In this section, we provide algorithm for clustering mixtures of \emph{reasonable} distributions (see Definition~\ref{def:nice_distribution}). 
The main results of this section are \emph{simultaneous intersection bounds} (Lemmas~\ref{lem:intersection-bounds-from-spectral-separation},~\ref{lem:intersection-bounds-from-mean-separation}, and ~\ref{sec:intersection-bounds-frobenius}) that we'll rely on in the next two sections. We then use these bounds to immediately derive an algorithm (via the rounding used in Chapter 4.3 of~\cite{TCS-086}) for clustering that runs in time $d^{\poly(k) \log(\kappa)}$ where $\kappa$ is the \emph{spread} of the mixture defined as  the maximum of $\frac{v^{\top} \Sigma(j) v}{v^{\top} \Sigma(i)v}$ over all $i,j \leq k$. In Section~\ref{sec:better-algo}, we will show how to improve the running time of this algorithm to have no dependence on the spread and prove our main result (Theorem~\ref{thm:main-clustering-general-main}). 



\begin{theorem}[Clustering Mixtures of Separated Reasonable Distributions]
There exists an algorithm that takes input a sample of size $n$ from $\Delta$-separated equi-weighted mixture of reasonable distributions $\cD(\mu(r),\Sigma(r))$ for $r \leq k$ with true clusters $C_1,C_2,\ldots,C_k$ and outputs $\hat{C}_1, \hat{C}_2, \ldots \hat{C}_k$ such that there exists a permutation $\pi:[k] \rightarrow [k]$ satisfying
\[
\min_{i \leq k} \frac{|C_i \cap \hat{C}_{\pi(i)}|}{|C_i|} \geq 1- O(\eta)\mper
\]
The algorithm succeeds with probability at least $1-1/k$  whenever $\Delta =\Omega(s(\poly(\eta/k))/\poly(\eta))$, needs $d^{O\Paren{s(\poly(\eta/k))\poly(k)}}$ samples and runs in time $n^{O\Paren{s(\poly(\eta/k)) \poly(k) \log \kappa}}$ where $\kappa$ is spread of the mixture.

\label{thm:main-clustering-section}
\end{theorem}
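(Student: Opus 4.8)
I would follow the certifiability-plus-rounding template for sum-of-squares clustering. First I would set up a system $\cA$ of polynomial constraints in $n$ Boolean indeterminates $w = (w_1,\dots,w_n)$ — the indicators of the sample points belonging to a purported cluster $\hat C$ — together with auxiliary matrix-valued indeterminates recording the empirical mean $\mu(w)$ and covariance $\Sigma(w)$ of $\hat C$. Besides booleanity ($w_i^2 = w_i$) and the size constraint $\sum_i w_i = n/k$, the substantive constraints encode exactly the analytic properties shared by every true cluster: certifiable hypercontractivity of degree-$2$ polynomials, certifiable anti-concentration of directional marginals, and certifiably bounded variance of degree-$2$ polynomials (Definitions~\ref{def:certifiable-hypercontractivity}, \ref{def:certifiable-anti-concentration}, \ref{def:bounded_variance}). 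By standard concentration (Lemma~\ref{lem:typical-samples-good}), once $n \geq d^{O(s(\poly(\eta/k))\poly(k))}$ the uniform distribution on each $C_r$ is itself reasonable with parameters $O(\eta/k)$-close to the population parameters, so \emph{completeness} is immediate: $w = \1_{C_r}$ satisfies $\cA$.

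The heart of the argument is \emph{soundness}, in the form of \emph{simultaneous intersection bounds}: for any pseudo-distribution $\tzeta$ of degree $\ell = O(s(\poly(\eta/k))\poly(k)\log\kappa)$ satisfying $\cA$ and every pair $r \neq r'$, the quantity $\pE_\tzeta\brac{w(C_r)\,w(C_{r'})}$ is polynomially small in $\eta/k$, where $w(C_r) = \tfrac{k}{n}\Abs{\hat C \cap C_r}$. This is proved in three cases according to which clause of $\Delta$-separation (Definition~\ref{def:separated-mixture-model}) holds for $C_r, C_{r'}$ — Lemmas~\ref{lem:intersection-bounds-from-spectral-separation}, \ref{lem:intersection-bounds-from-mean-separation}, and~\ref{sec:intersection-bounds-frobenius}. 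In each case the mechanism is a \emph{variance-mismatch marker}: a degree-$2$ polynomial $Q(x)$ in $x \sim \hat C$ whose variance over $\hat C$ is forced to be at least $\Omega\paren{w(C_r)w(C_{r'})\,\Delta^2}$ when both intersections are large (because $C_r$ and $C_{r'}$ behave very differently along the witness direction or coefficient matrix supplied by separation), while certifiable hypercontractivity or anti-concentration of $\hat C$ caps that same variance from above. Concretely, for spectral separation one studies $z - z'$ for independent $z, z' \sim \hat C$, uses anti-concentration of the $4$-wise convolution to lower-bound $v^{\top}\Sigma(w)v$ by $\paren{w(C_r)w(C_{r'}) - \delta}\,\delta^2\, v^{\top}(\Sigma(r)+\Sigma(r'))v$ (Lemma~\ref{lem:spectral-lower-bound-intersection-clustering}), and uses anti-concentration of $\hat C$ itself to upper-bound $v^{\top}\Sigma(w)v$ by $O(1)\cdot v^{\top}\Sigma(r)v$ (Lemma~\ref{lem:anti-conc-upper-bound-w}); the spectral hypothesis $\Delta_{\spe}\,v^{\top}\Sigma(r)v \leq v^{\top}\Sigma(r')v$ then closes the loop. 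The relative-Frobenius case is analogous with $Q(x) = x^{\top}(\Sigma(r)-I)x$, using in addition the SoS form of $\Norm{AB}_F^2 \leq \Norm{A}_{op}^2\Norm{B}_F^2$ (Lemma~\ref{lem:contraction-property}); mean separation is the easiest.

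Given the intersection bounds, I would run the level-$\ell$ sum-of-squares algorithm (Fact~\ref{fact:eff-pseudo-distribution}, time $n^{O(\ell)} = n^{O(s(\poly(\eta/k))\poly(k)\log\kappa)}$) to produce a degree-$\ell$ pseudo-distribution $\tzeta$ satisfying $\cA$, and round it exactly as in Section~4.3 of~\cite{TCS-086}. The bounds imply that $M = \pE_\tzeta\brac{w w^{\top}}$ is, in the appropriate averaged sense, block-diagonal with one block per true cluster: diagonal blocks carry mass $\approx \Abs{C_r}/k$, while the total cross-mass between blocks $r$ and $r'$ is at most $\tfrac{n}{k}\,\pE_\tzeta\brac{w(C_r)w(C_{r'})}$, which is a polynomially small fraction of $\tfrac{n}{k}$. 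One then samples a row of $M$, thresholds it to read off an approximate cluster, deletes the corresponding rows and columns, and repeats $k$ times; a Markov bound on the choice of row together with a union bound over the $k$ peeling steps gives that with probability $\geq 1 - 1/k$ the recovered $\hat C_1, \dots, \hat C_k$ satisfy $\min_i \Abs{C_i \cap \hat C_{\pi(i)}}/\Abs{C_i} \geq 1 - O(\eta)$ for some permutation $\pi$. Combining with the sample bound finishes the proof.

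The main obstacle is the spectral-separation intersection bound. The clean ``real-world'' argument (Lemma~\ref{lem:real-world-argument}) that combines the lower and upper bounds on $v^{\top}\Sigma(w)v$ does a case split on whether $w(C_r) > \delta$ or $w(C_r) \leq \delta$, which has no low-degree sum-of-squares analogue. Replacing it requires a \emph{rough} a-priori bound on $v^{\top}\Sigma(w)v$ (Lemma~\ref{lem:rough-bound-spectral-upper}, again from anti-concentration of $\hat C$) plus a uniform polynomial approximator for the threshold function on $[0,1]$ (Lemma~\ref{lem:poly-approximate-threshold}); it is precisely this device that makes the SoS degree — and hence the running-time exponent — scale with $\log\kappa$ rather than with $\poly(\kappa)$, and carrying this bookkeeping through for $k > 2$ clusters (where the rough bound ranges over all $k$ covariances and one must invoke the spread assumption to control $v^{\top}\Sigma(i)v$ for $i \notin \{r,r'\}$) is the delicate part. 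A secondary point is to check that the $4$-wise-convolution versions of hypercontractivity and anti-concentration built into Definition~\ref{def:nice_distribution} are exactly what the $z - z'$ manipulations consume, and that all these certifiable properties descend from the population distribution to the empirical one on a typical sample.
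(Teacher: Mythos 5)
Your proposal is correct and follows essentially the same approach as the paper: the same constraint system (booleanity, size, empirical parameters, certifiable anti-concentration, hypercontractivity, and bounded variance), the same simultaneous-intersection-bound lemmas split into the three separation cases with the variance-mismatch mechanism, the same recognition that the real-world case-split must be replaced by a polynomial threshold approximator (incurring the $\log\kappa$ degree), and the same row-sampling rounding of $\pE[ww^\top]$ following \cite{TCS-086}. The only minor omission is that the paper's rounding step also minimizes $\Norm{\pE[w]}_2^2$ over feasible pseudo-distributions to force $\pE[w_i] = 1/k$ uniformly, which is what makes the Markov-type analysis on rows of $M$ go through cleanly; this is subsumed in your ``round it exactly as in Section 4.3 of \cite{TCS-086}'' but worth surfacing.
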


\subsection{Algorithm}
Our constraint system $\cA$ uses polynomial inequalities to describe a subset $\hat{C}$ of size $\alpha n$ of the input sample $X$. We impose constraints on $\hat{C}$ so that the uniform distribution on $\hat{C}$ satisfies certifiable anti-concentration and hypercontractivity of degree-$2$ polynomials. We intend the true clusters $C_1, C_2, \ldots, C_r$ to be the only solutions for $\hat{C}$. Proving that this statement holds and that it has a low-degree SoS proof is the bulk of our technical work in this section.

We describe the specific formulation next.
Throughout this section, we use the notation $Q(x)$ to denote $x^{\top}Qx$ for $d \times d$ matrix valued indeterminate $Q$.
For ease of exposition, we break our constraint system $\cA$ into natural categories $\cA_1 \cup \cdots \cup \cA_5$.
Our constraint system relies on parameter $\tau,\delta$ that we will set in proof of Theorem~\ref{thm:main-clustering-section} below.

For our argument, we will need access to the square root of the indeterminate $\Sigma$. 
So we introduce the constraint system $\cA_1$ with an extra matrix valued indeterminate $\Pi$  (with auxiliary matrix-valued indeterminate $U$) that satisfies the polynomial equality constraints corresponding to $\Pi$ being the square root of $\Sigma$. Note that the first constraint is equivalent to $\Pi \succeq 0$ in ``ordinary math''.

\begin{equation}
\text{Square-Root Constraints: $\cA_1$} = 
  \left \{
    \begin{aligned}
      &
      &\Pi
      &= UU^{\top}\\
      &
      &\Pi^2
      &=\Sigma\mper\\
    \end{aligned}
  \right \}
\end{equation}
Next, we formulate intersection constraints that identify the subset $\hat{C}$ of size $\alpha n$.
\begin{equation}
\text{Subset Constraints: $\cA_2$} = 
  \left \{
    \begin{aligned}
      &\forall i\in [n]
      & w_i^2
      & = w_i\\
      &&
      \textstyle\sum_{i\in[n]} w_i
      &= \frac{n}{k}\mper \\
    \end{aligned}
  \right \}
\end{equation}
Next, we enforce that $\hat{C}$ must have mean $\mu$ and covariance $\Sigma$, where both $\mu$ and $\Sigma$ are indeterminates.
\begin{equation}
\text{Parameter Constraints: $\cA_3$} = 
  \left \{
    \begin{aligned}
      &
      &\frac{1}{n}\sum_{i = 1}^n w_i x_i
      &= \mu\\
      &
      &\frac{1}{n}\sum_{i = 1}^n w_i \paren{x_i-\mu}\paren{x_i-\mu}^{\top}
      &= \Sigma\mper\\
    \end{aligned}
  \right \}
\end{equation}
Finally, we enforce certifiable anti-concentration at two slightly different parameter regimes (characterized by $\tau \leq \delta$) along with the hypercontractivity of $\hat{C}$ .
\begin{equation}
\text{Certifiable Anti-Concentration : $\cA_4$} =
  \left \{
    \begin{aligned}
      &
      &\frac{k^2}{n^2}\sum_{i,j=  1}^n w_i w_j q_{\delta,2\Sigma}^2\left(\Paren{x_i-x_j},v\right)
      &\leq 2^{s(\delta)} C\delta \Paren{v^{\top}\Sigma v}^{s(\delta)}\\
      &
      &\frac{k^2}{n^2}\sum_{i,j=  1}^n w_i w_j q_{\tau,2\Sigma}^2\left(\Paren{x_i-x_j},v\right)
      &\leq 2^{s(\tau)} C\tau \Paren{v^{\top}\Sigma v}^{s(\tau)}\mper\\
     \end{aligned}
    \right\}
 \end{equation}

\begin{equation}
\text{Certifiable Hypercontractivity : $\cA_5$} = 
  \left \{
    \begin{aligned}
     &\forall j \leq 2s,
     &\frac{k^2}{n^2} \sum_{i,j \leq n} w_i w_j Q(x_i-x_j)^{2j}
     &\leq (Cj)^{2j}\Norm{\Pi Q \Pi}_F^{2j}\mper
    \end{aligned}
  \right \}
\end{equation}
\text{Certifiable Bounded Variance: $\cA_6$} = 
\begin{equation}
  \left \{
    \begin{aligned}
     &\forall j \leq 2s,
     &\frac{k^2}{n^2} \sum_{i,\ell \leq n} w_i w_\ell \Paren{Q(x_i-x_\ell)-\frac{k^2}{n^2} \sum_{i,\ell \leq n} w_i w_\ell Q(x_i-x_\ell)}^{2}
     &\leq C \Norm{\Pi Q \Pi}_F^{2}\mper
    \end{aligned}
  \right \}
\end{equation}

\paragraph{Algorithm.}
We are now ready to describe our algorithm.
Our algorithm follows the same outline as the simplified proof for clustering spherical mixtures presented in~\cite{TCS-086} (Chapter 4.3). The idea is to find a pseudo-distribution $\tzeta$ that minimizes the objective $\Norm{\pE[w]}_2$ and is consistent with the constraint system $\cA$. 

It is simple to round the resulting solution to true clusters: our analysis yields that the matrix $\pE[ww^{\top}]$ is approximately block diagonal with the blocks approximately corresponding to the true clusters $C_1, C_2,\ldots,C_k$. We can then recover a cluster by a repeatedly greedily selecting $n/k$ largest entries in a random row, removing those columns off and repeating. We describe this algorithm below.

\begin{mdframed}
  \begin{algorithm}[Clustering General Mixtures]
    \label{algo:rounding-for-pseudo-distribution}\mbox{}
    \begin{description}
    \item[Given:]
        A sample $X$ of size $n$ with true clusters $C_1, C_2 , \ldots, C_k$ of size $n/k$ each.
    \item[Output:]
      A partition of $X$ into an approximately correct clusters $\hat{C}_1, \hat{C}_2, \ldots, \hat{C}_k$.
    \item[Operation:]\mbox{}
    \begin{enumerate}
    \item Find a pseudo-distribution $\tzeta$ satisfying $\cA$ minimizing $\Norm{\pE[w]}_2^2$.
      \item For $M = \pE_{w \sim \tzeta} [ww^{\top}]$, repeat for $1 \leq \ell \leq k$:
        \begin{enumerate} \item Choose a uniformly random row $i$ of $M$.
        \item Let $\hat{C}_{\ell}$ be the set of points indexed by the largest $\frac{n}{k}$ entries in the $i$th row of $M$.
        \item Remove the rows and columns with indices in $\hat{C}_{\ell}$.
       \end{enumerate}
      \end{enumerate}
    \end{description}
  \end{algorithm}
\end{mdframed}

\paragraph{Analysis of the Algorithm.}
We first show that the sample $X$ inherits the relevant properties of the distributions.
Towards this, we make the following definition.

\begin{definition}["Good" Sample]
A sample $X \subseteq \R^d$ of size $n$ is said to be a good sample from a $\Delta$-separated mixture of $\cD(\mu(r),\Sigma(r))$ for $r \leq k$ if there exists a partition $X = C_1 \cup C_2 \cup \cdots C_k \subseteq \R^d$ with empirical mean and covariance $\hat{\mu}(1), \hat{\Sigma}(1), \ldots, \hat{\mu}(k), \hat{\Sigma}(k) $ such that for all $ r \in [k]$ and $s = s(\poly(\eta/k))$, 
\begin{enumerate}
  \item Empirical mean: $\Iprod{\hat{\mu}(r)  - \mu(r),v}^2 \leq 0.1 v^{\top} \Sigma(r) v$ 
  \item Empirical covariance: $ \left(1 - \frac{1}{2^{2s}}\right)\Sigma(r) \preceq \hat{\Sigma}(r) \preceq \left(1 + \frac{1}{2^{2s}}\right) \Sigma(r)$.
  \item Certifiable Anti-concentration: For all $\tau \geq \poly(\eta/k)$,
\[
\sststile{2s}{v} \Set{
\frac{k^2}{n^2}\sum_{\substack{ i\neq j \in C_r}}  q^2_{\tau, \hat{\Sigma}(r)} \Paren{x_i-x_j,v } \leq 10C\tau \Paren{v^{\top}\hat{\Sigma}(r) v}^{2s}_2}\mper
\]

\[
\sststile{2s}{v} \Set{
\frac{k}{n}\sum_{\substack{ i_1, i_2 \in C_r, j_1, j_2 \in C_{r'}}}  q^2_{\tau, \hat{\Sigma}(r)} \Paren{x_{i_1} -x_{i_2} -x_{j_1} + x_{j_2},v } \leq 10C\tau \Paren{v^{\top}(\hat{\Sigma}(r) + \hat{\Sigma}(r') v}^{2s}_2}\mper
\]

\item Certifiable Hypercontractivity: For every $j \leq s$,
\[
\sststile{2s}{Q} \Set{
\frac{k^2}{n^2} \sum_{\substack{ i\neq \ell \in C_r}}  Q(x_i-x_\ell)^{2j}
     \leq (Cj)^{2j} 2^{2j} \Norm{\hat{\Sigma}(r)^{\frac{1}{2}}Q\hat{\Sigma}(r)^{\frac{1}{2}}}_F^{2j}}\mper
\]

\item Certifiable Bounded-Variance: 
\[
\sststile{2}{Q} \Set{
\frac{k^2}{n^2} \sum_{\substack{ i\neq \ell \in C_r}}  \Paren{Q(x_i-x_\ell)-\frac{k^2}{n^2} \sum_{\substack{ i\neq \ell \in C_r}}  Q(x_i-x_\ell)}^{2}
     \leq C\Norm{\Sigma(r)^{1/2}Q\Sigma(r)^{1/2}}_F^2}\mper 
\]
\end{enumerate}
\end{definition}

Via standard concentration arguments, it is straightforward (See Section~\ref{sec:feasibility} of Appendix) to verify that a large enough sample $X$ from a $\Delta$-separated mixture of reasonable distributions is a good.

\begin{lemma}[Typical samples are good] \label{lem:typical-samples-good}
Let $X$ be a  sample of size $n$ from a equi-weighted $\Delta$-separated mixture $\cD(\mu(r),\Sigma(r))$ for $r \leq k$. 
Then, for $n_0 = \Omega\Paren{ (s(\poly(\eta/k))d)^{8s(\poly(\eta/k))} k \log k}$ and any $n \geq n_0$, $X$ is good with probability at least $1-1/d$. Further, the the uniform distribution on  $C_1,C_2,\ldots, C_k$ are pairwise $\Delta/2$-separated.
\end{lemma}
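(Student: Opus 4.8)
The plan is to verify the five numbered properties in the definition of a good sample, together with the claimed pairwise separation, by combining standard concentration of \iid samples with a routine perturbation argument; this is the content of the (straightforward) appendix proof. Fix $r \le k$ and write $n_r = n/k$ for the size of the true cluster $C_r$, whose points are \iid draws from $\cD(\mu(r),\Sigma(r))$. After whitening, the vectors $a_i := \Sigma(r)^{-1/2}(x_i-\mu(r))$ for $i \in C_r$ are \iid draws from the reasonable distribution $\cD$, which by Definition~\ref{def:nice_distribution} is $O(\sqrt C)$-subgaussian, so \cref{fact:empirical_mean,fact:empirical_cov} apply to them.

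First I would dispatch properties~1 and~2. Applying \cref{fact:empirical_mean} to $\{a_i\}_{i\in C_r}$ gives $\Norm{\Sigma(r)^{-1/2}(\hat\mu(r)-\mu(r))}_2^2 \le 0.1$ (property~1, after undoing the whitening) once $n_r$ exceeds a constant multiple of $d$; applying \cref{fact:empirical_cov} and absorbing the negligible rank-one correction $\hat\mu(r)\hat\mu(r)^{\top}-\mu(r)\mu(r)^{\top}$ gives $\Norm{\Sigma(r)^{-1/2}\hat\Sigma(r)\Sigma(r)^{-1/2}-I}_2 \le 2^{-2s}$ (property~2) once $n_r$ exceeds a constant multiple of $2^{4s}(d+\log(1/\delta'))$. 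Taking $\delta' = 1/(d^{10}k)$ and $n_r \ge (s d)^{8s}$ with $s = s(\poly(\eta/k))$ makes this hold with failure probability $\le \delta'$.

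The heart of the matter is properties~3, 4 and~5: transferring \emph{certifiable} anti-concentration, hypercontractivity, and bounded variance of the relevant population distributions to the empirical distribution on $C_r$ (and, in the second display of property~3, on quadruples drawn from $C_r$ and $C_{r'}$). In each case the left side of the claimed SoS inequality is a fixed polynomial in the indeterminate ($v$ of degree $4s$, or $Q$ of degree $\le 4s$) whose coefficients are $U$-statistics over the $n_r$ \iid points, and the right side is a fixed SoS polynomial in the whitened empirical covariance. Writing the left side as a quadratic form in $v^{\otimes 2s}$ (resp.\ $Q^{\otimes j}$) with coefficient matrix $M(\cdot)$, it suffices to show that the empirical average of $M$ deviates from its population expectation by at most a $2^{-10s}$-fraction in spectral norm: the population expectation already certifies the inequality with slack (e.g.\ constant $C\tau$ versus $10C\tau$) by Definition~\ref{def:nice_distribution} applied to the appropriate $2$- or $4$-wise convolution of $\cD$, whose certifiable anti-concentration/hypercontractivity follows from that definition together with the standard closure of these properties under linear images and convolutions (established for Gaussians and affine images of the sphere in Section~\ref{sec:reasonable-distributions}), while the dependence of $q_{\tau,\hat\Sigma(r)}$ and of the right sides on the empirical covariance is absorbed using property~2. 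The needed spectral concentration of the $U$-statistic coefficient matrices follows from decoupling plus a matrix-Bernstein/net argument using the moment growth supplied by hypercontractivity, and requires $n_r \ge (sd)^{\Omega(s)}$, which is exactly the origin of the $(sd)^{8s}$ factor; property~5 is the degree-$2$ special case and needs only $n_r \ge \poly(d)$. These are precisely the ``certifiable anti-concentration (resp.\ hypercontractivity) of a sufficiently large \iid sample'' statements already established in~\cite{DBLP:journals/corr/abs-1905-05679,DBLP:journals/corr/abs-1905-04660,bakshi2020list}. A union bound over the $O(k)$ clusters and $O(k^2)$ pairs, each failing with probability $\le \delta'$, keeps the total failure probability $\le 1/d$ and accounts for the extra $\log k$ in $n_0$.

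Finally, for the pairwise $\Delta/2$-separation of the uniform distributions on $C_1,\dots,C_k$, fix $r\neq r'$ and assume $\cD_r,\cD_{r'}$ satisfy one of the three conditions of Definition~\ref{def:separated-mixture-model} with parameter $\Delta$; I claim the empirical clusters satisfy the same condition with $\Delta/2$. In the mean-separated case, the triangle inequality together with properties~1 and~2 gives, for the witnessing $v$, $\Iprod{\hat\mu(r)-\hat\mu(r'),v}^2 \ge \tfrac{\Delta^2}{4}\, v^{\top}(\hat\Sigma(r)+\hat\Sigma(r'))v$ as soon as $\Delta$ is larger than an absolute constant (which holds since $\Delta \gg 1$). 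In the spectrally-separated case, property~2 gives $v^{\top}\hat\Sigma(r)v \ge (1-2^{-2s})v^{\top}\Sigma(r)v > (1-2^{-2s})\Delta\, v^{\top}\Sigma(r')v \ge \tfrac{\Delta}{2}v^{\top}\hat\Sigma(r')v$. The relative-Frobenius case is the same computation: a $(1\pm 2^{-2s})$-multiplicative perturbation of each covariance changes $\Norm{\Sigma(r)^{\dagger/2}\Sigma(r')\Sigma(r)^{\dagger/2}-I}_F^2$ and $\Norm{\Sigma(r)^{\dagger/2}\Sigma(r')^{1/2}}_{op}^4$ each by a $(1\pm O(2^{-2s}))$ factor, so the separation survives with $\Delta/2$ as long as $\Delta$ is at least a large enough constant. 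I expect the third paragraph --- making the $U$-statistic concentration of the SoS certificates precise, and checking that the $2^{-10s}$ spectral accuracy (hence the stated sample size) is enough to preserve the SoS-ness of the inequalities --- to be the only real work; the mean, covariance, and separation steps are routine.
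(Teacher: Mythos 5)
Your proposal is correct and follows essentially the same route as the paper's appendix proof: whiten, apply \cref{fact:empirical_mean,fact:empirical_cov} for properties 1 and 2, then transfer the SoS certificates of anti-concentration, hypercontractivity, and bounded variance to the empirical distribution by concentrating the coefficient tensors of the relevant polynomials and absorbing the dependence on $\hat\Sigma(r)$ via the spectral closeness in property~2. The one technical divergence is how the tensor concentration is argued: you propose decoupling plus a matrix-Bernstein/net bound, whereas the paper (see the proofs of \cref{lem:cert_hyper_sampling} and the appendix argument) uses the cruder entry-wise moment bound plus union bound over the $d^{O(s)}$ coordinates of the moment tensor, then bounds spectral norm by the sum of entries; both yield the same $(sd)^{\Theta(s)}$ sample complexity, so this is a matter of taste rather than substance. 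Notably, your final paragraph actually supplies the perturbation argument for the ``pairwise $\Delta/2$-separated'' claim in all three cases of \cref{def:separated-mixture-model}, a step the paper's appendix proof asserts but does not spell out.
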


As in the spherical case~\cite{TCS-086}, the heart of the analysis involves showing that $\pE_{\tzeta}[ww^{\top}]$ is indeed approximately block diagonal whenever $\tzeta$ satisfies $\cA$. This follows immediately from the following lemma that shows that that there's a low-degree SoS proof that shows that the subset indicated by $w$ cannot simultaneously have large intersections  with two distinct clusters $C_r,C_{r'}$.

\begin{lemma}[Simultaneous Intersection Bounds from Separation]
Let $X$ be a good sample of size $n$ from a $\Delta$-separated, equi-weighted mixture of affine transforms of a reasonable distribution $\cD$ with true clusters $C_1, C_2,\ldots,C_k$. For all $r\in [k]$,  let $w(C_r)$ denote the linear polynomial $\frac{k}{n} \sum_{i \in C_r} w_i$. Then, for every $r \neq r'$ and $\delta > 0$, 
\[
\cA \sststile{O(s(\delta)^2 \log \kappa)}{w} \Set{ w(C_r) w(C_{r'}) \leq O(\delta^{1/3})}\mper
\] \label{lem:intersection-bounds-from-separation}
\end{lemma}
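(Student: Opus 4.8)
The plan is to reduce to the three separation cases of Definition~\ref{def:separated-mixture-model} and prove a simultaneous intersection bound in each case, then combine. Since $X$ is a good sample and the uniform distributions on $C_r, C_{r'}$ are $\Delta/2$-separated, for each pair $r \neq r'$ at least one of mean-, spectral-, or relative-Frobenius separation holds for the empirical parameters. The overall structure in every case is a \emph{variance mismatch} argument: exhibit a degree-$2$ test polynomial $P(x) = x^\top Q x$ (or a linear form $\iprod{x,v}$) such that (i) if $w(C_r)w(C_{r'})$ is large, then $P$ must have large empirical variance on the purported cluster $\hat C$ indicated by $w$ — using anti-concentration of the $4$-wise convolution of $\cD$ (which makes differences $z - z'$ of points of $\hat C$ spread out in the relevant direction) together with bounded variance/hypercontractivity on the true clusters $C_r, C_{r'}$ to control the ``typical'' values of $P$; and (ii) since $\hat C$ itself is certifiably hypercontractive/anti-concentrated (enforced by $\cA_4, \cA_5, \cA_6$), $P$ cannot have variance that large. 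Stringing (i) against (ii) via the separation inequality yields a polynomial bound on $w(C_r)w(C_{r'})$.

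Concretely: I would first set up the difference-of-samples trick, working with $z - z'$ for $z, z' \sim \hat C$, whose covariance is $2\Sigma(w)$; note that a $w(C_r)w(C_{r'})$-fraction of such differences are distributed as differences of an independent $C_r$-sample and a $C_{r'}$-sample, whose covariance is $\Sigma(r) + \Sigma(r')$ and which is anti-concentrated by the $4$-wise convolution hypothesis. For \textbf{spectral separation} ($\Delta_\spe v^\top \Sigma(r) v \le v^\top \Sigma(r') v$), take $P(x) = \iprod{x,v}^2$: the lower bound on $v^\top \Sigma(w) v$ in terms of $v^\top\Sigma(r')v$ comes from the anti-concentration of the convolution (as sketched in the overview, Lemma~\ref{lem:spectral-lower-bound-intersection-clustering}), the upper bound on $v^\top \Sigma(w) v$ in terms of $v^\top \Sigma(r) v$ when $w(C_r)$ is large comes from anti-concentration of $\hat C$ (Lemma~\ref{lem:anti-conc-upper-bound-w}), and a ``rough'' upper bound (Lemma~\ref{lem:rough-bound-spectral-upper}) controls $v^\top\Sigma(w)v$ by a multiple of $\max_i v^\top \Sigma(i) v \le \kappa \cdot v^\top\Sigma(r')v$. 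For \textbf{relative-Frobenius separation}, take $Q = \Sigma(r) - I$ (after normalizing $\Sigma(r') = I$): hypercontractivity of degree-$2$ polynomials on $C_r, C_{r'}$ forces the variance of $Q(x)$ to be $\ll \Delta_{cov}^2$ on each true cluster while the means differ by $\norm{\Sigma(r)-I}_F^2 \ge \Delta_{cov}^2$, giving the lower bound; the SoS contraction lemma ($\norm{AB}_F^2 \le \norm{A}_{op}^2\norm{B}_F^2$, Lemma~\ref{lem:contraction-property}) plus hypercontractivity of $\hat C$ gives the upper bound. For \textbf{mean separation}, take the linear form $\iprod{x, v}$ with $v$ the separating direction and argue analogously but more simply, using anti-concentration and bounded variance.

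The \textbf{main obstacle} — as flagged in the overview — is SoS-izing the combination step in the spectral case. The ``real-world'' proof does a case split on whether $w(C_r) > \delta$ or $w(C_r) \le \delta$, which has no direct low-degree SoS analog. The fix, and the technically delicate part, is to replace the hard threshold by a uniform polynomial approximator of the $[0,1]$-indicator (Lemma~\ref{lem:poly-approximate-threshold}) applied to $w(C_r)$, so that the conditional reasoning becomes an honest polynomial identity at the cost of a degree that scales \emph{logarithmically} (rather than polynomially) in $\kappa$; keeping track of the accumulated error terms so the final bound is $O(\delta^{1/3})$ and the degree is $O(s(\delta)^2 \log\kappa)$ is where the care is needed. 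Once all three cases give $w(C_r)w(C_{r'}) \le O(\delta^{1/3})$ via low-degree SoS proofs, and since $\cA$ together with goodness of $X$ certifies which separation type applies, the lemma follows by taking the SoS proof corresponding to the applicable case.
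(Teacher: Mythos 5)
Your plan matches the paper's proof essentially verbatim: the same three-way case split on separation type, the same variance-mismatch scheme (lower bound from anti-concentration of the difference distribution, upper bound from anti-concentration/hypercontractivity of $\hat C$), the same rough-spectral-bound auxiliary lemma, and the same resolution of the conditional-argument obstacle via the threshold polynomial approximator of Lemma~\ref{lem:poly-approximate-threshold}, yielding the $\log\kappa$ degree dependence. This is a correct high-level account of the paper's argument.
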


For the  special case of $k=2$, we obtain the following improved version with no dependence on $\kappa$ in the degree.

\begin{lemma}[Simultaneous Intersection Bounds from Separation, Two Components]
Let $X = C_1 \cup C_2$ be a good sample with true clusters $C_1, C_2$ of size $n/2$ from a $\Delta$-separated, equi-weighted mixture of affine transforms of a reasonable distribution $\cD$.
Let $w(C_r)$ denote the linear polynomial $\frac{k}{n} \sum_{i \in C_r} w_i$ for every $r \leq 2$.
Then,
\[
\cA \sststile{O(s(\delta)^2)}{w} \Set{ w(C_1) w(C_{2}) \leq O(\delta^{1/3})}\mper
\]
\label{lem:intersection-bounds-from-separation-two-components}
\end{lemma}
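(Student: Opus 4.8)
The plan is to specialize the three-case analysis behind \cref{lem:intersection-bounds-from-separation} to $k=2$ and to observe that the only place where that proof pays a $\log\kappa$ factor---the spectral-separation case---becomes cost-free when there are just two clusters. Since the mixture is $\Delta$-separated, \cref{def:separated-mixture-model} guarantees that the pair $C_1,C_2$ is separated in at least one of the three senses, so I would split into cases. If $C_1,C_2$ are mean-separated, I would invoke \cref{lem:intersection-bounds-from-mean-separation} directly; if they are relative-Frobenius separated, I would invoke the Frobenius intersection bound of Section~\ref{sec:intersection-bounds-frobenius} (built from \cref{lem:Large-Intersection-Implies-High-Variance-frobenius} and the hypercontractive variance upper bound \cref{lem:variance-upper-bound-hypercontractivity}, the latter via the SoS contraction lemma). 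In both of these cases the upper bound on the relevant degree-$2$ polynomial's variance is extracted from (anti-concentration, resp.\ hypercontractivity of) $\hat C$ \emph{itself}, with no appeal to the other clusters' parameters, so the derivation already runs at SoS degree $O(s(\delta)^2)$ with no $\kappa$; these cases transfer verbatim. It therefore remains to treat spectral separation.

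For the spectral case, after relabeling assume there is a unit $v$ with $\Delta\, v^{\top}\Sigma(1) v \le v^{\top}\Sigma(2)v$, and write $\Sigma$ for the covariance indeterminate of $\hat C$. Following the proof of \cref{lem:intersection-bounds-from-spectral-separation}, I would work with $z-z'$ for independent $z,z'\sim \hat C$ (this is exactly what the difference form $q_{\delta,2\Sigma}(x_i-x_j,v)$ of the constraints $\cA_4$ is tailored to exploit, and it makes the argument insensitive to $\mu(1),\mu(2)$). Inside $\cA$ I would derive three facts: (i) a \emph{lower bound} (\cref{lem:spectral-lower-bound-intersection-clustering}): a $w(C_1)w(C_2)$-fraction of the pairs $z-z'$ are differences of independent draws from $C_1$ and $C_2$, whose difference-distribution is anti-concentrated with covariance $\Sigma(1)+\Sigma(2)$, giving $v^{\top}\Sigma v \ge (w(C_1)w(C_2)-O(\delta))\,\delta^2\, v^{\top}(\Sigma(1)+\Sigma(2))v$; (ii) a \emph{conditional upper bound} (\cref{lem:anti-conc-upper-bound-w}): from anti-concentration of $\hat C$, if $w(C_1)$ is large then $v^{\top}\Sigma v \le O(1)\, v^{\top}\Sigma(1)v$; and (iii) a \emph{rough bound} (\cref{lem:rough-bound-spectral-upper}): again from anti-concentration of $\hat C$, $v^{\top}\Sigma v \le O(1)\,\max_{i\le k} v^{\top}\Sigma(i)v$. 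To upgrade (ii)---valid only when $w(C_1)$ is appreciable---to an unconditional estimate, I would SoSize the real-world case split ``$w(C_1)>\delta$ vs.\ $w(C_1)\le\delta$'' by multiplying through by the uniform polynomial threshold approximators of \cref{lem:poly-approximate-threshold} applied to $w(C_1)\in[0,1]$; the degree of those polynomials, hence the resolution at which (ii) and (iii) must agree, is governed by the ratio of the right-hand side of (iii) to the scale $v^{\top}\Sigma(2)v$ appearing in (i).

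This is where $k=2$ helps: $\max_{i\le k} v^{\top}\Sigma(i)v = \max\{v^{\top}\Sigma(1)v,\, v^{\top}\Sigma(2)v\} = v^{\top}\Sigma(2)v$ by the choice of $v$, so (iii) already reads $v^{\top}\Sigma v \le O(1)\, v^{\top}\Sigma(2)v$---the \emph{same} scale that governs (i) and (ii) (using $v^{\top}\Sigma(1)v \le v^{\top}\Sigma(2)v$). Consequently the threshold polynomial needs only constant degree rather than degree $\sim \log\kappa$, and combining (i), (ii), (iii) with the separation $\Delta\, v^{\top}\Sigma(1)v \le v^{\top}\Sigma(2)v$---the same endgame as in \cref{lem:real-world-argument}, now entirely inside low-degree SoS---yields $\cA \sststile{O(s(\delta)^2)}{w}\{w(C_1)w(C_2)\le O(\delta^{1/3})\}$. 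I expect the bulk of the work, and the main obstacle, to be precisely this last combination step: verifying that replacing the real-world case analysis by multiplication with the threshold polynomials costs only a constant factor in degree \emph{once the rough bound is tight}, and tracking the $O(\delta)$-slack terms from (i) so they can be absorbed into the final $O(\delta^{1/3})$ by an appropriate choice of the two anti-concentration scales $\tau\le\delta$ in $\cA_4$ (balancing $\delta^2$-type gains against $\delta$-type losses). Everything else is a direct restriction of the general argument to two clusters.
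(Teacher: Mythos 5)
Your proposal captures the right $k=2$ insight for the spectral case, and it matches the paper's approach there: the rough bound from \cref{lem:rough-bound-spectral-upper} already yields $v^{\top}\Sigma(w)v \lesssim v^{\top}\Sigma(2)v$ (the correct scale, since $v^{\top}\Sigma(1)v\le v^{\top}\Sigma(2)v$), so the threshold approximator's accuracy parameter $\rho$ can be taken polynomial in $\delta$ rather than $\kappa^{-s}$, and the $\log\kappa$ in the SoS degree disappears. However, you have a genuine gap in your treatment of the other two cases. You assert that the mean- and relative-Frobenius-separation cases ``transfer verbatim'' because the upper bound there ``is extracted from $\hat C$ itself, with no appeal to the other clusters' parameters.'' That is not true. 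The proof of \cref{lem:intersection-bounds-from-mean-separation} multiplies by the threshold polynomial and then invokes \cref{lem:upper-bound-variance-w-samples-sharper}, whose right-hand side carries a $\delta\rho\,\lambda_{\max}(v)^s$ term with $\lambda_{\max}(v)=\max_{i\le k}v^{\top}\Sigma(i)v$, and the general proof then sets $\rho=\kappa^{-s}$ to absorb it --- that is precisely where the $\log\kappa$ SoS degree comes from. The same is true for the Frobenius case via \eqref{eq:spectral-upper-bound-repurposed-frobenius}, which contains a $\delta\rho\,\kappa^s$ error term. Invoking \cref{lem:intersection-bounds-from-mean-separation} or \cref{lem:intersection-bounds-from-rel-frob-sep} directly, as your plan says, therefore leaves you at degree $O(s(\delta)\log\kappa)$ in those two cases, contradicting the claimed $O(s(\delta)^2)$.

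The fix is to apply exactly the observation you make for the spectral case to all three cases, which is what the paper does: \cref{lem:intersection-bounds-from-mean-separation-2-components} and \cref{lem:intersection-bounds-from-rel-frob-sep-two-components} both explicitly replace the uniform eigenvalue bound (\cref{lem:upper-bound-variance-w-samples-sharper}, resp.\ \eqref{eq:spectral-upper-bound-repurposed-frobenius}) by the sharper $k=2$ instance of \cref{lem:rough-bound-spectral-upper}, so that $\lambda_{\max}(v)^s$ is replaced by $(v^{\top}\Sigma(1)v)^s+(v^{\top}\Sigma(2)v)^s\le 2(v^{\top}\Sigma(2)v)^s$ and $\rho$ can be taken a fixed polynomial in $\delta$. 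Once you route all three cases through that replacement, the combination into \cref{lem:intersection-bounds-from-separation-two-components} is indeed a straight restriction of \cref{lem:intersection-bounds-from-separation}, and the degree becomes $O(s(\delta)^2)$ as claimed. So the obstacle is not the ``combination step'' you flag at the end; it is that your case split incorrectly treats two of the three cases as already $\kappa$-free.
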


It is easy to finish the analysis of the algorithm given Lemma~\ref{lem:intersection-bounds-from-separation}.

\begin{proof}[Proof of Theorem~\ref{thm:main-clustering-section}]

\textbf{Enforcing Constraints.} First, we argue that the number of constraints in the SDP we need to solve to find $\tzeta$ in Step 1 above is $d^{O( \log \kappa )s(\delta)^2}$. For this, it is enough to show that the number of polynomial inequalities needed to enforce $\cA$ is appropriately bounded. $\cA_1,\cA_2,\cA_3$ encode $O(d^2)$ inequalities by direct inspection. $\cA_4,\cA_5$ superficially encode an infinitely many constraints - by applying the quantifier alternation technique that uses SoS certifiability (first used in ~\cite{DBLP:journals/corr/abs-1711-11581,HopkinsLi17}, see Page 131 of~\cite{TCS-086} for an exposition) to compress such constraints by leveraging low-degree SoS proofs allows us to encode them into $d^{O(s(\delta)^2)}$ constraints.

\textbf{Minimizing Norm.} Observe that $\Norm{\pE[w]}_2$ is a convex function in $\pE[w]$ and thus, a pseudo-distribution minimizing $\Norm{\pE[w]}_2$ consistent with $\cA$ can be found in time $n^{O(\log \kappa)s(\delta)^2)}$ if it exists using the ellipsoid method (using the separation oracle from Fact~\ref{fact:sos-separation-efficient}). The rounding itself is easily seen to take at most $O(n^2)$ time. This completes the analysis of the running time.

\textbf{Feasibility of the SDP.} In the remaining part of the analysis, we condition on the event that the input $X$ is a good sample. We show that the SDP for computing the pseudo-distribution in Step 1 of the algorithm is feasible. We exhibit a feasible solution by describing a natural setting of the indeterminates in our constraint program. Let $\zeta$ be the uniform distribution (thus, also a pseudo-distribution of degree $\infty$) on $\1(C_r)$, for all $r\in k$. That is, $\zeta$ is uniformly distributed on the true clusters. Lemma~\ref{lem:typical-samples-good} implies that setting $w = \1(C_r)$ satisfies all the constraints in $\cA$. Thus, $\tzeta$ is indeed a feasible for the SDP. Observe further that for every $i$, $\pE_{\tzeta}[w_i] = 1/k$.

\textbf{Analysis of the SDP Solution.} Now, let $\tzeta$ be the pseudo-distribution computed in Step 1 of the algorithm.  
First, observe that by Cauchy-Schwarz inequality, $\norm{\pE_{\tzeta}[w]}_2^2= \sum_{i\leq n} \pE_{\tzeta}[w_i]^2 \geq \frac{1}{n} \Paren{\sum_{i \leq n} \pE_{\tzeta}[w_i]}^2=\frac{n}{k^2}$ where we used that $\cA \sststile{}{} \Set{\frac{k}{n} \sum_{i = 1}^n w_i = 1}$. On the other hand,  we exhibited a feasible pseudo-distribution $\zeta$ above with $\Norm{\pE_{\zeta}[w]}_2^2 =\frac{n}{k^2}$. Together, we obtain that the output $\tzeta$ obtained by solving the  SDP relaxation must satisfy $\norm{\pE_{\tzeta}[w]}_2^2 = \frac{n}{k^2}$. Observe that this is equivalent to $\pE_{\tzeta}[w_i] = 1/k$ for every $i \leq n$. Thus, we can assume in the following that $\pE_{\tzeta}[w_i] = 1/k$ for all $i$. Our analysis is similar to the proofs of Lemmas 4.21 and Lemma 4.23 in~\cite{TCS-086}.

Let $M = \pE[ww^{\top}]$. Let's understand the entries of $M$ more carefully. First, since $\pE[w_i w_j] = \pE[w_i^2 w_j^2] \geq 0$, $M(i,j)$ is non-negative. The diagonals of $M$ are $\pE[w_i^2] = \pE[w_i] = 1/k$. By the Cauchy-Schwarz inequality for pseudo-distributions (Fact~\ref{fact:pseudo-expectation-cauchy-schwarz}), $M(i,j) = \pE[w_i w_j] \leq \sqrt{\pE[w_i^2]} \sqrt{\pE[w_j^2]} \leq 1/k$. Thus, the entries of $M$ are between $0$ and $1/k$. Next, observe that since $\cA \sststile{}{} \Set{w_i\frac{k}{n}\sum_{j \leq n} w_j= w_i}$, taking pseudo-expectations and rearranging yields that for every $i$, $\E_{j \sim [n]} M(i,j)= \frac{1}{k^2}$. 

For $\eta' = \eta^2/k^3$, choose $\delta = {\eta'}^3/k^{3}$. Then, applying Lemma~\ref{lem:intersection-bounds-from-separation} and using Fact~\ref{fact:sos-completeness}, we have that for every $r$, $\E_{i \in C_r} \E_{j \not \in C_{r'}} M(i,j) =\sum_{r'\neq r} \E_{i \in C_r} \E_{j \in C_{r'}} \pE[w_i w_j] = \pE[w(C_r)w(C_{r'})] \leq O(\eta')$. 

Fix any cluster $C_r$. Call an entry of $M$ large if it exceeds $\eta/k^2$. Using the above estimates, we obtain that, the fraction of entries in the $i$th row that exceed $\eta/k^2$ is at least $(1-\eta)/k$. 

On the other hand, by Markov's inequality applied to the calculation above, we obtain that with probability $1-1/k^2$ over the uniformly random choice of $i \in C_r$, $\E_{ j \not \in C_{r}} M(i,j) \leq O(\eta') = O(\eta^2/k^3)$. Call an $i \in C_r$ for which this condition holds ``good''. 

By Markov's inequality, for each good row, the fraction of $j \not \in C_r$ such that $M(i,j) \geq \eta/k^2$ is at most $\eta/k$. Thus, for any good row in $C_r$, if we take the indices $j$ corresponding to the largest $n/k$ entries $(i,j)$ in $M$, then, at most $\eta$ fraction of such $j$ are not in $C_r$. Thus, picking uniformly random row in $C_r$ and taking the largest $n/k$ entries in that row gives a subset that intersects with $C_r$ in $(1-\eta)$ fraction of the points. 

Thus, each iteration of our rounding algorithm succeeds with probability at least $1-1/k^2$. By union bound, all iterations succeed with probability at least $1-1/k$.

\end{proof}

\paragraph{Proving Lemma~\ref{lem:intersection-bounds-from-separation}}
In what follows, we focus attention on proving Lemma~\ref{lem:intersection-bounds-from-separation}. Before describing the analysis, we set some notation/shorthand and simplifying assumptions that we will use throughout this section.
\begin{enumerate}
\item First, Lemma~\ref{lem:typical-samples-good} guarantees us that $C_r$ has mean and Covariance close to the true $\mu(r),\Sigma(r)$. We abuse the notation a little bit and use $\mu(r),\Sigma(r)$ to denote the mean and covariance of $C_r$ too. This allows us the luxury of dropping an extra piece of notation and doesn't change the guarantees we obtain.

\item In the following, we will use $\cD_r = \cD(\mu(r),\Sigma(r))$ to denote the uniform distribution on $C_r$. We will use $\cD_w$ to informally (in the context of non low-degree SoS reasoning) refer to the uniform distribution on the subset indicated by $w$.
\end{enumerate}

Depending on whether $C_r,C_{r'}$ are mean separated, spectrally separated or separated in relative Frobenius distance, our proof of Lemma~\ref{lem:intersection-bounds-from-separation} breaks into three natural cases. The key part of the analysis is dealing with the case of spectral separation which then plugs into the other two cases. So we begin with it.






\subsection{Intersection Bounds from Spectral Separation}
In this subsection, we give a sum-of-squares proof of an upper bound on $w(C_r)w(C_{r'})$ whenever $\cD_r, \cD_{r'}$ are samples chosen from \emph{spectrally} separated distributions. Note that we do not have any control of the means of $\cD_r$, $\cD_{r'}$ in this subsection and our arguments must work regardless of the means (or their separation, whether large or small) of  $\cD_r,\cD_{r'}$. 

Formally, we will prove the following upper bound on $w(C_r)w(C_{r'})$ where the degree of the sum-of-squares proof grows logarithmically in the spread $\kappa$ of the mixture.

\begin{lemma}[Intersection Bounds from Spectral Separation] \label{lem:intersection-bounds-from-spectral-separation}
Let $X = C_1 \cup C_2 \cup \ldots C_r$ be a good sample of size $n$. Suppose there exists a vector $v$ such that $\Delta_{\spe} v^{\top} \Sigma(r) v \leq v^{\top} \Sigma(r') v$ for $\Delta_{\spe} \gg Cs/\delta^2$. Then, $\cA \sststile{O(s\log (2B)}{w} \Set{w(C_r) w(C_{r'}) \leq O(\sqrt{\delta})}$ where $B = \max_{i \leq k} \frac{v^{\top}\Sigma(i)v}{v^{\top} \Sigma(r') v} \leq \kappa$.
\end{lemma}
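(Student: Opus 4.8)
The plan is to run the whole argument through a single degree-$2$ test functional, $x\mapsto\iprod{x,v}^2$ for the separating direction $v$, and to show inside the SoS system that its (pseudo-)variance over the purported cluster $\hat C$ encoded by $w$ is squeezed between two incompatible bounds. Write $\Sigma$ for the covariance indeterminate pinned down by $\cA_3$, so that (up to a fixed constant and lower-order centering terms) $v^\top\Sigma v$ is the variance of $\iprod{z,v}$ for $z\sim\hat C$. I will establish: (i) a \emph{lower bound} $v^\top\Sigma v \geq \Omega(1)\cdot\bigl(w(C_r)w(C_{r'}) - O(\delta)\bigr)\,\delta^2\, v^\top\bigl(\Sigma(r)+\Sigma(r')\bigr)v$, valid whenever $w(C_r)w(C_{r'})$ is non-negligible; and (ii) an \emph{upper bound} $v^\top\Sigma v \leq O(\poly(s))\, v^\top\Sigma(r)v$, valid whenever $w(C_r)$ is bounded away from $0$. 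Since $w(C_r)w(C_{r'})$ large forces $w(C_r)$ large ($w(C_{r'})\leq 1$), both apply at once; feeding the hypothesis $\Delta_{\spe}\, v^\top\Sigma(r)v \leq v^\top\Sigma(r')v$ together with $\Delta_{\spe}\gg Cs/\delta^2$ into (i) and (ii) then forces $w(C_r)w(C_{r'})\leq O(\sqrt\delta)$, the $\delta\to\sqrt\delta$ loss absorbing the slack introduced by the SoS manipulations.

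For the lower bound I work with the symmetrized variance $\frac{k^2}{n^2}\sum_{i,j}w_iw_j\iprod{x_i-x_j,v}^2$, which $\cA_3$ identifies with $v^\top\Sigma v$ up to a constant, and keep only the SoS-nonnegative chunk over index pairs $i\in C_r$, $j\in C_{r'}$; this chunk carries total weight $\frac{k^2}{n^2}\sum_{i\in C_r,j\in C_{r'}}w_iw_j=w(C_r)w(C_{r'})$, and on it $x_i-x_j$ is a difference of a point of $C_r$ and a point of $C_{r'}$. The ``good sample'' hypothesis supplies what is needed: the certifiable anti-concentration of the $4$-wise convolution of $C_r$ and $C_{r'}$ (the second displayed anti-concentration inequality in the definition of a good sample) together with the core-indicator SoS identity $\iprod{x,v}^{2s}+\delta^{2s}q_{\delta,\Sigma}(x,v)^2\geq\delta^{2s}(v^\top\Sigma v)^{2s}$ of Definition~\ref{def:certifiable-anti-concentration} lower-bounds the $2s$-th power sum of $\iprod{x_i-x_j,v}$ over that chunk; descending from $2s$-th powers back to squares --- which requires an upper bound on higher moments and hence uses the certifiable subgaussianity of $\hat C$ that follows from $\cA_5$ --- yields the claimed bound on $v^\top\Sigma v$. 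The mismatch between the chunk weight $w(C_r)w(C_{r'})$ and $1$ is what produces the $O(\delta)$ additive slack.

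The upper bound is powered by the certifiable anti-concentration of $\hat C$ itself, which is exactly what $\cA_4$ encodes. Since $\hat C$ places ``mass'' $w(C_r)$ on points of $C_r$, whose contribution to $v^\top\Sigma v$ is controlled by $v^\top\Sigma(r)v$ via the good-sample covariance bound, anti-concentration of $\hat C$ forces the whole variance $v^\top\Sigma v$ to be comparable to $v^\top\Sigma(r)v$ --- but only once $w(C_r)$ exceeds roughly $\poly(\delta)$. Turning ``$w(C_r)$ is large'' into an honest SoS deduction is the crux, since one cannot case-split on the numerical value of the polynomial $w(C_r)$. Instead I would (a) first prove a crude a priori bound $v^\top\Sigma v\leq O(1)\cdot\max_{i\leq k}v^\top\Sigma(i)v$, again from anti-concentration of $\hat C$, and (b) multiply the target inequality by a low-degree polynomial that uniformly approximates the $0/1$ threshold $\1[\,w(C_r)\geq\theta\,]$ over $[0,1]$, routing the complementary regime $w(C_r)<\theta$ through the crude bound (and $w(C_{r'})\leq 1$). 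Peeling the threshold at geometrically decreasing scales $\theta$ down to $\sim\delta$ costs $O(\log B)$ levels, where $B=\max_i v^\top\Sigma(i)v/v^\top\Sigma(r')v\leq\kappa$; this is the origin of the $O(s\log(2B))$ degree. For $k=2$ the crude bound involves only $\Sigma(r)$ and $\Sigma(r')$, so $B=O(1)$ and the logarithmic factor vanishes, which is what yields Lemma~\ref{lem:intersection-bounds-from-separation-two-components}.

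The main obstacle is this last SoS-ification step. The ``real-world'' argument is transparent --- if $w(C_r)\leq\delta$ then $w(C_r)w(C_{r'})\leq\delta$ and we are done, and otherwise we run the variance dichotomy --- but a naive SoS encoding of the case split inflates the degree polynomially in $\kappa$. Trading that for a merely logarithmic dependence is what forces the uniform threshold approximators on $[0,1]$ and a careful accounting of how their approximation error interacts with the crude covariance bound and with the $O(\delta)$ anti-concentration slack from the lower bound. Everything else --- the precise lower and upper bounds and the closing arithmetic against the spectral-separation hypothesis --- is, modulo routine SoS manipulations (SoS Hölder, the SoS almost-triangle inequality, and the operator-norm fact), a matter of assembling the pieces above.
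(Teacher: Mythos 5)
Your high-level plan matches the paper's: lower-bound $v^\top\Sigma v$ via anti-concentration of the $C_r$--$C_{r'}$ convolution (Lemma~\ref{lem:spectral-lower-bound-intersection-clustering}), upper-bound it via anti-concentration of $\hat{C}$ itself (Lemma~\ref{lem:anti-conc-upper-bound-w}), and handle the ``either $w(C_r)$ is small or the upper bound kicks in'' case-split by multiplying through by the threshold approximator of Lemma~\ref{lem:poly-approximate-threshold}, routing the small-$w(C_r)$ branch through a crude covariance bound (Lemma~\ref{lem:rough-bound-spectral-upper}). That is exactly the paper's architecture, down to the observation that for $k=2$ the crude bound involves only $\Sigma(r),\Sigma(r')$ so that $B=O(1)$.

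The gap is in your accounting of where the $O(s\log(2B))$ degree comes from. You attribute it to ``peeling the threshold at geometrically decreasing scales $\theta$ down to $\sim\delta$,'' costing ``$O(\log B)$ levels.'' Two problems. First, a geometric ladder from $\Theta(1)$ down to $\delta$ would take $O(\log(1/\delta))$ steps, not $O(\log B)$, so your accounting is internally inconsistent. Second --- and this is the substantive point --- there is no ladder. The paper uses a \emph{single} threshold polynomial $\cJ$, with transition scale $c\approx\delta$ and error parameter $\rho$; Lemma~\ref{lem:poly-approximate-threshold} gives $\deg(\cJ)=O(\log(1/\rho)/c)$. In the small-$w(C_r)$ branch the residual error $\delta\rho$ gets multiplied against the crude bound, which is of order $B^s\,(v^\top\Sigma(r')v)^s$ (Lemma~\ref{lem:upper-bound-variance-w-samples-sharper}); for this error term not to swamp the rest of the inequality one must take $\rho$ of order $s^{-O(s)}B^{-s}$, whence $\deg(\cJ)=O(s\log B/\delta)$. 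That precision requirement --- not a multi-scale iteration --- is the source of the $\log B$ factor, and it is precisely what disappears when $k=2$, since then $B=O(1)$ and a $\rho$ of order $\delta$ already suffices. If you try to instantiate the peeling scheme you describe, you will find the crude-bound error is a fixed $B^s$ at every putative level, and nothing in the problem supplies a sequence of scales to descend through.
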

Observe that for $k = 2$, $B=1$ and thus, the lemma above results in a bound of $O(s/\delta^2)$ on the degree of the SoS proof.  As we discussed in Section~\ref{sec:overview}, the proofs of both the statements above follow by using anti-concentration of $\cD_r$ and $\cD_{r'}$ to first show a lower-bound on the variance of $\Sigma(w)$ in terms of the $v^{\top} \Sigma(r) v$ and $v^{\top} \Sigma(r')v$ and then combine it with an upper bound on $v^{\top} \Sigma(w) v$ using anti-concentration of $\cD_w$.

\Pnote{expand on the difficulty etc. here - this discussion should later be used for the of the billion overviews we are writing.}

\begin{lemma}[Large Intersection Implies High Variance, Spectral Separation]
\label{lem:spectral-lower-bound-intersection-clustering}
\begin{equation}
\cA \sststile{4s}{ } \Biggl\{w(C_{r'}) w(C_r) \Paren{v^{\top} \Paren{\Sigma(r) + \Sigma(r')} v}^{s} \leq \Paren{\frac{2}{\delta^{2}}}^s \Paren{v^{\top} \Sigma(w) v}^s + C \delta \Paren{v^{\top} \Paren{\Sigma(r) + \Sigma(r')} v}^s \Biggr\}
\end{equation}
\end{lemma}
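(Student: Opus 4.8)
The statement I need to prove is
\[
\cA \sststile{4s}{} \Biggl\{ w(C_{r'}) w(C_r) \Paren{v^{\top}\Paren{\Sigma(r)+\Sigma(r')}v}^{s} \leq \Paren{\tfrac{2}{\delta^2}}^s \Paren{v^{\top}\Sigma(w)v}^s + C\delta \Paren{v^{\top}\Paren{\Sigma(r)+\Sigma(r')}v}^s \Biggr\}.
\]
The intuition, as laid out in the overview, is that a $w(C_r)w(C_{r'})$-fraction of the pairs $(x_i,x_j)$ with $w_i w_j = 1$ have $x_i \in C_r$ and $x_j \in C_{r'}$, i.e.\ $x_i - x_j$ is distributed as a difference of an independent sample from $C_r$ and one from $C_{r'}$. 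That convolution has covariance $\Sigma(r)+\Sigma(r')$ and — crucially — is certifiably anti-concentrated (this is exactly the $4$-wise convolution anti-concentration hypothesis packaged into the "good sample" definition, second displayed anti-concentration inequality). So for those pairs $\iprod{x_i - x_j, v}^2$ is rarely much smaller than $\delta^2 v^\top(\Sigma(r)+\Sigma(r'))v$, which forces $v^\top\Sigma(w)v = \tfrac12 \tfrac{k^2}{n^2}\sum w_i w_j \iprod{x_i-x_j,v}^2$ to be large.

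**Key steps, in order.**
First I would record the degree-$2$ SoS identity, provable from $\cA_2 \cup \cA_3$, that
\[
\tfrac{k^2}{n^2}\textstyle\sum_{i,j} w_i w_j \iprod{x_i - x_j, v}^2 = 2\, v^\top \Sigma(w) v,
\]
where $\Sigma(w)$ is the indeterminate covariance; this is the object whose $s$-th power appears on the right. Next, the heart of the argument: I want an SoS proof that the "bad" pairs (those with small $\iprod{x_i-x_j,v}^2$) carry little mass. Using the first anti-concentration axiom in Definition~\ref{def:certifiable-anti-concentration-homogenous} applied to the shifted points $x_i - x_j$ with covariance proxy $2\Sigma$ (or $\Sigma(r)+\Sigma(r')$), one has the pointwise SoS inequality
\[
\iprod{x_i-x_j,v}^{2s} + \delta^{2s} q_{\delta,\cdot}(x_i-x_j,v)^2 \geq \delta^{2s}\bigl(v^\top(\Sigma(r)+\Sigma(r'))v\bigr)^{2s}
\]
(up to the homogenization power-of-$2s$ bookkeeping). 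Multiplying by $w_i w_j \geq 0$ (legal since $w_i^2 = w_i$), averaging over $i \in C_r$, $j \in C_{r'}$ — which is how $w(C_r)w(C_{r'})$ enters as a coefficient — and then using the $\cA_4$ constraint to bound $\tfrac{k^2}{n^2}\sum w_i w_j q_{\delta,2\Sigma}^2 \leq 2^{s}C\delta(v^\top \Sigma v)^{s}$ from above, I get a lower bound of the form $w(C_r)w(C_{r'})\bigl(v^\top(\Sigma(r)+\Sigma(r'))v\bigr)^{s} \lesssim \bigl(\tfrac{1}{\delta^2}\bigr)^{s}\cdot \tfrac{k^2}{n^2}\sum w_i w_j \iprod{x_i-x_j,v}^{2s} + C\delta(\cdots)^s$. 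Then I would convert $\tfrac{k^2}{n^2}\sum w_iw_j\iprod{x_i-x_j,v}^{2s}$ into $(v^\top\Sigma(w)v)^s$ via SoS Hölder (Fact~\ref{fact:sos-holder}) / power-mean over the pairs $(i,j)$: since $\tfrac{k^2}{n^2}\sum w_iw_j \cdot (\text{something})^s \le (\tfrac{k^2}{n^2}\sum w_iw_j \cdot \text{something})^s$ only goes the wrong way, I instead apply Hölder to pass from $\E (\cdot)^{2s}$ on the anti-concentration side to $(\E(\cdot)^2)^s$, matching the $s$-th power of $2v^\top\Sigma(w)v$. Careful tracking of the $2^s$ and $\delta$ factors gives the constant $(2/\delta^2)^s$ as stated. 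Finally I need to reconcile the covariance proxy: the axiom is stated with $2\Sigma$ while the conclusion uses $\Sigma(r)+\Sigma(r')$; here I use that on the clean clusters $\hat\Sigma(r) \approx \Sigma(r)$ (good-sample property 2) and, more importantly, that $w_iw_j \Sigma(r) \preceq w_iw_j\Sigma(w) + (\text{error})$ only when the mass is concentrated — so I think the cleanest route is to phrase the anti-concentration step directly using the $\hat\Sigma(r)+\hat\Sigma(r')$ convolution axiom from the good-sample definition (the second displayed inequality there, with $i_1,i_2\in C_r$, $j_1,j_2\in C_{r'}$), which already bakes in the right covariance, and absorb the $(1\pm 2^{-2s})$ slack into the constants.

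**Main obstacle.** The routine parts are the identities and the axiom invocations; the delicate part is the SoS bookkeeping that turns a statement about $\iprod{x_i-x_j,v}^{2s}$ (degree $2s$ in $x$, but the axioms are written homogeneously in $v$ of degree $2s$) into one about $(v^\top\Sigma(w)v)^s$ while keeping the SoS degree at $4s$ and not blowing up the constant beyond $(2/\delta^2)^s$. In particular, Hölder's inequality in SoS (Fact~\ref{fact:sos-holder}) costs a squaring of the degree, so I must apply it to the pair-index average at degree $2$ (i.e.\ $p=2$, the Cauchy--Schwarz case) rather than at degree $2s$, which forces me to first peel off the power via the pointwise axiom and only then average — the order of "multiply by $w_iw_j$, average, apply Cauchy--Schwarz" versus "apply Cauchy--Schwarz, multiply, average" is what makes or breaks the degree count. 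I expect getting that ordering right, and confirming the $q$-polynomial variance bound in $\cA_4$ is stated at exactly the right covariance normalization to be plugged in, to be where the real work is.
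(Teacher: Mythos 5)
Your overall plan — use anti-concentration of the inter-cluster convolution to lower-bound $v^\top \Sigma(w) v$, invoke the good-sample axioms for the $q^2$ variance bound, and then pull the whole thing under the $w$-averages — is the right skeleton and matches the paper. However, there are two places where your proposal would not actually compile into an SoS proof as described.

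First, the tool you want to use to close the argument is wrong. You write that after applying the pointwise anti-concentration inequality and averaging, you will ``apply Hölder to pass from $\E(\cdot)^{2s}$ on the anti-concentration side to $(\E(\cdot)^2)^s$.'' But Hölder / power-mean (Fact~\ref{fact:sos-holder}) only gives $(\E f)^{2s} \leq \E f^{2s}$, i.e.\ \emph{lower}-bounds $\E f^{2s}$ by $(\E f^2)^s$; it cannot upper-bound $\E f^{2s}$ by a multiple of $(\E f^2)^s$. The inequality $\E f^{2s} \leq (Cs)^s(\E f^2)^s$ is a moment \emph{upper} bound — it is exactly certifiable subgaussianity of linear projections of the $w$-weighted sample, which is what the hypercontractivity constraints $\cA_5$ supply. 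The paper's proof of Lemma~\ref{lem:spectral-lower-bound-intersection-clustering} invokes this in the last chain: it bounds $\frac{k^4}{n^4}\sum_{\text{all indices}} w_{i_1}w_{i_2}w_{j_1}w_{j_2}\iprod{x_{i_1}-x_{i_2}-x_{j_1}+x_{j_2},v}^{2s}$ from above by $(Cs/\delta^2)^s\Paren{v^\top\Sigma(w)v}^s$ using the subgaussian moment bound on $w$-samples, then restricts the sum to $i_1,i_2\in C_r,\ j_1,j_2\in C_{r'}$ and drops to the anti-concentration chain. Without the $\cA_5$-type moment bound, the argument has no way to convert the $2s$-th power sum into $(v^\top\Sigma(w)v)^s$, and Hölder by itself runs in the wrong direction.

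Second, you begin by setting up the convolution as the pairwise difference $x_i - x_j$ with $i\in C_r$, $j\in C_{r'}$. That random variable has mean $\mu(r)-\mu(r')\ne 0$ in general, whereas the anti-concentration axioms (Definition~\ref{def:certifiable-anti-concentration-homogenous} and the corresponding good-sample properties) are stated for \emph{mean-zero} variables. This is why the paper sums over four indices $i_1,i_2\in C_r$, $j_1,j_2\in C_{r'}$ and uses $x_{i_1}-x_{i_2}-x_{j_1}+x_{j_2}$: the means cancel and the covariance becomes $2(\Sigma(r)+\Sigma(r'))$, matching the form of the convolution anti-concentration hypothesis. You do gesture at switching to the four-index axiom at the end as the ``cleanest route'' to fix the covariance normalization, but the deeper reason for four indices is the mean cancellation, and the rest of your derivation (Hölder bookkeeping at degree $2$ over pairs $(i,j)$, degree count $4s$) would need to be redone once you move to four-index sums. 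As it stands the proposal contains two mutually inconsistent setups — pairwise for most of the argument, four-index at the very end — and neither is carried through to the target inequality.

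A correct execution therefore needs: (i) use the four-index centered differences $x_{i_1}-x_{i_2}-x_{j_1}+x_{j_2}$ from the start; (ii) apply the pointwise anti-concentration axiom, multiply by the (SoS-nonnegative) product $w_{i_1}w_{i_2}w_{j_1}w_{j_2}$, and sum over the four cluster indices to produce the $w(C_r)^2 w(C_{r'})^2$ factor; (iii) bound the resulting $q^2$ term using the good-sample variance bound; and (iv) replace the $2s$-th power sum of centered differences over all of $[n]^4$ by $\Paren{Cs/\delta^2}^s\Paren{v^\top\Sigma(w)v}^s$ via the certifiable subgaussian moment upper bound from $\cA_5$, not Hölder.
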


\begin{proof}


We know from Lemma~\ref{lem:typical-samples-good} that two-sample-centered points from both  $C_r$ and $C_{r'}$ are $2s$-certifiably $(\delta,C\delta)$-anti-concentrated. Using Definition~\ref{def:certifiable-anti-concentration}, thus yields:

\begin{multline}
\cA \sststile{4s}{} \Biggl\{ \frac{k^4}{n^4} \sum_{i_1,i_2 \in C_r, j_1,j_2 \in C_{r'}} w_{i_1} w_{i_2} w_{j_1} w_{j_2} \Iprod{x_{i_1}-x_{i_2} -x_{j_1}+x_{j_2},v}^{2s}  \\ \geq \delta^{2s} w(C_r)^2w(C_{r'})^2 \Paren{v^{\top} 2(\Sigma(r) + \Sigma(r')) v^{\top}}^s\\ -  \delta^{2s}\frac{k^4}{n^4} \sum_{i_1, i_2 \in C_r, j_1,j_2 \in C_{r'
}} w_{i_1} w_{i_2} w_{j_1} w_{j_2} q_{\delta,2(\Sigma(r)+\Sigma(r'))}^2 (x_{i_1}-x_{i_2} -x_{j_1}+x_{j_2}, v) \Biggr\} \label{eq:anti-conc-app-clusters}
\end{multline}

Using that $\cA \sststile{}{} \Set{w_{i_1} w_{i_2} w_{j_1} w_{j_2} \leq 1}$ for every $i_1,i_2,j_1,j_2$ and using $2s$-certifiable $(\delta,C\delta)$-anti-concentration of $x_{i_1}-x_{i_2} -x_{j_1}+x_{j_2}$ and invoking Definition~\ref{def:certifiable-anti-concentration}, we have:

\begin{multline}
\cA \sststile{4s}{} \Biggl\{\frac{k^4}{n^4} \sum_{i_1, i_2 \in C_r, j_1,j_2 \in C_{r'
}}  w_{i_1} w_{i_2} w_{j_1} w_{j_2} q_{\delta,2(\Sigma(r)+\Sigma(r'))}^2 (x_{i_1}-x_{i_2} -x_{j_1}+x_{j_2}, v)\\ \leq \frac{k^4}{n^4} \sum_{i_1, i_2 \in C_r, j_1, j_2 \in C_{r'
}} q_{\delta,2(\Sigma(r)+\Sigma(r'))}^2 (x_{i_1}-x_{i_2} -x_{j_1}+x_{j_2}, v) \leq  C\delta \Paren{v^{\top} 2(\Sigma(r) + \Sigma(r')) v}^s \Biggr\}
\end{multline}
Plugging in the above bound in \eqref{eq:anti-conc-app-clusters} gives:
\begin{multline}
\cA \sststile{4s}{} \Biggl\{\frac{k^4}{n^4} \sum_{i_1, i_2 \in C_r, j_1, j_2 \in C_{r'
}} w_{i_1} w_{i_2} w_{j_1} w_{j_2} \Iprod{x_{i_1}-x_{i_2} -x_{j_1}+x_{j_2}, v}^{2s} \\ \geq \delta^{2s} \Paren{w(C_r)^2w(C_{r'})^2 - C\delta} \Paren{v^{\top} 2(\Sigma(r)+ \Sigma(r')) v^{\top}}^s \Biggr\}
\end{multline}
Rearranging thus yields:
\begin{multline}
\cA \sststile{4s}{} \Biggl\{\frac{1}{\delta^{2s}} \frac{k^4}{n^4} \sum_{i_1, i_2 \in C_r, j_1, j_2 \in C_{r'
}} w_{i_1} w_{i_2} w_{j_1} w_{j_2} \Iprod{x_{i_1}-x_{i_2} -x_{j_1}+x_{j_2}, v}^{2s}  + C\delta \Paren{v^{\top} 2(\Sigma(r) + \Sigma(r')) v^{\top}}^s\\ \geq w(C_r)^2w(C_{r'})^2 \Paren{v^{\top} 2(\Sigma(r)+\Sigma(r')) v^{\top}}^s \Biggr\} \label{eq:spectral-lower-bound-1}
\end{multline}
To finish the proof, we note that:
\begin{multline}
\cA \sststile{4s}{} \Biggl\{ \Paren{\frac{4cs}{\delta^{2}}}^s \Paren{v^{\top} \Sigma(w) v}^s \geq \frac{1}{\delta^{2s}} \frac{k^4}{n^4} \sum_{i_1, i_2, j_1, j_2 \in [n]} w_{i_1} w_{i_2} w_{j_1} w_{j_2} \Iprod{x_{i_1}-x_{i_2} -x_{j_1}+x_{j_2}, v}^{2s}\\  \geq \frac{1}{\delta^{2s}} \frac{k^4}{n^4} \sum_{i_1, i_2 \in C_r, j_1, j_2 \in C_{r'
}} w_{i_1} w_{i_2} w_{j_1} w_{j_2} \Iprod{x_{i_1}-x_{i_2} -x_{j_1}+x_{j_2}, v}^{2s}   \Biggr\}
\end{multline}

Plugging in the upper bound above in \eqref{eq:spectral-lower-bound-1} and canceling out a copy of $2^s$ from both sides gives the lemma.

\end{proof}

Moving forward with our proof plan, we can clearly complete the proof by giving an \emph{upper} bound on $\Paren{v^{\top} \Sigma(w) v}$ that scales as the variance of the \emph{smaller} variance component (i.e. $r$ above). We make this happen by invoking certifiable anti-concentration again - this time, however, applying it to the $w$-samples instead of $C_r$ and $C_{r'}$.

\begin{lemma}[Spectral Upper Bound via Anti-Concentration] \label{lem:anti-conc-upper-bound-w}
\begin{equation}
\cA \sststile{4s}{} \Biggl\{ \Paren{w(C_r)^2 - C\delta} \Paren{v^{\top} \Sigma(w) v^{\top}}^s \leq \Paren{\frac{Cs}{\delta^{2}}}^s \Paren{v^{\top} \Sigma(r) v}^s \Biggr\}
\end{equation}

\end{lemma}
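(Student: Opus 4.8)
The plan is to sandwich, from two sides, the single quantity
\[
T \;=\; \frac{k^2}{n^2}\sum_{i,j\in C_r} w_i w_j \,\iprod{x_i-x_j,v}^{2s},
\]
i.e.\ the $2s$-th directional moment, in direction $v$, of those two-sample differences of the purported cluster $\hat C$ (indicated by $w$) that happen to land inside the genuine cluster $C_r$. On one side, certifiable anti-concentration of $\hat C$ --- enforced by the first constraint of $\cA_4$ --- forces most such differences to have $\iprod{\cdot,v}^{2s}\gtrsim \delta^{2s}(v^\top\Sigma(w)v)^s$, and they are a $\gtrsim w(C_r)^2$ fraction of all pairs, so $T\gtrsim \delta^{2s}(2v^\top\Sigma(w)v)^s(w(C_r)^2-C\delta)$. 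On the other side, $x_i-x_j$ for $i,j\in C_r$ is two-sample centered \emph{within} the genuine cluster, and the good-sample hypothesis makes the uniform distribution on $C_r$ certifiably hypercontractive, hence certifiably subgaussian, so $T\leq (Cs)^s(2v^\top\Sigma(r)v)^s$. Chaining the two bounds and dividing through by the positive scalar $2^s\delta^{2s}$ yields the lemma. Note that spectral separation is never used here: the statement holds for an arbitrary direction $v$.

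\textbf{Step 1 (lower bound on $T$).} First I would invoke the certifiable anti-concentration SoS identity of Definition~\ref{def:certifiable-anti-concentration} (the first inequality, in the form used in the proof of Lemma~\ref{lem:spectral-lower-bound-intersection-clustering}), taken with covariance matrix $2\Sigma$, where $\Sigma=\Sigma(w)$ is the covariance indeterminate of $\cA_3$; its positive-semidefiniteness and square root are handed to us by $\cA_1$ via $\Sigma=\Pi^2$, $\Pi=UU^\top$. Substituting $x\mapsto x_i-x_j$ for each $i,j\in C_r$ gives
\[
\cA \;\sststile{4s}{}\; \Set{ \iprod{x_i-x_j,v}^{2s}+\delta^{2s}q_{\delta,2\Sigma}(x_i-x_j,v)^2 \;\geq\; \delta^{2s}\Paren{2\,v^\top\Sigma v}^{s} }.
\]
Multiplying by $w_iw_j$ (which is a square modulo $\cA_2$, since $w_i^2=w_i$ gives $1-w_iw_j=(1-w_i)^2+(w_i(1-w_j))^2$ and hence $w_iw_j=(w_iw_j)^2$), averaging over $i,j\in C_r$, and using $\frac{k^2}{n^2}\sum_{i,j\in C_r}w_iw_j=w(C_r)^2$, bounds $T$ below by $\delta^{2s}(2v^\top\Sigma v)^sw(C_r)^2$ minus the error term $\delta^{2s}\cdot\frac{k^2}{n^2}\sum_{i,j\in C_r}w_iw_j q_{\delta,2\Sigma}(x_i-x_j,v)^2$. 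To control that error I would enlarge the index set to all $i,j\in[n]$ (each omitted summand $w_iw_j q_{\delta,2\Sigma}^2$ is SoS modulo $\cA_2$) and apply the first constraint of $\cA_4$, which reads exactly $\frac{k^2}{n^2}\sum_{i,j}w_iw_j q_{\delta,2\Sigma}^2(x_i-x_j,v)\leq 2^{s}C\delta(v^\top\Sigma v)^s$. Altogether,
\[
\cA \;\sststile{4s}{}\; \Set{ T \;\geq\; \delta^{2s}\Paren{2\,v^\top\Sigma v}^{s}\Paren{w(C_r)^2-C\delta} }.
\]

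\textbf{Step 2 (upper bound on $T$) and conclusion.} The good-sample hypothesis endows the two-sample-centered points of $C_r$ (covariance $2\hat\Sigma(r)$, with $\hat\Sigma(r)$ within a $(1\pm 2^{-2s})$ factor of $\Sigma(r)$) with certifiable hypercontractivity of degree-$2$ polynomials, and certifiable hypercontractivity of degree-$2$ polynomials implies certifiable $C$-subgaussianity of linear forms (see Definition~\ref{def:nice_distribution}, using Fact~\ref{fact:sos-holder} for the moment comparison if needed); therefore
\[
\sststile{2s}{v}\; \Set{ \frac{k^2}{n^2}\sum_{i,j\in C_r}\iprod{x_i-x_j,v}^{2s} \;\leq\; (Cs)^{s}\Paren{2\,v^\top\Sigma(r)v}^{s} }.
\]
Since $(1-w_iw_j)\iprod{x_i-x_j,v}^{2s}=\big((1-w_i)\iprod{x_i-x_j,v}^{s}\big)^2+\big(w_i(1-w_j)\iprod{x_i-x_j,v}^{s}\big)^2$ is SoS modulo $\cA_2$, this upper bound descends to the $w$-weighted sum $T$. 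Chaining the two displayed bounds on $T$ by transitivity and dividing by the positive constant $2^s\delta^{2s}$,
\[
\cA \;\sststile{4s}{}\; \Set{ \Paren{w(C_r)^2-C\delta}\Paren{v^\top\Sigma(w)v}^{s} \;\leq\; \Paren{\tfrac{Cs}{\delta^{2}}}^{s}\Paren{v^\top\Sigma(r)v}^{s} }
\]
after renaming the constant $C$, which is precisely the claim.

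\textbf{Where the work is.} There is no deep ingredient --- this is the ``easy'' direction complementary to Lemma~\ref{lem:spectral-lower-bound-intersection-clustering}. The care needed is all SoS-bookkeeping: (i) the anti-concentration identity must be used with the \emph{indeterminate} covariance $\Sigma(w)$ rather than a fixed matrix, which is legitimate because that inequality is a polynomial identity in the entries of $\Sigma$ provable from $\Sigma\succeq 0$, and $\Sigma\succeq0$ is supplied by the square-root gadget $\cA_1$; (ii) every ``drop the cross terms / enlarge the index set'' move is justified by the discarded summands being sums of $w_iw_j$ times even powers, hence SoS modulo the Boolean constraints $\cA_2$; and (iii) $w(C_r)^2-C\delta$ may be negative, but this never causes trouble since the proof only ever divides by positive scalars. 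The genuinely delicate step --- fusing this upper bound with the lower bound of Lemma~\ref{lem:spectral-lower-bound-intersection-clustering} inside a \emph{low-degree} SoS proof, which is what forces the eventual $\log\kappa$ dependence --- is handled separately and is not part of this lemma.
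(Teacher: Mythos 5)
Your proposal is correct and takes essentially the same route as the paper's own proof: lower-bound the $w$-weighted $2s$-th directional moment $T$ of pairs in $C_r$ via certifiable anti-concentration of the purported cluster (extending the error sum to all of $[n]$ to invoke $\cA_4$), then upper-bound $T$ via subgaussianity of the genuine cluster $C_r$ from the good-sample hypothesis, and chain. Your SoS bookkeeping is in fact slightly more explicit than the paper's — e.g.\ writing out $1-w_iw_j=(1-w_i)^2+\big(w_i(1-w_j)\big)^2$ to justify dropping summands, and keeping the covariance normalization $q_{\delta,2\Sigma}$ with argument $x_i-x_j$ consistent with the constraint system $\cA_4$ (the paper's proof text switches to $q_{\delta,\Sigma(w)}$ with $\tfrac{1}{\sqrt 2}(x_i-x_j)$, a cosmetic mismatch) — but the argument is the same.
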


\begin{proof}

Our constraint system $\cA$ allows us to derive that two-sample-centered points indicated by $w$ are $2s$-certifiably $(\delta,C\delta)$-anti-concentrated with witnessing polynomial $p_{\cD}$. Using Definition~\ref{def:certifiable-anti-concentration}, thus yields:

\begin{multline}
\cA \sststile{4s}{} \Biggl\{   \delta^{2s} w(C_r)^2 \Paren{v^{\top} \Sigma(w) v^{\top}}^s \\\leq \frac{k^2}{n^2} \sum_{i,j \in C_r} w_i w_j \Iprod{ \frac{1}{\sqrt{2}} \Paren{x_i - x_j}, v}^{2s} +\delta^{2s}\frac{k^2}{n^2} \sum_{i \neq j \in C_r} w_i w_j q_{\delta,\Sigma(w)}^2 \Paren{ \frac{1}{\sqrt{2}} \Paren{x_i - x_j}, v}   \Biggr\} \label{eq:main-bound-spectral-upper}
\end{multline}

Using that $\cA \sststile{4s}{\Sigma, w} \Set{w_i w_j \leq 1}$ for every $i,j$, using that $\cA$ derives $2s$-certifiable $(\delta,C\delta)$-anti-concentration of $w$-samples and invoking Definition~\ref{def:certifiable-anti-concentration}, we have:

\begin{equation}
\begin{split}
\cA \sststile{4s}{} \Biggl\{\frac{k^2}{n^2} \sum_{i \neq j \in C_r} w_i w_j q_{\delta,\Sigma(w)}^2 (\frac{1}{\sqrt{2}} \Paren{x_i - x_j}, v) & \leq \frac{k^2}{n^2} \sum_{i \neq j \in [n]} w_i w_j q_{\delta,\Sigma(w)}^2 \Paren{ \frac{1}{\sqrt{2}} \Paren{x_i - x_j}, v}\\
& \leq  C\delta \Paren{v^{\top} \Sigma(w) v}^s \Biggr\}
\end{split}
\end{equation}


Further, using that $\cA \sststile{4s}{\Sigma, w} \Set{ w_i w_j \leq 1 }$ for all $i,j$ and relying on the certifiable Sub-gaussianity of $C_r$, we have:

\begin{equation}
\cA \sststile{4s}{} \Biggl\{\frac{k^2}{n^2} \sum_{i,j \in C_r} w_i w_j \Iprod{ \frac{1}{\sqrt{2}} \Paren{x_i - x_j}, v}^{2s}  \leq \frac{k^2}{n^2} \sum_{i,j \in C_r} \Iprod{ \frac{1}{\sqrt{2}} \Paren{x_i - x_j}, v}^{2s} = \Paren{Cs}^s \Paren{v^{\top} \Sigma(r) v}^s\Biggr\}
\end{equation}

Combining the last two bounds with \eqref{eq:main-bound-spectral-upper} thus yields:

\begin{equation}
\cA \sststile{4s}{} \Biggl\{ w(C_r)^2 \Paren{v^{\top} \Sigma(w) v^{\top}}^s \leq \frac{1}{\delta^{2s}} \Paren{Cs}^s \Paren{v^{\top} \Sigma(r) v}^s + C \delta  \Paren{v^{\top} \Sigma(w) v^{\top}}^s \Biggr\}
\end{equation}

\end{proof}

\paragraph{Digression: ``Real-World'' Proof}
We'd now like to combine the upper and lower bounds on $v^{\top} \Sigma(w) v$ obtained in the two previous lemmas in order to conclude a bound on the intersection size $w^2(C_r) w^2(C_{r'})$. To aid the intuition, observe that this is easy to do in ``usual math'' (in contrast to low-degree sum-of-squares proof system). If the reader prefers to skip this digression, they can skip to the paragraph titled \emph{Upper Bounds via SoSizing Conditional Argument}.

\begin{lemma}[Low Intersection Size from Spectral Separation (\emph{not} a low-degree SoS Proof)] 
\label{lem:real-world-argument}
Let $v \in \R^d$ be a unit vector such that $ \Delta v^{\top} \Sigma(r) v \leq v^{\top} \Sigma(r')v$ for some $\Delta \gg 2Cs/\delta^3$. Then, $w^3(C_r) w^3(C_{r'}) \leq \delta$.
\end{lemma}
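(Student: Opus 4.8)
The plan is to treat the two preceding lemmas as ordinary numerical inequalities — they hold at any genuine $w\in\{0,1\}^n$ satisfying $\cA$, by instantiating their sum-of-squares proofs at $w$ — and to combine the ``variance lower bound'' (Lemma~\ref{lem:spectral-lower-bound-intersection-clustering}) with the ``variance upper bound'' (Lemma~\ref{lem:anti-conc-upper-bound-w}) through a case split on the magnitude of $w(C_r)$. First I would dispatch the trivial case: if $w(C_r)\le \delta^{1/3}$ or $w(C_{r'})\le\delta^{1/3}$, then $w^3(C_r)w^3(C_{r'})\le\delta$ immediately since the other factor is at most $1$. So all the work is in the regime $w(C_r),w(C_{r'})>\delta^{1/3}$, where — for $\delta$ below an absolute constant depending on $C$ — both $w(C_r)^2-C\delta$ and $w(C_r)w(C_{r'})-C\delta$ are at least $\tfrac12\delta^{2/3}$.

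In that regime I would divide Lemma~\ref{lem:anti-conc-upper-bound-w} by $w(C_r)^2-C\delta>0$ to obtain the a posteriori bound $\Paren{v^\top\Sigma(w)v}^s\le \tfrac{2}{\delta^{2/3}}\Paren{Cs/\delta^2}^s\Paren{v^\top\Sigma(r)v}^s$, i.e.\ the variance of $\iprod{x,v}$ on the purported cluster is controlled by that of the \emph{smaller} component $C_r$. Then I would feed this into Lemma~\ref{lem:spectral-lower-bound-intersection-clustering} after moving its additive $C\delta$ slack to the left (so the left side reads $\Paren{w(C_r)w(C_{r'})-C\delta}\Paren{v^\top(\Sigma(r)+\Sigma(r'))v}^s$ and then $\ge \tfrac12 w(C_r)w(C_{r'})\Paren{v^\top(\Sigma(r)+\Sigma(r'))v}^s$), and use the spectral separation in the form $v^\top(\Sigma(r)+\Sigma(r'))v\ge v^\top\Sigma(r')v\ge\Delta\,v^\top\Sigma(r)v$ to cancel the $\Paren{v^\top\Sigma(r)v}^s$ factors (the degenerate case $v^\top(\Sigma(r)+\Sigma(r'))v=0$ means $\iprod{x,v}$ vanishes on both clusters and carries no separation, so it may be excluded). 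The outcome is an inequality of the shape $w(C_r)w(C_{r'})\le \poly(1/\delta)\cdot\Paren{2Cs/(\delta^4\Delta)}^s$; choosing $\Delta$ large enough — as quantified by the hypothesis $\Delta\gg 2Cs/\delta^3$ — makes the right side at most $\delta^{1/3}$, and cubing gives $w^3(C_r)w^3(C_{r'})\le\delta$.

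The step I expect to be delicate is precisely this interface between the two bounds: the lower bound is phrased in terms of $v^\top(\Sigma(r)+\Sigma(r'))v$ whereas the upper bound is phrased in terms of $v^\top\Sigma(r)v$, so reconciling them forces a division by $w(C_r)^2-C\delta$ and a genuine dichotomy on whether $w(C_r)$ is small or large. Both moves are harmless numerically but neither is a bounded-degree sum-of-squares inference, which is exactly why this ``real-world'' argument does not yield Lemma~\ref{lem:intersection-bounds-from-spectral-separation} directly; the honest SoS version will instead need a rough a priori upper bound on $v^\top\Sigma(w)v$ together with uniform polynomial approximators for the threshold on $[0,1]$, at the cost of the $\log\kappa$ factor in the degree.
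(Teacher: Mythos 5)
Your proposal is correct and takes essentially the same route as the paper: split on whether $w(C_r)$ (and, in your variant, also $w(C_{r'})$) exceeds a $\poly(\delta)$ threshold, then in the nontrivial regime eliminate $\Paren{v^{\top}\Sigma(w)v}^s$ between Lemma~\ref{lem:anti-conc-upper-bound-w} and Lemma~\ref{lem:spectral-lower-bound-intersection-clustering}, and invoke the spectral separation to cancel the $\Paren{v^{\top}\Sigma(r)v}^s$ factor. The paper phrases the key step as multiplying the lower-bound inequality by the case-hypothesized nonnegative quantity $w^2(C_r)-\delta$ rather than dividing the upper bound by it, which is a cosmetic difference, and your closing remark correctly identifies why neither the division nor the case split is a low-degree SoS inference, matching the paper's own discussion of why the honest SoS proof resorts to a rough a priori bound on $v^\top\Sigma(w)v$ and threshold approximators at the cost of a $\log\kappa$ degree factor.
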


\begin{proof}
We split into two cases: 1) $w^2(C_r) \leq \delta$ and 2) $w(C_r)^2 >\delta$.
In the first, case $w^3(C_r)w^3(C_{r'})$ is clearly at most $\delta$.
So we are done!

In the second case, we invoke Lemma~\ref{lem:spectral-lower-bound-intersection-clustering} to write:
\[
w(C_{r'}) w(C_r) \Paren{v^{\top} \Paren{\Sigma(r) + \Sigma(r')} v^{\top}}^{s} \leq \frac{2^{s}}{\delta^{2s}} \Paren{v^{\top} \Sigma(w) v}^s + C \delta \Paren{v^{\top} \Paren{\Sigma(r) + \Sigma(r')} v^{\top}}^s\mper
\]
Since $(w^2(C_r)-\delta)\geq 0$, we can multiply both sides of above by $(w^2(C_r)-\delta)$ without changing the inequality.
By Lemma~\ref{lem:anti-conc-upper-bound-w}:
\[
 \Paren{w(C_r)^2 - C\delta} \Paren{v^{\top} \Sigma(w) v^{\top}}^s \leq \frac{1}{\delta^{2s}} \Paren{Cs}^s \Paren{v^{\top} \Sigma(r) v}^s \mper
\]
Using the above bound, using that $w(C_r)w(C_{r'})\leq 1$ and rearranging, we have:
\[
w(C_r)^2 w(C_{r'}) w(C_r) \Paren{v^{\top} \Paren{\Sigma(r) + \Sigma(r')} v^{\top}}^{s} \leq (C+1) \delta \Paren{v^{\top} \Paren{\Sigma(r) + \Sigma(r')} v^{\top}}^s + \Paren{\frac{2}{\delta}}^s \frac{1}{\delta^{2s}} \Paren{Cs}^s \Paren{v^{\top} \Sigma(r) v}^s
\mper\]
Using the above bound with the spectrally separating direction $v$, we know that $v^{\top} \Paren{\Sigma(r) + \Sigma(r')} v^{\top} \geq  \Delta v^{\top} \Sigma(r)v$. Thus rearranging the above inequality gives:
\[
w(C_r)^3 w(C_{r'})^3 \leq w^3(C_r) w(C_{r'}) \leq (C+1) \delta  + \Paren{\frac{2}{\delta^3}}^s (Cs)^s \Delta^{-s}\mcom
\]
which is at most $2C\delta$ whenever $\Delta \gg Cs/\delta^3$ as desired.
\end{proof}

Crucial to the above ``real world'' argument is the second step where we use the non-negativity of $w(C_r)^2-\delta$ so as to multiply the starting inequality on both sides with it while preserving the direction of the inequality. This step relies on an ``if-then'' case analysis which, unfortunately, cannot, in general, be implemented \emph{as is} in low-degree sum-of-squares proof system.

\paragraph{Upper Bounds via SoSizing Conditional Argument}
In order to implement an argument similar to the one above, within the low-degree SoS system, we will introduce a polynomial $\cJ$ which approximates the thresholding operation withing SoS. We prove the existence of such a polynomial in Appendix~\ref{sec:poly-approx-threshold}. This will, however, lose us a $\log (\kappa)$ factor in the SoS degree required (and thus cause an exponential dependence on $\log (\kappa)$ in the running time of our clustering algorithm).
\newcommand{\chb}{\cC}

\begin{lemma}[Polynomial Approximator for Thresholds, See Section~\ref{sec:poly-approx-threshold} for a proof] \label{lem:poly-approximate-threshold}
Let $1/2 \geq \rho \geq 0$ and $c \in [0,1]$. There exists a square polynomial $\cJ$ satisfying:
\begin{enumerate}
\item $\cJ(x) \in [1,2]$ for all $x \in [2c,1]$.
\item $\cJ(x) \leq \rho$ for all $x \in [0,c]$.
\item $\deg(\cJ) \leq O(\log(1/\rho)/c)$.
\end{enumerate}
\end{lemma}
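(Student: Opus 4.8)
The statement is a pure (univariate) approximation‑theory fact — no sum‑of‑squares is involved, the word ``square'' merely records that $\cJ$ will be produced in the explicit form $p(x)^2$. The plan is to reduce it, via an affine change of variable, to the classical problem of approximating $\mathrm{sgn}$ by a low‑degree polynomial, and then to square and rescale.

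First I would dispose of the degenerate ranges. If $2c>1$ then $[2c,1]=\emptyset$ and $\cJ\equiv 0$ (which is $0^2$) works, so assume $c\le 1/2$. One may further assume $c$ is below a small absolute constant $c_0\le 1/4$ and $\rho$ below a small absolute constant $\rho_0\le 1/25$: for $c\in[c_0,1/2]$ the construction below runs with gap parameter $\mu=\Theta(1)$ and has degree $O(\log(1/\rho))=O(\log(1/\rho)/c)$, and for $\rho\in[\rho_0,1/2]$ one has $\log(1/\rho)=\Theta(1)$ and running the construction with the fixed precision $\rho_0$ already delivers all three properties with degree $O(1/c)=O(\log(1/\rho)/c)$.

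Now the reduction. Let $\phi(x)=\frac{x-3c/2}{1-3c/2}$ be the affine map sending the midpoint $3c/2$ of the gap $[c,2c]$ to $0$ and the endpoint $1$ to $1$, and set $\mu:=\frac{c/2}{1-3c/2}$, so $\mu=\Theta(c)\in[c/2,c]$ for $c\le c_0$. A one‑line computation gives $\phi([0,c])=[-3\mu,-\mu]$ and $\phi([2c,1])=[\mu,1]$, both inside $[-1,-\mu]\cup[\mu,1]\subseteq[-1,1]$. I would then invoke the standard Chebyshev‑based (equivalently, error‑function Taylor‑truncation) construction of sign approximants: for every $\mu,\epsilon\in(0,1)$ there is a polynomial $S$ of degree $O\!\left(\frac1\mu\log\frac1\epsilon\right)$ with $|S(y)+1|\le\epsilon$ on $[-1,-\mu]$ and $|S(y)-1|\le\epsilon$ on $[\mu,1]$. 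Apply this with the above $\mu$ and with $\epsilon:=\sqrt{\rho}$ (note $\epsilon\le 1/5$), obtaining $S$ of degree $O(\frac1c\log\frac1\rho)$. Put $q(x):=\tfrac12\big(S(\phi(x))+1\big)$ and
\[
\cJ(x):=\frac{1}{(1-\epsilon/2)^2}\,q(x)^2=\left(\frac{q(x)}{1-\epsilon/2}\right)^2 .
\]
This is a genuine square polynomial of degree $2\deg q=2\deg S=O\!\left(\frac{\log(1/\rho)}{c}\right)$, giving (3). For $x\in[2c,1]$ we get $S(\phi(x))\in[1-\epsilon,1+\epsilon]$, hence $q(x)\in[1-\epsilon/2,1+\epsilon/2]$ and $\cJ(x)\in\big[\,1,\ \tfrac{(1+\epsilon/2)^2}{(1-\epsilon/2)^2}\,\big]\subseteq[1,2]$ (using $\epsilon\le1/5$), which is (1). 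For $x\in[0,c]$ we get $S(\phi(x))\in[-1-\epsilon,-1+\epsilon]$, hence $q(x)\in[-\epsilon/2,\epsilon/2]$ and $\cJ(x)\le\frac{\epsilon^2/4}{(1-\epsilon/2)^2}\le\frac{\epsilon^2}{3}=\frac{\rho}{3}\le\rho$, which is (2).

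The only non‑bookkeeping ingredient is the sharp degree bound $O(\frac1\mu\log\frac1\epsilon)$ for polynomial sign approximation, and this is the point I would be most careful about: a naive construction — e.g.\ normalizing $\int_0^y(1-t^2)^k\,dt$ — costs degree $\Theta(\frac1{\mu^2}\log\frac1\epsilon)$ and would only yield $\deg\cJ=O(\log(1/\rho)/c^2)$, missing the claimed bound by a factor $1/c$; obtaining the linear‑in‑$1/\mu$ dependence forces the use of the Chebyshev (or $\mathrm{erf}$‑Taylor) construction. Everything else is the affine arithmetic around $\phi$ together with the squaring/rescaling above, which I expect to go through without difficulty.
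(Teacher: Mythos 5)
Your proof is correct, but it takes a genuinely different route from the paper's. The paper builds $\cJ$ by a two-stage composition: Jackson's theorem (Fact~\ref{fact:jackson}) gives a degree-$O(1/c)$ polynomial $J$ uniformly $\tfrac14$-close to a piecewise-linear ramp rising over $[c,2c]$, and a Chernoff-type ``amplifier'' $A_r$ of degree $r=O(\log(1/\rho))$ (Fact~\ref{fact:amplification}) is composed with an affine shift of $J$ to drive the residual constant-size gap down to error $\rho$; squaring and multiplying the two degrees gives $O(\log(1/\rho)/c)$. You instead do a single-pass reduction: center the gap $[c,2c]$ at $0$ via an affine $\phi$ so its half-width is $\mu=\Theta(c)$, invoke the sharp $O\bigl(\mu^{-1}\log(1/\epsilon)\bigr)$ Chebyshev sign approximant $S$ with $\epsilon=\sqrt\rho$, and set $\cJ=\bigl(\tfrac12(S\circ\phi+1)/(1-\epsilon/2)\bigr)^2$. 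Both routes yield the same degree. Yours is more direct (one approximation, no composition) but leans on the sharper fact that sign approximation costs $O(1/\mu)$, not $O(1/\mu^2)$, per $\log$-factor of accuracy --- which, as you correctly flag, is the crux; the naive $\int_0^y(1-t^2)^k\,dt$ construction would only give $O(\log(1/\rho)/c^2)$. The paper's composition uses two more off-the-shelf ingredients but pays in constant bookkeeping; in fact, as written, the inner shift $\tfrac85 J-\tfrac45$ lands in $[\tfrac25,\tfrac65]$ (not $[\tfrac35,1]$) on $[2c,1]$, escaping the domain on which $A_r$ is controlled, and the squared output lies just below $1$ rather than in $[1,2]$, so the paper's own constants need retuning plus a final scale-up --- issues your explicit $(1-\epsilon/2)^{-2}$ rescaling handles cleanly. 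One small point in your write-up: for $c$ in the upper range $(1/3,1/2]$ your specific $\phi$ sends $0$ to $-3\mu<-1$, so $\phi([0,c])$ exits $[-1,1]$; the fix you gesture at for that range (a different affine normalization keeping the image of $[0,1]$ inside $[-1,1]$ with gap half-width $\Theta(c)=\Theta(1)$) works and does not affect the degree bound. This is bookkeeping, not a gap.
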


\begin{lemma} \label{lem:consequences-of-threshold-polynomials}
For any  $0 < \rho < 1$,
\[\Set{0 \leq w(C_r) \leq 1} \sststile{O(\log(1/\rho)/\delta^2)}{w} \Set{\cJ(w(C_r)) (w(C_r)-\delta) \geq -\delta \rho}\mcom\]
and,
\[\Set{0 \leq w(C_r) \leq 1} \sststile{O(\log(1/\rho)/\delta^2)}{w} \Set{\cJ(w(C_r)) w(C_r)\geq (w(C_r)-2\delta) }\mper\]

\end{lemma}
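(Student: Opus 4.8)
\medskip
\noindent\emph{Proof plan.} The plan is to instantiate Lemma~\ref{lem:poly-approximate-threshold} with threshold location $c=\delta$ and accuracy parameter $\rho$ (we may assume $\rho\le 1/2$, since decreasing $\rho$ only strengthens both conclusions), obtaining a square polynomial $\cJ$ with $\cJ(x)\in[1,2]$ on $[2\delta,1]$, $0\le\cJ(x)\le\rho$ on $[0,\delta]$, and $\deg\cJ=O(\log(1/\rho)/\delta)$; note that $\cJ(x)\ge 0$ for every real $x$ because $\cJ$ is a square. After the substitution $x\mapsto w(C_r)$, both target inequalities become univariate polynomial inequalities in the single indeterminate $x$ over the interval $[0,1]$, and $\{0\le w(C_r)\le 1\}$ is precisely our hypothesis. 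Since $w(C_r)$ is a linear form, the substitution rule \eqref{eq:sos-substitution} turns an interval SoS proof in $x$ of degree $d$ into an SoS proof in $w$ of the same degree $d$ from $\{0\le w(C_r)\le 1\}$. So by Fact~\ref{fact:univariate-interval} (Fekete and Markov--Lukacs) it suffices to check, for the two explicit polynomials below (both of degree $\deg\cJ+1$), that they are pointwise nonnegative on $[0,1]$; the resulting SoS degree is then $\deg\cJ+1=O(\log(1/\rho)/\delta)\le O(\log(1/\rho)/\delta^2)$ since $\delta\le 1$.

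For the first inequality I would take $p(x):=\cJ(x)(x-\delta)+\delta\rho$ and verify $p(x)\ge 0$ on $[0,1]$ by splitting at $x=\delta$: for $x\in[\delta,1]$ both $\cJ(x)$ and $x-\delta$ are nonnegative, so $p(x)\ge\delta\rho\ge 0$; for $x\in[0,\delta]$ we have $0\le\cJ(x)\le\rho$ and $x-\delta\le 0$, hence $\cJ(x)(x-\delta)\ge\rho(x-\delta)\ge-\rho\delta$ and $p(x)\ge 0$. For the second I would take $q(x):=\cJ(x)\,x-x+2\delta$ and split at $x=2\delta$: for $x\in[0,2\delta]\cap[0,1]$ we have $\cJ(x)\,x\ge 0\ge x-2\delta$, so $q(x)\ge 0$; for $x\in[2\delta,1]$ we have $\cJ(x)\ge 1$, so $\cJ(x)\,x\ge x$ and $q(x)\ge 2\delta\ge 0$. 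Applying Fact~\ref{fact:univariate-interval} to $p$ and to $q$ and then performing the substitution $x\mapsto w(C_r)$ yields the two claimed implications.

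The calculation above is entirely routine; the only point that calls for a moment's thought is the choice $c=\delta$ in Lemma~\ref{lem:poly-approximate-threshold}. For any smaller $c$ the construction leaves $\cJ$ uncontrolled (beyond global nonnegativity) on the ``gap'' interval $(c,2c)$, and if part of that gap lay below $\delta$ then the factor $(x-\delta)$ in $p$ would be negative there while $\cJ$ could be large, destroying the bound; taking $c=\delta$ pushes the entire uncontrolled region $[\delta,1]$ into the range where $x-\delta\ge 0$, so that nonnegativity of $\cJ$ alone suffices. Beyond this, there is no real obstacle.
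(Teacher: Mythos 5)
Your proof is correct and takes essentially the same route as the paper's: reduce to a univariate polynomial inequality on $[0,1]$, verify pointwise nonnegativity by splitting at the threshold (at $\delta$ for the first claim, at $2\delta$ for the second), invoke Fekete/Markov--Lukacs (Fact~\ref{fact:univariate-interval}), and substitute $x\mapsto w(C_r)$. Where you are a bit more careful than the paper: you make explicit the choice $c=\delta$ in Lemma~\ref{lem:poly-approximate-threshold} (the paper uses the same choice but only implicitly, since ``$w(C_r)\le\delta\Rightarrow\cJ(w(C_r))\le\rho$'' requires $c=\delta$), you handle the range mismatch $\rho\in(1/2,1)$ versus the hypothesis $\rho\le 1/2$ of the approximator lemma, and you note the resulting degree is actually $O(\log(1/\rho)/\delta)$, which the stated bound $O(\log(1/\rho)/\delta^2)$ only slightly over-estimates. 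These are harmless refinements of what the paper does.
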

\begin{proof}
Observe that the conclusion is a polynomial inequality in single variable $w(C_r)$. Thus, it is enough to give \emph{any} proof of $\cJ(w(C_r)) (w(C_r)-\delta) \geq -\delta \rho$ and apply Lemma~\ref{fact:univariate-interval}.

To see why the inequality holds, observe that if $w(C_r) \geq \delta$, $\cJ(w(C_r)) (w(C_r)-\delta) \geq 0 > -\delta \rho$. On the other hand, if $w(C_r) \leq \delta$, then, $\cJ(w(C_r)) \leq \rho$ while $|w(C_r)-\delta| \leq \delta$. On the other hand, observe that $\cJ(w(C_r)(w(C_r)-\delta) \leq \cJ(w(C_r) w(C_r) \leq 2w(C_r)$. This completes the proof of the first inequality.

For the second claim, notice that if $w(C_r) < 2\delta$, the inequality trivially holds since $\cJ(w(C_r)) \geq 0$. If on the other hand, $w(C_r) > 2\delta$, then, $\cJ(w(C_r)) \geq 1 \geq w(C_r) \geq w(C_r)-\delta$.
\end{proof}

We can now implement the above real-world ``conditional'' argument within SoS using the polynomial $\cJ$ above. To do this, we will need a rough upper bound on $v^{\top}\Sigma(w)v$ in terms of $v^{\top}\Sigma(r)v$ for $r \leq k$. We will prove this via another application of certifiable anti-concentration of $\cD_w$ - this time, invoked with the slightly different parameter $\tau$.  

\begin{lemma}[Rough Spectral Upper bound on $\Sigma(w)$] \label{lem:rough-bound-spectral-upper}
\begin{equation}
\cA \sststile{}{} \Biggl\{ \Paren{v^{\top} \Sigma(w) v^{\top}}^{s} \leq (2Ck)^{s+1} \Paren{Cs}^{s} \sum_{r \leq k} \Paren{v^{\top} \Sigma(r) v}^{s} \Biggr\}
\end{equation}
\end{lemma}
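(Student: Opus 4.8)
The plan is to mimic the structure of the proof of Lemma~\ref{lem:anti-conc-upper-bound-w}, but instead of producing a single per-cluster bound, to sum the per-cluster bounds over all $r$ and invoke a Cauchy--Schwarz lower bound on $\sum_r w(C_r)^2$. Fix a constant $\tau = \poly(\eta/k)$ small enough that $Ck\tau \le \tfrac{1}{2k}$ (still within the range $\tau \ge \poly(\eta/k)$ for which the good-sample guarantees hold), and write $s = s(\tau)$; here $\Sigma = \Sigma(w)$ denotes the covariance indeterminate of $\cA_3$. The key observation is that, for a fixed cluster $C_r$, restricting the certifiable anti-concentration argument to \emph{same-cluster} pairs $i,j \in C_r$ keeps the mean of the two-sample difference $x_i-x_j$ (over $\cD_w \times \cD_w$ restricted to $C_r$) equal to $0$, so no mean-separation term ever enters. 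Concretely, the first displayed inequality of Definition~\ref{def:certifiable-anti-concentration}, with witnessing polynomial $q_{\tau,2\Sigma}$, applied to the point $x_i-x_j$ (whose covariance is $2\Sigma$), gives a degree-$O(s)$ SoS proof that $\langle x_i-x_j,v\rangle^{2s} + \tau^{2s} q_{\tau,2\Sigma}^2(x_i-x_j,v) \ge 2^s \tau^{2s}\Paren{v^\top \Sigma v}^s$. Multiplying by the SoS-nonnegative quantity $\tfrac{k^2}{n^2}w_iw_j$ (note $w_iw_j=(w_iw_j)^2$ under $\cA_2$), summing over $i,j\in C_r$, and using $\tfrac{k^2}{n^2}\sum_{i,j\in C_r}w_iw_j = w(C_r)^2$, one obtains
\[
\tau^{2s}2^s\, w(C_r)^2\Paren{v^\top \Sigma v}^s \le \frac{k^2}{n^2}\sum_{i,j\in C_r}w_iw_j\langle x_i-x_j,v\rangle^{2s} + \tau^{2s}\frac{k^2}{n^2}\sum_{i,j\in C_r}w_iw_j\, q_{\tau,2\Sigma}^2(x_i-x_j,v)\mper
\]

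Next I would bound the two right-hand terms. For the first, $w_iw_j\le 1$ (an SoS consequence of $\cA_2$) together with the certifiable sub-gaussianity of $C_r$ guaranteed by the good-sample property (applied to two-sample-centered points, whose covariance is $2\Sigma(r)$) gives $\tfrac{k^2}{n^2}\sum_{i,j\in C_r}w_iw_j\langle x_i-x_j,v\rangle^{2s} \le 2^s(Cs)^s\Paren{v^\top\Sigma(r)v}^s$. For the second, since each summand $w_iw_j\,q_{\tau,2\Sigma}^2$ is SoS-nonnegative, I may enlarge the sum from $i,j\in C_r$ to all $i,j\in[n]$ and then apply the $\tau$-version of $\cA_4$ to get $\tfrac{k^2}{n^2}\sum_{i,j}w_iw_j\,q_{\tau,2\Sigma}^2(x_i-x_j,v) \le 2^s C\tau\Paren{v^\top\Sigma v}^s$. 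Plugging both in and dividing by $2^s$ yields the per-cluster bound $\tau^{2s}\Paren{w(C_r)^2-C\tau}\Paren{v^\top\Sigma v}^s \le (Cs)^s\Paren{v^\top\Sigma(r)v}^s$ for every $r\in[k]$. Summing over $r=1,\dots,k$ and using the SoS Cauchy--Schwarz inequality $\sum_r w(C_r)^2 \ge \tfrac1k\bigl(\sum_r w(C_r)\bigr)^2 = \tfrac1k$ (which follows from $\cA_2$, since $\sum_r w(C_r)=\tfrac{k}{n}\sum_i w_i=1$) together with $Ck\tau\le\tfrac1{2k}$, so that $\sum_r w(C_r)^2 - Ck\tau \ge \tfrac1{2k}$ as an SoS consequence of $\cA$, gives
\[
\frac{\tau^{2s}}{2k}\Paren{v^\top\Sigma v}^s \le \tau^{2s}\Paren{\textstyle\sum_{r\le k} w(C_r)^2 - Ck\tau}\Paren{v^\top\Sigma v}^s \le (Cs)^s\sum_{r\le k}\Paren{v^\top\Sigma(r)v}^s\mper
\]
The final step clears the factor $\sum_r w(C_r)^2 - Ck\tau$: this is legal in SoS because both it and $\Paren{v^\top\Sigma v}^s$ are SoS-nonnegative (the latter since $v^\top\Sigma v = \tfrac{k}{n}\sum_i w_i\langle x_i-\mu,v\rangle^2$ is a nonnegative combination of squares, hence so is its $s$-th power), so their product dominates $\tfrac1{2k}\Paren{v^\top\Sigma v}^s$. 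Rearranging and absorbing the constant $\tfrac{2k}{\tau^{2s}}$ into $(2Ck)^{s+1}$ gives the stated inequality, via a proof of degree $O(s)$ that, crucially, has \emph{no} dependence on the spread $\kappa$.

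The routine part is the SoS bookkeeping: nonnegativity of $w_iw_j$, of $1-w_iw_j$, enlarging sums of nonnegative terms, and powering SoS-nonnegative polynomials. The one point that needs care is that the first anti-concentration inequality of Definition~\ref{def:certifiable-anti-concentration} must be available for $\cD_w$, whose covariance $\Sigma$ is an \emph{indeterminate}: this holds because $q_{\tau,\Sigma}$ is built from a fixed univariate approximator and, after homogenization, the core-indicator lower bound is an SoS identity jointly in $(v,\Sigma)$, exactly as in the proofs of Lemmas~\ref{lem:spectral-lower-bound-intersection-clustering} and~\ref{lem:anti-conc-upper-bound-w}. The main conceptual point — and the reason this ``rough'' bound is cheap — is the decision to run the anti-concentration step only over same-cluster pairs: this both kills the mean-separation contributions and, after summing over $r$ and invoking $\sum_r w(C_r)^2 \ge 1/k$, lets a single low-degree pass over the clusters control $v^\top\Sigma(w)v$.
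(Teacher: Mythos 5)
Your proof is correct and follows essentially the same route as the paper's: apply certifiable anti-concentration (at the parameter-$\tau$ level) to same-cluster pairs indicated by $w$, bound the $2s$-th moment term via certifiable sub-gaussianity of $C_r$, bound the $q$-term by enlarging to all pairs and invoking $\cA_4$, then sum over $r$ and use $\sum_r w(C_r)^2 \ge 1/k$ to clear the factor in front of $(v^\top\Sigma(w)v)^s$. The only cosmetic difference is that you derive the per-cluster inequality first and then sum, whereas the paper carries the sum over $r$ from the start; this makes your error term $Ck\tau$ rather than $C\tau$, requiring the marginally smaller choice $\tau\le\tfrac{1}{2Ck^2}$, which changes nothing substantive.
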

 \begin{proof}

Our proof is similar to the proof of Lemma~\ref{lem:anti-conc-upper-bound-w} with a key additional step. As in the proof of Lemma~\ref{lem:anti-conc-upper-bound-w}, we start by invoking our constraints to conclude (note that  we sum over all samples  this time instead of  those just in $C_r$ as in the previous lemma:

\begin{multline}
\cA \sststile{}{} \Biggl\{   \tau^{2s} \sum_{r \leq k} w'(C_r)^2 \Paren{v^{\top} \Sigma(w) v^{\top}}^s \\ \leq \frac{k^2}{n^2} \sum_{r\leq k} \sum_{i,j \in C_r} w_i w_j \Iprod{ \frac{1}{\sqrt{2}} \Paren{x_i - x_j}, v}^{2s} +\tau^{2s}\frac{k^2}{n^2} \sum_{r \leq k} \sum_{i \neq j \in C_r} w_i w_j q_{\tau,\Sigma(w)}^2 \Paren{\frac{1}{\sqrt{2}} \Paren{x_i - x_j}, v}   \Biggr\} \label{eq:main-bound-spectral-upper}
\end{multline}

The second term on the RHS can be upper bounded just as in the proof of Lemma~\ref{lem:anti-conc-upper-bound-w} to yield:

\begin{equation}
\begin{split}
\cA \sststile{}{} \Biggl\{\frac{k^2}{n^2} \sum_{r \leq k} \sum_{i \neq j \in C_r} w_i w_j q_{\tau,\Sigma(w)}^2(\Iprod{ \frac{1}{\sqrt{2}} \Paren{x_i - x_j}, v}) & \leq \frac{k^2}{n^2} \sum_{i \neq j \in [n]} w_i w_j q_{\tau,\Sigma(w)}^2 \Paren{\Iprod{ \frac{1}{\sqrt{2}} \Paren{x_i - x_j}, v}} \\
&  \leq  C\tau \Paren{v^{\top} \Sigma(w) v}^s \Biggr\}
\end{split}
\end{equation}

The first term can be also be upper bounded - this time in terms of the Covariances of all the $k$ components.

\begin{equation}
\begin{split}
\cA \sststile{}{} \Biggl\{\frac{k^2}{n^2} \sum_{r \leq k} \sum_{i,j \in C_r} w_i w_j \Iprod{ \frac{1}{\sqrt{2}} \Paren{x_i - x_j}, v}^{2s}  & \leq \sum_{r \leq k} \frac{k^2}{n^2} \sum_{i,j \in C_r} \Iprod{ \frac{1}{\sqrt{2}} \Paren{x_i - x_j}, v}^{2s} \\
& = \Paren{Cs}^s \sum_{r \leq k} \Paren{v^{\top} \Sigma(r) v}^s\Biggr\}
\end{split}
\end{equation}
We can now combine the two estimates above to yield:
\begin{equation}
\cA \sststile{}{} \Biggl\{ \Paren{\sum_{r \leq k} w(C_r)^2 - C\tau} \Paren{v^{\top} \Sigma(w) v^{\top}}^s  \leq \frac{1}{\tau^{2s}} \Paren{Cs}^s \sum_{r \leq k} \Paren{v^{\top} \Sigma(r) v}^s \Biggr\} \label{eq:main-bound-spectral-upper}
\end{equation}
So far the argument closely follows the proof of Lemma~\ref{lem:anti-conc-upper-bound-w}. The key departure we make is with the following simple observation:
\[
\cA \sststile{}{} \Set{ \sum_{r \leq k} w(C_r)^2 \geq \frac{1}{k} \Paren{\sum_{r \leq k} w(C_r)}^2 = \frac{1}{k}}\mper
\]
Thus, as long as $\tau < \frac{1}{2Ck}$, we can derive:
\begin{equation}
\cA \sststile{}{} \Biggl\{ \Paren{v^{\top} \Sigma(w) v}^{s} \leq k^{s+1} \Paren{Cs}^{s} \sum_{r \leq k} \Paren{v^{\top} \Sigma(r) v}^{s} \Biggr\} \label{eq:main-bound-spectral-upper}
\end{equation}
This is the ``rough'' upper bound on $\Sigma(w)$ we were after.
\end{proof}

We can use the above lemma to get an ``upgraded'' version of Lemma~\ref{lem:anti-conc-upper-bound-w}.

\begin{lemma}[Upper Bound on Variance of $\cD_w$] \label{lem:upper-bound-variance-w-samples-sharper}
Let $\lambda_{\max}(v) \Norm{v}_2^2$ be the maximum of $v^{\top} \Sigma(r) v$ over all $r \leq k$. Then,
\begin{multline}
\cA \sststile{}{} \Biggl\{ \Paren{\cJ(w(C_r)) (w(C_r)-\delta) + \delta \rho} \Paren{v^{\top} \Sigma(w)v}^s \leq 2\frac{1}{\delta^{2s}} \Paren{Cs}^s \Paren{v^{\top} \Sigma(r) v}^s \\+ \delta \rho s^{2s} \Paren{Cs}^{s} k \lambda_{\max}(v)^{s} \norm{v}_2^{2s} \Biggr\}\mper
\end{multline}
\end{lemma}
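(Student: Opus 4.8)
This lemma is the SoS-friendly repackaging of Lemma~\ref{lem:anti-conc-upper-bound-w}: that lemma controls $w(C_r)^2(v^\top\Sigma(w)v)^s$ up to an additive slack $C\delta(v^\top\Sigma(w)v)^s$, and the present statement rewrites this, via the square polynomial $\cJ$ of Lemma~\ref{lem:poly-approximate-threshold}, so that on the ``event'' that $w(C_r)$ is appreciable it becomes a true upper bound on $(v^\top\Sigma(w)v)^s$, while on the complementary ``event'' it is rendered vacuous by the coarse bound of Lemma~\ref{lem:rough-bound-spectral-upper}. The reason for routing it through $\cJ$ is that this form can be multiplied into the lower bound of Lemma~\ref{lem:spectral-lower-bound-intersection-clustering} by the downstream proof of Lemma~\ref{lem:intersection-bounds-from-spectral-separation} without any case analysis. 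So the plan is: take Lemma~\ref{lem:anti-conc-upper-bound-w}, multiply through by the SoS-nonnegative $\cJ(w(C_r))$, trade the factor $\cJ(w(C_r))w(C_r)^2$ for the threshold coefficient $\cJ(w(C_r))(w(C_r)-\delta)+\delta\rho$, and absorb the leftover powers of $(v^\top\Sigma(w)v)^s$ using Lemma~\ref{lem:rough-bound-spectral-upper}.

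Concretely I would first record three elementary facts about the scalar $w(C_r)\in[0,1]$, each a univariate polynomial inequality and hence certifiable by Fact~\ref{fact:univariate-interval} from the defining properties of $\cJ$ (Lemma~\ref{lem:poly-approximate-threshold}, Lemma~\ref{lem:consequences-of-threshold-polynomials}): (i) $0\le\cJ(w(C_r))\le 2$ (from the construction of $\cJ$); (ii) $\cJ(w(C_r))(w(C_r)-\delta)+\delta\rho\ge 0$; and (iii) the purely algebraic $\delta(w(C_r)-\delta)\le w(C_r)^2$, which is the degree-$2$ SoS identity $w(C_r)^2-\delta\,w(C_r)+\delta^2=\bigl(w(C_r)-\tfrac{\delta}{2}\bigr)^2+\tfrac34\delta^2\ge 0$. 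Multiplying (iii) by $\cJ(w(C_r))\ge 0$ gives $\delta\,\cJ(w(C_r))(w(C_r)-\delta)\le\cJ(w(C_r))\,w(C_r)^2$. Then, using the multiplication rule \eqref{eq:sos-addition-multiplication-rule}, multiply Lemma~\ref{lem:anti-conc-upper-bound-w} (rearranged as $w(C_r)^2(v^\top\Sigma(w)v)^s\le\tfrac{(Cs)^s}{\delta^{2s}}(v^\top\Sigma(r)v)^s+C\delta\,(v^\top\Sigma(w)v)^s$) by $\cJ(w(C_r))$, bound $\cJ(w(C_r))\le 2$ in the $(v^\top\Sigma(r)v)^s$ term, combine with (iii), and divide by $\delta>0$: this yields $\cJ(w(C_r))(w(C_r)-\delta)(v^\top\Sigma(w)v)^s$ on the left and $\tfrac{2(Cs)^s}{\delta^{2s+1}}(v^\top\Sigma(r)v)^s$ plus a bounded multiple of $(v^\top\Sigma(w)v)^s$ on the right.

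Finally, add $\delta\rho(v^\top\Sigma(w)v)^s$ to both sides to form the stated left-hand coefficient, and replace each remaining $(v^\top\Sigma(w)v)^s$ on the right using Lemma~\ref{lem:rough-bound-spectral-upper}, $(v^\top\Sigma(w)v)^s\le(2Ck)^{s+1}(Cs)^s\sum_{r'\le k}(v^\top\Sigma(r')v)^s\le(2Ck)^{s+1}(Cs)^s\,k\,\lambda_{\max}(v)^s\norm{v}_2^{2s}$, absorbing $(2Ck)^{s+1}$ into $s^{2s}$ via $s=s(\poly(\eta/k))\gg k$. The delicate step — and the reason the stated error carries the factor $\rho$ rather than merely a constant — is handling the slack term $C\delta\,\cJ(w(C_r))(v^\top\Sigma(w)v)^s$: one splits $\cJ(w(C_r))$ at the polynomial level according to whether $w(C_r)\le\delta$, feeding the $w(C_r)>\delta$ piece back into Lemma~\ref{lem:anti-conc-upper-bound-w} (with $w(C_r)^2\ge\delta^2$, so it merges into the $(v^\top\Sigma(r)v)^s$ term at the cost of a larger power of $1/\delta$, which $(Cs)^s$ swamps) and bounding the $w(C_r)\le\delta$ piece by $\cJ(w(C_r))\le\rho$. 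This is also where the degree $O(\log(1/\rho)/\delta^2)$ of $\cJ$ enters the SoS proof — the source of the eventual $\log\kappa$ dependence in Lemma~\ref{lem:intersection-bounds-from-spectral-separation}. Beyond this bookkeeping the remaining work is routine constant-chasing, so I would not grind through it.
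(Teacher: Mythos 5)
Your high-level plan --- bound the $\delta\rho$ summand of the left-hand coefficient via the rough bound of \cref{lem:rough-bound-spectral-upper}, and the $\cJ(w(C_r))(w(C_r)-\delta)$ summand via \cref{lem:anti-conc-upper-bound-w}, mediating between the two with the polynomial $\cJ$ --- is the paper's, and your preparatory facts (i)--(iii) are all valid univariate SoS certificates (in particular $\delta(w-\delta)\le w^2$ is the degree-2 identity you state). But the plan does not close, and the place you flag as the ``delicate step'' is a genuine gap, not constant-chasing. After step (iii), multiplying \cref{lem:anti-conc-upper-bound-w} through by $\cJ(w(C_r))$, using $\cJ\le 2$ on the $(v^{\top}\Sigma(r)v)^s$ term, combining, and dividing by $\delta$, you are left with a residual of the form $C\cdot\cJ(w(C_r))\cdot(v^{\top}\Sigma(w)v)^s$. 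The only tools you have for it are $\cJ\le 2$ followed by \cref{lem:rough-bound-spectral-upper}, which yields an error of order $Cs^{s+1}(Cs)^{s}k\,\lambda_{\max}(v)^{s}\norm{v}_2^{2s}$; this exceeds the stated error $\delta\rho\, s^{2s}(Cs)^{s}k\,\lambda_{\max}(v)^{s}\norm{v}_2^{2s}$ by roughly $C/(\delta\rho\, s^{s-1})$, which is unboundedly large as $\rho\to 0$. That $\rho$ is not decorative: in \cref{lem:intersection-bounds-from-spectral-separation}, $\rho$ is taken as small as about $B^{-s}$, and it is precisely $\rho$ that makes the error subordinate to $(v^{\top}\Sigma(r')v)^s$. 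Without it, the error beats $(v^{\top}\Sigma(r')v)^s$ by at least $B^{s}$ with no free parameter left to compensate, and the downstream intersection bound fails outright.

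The resolution you sketch --- ``split $\cJ(w(C_r))$ at the polynomial level according to whether $w(C_r)\le\delta$, feeding the $w(C_r)>\delta$ piece back into \cref{lem:anti-conc-upper-bound-w} and bounding the $w(C_r)\le\delta$ piece by $\cJ\le\rho$'' --- is case analysis, which is exactly the conditional reasoning a low-degree SoS proof cannot perform and which $\cJ$ was introduced to avoid. Since $\cJ$ is one fixed polynomial, any inequality on $\cJ(w(C_r))$ you feed into the SoS calculus must hold at every $w(C_r)\in[0,1]$; there is no ``piece where $\cJ\le\rho$'' that can be invoked in isolation. The natural univariate surrogate $\cJ(x)\le a(x^2-C\delta)+b\rho$ on $[0,1]$ is false (for $x$ small, the right side is $\approx -aC\delta+b\rho<0$ while $\cJ(x)\ge 0$), and I do not see a univariate inequality that interpolates correctly across $\cJ$'s transition window and still yields the $\rho$ prefactor. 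You should either produce a concrete SoS-certifiable univariate fact that does this, or a different coupling of \cref{lem:anti-conc-upper-bound-w} with \cref{lem:rough-bound-spectral-upper} that avoids exposing the $C\delta(v^{\top}\Sigma(w)v)^s$ slack of \cref{lem:anti-conc-upper-bound-w} as a standalone term. (For what it is worth, the paper's own proof is unusually compressed at exactly this step --- it asserts the bound directly from $\cJ\le 2$ and \cref{lem:anti-conc-upper-bound-w}, and the implicit coefficient inequality $\cJ(x)(x-\delta)\le 2(x^2-C\delta)$ also fails on the interior of $[0,1]$ --- so the difficulty you ran into is inherent, not an artifact of your rearrangement.)
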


\begin{proof}
From Lemma~\ref{lem:rough-bound-spectral-upper}, we have:
\begin{equation}
\cA \sststile{}{} \Biggl\{ \Paren{v^{\top} \Sigma(w) v}^{s} \leq (s)^{s+1} \Paren{Cs}^{s} \sum_{r \leq k} \Paren{v^{\top} \Sigma(r) v}^{s} \Biggr\}
\end{equation}

Then, the above bound implies:
\begin{equation}
\cA \sststile{}{}\Biggl\{ \Paren{v^{\top} \Sigma(w) v}^{s} \leq (s^{s+1} \Paren{Cs}^{s} k \lambda_{max}(v)^{s} \Biggr\} \mper \label{eq:better-bound-based-on-eig}
\end{equation}

From Lemma~\ref{lem:consequences-of-threshold-polynomials}, we have: $\cA \sststile{}{} \Set{\cJ(w(C_r)) \leq 2}$. Thus, using Lemma~\ref{lem:anti-conc-upper-bound-w} and applying \eqref{eq:better-bound-based-on-eig} on the RHS, we can conclude:

\begin{align*}
\cA &\sststile{}{} \Biggl\{ \Paren{\cJ(w(C_r)) (w(C_r)-\delta) + \delta \rho}\Paren{v^{\top} \Sigma(w) v}^s \leq \delta \rho \Paren{v^{\top} \Sigma(w) v}^s +  2\Paren{\frac{Cs}{\delta^{2}}}^s \Paren{v^{\top} \Sigma(r) v}^s\\
&\leq \delta \rho s^{2s} \Paren{Cs}^{s} k \lambda_{max}(v)^{s} \norm{v}_2^{2s}  +  2\Paren{\frac{Cs}{\delta^{2}}}^s \Paren{v^{\top} \Sigma(r) v}^s \Biggr\} \mper
\end{align*}

\end{proof}

We are now ready to complete the proof of Lemma ~\ref{lem:intersection-bounds-from-spectral-separation}.

\begin{proof}[Proof of Lemma~\ref{lem:intersection-bounds-from-spectral-separation}]
Observe that $\cA \sststile{}{} \Set{0 \leq w(C_r) \leq 1}$.
Thus,
\begin{equation} \label{eq:non-negative-version-diff}
\cA \sststile{}{}\Set{\cJ(w(C_r))(w(C_r) -\delta)  + \delta \rho \geq 0}\mper
\end{equation}

From Lemma~\ref{lem:spectral-lower-bound-intersection-clustering}, we have:

\[
\cA \sststile{}{} \Biggl\{ w(C_{r'}) w(C_r) \Paren{v^{\top} \Paren{\Sigma(r) + \Sigma(r')} v}^{s} \leq \frac{2^{s}}{\delta^{2s}} \Paren{v^{\top} \Sigma(w) v}^s + C \delta \Paren{v^{\top} \Paren{\Sigma(r) + \Sigma(r')} v}^s \Biggr\} \mper
\]

Using \eqref{eq:non-negative-version-diff} along with \eqref{eq:sos-addition-multiplication-rule} with $\cJ(w(C_r))(w(C_r) -\delta)  + \delta \rho$ gives:

\begin{multline}
\cA \sststile{}{}\Biggl\{ \Paren{ \cJ(w(C_r))(w(C_r) -\delta)}w(C_{r'}) w(C_r) \Paren{v^{\top} \Paren{\Sigma(r) + \Sigma(r')} v}^{s} \\\leq \delta \rho \Paren{v^{\top} \Paren{\Sigma(r) + \Sigma(r')} v}^{s} +  \Paren{ \cJ(w(C_r))(w(C_r) -\delta) + \delta \rho }\frac{2^{s}}{\delta^{2s}} \Paren{v^{\top} \Sigma(w) v}^s\\ + \Paren{ \cJ(w(C_r))(w(C_r) -\delta) + \delta \rho } \frac{1}{\delta^{2s}} \Paren{Cs}^s \Paren{v^{\top} \Sigma(r) v}^s + \Paren{ \cJ(w(C_r))(w(C_r) -\delta) + \delta \rho } 2C \delta \Paren{v^{\top} \Sigma(r') v}^s \Biggr\} \mper
\end{multline}

Rearranging yields:
\begin{multline}
\cA \sststile{}{} \Biggl\{ \cJ(w(C_r))(w(C_r)w(C_{r'}) w(C_r) \Paren{v^{\top} \Paren{\Sigma(r) + \Sigma(r')} v}^{s} \\\leq   2\delta \rho \Paren{v^{\top} \Paren{\Sigma(r) + \Sigma(r')} v}^{s} +  \Paren{ \cJ(w(C_r))(w(C_r) -\delta) + \delta \rho }\frac{2^{s}}{\delta^{2s}} \Paren{v^{\top} \Sigma(w) v}^s\\ + \Paren{ \cJ(w(C_r))(w(C_r) -\delta) + \delta \rho } \frac{1}{\delta^{2s}} \Paren{Cs}^s \Paren{v^{\top} \Sigma(r) v}^s + \Paren{ \cJ(w(C_r))(w(C_r) -\delta) + \delta \rho } 2C \delta \Paren{v^{\top} \Sigma(r') v}^s \Biggr\} \mper
\end{multline}

Using Lemma~\ref{lem:consequences-of-threshold-polynomials}, we have that $\cJ(w(C_r)) w(C_r) \geq (w(C_r)-\delta)$.
Multiplying the above inequality (using \eqref{eq:sos-addition-multiplication-rule}) by the SoS (and thus  non-negative) polynomial $w(C_r)w(C_{r'})\Paren{v^{\top} \Paren{\Sigma(r) + \Sigma(r')} v^{\top}}^{s}$ yields:
\[
\cA \sststile{}{} \Biggl\{ \cJ(w(C_r))w(C_{r'}) w^2(C_r) \Paren{v^{\top} \Paren{\Sigma(r) + \Sigma(r')} v}^{s} \geq (w(C_r)-\delta) w(C_{r'}) w (C_r)\Paren{v^{\top} \Paren{\Sigma(r) + \Sigma(r')} v}^{s} \Biggr\}\mper
\]

Thus, the LHS above is lower bounded by $(w(C_r)-\delta) w(C_{r'}) w (C_r)\Paren{v^{\top} \Paren{\Sigma(r) + \Sigma(r')} v}^{s}$.

Let's analyze the terms in the RHS one by one. The first term can be upper bounded directly by applying Lemma~\ref{lem:upper-bound-variance-w-samples-sharper}.

The remaining two  terms in  the RHS can be upper bounded by relying on: $\cA \sststile{}{} \Set{\cJ(w(C_r))(w(C_r) -\delta) + \delta \rho \leq 2}$.

Thus, using the above bounds we have:

\begin{multline}
\cA \sststile{}{}\Biggl\{ w(C_r)^2 w(C_{r'}) \Paren{v^{\top} \Paren{\Sigma(r) + \Sigma(r')} v}^{s} \leq 3\delta \Paren{v^{\top} \Paren{\Sigma(r) + \Sigma(r')} v}^{s}\\ + 2 \frac{1}{\delta^{2s}} \Paren{Cs}^s \Paren{v^{\top} \Sigma(r) v}^s + \delta \rho s^{2s} \Paren{Cs}^{s} k \lambda_{\max}(v)^{s} \norm{v}_2^{2s} \\+ 2\frac{1}{\delta^{2s}} \Paren{Cs}^s \Paren{v^{\top} \Sigma(r) v}^s + 4C \delta \Paren{v^{\top} \Sigma(r') v}^s \Biggr\} \label{eq:using-condition-number-spectral-sep}
\end{multline}

Next, observe that since $C_r,C_{r'}$ are spectrally separated and  $0 \leq v^{\top} \Sigma(r) v < v^{\top} \Sigma(r') v$. Thus, $v^{\top} \Sigma(r')v=^{def} \lambda_{r'}(v) \norm{v}_2^{2} > 0$.

We now set $\eta \leq s^{-2s} \Paren{Cs}^{-s} k^{-1} \lambda_{max}(v)^{-s} \lambda_{r'}(v)^s \geq s^{-O(s)} k^{-1} B^{-s}$ and use that $\Delta^s \geq Cs/\delta^2$ to conclude:

\begin{equation}
\cA \sststile{(s \log(2B)/\delta^2)}{} \Biggl\{ w(C_r)^2 w^2(C_{r'}) \leq  w(C_r)^2 w(C_{r'}) \leq O(\delta) \Biggr\}
\end{equation}
Applying Lemma~\ref{lem:cancellation-SoS-constant-RHS} completes the proof.
\end{proof}

\paragraph{Simpler Proof for $k =2$}
For the special case of $k = 2$, we can bypass the use of the threshold approximator above to get a simpler proof. 

\begin{proof}[Special case of $k=2$]

We proceed exactly as in the proof of Lemma~\ref{lem:intersection-bounds-from-spectral-separation} until equation ~\eqref{eq:using-condition-number-spectral-sep} where we invoke the uniform eigenvalue upper bound.
Instead of using the uniform eigenvalue upper bound on $\Sigma(w)$, we use Lemma~\ref{lem:rough-bound-spectral-upper}, setting $t = s(1/2Ck) \leq 1/k^{\Theta(1)} = O(1)$ for  $k = 2$ to derive:

\begin{equation}
\cA \sststile{4t}{} \Biggl\{ \Paren{v^{\top} \Sigma(w) v^{\top}}^{t} \leq 2^{O(t)} \Paren{\Paren{v^{\top} \Sigma(1) v}^{t} +\Paren{v^{\top} \Sigma(2) v}^{t} } \Biggr\}
\end{equation}

With this sharper upper bound, we can complete the proof as in Lemma~\ref{lem:intersection-bounds-from-spectral-separation} by setting $\tau = 2^{-\Theta(s) } k^{-1} \delta$ instead of $1/\poly(\kappa)$. Since  $\log(1/\tau) = \Theta(s)/\delta = \poly(1/\delta)$, the degree of the SoS proof does not grow with $\kappa$ anymore.

\end{proof}

\begin{remark}[Difficulty in extending the simpler argument to $k>2$] \label{remark:difficulty-rough-upper-bound}
For mixtures with larger number of components, the upper bound from Lemma~\ref{lem:rough-bound-spectral-upper} is not enough. This is because the upper bound in the Lemma~\ref{lem:rough-bound-spectral-upper} scales with the largest variance of any of the $k$ component distributions which could be a lot larger than the variance of $\cD_r$ and $\cD_{r'}$ in the direction $v$. 
\end{remark}

\subsection{Intersection Bounds from Mean Separation}
In this section, we give a low-degree sum-of-squares proof that if $C_r,C_{r'}$ are mean separated then $w(C_r)w(C_{r'})$ must be small. Formally, we will show:

\begin{lemma}[Intersection Bounds from Mean Separation]\label{lem:intersection-bounds-from-mean-separation}
Let $X = C_1 \cup C_2 \cup \ldots C_r$ be a good sample of size $n$. Suppose there exists a vector $v \in \R^d$ such that $\Iprod{\mu_r -\mu_{r'},v}_2^2 \geq \Delta^2_m v^{\top} \Paren{\Sigma(r) +\Sigma(r')} v$.

Then, whenever $\Delta_m \gg Cs/\delta$,
\[
\cA \sststile{O(s(\delta)/\delta^2 \log (\kappa))}{w} \Set{w(C_r)w(C_{r'}) \leq  O(\sqrt{\delta}) }\mper
\]
\end{lemma}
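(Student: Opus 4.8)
The plan is to mimic the proof of Lemma~\ref{lem:intersection-bounds-from-spectral-separation} almost verbatim, using the linear form $\Iprod{x,v}$ as the degree-$2$ mismatch marker and tracking $v^{\top}\Sigma(w)v$, the variance of $\cD_w$ in the separating direction. The three ingredients are: (i) a \emph{large intersection $\Rightarrow$ large variance} lower bound, forcing $v^{\top}\Sigma(w)v$ to be large whenever $w(C_r)w(C_{r'})$ is large (because then the points indicated by $w$ genuinely look like a mixture of two clouds whose projections onto $v$ are $\Delta_m$ standard deviations apart); (ii) the anti-concentration-based upper bounds on $v^{\top}\Sigma(w)v$ from Lemma~\ref{lem:anti-conc-upper-bound-w} and the rough bound Lemma~\ref{lem:rough-bound-spectral-upper}, which make no assumption on the means of $\cD_r,\cD_{r'}$ and hence apply unchanged; and (iii) the threshold-polynomial trick of Lemmas~\ref{lem:poly-approximate-threshold} and~\ref{lem:consequences-of-threshold-polynomials} to combine (i) and (ii) inside low-degree SoS, incurring the same $\log\kappa$ loss (and none for $k=2$).

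For the lower bound, the parameter constraints $\cA_3$ give the SoS inequality $v^{\top}\Sigma(w)v\geq \frac{k^2}{n^2}\sum_{i\in C_r,\,j\in C_{r'}}w_iw_j\Iprod{x_i-x_j,v}^2$, obtained by discarding the manifestly SoS terms indexed by pairs outside $C_r\times C_{r'}$ (using $w_i=w_i^2$). SoS Cauchy--Schwarz (Fact~\ref{fact:sos-holder} with $p=2$, multipliers $w_iw_j$) then yields
\[
\cA\sststile{O(1)}{}\Set{\,w(C_r)w(C_{r'})\cdot v^{\top}\Sigma(w)v \;\geq\; \Paren{w(C_{r'})\Iprod{v,s_r}-w(C_r)\Iprod{v,s_{r'}}}^2\,}\mcom
\]
where $s_r=\tfrac{k}{n}\sum_{i\in C_r}w_ix_i$. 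Writing $s_r=w(C_r)\mu(r)+e_r$ with $e_r:=\tfrac{k}{n}\sum_{i\in C_r}w_i(x_i-\mu(r))$ and bounding the fluctuation by a second SoS Cauchy--Schwarz step, $\Iprod{v,e_r}^2\leq w(C_r)\,v^{\top}\Sigma(r)v$ (using $1-w_i=(1-w_i)^2$ and that $\mu(r),\Sigma(r)$ are the empirical mean and covariance of $C_r$), together with $(a+b)^2\geq\tfrac12a^2-b^2$ and the hypothesis $\Iprod{\mu(r)-\mu(r'),v}^2\geq\Delta_m^2\,v^{\top}(\Sigma(r)+\Sigma(r'))v$, I get
\[
\cA\sststile{O(1)}{}\Set{\,w(C_r)w(C_{r'})\,v^{\top}\Sigma(w)v + 2w(C_r)w(C_{r'})\,v^{\top}(\Sigma(r)+\Sigma(r'))v \;\geq\; \tfrac12\Delta_m^2\,w(C_r)^2w(C_{r'})^2\,v^{\top}(\Sigma(r)+\Sigma(r'))v\,}\mper
\]
Raising to the $s$-th power puts this in the same form as Lemma~\ref{lem:spectral-lower-bound-intersection-clustering}, with $\Delta_m^2$ playing the role of the spectral ratio; note this derivation uses only SoS Cauchy--Schwarz, no anti-concentration.

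From here the argument is essentially identical to the proof of Lemma~\ref{lem:intersection-bounds-from-spectral-separation}: since $\cA\sststile{}{}\Set{0\leq w(C_r)\leq1}$, the threshold expression $\cJ(w(C_r))(w(C_r)-\delta)+\delta\rho$ is SoS-nonnegative; multiplying the powered lower bound by it, using $\cJ(w(C_r))w(C_r)\geq w(C_r)-\delta$ together with Lemma~\ref{lem:anti-conc-upper-bound-w} to cancel $(v^{\top}\Sigma(w)v)^s$ against $(Cs/\delta^2)^s(v^{\top}\Sigma(r)v)^s$, and using Lemma~\ref{lem:rough-bound-spectral-upper} to absorb the residual $\delta\rho\,(v^{\top}\Sigma(w)v)^s$ term into $\delta\rho\cdot\max_i(v^{\top}\Sigma(i)v)^s$, leaves an inequality of the form $w(C_r)^2w(C_{r'})^2\,(v^{\top}(\Sigma(r)+\Sigma(r'))v)^s\leq\mathrm{(err)}\cdot(v^{\top}(\Sigma(r)+\Sigma(r'))v)^s$ after choosing $\rho$ polynomially small in $\kappa$ and $\eta$ exactly as in the spectral proof. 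Cancelling the common factor via Lemma~\ref{lem:cancellation-SoS-constant-RHS} and using $\Delta_m\gg Cs/\delta$ to make the error $O(\delta)$ gives $w(C_r)^2w(C_{r'})^2\leq O(\delta)$, hence $w(C_r)w(C_{r'})\leq O(\sqrt\delta)$; the degree is governed by $\deg\cJ=O(\log(1/\rho)/\delta^2)=O\big(s(\delta)\log\kappa/\delta^2\big)$, and for $k=2$ the rough bound depends only on $\Sigma(r),\Sigma(r')$, so one may take $\rho=2^{-\Theta(s)}\delta/k$ and the $\kappa$-dependence disappears.

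The main obstacle is exactly the one already overcome in the spectral case: combining the upper and lower bounds on $v^{\top}\Sigma(w)v$ wants the case split ``if $w(C_r)^2>\delta$ then $\dots$'', which is not directly expressible in low-degree SoS and must be routed through the threshold polynomial $\cJ$ (and hence the rough bound Lemma~\ref{lem:rough-bound-spectral-upper}), costing the $\log\kappa$ factor for $k>2$. The genuinely new piece --- the mean-separation lower bound --- is comparatively routine: its only subtlety is that the $w$-selection could bias the empirical mean of $C_r$ toward that of $C_{r'}$ in direction $v$, shrinking the apparent separation, but this bias is at most $\sqrt{w(C_r)\,v^{\top}\Sigma(r)v}\ll\Delta_m\sqrt{v^{\top}(\Sigma(r)+\Sigma(r'))v}$ in the relevant regime and, crucially, is controlled by a pure SoS Cauchy--Schwarz step requiring no case analysis.
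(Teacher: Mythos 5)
Your lower bound via Cauchy--Schwarz is an appealing, more elementary alternative to the paper's proof, but it introduces a fatal mismatch in the exponent of $w(C_r)w(C_{r'})$. Your degree-$2$ inequality has the schematic form $W\cdot v^{\top}\Sigma(w)v + W\cdot v^{\top}(\Sigma(r)+\Sigma(r'))v \gtrsim \Delta_m^2\,W^2\cdot v^{\top}(\Sigma(r)+\Sigma(r'))v$ with $W := w(C_r)w(C_{r'})$ --- note the $W^2$ on the right, which is unavoidable once you lower-bound $\bigl(w(C_{r'})\langle v,s_r\rangle - w(C_r)\langle v,s_{r'}\rangle\bigr)^2$ by the separation condition, since that linear form is itself $\approx W\langle \mu(r)-\mu(r'),v\rangle$. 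To combine with Lemma~\ref{lem:anti-conc-upper-bound-w} (which controls $(v^{\top}\Sigma(w)v)^s$, not $v^{\top}\Sigma(w)v$), you must raise this to the $s$-th power, which inflates the right side to $W^{2s}$. After multiplying by the threshold polynomial and running the same algebra as the spectral case you land at $W^{2s+O(1)} \le O(\delta)$, not $W^2 \le O(\delta)$; applying Lemma~\ref{lem:cancellation-SoS-constant-RHS} then gives only $W \le O(\delta^{1/(2s+O(1))})$, which is vacuous for the relevant $s=s(\delta)=\poly(1/\delta)$. So the step "Raising to the $s$-th power puts this in the same form as Lemma~\ref{lem:spectral-lower-bound-intersection-clustering}" is wrong: that lemma has $w(C_r)w(C_{r'})$ to the \emph{first} power times $(v^{\top}(\Sigma(r)+\Sigma(r'))v)^s$, whereas yours has $W^{2s}$.

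The paper avoids this by \emph{not} running Cauchy--Schwarz at degree $2$: it works directly at degree $2s$ and deploys Lemma~\ref{lem:lower-bounding-sums-SoS}, whose inequality $AC+BD\geq\tau AB$ (valid under $A+B\leq 1$) is precisely what lets the product $AB = w(C_r)w(C_{r'})$ appear to the first power on the cheap side. Concretely, with $C=\langle\mu_r-\mu(w),v\rangle^{2s}$, $D=\langle\mu_{r'}-\mu(w),v\rangle^{2s}$ and the SoS almost-triangle inequality giving $C+D \gtrsim (\Delta_m/2)^{2s}(v^\top(\Sigma(r)+\Sigma(r'))v)^s$, together with the subgaussianity upper bound $\frac{1}{n}\sum_i w_i\langle x_i-\mu(w),v\rangle^{2s}\leq (Cs)^s(v^\top\Sigma(w)v)^s$ (from the $\cA_5$ constraints), one gets $w(C_r)w(C_{r'})$ linearly, and after the threshold manipulation the exponent only rises to $w(C_r)^2w(C_{r'})$, so that $W^2\leq O(\delta)$ and hence $W\leq O(\sqrt\delta)$ follow. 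Your second SoS Cauchy--Schwarz step controlling the bias $\langle v,e_r\rangle^2 \le w(C_r)\,v^\top\Sigma(r)v$ is a nice observation and correct as stated; the missing idea is really Lemma~\ref{lem:lower-bounding-sums-SoS} as a substitute for Cauchy--Schwarz, so that the intersection size never gets squared. (Minor: your first display should read $2k\,v^\top\Sigma(w)v \geq \frac{k^2}{n^2}\sum_{i\in C_r,j\in C_{r'}}w_iw_j\langle x_i-x_j,v\rangle^2$ under the paper's normalization of $\Sigma$ in $\cA_3$; a constant factor, not the source of the problem.)
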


As in the previous subsection, we can get a sum-of-squares proof of absolute constant degree for the special case of $k=2$ components.
 \begin{lemma}[Intersection Bounds from Mean Separation]\label{lem:intersection-bounds-from-mean-separation-2-components}
Let $X = C_1 \cup C_2$ be a good sample of size $n$. Suppose there exists a vector $v \in \R^d$ such that $\Iprod{\mu(1) -\mu(2),v}_2^2 \geq \Delta^2_m v^{\top} \Paren{\Sigma(1) +\Sigma(2)} v$.

Then, whenever $\Delta_m \gg \Theta(1)$,
\[
\cA \sststile{O(s(\delta)/\delta^2}{w} \Set{w(C_1)w(C_{2}) \leq  O(\sqrt{\delta}) }\mper
\]
\end{lemma}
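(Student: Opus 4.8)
The plan is to run the same two-step template as the spectral-separation argument (Lemmas~\ref{lem:spectral-lower-bound-intersection-clustering}--\ref{lem:intersection-bounds-from-spectral-separation}), but with the degree-$1$ ``mismatch polynomial'' $\iprod{x,v}$ in place of $\iprod{x,v}^2$, and to exploit the same phenomenon that lets the $k=2$ spectral bound avoid both the threshold polynomial $\cJ$ and any $\log\kappa$ loss. Write $\sigma^2 = v^{\top}(\Sigma(1)+\Sigma(2))v$ and $M=\iprod{\mu(1)-\mu(2),v}$, so the hypothesis reads $M^2\geq\Delta_m^2\sigma^2$, and let $\cD_w$ denote the subset indicated by $w$. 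As in the spectral case, the heart of the proof is: if $w(C_1)w(C_2)$ is large then $\iprod{\cdot,v}$ has large variance on $\cD_w$, while certifiable hypercontractivity of $\cD_w$ forbids this.

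\textbf{Step 1 (large intersection $\Rightarrow$ large variance of $\iprod{\cdot,v}$).} Work with the $w$-weighted two-sample statistic $\tfrac{k^2}{n^2}\sum_{i,j}w_iw_j\iprod{x_i-x_j,v}^{2s}$ (whose degree-$2$ instance is, via $\cA_3$, a fixed positive multiple of $v^{\top}\Sigma(w)v$). Restrict the sum to the cross pairs $i\in C_1,\ j\in C_2$ (and the symmetric ones), and for each such fixed pair write $\iprod{x_i-x_j,v}=M+E_{ij}$ with $E_{ij}=\iprod{(x_i-\mu(1))-(x_j-\mu(2)),v}$. The elementary numerical inequality $(M+E)^{2s}+E^{2s}\geq 2^{1-2s}M^{2s}$ (convexity of $t\mapsto t^{2s}$; Fact~\ref{fact:sos-almost-triangle}) holds for the constants $x_i$; multiplying by $w_iw_j\in[0,1]$ (from $\cA_2$), summing, using $\tfrac{k^2}{n^2}\sum_{i\in C_1,j\in C_2}w_iw_j=w(C_1)w(C_2)$, the mean-separation hypothesis $M^2\geq\Delta_m^2\sigma^2$, and the good-sample certifiable sub-gaussianity that controls $\tfrac{1}{|C_1||C_2|}\sum_{i\in C_1,j\in C_2}E_{ij}^{2s}\leq (Cs)^s\sigma^{2s}$, yields an SoS proof of degree $O(s)$ that
\[
\tfrac{k^2}{n^2}\textstyle\sum_{i,j}w_iw_j\iprod{x_i-x_j,v}^{2s}\ \geq\ \sigma^{2s}\Paren{2^{1-2s}\Delta_m^{2s}\,w(C_1)w(C_2)-(Cs)^s}.
\]

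\textbf{Steps 2--3 (upper bound, then combine).} Instantiating the certifiable hypercontractivity constraint $\cA_5$ at the rank-one matrix $Q=vv^{\top}$ (so $\Norm{\Pi Q\Pi}_F=v^{\top}\Sigma v=v^{\top}\Sigma(w)v$) gives $\tfrac{k^2}{n^2}\sum_{i,j}w_iw_j\iprod{x_i-x_j,v}^{2s}\leq (Cs)^s\Paren{v^{\top}\Sigma(w)v}^{s}$, and for $k=2$ the rough spectral upper bound of Lemma~\ref{lem:rough-bound-spectral-upper}, invoked with a constant anti-concentration parameter $\tau=\Theta(1/C)$ exactly as in the $k=2$ case of Lemma~\ref{lem:intersection-bounds-from-spectral-separation}, bounds $\Paren{v^{\top}\Sigma(w)v}^{s}\leq 2^{O(s)}\bigl((v^{\top}\Sigma(1)v)^{s}+(v^{\top}\Sigma(2)v)^{s}\bigr)\leq 2^{O(s)}\sigma^{2s}$; this is the only place anti-concentration of $\cD_w$ enters, carries no $\kappa$-dependence, and costs degree $O(s(\delta)/\delta^2)$. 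Chaining with Step~1 gives $2^{1-2s}\Delta_m^{2s}\,w(C_1)w(C_2)\leq 2^{O(s)}$, i.e.\ $w(C_1)w(C_2)\leq\bigl(2^{O(1)}/\Delta_m^{2}\bigr)^{s}$; once $\Delta_m$ exceeds the absolute constant in the hypothesis this is $\leq\delta^{s/2}2^{O(s)}$, from which $w(C_1)w(C_2)\leq O(\sqrt\delta)$ follows after the same constant-RHS cancellation (Lemma~\ref{lem:cancellation-SoS-constant-RHS}) used in the spectral proof, with total degree $O(s(\delta)/\delta^2)$ and no $\log\kappa$.

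\textbf{Main obstacle.} The ``real-world'' proof is immediate---on $\cD_w$ the linear form $\iprod{\cdot,v}$ sits near $\iprod{\mu(1),v}$ on a $w(C_1)$ fraction and near $\iprod{\mu(2),v}$ on a $w(C_2)$ fraction, and these are $\Delta_m\sigma$ apart, so the variance is $\gtrsim w(C_1)w(C_2)\Delta_m^2\sigma^2$. The difficulty is that $\cD_w\cap C_1$ need not be uniform on $C_1$, so its empirical mean may drift from $\mu(1)$, and one cannot case-split inside low-degree SoS on which points are ``typical''. The device that sidesteps this---and the crux of the plan---is to route everything through the $w$-weighted differences $x_i-x_j$, so that the only distributional input is a moment \emph{upper} bound for the full two-way convolution of $C_1$ against $C_2$, applied pointwise to fixed data and then reweighted; the complementary ingredient, borrowed verbatim from the $k=2$ spectral analysis, is the $\kappa$-free rough upper bound on $v^{\top}\Sigma(w)v$. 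The remaining work is bookkeeping: tracking the factor-$k$ normalizations of $\cA_3,\cA_5$, verifying that the good-sample definition indeed supplies the cross-cluster sub-gaussianity of $E_{ij}$ (it follows from item~$4$ via the SoS triangle inequality), and checking the simpler direct argument for $k=2$ parallels Lemma~\ref{lem:intersection-bounds-from-mean-separation}.
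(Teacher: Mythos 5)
Your proposal is correct in substance, but it does \emph{not} follow the paper's route for this lemma; it is a genuinely different and, for the $k=2$ case, arguably cleaner argument. The paper proves this as a tweak to the $k>2$ proof of Lemma~\ref{lem:intersection-bounds-from-mean-separation}: it works with the one-sample statistic $\tfrac{1}{n}\sum_i w_i\iprod{x_i-\mu(w),v}^{2s}$ involving the \emph{indeterminate} mean $\mu(w)$, extracts $w(C_r)w(C_{r'})$ via Lemma~\ref{lem:lower-bounding-sums-SoS} (the $AC+BD\ge\tau AB$ device), converts the gap $\iprod{\mu_r-\mu(w),v}^{2s}+\iprod{\mu_{r'}-\mu(w),v}^{2s}$ into $\Delta_m^{2s}\sigma^{2s}$ by the almost-triangle inequality, and then must multiply through by the threshold polynomial $\cJ(w(C_r))(w(C_r)-\delta)+\delta\rho$ and invoke Lemma~\ref{lem:upper-bound-variance-w-samples-sharper}, finishing by replacing the $\lambda_{\max}$-based rough bound with the $k=2$ version of Lemma~\ref{lem:rough-bound-spectral-upper}. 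You instead route through the two-sample statistic $\tfrac{k^2}{n^2}\sum_{i,j}w_iw_j\iprod{x_i-x_j,v}^{2s}$, restrict to cross-pairs $i\in C_1,\ j\in C_2$, and observe that for \emph{fixed data} the offset $M=\iprod{\mu(1)-\mu(2),v}$ already sits inside $\iprod{x_i-x_j,v}$, so the power-mean inequality $(M+E_{ij})^{2s}+E_{ij}^{2s}\ge 2^{1-2s}M^{2s}$ is a \emph{numerical} inequality that you multiply by the SoS-nonnegative weight $w_iw_j$ and sum; the factor $w(C_1)w(C_2)$ drops out of the cross-pair mass with no auxiliary lemma and no need to argue about $\mu(w)$. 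Because your lower bound is linear in $w(C_1)w(C_2)$ (not raised to the $s$-th power) and the $k=2$ rough bound of Lemma~\ref{lem:rough-bound-spectral-upper} gives a direct $\kappa$-free upper bound on $(v^\top\Sigma(w)v)^s$, you can chain the two estimates with no case split, and in particular you never need $\cJ$. This is a real simplification: the paper's $k=2$ proof retains the threshold-polynomial machinery (just with a $\kappa$-free $\rho$), while yours eliminates it. The trade-off is that your cross-pair decomposition is tied to the two-cluster bipartition and does not by itself explain the $k>2$ bound, whereas the paper reuses its $k>2$ template verbatim. A couple of small points worth noting as you polish: (i) the division by $\sigma^{2s}$ at the end requires $\sigma>0$, which degenerates exactly when the mean-separation hypothesis is vacuous (the paper has the same issue, so this is not a new gap); (ii) the exponents in Lemma~\ref{lem:rough-bound-spectral-upper} technically track $s(\tau)$ for the anti-concentration parameter $\tau$ used there, and matching them against the $s=s(\delta)$ that appears in your Steps~1--2 needs a line of bookkeeping -- again, the paper's own $k=2$ spectral proof glosses over the same point; and (iii) your derivation actually yields degree $O(s(\delta))$, slightly better than the $O(s(\delta)/\delta^2)$ in the lemma statement, because you removed the $\cJ$ contribution entirely.
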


We will need the following technical fact in our proof.\begin{lemma}[Lower Bounding Sums] \label{lem:lower-bounding-sums-SoS}
Let $A,B,C,D$ be scalar-valued indeterminates. Then, for any $\tau >0$,
\[
\Set{0 \leq A, B \leq A+B \leq 1} \cup \Set{0\leq C,D} \cup \Set{C+D\geq \tau} \sststile{2}{A,B,C} \Set{ AC+ BD \geq \tau AB}\mper
\]

\end{lemma}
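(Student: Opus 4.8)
The plan is to prove the implication directly, by exhibiting an explicit polynomial identity; no analytic input is needed. Write the three hypotheses on $A,B$ as $g_1 = A$, $g_2 = B$, $g_3 = 1-A-B$ (all $\geq 0$, where I read the compact hypothesis ``$0\leq A,\,B\leq A+B\leq 1$'' as the conjunction $A\geq 0$, $B\geq 0$, $A+B\leq 1$), and the three on $C,D$ as $g_4 = C$, $g_5 = D$, $g_6 = C+D-\tau$ (all $\geq 0$). The key step is to verify the identity
\[
AC + BD - \tau AB \;=\; (1-A-B)\,AC \;+\; (1-A-B)\,BD \;+\; A^2 C \;+\; B^2 D \;+\; AB\,(C+D-\tau),
\]
which I would check by direct expansion of the right-hand side: $(1-A-B)AC + A^2 C = AC - ABC$, similarly $(1-A-B)BD + B^2 D = BD - ABD$, and $AB(C+D-\tau) = ABC + ABD - \tau AB$, so the five terms telescope to $AC + BD - \tau AB$.

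Having the identity, I would then observe that each of the five summands on the right is of the form (sum-of-squares polynomial) $\times$ (product of the hypotheses $g_i$): the terms $(1-A-B)AC = g_3 g_1 g_4$, $(1-A-B)BD = g_3 g_2 g_5$, and $AB(C+D-\tau) = g_1 g_2 g_6$ are products of three of the constraint polynomials with the constant (hence sum-of-squares) multiplier $1$, while $A^2 C = A^2 \cdot g_4$ and $B^2 D = B^2 \cdot g_5$ carry the perfect-square multipliers $A^2$ and $B^2$. Read inside the SoS proof system, the displayed identity is therefore exactly a sum-of-squares derivation of $\{AC + BD - \tau AB \geq 0\}$ from $\{g_1\geq 0,\dots,g_6\geq 0\}$, and its degree equals the largest degree occurring among the five terms, namely $3$.

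There is essentially no obstacle here beyond this bookkeeping: one only has to confirm that $g_3 = 1-A-B \geq 0$ is legitimately among the hypotheses (so that it may be multiplied against $AC$ and $BD$), and that the degree-$3$ products $g_1g_3g_4$, $g_2g_3g_5$, $g_1g_2g_6$ are permitted. A minor point worth flagging is the degree count in the statement: matching the coefficient of the monomial $AB$ forces a product such as $A(1-A-B)$, $B(1-A-B)$, $A^2$, or $B^2$ to enter any valid certificate, and each of these then drags in a degree-$3$ monomial that must be cancelled, so a genuinely degree-$2$ derivation does not exist; the bound in the lemma should read $3$. Modulo that, the identity above is the whole proof.
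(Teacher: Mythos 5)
Your identity
\[
AC + BD - \tau AB \;=\; (1-A-B)\,AC \;+\; (1-A-B)\,BD \;+\; A^2 C \;+\; B^2 D \;+\; AB\,(C+D-\tau)
\]
is exactly what the paper proves, just written once as an explicit SoS decomposition rather than as the chain $AC+BD \geq (A+B)(AC+BD) = A^2C + AB(C+D) + B^2D \geq AB(C+D) \geq \tau AB$; if you expand that chain into the difference of endpoints you recover your five summands term by term. So the approach is the same as the paper's. Your side remark on the degree is also correct: the certificate is degree $3$ (each summand is a product of three degree-$1$ constraint factors, or a degree-$1$ constraint times a square), and a genuinely degree-$2$ proof cannot exist — tracking the degree-$2$ moment/coefficient constraints, one is forced into $\lambda_{12} = -\tau < 0$ for the multiplier of $g_1g_2 = AB$, which is inadmissible. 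The "$2$" in the lemma statement is a typo and should be $3$ (or $4$ if one rounds SoS degrees up to even). The informal phrasing of your degree argument — that including $A(1-A-B)$, $B(1-A-B)$, $A^2$, or $B^2$ "drags in a degree-$3$ monomial" — is a bit off (those are degree-$2$ polynomials); the real obstruction is the sign constraint on the multipliers as above, but your conclusion is right.
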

\begin{proof}
We have:
\begin{multline}
\Set{0 \leq A, B \leq A+B \leq 1} \cup \Set{0\leq C,D} \cup \Set{C+D\geq F} \sststile{}{} \Biggl\{ AC + BD \geq (A+B) (AC+BD) \\\geq A^2C + AB (C+D) + B^2 D \geq AB(C+D) \geq \tau AB\Biggr\}
\end{multline}
\end{proof}

\begin{proof}[Proof of Lemma~\ref{lem:intersection-bounds-from-mean-separation}]
Let $v$ be the direction in which the means of $C_r$ and $C_{r'}$ are separated. Then, we have:
\begin{equation} \label{eq:mean-separation-bound}
\Iprod{\mu_r -\mu_{r'},v}_2^{2s} \geq \Delta_m^{2s} \Paren{v^{\top} \Paren{\Sigma(r) +\Sigma(r')} v}^s\mper
\end{equation}

Assume, WLOG, that $v^{\top}\Sigma(r) v \leq v^{\top} \Sigma(r')v$.

Applying Lemma~\ref{lem:lower-bounding-sums-SoS} with $A = w(C_r)$, $B  = w(C_{r'})$, $C = \Iprod{\mu_r -\mu(w),v}^{2s}$ and $D =  \Iprod{\mu_{r'} -\mu(w),v}^{2s}$ along with the SoS Almost Triangle Inequality (Fact~\ref{fact:sos-almost-triangle}) and certifiable Sub-gaussianity constraints ($\cA_5$) yields:
\begin{align*}
\cA \sststile{4s}{\mu, w} & \Biggl\{ \Paren{Cs}^s \Paren{v^{\top} \Sigma(w) v}^s \geq \frac{1}{n} \sum_{i \leq n} w_i \Iprod{x_i - \mu(w),v}^{2s} \geq \frac{1}{n} \sum_{i \in C_r \cup C_{r'}} w_i \Iprod{x_i -\mu(w),v}^{2s}\\
&\geq \frac{1}{2^s} \Paren{w(C_{r}) \Iprod{\mu_r -\mu(w),v}^{2s}-\frac{1}{n} \sum_{i \in C_r} w_i \Iprod{x_i -\mu_r,v}^{2s}} \\
&+ \frac{1}{2^s} \Paren{w(C_{r'}) w_i \Iprod{\mu_{r'} -\mu(w),v}^{2s} - \frac{1}{n} \sum_{i \in C_{r'}} w_i \Iprod{x_i -\mu_{r'},v}^{2s}}\\
&\geq \frac{1}{2^s} \Paren{w(C_{r}) \Iprod{\mu_r -\mu(w),v}^{2s} + w(C_{r'}) \Iprod{\mu_{r'} -\mu(w),v}^{2s}} - \frac{1}{2^s} \Paren{v^{\top} \Sigma(r) v}^s- \frac{1}{2^s} \Paren{v^{\top} \Sigma(r') v}^s\\
&\geq \frac{1}{2^{s+1}} \Paren{w(C_{r})w(C_{r'}) \Paren{\Iprod{\mu_r -\mu(w),v}^{2s} + \Iprod{\mu_{r'} -\mu(w),v}^{2s}}} - \frac{1}{2^s} \Paren{v^{\top} \Sigma(r) v}^s- \frac{1}{2^s} \Paren{v^{\top} \Sigma(r') v}^s\\
&\geq \Paren{\frac{\Delta_m}{4}}^{2s} \Paren{w(C_{r})w(C_{r'}) \Paren{\Paren{v^{\top} \Sigma(r) v}^{s}+ \Paren{v^{\top} \Sigma(r') v}^{s}}} - \frac{1}{2^s} \Paren{v^{\top} \Sigma(r) v}^s- \frac{1}{2^s} \Paren{v^{\top} \Sigma(r') v}^s
\Biggr\}\mcom
\end{align*}

Rearranging the chain of reasoning above thus yields:
\begin{equation}
\cA \sststile{4s}{}  \Biggl\{ 2^s \Paren{\Paren{Cs}^s \Paren{v^{\top} \Sigma(w) v}^s + \Paren{v^{\top} \Sigma(r) v}^s + \Paren{v^{\top} \Sigma(r') v}^s} \geq \Delta_m^{2s}w(C_{r})w(C_{r'}) \Paren{\Paren{v^{\top} \Sigma(r) v}^{s}+ \Paren{v^{\top} \Sigma(r') v}^{s}} \Biggr\} \mper \label{eq:basic-bound-mean-separation}
\end{equation}

Lemma~\ref{lem:consequences-of-threshold-polynomials} shows a low-degree SoS proof of non-negativity of $\cJ(w(C_r))(w(C_r) - \delta) + \delta \rho $ in variables $w$:
\[
\cA \sststile{4s}{w}  \Set{\cJ(w(C_r))(w(C_r) - \delta) + \delta \rho \geq 0}\mper
\]

Thus, we can multiply \eqref{eq:basic-bound-mean-separation} by $\Paren{\cJ(w(C_r))(w(C_r) - \delta) + \delta \rho}$ throughout to obtain:
\begin{multline}
\cA \sststile{4s}{\mu, w}   \Biggl\{ \Paren{\cJ(w(C_r))(w(C_r) - \delta) + \delta \rho} \Paren{\Paren{2Cs}^s \Paren{v^{\top} \Sigma(w) v}^s + 2^s \Paren{v^{\top} \Sigma(r) v}^s +  2^s \Paren{v^{\top} \Sigma(r') v}^s} \\\geq  \Delta_m^{2s} \Paren{\cJ(w(C_r))(w(C_r) - \delta) + \delta \rho} \Paren{w(C_{r})w(C_{r'})} \Paren{\Paren{v^{\top} \Sigma(r) v}^{s}+ \Paren{v^{\top} \Sigma(r') v}^{s}} \Biggr\} \mper \label{eq:basic-bound-mean-separation}
\end{multline}

Applying Lemma~\ref{lem:upper-bound-variance-w-samples-sharper} for the first term on the LHS and using that $\cA \sststile{}{} \Set{\Paren{\cJ(w(C_r))(w(C_r) - \delta) + \delta \rho} \leq 2}$ and rearranging the above inequality gives:
\begin{multline}
\cA \sststile{4s}{\mu, w}   \Biggl\{ (2Cs)^s \Paren{\delta \rho s^{2s} \Paren{Cs}^{s} k \lambda_{max}(v)^{s} + 2\frac{1}{\delta^{2s}} \Paren{Cs}^s \Paren{v^{\top} \Sigma(r) v}^s} + 2^s \Paren{v^{\top} \Sigma(r) v}^s +  2^s \Paren{v^{\top} \Sigma(r') v}^s\\ + 2\Delta_m^{2s}\delta \Paren{\Paren{v^{\top} \Sigma(r) v}^{s}+ \Paren{v^{\top} \Sigma(r') v}^{s}} \\\geq  \Delta_m^{2s} \cJ(w(C_r)) \Paren{w^2(C_{r})w(C_{r'})} \Paren{\Paren{v^{\top} \Sigma(r) v}^{s}+ \Paren{v^{\top} \Sigma(r') v}^{s}} \Biggr\} \mper \label{eq:bound-mean-separation-2clustering}
\end{multline}

Using Lemma~\ref{lem:consequences-of-threshold-polynomials}, we also have:
\[
\cA \sststile{4s}{w}  \Set{ \cJ(w(C_r))w(C_r) \geq \Paren{w(C_r)-\delta }}\mper
\]

Using this bound on the RHS of \eqref{eq:bound-mean-separation-2clustering} and rearranging yields:
\begin{multline}
\cA \sststile{4s}{\mu, w}   \Biggl\{ (2Cs)^s \Paren{\delta \rho \lambda_{max}^s + 2\frac{1}{\delta^{2s}} \Paren{Cs}^s \Paren{v^{\top} \Sigma(r) v}^s} + 2^s \Paren{v^{\top} \Sigma(r) v}^s +  2^s \Paren{v^{\top} \Sigma(r') v}^s\\ + 2\Delta_m^{2s}\delta \Paren{\Paren{v^{\top} \Sigma(r) v}^{s}+ \Paren{v^{\top} \Sigma(r') v}^{s}} \\\geq \Delta_m^{2s} \Paren{w^2(C_{r})w(C_{r'})} \Paren{\Paren{v^{\top} \Sigma(r) v}^{s}+ \Paren{v^{\top} \Sigma(r') v}^{s}} \Biggr\} \mper
\end{multline}

Dividing throughout by $\Delta_m^{2s} \Paren{\Paren{v^{\top} \Sigma(r) v}^{s}+ \Paren{v^{\top} \Sigma(r') v}^{s}}$ and recalling that $ v^{\top} \Sigma(r) v \leq v^{\top} \Sigma(r') v$ yields:

\begin{equation}
\cA \sststile{4s}{\mu, w}   \Biggl\{  \Paren{w^2(C_{r})w(C_{r'})} \leq \Delta_m^{-2s} (2Cs)^s \Paren{\delta \rho \kappa^s} + 2 \Paren{\frac{C\sqrt{s}}{\Delta_m\delta}}^{2s} + 2\delta \Biggr\} \mper
\end{equation}

Thus, choosing $\rho = \kappa^{-s}$ and using that $\Delta_m \gg Cs/\delta$ ensures that we obtain:

\begin{equation}
\cA \sststile{4s}{}   \Biggl\{  \Paren{w^2(C_{r})w^2(C_{r'})} \leq \Paren{w^2(C_{r})w(C_{r'})} \leq  O(\delta) \Biggr\} \mper
\end{equation}

\end{proof}

\paragraph{Improved SoS Degree Bounds for $k =2$}
\begin{proof}[Proof of Lemma~\ref{lem:intersection-bounds-from-mean-separation-2-components}]
We proceed exactly as in the above proof of Lemma~\ref{lem:intersection-bounds-from-mean-separation} up until \eqref{eq:bound-mean-separation-2clustering} where we invoke a rough eigenvalue upper bound on $\Sigma(w)$. We replace this bound by the sharper bound for the $k=2$ case given by Lemma~\ref{lem:rough-bound-spectral-upper} analogous to the case of spectral separation and  get to choose $\tau = O(1/\delta)$. We can  then finish the argument as in the proof of Lemma~\ref{lem:intersection-bounds-from-mean-separation} above.
\end{proof}

\subsection{Intersection Bounds from Relative Frobenius Separation of Covariances} \label{sec:intersection-bounds-frobenius}
In this section, we show that if $C_r$ and $C_{r'}$ are generated by Gaussians with covariances that are separated in relative Frobenius distance, then $w(C_r)w(C_{r'}) = O(\delta)$. 

Recall that in this case, $\Sigma(r)$ and $\Sigma(r')$ have the same range (as linear operators). 
Thus, WLOG, we can assume them to be full rank.

\begin{lemma}[Intersection Bounds from Relative Frobenius Separation]
Suppose $\Norm{\Sigma(r')^{-1/2}\Sigma(r)\Sigma(r')^{-1/2} - I}_F^2 \geq \Delta_{cov}^2 \Paren{ \Norm{\Sigma(r')^{-1/2}\Sigma(r)^{1/2}}_{op}^4}$ for $\Delta_{cov} \gg Cs(\delta)/\delta^2$. Then,
\[
\cA \sststile{O(s(\delta)\log \kappa}{w} \Set{w(C_r) w(C_{r'}) \leq O(\delta^{1/3})}\mper
\] \label{lem:intersection-bounds-from-rel-frob-sep}
\end{lemma}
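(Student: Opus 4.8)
The plan is to follow the template of the spectral‑separation argument (Lemma~\ref{lem:intersection-bounds-from-spectral-separation}), replacing anti‑concentration throughout by certifiable \emph{hypercontractivity} and bounded variance of degree‑$2$ polynomials. First I would reduce to a normalized instance: applying the affine change of coordinates $x\mapsto\Sigma(r')^{-1/2}x$ (which preserves all the certifiable properties recorded in the ``good sample'' definition, cf.\ the discussion following Definition~\ref{def:nice_distribution}) and passing to two‑sample‑centered points to kill the means, one may assume $\Sigma(r')=I$ and $\mu(r)=\mu(r')=0$, so the hypothesis becomes $\Norm{\Sigma(r)-I}_F^2\geq\Delta_{cov}^2\Norm{\Sigma(r)}_{op}^2$. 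The ``mismatch'' marker is the degree‑$2$ polynomial $Q(x)=x^\top Q x$ with $Q=\Sigma(r)-I$; one checks $\E_{x\sim C_{r'}}Q(x)=\tr Q$ and $\E_{x\sim C_r}Q(x)=\tr Q+\Norm{\Sigma(r)-I}_F^2$, so the two means of $Q$ differ by exactly $\Norm{\Sigma(r)-I}_F^2\geq\Delta_{cov}^2\Norm{\Sigma(r)}_{op}^2$.

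Next I would produce a two‑sided estimate of $\Var_{\cD_w}[Q]$ inside low‑degree SoS. For the lower bound — the analogue of Lemma~\ref{lem:Large-Intersection-Implies-High-Variance-frobenius} — I examine $Q(z)-Q(z')$ for independent $z,z'\sim\cD_w$: a $w(C_r)w(C_{r'})$ fraction of such pairs are independent draws from $C_r\times C_{r'}$, on which certifiable bounded variance ($\cA_6$) of $C_r$ and $C_{r'}$ shows $Q(z)-Q(z')$ concentrates around $\Norm{\Sigma(r)-I}_F^2$ up to a mean‑square error $O(C)(\Norm{\Sigma(r)^{1/2}Q\Sigma(r)^{1/2}}_F^2+\Norm{Q}_F^2)\leq O(C)\Norm{\Sigma(r)}_{op}^2\Norm{\Sigma(r)-I}_F^2$, which is negligible next to $\Delta_{cov}^2\Norm{\Sigma(r)-I}_F^2$ since $\Delta_{cov}\gg Cs/\delta^2$; expanding $\frac{1}{2}\E_{z,z'\sim\cD_w}(Q(z)-Q(z'))^2=\Var_{\cD_w}[Q]$ and keeping the cross terms gives $\cA\sststile{O(1)}{w}\Set{\Var_{\cD_w}[Q]\geq\Omega(1)\cdot w(C_r)w(C_{r'})\Norm{\Sigma(r)-I}_F^2-(\mathrm{error})}$. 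For the upper bound — the analogue of Lemma~\ref{lem:variance-upper-bound-hypercontractivity} — certifiable hypercontractivity/bounded variance of $\cD_w$ itself ($\cA_5,\cA_6$) gives $\Var_{\cD_w}[Q]\leq C\Norm{\Pi Q\Pi}_F^2$ for the indeterminate $\Pi$ with $\Pi^2=\Sigma(w)$, and the substantive work is to bound $\Norm{\Pi Q\Pi}_F^2=\Norm{\Sigma(w)^{1/2}Q\Sigma(w)^{1/2}}_F^2$ against $\Norm{\Sigma(r)^{1/2}Q\Sigma(r)^{1/2}}_F^2$ and $\Norm{\Sigma(r')^{1/2}Q\Sigma(r')^{1/2}}_F^2=\Norm{Q}_F^2$. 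This is where the new tool enters: factoring $\Sigma(w)^{1/2}Q\Sigma(w)^{1/2}$ through $\Sigma(r)$ or $\Sigma(r')$ as $ABA^\top$ and invoking a low‑degree SoS version of the contraction inequality $\Norm{ABA^\top}_F\leq\Norm{A}_{op}^2\Norm{B}_F$ (the SoS contraction lemma, Lemma~\ref{lem:contraction-property}) reduces matters to controlling relative spectral norms such as $\Norm{\Sigma(r)^{-1/2}\Sigma(w)\Sigma(r)^{-1/2}}_{op}$. Exactly as in the spectral case, when $w(C_r)$ is large anti‑concentration of $\cD_w$ forces $\Sigma(w)$ to be comparable to $\Sigma(r)$ in the relevant sense, but SoS‑ifying this conditional statement requires both the threshold‑approximating polynomial $\cJ$ of Lemma~\ref{lem:poly-approximate-threshold} and a ``rough'' relative‑spectral‑norm bound on $\Sigma(w)$ (the analogue of Lemma~\ref{lem:rough-bound-spectral-upper}); it is this step that forces the $\log\kappa$ factor in the final SoS degree $O(s(\delta)\log\kappa)$.

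Finally I would chain the two bounds — multiplying through by $\cJ(w(C_r))(w(C_r)-\delta)+\delta\rho\geq0$ to emulate the real‑world case analysis as in the proof of Lemma~\ref{lem:intersection-bounds-from-spectral-separation}, using $\cJ(w(C_r))w(C_r)\geq w(C_r)-\delta$, taking $\rho=\kappa^{-s}$ and $\Delta_{cov}\gg Cs/\delta^2$ so that every ``error'' and ``rough'' term is absorbed — to conclude $\cA\sststile{O(s(\delta)\log\kappa)}{w}\Set{w(C_r)w(C_{r'})\leq O(\delta^{1/3})}$, with a closing cancellation step (Lemma~\ref{lem:cancellation-SoS-constant-RHS}) to extract the root; the cube root, rather than $\sqrt{\delta}$, reflects the extra factors of $w(C_r),w(C_{r'})$ accumulated through the conditional multiplications. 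I expect the variance \emph{upper} bound to be the main obstacle: pushing the matrix inequality $\Norm{ABA^\top}_F\leq\Norm{A}_{op}^2\Norm{B}_F$ and the ensuing conditional comparison of $\Sigma(w)$ with $\Sigma(r)$ and $\Sigma(r')$ through the low‑degree SoS system. The lower bound and the final combination should be routine adaptations of the spectral‑separation machinery.
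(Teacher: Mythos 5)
Your proposal is correct and follows essentially the same plan as the paper's proof: same normalization $\Sigma(r')\to I$ via an affine change of coordinates, same choice of marker polynomial $Q=\Sigma(r)-I$, same two-sided variance estimate with the contraction lemma (\cref{lem:contraction-property}), the rough spectral bound, and the threshold polynomial $\cJ$ driving the $\log\kappa$ degree, and the same cancellation step producing the $\delta^{1/3}$. The only cosmetic difference is in the variance lower bound: you expand $\frac12\E_{z,z'\sim\cD_w}(Q(z)-Q(z'))^2$ and keep the $C_r\times C_{r'}$ cross-terms, whereas the paper (\cref{lem:Large-Intersection-Implies-High-Variance-frobenius}) works directly with $(\E_{C_r}Q-\E_w Q)^2$ and $(\E_{C_{r'}}Q-\E_w Q)^2$ via the ``Lower Bounding Sums'' trick (\cref{lem:lower-bounding-sums-SoS}); both give the needed $\mathrm{poly}\bigl(w(C_r)w(C_{r'})\bigr)\cdot\Norm{\Sigma(r)-I}_F^4$ lower bound after absorbing the in-cluster variances $\E_{C_r}(Q-\E_{C_r}Q)^2,\E_{C_{r'}}(Q-\E_{C_{r'}}Q)^2$ using the bounded-variance constraint, with only the exponent on $w(C_r)w(C_{r'})$ differing (and ultimately irrelevant after cancellation).
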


As in the previous two subsections, we can get a constant degree sum-of-squares proof for the special case of $k=2$ components.
\begin{lemma}[Intersection Bounds from Relative Frobenius Separation,  Two Components]
Suppose $\Norm{\Sigma(2)^{-1/2}\Sigma(1)\Sigma(2)^{-1/2} - I}_F^2 \geq \Delta_{cov}^2 \Paren{ \Norm{\Sigma(2)^{-1/2}\Sigma(1)^{1/2}}_{op}^4}$. Then,
\[
\cA \sststile{O(1/\delta^2)}{} \Set{w(C_1) w(C_{2}) \leq O(\delta^{1/3})}\mper
\] \label{lem:intersection-bounds-from-rel-frob-sep-two-components}
\end{lemma}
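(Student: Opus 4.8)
The plan is to carry out, for $k=2$, the ``variance mismatch'' strategy sketched in \cref{sec:overview} for relative Frobenius separation (the analogs of \cref{lem:Large-Intersection-Implies-High-Variance-frobenius} and \cref{lem:variance-upper-bound-hypercontractivity}), taking advantage of the fact that for two components no threshold-polynomial machinery is needed.

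\textbf{Setup and reductions.} Apply the fixed linear map $x \mapsto \Sigma(2)^{-1/2}x$, which renames the indeterminates of $\cA$ without changing any of its constraints; so WLOG $\Sigma(2)=I$, and writing $\Sigma := \Sigma(1)$ the hypothesis reads $\Norm{\Sigma-I}_F^2 \geq \Delta_{cov}^2\Norm{\Sigma}_{op}^2$. If $C_1,C_2$ are also spectrally or mean separated at the thresholds of \cref{lem:intersection-bounds-from-spectral-separation} (in its $k=2$ form) or \cref{lem:intersection-bounds-from-mean-separation-2-components}, those lemmas already give the conclusion; so we may assume neither holds, which yields $\tfrac1{\Delta_{\spe}} I \preceq \Sigma \preceq \Delta_{\spe} I$ with $\Delta_{\spe} = \poly(1/\delta)$, hence $\Norm{\Sigma}_{op}^{-1} = \poly(1/\delta)$. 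Fix the (constant) matrix $Q := \Sigma - I$ and consider the degree‑$2$ polynomial $x \mapsto x^\top Q x$. A direct computation gives $\tr(Q\Sigma(1)) - \tr(Q\Sigma(2)) = \tr((\Sigma-I)^2) = \Norm{\Sigma-I}_F^2$, so for the mean‑cancelling two‑sample version $P_{ij} := Q(x_i-x_j) = (x_i-x_j)^\top Q (x_i-x_j)$ the $C_1$‑internal‑pair average $2\tr(Q\Sigma(1))$ and the $C_2$‑internal‑pair average $2\tr(Q\Sigma(2))$ differ by exactly $2\Norm{\Sigma-I}_F^2$ (up to the good‑sample slack).

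\textbf{Lower bound: large intersection forces large $\Var_{\hat C}[P]$.} Starting from the sum of squared deviations $V := \tfrac{k^2}{n^2}\sum_{i,j} w_i w_j (P_{ij}-m_w)^2$ appearing on the left of $\cA_6$ (with $m_w$ its own $w$‑weighted mean), I drop the nonnegative terms with $i,j$ not both inside $C_1$ or both inside $C_2$, then use the SoS identity $(a-b)^2 \le 2(a-m_w)^2 + 2(b-m_w)^2$ to lower bound $V$ by $\tfrac14$ times the $w$‑weighted mean of $(P_{ij}-P_{i'j'})^2$ over $(i,j)$ an internal $C_1$‑pair and $(i',j')$ an internal $C_2$‑pair. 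On that population $P_{ij}-P_{i'j'}$ equals the constant $2\Norm{\Sigma-I}_F^2$ up to within‑cluster fluctuations, which are controlled by the ``good sample'' certifiable‑bounded‑variance property for the true clusters: $\tfrac{k^2}{n^2}\sum_{i\neq\ell\in C_r}(P_{i\ell}-2\tr(Q\Sigma(r)))^2 \le C\Norm{\Sigma(r)^{1/2}Q\Sigma(r)^{1/2}}_F^2 \le C\Norm{\Sigma(r)}_{op}^2 \Norm{\Sigma-I}_F^2$, which the Frobenius hypothesis makes $\ll \Norm{\Sigma-I}_F^4$. Separating the constant gap from these fluctuations with \cref{fact:sos-almost-triangle}, and using $w_i \le 1$ and $w(C_r)^2 = \tfrac{k^2}{n^2}\sum_{i,j\in C_r}w_iw_j$, this yields
\[
\cA \sststile{O(1)}{w} \Set{ w(C_1)^2 w(C_2)^2 \Norm{\Sigma-I}_F^4 \leq O(1)\,V + O(C)\Norm{\Sigma}_{op}^2\Norm{\Sigma-I}_F^2 + O(C)\Norm{\Sigma-I}_F^2 }.
\]

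\textbf{Upper bound and combine.} By $\cA_6$, $\cA \sststile{O(1)}{w} \Set{V \le C\Norm{\Pi Q \Pi}_F^2}$ with $\Pi^2 = \Sigma(w)$ ($\cA_1$); SoSizing the contraction inequality $\Norm{AB}_F^2 \le \Norm{A}_{op}^2\Norm{B}_F^2$ (\cref{lem:contraction-property}) twice gives $\cA \sststile{O(1)}{} \Set{\Norm{\Pi Q \Pi}_F^2 \le \Norm{\Pi}_{op}^4\Norm{Q}_F^2 = \Norm{\Sigma(w)}_{op}^2\Norm{\Sigma-I}_F^2}$, and the $k=2$ rough spectral bound (\cref{lem:rough-bound-spectral-upper}, invoked at a constant anti‑concentration parameter exactly as in the $k=2$ spectral proof) gives $\Sigma(w) \preceq O(1)(\Sigma(1)+\Sigma(2))$, hence $\Norm{\Sigma(w)}_{op} \le O(1)(\Norm{\Sigma}_{op}+1)$. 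Chaining with the lower bound and dividing through by $\Norm{\Sigma-I}_F^4$, every error term becomes $O(C)\tfrac{(\Norm{\Sigma}_{op}+1)^2}{\Norm{\Sigma-I}_F^2}$ or smaller; since $\Norm{\Sigma-I}_F^2 \ge \Delta_{cov}^2\Norm{\Sigma}_{op}^2$ and $\Norm{\Sigma}_{op}^{-1} = \poly(1/\delta)$ (and, in the remaining regime $\Norm{\Sigma}_{op}\le \tfrac12$, the elementary bound $\Norm{\Sigma-I}_F^2 = \Omega(d)$, which dwarfs everything as $n \ge d^{\omega(1)}$), this is $\tfrac{\poly(1/\delta)}{\Delta_{cov}^2} \le O(\delta^{2/3})$ once $\Delta_{cov} \gg Cs(\delta)/\delta^2$. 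Thus $\cA \sststile{O(1/\delta^2)}{w} \Set{w(C_1)^2 w(C_2)^2 \le O(\delta^{2/3})}$, and since $0 \le w(C_1)w(C_2) \le 1$, \cref{lem:cancellation-SoS-constant-RHS} upgrades this to $w(C_1)w(C_2) \le O(\delta^{1/3})$. The SoS degree used is that of the constant‑degree rough bound and the degree‑$2$ variance constraints, which is $O(1/\delta^2)$ as claimed.

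\textbf{Main obstacle.} The delicate part is the lower bound: turning the informal claim ``at least one of the two internal‑pair averages of $P$ is $\Omega(\Norm{\Sigma-I}_F^2)$‑far from $m_w$'' into a low‑degree SoS inequality without any case split on which of $w(C_1), w(C_2)$ is large. The device — standard in this paper — is to never reason about $m_w$ directly but to lower bound the variance by the $w(C_1)^2 w(C_2)^2$‑weighted mean of $(P_{ij}-P_{i'j'})^2$ across the two clusters, in which the constant gap is already built in, and to peel off the fluctuations with \cref{fact:sos-almost-triangle} and the good‑sample bounded‑variance estimate. A secondary point: the contraction lemma needs the genuine square root $\Pi$ of $\Sigma(w)$, which is exactly why $\cA_1$ carries $\Pi$ as an indeterminate; and, unlike the spectral and mean cases, no threshold approximator is needed here precisely because for $k=2$ the rough bound on $\Sigma(w)$ costs only $O(1)$ degree — it is this step that must be replaced by the threshold‑polynomial argument in the general‑$k$ statement \cref{lem:intersection-bounds-from-rel-frob-sep}.
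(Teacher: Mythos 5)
Your overall skeleton — a variance-mismatch argument combining a lower bound à la \cref{lem:Large-Intersection-Implies-High-Variance-frobenius} with an upper bound via bounded variance, the contraction lemma, and a rough bound on $\Sigma(w)$ — matches the paper's strategy, and your lower-bound decomposition (going directly through the $w$-weighted mean of $(P_{ij}-P_{i'j'})^2$ across clusters rather than comparing each cluster-mean to $m_w$) is a valid, and arguably cleaner, restructuring of \cref{lem:Large-Intersection-Implies-High-Variance-frobenius}.

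Where you genuinely depart from the paper is the upper bound, and this is where there is a real gap. The paper's $k=2$ proof (\cref{lem:variance-upper-bound-hypercontractivity-Two-Components}) still routes through the threshold polynomial $\cJ$, because $\cJ$ combined with \cref{lem:anti-conc-upper-bound-w} gives the \emph{sharp} bound $\cJ(w(C_r))(w(C_r)-\delta)(v^\top\Sigma(w)v)^s \lesssim \delta^{-2s}(Cs)^s(v^\top\Sigma(r)v)^s$, so that after contraction the factor $\Norm{\Sigma(r)^{1/2}\Sigma(r')^{-1/2}}_{op}^{2}$ appears — and this \emph{exactly} matches the separation hypothesis $\Norm{\Sigma-I}_F^2 \ge \Delta_{cov}^2\Norm{\Sigma(r')^{-1/2}\Sigma(r)^{1/2}}_{op}^4$, producing an error term of order $\poly(s)/(\delta^2\Delta_{cov}^2)$ with no residue. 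You instead invoke only \cref{lem:rough-bound-spectral-upper}, which gives $\Sigma(w) \lesssim \Sigma(1)+\Sigma(2)$; after contraction the numerator becomes $(\Norm{\Sigma}_{op}+1)^2$ rather than $\Norm{\Sigma}_{op}^2$, and the separation hypothesis no longer cancels cleanly. You patch this with a case split (spectral-separation lemma if it applies, else $\Norm{\Sigma}_{op}\in[\Delta_{\spe}^{-1},\Delta_{\spe}]$), which is legitimate, but it injects an extra multiplicative factor $\max(1,\Delta_{\spe}^{2}) = \poly(1/\delta)^{\Omega(1)}$ into the error term. Concretely, tracing your final bound gives $(w(C_1)w(C_2))^2 \lesssim C\,\Delta_{\spe}^{2}/\Delta_{cov}^{2}$, so you need $\Delta_{cov} \gtrsim \Delta_{\spe}/\delta^{1/3} \gtrsim (Cs/\delta^2)/\delta^{1/3}$, which is a strictly stronger threshold than the lemma's implicit one ($\Delta_{cov}\gg Cs/\delta^2$). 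The paper's $\gg$ is loose enough that one could read this as acceptable, but your claim that ``no threshold approximator is needed here'' overstates the cleanliness of the tradeoff: you trade $\cJ$ for a worse constant in the separation, while the paper keeps $\cJ$ (just with $\rho=O(1)$) precisely so the separation cancellation is exact.

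Two smaller points. First, you write $\Norm{\Pi}_{op}^4$ and $\Norm{\Sigma(w)}_{op}$ as if they were SoS-meaningful scalars; what \cref{lem:contraction-property} actually needs is a polynomial-inequality hypothesis of the form $(v^\top\Sigma(w)v)^t \le \Delta\norm{v}^{2t}$ derived from \cref{lem:rough-bound-spectral-upper} and \cref{fact:operator_norm}, and the constant $\Delta$ then carries into the Frobenius bound. This is fixable but the current wording is not a proof step. Second, your auxiliary regime ``$\Norm{\Sigma}_{op}\le 1/2$, hence $\Norm{\Sigma-I}_F^2 = \Omega(d)$'' is unnecessary and in fact does not quite work: you already reduced to $\Norm{\Sigma}_{op}\ge 1/\Delta_{\spe}$ by the spectral case split, which covers the small-$\Norm{\Sigma}_{op}$ regime uniformly, whereas the $\Omega(d)$ bound introduces a spurious dimension-dependence. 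Finally, your remark that the SoS degree is $O(1/\delta^2)$ is not justified by your own accounting (every step you use is constant-degree); the $O(1/\delta^2)$ in the paper's statement comes from the degree-$2s(\delta)$ anti-concentration constraint and the degree-$O(1/\delta)$ threshold polynomial, neither of which your version invokes.
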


Let $Q$ be a $d \times d$ matrix-valued indeterminate. In the following, we write $Q(z)$ for $z^{\top}Qz$ (the quadratic form associated with $Q$). We also use the notation $\E_w Q = \frac{k}{n} \sum_{i,j} w_i w_j Q(x_i-x_j)$ - the polynomial computing the mean of $Q$ with respect to the subsample indicated by $w$. We also write $\E_{C_r} Q = \frac{k}{n} \sum_{i,j \in  C_r} Q(x_i-x_j) $ and $\E_{C_{r'}} Q = \frac{k}{n} \sum_{i,j \in  C_r} Q(x_i-x_j)$. We note that for any distribution $\cD$ with covariance $\Sigma$, $\E_{x,y \sim \cD} (x-y)^{\top}Q(x-y) = 2\tr(\Sigma Q)$.

\paragraph{Proof of Lemma~\ref{lem:intersection-bounds-from-rel-frob-sep}}
We can now proceed with the proof of Lemma~\ref{lem:intersection-bounds-from-rel-frob-sep}. As in the previous two subsections, the idea is to show a lower bound on the variance of some polynomial in terms of the intersection size $w(C_r)w(C_{r'})$ and couple it with an upper bound on the variance that follows from certifiable hypercontractivity to obtain an upper bound on $w(C_r)w(C_{r'})$.

Observe that the relative Frobenius separation condition is invariant  under linear transformations.
Thus, we can assume that $\Sigma(r') = I$ WLOG. This simplifies notation quite a bit in this argument. With this simplification, we now have: $\Norm{\Sigma(r)-I}_{F}^2 \geq \Delta_{cov}^2$. Further, the covariance of $C_r$ is now $\Sigma(r')^{-1/2} \Sigma(r) \Sigma(r')^{-1/2}$ and that of $C_{r'}$ is now $I$ after this linear transformation. It's also easy to verify that $\frac{k^2}{n^2} \sum_{i,j} w_i w_j \Sigma(r')^{-1/2}\Paren{x_i -x_j}\Paren{x_i -x_j}^{\top}\Sigma(r')^{-1/2} = 2 \Sigma(r')^{-1/2}\Sigma(w)\Sigma(r')^{-1/2}$. 

In order to simplify notation, we will simply treat $\Sigma(r') = I$ and $\Sigma(r)\rightarrow \Sigma(r')^{-1/2}\Sigma(r)\Sigma(r')^{-1/2}$ in the analysis below. 


We start with the lower-bound first.

\begin{lemma}[Large Intersection Implies High Variance] \label{lem:Large-Intersection-Implies-High-Variance-frobenius}
Let $Q = \Sigma(r')^{-1/2}\Sigma(r)\Sigma(r')^{-1/2}-I$.

\[
\cA \sststile{4}{w} \Set{4\E_w (Q-\E_w Q)^2 + 2  \E_{C_r} (Q-\E_{C_r}Q)^2 + 2 \E_{C_{r'}} (Q-\E_{C_{r'}}Q)^2\geq w(C_{r})^2w^2(C_{r'})\Norm{\Sigma(r')^{-1/2}\Sigma(r)\Sigma(r')^{-1/2}-I}_F^4}
\]
\end{lemma}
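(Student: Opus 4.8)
Use the notation $\E_w Q,\E_{C_r}Q,\E_{C_{r'}}Q$ (and the associated variances $\E_w(Q-\E_wQ)^2$, etc.) introduced just above, and recall $\E_{x,y}(x-y)^{\top}Q(x-y)=2\tr(\Sigma Q)$ for a sample with covariance $\Sigma$. The plan is the variance-lower-bound strategy of Section~\ref{sec:overview}. Under the stated abuse of notation $C_r$ has covariance $\Sigma_r:=\Sigma(r')^{-1/2}\Sigma(r)\Sigma(r')^{-1/2}$ and $C_{r'}$ has covariance $I$, so for $Q=\Sigma_r-I$ the trace identity yields $\E_{C_r}Q-\E_{C_{r'}}Q=2\tr\bigl((\Sigma_r-I)Q\bigr)=2\tr(Q^2)=2\Norm{\Sigma_r-I}_F^2$; hence the right-hand side of the Lemma is exactly $\tfrac14\,w(C_r)^2w(C_{r'})^2\,(\E_{C_r}Q-\E_{C_{r'}}Q)^2$. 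Intuitively, if $w$ has appreciable mass on both clusters then $z\mapsto z^{\top}Qz$ must have large empirical variance on the set indicated by $w$, and the two cluster-variance terms on the left supply the slack needed to pass from uniform to $w$-reweighted cluster means. (Using the empirical covariances of $C_r,C_{r'}$ instead of the population ones only perturbs constants, absorbed by the slack below.)

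For the SoS proof, first verify --- using only $\cA_2$, which makes $\E_w$ an average --- the degree-$4$ identity
\[
\E_w(Q-\E_wQ)^2=\tfrac12\,\tfrac{k^4}{n^4}\sum_{i,j,i',j'}w_iw_jw_{i'}w_{j'}\bigl(Q(x_i-x_j)-Q(x_{i'}-x_{j'})\bigr)^2 .
\]
Every summand is a nonnegative constant times the square $w_iw_jw_{i'}w_{j'}$, so discarding all quadruples except those with $i,j\in C_r$ and $i',j'\in C_{r'}$ only decreases the right side. On the surviving sum apply, term by term, $(a+b+c)^2\geq\tfrac13 b^2-a^2-c^2$ with $a=Q(x_i-x_j)-\E_{C_r}Q$, $b=\E_{C_r}Q-\E_{C_{r'}}Q$ (a constant), $c=\E_{C_{r'}}Q-Q(x_{i'}-x_{j'})$: the $b^2$ term contributes $\tfrac13\,w(C_r)^2w(C_{r'})^2(\E_{C_r}Q-\E_{C_{r'}}Q)^2$, while the $a^2$ and $c^2$ terms factor into independent sums over $C_r$ and $C_{r'}$ and are bounded, using $0\le w_iw_j\le 1$ and $0\le w(C_r)^2,w(C_{r'})^2\le1$, by $\E_{C_r}(Q-\E_{C_r}Q)^2$ and $\E_{C_{r'}}(Q-\E_{C_{r'}}Q)^2$. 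This gives
\[
\cA\sststile{4}{w}\Set{\E_w(Q-\E_wQ)^2+\E_{C_r}(Q-\E_{C_r}Q)^2+\E_{C_{r'}}(Q-\E_{C_{r'}}Q)^2\geq \tfrac13\,w(C_r)^2w(C_{r'})^2\bigl(\E_{C_r}Q-\E_{C_{r'}}Q\bigr)^2},
\]
and since each empirical variance is SoS-nonnegative, $4\E_w(Q-\E_wQ)^2+2\E_{C_r}(Q-\E_{C_r}Q)^2+2\E_{C_{r'}}(Q-\E_{C_{r'}}Q)^2$ is at least $2$ times the left side above; plugging in $(\E_{C_r}Q-\E_{C_{r'}}Q)^2=4\Norm{\Sigma_r-I}_F^4$ finishes the proof (with a factor $8/3$ to spare).

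The trace identity and the scalar SoS inequalities are routine; the one place to be careful is degree. The naive route --- restrict the single sum $\E_w(Q-\E_wQ)^2=\tfrac{k^2}{n^2}\sum_{i,j}w_iw_j(Q(x_i-x_j)-\E_wQ)^2$ to $i,j\in C_r$ and Cauchy-Schwarz out a factor $w(C_r)^2$ --- produces degree-$6$ monomials in $w$ (an $\E_wQ$ squared against $w(C_r)^2$), whereas the pair-of-pairs form keeps every monomial at degree $4$ and makes the cross-pair restriction against the \emph{constant} $\E_{C_r}Q-\E_{C_{r'}}Q$ automatic. Selecting that form is the only nontrivial step; no further obstacle is expected.
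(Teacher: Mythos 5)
Your proof is correct and takes a genuinely different route than the paper's. The paper's argument works with the single-index form $\E_w(Q-\E_wQ)^2=\frac{k^2}{n^2}\sum_{i,j}w_iw_j(Q(x_i-x_j)-\E_wQ)^2$ (degree $6$ in $w$, since $\E_wQ$ is itself degree $2$), restricts to same-cluster pairs $i,j\in C_r$ and $i,j\in C_{r'}$, and routes through the auxiliary Lemma~\ref{lem:lower-bounding-sums-SoS} to convert the two weighted deviation terms $w(C_r)^2(\E_{C_r}Q-\E_wQ)^2$ and $w(C_{r'})^2(\E_{C_{r'}}Q-\E_wQ)^2$ into a product $w(C_r)^2w(C_{r'})^2$. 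You instead work directly with the pair-of-pairs identity, restrict to cross-cluster quadruples $i,j\in C_r$, $i',j'\in C_{r'}$, and decompose each summand against the fixed constant $\E_{C_r}Q-\E_{C_{r'}}Q$. This buys two real simplifications: every monomial stays at degree $4$ in $w$ (consistent with the claimed SoS degree in the lemma, which the paper's degree-$6$ expansion obscures), and the target constant $\E_{C_r}Q-\E_{C_{r'}}Q$ appears immediately without any Lemma~\ref{lem:lower-bounding-sums-SoS}-style manipulation or reference to $\E_wQ$. You also correctly note that under the paper's own convention $\E_{x,y}(x-y)^{\top}Q(x-y)=2\tr(\Sigma Q)$ the separation is $\E_{C_r}Q-\E_{C_{r'}}Q=2\Norm{\Sigma_r-I}_F^2$, not $\Norm{\Sigma_r-I}_F^2$ as the paper states; this is a normalization inconsistency in the paper that your extra factor of $2$ safely absorbs. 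One small bookkeeping slip: the $b^2$ contribution carries the prefactor $\frac12\cdot\frac13=\frac16$ (not $\frac13$), so your penultimate display should have $\frac16 w(C_r)^2w(C_{r'})^2(\E_{C_r}Q-\E_{C_{r'}}Q)^2$ on the right and the final slack is $\frac43$ rather than $\frac83$ — still comfortably $\geq 1$, so the conclusion stands.
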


\begin{proof}
Observe that $\E_{C_r}Q = \tr(\Sigma(r) ( \Sigma(r)-I)) = \Norm{\Sigma(r)-I}_F^2 + \tr(\Sigma(r)-I)$ while, $\E_{C{r'}}Q  =  \tr(\Sigma(r)-I)$. In particular, $\E_{C_r}Q - \E_{C_{r'}}Q = \Norm{\Sigma(r)-I}_F^2 \geq \Delta_{cov}^2$. Thus, the mean of the polynomial $Q(x)$ is starkly different on the two components. By observing that the standard deviation of $Q$ on each of $C_r$ and $C_{r'}$ is much smaller than the mean, we will be able to derive a lower-bound on variance of $Q$ under $w$-samples.

Thus, applying Lemma~\ref{lem:lower-bounding-sums-SoS}, we have:
\begin{equation}
\cA \sststile{4}{w} \Set{ w(C_r)^2 \Paren{\E_{C_r}Q - \E_w Q}^2  + w(C_{r'})^2 \Paren{\E_{C_{r'}}Q - \E_w Q}^2 \geq \frac{1}{4} w(C_r)^2 w(C_{r'})^2 \Norm{\Sigma(r)-I}_F^4} \label{eq:lower-bound-frob-sep}
\end{equation}

Let's now lower bound $\E_{w}(Q-\E_{w}Q)^2$.
We have:

\begin{align*}
\cA \sststile{4}{w} \Biggl\{ \E_w (Q-\E_w Q)^2 &= \frac{k^2}{n^2} \sum_{i, j \leq n } w_i w_j \Paren{Q(x_i - x_j)- \E_w Q}^2 \geq \frac{k^2}{n^2} \sum_{i, j \leq C_r \text{ or } i,j \in C_{r'}} w_i w_j \Paren{Q(x_i - x_j)- \E_w Q}^2\\
& \geq \frac{k^2}{n^2} \sum_{i, j \leq C_r } w_i w_j \Paren{\E_{C_r}Q -\E_w Q}^2 - \frac{1}{2} \frac{k^2}{n^2} \sum_{i, j \leq C_r } w_i w_j \Paren{Q(x_i - x_j)- \E_{C_r}Q}^2\\
&+ \frac{k^2}{n^2} \sum_{i, j \leq C_r } w_i w_j \Paren{\E_{C_{r'}}Q- \E_w Q}^2-\frac{1}{2} \frac{k^2}{n^2} \sum_{i, j \leq C_{r'} } w_i w_j \Paren{Q(x_i - x_j)- \E_{C_{r'}}Q}^2\\
& \geq \frac{1}{2} w(C_r)^2 \Paren{\E_{C_r}Q -\E_w Q}^2 - \frac{1}{2} \frac{k^2}{n^2} \sum_{i, j \leq C_r } \Paren{Q(x_i - x_j)- \E_{C_r}Q}^2\\
&+ \frac{1}{2} w(C_{r'})^2 \Paren{\E_{C_{r'}}Q- \E_w Q}^2-\frac{1}{2} \frac{k^2}{n^2} \sum_{i, j \leq C_{r'} } \Paren{Q(x_i - x_j)- \E_{C_{r'}}Q}^2\\
&\geq \frac{1}{4} w(C_{r})^2w^2(C_{r'})\Norm{\Sigma(r)-I}_F^4 - \frac{1}{2}  \E_{C_r} (Q-\E_{C_r}Q)^2 - \frac{1}{2} \E_{C_{r'}} (Q-\E_{C_{r'}}Q)^2
\Biggr\}\mcom
\end{align*}
where, in the final inequality, we applied \eqref{eq:lower-bound-frob-sep}. Rearranging completes the proof.

\end{proof}

Onwards to the upper bound now.
Observe that the first two terms on the LHS of Lemma~\ref{lem:Large-Intersection-Implies-High-Variance-frobenius} can be upper bounded easily using Lemma~\ref{lem:typical-samples-good}:
$\E_{C_r} (Q-\E_{C_r}Q)^2 \leq (C-1) \Norm{\Sigma(r)^{1/2}Q\Sigma(r)^{1/2}}_F^2 \leq \Norm{\Sigma(r)^{1/2}}^2_{op} \Norm{Q}_F^2$. Similarly, $\E_{C_{r'}} (Q-\E_{C_r}Q)^2 \leq \Norm{Q}_F^2$.

Thus, to finish the proof of Lemma~\ref{lem:intersection-bounds-from-rel-frob-sep}, we need an upper bound on $\E_w (Q-\E_w Q)^2$ which we accomplish by relying on the certifiable hypercontractivity constraints.

In the following, we will use the following observation:
From our bounded-variance constraints in $\cA$, we have:
\begin{equation}
\cA \sststile{4}{\Pi,w} \Set{ \E_{w} (Q-\E_w Q)^2  \leq C\Norm{\Pi(w)Q\Pi(w)}_F^2}\mper \label{eq:variance-polynomials}
\end{equation}

From Lemma~\ref{lem:upper-bound-variance-w-samples-sharper}, we have:
\[
\cA \sststile{O(s\log \kappa)/\delta^2}{\Pi,w}\Biggl\{ \Paren{\cJ(w(C_r)) (w(C_r)-\delta) + \delta \rho} \Paren{v^{\top} \Sigma(w)v}^s \leq 2\frac{1}{\delta^{2s}} \Paren{Cs}^s \Paren{v^{\top} \Sigma(r) v}^s + \delta \rho \lambda_{max}^s \norm{v}_2^{2s} \Biggr\}\mper
\]

To implement the linear transformation $x_i \rightarrow \Sigma(r')^{-1/2} x_i$, we substitute $v = \Sigma(r')^{-1/2}v$ and use that $\Sigma(r')^{-1} \succeq 1/\lambda_{max} I$:
\begin{multline}
\cA \sststile{O(s\log \kappa)/\delta^2}{\Pi,w} \Biggl\{ \Paren{\cJ(w(C_r)) (w(C_r)-\delta) + \delta \rho} \Norm{\Pi(w)\Sigma(r')^{\dagger/2}v}_2^{2s} \\ \leq 2\frac{1}{\delta^{2s}} \Paren{Cs}^s \Norm{v}_2^{2s} + \delta \rho \lambda_{max}^s \norm{\Sigma(r')^{\dagger/2}v}_2^{2s} \leq \Paren{2\frac{1}{\delta^{2s}} \Paren{Cs}^s + \delta \rho \kappa^s} \norm{v}_2^{2s} \Biggr\}\mper \label{eq:spectral-upper-bound-repurposed-frobenius}
\end{multline}
We are now ready for the upper bound proof. 

\begin{lemma}[Certifiable Hypercontractivity Implies Low Variance] \label{lem:variance-upper-bound-hypercontractivity}
Let $Q =\Sigma(r)-I$.
\begin{multline}
\cA \sststile{O(s\log \kappa)/\delta^2}{w} \Biggl\{ \Paren{\cJ(w(C_r)) (w(C_r)-\delta) + \delta \eta}^{2s} \Paren{\E_{w} (Q-\E_w Q)^2}^s \\\leq  \Paren{4\frac{1}{\delta^{2s}} \Paren{Cs}^s \Norm{\Sigma(r)^{1/2}}_{op}^{2s}}^2s^{2s} \Norm{\Sigma(r)-I}_F^2 \Biggr\}
\end{multline}
\end{lemma}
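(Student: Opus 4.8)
The plan is to obtain this variance upper bound following the same ``rough-bound-then-refine'' template used for spectral separation, with hypercontractivity (through $\cA_6$) now playing the role that anti-concentration played there. Three ingredients feed in: (i) the certifiably-bounded-variance constraint $\cA_6$, i.e.\ the consequence \eqref{eq:variance-polynomials}, which lets us pass from the variance $\E_{w}(Q-\E_w Q)^2$ of the degree-$2$ polynomial $Q(x)=x^{\top}Qx$ to the Frobenius norm $\Norm{\Pi(w)Q\Pi(w)}_F^2$ of the whitened coefficient matrix; (ii) the sum-of-squares contraction inequality $\Norm{AB}_F^2\leq\Norm{A}_{op}^2\Norm{B}_F^2$ of Lemma~\ref{lem:contraction-property}, applied with $A=\Pi(w)=\Sigma(w)^{1/2}$ and $B$ the fixed matrix $Q=\Sigma(r)-I$ to peel off the two copies of $\Pi(w)$ and reduce to a bound of the form $\Norm{\Sigma(w)}_{op}^2\Norm{Q}_F^2$; and (iii) the repurposed spectral upper bound \eqref{eq:spectral-upper-bound-repurposed-frobenius} for $\Sigma(w)$, which supplies the operator-norm control on $\Sigma(w)$ but only after multiplication by the threshold polynomial $p_w\defeq\cJ(w(C_r))(w(C_r)-\delta)+\delta\rho$; this multiplier is inherited verbatim from \eqref{eq:spectral-upper-bound-repurposed-frobenius} and from Lemma~\ref{lem:upper-bound-variance-w-samples-sharper}, and is the low-degree encoding of ``if $w(C_r)$ is large, then $v^{\top}\Sigma(w)v$ is comparable to $v^{\top}\Sigma(r)v$''.

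\textbf{Steps.} First I would raise \eqref{eq:variance-polynomials} to the $s$-th power via the multiplication rule \eqref{eq:sos-addition-multiplication-rule} (legitimate since $\E_w(Q-\E_w Q)^2$ is a sum of squares and its $C\Norm{\Pi(w)Q\Pi(w)}_F^2$ upper bound is derived in $\cA$), obtaining $\cA\sststile{O(s)}{}\Set{(\E_w(Q-\E_w Q)^2)^s\leq C^s(\Norm{\Pi(w)Q\Pi(w)}_F^2)^s}$. Next I would rewrite $\Norm{\Pi(w)Q\Pi(w)}_F^2=\Tr(\Sigma(w)Q\Sigma(w)Q)$ using $\Pi(w)^2=\Sigma(w)$ from $\cA_1$, and combine this through Lemma~\ref{lem:contraction-property} with the spectral bound $p_w\Norm{\Pi(w)v}_2^{2s}\leq\bigl(2(Cs)^s\Norm{\Sigma(r)^{1/2}}_{op}^{2s}/\delta^{2s}+\delta\rho\kappa^s\bigr)\Norm{v}_2^{2s}$ — which is \eqref{eq:spectral-upper-bound-repurposed-frobenius} once $v^{\top}\Sigma(r)v$ on the right side of Lemma~\ref{lem:upper-bound-variance-w-samples-sharper} is bounded by $\Norm{\Sigma(r)^{1/2}}_{op}^2\Norm{v}_2^2$ — so that peeling off both copies of $\Pi(w)$ yields $p_w^2(\Norm{\Pi(w)Q\Pi(w)}_F^2)^s\leq\bigl(2(Cs)^s\Norm{\Sigma(r)^{1/2}}_{op}^{2s}/\delta^{2s}+\delta\rho\kappa^s\bigr)^2\Norm{Q}_F^{2s}$. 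Taking $\rho\leq\kappa^{-s}$ makes the stray $\delta\rho\kappa^s$ term lower order; using $0\leq p_w\leq2$ (Lemma~\ref{lem:consequences-of-threshold-polynomials}) lets us replace the $p_w^2$ prefactor by $p_w^{2s}$ at the cost of an absorbable $2^{O(s)}$ factor; and $\Norm{Q}_F^2=\Norm{\Sigma(r)-I}_F^2$. Collecting constants gives the displayed inequality, with the exponent on $\Norm{\Sigma(r)-I}_F$ on the right-hand side equal to $2s$, after which Lemma~\ref{lem:cancellation-SoS-constant-RHS} puts it into the stated constant-right-hand-side form. The degree of the proof is dominated by that of \eqref{eq:spectral-upper-bound-repurposed-frobenius}, namely $O(s(\delta)\log\kappa/\delta^2)$.

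\textbf{Main obstacle.} The delicate point is the contraction step. Lemma~\ref{lem:contraction-property} would most cleanly consume an honest PSD upper bound $\Sigma(w)\preceq\lambda I$ together with an SoS certificate, whereas \eqref{eq:spectral-upper-bound-repurposed-frobenius} controls $\Sigma(w)$ only in the $2s$-th-powered form and only modulo the threshold multiplier $p_w$; the obvious shortcut of substituting the fixed vectors $v\mapsto Qe_a$ into \eqref{eq:spectral-upper-bound-repurposed-frobenius} and summing recombines into a Frobenius norm only when $s=1$, and otherwise leaks dimension factors $d^{\Omega(s)}$. Threading $p_w$ and the high powers through the matrix identity $\Norm{\Pi(w)Q\Pi(w)}_F^2=\Tr(\Sigma(w)Q\Sigma(w)Q)$ — that is, getting the contraction lemma to accept a conditional, powered operator-norm bound — is what both forces the $\log\kappa$ loss and requires reusing the uniform-threshold-approximator machinery of the spectral-separation subsection; the remaining manipulations are routine bookkeeping.
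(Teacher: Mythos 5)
Your proposal is correct and follows essentially the same route as the paper's proof: raise the bounded-variance consequence \eqref{eq:variance-polynomials} to the $s$-th power, then peel off the two copies of $\Pi(w)$ inside $\Norm{\Pi(w)Q\Pi(w)}_F^{2s}$ by two applications of Lemma~\ref{lem:contraction-property}, feeding in the conditional spectral bound \eqref{eq:spectral-upper-bound-repurposed-frobenius} (with the $v^{\top}\Sigma(r)v$ term absorbed into $\Norm{\Sigma(r)^{1/2}}_{op}^{2s}\Norm{v}_2^{2s}$), which consumes two factors of $p_w=\cJ(w(C_r))(w(C_r)-\delta)+\delta\rho$; the paper instead begins by multiplying by $p_w^s$ so that $s$ factors ride along from the start, but the two bookkeeping choices are interchangeable up to a $2^{O(s)}$ loss via $0\leq p_w\leq 2$, which you correctly invoke. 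Your ``main obstacle'' is, however, not actually an obstacle in this setting: Lemma~\ref{lem:contraction-property} is stated from the outset in the form $\Set{\beta(v^{\top}A^{\top}Av)^t\leq\Delta\norm{v}_2^{2t}}\sststile{}{}\Set{\beta\Norm{AB}_F^{2t}\leq\Delta t^t\Norm{B}_F^{2t}}$ precisely so that it directly accepts a conditional, $t$-th-powered operator-norm certificate with an arbitrary nonnegative scalar prefactor $\beta$; there is no need to first manufacture an honest PSD inequality $\Sigma(w)\preceq\lambda I$, nor any dimension leakage. The $\log\kappa$ in the degree is inherited from \eqref{eq:spectral-upper-bound-repurposed-frobenius} (i.e.\ from the threshold approximator used to prove Lemma~\ref{lem:upper-bound-variance-w-samples-sharper}), not from the contraction step itself, as you hint at but should state more plainly.
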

\begin{proof}
Lemma~\ref{lem:consequences-of-threshold-polynomials} implies that $\cA \sststile{}{} \Set{\Paren{\cJ(w(C_r)) (w(C_r)-\delta) + \delta \rho} \geq 0}$. Thus, we can use the multiplication rule (Fact~\ref{eq:sos-addition-multiplication-rule}) and multiply both sides of \eqref{eq:variance-polynomials} with $\Paren{\cJ(w(C_r)) (w(C_r)-\delta) + \delta \rho}$ repeatedly while preserving the inequality.

Thus, we have using the bounded-variance constraints in $\cA$:
\begin{align*}
\cA &\sststile{O(s\log \kappa)/\delta^2}{\Pi,w} \Biggl\{ \Paren{\cJ(w(C_r)) (w(C_r)-\delta) + \delta \rho}^{s} \Paren{\E_{w} (Q-\E_w Q)^2}^s\\
&\leq \Paren{\cJ(w(C_r)) (w(C_r)-\delta) + \delta \rho}^{s} (C-1)^s \Norm{\Pi(w)Q\Pi(w)}_F^{2s}\\
&\leq 2^{s} \Paren{\cJ(w(C_r)) (w(C_r)-\delta) + \delta \rho}^2 (C-1)^s \Norm{\Pi(w)Q'\Pi(w)}_F^{2s}\\
&\leq  2^{s}\Paren{ \Paren{\frac{1}{\delta^{2}}}^s \Paren{Cs}^s \Norm{\Sigma(r)^{1/2}}_{op}^{2s}  + \delta \rho \kappa^s}s^s \Paren{\cJ(w(C_r)) (w(C_r)-\delta) + 2^s \delta \rho} \Norm{Q\Pi(w)}_F^{2s} \\
&\leq  2^s \Paren{\Paren{\frac{1}{\delta^{2}}}^s\Paren{Cs}^s \Norm{\Sigma(r)^{1/2}}_{op}^{2s}  + \delta \rho \kappa^s}^2 s^{2s} \Norm{Q}_F^{2s}\\
&= \Paren{\Paren{\frac{2}{\delta^{2}}}^s \Paren{Cs}^s \Norm{\Sigma(r)^{1/2}}_{op}^{2s}  + \delta \rho \kappa^s}^2 s^{2s} \Norm{\Sigma(r)-I}_F^{2s} \Biggr\}\mcom
\end{align*}
where, in the last two inequalities, we twice invoked the contraction bound from Lemma~\ref{lem:contraction-property} along with the bound on $\Norm{\Pi(w) \Sigma(r')^{-1/2} v}_2^s$ from~\eqref{eq:spectral-upper-bound-repurposed-frobenius}. Setting $\rho = \kappa^{-s}$ completes the proof.
\end{proof}

As in the previous subsection, we can improve the sum-of-squares degree of the proof above to be a fixed constant (independent of $\kappa$) in the case when $k=2$ by using the sharper bound on $\Sigma(w)$ in \eqref{eq:spectral-upper-bound-repurposed-frobenius}.

\begin{lemma}[Certifiable Hypercontractivity Implies Low Variance, Two Components] \label{lem:variance-upper-bound-hypercontractivity-Two-Components}
Let $Q = \Sigma(2)^{-1/2}\Sigma(1)\Sigma(2)^{-1/2}-I$.
\begin{multline}
\cA \sststile{O(s\log \kappa)/\delta^2}{Q,\Sigma, w} \Biggl\{ \Paren{\cJ(w(C(1))) (w(C_1)-\delta) + \delta \rho}^{2s} \Paren{\E_{w} (Q-\E_w Q)^2}^s \\\leq  \Paren{4\frac{1}{\delta^{2s}} \Paren{Cs}^s \Norm{\Sigma(r)^{1/2}\Sigma(2)^{-1/2}}_{op}^{2s}}^2s^{2s} \Norm{\Sigma(2)^{-1/2}\Sigma(1)\Sigma(2)^{-1/2}-I}_F^2 \Biggr\}
\end{multline}
\end{lemma}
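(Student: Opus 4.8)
The plan is to reproduce the proof of Lemma~\ref{lem:variance-upper-bound-hypercontractivity} essentially verbatim, changing only the source of the spectral upper bound on $\Sigma(w)$ so that the SoS degree no longer depends on $\kappa$. First, since the relative Frobenius separation condition is invariant under linear transformations, I would apply the substitution $x_i \mapsto \Sigma(2)^{\dagger/2} x_i$ exactly as in the discussion preceding Lemma~\ref{lem:Large-Intersection-Implies-High-Variance-frobenius}: after this change of coordinates one may treat $\Sigma(2) = I$ and $\Sigma(1) \mapsto \Sigma(2)^{\dagger/2}\Sigma(1)\Sigma(2)^{\dagger/2}$, and the indeterminate-valued polynomial $Q = \Sigma(2)^{-1/2}\Sigma(1)\Sigma(2)^{-1/2} - I$ becomes the quadratic form associated with the transformed $\Sigma(1) - I$. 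Starting from the certifiable bounded-variance constraint $\cA_6$, this gives $\cA \sststile{4}{\Pi,w} \Set{\E_w (Q - \E_w Q)^2 \leq C \Norm{\Pi(w) Q \Pi(w)}_F^2}$.

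Next, by Lemma~\ref{lem:consequences-of-threshold-polynomials} the polynomial $\cJ(w(C_1))(w(C_1) - \delta) + \delta\rho$ is a sum of squares in the system $\cA$, so using \eqref{eq:sos-addition-multiplication-rule} I would multiply this inequality through by it $s$ times while preserving the direction. Then I would apply the SoSized contraction inequality of Lemma~\ref{lem:contraction-property} (i.e. $\Norm{AB}_F^2 \leq \Norm{A}_{op}^2\Norm{B}_F^2$) twice, once to strip $\Pi(w)$ on each side of $\Norm{\Pi(w)Q\Pi(w)}_F^2$, each application trading an operator-norm factor of the form $\Norm{\Pi(w)\Sigma(2)^{\dagger/2}v}_2^{2}$ for a copy of $\Norm{Q}_F^2$. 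This is the step where a quantitative bound on $\Norm{\Pi(w)\Sigma(2)^{\dagger/2}v}_2^s$ is needed.

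In the general-$k$ argument this bound came from \eqref{eq:spectral-upper-bound-repurposed-frobenius}, which used the rough spectral upper bound (Lemma~\ref{lem:rough-bound-spectral-upper}) with $\tau = 1/\poly(\kappa)$, costing a $\log\kappa$ factor via the threshold approximator of Lemma~\ref{lem:poly-approximate-threshold}. For $k = 2$, I would instead invoke Lemma~\ref{lem:rough-bound-spectral-upper} with $t = s(1/2Ck) = \Theta(1)$, obtaining $\cA \sststile{O(1)}{} \Set{\Paren{v^\top \Sigma(w) v}^t \leq 2^{O(t)}\Paren{\Paren{v^\top\Sigma(1)v}^t + \Paren{v^\top\Sigma(2)v}^t}}$, a bound that depends only on the two clusters at hand rather than on $\lambda_{\max}$ over all $k$ components. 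Consequently, in the threshold-polynomial step I may choose $\tau = 2^{-\Theta(s)}k^{-1}\delta = \Theta(\delta)$, so that $\log(1/\tau) = \poly(1/\delta)$ and the SoS degree no longer grows with $\kappa$; plugging the resulting sharper version of \eqref{eq:spectral-upper-bound-repurposed-frobenius} into the two contraction steps, collecting constants, and setting $\rho = \kappa^{-s}$ (which only affects lower-order error terms) yields the claimed inequality.

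I expect the main obstacle to be purely bookkeeping rather than conceptual: keeping track of the two contraction-lemma applications and the substitution $v \mapsto \Sigma(2)^{\dagger/2} v$ so that the $\Sigma(2)^{\dagger/2}$ factors are absorbed correctly into the $\Norm{\Sigma(1)^{1/2}\Sigma(2)^{-1/2}}_{op}$ term, and making sure the powers of $\cJ(w(C_1))(w(C_1)-\delta)+\delta\rho$ appearing on the two sides match (as in the chain of inequalities in the proof of Lemma~\ref{lem:variance-upper-bound-hypercontractivity}) so the final statement is a genuine degree-$O(s/\delta^2)$ SoS consequence of $\cA$. The conceptual content is identical to Lemma~\ref{lem:variance-upper-bound-hypercontractivity}; only the provenance of the spectral bound changes, exactly mirroring the $k=2$ simplifications already carried out in the spectral- and mean-separation cases.
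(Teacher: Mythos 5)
Your high-level plan — repeat the proof of Lemma~\ref{lem:variance-upper-bound-hypercontractivity}, but replace the uniform eigenvalue bound \eqref{eq:better-bound-based-on-eig} with the $k=2$ invocation of Lemma~\ref{lem:rough-bound-spectral-upper} (so that the rough bound on $v^\top\Sigma(w)v$ involves only $\Sigma(1)$ and $\Sigma(2)$, not $\lambda_{\max}$ over all components) — is exactly the paper's strategy, and you correctly identify the two applications of Lemma~\ref{lem:contraction-property} with the $v\mapsto\Sigma(2)^{\dagger/2}v$ substitution as the place where that bound enters.

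However, your last step undoes the entire gain. The degree of the threshold approximator $\cJ$ is $O(\log(1/\rho)/\delta)$ by Lemma~\ref{lem:poly-approximate-threshold}, and $\cJ$ appears explicitly in the certified inequality, so the SoS degree is at least $O(\log(1/\rho)/\delta)$. You conclude by "setting $\rho = \kappa^{-s}$ (which only affects lower-order error terms)", but that choice gives $\deg(\cJ) = O(s\log\kappa/\delta)$ — it is not a lower-order effect, it is precisely the $\log\kappa$ dependence that the two-component lemma is designed to eliminate, and it would make Lemma~\ref{lem:intersection-bounds-from-rel-frob-sep-two-components} (degree $O(1/\delta^2)$, no $\kappa$) fail to follow. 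The value $\rho=\kappa^{-s}$ is what the general-$k$ proof of Lemma~\ref{lem:variance-upper-bound-hypercontractivity} needed because its error term is $\delta\rho\,\lambda_{\max}^s$, which after the $v\mapsto\Sigma(2)^{\dagger/2}v$ substitution grows like $\delta\rho\,\kappa^s$. The point of the $k=2$ sharpening is that the error term becomes $\delta\rho\bigl(\Norm{\Sigma(1)^{1/2}\Sigma(2)^{-1/2}}_{op}^{2s}+1\bigr)\norm{v}_2^{2s}$ instead, which is already comparable to the target RHS, so one can afford $\rho = O(1)$ (or $\poly(\delta)$); this is exactly what the paper's proof states ("allows us to eventually set $\rho=O(1)$"). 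You have also conflated the anti-concentration parameter $\tau$ (which enters through $s(\tau)$ and Lemma~\ref{lem:rough-bound-spectral-upper}'s hypothesis $\tau < 1/(2Ck)$) with the threshold precision $\rho$ (which controls $\deg(\cJ)$); the sentence claiming that choosing $\tau=\Theta(\delta)$ makes "the SoS degree no longer grow with $\kappa$" does not address the $\cJ$ degree at all. Fixing the proof requires observing that the $k=2$ spectral bound allows $\rho=O(1)$ and then tracking that $\deg(\cJ) = O(1/\delta)$ is $\kappa$-independent.
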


\begin{proof}
We proceed similarly as in the proof above up until \eqref{eq:spectral-upper-bound-repurposed-frobenius} where, instead of using the uniform eigenvalue bound, we instead use the sharper bound from Lemma~\ref{lem:rough-bound-spectral-upper}. As in the previous two subsections, following through the rest of the proof in Lemma~\ref{lem:variance-upper-bound-hypercontractivity} as is, allows us to eventually set $\rho = O(1)$ yielding a $O(1)$-degree SoS proof as desired.
\end{proof}

\begin{proof}[Proof of Lemma~\ref{lem:intersection-bounds-from-rel-frob-sep}]
As in the previous two lemmas, we argue after performing the linear transformation $\Sigma(r')^{-1/2}$ on the samples in order to simplify notation.

From Lemma~\ref{lem:Large-Intersection-Implies-High-Variance-frobenius}, we have:
\[
\cA \sststile{4}{w} \Set{4\E_w (Q-\E_w Q)^2 + 2  \E_{C_r} (Q-\E_{C_r}Q)^2 + 2 \E_{C_{r'}} (Q-\E_{C_{r'}}Q)^2\geq w(C_{r})^2w^2(C_{r'})\Norm{\Sigma(r)-I}_F^4}
\]
Multiplying both sides of the and apply the SoS Almost Triangle Inequality (Fact~\ref{fact:sos-almost-triangle}) and obtain:
\[
\cA \sststile{4s}{} \Set{2^{3s} \Paren{\E_w (Q-\E_w Q)^{2s} + \E_{C_r} (Q-\E_{C_r}Q)^{2s} + \E_{C_{r'}} (Q-\E_{C_{r'}}Q)^{2s}} \geq w(C_{r})^{2s}w^{2s}(C_{r'})\Norm{\Sigma(r)-I}_F^{4s}}
\]
Multiplying by $\Paren{\cJ(w(C_r)) (w(C_r)-\delta) + \delta \rho}^{s}$ on both sides, we get:

\begin{multline}
\cA \sststile{O(s\log \kappa)}{Q,\Sigma, w} \Biggl\{\Paren{\cJ(w(C_r)) (w(C_r)-\delta) + \delta \rho}^s w(C_{r})^{2s}w^{2s}(C_{r'})\Norm{\Sigma(r)-I}_F^{4s} \\\leq \Paren{\cJ(w(C_r)) (w(C_r)-\delta) + \delta \rho}^s 2^{3s} \Paren{\E_w (Q-\E_w Q)^{2s} + \E_{C_r} (Q-\E_{C_r}Q)^{2s} + \E_{C_{r'}} (Q-\E_{C_{r'}}Q)^{2s}}\Biggr\}\mper
\end{multline}

Using the upper bounds proved above (Lemma~\ref{lem:variance-upper-bound-hypercontractivity} and the preceding discussion) on each of the three terms on the RHS, we get:

\begin{multline}
\cA \sststile{O(s\log \kappa)}{Q,\Sigma, w} \Biggl\{\Paren{\cJ(w(C_r)) (w(C_r)-\delta) + \delta \rho}^{s} w(C_{r})^{2s}w^{2s}(C_{r'})\Norm{\Sigma(r)-I}_F^{4s} \\\leq 2^{O(s)} \Paren{4\frac{1}{\delta^{2s}} \Paren{Cs}^s \Norm{\Sigma(r)^{1/2}\Sigma(r')^{-1/2}}_{op}^{2s} + 1} \Norm{\Sigma(r)-I}_F^{2s} \Biggr\}\mper
\end{multline}

Applying the SoS Cancellation lemma (Lemma~\ref{lem:cancellation-SoS-constant-RHS}), we have:

\begin{multline}
\cA \sststile{O(s\log \kappa)}{Q,\Sigma, w} \Biggl\{\Paren{\cJ(w(C_r)) (w(C_r)-\delta) + \delta \rho} w(C_{r})^{2}w^{2}(C_{r'})\Norm{\Sigma(r)-I}_F^{4} \\\leq 2^{O(s)} \Paren{4\frac{1}{\delta^{2}} \Paren{Cs} \Norm{\Sigma(r)^{1/2}\Sigma(r')^{-1/2}}_{op}^{2}} \Norm{\Sigma(r)-I}_F^{2} \Biggr\}\mper
\end{multline}

Applying Lemma~\ref{lem:consequences-of-threshold-polynomials} to observe
\[
\cA \sststile{O(s\log \kappa)}{Q,\Sigma, w} \Set{ \Paren{\cJ(w(C_r)) (w(C_r)-\delta) + \delta \rho} \geq (w(C_r)-2\delta)}\mper
\]

Thus, using $\cA \sststile{}{} \Set{w(C_r)^{2} w(C_{r'})^{2}) \leq 1}$, we get:

\begin{multline}
\cA \sststile{O(s\log \kappa)}{Q,\Sigma, w} \Biggl\{w(C_{r})^{3}w^{2}(C_{r'})\Norm{\Sigma(r)-I}_F^{4} \\\leq  2 \delta \Norm{\Sigma(r)-I}_F^{4} + 2^{O(s)} \Paren{4\frac{1}{\delta^{2}} \Paren{Cs} \Norm{\Sigma(r)^{1/2}\Sigma(r')^{-1/2}}_{op}^{2}} \Norm{\Sigma(r)-I}_F^{2} \Biggr\}\mper
\end{multline}

Dividing throughout by $\Norm{\Sigma(r)-I}_F^{4}$, and using that  and that $\Norm{\Sigma(r)-I}_F^{2} \geq \Delta_{cov}^2 \Norm{\Sigma(r)^{1/2}\Sigma(r')^{-1/2}}_{op}^2$ yields:

\begin{equation}
\cA \sststile{O(s\log \kappa)}{} \Biggl\{w(C_r)^3 w(C_{r'})^{3} \leq 2\delta  +  \Paren{4\frac{1}{\delta^{2}} \Paren{Cs} \Delta_{cov}^{-2s}} \Norm{\Sigma(r)-I}_F^{2s} \Biggr\}\mper
\end{equation}

Using that $\Delta_{cov} \gg Cs/\delta^2$ yields:
\begin{equation}
\cA \sststile{O(s\log \kappa)}{} \Biggl\{w(C_r)^3 w(C_{r'})^{3} \leq O(\delta) \Biggr\}\mper
\end{equation}

Using SoS cancellation (Lemma~\ref{lem:cancellation-SoS-constant-RHS}) again yields:
\begin{equation}
\cA \sststile{O(s\log \kappa)}{Q,\Sigma, w} \Biggl\{w(C_r) w(C_{r'}) \leq O(\delta^{1/3}) \Biggr\}\mper
\end{equation}

\end{proof}

\paragraph{Improved SoS Degree Bounds for $k =2$} By using Lemma~\ref{lem:variance-upper-bound-hypercontractivity-Two-Components} instead of Lemma~\ref{lem:variance-upper-bound-hypercontractivity} in the above argument immediately yields Lemma~\ref{lem:intersection-bounds-from-rel-frob-sep-two-components}.


\Pnote{In this section, $Y$ is the input sample, $X$ is the original uncorrupted sample and $X'$ is what our algorithm tries to find.}
\section{Outlier-Robust Clustering of Reasonable Distributions} \label{sec:outlier-robust-clustering}
In this section, we augment the algorithm from the previous section to tolerate an $\epsilon\leq O(1/k)$ fraction of fully adversarial outliers. Recall that in this setting, the input sample $Y$ is obtained by first generating a sample $X$ from the underlying mixture model and adversarially corrupting an $\epsilon$-fraction of $X$. 

The following is the main result of this section:

\begin{theorem}[Outlier-Robust Clustering of Mixture of Reasonable Distributions]
Fix $\epsilon > 0$. Let $\cD$ be a nice distribution that is $s(\delta)$-certifiably $(\delta,C\delta)$-anti-concentrated for all $\delta > 0$ and has $h$-certifiably $C$-hypercontractive degree $2$ polynomials for every $h$. There exists an algorithm that takes input an $\epsilon$ corruption $Y$ of $X$ of size $n$ generated according equi-weighted $\Delta$-separated mixture of $\cD(\mu(r),\Sigma(r))$ for $r \leq k$ with true clusters $C_1,C_2,\ldots,C_k$ and outputs $\hat{C}_1, \hat{C}_2, \ldots \hat{C}_k$ such that there exists a permutation $\pi:[k] \rightarrow [k]$ satisfying
\[
\min_{i \leq k} \frac{|C_i \cap \hat{C}_{\pi(i)}|}{|C_i|} \geq 1- \eta - O(k \epsilon)\mper
\]
The algorithm succeeds with probability at least $1-1/k$ whenever $\Delta \geq \Delta_{rob} =\Omega(s(\poly(\eta/k))/\poly(\eta))$, need $n\geq d^{O\Paren{s(\poly(\eta/k)) \poly(k/\eta)}}$ samples and runs in time $n^{O\Paren{\log \kappa s(\poly(\eta/k)) \poly(k/\eta)}}$ where $\kappa$ is spread of the mixture.

For the special case of $k=2$, the algorithm runs in time $n^{O(s(\poly(\eta/k))}$ and uses $d^{O(s(\poly(\eta/k)))}$ samples (with no dependence on the spread  $\kappa$.)
\label{thm:main-robust-clustering-section}
\end{theorem}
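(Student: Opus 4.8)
The plan is to reduce to the non-robust argument of \cref{thm:main-clustering-section} via a \emph{guessed clean sample}. Introduce a fresh tuple of vector-valued indeterminates $X' = (x'_1,\dots,x'_n)$ together with Boolean selectors $b = (b_1,\dots,b_n)$, let $\cA_{\mathrm{rob}}$ be the constraint system of \cref{sec:clustering} with every occurrence of $x_i$ in $\cA_2,\dots,\cA_6$ replaced by $x'_i$, together with the new constraints
\[
\Set{b_i^2 = b_i \ \forall i} \ \cup\ \Set{\textstyle\sum_i b_i \geq (1-\epsilon)n} \ \cup\ \Set{b_i(x'_i - y_i) = 0 \ \forall i} \ \cup\ \Set{\Norm{x'_i}_2^2 \leq \poly(n)\max_j \Norm{y_j}_2^2 \ \forall i}\mcom
\]
the last one making $\cA_{\mathrm{rob}}$ explicitly bounded so \cref{fact:eff-pseudo-distribution} applies (this is without loss of generality since a far-away outlier is never chosen into a purported cluster). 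The algorithm is exactly \cref{algo:rounding-for-pseudo-distribution}: compute a pseudo-distribution $\tzeta$ over $(X',w,b,\mu,\Sigma,\Pi,U)$ minimizing $\Norm{\pE_\tzeta[w]}_2^2$ subject to $\cA_{\mathrm{rob}}$ and round $\pE_\tzeta[ww^\top]$ by the same greedy peeling. The index sets $C_1,\dots,C_k$ are shared by $X$, $X'$ and $Y$, so the clustering of \emph{indices} produced by the rounding directly induces a clustering of $Y$; the point-value indeterminate $X'$ enters only the certifiability analysis.

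\paragraph{Feasibility and the clean-overlap fact.}
Condition on the event (probability $\geq 1-1/d$ by \cref{lem:typical-samples-good}) that the uncorrupted sample $X$ is good. Since $Y$ is an $\epsilon$-corruption of $X$, there is a fixed $0/1$ vector $b^\star$ with $\sum_j b^\star_j \geq (1-\epsilon)n$ and $y_j = x_j$ whenever $b^\star_j = 1$. Setting $X' = X$, $b = b^\star$, $w = \1(C_r)$, and $\mu,\Sigma$ to the empirical mean and covariance of $C_r$ yields a feasible solution of $\cA_{\mathrm{rob}}$ (the $w$-selected subsample of $X'$ equals $C_r$, which satisfies $\cA_4,\cA_5,\cA_6$ by goodness), achieving $\Norm{\pE[w]}_2^2 = n/k^2$, the minimum value; so exactly as in the proof of \cref{thm:main-clustering-section} the computed $\tzeta$ satisfies $\pE_\tzeta[w_i] = 1/k$ for all $i$. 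The key structural observations, provable in constant-degree SoS, are: (i) $b_j b^\star_j (x'_j - x_j) = 0$ for every $j$, from $b_j(x'_j - y_j) = 0$ and the fixed identity $b^\star_j(y_j - x_j) = 0$; and (ii) $\frac{k}{n}\sum_{j\in C_r} b_j b^\star_j \geq 1 - 2k\epsilon$ for each $r$, from $\sum_j b_j \geq (1-\epsilon)n$, $\sum_j b^\star_j \geq (1-\epsilon)n$, $|C_r| = n/k$ and inclusion--exclusion, using $0 \leq b_j \leq 1$. Thus the guessed cluster $C'_r = \{x'_j : j \in C_r\}$ agrees with the good-sample cluster $C_r$ on a $(1-2k\epsilon)$-fraction of its indices, and moreover $\cA_{\mathrm{rob}} \sststile{2}{w,b} \Set{w(C_r)\tfrac{k}{n}\textstyle\sum_{j\in C_r} b_j b^\star_j \geq w(C_r) - 2k\epsilon}$.

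\paragraph{Robust simultaneous intersection bounds.}
Re-run the three intersection-bound arguments of \cref{sec:clustering} (spectral, mean, relative-Frobenius) inside $\cA_{\mathrm{rob}}$. The only place where the good-sample properties of $C_r, C_{r'}$ (certifiable anti-concentration of two- and four-sample-centered points, certifiable hypercontractivity and bounded variance) are invoked \emph{on raw data values} is in bounding sums of the form $\frac{k^2}{n^2}\sum_{i\in C_r} w_i(\cdots)$ and $\frac{k^4}{n^4}\sum_{i\in C_r, j\in C_{r'}} w_i w_j(\cdots)$. Into each such sum we insert the square (hence SoS) multiplier $\prod b_i b^\star_i \prod b_j b^\star_j$; by (i) this substitutes $x'_\bullet \mapsto x_\bullet$ so the good-sample inequalities apply verbatim, and by (ii) it replaces the polynomial $w(C_r)$ by $w(C_r) - 2k\epsilon$ wherever it is compared against $0$. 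Everything else is untouched: the variance lower bounds via anti-concentration of convolutions, the variance upper bounds via anti-concentration / hypercontractivity of the $w$-subsample of $X'$ (now an \emph{axiom} of $\cA_{\mathrm{rob}}$), the rough spectral bound \cref{lem:rough-bound-spectral-upper}, the threshold-polynomial conditioning \cref{lem:consequences-of-threshold-polynomials}, and the final SoS cancellation. The result is the robust analogue of \cref{lem:intersection-bounds-from-separation},
\[
\cA_{\mathrm{rob}} \sststile{O(s(\delta)^2 \log \kappa)}{w,X',b} \Set{w(C_r) w(C_{r'}) \leq O(\delta^{1/3}) + O(k\epsilon)}\mper
\]
Feeding this into the rounding analysis of \cref{thm:main-clustering-section} (with $\delta$ a small $\poly(\eta/k)$) makes $\pE_\tzeta[ww^\top]$ block-diagonal up to additive error $O(\eta) + O(k\epsilon)$ per row, and greedy peeling outputs, with probability $\geq 1 - 1/k$, a clustering with $\min_i |C_i \cap \hat C_{\pi(i)}|/|C_i| \geq 1 - O(\eta) - O(k\epsilon)$, which after rescaling $\eta$ gives the claim. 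For $k=2$, replacing the uniform eigenvalue bound by \cref{lem:rough-bound-spectral-upper} with $\tau = 2^{-\Theta(s)}k^{-1}\delta$ exactly as in the two-component proofs removes $\log\kappa$ from the degree, giving the stated $n^{O(s(\poly(\eta/k)))}$ running time.

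\paragraph{Main obstacle.}
The delicate point is robustifying the intersection bounds. In the non-robust proof, anti-concentration and hypercontractivity of the true clusters are used as black-box inequalities about point values; with $X'$ an indeterminate, only the $(1-2k\epsilon)$-fraction of indices matching $X$ inherits them, and the corrupted coordinates are a priori only crudely bounded. The fix must (a) restrict every data-value-dependent sum to clean indices \emph{before} invoking the good-sample properties, via the $b_i b^\star_i$ multipliers, so that the degradation is a single additive $O(k\epsilon)$ rather than a multiplicative blow-up; and (b) thread this through the $\log\kappa$-degree manipulations and the threshold polynomial $\cJ$ so that the new $O(k\epsilon)$ stays additive and never interacts badly with the $\delta$-scale error terms. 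Carrying both error scales correctly through all three separation cases, and re-checking $\kappa$-independence for $k=2$, is the bulk of the work.
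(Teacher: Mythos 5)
Your proposal follows essentially the same route as the paper: you introduce a guessed clean sample $X'$ with matching selectors (your $b_i$ are the paper's $m_i$, and your product $b_i b^\star_i$ is exactly the paper's proxy indicator $m_i' = m_i\,\1(y_i = x_i)$), prove the same two structural facts that $b_ib_i^\star(x'_i - x_i) = 0$ and that $\frac{k}{n}\sum_{j\in C_r} b_jb_j^\star \geq 1-2k\epsilon$, and then re-run the three intersection-bound arguments by inserting these multipliers to restrict the data-dependent sums to clean indices, incurring only an additive $O(k\epsilon)$ loss. This matches the paper's proof of \cref{thm:main-robust-clustering-section} via the proxy-variable device of \cref{lem:replace-X-by-X'} and \cref{lem:intersection-X-X'} and the subsequent transfer in \cref{lem:intersection-bounds-from-separation-robust}.
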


Recall that the spread $\kappa = \sup_{v \in \R^d} \max_{i,j\leq k} \frac{v^{\top} \Sigma(i) v}{v^{\top}\Sigma(j)v}$. 
In Section~\ref{sec:better-algo}, we will use the algorithm above as a subroutine to get a fully-polynomial algorithm with no dependence on the spread $\kappa$ of the mixture in the running time. 


\subsection{Algorithm}
\paragraph{Constraint System} 
Our constraint system $\cA_{rob}$ is similar to the one from the previous section with one key difference introduced in order to handle the adversarial outliers. In the uncorrupted setting, we are given the original uncorrupted sample $X = C_1 \cup C_2 \cup \ldots C_k$ and our program encodes constraints on a subset $\hat{C}$ of samples with the intended solutions to be the true clusters $C_i$s. 

In the outlier-robust setting, we only get to observe the $\epsilon$-corruption $Y$ of $X$. Thus, the points in the indices corresponding to $C_i$ need not satisfy the constraints from the previous section. 

We handle this by introducing an extra set of $d$-dimensional vector-valued indeterminates $X' = \{x'_1, x'_2, \ldots, x'_n\}$ that are intended to be the original uncorrupted sample $X$ that generated $Y$. Since $X'$ is (supposed to be) a uncorrupted sample, we can now encode finding a subset $\hat{C}$ of $X'$ (instead of $X$) with the intended solutions to be the true clusters $C_i$s of the original $X$. In order to force $X'$ to be close to $X$, we force constraints  intersection constraints (via the new matching variables $m_i$s) that ask $X'$ to intersect $Y$ in $(1-\epsilon)$-fraction of points (just like the true $X$ does). This implies that $X'$ intersects $X$ in $\geq (1-2\epsilon)$-fraction of the points and as we will soon see, this is enough for us to execute the arguments from the previous section with relatively little change. 

Covariance constraints introduce a matrix valued indeterminate intended to be the square root of $\Sigma$.
\begin{equation}
\text{Covariance Constraints: $\cA_1$} = 
  \left \{
    \begin{aligned}
      &
      &\Pi
      &=UU^{\top}\\
      &
      &\Pi^2
      &=\Sigma \mper\\
    \end{aligned}
  \right \}
\end{equation}
The intersection constraints force that $X'$ be close to $X$. 
\begin{equation}
\text{Intersection Constraints: $\cA_2$} = 
  \left \{
    \begin{aligned}
      &\forall i\in [n],
      & m_i^2
      & = m_i\\
      &&
      \textstyle\sum_{i\in[n]} m_i 
      &= (1-\epsilon) n\\
      &\forall i \in [n],
      &m_i (y_i-x'_i)
      &= 0\mper
    \end{aligned}
  \right \}
\end{equation}
The subset constraints introduce $w$, which indicates the subset $\hat{C}$ intended to be the true clusters of $X'$.
\begin{equation}
\text{Subset Constraints: $\cA_3$} = 
  \left \{
    \begin{aligned}
      &\forall i\in [n].
      & w_i^2
      & = w_i\\
      &&
      \textstyle\sum_{i\in[n]} w_i 
      &= \frac{n}{k} \mper\\
    \end{aligned}
  \right \}
\end{equation}

Parameter constraints create indeterminates to stand for the covariance $\Sigma$ and mean $\mu$ of $\hat{C}$ (indicated by $w$).
\begin{equation}
\text{Parameter Constraints: $\cA_4$} = 
  \left \{
    \begin{aligned}
      &
      &\frac{1}{n}\sum_{i = 1}^n w_i \Paren{x'_i-\mu}\Paren{x'_i-\mu}^{\top}
      &= \Sigma\\
      &
      &\frac{1}{n}\sum_{i = 1}^n w_i x'_i
      &= \mu \mper\\
    \end{aligned}
  \right \}
\end{equation}


Finally, we enforce certifiable anti-concentration and hypercontractivity of $\hat{C}$. 
\begin{equation}
\text{Certifiable Anti-Concentration : $\cA_4$} =
  \left \{
    \begin{aligned}
      &
      &\frac{k^2}{n^2}\sum_{i,j=  1}^n w_i w_j q_{\delta,\Sigma}^2\left(\Paren{x'_i-x'_j},v\right)
      &\leq 2^{s(\delta)} C\delta \Paren{v^{\top}\Sigma v}^{s(\delta)}\\
      &
      &\frac{k^2}{n^2}\sum_{i,j=  1}^n w_i w_j q_{\tau,\Sigma}^2\left(\Paren{x'_i-x'_j},v\right)
      &\leq 2^{s(\tau)} C\tau \Paren{v^{\top}\Sigma v}^{s(\eta)} \mper\\
     \end{aligned}
    \right\}
 \end{equation}

\begin{equation}
\text{Certifiable Hypercontractivity : $\cA_5$} = 
  \left \{
    \begin{aligned}
     &\forall j \leq 2s
     &\frac{k^2}{n^2} \sum_{i,j \leq n} w_i w_j Q(x'_i-x'_j)^{2j}
     &\leq (Cj)^{2j}2^{2j}\Norm{\Pi Q \Pi}_F^{2j}\mper
    \end{aligned}
  \right \}
\end{equation}

\text{Certifiable Bounded Variance: $\cA_6$} = 
\begin{equation}
  \left \{
    \begin{aligned}
     &\forall j \leq 2s,
     &\frac{k^2}{n^2} \sum_{i,\ell \leq n} w_i w_\ell \Paren{Q(x'_i-x'_\ell)-\frac{k^2}{n^2} \sum_{i,\ell \leq n} w_i w_\ell Q(x'_i-x'_\ell)}^{2}
     &\leq C \Norm{\Pi Q \Pi}_F^{2}\mper
    \end{aligned}
  \right \}
\end{equation}

Our rounding algorithm is exactly the same as in the previous section giving us:
\begin{mdframed}
  \begin{algorithm}[Outlier-Robust Clustering General Mixtures]
    \label{algo:rounding-for-pseudo-distribution-robust}\mbox{}
    \begin{description}
    \item[Given:]
        An $\epsilon$-corruption $Y$ of original uncorrupted sample $X = C_1 \cup C_2 \cup \ldots C_k$ with true clusters $C_1, C_2 , \ldots, C_k$.
    \item[Output:]
      A partition of $Y$ into an approximately correct clustering $\hat{C}_1, \hat{C}_2, \ldots, \hat{C}_k$.
    \item[Operation:]\mbox{}
    \begin{enumerate}
    \item Find a pseudo-distribution $\tzeta$ satisfying $\cA_{rob}$ minimizing $\Norm{\pE[w]}_2^2$.
      \item For $M = \pE_{w \sim \tzeta} [ww^{\top}]$, repeat for $1 \leq \ell \leq k$:
        \begin{enumerate} \item Choose a uniformly random row $i$ of $M$.
        \item Let $\hat{C}_{\ell}$ be the largest $\frac{n}{k}$ entries in the $i$th row of $M$.
        \item Remove the rows and columns with indices in $\hat{C}_{\ell}$.
       \end{enumerate}
      \end{enumerate}
    \end{description}
  \end{algorithm}
\end{mdframed}

\paragraph{Analysis of Algorithm}
An analog of Lemma~\ref{lem:typical-samples-good} extends to this setting without any change.
\begin{lemma}[Typical samples are good] \label{lem:typical-samples-good-robust} 
Let $X$ be an original uncorrupted sample of size $n$ from a equi-weighted $\Delta$-separated mixture $\cD(\mu(r),\Sigma(r))$ for $r \leq k$. 

Then, for $n_0 = \Omega\Paren{sd)^{8s} k \log k}$ and for all $n \geq n_0$, the original uncorrupted sample $X$ of size $n$ is good with probability at least $1-1/d$.
\end{lemma}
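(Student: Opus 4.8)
The plan is to observe first that this statement is essentially identical to Lemma~\ref{lem:typical-samples-good}: the corruption $Y$ never enters, and we are simply asserting that the original uncorrupted sample $X$, drawn as $k$ equal-sized i.i.d.\ batches from the reasonable distributions $\cD(\mu(r),\Sigma(r))$, is good with high probability. So the proof would reuse the argument indicated for Lemma~\ref{lem:typical-samples-good} verbatim, and it suffices to verify each of the five defining conditions of a good sample, cluster by cluster, by a standard concentration argument and then union-bound over $r \leq k$.

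For the empirical mean and covariance conditions, I would whiten by $\Sigma(r)^{-1/2}$ and apply Fact~\ref{fact:empirical_mean} and Fact~\ref{fact:empirical_cov} with failure probability $1/(10dk)$; since $n/k \geq \poly(d)\cdot 2^{O(s)}$ (comfortably implied by $n_0 = \Omega((sd)^{8s} k\log k)$), this yields $\Iprod{\hat{\mu}(r)-\mu(r),v}^2 \leq 0.1\, v^{\top}\Sigma(r)v$ for all $v$ and $(1-2^{-2s})\Sigma(r) \preceq \hat{\Sigma}(r) \preceq (1+2^{-2s})\Sigma(r)$. The certifiable bounded-variance condition $\cA_6$ is a degree-$2$ SoS statement whose coefficient tensor is a fixed linear function of the degree-$4$ moments of $C_r$; these concentrate to a $(1\pm 2^{-\Omega(s)})$ factor once $n$ exceeds $d^{O(1)}2^{O(s)}$, and substituting the empirical moments into the fixed SoS proof valid for $\cD$ perturbs the constant $C$ by at most a constant factor, absorbed by the slack in the definition.

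The heart of the argument is the two higher-moment conditions --- certifiable anti-concentration of two- and four-sample centered points, and certifiable hypercontractivity. Here I would use that $\cD$ and its $4$-wise convolution admit degree-$2s$ (resp.\ degree-$h$) SoS proofs of the relevant unconstrained polynomial inequalities (in $v$, resp.\ $Q$), and that the coefficient tensors appearing there --- $\E q_{\delta,\Sigma}^2$, the moments $\E\Iprod{x,v}^{2s}$, and $\E(x^{\top}Qx)^{h}$ --- are fixed linear functions of moments of $\cD$ of degree $O(s)$. The key step is to show that the degree-$O(s)$ empirical moment tensors of each $C_r$ (and of $4$-wise differences, which are sums of four independent copies) are, with probability $1 - 1/d^{10}$, within a $(1 \pm 2^{-\Omega(s)})$ relative error of their population values in the norm relevant to the SoS proof, provided $n \geq (sd)^{8s} k\log k$; the exponent $8s$ and the $\log k$ factor come from applying a high-moment tail bound to the empirical moment estimator (using hypercontractivity of $\cD$ itself to control its variance) together with a union bound over the $k$ clusters. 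Feeding a data perturbation of this size into the SoS certificate for $\cD$ scales each term of the certificate by $1 \pm 2^{-\Omega(s)}$, giving a degree-$2s$ SoS proof of the same inequalities for the empirical clusters with the constant relaxed exactly as in the definition of \emph{good} (from $C\delta$ to $10C\delta$, and from $(Cj)^{2j}$ to $(Cj)^{2j}2^{2j}$). A final union bound over the five conditions and $r \leq k$ gives total failure probability at most $1/d$.

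The step I expect to be the main obstacle is making precise the robustness principle ``small multiplicative perturbation of the moment data $\Rightarrow$ the SoS certificate survives with slightly worse constants'': the SoS witness for $\cD$ has coefficients that themselves depend on the moments of $\cD$, so one must either write the certificate explicitly --- as is done for anti-concentration in~\cite{DBLP:journals/corr/abs-1905-05679,bakshi2020list} and for hypercontractivity in~\cite{DBLP:conf/soda/KauersOTZ14} --- and track how its coefficients depend on the input data, or argue abstractly that the set of degree-$O(s)$ moment tensors admitting such a proof contains a multiplicative neighborhood of the population tensor. Coupled with this is the routine but bookkeeping-heavy verification that the degree-$O(s)$ empirical moment tensors of $\cD$ and of its $4$-wise convolution concentrate to multiplicative accuracy with exactly the stated sample complexity $n_0$.
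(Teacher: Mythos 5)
Your proposal is correct and follows the paper's approach: the paper disposes of this lemma by noting it is an instance of Lemma~\ref{lem:typical-samples-good} (the original sample $X$ is untouched by the corruption, so nothing changes), and the proof of Lemma~\ref{lem:typical-samples-good} in the appendix is exactly the whiten-then-concentrate-moment-tensors-then-unwhiten argument you outline, with a union bound over the $k$ clusters.

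On the step you flag as the main obstacle --- that the SoS witness for $\cD$ has coefficients depending on the moments of $\cD$ --- the paper's argument sidesteps this entirely and never tracks the certificate. After whitening ($v \to \Sigma(r)^{\dagger/2} v$, resp.\ $Q \to \Sigma(r)^{\dagger/2} Q \Sigma(r)^{\dagger/2}$), the left-hand side of each SoS inequality is written as $\langle M, v^{\otimes 2j}\rangle$ (resp.\ $\langle M, Q^{\otimes j}\rangle$) for a degree-$O(s)$ moment tensor $M$ of the whitened distribution, and the population version has a fixed SoS proof. Replacing the population tensor by the empirical one introduces a perturbation $M_{\mathrm{emp}} - M_{\mathrm{pop}}$, which --- once shown to have small operator norm by a high-moment tail bound and union bound over entries --- admits a trivial degree-$2s$ SoS upper bound via Fact~\ref{fact:operator_norm} ($\sststile{2}{v}\set{v^\top A v \leq \|A\|_2 \|v\|_2^2}$ and its substitution instances). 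Adding this to the population certificate gives a degree-$2s$ proof of the empirical inequality with the constant relaxed as in the definition of a good sample, with no need to examine how the certificate's coefficients vary with the input --- SoS proofs compose additively, which is exactly the ``multiplicative neighborhood'' statement you were hoping to argue abstractly.
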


As in the previous section, the heart of the analysis is proving the following lemma that bounds the pairwise products $w(C_r)w(C_{r'})$ for all $r \neq r'$.

\begin{lemma}[Intersection Bounds from Separation]
Let $Y$ be an $\epsilon$-corruption of a good sample $X$ from a $\Delta\geq \Delta_{rob}$-separated mixture of reasonable distribution $\cD$ with true clusters $C_1, C_2,\ldots,C_k$ of size $n/k$.
Let $w(C_r)$ denote the linear polynomial $\frac{k}{n} \sum_{i \in C_r} w_i$ for every $r \leq k$. 
Then, for every $r \neq r'$,
\[
\cA \sststile{O(s(\delta)^2 \log \kappa /\delta^2)}{w} \Set{\sum_{r \neq r'} w(C_r) w(C_{r'}) \leq O(k\epsilon) + O(k^2\delta^{1/3})}\mper
\] \label{lem:intersection-bounds-from-separation-robust}
\end{lemma}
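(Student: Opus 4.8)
The plan is to reduce to the uncorrupted case of Section~\ref{sec:clustering} by transferring the goodness of the (unknown) original sample $X$ through the matching constraints $\cA_2$. Throughout, write $C'_r$ for the indeterminate tuple $(x'_i)_{i \in C_r}$ — the primed analogue of the true cluster $C_r$ — and keep $w(C_r) = \frac{k}{n}\sum_{i\in C_r} w_i$ exactly as before; note $\cA_{rob}\sststile{}{}\Set{\sum_{r\le k} w(C_r) = 1}$, so in particular $\sum_{r\ne r'}w(C_r)w(C_{r'}) = 1 - \sum_r w(C_r)^2$, and it suffices to lower bound $\sum_r w(C_r)^2$.

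\textbf{Step 1 (corruption budget).} The strong-contamination model guarantees a fixed set $T\subseteq[n]$ with $|T|\ge (1-\epsilon)n$ and $y_j = x_j$ for $j\in T$ (this $T$ exists even though the algorithm does not know it). From $\cA_2$ (i.e.\ $m_j^2=m_j$, $\sum_j m_j = (1-\epsilon)n$, $m_j(y_j-x'_j)=0$) one gets a degree-$2$ SoS proof that $m_j x'_j = m_j x_j$ for every $j\in T$ and that $|\{j: m_j=0\}| = \epsilon n$. Hence for every $r$ at most $2\epsilon n = 2k\epsilon |C_r|$ of the indices $j\in C_r$ are \emph{bad} (have $m_j=0$ or $j\notin T$); on every other index $x'_j = x_j$ is a point of the good sample $X$. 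This is the precise form of the overview's ``$|C'_i\cap C_i|\ge(1-2k\epsilon)|C_i|$''.

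\textbf{Step 2 (approximate goodness of $C'_r$ and re-running the three cases).} Using Step 1 I would show that $\cA_{rob}$ proves, for each $r$, the defining inequalities of a good sample for $C'_r$ up to an additive $O(k\epsilon)$ times the relevant norm factor. Concretely, every sum $\frac{k^2}{n^2}\sum_{i,j\in C_r} w_iw_j F(x'_i-x'_j)$ that occurs in the proofs of Lemmas~\ref{lem:intersection-bounds-from-spectral-separation}, \ref{lem:intersection-bounds-from-mean-separation}, \ref{lem:intersection-bounds-from-rel-frob-sep} splits into the part where both $i,j$ are good — there $x'_i=x_i,x'_j=x_j$ and the good-sample inequalities for $X$ apply verbatim — plus a remainder supported on $\le 2\epsilon n$ bad indices, where $F$ is nonnegative and is a polynomial of the $\hat C$-subsample of $X'$, hence controlled by the certifiable anti-concentration, hypercontractivity and bounded-variance constraints together with $w_iw_j\le 1$; this contributes the $O(k\epsilon)$ slack. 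For the lower-bound directions (Lemmas~\ref{lem:spectral-lower-bound-intersection-clustering}, \ref{lem:Large-Intersection-Implies-High-Variance-frobenius}) dropping the bad indices only weakens the inequality by $O(k\epsilon)$, which is harmless. The threshold-polynomial manipulations (Lemmas~\ref{lem:consequences-of-threshold-polynomials}, \ref{lem:upper-bound-variance-w-samples-sharper}) are untouched since they live in the scalar variables $w(C_r)$, and the norm prefactors cancel exactly as in the uncorrupted proof (e.g.\ via Lemma~\ref{lem:cancellation-SoS-constant-RHS}). Thus for each pair $r\ne r'$, $\cA_{rob}\sststile{O(s(\delta)^2\log\kappa/\delta^2)}{w}\Set{w(C_r)w(C_{r'})\le O(\delta^{1/3}) + O(k\epsilon)}$. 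Summing the $\delta^{1/3}$ terms over the $k(k-1)$ ordered pairs gives $O(k^2\delta^{1/3})$; for the $\epsilon$ terms I would do a single global accounting of the corruption in the combined quantity $1 - \sum_r w(C_r)^2$ — each of the $\le 2\epsilon n$ bad indices lies in exactly one true cluster, so the aggregate corruption slack is $O(k\epsilon)$ rather than $O(k^2\epsilon)$ — yielding the claimed $O(k\epsilon)+O(k^2\delta^{1/3})$.

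\textbf{Main obstacle.} The delicate part is Step 2: checking that the corruption error stays \emph{additive} of size $O(k\epsilon)$ all the way through the chain, particularly in the anti-concentration (lower bound) steps, where one must resist losing a multiplicative factor and instead absorb the $\le 2\epsilon n$ bad cross-terms using certifiable anti-concentration/hypercontractivity of the $\hat C$-subsample of $X'$ (which $\cA_{rob}$ enforces), while ensuring the SoS degree does not exceed $O(s(\delta)^2\log\kappa/\delta^2)$; and, second, carrying the corruption budget as a single shared resource across all clusters so that the final bound is $O(k\epsilon)$ and not $O(k^2\epsilon)$.
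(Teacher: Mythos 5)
Your high-level plan is essentially the paper's, and the key observation you identify — that the matching constraints let you symbolically ``see'' the indices where $x'_i = x_i$ — is exactly what drives the argument. The main difference is in the formalization, and the paper's formalization is precisely what dissolves the ``main obstacle'' you flag.

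The paper defines the \emph{proxy variable} $w_i' := w_i m_i \cdot \1(y_i = x_i)$ (note $\1(y_i=x_i)$ is a fixed scalar, not an indeterminate) and proves in degree-$2$ SoS that $w_i'^2 = w_i'$ and $w_i'(x'_i - x_i) = 0$. This single algebraic object encapsulates your informal ``both indices are good'': rather than splitting each sum $\sum_{i,j\in C_r} w_iw_jF(x'_i-x'_j)$ into good and bad parts at every step of every lemma, one simply re-derives Lemmas~\ref{lem:intersection-bounds-from-spectral-separation}, \ref{lem:intersection-bounds-from-mean-separation}, \ref{lem:intersection-bounds-from-rel-frob-sep} verbatim with $w_i'$ in place of $w_i$, using $w_i' \le w_i$ where an upper bound in $\cA_{rob}$ is needed and $w_i'(x'_i - x_i) = 0$ to silently transition from $x'_i$ to $x_i$ wherever a good-sample fact about $X$ is invoked. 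This yields the \emph{proxy} intersection bounds $w'(C_r)w'(C_{r'}) \le O(\delta^{1/3})$ with no $\epsilon$ slack accumulating anywhere inside the chain, which is the thing you were worried about. The corruption then enters exactly once, at the very end, via the transfer
\[
\sum_{r\ne r'} w(C_r)w(C_{r'}) \le \sum_{r\ne r'} w'(C_r)w'(C_{r'}) + 2k\epsilon,
\]
which is a two-line degree-$2$ consequence of $m_i'm_j'\ge m_i'+m_j'-1$, $\sum_i(1-m_i') \le 2\epsilon n$, and $w_iw_j\le w_i$. This single transfer is also what gives you the $O(k\epsilon)$ (rather than $O(k^2\epsilon)$) you correctly anticipate needing: the budget $\sum_i(1-m_i')$ is a fixed pot of $2\epsilon n$, and each bad index contributes to the error against every other cluster at most $k/n$ times, yielding $2k\epsilon$.

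Two smaller points worth flagging in your write-up. First, the sentence ``one gets a degree-$2$ SoS proof that $|\{j:m_j=0\}|=\epsilon n$'' is not a polynomial statement over the indeterminates $m_j$; the SoS-valid form is simply the constraint $\sum_j (1-m_j) = \epsilon n$ together with $\sum_j(1-m_j') \le 2\epsilon n$ (this is Lemma~\ref{lem:intersection-X-X'} in the paper). Second, your reduction ``it suffices to lower bound $\sum_r w(C_r)^2$'' is valid but not needed — the paper bounds the sum of products directly — and it would complicate the per-pair application of the separation lemmas, since those give bounds on individual products $w(C_r)w(C_{r'})$, not on $\sum_r w(C_r)^2$.

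In short: your approach is the right one, but without the proxy-variable abstraction you are left trying to carry an $O(k\epsilon)$ additive error through many multiplicative SoS manipulations (Hölder, AM-GM, the threshold polynomial $\cJ$, cancellation lemmas), which is both error-prone and threatens a spurious factor depending on the length of the proof chain. The fix is to isolate the corruption into the definition of $w_i'$, prove intersection bounds clean in $w'$, and pay the $2k\epsilon$ exactly once at the end.
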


For the special case when the number of components in the mixture is $k= 2$, we can improve on the lemma above and give a sum-of-squares proof of degree $O(s(\delta)^2)$ with no dependence on $\kappa$.

\begin{lemma}[Intersection Bounds from Separation, Two Components]
Let $Y$ be an $\epsilon$-corruption of a good sample $X$ from a $\Delta\geq \Delta_{rob}$-separated mixture of reasonable distribution $\cD$ with true clusters $C_1,C_2$ of size $n/2$ each.
Let $w(C_r)$ denote the linear polynomial $\frac{k}{n} \sum_{i \in C_r} w_i$ for every $r \leq 2$. 
Then, 
\[
\cA_{rob} \sststile{O(s(\delta)^2 /\delta^2)}{} \Set{ w(C_1) w(C_{2}) \leq O(\epsilon + \delta^{1/3})}\mper
\] \label{lem:intersection-bounds-from-separation-two-components-robust}
\end{lemma}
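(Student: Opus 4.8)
The plan is to reduce the statement to the non-robust two-component bound (Lemma~\ref{lem:intersection-bounds-from-separation-two-components}). The system $\cA_{rob}$ differs from the system $\cA$ of Section~\ref{sec:clustering} only in that the certifiable anti-concentration, hypercontractivity and bounded-variance constraints are imposed on the $w$-indicated subset $\hat C$ of the \emph{indeterminate} sample $X'$, which is tied to the observed $Y$ through the matching constraints $m_i^2 = m_i$, $\sum_i m_i = (1-\epsilon)n$, $m_i(y_i - x'_i)=0$. Fix (for the analysis only) the index set $G$ on which $Y$ agrees with the true sample $X$, so $|G|\ge(1-\epsilon)n$, and introduce the ``clean core'' polynomials $u_i := w_i m_i$ for $i\in G$ and $u_i := 0$ otherwise, with $u(C_r) := \tfrac kn\sum_{i\in C_r}u_i$. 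The role of $u$ is to select genuine, uncorrupted points of $X$: from $\cA_{rob}$ one has constant-degree SoS proofs of $u_i^2 = u_i$, of $0\le u_i\le w_i$, of $u_i\,p(x'_i) = u_i\,p(x_i)$ for every fixed polynomial $p$ (multiply $u_i(x_i-x'_i) = w_i\cdot m_i(y_i-x'_i) = 0$ by monomials, using $y_i=x_i$ for $i\in G$), and — using $w_i\le 1$, $\sum_i(1-m_i)=\epsilon n$, $|C_r\setminus G|\le\epsilon n$ — of $w(C_r)\le u(C_r)+2k\epsilon$ and $\sum_r u(C_r)\ge 1-2k\epsilon$.

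I would then re-run the three non-robust case analyses (spectral, mean, relative-Frobenius separation) with $w$ replaced by $u$ in every sum indexed by a cluster, keeping the $\hat C$-level constraints of $\cA_{rob}$ as they stand (the ``$\le$ full sum over $[n]$'' upper bounds of Lemmas~\ref{lem:spectral-lower-bound-intersection-clustering}, \ref{lem:anti-conc-upper-bound-w}, \ref{lem:rough-bound-spectral-upper}, \ref{lem:variance-upper-bound-hypercontractivity} only lose a factor $w_i\le 1$). The sample-\emph{independent} ingredients — the first defining inequality of certifiable anti-concentration and the polynomial identities — transfer to $x'$ unchanged. The sample-\emph{dependent} good-sample facts about $C_r,C_{r'}$ (the $\sum q^2$ bounds over a cluster and over four-wise cluster convolutions, and the subgaussianity/hypercontractivity moment bounds) are invoked after splitting each factor $w_i = u_i + (w_i-u_i)$: the $u\cdots u$ part becomes, via the substitution $u_i\,p(x'_i) = u_i\,p(x_i)$, literally a sum over true points of $X$ and is handled by the good-sample property (Lemma~\ref{lem:typical-samples-good-robust}); the remaining (``bad'') parts carry total $w$-weight $O(k\epsilon)$, and their contribution is bounded by SoS H\"older (Fact~\ref{fact:sos-holder}) against the $\hat C$-moment constraints of $\cA_{rob}$, at the cost of an additive $O_{s(\delta)}(\epsilon^{1-1/m})$ term for a constant $m$ (absorbed into the $O(\epsilon)$ of the statement by taking $m$ large; the extra $1/\delta^2$ in the degree accommodates these auxiliary steps). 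As in the $k=2$ case of the non-robust argument I would use the sharp rough spectral bound (Lemma~\ref{lem:rough-bound-spectral-upper} with $t=s(1/4Ck)=O(1)$), so no $\log\kappa$ enters the degree, and close with the SoS cancellation lemma (Lemma~\ref{lem:cancellation-SoS-constant-RHS}) to pass from $u(C_r)^3u(C_{r'})^3 = O(\delta)$ to $u(C_r)u(C_{r'}) = O(\delta^{1/3})$. Finally $\cA_{rob}\sststile{O(s(\delta)^2/\delta^2)}{}\Set{w(C_1)w(C_2)\le (u(C_1)+2k\epsilon)(u(C_2)+2k\epsilon)\le u(C_1)u(C_2)+O(k\epsilon)\le O(\epsilon+\delta^{1/3})}$ for $k=2$.

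The step I expect to be the main obstacle is the control of the ``bad'' remainder in the spectral and relative-Frobenius cases. The clusters of $X'$ need not be well behaved — an adversary may place up to $\epsilon n$ of the $x'_i$ arbitrarily far out — so the naive ``changing an $O(k\epsilon)$ fraction of points leaves an anti-concentration/moment inequality essentially intact'' principle fails pointwise; instead one must exploit that the $\hat C$-level moment constraints of $\cA_{rob}$ already forbid a small-but-heavy tail, and then use SoS H\"older to bound the weight-$O(k\epsilon)$ bad contribution by $O(k\epsilon)^{1-1/m}$ times a quantity already present in the argument (a power of $v^\top\Sigma(w)v$, or of $\Norm{\Sigma(r)-I}_F$). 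Turning these H\"older applications into a genuine low-degree SoS proof while keeping the total degree at $O(s(\delta)^2/\delta^2)$, and ensuring the additive $\epsilon$-errors are introduced only at the very end — after all divisions by $\delta$ and $\Delta$, possibly after enlarging $\Delta_{rob}$ slightly to swallow an $o(1)$ slack — is the bulk of the work; the mean-separation case, as in the non-robust proof, is comparatively painless.
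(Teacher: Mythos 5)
Your strategy is the paper's: your $u_i = w_i m_i \mathbf{1}(y_i = x_i)$ is exactly the paper's proxy variable $w_i'$ (Definition of Proxy Variables in Section~\ref{sec:outlier-robust-clustering}), the constant-degree SoS facts you list about $u$ are Lemmas~\ref{lem:replace-X-by-X'} and~\ref{lem:intersection-X-X'}, and the final conversion $w(C_r)w(C_{r'}) \le u(C_r)u(C_{r'}) + O(k\epsilon)$ is precisely how the paper derives Lemma~\ref{lem:intersection-bounds-from-separation-two-components-robust} from the proxy bound Lemma~\ref{lem:intersection-bounds-from-separation-two-components-robust-proxy}. However, the ``main obstacle'' you single out in your last paragraph is not actually there, and the split-and-H\"older machinery you propose to deal with it is unnecessary and sits in tension with the rest of your own plan.

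Once every cluster-indexed sum carries $u$-weights (not $w$-weights), there is nothing left to split. The two kinds of sample-level facts you need go through directly via the three elementary relations you already proved: $0 \le u_i \le 1$, $u_i \le w_i$, and $u_i p(x_i) = u_i p(x_i')$. The good-sample facts about $C_r, C_{r'}$ are \emph{unweighted} bounds over the clean points $x_i$ of $X$, so they dominate any $u$-weighted cluster sum immediately by $u_i \le 1$; no adversarial $x_i'$ ever appears in these terms. In the other direction, to invoke an $\cA_{rob}$ constraint on $\hat C$ (which is stated over $X'$ with $w$-weights) you go $\sum_{C_r, C_{r'}} u\,p(x) = \sum_{C_r, C_{r'}} u\,p(x') \le \sum_{[n]} w\,p(x')$, which is exactly the $\cA_{rob}$-controlled quantity. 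Every inequality in the two-component arguments (the anti-concentration lower bound, the anti-concentration upper bound, the rough spectral bound with $\tau = \Theta(1/k)$, and the hypercontractivity-based variance bound) chains through in this way with no additive error. The $\epsilon$ appears for the first time in the very last line, when $w(C_r) \le u(C_r) + 2k\epsilon$. So the H\"older step, the powers $\epsilon^{1-1/m}$, and the enlargement of $\Delta_{rob}$ to absorb a slack are all red herrings; keeping them would clutter the SoS degree bookkeeping without buying you anything. Fix this and the proof is the paper's proof.
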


Given Lemma~\ref{lem:intersection-bounds-from-separation}, the proof of Theorem~\ref{thm:main-robust-clustering-section} follows by the same argument as for Theorem~\ref{thm:main-clustering-section}. 

\subsection{Proof of Lemmas~\ref{lem:intersection-bounds-from-separation-robust} and ~\ref{lem:intersection-bounds-from-separation-two-components-robust}} 
As we show in this section, the proof of Lemma~\ref{lem:intersection-bounds-from-separation-robust} follows from essentially the same argument as in the previous section with two additional observations.  

The key idea in bringing the machinery from the previous section into play is to consider the following variables that satisfy constraints of being the indicator of the intersection between $X'$ (indeterminates in our program) and $X$ (original uncorrupted sample we do not have access to) - let $m_i' = m_i \cdot \1 (y_i = x_i)$ for every $i$. We now make the following key definition/notation.

\begin{definition}[Proxy Variables and Cluster Sizes] 
Let $w_i' = w_i m_i' = w_i m_i \1(y_i =x_i)$ and define $w'(C_r) = \frac{k}{n} \sum_{i \in C_r} w'_i$ for every $r$.
\end{definition}

We refer to $w_i'$ variables as proxy variables (they allow us to talk about subsets of $X$ by ``proxy''). Observe that we do not have access to the $w_i'$ variables through our program. They only appear in our analysis of the algorithm. They allow us to ``go between'' $x_i$s (the originals sample that we do not have access to) and $x_i'$ (the indeterminates that our constraints are defined over). 

The result that formally allows us to do this is:

\begin{lemma}[Matching with Original Uncorrupted Samples] \label{lem:replace-X-by-X'}
Let $m_i' = m_i \cdot \1 (y_i = x_i)$ for every $i$. Let $w_i' = w_i m_i' = w_i m_i \1 (y_i = x_i)$. 
Then, 
\[
\cA_{rob} \sststile{2}{w'} \Set{ {w_i'}^2 = w_i' \text{ } \forall i} \cup \Set{w_i'(x'_i - x_i) = 0}\mper
\]
\end{lemma}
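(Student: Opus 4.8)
The plan is to unwind the definitions and reduce both claimed identities to the Boolean constraints in $\cA_2,\cA_3$, the matching constraint $m_i(y_i-x_i')=0$ of $\cA_2$, and one trivial identity among constants. Throughout the analysis, recall that although the algorithm only observes the corrupted sample $Y$, the uncorrupted sample $X$ is a fixed (if unknown) object; hence each $x_i\in\R^d$ is a constant vector and each $c_i:=\1(y_i=x_i)\in\{0,1\}$ is a constant scalar with $c_i^2=c_i$. In particular $w_i'=w_i m_i c_i$ is a genuine polynomial of degree $2$ in the program's indeterminates $w,m$, and the two displayed statements are polynomial identities in those indeterminates that we must derive from $\cA_{rob}$ by a constant-degree sum-of-squares proof.

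For idempotence, write ${w_i'}^2=w_i^2 m_i^2 c_i^2$ and use $c_i^2=c_i$ to observe the algebraic identity ${w_i'}^2-w_i'=c_i m_i^2\,(w_i^2-w_i)+c_i w_i\,(m_i^2-m_i)$. Since $\cA_3$ contains $w_i^2=w_i$ and $\cA_2$ contains $m_i^2=m_i$, the addition/multiplication rules \eqref{eq:sos-addition-multiplication-rule} (multiplying the first constraint by $c_i m_i^2$ and the second by $c_i w_i$) yield $\cA_{rob}\sststile{}{}\Set{{w_i'}^2=w_i'}$; this is degree $2$ in the proxy variable $w_i'$ and constant degree in $w,m$. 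For the second identity, first note the constant identity $c_i y_i=c_i x_i$: if $c_i=1$ then $y_i=x_i$ as vectors, and if $c_i=0$ both sides vanish, so this is available for free (degree $0$). Now multiply the matching constraint $m_i(y_i-x_i')=0$ from $\cA_2$ by the polynomial $w_i c_i$ to get $\cA_{rob}\sststile{}{}\Set{w_i m_i c_i\,x_i'=w_i m_i c_i\,y_i}$, i.e. $w_i' x_i'=w_i m_i\,(c_i y_i)$; substituting $c_i y_i=c_i x_i$ gives $w_i' x_i'=w_i m_i c_i\,x_i=w_i' x_i$, which is exactly $w_i'(x_i'-x_i)=0$.

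There is no genuine obstacle here; the lemma is a bookkeeping statement whose sole purpose is to certify that the proxy variables $w_i'$ — which are not indeterminates of $\cA_{rob}$ but are definable from them — act as honest $0/1$ selectors of a subset of the \emph{true} sample $X$ rather than of the guessed sample $X'$. The one point requiring care is exactly the one highlighted above: $x_i$ and $c_i$ must be treated as constants, so that their appearance does not inflate the SoS degree and so that $c_i y_i=c_i x_i$ is a free identity, which is legitimate precisely because $X$ is fixed in the analysis even though unknown to the algorithm. Equipped with this lemma, one replays the intersection-bound arguments of Section~\ref{sec:clustering} with $w'(C_r)=\frac{k}{n}\sum_{i\in C_r}w_i'$ in place of $w(C_r)$; the only new loss is the $O(k\epsilon)$ term, which comes from the degree-$2$ consequence $|X\cap X'|\geq(1-2\epsilon)n$ of the intersection constraints $\cA_2$.
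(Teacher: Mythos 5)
Your proof is correct and follows essentially the same route as the paper: for idempotence you reduce to $w_i^2=w_i$, $m_i^2=m_i$, and the constant identity $\1(y_i=x_i)^2=\1(y_i=x_i)$; for the second identity you combine the matching constraint $m_i(y_i-x_i')=0$ with the constant tautology $\1(y_i=x_i)(y_i-x_i)=0$, which is exactly the two-term decomposition the paper uses (writing $x_i'-x_i=(x_i'-y_i)+(y_i-x_i)$). The explicit emphasis that $x_i$ and $c_i=\1(y_i=x_i)$ are constants of the analysis, not indeterminates, is the right point of care and matches the paper's implicit treatment.
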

\begin{proof}

For the first conclusion,
\[
\cA_{rob} \sststile{2}{w'} \Set{ {w_i'}^2 = w_i^2 m_i^2 \cdot \1 (y_i = x_i)^2 = w_i m_i \1 (y_i = x_i) = w_i'}\mper
\]

For the second conclusion,
\[
\cA_{rob} \sststile{2}{w'} \Set{ w_i'(x_i'-x_i) = w_i'(x_i'-y_i) + w_i'(y_i-x_i) = \1 (y_i = x_i)  w_i m_i (x_i'-y_i) + m_i w_i \1 (y_i = x_i) (x_i-y_i) = 0}\mper
\]
\end{proof}

Using this simple lemma, as we will soon discuss in some more detail, we get to apply our previous arguments to the original sample $X$ by simply shifting to the ``proxy'' $w_i'$ variables. As a result, we will be able to prove the following intersection bounds for the proxy cluster sizes.

\begin{lemma}[Proxy Intersection Bounds from Separation]
Let $Y$ be an $\epsilon$-corruption of a good sample $X$.
Let $w'(C_r)$ denote the linear polynomial $\frac{k}{n} \sum_{i \in C_r} w'_i$ for every $r \leq k$. 
Then, for every $r \neq r'$,
\[
\cA_{rob} \sststile{O(s(\delta)^2 \log \kappa)}{} \Set{ w'(C_r) w'(C_{r'}) \leq O(\delta^{1/3})}\mper
\] \label{lem:intersection-bounds-from-separation-robust-proxy}
\end{lemma}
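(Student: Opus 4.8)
The plan is to port the proof of Lemma~\ref{lem:intersection-bounds-from-separation} (equivalently, the three case lemmas~\ref{lem:intersection-bounds-from-spectral-separation},~\ref{lem:intersection-bounds-from-mean-separation} and~\ref{lem:intersection-bounds-from-rel-frob-sep}) to the robust setting essentially verbatim, after substituting the proxy variables $w_i'$ for the indicators $w_i$ and the genuine (good, unseen) sample points $x_i$ for the program indeterminates $x_i'$. The two enabling facts are supplied by Lemma~\ref{lem:replace-X-by-X'}: $\Set{{w_i'}^2 = w_i'}$ and $\Set{w_i'(x_i'-x_i)=0}$. The first lets us treat the $w_i'$ as $0/1$ indicators; together they let us \emph{substitute} $x_i'\mapsto x_i$ under a factor of $w_i'$: a degree-$O(\deg p)$ telescoping SoS argument (write a monomial in $x_i'$ as $x_{i,\ell}'$ times a lower-degree monomial and use $w_i' x_{i,\ell}' = {w_i'}^2 x_{i,\ell}' = w_i'(w_i' x_{i,\ell})$, then induct) gives
\[
\cA_{rob}\sststile{O(\deg p)}{}\Set{ w_i'w_j'\,p(x_i',x_j') = w_i'w_j'\,p(x_i,x_j)}
\]
for every polynomial $p$. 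Moreover, writing $w_i' = w_i m_i'$ with $m_i' = m_i\1(y_i=x_i)$ (so ${m_i'}^2 = m_i'$ follows from $m_i^2=m_i$ in $\cA_2$), and using $w_i = w_i^2$, one checks $\cA_{rob}\sststile{O(1)}{}\Set{0\leq w_i'w_j'\leq w_iw_j}$.

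With these two tools, I would first transfer the relevant constraints of $\cA_{rob}$ from $X'$ to the clean sample $X$. Each of the certifiable anti-concentration, hypercontractivity, and bounded-variance constraints of $\cA_{rob}$ has the form $\frac{k^2}{n^2}\sum_{i,j} w_iw_j\, g(x_i'-x_j',v)\le(\text{RHS in }\Sigma)$ with $g$ a sum of squares; replacing $w_iw_j$ by $w_i'w_j'\le w_iw_j$ only shrinks the left-hand side, and the substitution identity then turns each into the corresponding statement over $X$, e.g.
\[
\cA_{rob}\sststile{O(s(\delta))}{}\Set{\tfrac{k^2}{n^2}\sum_{i,j} w_i'w_j'\, q_{\delta,\Sigma}^2\big((x_i-x_j),v\big)\leq 2^{s(\delta)}C\delta\,(v^\top\Sigma v)^{s(\delta)}}\mcom
\]
and likewise for the $\tau$-variant, for the degree-$2$ moment bounds of $\cA_5$ (taking $Q=vv^\top$ and using $\Pi^2=\Sigma$ from $\cA_1$, so $\Norm{\Pi vv^\top\Pi}_F = v^\top\Sigma v$), and for the bounded-variance bound of $\cA_6$. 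The ``approximator'' half of certifiable anti-concentration is a universal SoS identity in $p_{\delta,\Sigma}$ (Fact~\ref{fact:univariate}) and carries over unchanged. The point to track is that the \emph{same} indeterminate $\Sigma$ (from the parameter constraints) plays the role $\Sigma(w)$ played in Section~\ref{sec:clustering}: the arguments there invoke $\Sigma$ only through these constraints and through $\Sigma=\Pi^2\succeq 0$, never through the literal identity $\Sigma=\frac1n\sum_i w_i(x_i'-\mu)(x_i'-\mu)^\top$ — which, for the proxy subset of $X$, holds only as an inequality.

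Next I would dispense with the one genuine discrepancy between $w$ and $w'$: that $w'$ indicates a set that may be slightly smaller than $n/k$. From $0\leq w_i'\leq w_i$ we get $\cA_{rob}\sststile{2}{}\Set{0\leq w'(C_r)\leq 1}$ and $\cA_{rob}\sststile{2}{}\Set{\sum_i w_i'\leq n/k}$; using $\sum_i(1-m_i)=\epsilon n$ from $\cA_2$ together with the sample-level fact $\#\Set{i:y_i\neq x_i}\leq\epsilon n$, a degree-$2$ computation gives $\cA_{rob}\sststile{2}{}\Set{\sum_i w_i'\geq \tfrac nk - 2\epsilon n}$, hence $\cA_{rob}\sststile{2}{}\Set{\sum_{r\leq k} w'(C_r)^2\geq \tfrac1k(\sum_r w'(C_r))^2\geq \tfrac1{2k}}$ for $\epsilon\leq 1/(4k)$. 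This is the sole place the non-robust argument used $\sum_r w(C_r)=1$ — inside the rough spectral upper bound, Lemma~\ref{lem:rough-bound-spectral-upper} — and weakening $\tfrac1k$ to $\tfrac1{2k}$ there changes only absolute constants. Everything else needed about the genuine clusters $C_r,C_{r'}\subseteq X$ (mean/covariance concentration, certifiable anti-concentration, hypercontractivity and bounded variance of their relevant convolutions, and pairwise $\Delta/2\geq\Delta_{rob}/2$-separation) is provided directly by goodness of $X$ (Lemma~\ref{lem:typical-samples-good-robust}, with the separation addendum as in Lemma~\ref{lem:typical-samples-good}) and is untouched by the substitution.

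With the above in place, the proofs of Lemmas~\ref{lem:intersection-bounds-from-spectral-separation},~\ref{lem:intersection-bounds-from-mean-separation} and~\ref{lem:intersection-bounds-from-rel-frob-sep} go through line-for-line under the dictionary $w_i\mapsto w_i'$, $x_i'\mapsto x_i$, $\Sigma(w)\mapsto\Sigma$ (the threshold-polynomial machinery of Lemma~\ref{lem:poly-approximate-threshold} is unaffected, since it only sees the scalar $w'(C_r)\in[0,1]$): whichever of the three separations holds for $C_r,C_{r'}$ — one always does, at scale $\gg C s(\delta)/\delta^2$ — yields $\cA_{rob}\sststile{O(s(\delta)^2\log\kappa)}{}\Set{w'(C_r)w'(C_{r'})\leq O(\delta^{1/3})}$, and taking the worst of the three proves the lemma with the stated degree. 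I expect the main obstacle to be not conceptual but one of careful bookkeeping: one must verify that \emph{every} step of the three (rather long) case analyses of Section~\ref{sec:clustering} touches $\Sigma$ and the sample sums only through the constraints transferred above, and never through identities special to the uncorrupted program (principally the exact size $\sum_i w_i=n/k$ and the exact covariance identity for $\Sigma$); the discussion above isolates the single genuine deviation and shows its effect is harmless.
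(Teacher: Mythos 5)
Your proposal is correct and follows essentially the same route as the paper: define the proxy variables $w_i' = w_im_i'$, use Lemma~\ref{lem:replace-X-by-X'} to pass from $x_i'$ to $x_i$ under a factor of $w_i'$, observe $w_i'w_j'\leq w_iw_j$ to transfer the SoS constraints, and patch the rough spectral upper bound (Lemma~\ref{lem:rough-bound-spectral-upper}) by replacing $\sum_r w(C_r)^2\geq 1/k$ with the robust bound $\sum_r w'(C_r)^2\geq\Omega(1/k)$, after which Lemmas~\ref{lem:intersection-bounds-from-spectral-separation},~\ref{lem:intersection-bounds-from-mean-separation},~\ref{lem:intersection-bounds-from-rel-frob-sep} carry over verbatim. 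One small imprecision: the bounded-variance constraint $\cA_6$ is not of the form $\sum_{i,j}w_iw_j\,g(x_i'-x_j',v)$ with $g$ depending only on $x_i'-x_j'$ (the inner $\E_wQ$ references all samples), but this is harmless because the Frobenius-case argument only needs $w_i'w_j'\leq w_iw_j$ together with the substitution on the cluster-index sums, keeping $\E_wQ$ as the reference point.
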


For the special case when the number of components in the mixture is $k= 2$, we can improve on the lemma above and give a sum-of-squares proof of degree $O(s(\delta)^2)$ with no dependence on $\kappa$.

\begin{lemma}[Proxy Intersection Bounds from Separation, Two Components]
Let $Y$ be an $\epsilon$-corruption of a good sample $X$.
Let $w'(C_r)$ denote the linear polynomial $\frac{k}{n} \sum_{i \in C_r} w'_i$ for every $r \leq 2$. 
Then, 
\[
\cA_{rob} \sststile{O(s(\delta)^2)}{} \Set{ w'(C_1) w'(C_{2}) \leq O(\delta^{1/3})}\mper
\] \label{lem:intersection-bounds-from-separation-two-components-robust-proxy}
\end{lemma}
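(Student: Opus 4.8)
The plan is to run the proof of the general-$k$ proxy bound (Lemma~\ref{lem:intersection-bounds-from-separation-robust-proxy}) essentially verbatim and to observe that for $k=2$ every place where the sum-of-squares degree grows with $\log\kappa$ disappears. First I would set up the reduction to the uncorrupted sample. By Lemma~\ref{lem:replace-X-by-X'}, the proxy variables $w_i'$ satisfy the Boolean identities $(w_i')^2=w_i'$ together with the pinning identities $w_i'(x_i'-x_i)=0$; moreover $w_i-w_i' = w_i(1-m_i')$ is a sum of squares modulo the Boolean constraints, so $w_i'w_j'$ is dominated by $w_iw_j$ in the same sense. Multiplying the anti-concentration, hypercontractivity and bounded-variance constraints of $\cA_{rob}$ (each an upper bound on $\frac{k^2}{n^2}\sum w_iw_j(\cdot)$ for a sum-of-squares integrand) by the appropriate products of proxy variables, and pushing the pinning identities inside each polynomial via the substitution rule, I would deduce --- at the cost of only an $O(1)$ additive increase in degree --- that $w'$, now read against the \emph{uncorrupted} sample $X$ rather than $X'$, satisfies the non-robust constraint system $\cA$ of Section~\ref{sec:clustering}, with the empirical mean and covariance of the proxy subset playing the roles of $\mu,\Sigma$. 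Since $X$ is a good sample (Lemma~\ref{lem:typical-samples-good-robust}), both true clusters $C_1,C_2$ are hypercontractive and anti-concentrated and have parameters $\mu(r),\Sigma(r)$ up to the goodness tolerances, so all hypotheses of the per-case lemmas of Section~\ref{sec:clustering} are available for this pair.

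Next, since $C_1,C_2$ are $\Delta$-separated with $\Delta\ge\Delta_{rob}$, at least one of the three conditions of Definition~\ref{def:separated-mixture-model} holds for the pair, so I would bound $w'(C_1)w'(C_2)$ separately under spectral, mean and relative-Frobenius separation and combine. In each case, the only source of a $\log\kappa$ loss is the combination of the threshold polynomial $\cJ$ with the rough spectral bound of Lemma~\ref{lem:rough-bound-spectral-upper}, whose right-hand side scales with $\max_{i\le k} v^\top\Sigma(i)v$ and is thus up to a factor $\kappa$ larger than $v^\top\Sigma(r')v$. For $k=2$ there are only the two clusters in play, so $\max_{i\le 2} v^\top\Sigma(i)v\le v^\top(\Sigma(1)+\Sigma(2))v$, and I would invoke Lemma~\ref{lem:rough-bound-spectral-upper} with parameter $\tau=\Theta(\delta/k)=\Theta(\delta)$ (so $s(\tau)=\poly(1/\delta)$) to get $(v^\top\Sigma(w)v)^t\le 2^{O(t)}\bigl((v^\top\Sigma(1)v)^t+(v^\top\Sigma(2)v)^t\bigr)$ for $t=s(\tau)$. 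Substituting this sharper estimate for the uniform eigenvalue bound and then following the $k=2$ specializations already carried out in the ``$k=2$'' paragraph of Lemma~\ref{lem:intersection-bounds-from-spectral-separation}, in Lemma~\ref{lem:intersection-bounds-from-mean-separation-2-components}, and in Lemma~\ref{lem:intersection-bounds-from-rel-frob-sep-two-components} --- which allow the slack parameter to be taken $\rho=O(1)$ rather than $\rho=\kappa^{-s}$ --- each case gives $w'(C_1)w'(C_2)\le O(\delta^{1/3})$ via a sum-of-squares proof of degree $O(s(\delta)^2)$. Taking the maximum over the three cases finishes the argument.

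The hard part is the bookkeeping in the first step, not the $k=2$ specialization: one must check that each constraint of $\cA_{rob}$ stated over the indeterminates $X'$ really does become, after multiplication by the correct product of proxy variables and a single application of the substitution rule, a constraint of the same shape over the uncorrupted points $x_i$ with at most a constant additive degree increase, and that the empirical mean and covariance of the proxy subset of $X$ fall within the goodness tolerances relative to the true $\mu(r),\Sigma(r)$, so that the Section~\ref{sec:clustering} lemmas genuinely apply to $C_1$ and $C_2$. (This is exactly the step for which the proxy variables were introduced, and it is carried out in the general-$k$ proof we are specializing.) Granting this, replacing the rough eigenvalue bound by the sharp two-cluster bound of Lemma~\ref{lem:rough-bound-spectral-upper} is mechanical and parallels the non-robust case; note also that the $\epsilon$-dependence does not appear here, since it enters only when passing from $w'(C_r)$ back to $w(C_r)$ in Lemma~\ref{lem:intersection-bounds-from-separation-two-components-robust}.
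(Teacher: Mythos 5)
Your proposal matches the paper's approach: the paper proves the general-$k$ proxy bound (Lemma~\ref{lem:intersection-bounds-from-separation-robust-proxy}) by shifting from $X'$ to $X$ via the pinning identities of Lemma~\ref{lem:replace-X-by-X'} (exactly the bookkeeping you flag as the hard part, carried out in Lemmas~\ref{lem:spectral-lower-bound-intersection-clustering-robust}--\ref{lem:rough-bound-spectral-upper-robust}), and then obtains the $k=2$, $\kappa$-free version by swapping the uniform eigenvalue bound for the sharp two-cluster bound of Lemma~\ref{lem:rough-bound-spectral-upper}, precisely as in the $k=2$ specializations of Section~\ref{sec:clustering}. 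Your characterization of the source of the $\log\kappa$ loss and how it disappears for $k=2$ is the right one, and your remark about $\epsilon$ entering only in the passage from $w'(C_r)$ to $w(C_r)$ is also what the paper does.
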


It is easy  to complete the proof of Lemmas~\ref{lem:intersection-bounds-from-separation-robust} and ~\ref{lem:intersection-bounds-from-separation-two-components-robust-proxy} using the above two lemmas. We show the proof for Lemma~\ref{lem:intersection-bounds-from-separation-robust}. The proof for Lemma~\ref{lem:intersection-bounds-from-separation-two-components-robust-proxy} is analogous.

We will use the following bound that (in low-degree SoS) shows that $X$ and $X'$ intersect in $(1-2\epsilon)n$  points.
\begin{lemma}[Matching with Original Uncorrupted Samples] \label{lem:intersection-X-X'}
Let $m_i' = m_i \cdot \1 (y_i = x_i)$ for every $i$. Then, 
\[
\cA_{rob} \sststile{2}{} \Set{\sum_{i \leq n} m_i' \geq (1-2\epsilon)n}\mper
\]
\end{lemma}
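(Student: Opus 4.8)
The plan is a one-line degree-$2$ sum-of-squares identity that uses nothing beyond the Boolean and cardinality constraints on the matching variables in $\cA_2$ together with the defining property of the strong-contamination model. Let $\bar S = \{i \in [n] : y_i \neq x_i\}$; since $Y$ is an $\epsilon$-corruption of $X$, we have $|\bar S| \leq \epsilon n$. Although the algorithm never sees $X$, each $\1(y_i = x_i)$ is a \emph{fixed} element of $\{0,1\}$, so $m_i' = m_i \cdot \1(y_i = x_i)$ is an honest linear polynomial in the program indeterminates $m_1,\dots,m_n$, and the coefficients of the SoS proof we exhibit are allowed to depend on $\bar S$.

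First I would invoke the linear constraint $\sum_{i=1}^n m_i = (1-\epsilon)n$ from $\cA_2$ to rewrite, as a polynomial identity, $\sum_{i=1}^n m_i' = \sum_{i \notin \bar S} m_i = (1-\epsilon)n - \sum_{i \in \bar S} m_i$, where we used that $m_i' = m_i$ for $i \notin \bar S$ and $m_i' = 0$ for $i \in \bar S$. Hence
\[
\sum_{i=1}^n m_i' - (1-2\epsilon)n \;=\; \epsilon n - \sum_{i \in \bar S} m_i \;=\; \bigl(\epsilon n - |\bar S|\bigr) + \sum_{i \in \bar S} (1 - m_i).
\]
Now using the constraint $m_i^2 = m_i$ from $\cA_2$ we may replace $1 - m_i$ by the square $(1-m_i)^2$, so $\sum_{i \in \bar S}(1-m_i)$ is a sum of squares, and $\epsilon n - |\bar S| \geq 0$ by the corruption assumption. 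This exhibits $\sum_i m_i' - (1-2\epsilon)n$ as a nonnegative constant plus a sum of squares plus a multiple of the equality constraint, which by the addition rule \eqref{eq:sos-addition-multiplication-rule} is a degree-$2$ derivation of $\cA_{rob} \sststile{2}{} \Set{\sum_i m_i' \geq (1-2\epsilon)n}$.

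There is no real obstacle here: the only point worth flagging is the bookkeeping observation that $\1(y_i = x_i)$ must be treated as a problem constant (not an indeterminate), which is exactly what keeps $m_i'$ a polynomial in the program variables and keeps the whole argument at degree $2$; everything else is the elementary fact that at most an $\epsilon$-fraction of the coordinates were corrupted.
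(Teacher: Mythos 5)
Your proof is correct and follows essentially the same route as the paper's: both rest only on the Boolean constraint $m_i^2 = m_i$, the cardinality constraint $\sum_i m_i = (1-\epsilon)n$, and the ambient fact $|\{i : y_i \neq x_i\}| \leq \epsilon n$, and both are degree-$2$ derivations. The paper first bounds $\sum_i m_i\1(y_i\neq x_i) \leq \epsilon n$ and $\sum_i (1-m_i)\1(y_i=x_i)\leq \epsilon n$ and combines them (the final display there has a small typo — the middle expression is identically $n$ — but the intended argument is the one you spelled out); your version is the same argument written out directly as an explicit SoS certificate, $\sum_i m_i' - (1-2\epsilon)n = (\epsilon n - |\bar S|) + \sum_{i\in\bar S}(1-m_i)^2 + (\text{multiple of the linear constraint})$, which is if anything a cleaner presentation.
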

\begin{proof}
Observe that using $\{m_i^2 = m_i\} \sststile{2}{m} \Set{m_i \leq 1}$, we have:
\[
\cA_{rob} \sststile{2}{} \Set{ \sum_{i \leq n} m_i \cdot \1 (y_i \neq x_i) \leq \sum_{i \leq n} \1 (y_i \neq x_i)= \epsilon n}\mper 
\]
Similarly,
\[
\cA_{rob} \sststile{2}{} \Set{ \sum_{i \leq n} (1-m_i) \cdot \1 (y_i = x_i) \leq \sum_{i \leq n}(1-m_i) = \epsilon n}\mper
\]
Thus, 
\[
\cA_{rob} \sststile{2}{} \Set{ \sum_{i \leq n} m_i \cdot \1 (y_i = x_i) \geq \sum_{i \leq n} \Paren{m_i + (1-m_i)} \Paren{ \1 (y_i = x_i) + \1(y_i \neq x_i)} \geq n - 2 \epsilon n }\mper
\]
\end{proof}

\begin{proof}[Proof of Lemma~~\ref{lem:intersection-bounds-from-separation-robust}]
Observe that using  $\cA_{rob} \sststile{}{} \Set{m_i' \leq 1}$ for every $i$, and $\cA_{rob}\sststile{}{} \Set{ \sum_{r \leq k} w(C_r) = 1}$ we have:
\begin{align*}
\cA_{rob} &\sststile{s(\delta)\log(\kappa)}{w,w',m'} \Biggl\{ \sum_{r \neq r'} w'(C_r)w'(C_{r'})= \frac{k^2}{n^2} \sum_{r \neq r'} \sum_{i \in C_r, j\in C_{r'}} w_i w_j m_i' m_j'\\
&\geq \frac{k^2}{n^2} \sum_{r \neq r'} \sum_{i \in C_r, j\in C_{r'}} w_i w_j - 2\frac{k^2}{n^2} \sum_{r \neq r'} \sum_{i \in C_r, j\in C_{r'}} w_i w_j (1-m_i) \\
&\geq \frac{k^2}{n^2} \sum_{r \neq r'} \sum_{i \in C_r, j\in C_{r'}} w_i w_j - 2\frac{k^2}{n^2} \sum_{r \neq r'} \sum_{i \in C_r, j\in C_{r'}} w_i  (1-m_i) \\
&\geq \frac{k^2}{n^2} \sum_{r \neq r'} \sum_{i \in C_r, j\in C_{r'}} w_i w_j - 2\frac{k}{n} \sum_{r \neq r'} \sum_{i \in C_r, j\in C_{r'}} (1-m_i)\\
&\geq \frac{k^2}{n^2} \sum_{r \neq r'} \sum_{i \in C_r, j\in C_{r'}} w_i w_j - 2\frac{k}{n} \sum_{r \neq r'} \sum_{i \in C_r, j\in C_{r'}} (1-m_i)\\
&= \frac{k^2}{n^2} \sum_{r \neq r'} \sum_{i \in C_r, j\in C_{r'}} w_i w_j - 2k\epsilon \Biggr\}\mper
\end{align*}

Rearranging yields: 
\[
\cA_{rob} \sststile{s(\delta)\log(\kappa)}{w} \Set{\sum_{r \neq r'} w(C_r) w(C_{r'}) \leq \sum_{r \neq r'} w'(C_r) w'(C_{r'}) + 2k\epsilon}\mper
\]

Plugging in the bound from Lemma~\ref{lem:intersection-bounds-from-separation-robust-proxy} completes the proof.
\end{proof}

\subsection{Proof of the Simultaneous Proxy  Intersection Bounds}

We prove Lemma~\ref{lem:intersection-bounds-from-separation-robust-proxy} with a proof strategy that is essentially same as the one  employed in the proofs of Lemmas~\ref{lem:intersection-bounds-from-spectral-separation},~\ref{lem:intersection-bounds-from-mean-separation} and~\ref{lem:intersection-bounds-from-rel-frob-sep}. We will start with constraints stated in terms of the $X'$ variables and use Lemma~\ref{lem:replace-X-by-X'} at appropriate places to transition  into $X$ variables. At that point, we can plug in our argument from the previous section without change.  

We will do the case of spectral separation in detail to illustrate why this strategy works essentially syntactically. 
\begin{lemma}[Simultaneous Proxy Intersection Bounds from Spectral Separation]\label{lem:intersection-bounds-from-spectral-separation-robust}
Suppose there exists a $v$ such that $v^{\top}\Sigma(r') v > \Delta_{\spe} v^{\top}\Sigma(r') v$. Let $B = \max_{i \leq k} \frac{v^{\top} \Sigma(i)v}{v^{\top}\Sigma(r')v} \leq \kappa$.

Then, whenever $\Delta_{\spe} \gg Cs/\delta$, 
\[
\cA_{rob} \sststile{O(s\log (2B))}{w'} \Set{w'(C_r)w'(C_{r'}) \leq  O(\sqrt{\delta}) }\mper
\]


\end{lemma}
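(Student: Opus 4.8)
The plan is to replay the proof of Lemma~\ref{lem:intersection-bounds-from-spectral-separation} essentially verbatim, working with the \emph{proxy} variables $w_i' = w_i m_i'$ (where $m_i' = m_i \1(y_i = x_i)$) in place of $w_i$, and with the genuine uncorrupted sample $X$ in place of the indeterminates $X'$ wherever an estimate invokes a property of a real cluster. The bridge between the two pictures is Lemma~\ref{lem:replace-X-by-X'}, which gives $\cA_{rob} \sststile{2}{w'} \Set{{w_i'}^2 = w_i'}$ and $\cA_{rob} \sststile{2}{w'} \Set{w_i'(x_i' - x_i) = 0}$. Together with the trivial consequences $\cA_{rob} \sststile{}{} \Set{0 \le w_i' \le w_i \le 1}$ and the substitution rule~\eqref{eq:sos-substitution}, this lets us (i) rewrite $x_i'$ as $x_i$ (or conversely) inside any polynomial that carries a factor of $w_i'$, and (ii) pass from an inequality bounding a $w$-weighted sum of sum-of-squares terms over the $X'$ variables to the corresponding $w'$-weighted sum involving the $X$ variables, since $w_{i_1} w_{i_2} w_{j_1} w_{j_2} - w_{i_1}' w_{i_2}' w_{j_1}' w_{j_2}'$ is a sum-of-squares multiple of the constraint polynomials. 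Establishing the proxy bound $\cA_{rob} \sststile{O(s \log(2B))}{w'} \Set{w'(C_r) w'(C_{r'}) \le O(\sqrt\delta)}$ is the entire content of the lemma.

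First I would prove the proxy analog of Lemma~\ref{lem:spectral-lower-bound-intersection-clustering}. For $i \in C_r$ and $j \in C_{r'}$, the four-sample-centered vector $x_{i_1} - x_{i_2} - x_{j_1} + x_{j_2}$ is a difference of genuine samples from the $\Delta_{\spe}$-separated, certifiably $(\delta, C\delta)$-anti-concentrated clusters $C_r, C_{r'}$ supplied by goodness of $X$ (Lemma~\ref{lem:typical-samples-good-robust}, item~3). Multiplying that certifiable anti-concentration inequality by the Boolean proxy weights $w_{i_1}' w_{i_2}' w_{j_1}' w_{j_2}' \le 1$ and then applying $w_i'(x_i' - x_i) = 0$ to replace each $x_i$ by $x_i'$, the $q_{\delta,\cdot}^2$ error term is absorbed exactly as in~\eqref{eq:anti-conc-app-clusters}; meanwhile certifiable subgaussianity of $\hat C$ over $X'$ (constraint $\cA_5$) controls $\tfrac{k^4}{n^4} \sum_{[n]^4} w_{i_1} w_{i_2} w_{j_1} w_{j_2} \Iprod{x_{i_1}' - x_{i_2}' - x_{j_1}' + x_{j_2}', v}^{2s}$ by $(Cs/\delta^2)^s (v^\top \Sigma v)^s$, with $\Sigma$ the covariance indeterminate of $\cA_4$. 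Combining (and using $w' \le w$ together with nonnegativity of the squared terms as in point (ii) above) yields $\cA_{rob} \sststile{4s}{} \Set{ w'(C_r) w'(C_{r'}) \Paren{v^\top \Paren{\Sigma(r) + \Sigma(r')} v}^s \le \Paren{2/\delta^2}^s \Paren{v^\top \Sigma v}^s + C\delta \Paren{v^\top \Paren{\Sigma(r) + \Sigma(r')} v}^s }$.

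Next I would transport the three upper bounds: Lemma~\ref{lem:anti-conc-upper-bound-w} (anti-concentration of $\hat C$ from $\cA_4$ played against subgaussianity of $C_r$ from goodness of $X$), Lemma~\ref{lem:rough-bound-spectral-upper} (the rough eigenvalue bound $(v^\top \Sigma v)^s \le (2Ck)^{s+1}(Cs)^s \sum_{r \le k} (v^\top \Sigma(r) v)^s$), and Lemma~\ref{lem:upper-bound-variance-w-samples-sharper}; each goes through with $w'$ in place of $w$ via the same proxy substitutions. With these in hand, the combination is identical to the proof of Lemma~\ref{lem:intersection-bounds-from-spectral-separation}: multiply the lower bound by the nonnegative threshold polynomial $\cJ(w'(C_r))(w'(C_r) - \delta) + \delta \rho$ with $\rho \approx (2B)^{-\Theta(s)}$ (Lemmas~\ref{lem:poly-approximate-threshold} and~\ref{lem:consequences-of-threshold-polynomials}), substitute the upper bounds, invoke the spectral separation hypothesis $v^\top(\Sigma(r) + \Sigma(r'))v \ge \Delta_{\spe} v^\top \Sigma(r) v$ together with $\Delta_{\spe} \gg Cs/\delta$, and finish with the SoS cancellation lemma (Lemma~\ref{lem:cancellation-SoS-constant-RHS}). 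The degree incurred is $O(\log(1/\rho)/\delta^2) = O(s \log(2B))$, matching the non-robust statement.

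I expect the main obstacle to be purely bookkeeping rather than any new idea: one must track, inequality by inequality, which sample ($X$ or $X'$) and which weight vector ($w$ or $w'$) each term lives over, and insert a proxy-substitution step precisely where a goodness property of $X$ meets a program constraint over $X'$. The one genuinely new wrinkle is in the rough bound (Lemma~\ref{lem:rough-bound-spectral-upper}): its proof uses $\sum_r w(C_r) = 1$, whereas now $\sum_r w'(C_r) = \tfrac{k}{n}\sum_i w_i'$ is only $\ge 1 - O(k\epsilon)$ (combining $\sum_i m_i' \ge (1-2\epsilon)n$ from Lemma~\ref{lem:intersection-X-X'} with $\sum_i w_i = n/k$). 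Hence $\sum_r w'(C_r)^2 \ge \tfrac1k\paren{1 - O(k\epsilon)} = \Omega(1/k)$ for $\epsilon \le O(1/k)$, which still permits the choice $\tau \le 1/(4Ck)$ exactly as before and costs only an additive $O(k\epsilon)$ slack; this slack is harmless here (it is dominated by $O(\sqrt\delta)$ for the relevant $\delta$) and is in any case collected when Lemma~\ref{lem:intersection-bounds-from-separation-robust} is assembled from its proxy version.
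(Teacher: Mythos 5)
Your proposal matches the paper's proof in both strategy and all the key details: the paper proves this lemma by establishing proxy analogs of Lemmas~\ref{lem:spectral-lower-bound-intersection-clustering}, \ref{lem:anti-conc-upper-bound-w}, and~\ref{lem:rough-bound-spectral-upper} (Lemmas~\ref{lem:spectral-lower-bound-intersection-clustering-robust}, \ref{lem:anti-conc-upper-bound-w-robust}, \ref{lem:rough-bound-spectral-upper-robust}) using exactly the $w_i' = w_i m_i'$ substitution via Lemma~\ref{lem:replace-X-by-X'} and $w_i' \le w_i$, and then invokes the non-robust combination step unchanged. You also correctly flag the only genuinely new step, namely that $\sum_r w'(C_r)$ is no longer exactly $1$ in the rough bound and must be lower-bounded by $1 - O(k\epsilon)$ (the paper uses the equivalent estimate $\sum_r w'(C_r)^2 \ge 1/k - k\epsilon^2$), which is harmless for $\epsilon \lesssim 1/k$.
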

Observe, as in the previous section, that $B = 1$ when $k= 2$.

As in the previous section, we start by proving a lower-bound on the variance of $\cD_w$ in the direction $v$ where $\Sigma(r)$ and $\Sigma(r')$ are spectrally separated. This gives us:

\begin{lemma}[Large Intersection Implies High Variance, Spectral Separation]
\label{lem:spectral-lower-bound-intersection-clustering-robust}
\begin{equation}
\cA_{rob} \sststile{4s}{} \Biggl\{w'(C_{r'}) w'(C_r) \Paren{v^{\top} \Paren{\Sigma(r) + \Sigma(r')} v^{\top}}^{s} \leq \Paren{\frac{2}{\delta^{2}}}^s \Paren{v^{\top} \Sigma(w) v}^s + C \delta \Paren{v^{\top} \Paren{\Sigma(r) + \Sigma(r')} v^{\top}}^s \Biggr\}     
\end{equation}
\end{lemma}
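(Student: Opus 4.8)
The plan is to replay the proof of Lemma~\ref{lem:spectral-lower-bound-intersection-clustering} almost verbatim; the only genuinely new ingredient is a passage between the program variables $X' = \{x_1',\dots,x_n'\}$ (over which $\cA_{rob}$ is phrased) and the true uncorrupted sample $X$ (whose clusters $C_r,C_{r'}$ enjoy the good-sample anti-concentration guarantees). The bridge is Lemma~\ref{lem:replace-X-by-X'}: for the proxy variables $w_i' = w_i m_i' = w_i m_i \1(y_i = x_i)$ we have $\cA_{rob} \sststile{2}{} \Set{(w_i')^2 = w_i'}$ and $\cA_{rob} \sststile{2}{} \Set{w_i'(x_i' - x_i) = 0}$. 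Via the substitution rule~\eqref{eq:sos-substitution}, the second identity lets us swap $x_i'$ for $x_i$ inside any monomial already carrying the factor $w_i'$; applied to four indices at once, we may move freely between $\sum w_{i_1}'w_{i_2}'w_{j_1}'w_{j_2}'\,\Iprod{x_{i_1}'-x_{i_2}'-x_{j_1}'+x_{j_2}',v}^{2s}$ and the same sum written over $x_{i_1}-x_{i_2}-x_{j_1}+x_{j_2}$, at the cost of only an $O(1)$ additive increase in SoS degree per substitution.

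\textbf{Lower bound on the variance.} First I would introduce the four-fold proxy sum
\[
S \ :=\ \frac{k^4}{n^4}\sum_{\substack{i_1,i_2\in C_r\\ j_1,j_2\in C_{r'}}} w_{i_1}'w_{i_2}'w_{j_1}'w_{j_2}'\ \Iprod{x_{i_1}-x_{i_2}-x_{j_1}+x_{j_2},v}^{2s},
\]
written over the true samples (legal by the proxy substitution above). Since two-sample-centered points from each of $C_r,C_{r'}$ are $2s$-certifiably $(\delta,C\delta)$-anti-concentrated (good-sample property, with $\hat\Sigma(r)+\hat\Sigma(r')$, which by the notational convention of this section is $\Sigma(r)+\Sigma(r')$), so is their four-wise convolution; multiplying the first inequality of Definition~\ref{def:certifiable-anti-concentration} by the SoS polynomial $w_{i_1}'w_{i_2}'w_{j_1}'w_{j_2}'$ and summing gives
\[
S \ \geq\ \delta^{2s}\, w'(C_r)^2 w'(C_{r'})^2\ \Paren{v^\top 2(\Sigma(r)+\Sigma(r'))v}^s \ -\ \delta^{2s}\,E,
\]
where $E = \frac{k^4}{n^4}\sum w_{i_1}'w_{i_2}'w_{j_1}'w_{j_2}'\, q_{\delta,2(\Sigma(r)+\Sigma(r'))}^2(x_{i_1}-x_{i_2}-x_{j_1}+x_{j_2},v)$. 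Bounding $w_{i_1}'w_{i_2}'w_{j_1}'w_{j_2}'\leq 1$ (from $(w_i')^2=w_i'$) and invoking the good-sample second-moment bound $\frac{k}{n}\sum q^2\leq 10 C\delta\,\Paren{v^\top(\Sigma(r)+\Sigma(r'))v}^{2s}$ on the four-wise convolution gives $E\leq O(C\delta)\Paren{v^\top 2(\Sigma(r)+\Sigma(r'))v}^s$, hence $S\geq \delta^{2s}\Paren{w'(C_r)^2 w'(C_{r'})^2 - O(C\delta)}\Paren{v^\top 2(\Sigma(r)+\Sigma(r'))v}^s$.

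\textbf{Upper bound and conclusion.} Since the summand of $S$ is an even power it is a sum of squares, so $S\geq 0$, and transferring back to the $x'$ variables and using $w_i'\leq w_i$ (from $(w_i')^2=w_i'$, $w_i^2=w_i$, $m_i'\leq 1$),
\[
S \ \leq\ \frac{k^4}{n^4}\sum_{i_1,i_2,j_1,j_2\in[n]} w_{i_1}w_{i_2}w_{j_1}w_{j_2}\ \Iprod{x_{i_1}'-x_{i_2}'-x_{j_1}'+x_{j_2}',v}^{2s}.
\]
Exactly as in the non-robust proof — split the four-fold difference with the SoS almost-triangle inequality (Fact~\ref{fact:sos-almost-triangle}), mean-center using the parameter constraints of $\cA_{rob}$, and apply the subgaussianity consequence of the certifiable hypercontractivity constraints (with $Q=vv^\top$, so $\Norm{\Pi Q\Pi}_F^2 = (v^\top\Sigma v)^2$) — this last sum is at most $(4Cs)^s\Paren{v^\top\Sigma(w)v}^s$. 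Comparing the two bounds on $S$, rearranging, and cancelling a factor $2^s$ yields $\cA_{rob}\sststile{4s}{}\Set{w'(C_r)^2 w'(C_{r'})^2\Paren{v^\top(\Sigma(r)+\Sigma(r'))v}^s \leq (2/\delta^2)^s\Paren{v^\top\Sigma(w)v}^s + C\delta\Paren{v^\top(\Sigma(r)+\Sigma(r'))v}^s}$, which is the claimed inequality (in the same squared form as Lemma~\ref{lem:spectral-lower-bound-intersection-clustering}).

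\textbf{Main obstacle.} The only delicate point is the proxy-substitution bookkeeping: $x_i'$ may be replaced by $x_i$ only after multiplication by $w_i'$, so one must first multiply the \emph{unconstrained} polynomial anti-concentration inequalities of Definition~\ref{def:certifiable-anti-concentration} by the SoS polynomial $w_{i_1}'w_{i_2}'w_{j_1}'w_{j_2}'$, then substitute, and check that each multiply-then-substitute step raises the degree only by $O(1)$ so the total stays $O(s)$. One must also note that the good-sample guarantees, stated for the $x_i$, can be pulled inside the $w'$-weighted sums without degree blow-up (again using $w_{i}'\leq 1$). Once this is in place, every remaining step is a syntactic copy of the non-robust argument with $w$ replaced by $w'$, so the degree matches Lemma~\ref{lem:spectral-lower-bound-intersection-clustering}.
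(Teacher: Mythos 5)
Your proposal is correct and follows the paper's own argument essentially step for step: it replays the non-robust Lemma~\ref{lem:spectral-lower-bound-intersection-clustering} on the proxy-weighted sum $S$ over the true samples $x_i$, then uses Lemma~\ref{lem:replace-X-by-X'} (via $w_i'(x_i'-x_i)=0$ and $w_i'\leq w_i$) to pass to the all-index, $w$-weighted sum over the $x_i'$, which the constraints in $\cA_{rob}$ control. You also correctly note that, as in the paper's own derivation, what actually comes out is the \emph{squared} intersection $w'(C_r)^2 w'(C_{r'})^2$; the unsquared form in the lemma statement is an imprecision shared by the paper (and is harmless for the downstream use, where the bound is eventually combined with the cancellation lemma).
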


\begin{proof}


We know from Lemma~\ref{lem:typical-samples-good} that two-sample-centered points from both $C_r$ and $C_{r'}$ (note that these are subsets of the original  uncorrupted sample $X$) are $2s$-certifiably $(\delta,C\delta)$-anti-concentrated. Using Definition~\ref{def:certifiable-anti-concentration}, thus yields: 

\begin{multline}
\cA_{rob} \sststile{4s}{} \Biggl\{ \frac{k^4}{n^4} \sum_{i_1,i_2 \in C_r, j_1,j_2 \in C_{r'}} w'_{i_1} w'_{i_2} w'_{j_1} w'_{j_2} \Iprod{x_{i_1}-x_{i_2} -x_{j_1}+x_{j_2},v}^{2s}  \\ \geq \delta^{2s} w'(C_r)^2w'(C_{r'})^2 \Paren{v^{\top} 2(\Sigma(r) + \Sigma(r')) v^{\top}}^s\\ -  \delta^{2s}\frac{k^4}{n^4} \sum_{i_1, i_2 \in C_r, j_1,j_2 \in C_{r'
}} w'_{i_1} w'_{i_2} w'_{j_1} w'_{j_2} q_{\delta,2(\Sigma(r)+\Sigma(r'))}^2 (x_{i_1}-x_{i_2} -x_{j_1}+x_{j_2}, v) \Biggr\} \label{eq:anti-conc-app-clusters-outlier}
\end{multline}

Using that $\cA_{rob} \sststile{4}{} \Set{w'_{i_1} w'_{i_2} w'_{j_1} w'_{j_2} \leq 1}$ for every $i_1,i_2,j_1,j_2$ and using $2s$-certifiable $(\delta,C\delta)$-anti-concentration of $x_{i_1}-x_{i_2} -x_{j_1}+x_{j_2}$ and invoking Definition~\ref{def:certifiable-anti-concentration}, we have:

\begin{multline}
\cA_{rob} \sststile{4s}{w', \Sigma } \Biggl\{\frac{k^4}{n^4} \sum_{i_1, i_2 \in C_r, j_1,j_2 \in C_{r'
}}  w'_{i_1} w'_{i_2} w'_{j_1} w'_{j_2} q_{\delta,2(\Sigma(r)+\Sigma(r'))}^2 (x_{i_1}-x_{i_2} -x_{j_1}+x_{j_2}, v)\\ \leq \frac{k^4}{n^4} \sum_{i_1, i_2 \in C_r, j_1, j_2 \in C_{r'
}} q_{\delta,2(\Sigma(r)+\Sigma(r'))}^2 (x_{i_1}-x_{i_2} -x_{j_1}+x_{j_2}, v) \leq  C\delta \Paren{v^{\top} 2(\Sigma(r) + \Sigma(r')) v}^s \Biggr\}
\end{multline}
 
Plugging in the above bound in \eqref{eq:anti-conc-app-clusters} gives:

\begin{multline}
\cA_{rob} \sststile{}{} \Biggl\{\frac{k^4}{n^4} \sum_{i_1, i_2 \in C_r, j_1, j_2 \in C_{r'
}} w'_{i_1} w'_{i_2} w'_{j_1} w'_{j_2} \Iprod{x_{i_1}-x_{i_2} -x_{j_1}+x_{j_2}, v}^{2s} \\ \geq \delta^{2s} \Paren{w'(C_r)^2w'(C_{r'})^2 - C\delta} \Paren{v^{\top} 2(\Sigma(r)+ \Sigma(r')) v^{\top}}^s \Biggr\} 
\end{multline}

Rearranging thus yields:
\begin{multline}
\cA_{rob} \sststile{4s}{} \Biggl\{\frac{1}{\delta^{2s}} \frac{k^4}{n^4} \sum_{i_1, i_2 \in C_r, j_1, j_2 \in C_{r'
}} w'_{i_1} w'_{i_2} w'_{j_1} w'_{j_2} \Iprod{x_{i_1}-x_{i_2} -x_{j_1}+x_{j_2}, v}^{2s}  + C\delta \Paren{v^{\top} 2(\Sigma(r) + \Sigma(r')) v^{\top}}^s\\ \geq w'(C_r)^2w'(C_{r'})^2 \Paren{v^{\top} 2(\Sigma(r)+\Sigma(r')) v^{\top}}^s \Biggr\} \label{eq:spectral-lower-bound-1-robust} 
\end{multline}

So far in the proof, the only change (compared to the proof of Lemma~\ref{lem:spectral-lower-bound-intersection-clustering}) in the proof has been that we work with the subset indicated by $w_i'$. 

The key additional step we observe now is the following consequence of $\cA_{rob} \sststile{}{} \Set{w_i'(x_i-x_i') = 0}$ (Lemma~\ref{lem:replace-X-by-X'}). 
\[
\cA_{rob} \sststile{4}{} \Set{ w'_{i_1} w'_{i_2} w'_{j_1} w'_{j2} \Iprod{x'_{i_1}-x'_{i_2} -x'_{j_1}+x'_{j_2}, v} = w'_{i_1} w'_{i_2} w'_{j_1} w'_{j2} \Iprod{x_{i_1}-x_{i_2} -x_{j_1}+x_{j_2}, v} }\mper
\]
Using further that $w_i \geq w_i'$, we have:

\begin{align*}
\cA_{rob} &\sststile{4s}{} \Biggl\{ \Paren{\frac{4cs}{\delta^{2}}}^s \Paren{v^{\top} \Sigma(w) v}^s \geq \frac{1}{\delta^{2s}} \frac{k^4}{n^4} \sum_{i_1, i_2, j_1, j_2 \in [n]} w_{i_1} w_{i_2} w_{j_1} w_{j_2} \Iprod{x'_{i_1}-x'_{i_2} -x'_{j_1}+x'_{j_2}, v}^{2s}\\
&\geq \frac{1}{\delta^{2s}} \frac{k^4}{n^4} \sum_{i_1, i_2, j_1, j_2 \in [n]} w'_{i_1} w'_{i_2} w'_{j_1} w'_{j_2} \Iprod{x'_{i_1}-x'_{i_2} -x'_{j_1}+x'_{j_2}, v}^{2s}\\ 
&\geq \frac{1}{\delta^{2s}} \frac{k^4}{n^4} \sum_{i_1, i_2, j_1, j_2 \in [n]} w'_{i_1} w'_{i_2} w'_{j_1} w'_{j_2} \Iprod{x_{i_1}-x_{i_2} -x_{j_1}+x_{j_2}, v}^{2s}\\ 
&\geq \frac{1}{\delta^{2s}} \frac{k^4}{n^4} \sum_{i_1, i_2 \in C_r, j_1, j_2 \in C_{r'}} w'_{i_1} w'_{i_2} w'_{j_1} w'_{j_2} \Iprod{x_{i_1}-x_{i_2} -x_{j_1}+x_{j_2}, v}^{2s}   \Biggr\} \mper
\end{align*}

Plugging in the upper bound above in \eqref{eq:spectral-lower-bound-1} and canceling out a copy of $2^s$ from both sides gives the lemma.

\end{proof}

The basic spectral upper bound also follows by simply shifting to the proxy variables $w_i'$. This yields us the following analog of Lemma~\ref{lem:anti-conc-upper-bound-w}:

\begin{lemma}[Spectral Upper Bound via Anti-Concentration] \label{lem:anti-conc-upper-bound-w-robust}
\begin{equation}
\cA_{rob} \sststile{4s}{} \Biggl\{ \Paren{w'(C_r)^2 - C\delta} \Paren{v^{\top} \Sigma(w) v^{\top}}^s \leq \Paren{\frac{Cs}{\delta^{2}}}^s \Paren{v^{\top} \Sigma(r) v}^s \Biggr\} 
\end{equation}

\end{lemma}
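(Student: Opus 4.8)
The plan is to replay the proof of Lemma~\ref{lem:anti-conc-upper-bound-w} essentially verbatim, replacing the variables $w_i$ by the proxy variables $w_i' = w_i m_i'$ in every sum indexed by the true cluster $C_r$, and using Lemma~\ref{lem:replace-X-by-X'} to move between the indeterminates $x_i'$ (over which the constraints $\cA_{rob}$ are written) and the original uncorrupted samples $x_i$ (for which the good-sample guarantees of Lemma~\ref{lem:typical-samples-good} hold). First I would invoke the certifiable anti-concentration constraints in $\cA_{rob}$, which encode that the two-sample differences of the subset indicated by $w$ (over the $x_i'$) are $2s$-certifiably $(\delta,C\delta)$-anti-concentrated with respect to $\Sigma = \Sigma(w)$. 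Applying Definition~\ref{def:certifiable-anti-concentration} pointwise to $\tfrac{1}{\sqrt 2}(x_i' - x_j')$ for $i,j \in C_r$, multiplying by the idempotent polynomial $w_i' w_j'$ (so that $w_i' w_j' \leq 1$ and $\tfrac{k^2}{n^2}\sum_{i,j\in C_r} w_i'w_j' = w'(C_r)^2$), and summing, yields the inequality
\[
\delta^{2s}\, w'(C_r)^2\, \Paren{v^{\top}\Sigma(w) v}^s \;\leq\; \frac{k^2}{n^2}\sum_{i,j\in C_r} w_i' w_j' \Iprod{\tfrac{1}{\sqrt 2}(x_i'-x_j'),v}^{2s} \;+\; \delta^{2s}\,\frac{k^2}{n^2}\sum_{i\neq j\in C_r} w_i' w_j'\, q_{\delta,\Sigma(w)}^2\Paren{\tfrac{1}{\sqrt 2}(x_i'-x_j'),v}.
\]

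Next I would bound the two terms on the right. For the $q^2$-term, Lemma~\ref{lem:replace-X-by-X'} (together with $(m_i')^2 = m_i'$, hence $0\le m_i' \le 1$, and $w_i' = w_i m_i'$) gives the degree-$2$ SoS inequality $0 \le w_i' \le w_i$; since $q_{\delta,\Sigma(w)}^2$ is a square, we get $w_i'w_j' q^2 \le w_i w_j q^2$ and therefore $\sum_{i\neq j\in C_r} w_i'w_j' q_{\delta,\Sigma(w)}^2 \le \sum_{i\neq j \in [n]} w_i w_j q_{\delta,\Sigma(w)}^2 \le C\delta \Paren{v^\top\Sigma(w) v}^s$ by the anti-concentration constraint. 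For the first term, the second conclusion of Lemma~\ref{lem:replace-X-by-X'}, $w_i'(x_i' - x_i) = 0$, combined with idempotency of $w_i'$ gives $w_i'w_j'\Iprod{x_i'-x_j',v} = w_i'w_j'\Iprod{x_i-x_j,v}$, and raising both sides to the $2s$-th power via the substitution rule \eqref{eq:sos-substitution} gives $w_i'w_j'\Iprod{\tfrac{1}{\sqrt2}(x_i'-x_j'),v}^{2s} = w_i'w_j'\Iprod{\tfrac{1}{\sqrt2}(x_i-x_j),v}^{2s}$. Dropping the weights ($w_i'w_j'\le 1$) and invoking certifiable subgaussianity of the good sample $C_r$ (which follows from its certifiable hypercontractivity of degree-$2$ polynomials), $\tfrac{k^2}{n^2}\sum_{i,j\in C_r}\Iprod{\tfrac{1}{\sqrt2}(x_i-x_j),v}^{2s} \le (Cs)^s\Paren{v^\top\Sigma(r)v}^s$.

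Finally, I would substitute these two bounds into the displayed inequality and rearrange, moving the $C\delta\,\delta^{2s}\Paren{v^\top\Sigma(w)v}^s$ contribution to the left, to obtain $\Paren{w'(C_r)^2 - C\delta}\Paren{v^\top\Sigma(w)v}^s \le \delta^{-2s}(Cs)^s\Paren{v^\top\Sigma(r)v}^s = \Paren{Cs/\delta^2}^s\Paren{v^\top\Sigma(r)v}^s$, which is the claimed bound; every manipulation is an SoS proof of degree $O(s)$. I do not expect a genuine obstacle here, since the lemma is designed so that the non-robust argument carries over once the proxy variables are in place; the only point that needs care is checking that $w_i'w_j'$ functions as an honest idempotent ``indicator'' that can be pushed through a $2s$-th power when passing from $x_i'$ to $x_i$, which is precisely what Lemma~\ref{lem:replace-X-by-X'} provides (products of idempotents being idempotent).
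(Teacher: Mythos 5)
Your proof is correct and takes essentially the same route as the paper's: invoke the anti-concentration constraint of $\cA_{rob}$ for the $w$-indicated subset, split into the $\langle\cdot,v\rangle^{2s}$ and $q^2$ terms, and handle the first via the proxy identity $w_i'w_j'(x_i'-x_j') = w_i'w_j'(x_i-x_j)$ plus subgaussianity of $C_r$, and the second by dominating $w_i'w_j'\le w_iw_j$ and reapplying the constraint. The only cosmetic difference is that you keep the $q^2$ sum in the $x_i'$ variables (so the constraint applies directly), whereas the paper first converts both sums to $x_i$ via Lemma~\ref{lem:replace-X-by-X'} and must implicitly convert the $q^2$ sum back; your ordering is marginally cleaner but mathematically identical.
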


\begin{proof}

Our constraint system $\cA_{rob}$ allows us to derive that two-sample-centered points indicated by $w$ are $2s$-certifiably $(\delta,C\delta)$-anti-concentrated with witnessing polynomial $p_{\cD}$. Using Definition~\ref{def:certifiable-anti-concentration} and summing up over all $n$ after multiplying throughout by $w_i'w_j'$ yields:

\begin{multline}
\cA_{rob} \sststile{4s}{} \Biggl\{   \delta^{2s} w'(C_r)^2 \Paren{v^{\top} \Sigma(w) v^{\top}}^s \\\leq \frac{k^2}{n^2} \sum_{i,j \in C_r} w'_i w'_j \Iprod{ \frac{1}{\sqrt{2}} \Paren{x'_i - x'_j}, v}^{2s} +\delta^{2s}\frac{k^2}{n^2} \sum_{i \neq j \in C_r} w'_i w'_j q_{\delta,\Sigma(w)}^2 \Paren{ \frac{1}{\sqrt{2}} \Paren{x'_i - x'_j}, v}   \Biggr\} \label{eq:main-bound-spectral-upper}
\end{multline}

Using that $\cA_{rob} \sststile{2}{} \Set{ w_i' w_j' \Paren{(x_i' -x_j') - (x_i-x_j)} = 0}$ (two applications of Lemma~\ref{lem:replace-X-by-X'}) yields:
\begin{multline}
\cA_{rob} \sststile{4s}{\Sigma,w'} \Biggl\{   \delta^{2s} w'(C_r)^2 \Paren{v^{\top} \Sigma(w) v^{\top}}^s \\\leq \frac{k^2}{n^2} \sum_{i,j \in C_r} w'_i w'_j \Iprod{ \frac{1}{\sqrt{2}} \Paren{x_i - x_j}, v}^{2s} +\delta^{2s}\frac{k^2}{n^2} \sum_{i \neq j \in C_r} w'_i w'_j q_{\delta,\Sigma(w)}^2 \Paren{ \frac{1}{\sqrt{2}} \Paren{x_i - x_j}, v}   \Biggr\} \label{eq:main-bound-spectral-upper}
\end{multline}

Using that $\cA_{rob} \sststile{2}{} \Set{w'_i w'_j \leq 1}$ for every $i,j$, using that $\cA_{rob}$ derives $2s$-certifiable $(\delta,C\delta)$-anti-concentration of $w$-samples and invoking Definition~\ref{def:certifiable-anti-concentration}, we have:

\begin{equation}
\begin{split}
\cA_{rob} \sststile{4s}{} \Biggl\{\frac{k^2}{n^2} \sum_{i \neq j \in C_r} w'_i w'_j q_{\delta,\Sigma(w)}^2 \Paren{\frac{1}{\sqrt{2}} \Paren{x_i - x_j}, v} & \leq \frac{k^2}{n^2} \sum_{i \neq j \in [n]} w'_i w'_j q_{\delta,\Sigma(w)}^2 \Paren{ \frac{1}{\sqrt{2}} \Paren{x_i - x_j}, v} \\
& \leq  C\delta \Paren{v^{\top} \Sigma(w) v}^s \Biggr\} 
\end{split}
\end{equation}

Further, using that $\cA_{rob} \sststile{2}{} \Set{ w'_i w'_j \leq 1 }$ for all $i,j$ and relying on the certifiable Sub-gaussianity of $C_r$, we have:

\begin{equation}
\cA_{rob} \sststile{4s}{\Sigma,w'} \Biggl\{\frac{k^2}{n^2} \sum_{i,j \in C_r} w'_i w'_j \Iprod{ \frac{1}{\sqrt{2}} \Paren{x_i - x_j}, v}^{2s}  \leq \frac{k^2}{n^2} \sum_{i,j \in C_r} \Iprod{ \frac{1}{\sqrt{2}} \Paren{x_i - x_j}, v}^{2s} = \Paren{Cs}^s \Paren{v^{\top} \Sigma(r) v}^s\Biggr\} 
\end{equation}

Combining the last two bounds with \eqref{eq:main-bound-spectral-upper} thus yields:

\begin{equation}
\cA_{rob} \sststile{4s}{} \Biggl\{ w'(C_r)^2 \Paren{v^{\top} \Sigma(w) v^{\top}}^s \leq \frac{1}{\delta^{2s}} \Paren{Cs}^s \Paren{v^{\top} \Sigma(r) v}^s + C \delta  \Paren{v^{\top} \Sigma(w) v^{\top}}^s \Biggr\} 
\end{equation}

\end{proof} 

Finally, we must translate the rough spectral upper bounds we had in Lemma~\ref{lem:rough-bound-spectral-upper}. Yet again, the proof goes through essentially with only syntactic changes.

\begin{lemma}[Rough Spectral Upper bound on $\Sigma(w)$] \label{lem:rough-bound-spectral-upper-robust}
\begin{equation}
\cA_{rob} \sststile{4s}{} \Biggl\{ \Paren{v^{\top} \Sigma(w) v^{\top}}^{s} \leq (2Ck)^{s+1} \Paren{Cs}^{s} \sum_{r \leq k} \Paren{v^{\top} \Sigma(r) v}^{s} \Biggr\} 
\end{equation}
\end{lemma}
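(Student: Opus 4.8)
The plan is to run the proof of Lemma~\ref{lem:rough-bound-spectral-upper} essentially verbatim, making only the bookkeeping change of replacing the variables $w_i$ throughout by the proxy variables $w_i'$ and inserting an application of Lemma~\ref{lem:replace-X-by-X'} at the one place where the argument passes from the program indeterminates $x_i'$ to the original (uncorrupted) sample points $x_i$. Concretely, I would start from the first SoS inequality of certifiable $(\tau, C\tau)$-anti-concentration (Definition~\ref{def:certifiable-anti-concentration}) applied to the two-sample difference $\tfrac{1}{\sqrt2}(x_i'-x_j')$ with covariance indeterminate $\Sigma(w)$, multiply through by the SoS (hence nonnegative) polynomial $w_i' w_j'$, and sum over all $i,j \in C_r$ and all $r \le k$. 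Since $\tfrac{k^2}{n^2}\sum_{i,j\in C_r} w_i' w_j' = w'(C_r)^2$, this produces the exact analog of the first display in the proof of Lemma~\ref{lem:rough-bound-spectral-upper}, now with $w(C_r)$ replaced by $w'(C_r)$ and with $x_i'-x_j'$ in place of $x_i-x_j$.

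Next, using $\cA_{rob} \sststile{2}{} \Set{w_i'(x_i'-x_i)=0}$ from Lemma~\ref{lem:replace-X-by-X'} (applied twice, to index $i$ and to index $j$), I would replace every occurrence of $x_i'-x_j'$ that appears multiplied by $w_i' w_j'$ by $x_i-x_j$. The two terms on the right-hand side are then handled exactly as in the non-robust proof: the $q_{\tau,\Sigma(w)}^2$ term is bounded by $O(\tau)(v^\top\Sigma(w)v)^s$ using the $\tau$-version of the anti-concentration constraint in $\cA_{rob}$ together with $\cA_{rob}\sststile{}{}\Set{w_i' w_j' \le 1}$ (which follows from the idempotence of $w_i'$, itself part of Lemma~\ref{lem:replace-X-by-X'}), and the $\iprod{\tfrac{1}{\sqrt2}(x_i-x_j),v}^{2s}$ term is bounded by $(Cs)^s \sum_{r\le k}(v^\top\Sigma(r)v)^s$ by dropping $w_i' w_j' \le 1$ and invoking certifiable subgaussianity of the good-sample clusters $C_r$ --- a property of the $x_i$'s, which is precisely why the variable swap was needed. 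Combining and rearranging yields $\cA_{rob}\sststile{}{}\Set{\Paren{\sum_{r\le k} w'(C_r)^2 - C\tau}\Paren{v^\top\Sigma(w)v}^s \le \tau^{-2s}(Cs)^s\sum_{r\le k}\Paren{v^\top\Sigma(r)v}^s}$.

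The one step that is genuinely new relative to the non-robust proof is lower-bounding $\sum_{r\le k} w'(C_r)^2$ by a constant multiple of $1/k$. In the non-robust setting this used $\sum_r w(C_r) = 1$; here I would instead invoke Lemma~\ref{lem:intersection-X-X'}, which gives $\sum_i m_i' \ge (1-2\epsilon)n$ and hence, since $\sum_i w_i = n/k$ and $w_i' = w_i m_i'$, that $\cA_{rob}\sststile{}{}\Set{\sum_{r\le k} w'(C_r) \ge 1-2k\epsilon}$. By SoS Cauchy--Schwarz, $\sum_{r\le k} w'(C_r)^2 \ge \tfrac{1}{k}(1-2k\epsilon)^2 \ge \tfrac{1}{2k}$ when $\epsilon \le O(1/k)$ is small enough. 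Choosing $\tau$ to be a sufficiently small fixed $\poly(\eta/k)$ (say $\tau < \tfrac{1}{4Ck}$) makes $\sum_r w'(C_r)^2 - C\tau \ge \tfrac{1}{4k}$, so after rearranging we get $\Paren{v^\top\Sigma(w)v}^s \le 4k\,\tau^{-2s}(Cs)^s\sum_{r\le k}\Paren{v^\top\Sigma(r)v}^s$; absorbing the fixed factor $\tau^{-2s}$ into the constant gives the claimed bound $(2Ck)^{s+1}(Cs)^s\sum_{r\le k}(v^\top\Sigma(r)v)^s$.

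I expect the main obstacle to be notational rather than conceptual: keeping the rescaling of the anti-concentration witness polynomial consistent. The constraint in $\cA_{rob}$ is phrased with $q_{\tau,\Sigma}\Paren{(x_i'-x_j'),v}$, whereas the argument wants $q_{\tau,\Sigma(w)}\Paren{\tfrac{1}{\sqrt2}(x_i'-x_j'),v}$, so one must track the homogeneity factors relating $q_{\tau,2\Sigma}(z,v)$ to $q_{\tau,\Sigma}(z/\sqrt2,v)$ (these differ by fixed powers of $2$, which the $(2Ck)^{s+1}$ slack absorbs comfortably). Beyond that, the argument is a syntactic transcription of the proof of Lemma~\ref{lem:rough-bound-spectral-upper} through the dictionary $w_i \mapsto w_i'$ together with the bridge identity $w_i'(x_i'-x_i)=0$.
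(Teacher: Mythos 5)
Your proposal matches the paper's proof essentially step for step: invoke the anti-concentration constraints for $w'$-indicated samples, pass from $x_i'$ to $x_i$ via the bridge identity $w_i'(x_i'-x_i)=0$ from Lemma~\ref{lem:replace-X-by-X'}, bound the two terms as in Lemma~\ref{lem:rough-bound-spectral-upper}, then lower bound $\sum_r w'(C_r)^2$ by $\Omega(1/k)$ using $\sum_r w'(C_r) \ge 1-O(k\epsilon)$ and SoS Cauchy--Schwarz before choosing $\tau \ll 1/(Ck)$. Your final step (quoting Lemma~\ref{lem:intersection-X-X'} to get $\sum_r w'(C_r) \ge 1-2k\epsilon$) is in fact a cleaner route to the same lower bound than the paper's in-line algebra, which slips with $(A-B)^2 = A^2-B^2$ on the way to $1-k^2\epsilon^2$, but the conclusion and approach are the same.
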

 \begin{proof}
For ease of exposition, we drop the variable and degree specifications since they are clear from context.
As before, we start by invoking our constraints to conclude:

\begin{multline}
\cA_{rob} \sststile{}{} \Biggl\{   \tau^{2s} \sum_{r \leq k} w'(C_r)^2 \Paren{v^{\top} \Sigma(w) v^{\top}}^s \\ \leq \frac{k^2}{n^2} \sum_{r\leq k} \sum_{i,j \in C_r} w'_i w'_j \Iprod{ \frac{1}{\sqrt{2}} \Paren{x'_i - x'_j}, v}^{2s} +\tau^{2s}\frac{k^2}{n^2} \sum_{r \leq k} \sum_{i \neq j \in C_r} w'_i w'_j q_{\tau,\Sigma(w)}^2 (\frac{1}{\sqrt{2}} \Paren{x'_i - x'_j}, v)   \Biggr\} \label{eq:main-bound-spectral-upper}
\end{multline}

We invoke Lemma~\ref{lem:replace-X-by-X'} to conclude:
\begin{multline}
\cA_{rob} \sststile{}{} \Biggl\{   \tau^{2s} \sum_{r \leq k} w'(C_r)^2 \Paren{v^{\top} \Sigma(w) v^{\top}}^s \\ \leq \frac{k^2}{n^2} \sum_{r\leq k} \sum_{i,j \in C_r} w'_i w'_j \Iprod{ \frac{1}{\sqrt{2}} \Paren{x_i - x_j}, v}^{2s} +\tau^{2s}\frac{k^2}{n^2} \sum_{r \leq k} \sum_{i \neq j \in C_r} w'_i w'_j q_{\tau,\Sigma(w)}^2 (\frac{1}{\sqrt{2}} \Paren{x_i - x_j}, v)   \Biggr\} \label{eq:main-bound-spectral-upper}
\end{multline}

The second term on the RHS can be upper bounded just as in the proof of Lemma~\ref{lem:anti-conc-upper-bound-w} to yield:

\begin{equation}
\begin{split}
\cA_{rob} \sststile{}{} \Biggl\{\frac{k^2}{n^2} \sum_{r \leq k} \sum_{i \neq j \in C_r} w'_i w'_j q_{\tau,\Sigma(w)}^2\Paren{\Iprod{ \frac{1}{\sqrt{2}} \Paren{x_i - x_j}, v}} & \leq \frac{k^2}{n^2} \sum_{i \neq j \in [n]} w'_i w'_j q_{\tau,\Sigma(w)}^2 \Paren{\Iprod{ \frac{1}{\sqrt{2}} \Paren{x_i - x_j}, v}} \\
& \leq  C\tau \Paren{v^{\top} \Sigma(w) v}^s \Biggr\} 
\end{split}
\end{equation}

The first term can be also be upper bounded - this time in terms of the Covariances of all the $k$ components.

\begin{equation}
\begin{split}
\cA_{rob} \sststile{}{} \Biggl\{\frac{k^2}{n^2} \sum_{r \leq k} \sum_{i,j \in C_r} w'_i w'_j \Iprod{ \frac{1}{\sqrt{2}} \Paren{x_i - x_j}, v}^{2s}  & \leq \sum_{r \leq k} \frac{k^2}{n^2} \sum_{i,j \in C_r} \Iprod{ \frac{1}{\sqrt{2}} \Paren{x_i - x_j}, v}^{2s}\\
&  = \Paren{Cs}^s \sum_{r \leq k} \Paren{v^{\top} \Sigma(r) v}^s\Biggr\} 
\end{split}
\end{equation} 

We can now combine the two estimates above to yield:

\begin{equation}
\cA_{rob} \sststile{}{} \Biggl\{ \Paren{\sum_{r \leq k} w'(C_r)^2 - C\tau} \Paren{v^{\top} \Sigma(w) v^{\top}}^s  \leq \frac{1}{\tau^{2s}} \Paren{Cs}^s \sum_{r \leq k} \Paren{v^{\top} \Sigma(r) v}^s \Biggr\} \label{eq:main-bound-spectral-upper}
\end{equation}

So far the argument closely follows the proof of Lemma~\ref{lem:anti-conc-upper-bound-w}. 

We now observe (note the change in the bound compared to the proof of Lemma~\ref{lem:rough-bound-spectral-upper})
\[
\cA_{rob} \sststile{}{} \Set{ \sum_{r \leq k} w'(C_r)^2 \geq \frac{1}{k} \Paren{\sum_{r \leq k} w'(C_r)}^2}\mper
\]
Now, 
\begin{equation}
\begin{split}
\cA_{rob} \sststile{}{} \Biggl\{ \Paren{\sum_{r \leq k} w'(C_r)}^2 = \Paren{ \frac{k}{n} \sum_{i \leq n } w_i m_i'}^2 & = \Paren{ \frac{k}{n} \sum_{i \leq n } w_i }^2 - \Paren{ \frac{k}{n} \sum_{i \leq n } w_i (1-m_i') }^2 \\
& \geq \Paren{ \frac{k}{n} \sum_{i \leq n } w_i }^2 - \Paren{ \frac{k}{n} \sum_{i \leq n }(1-m_i') }^2 \\
& \geq \Paren{ \frac{k}{n} \sum_{i \leq n } w_i }^2 - k^2 \epsilon^2 \\
& \geq 1-k^2 \epsilon^2 \Biggr\} \mper
\end{split}
\end{equation}
Thus,
\[
\cA_{rob} \sststile{}{} \Set{ \sum_{r \leq k} w'(C_r)^2 \geq \frac{1}{k} \Paren{\sum_{r \leq k} w'(C_r)}^2 \geq 1/k - k \epsilon^2}\mper
\]
Thus, as long as $\tau \ll \frac{1}{2k}$, we can derive:
\begin{equation}
\cA_{rob} \sststile{}{} \Biggl\{ \Paren{v^{\top} \Sigma(w) v^{\top}}^{s} \leq k^{s+1} \Paren{Cs}^{s} \sum_{r \leq k} \Paren{v^{\top} \Sigma(r) v}^{s} \Biggr\} \label{eq:main-bound-spectral-upper}
\end{equation}
This concludes the proof.
\end{proof}

The argument for combining the upper and lower-bounds above proceeds exactly the same as in Section~\ref{sec:clustering}.

\paragraph{Proxy Intersection Bounds from Mean and Relative Frobenius Separation.} 
The proof of the other two intersection bounds follows via similar strategy yielding:
\begin{lemma}[Simultaneous Proxy Intersection Bounds from Mean Separation]\label{lem:intersection-bounds-from-mean-separation-robust}
Suppose there exists a $v \in \R^d$ such that $\Iprod{\mu(r) -\mu(r'),v}_2^2 \geq \Delta^2_m v^{\top} \Paren{\Sigma(r) +\Sigma(r')} v$. 

Then, whenever $\Delta_m \gg Cs/\delta$, 
\[
\cA_{rob} \sststile{O(s\log  \kappa)}{w'} \Set{w'(C_r)w'(C_{r'}) \leq  O(\sqrt{\delta}) }\mper
\]
For the special case of $k=2$, whenever $\Delta_m \gg \Theta(1)$, 
\[
\cA_{rob} \sststile{O(s)}{w'} \Set{w'(C_1)w'(C_{2}) \leq  O(\sqrt{\delta}) }\mper
\]
\end{lemma}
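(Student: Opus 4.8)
\textbf{Proof proposal for Lemma~\ref{lem:intersection-bounds-from-mean-separation-robust}.}
The plan is to transcribe the proof of Lemma~\ref{lem:intersection-bounds-from-mean-separation} essentially verbatim, with the indicator variables $w_i$ replaced everywhere by the proxy variables $w_i' = w_i m_i \1(y_i = x_i)$ of the previous subsection, and invoking Lemma~\ref{lem:replace-X-by-X'} at the few points where one must pass between the indeterminate samples $x_i'$ (over which $\cA_{rob}$ is defined) and the original, inaccessible samples $x_i$ (over which the moment bounds of Lemma~\ref{lem:typical-samples-good-robust} hold). Exactly as in the spectral case already treated in Lemma~\ref{lem:intersection-bounds-from-spectral-separation-robust}, this substitution is purely syntactic: since $\cA_{rob} \sststile{2}{} \Set{w_i'(x_i' - x_i) = 0}$ and $\cA_{rob} \sststile{2}{} \Set{{w_i'}^2 = w_i'}$, any low-degree polynomial identity in the $x_i$'s premultiplied by a product of $w'$-variables transfers to the same identity in the $x_i'$'s and vice versa, at an $O(1)$ cost in SoS degree, and one also has $\cA_{rob}\sststile{}{}\Set{w_i' \le w_i}$ so that $w'$-weighted sums over $C_r\cup C_{r'}$ are dominated by the corresponding $w$-weighted sums appearing in the constraints.

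Concretely, fix the separating direction $v$ with $\Iprod{\mu(r) - \mu(r'), v}^2 \geq \Delta_m^2\, v^{\top}(\Sigma(r) + \Sigma(r'))v$ and assume WLOG $v^{\top}\Sigma(r)v \leq v^{\top}\Sigma(r')v$. Raising the hypothesis to the $s$-th power and applying Lemma~\ref{lem:lower-bounding-sums-SoS} with $A = w'(C_r)$, $B = w'(C_{r'})$, $C = \Iprod{\mu(r) - \mu, v}^{2s}$, $D = \Iprod{\mu(r') - \mu, v}^{2s}$ (where $\mu$ is the indeterminate mean of $\hat C$), together with the SoS almost-triangle inequality (Fact~\ref{fact:sos-almost-triangle}), the certifiable subgaussianity constraints of $\cA_{rob}$ bounding $\tfrac1n\sum_i w_i\Iprod{x_i'-\mu,v}^{2s} \le (Cs)^s(v^{\top}\Sigma(w)v)^s$, and the empirical-moment bounds for $C_r, C_{r'}$ from Lemma~\ref{lem:typical-samples-good-robust} (with $x_i'$ switched to $x_i$ inside the $w'$-weighted sums via Lemma~\ref{lem:replace-X-by-X'}), one obtains exactly the robust analog of the bound preceding the threshold step in the non-robust proof:
\[
\cA_{rob} \sststile{4s}{} \Set{ 2^s\Paren{(Cs)^s (v^{\top}\Sigma(w)v)^s + (v^{\top}\Sigma(r)v)^s + (v^{\top}\Sigma(r')v)^s} \geq \Delta_m^{2s}\, w'(C_r) w'(C_{r'}) \Paren{(v^{\top}\Sigma(r)v)^s + (v^{\top}\Sigma(r')v)^s}}.
\]
From here the argument is identical to the non-robust one: multiply through by the non-negative polynomial $\Paren{\cJ(w'(C_r))(w'(C_r)-\delta) + \delta\rho}$ (Lemma~\ref{lem:consequences-of-threshold-polynomials}), bound $v^{\top}\Sigma(w)v$ from above using the robust anti-concentration upper bound (Lemma~\ref{lem:anti-conc-upper-bound-w-robust}, with $w(C_r)$ replaced by $w'(C_r)$) and the rough spectral bound $\Paren{v^{\top}\Sigma(w)v}^s \leq \poly(k,s)\sum_{i\leq k}\Paren{v^{\top}\Sigma(i)v}^s$ of Lemma~\ref{lem:rough-bound-spectral-upper-robust} --- this is the step forcing $\rho = \kappa^{-s}$ to absorb $\lambda_{\max}(v)^s$, hence the $\log\kappa$ factor in the degree --- use $\cJ(w'(C_r))w'(C_r) \geq w'(C_r)-\delta$, divide by $\Paren{(v^{\top}\Sigma(r)v)^s + (v^{\top}\Sigma(r')v)^s}$, and use $\Delta_m \gg Cs/\delta$ to conclude ${w'}^2(C_r)w'(C_{r'}) \leq O(\delta)$, whence $w'(C_r)w'(C_{r'}) \leq O(\sqrt\delta)$ by the SoS cancellation lemma (Lemma~\ref{lem:cancellation-SoS-constant-RHS}). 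For $k = 2$ one has $B = 1$ in Lemma~\ref{lem:rough-bound-spectral-upper-robust}, so the rough bound only involves $\Sigma(1),\Sigma(2)$ and one may take $\rho = \Theta(1)$ (equivalently $\tau = \Theta(\delta/k)$), removing the $\log\kappa$ dependence and leaving an SoS proof of degree $O(s)$ under the weaker requirement $\Delta_m \gg \Theta(1)$.

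I expect the only non-mechanical part to be the bookkeeping around the proxy variables in the subgaussianity step: the constraint in $\cA_{rob}$ controls $\tfrac1n\sum_i w_i\Iprod{x_i'-\mu,v}^{2s}$, and we must pass to $\tfrac1n\sum_{i\in C_r\cup C_{r'}} w_i'\Iprod{x_i-\mu,v}^{2s}$. This is legitimate because $w_i' \leq w_i$ (so indices outside $C_r\cup C_{r'}$ may be dropped and the weight downgraded from $w_i$ to $w_i'$) and $w_i'(x_i'-x_i) = 0$ (so the base of the power may be switched), but the two moves must be ordered correctly so that every intermediate polynomial remains a valid SoS consequence of $\cA_{rob}$ --- precisely the care already exercised in Lemma~\ref{lem:intersection-bounds-from-spectral-separation-robust}. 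Once that template is in place, everything else is a transcription of the non-robust argument, and the analogous statement for relative Frobenius separation (the robust version of Lemma~\ref{lem:intersection-bounds-from-rel-frob-sep}) follows the same way.
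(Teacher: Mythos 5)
Your proposal is correct and matches the paper's intent: the paper itself only says the mean-separation case "follows via similar strategy" to the spectral case of Lemma~\ref{lem:intersection-bounds-from-spectral-separation-robust}, namely a syntactic transcription of the non-robust proof with $w_i$ replaced by the proxy variables $w_i' = w_i m_i \1(y_i = x_i)$ and Lemma~\ref{lem:replace-X-by-X'} used to exchange $x_i'$ for $x_i$ inside $w'$-weighted sums. Your bookkeeping of where $w_i' \le w_i$ and $w_i'(x_i'-x_i)=0$ must be invoked, the role of $\rho = \kappa^{-s}$ in the general-$k$ degree bound, and the $k=2$ shortcut via the sharper rough spectral bound all agree with what the paper does in the spectral case and refers to here.
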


\begin{lemma}[Simultaneous Proxy Intersection Bounds from Relative Frobenius Separation]
Suppose $\Norm{\Sigma(r')^{-1/2}\Sigma(r)\Sigma(r')^{-1/2} - I}_F^2 \geq \Delta_{cov}^2 \Paren{ \Norm{\Sigma(r')^{-1/2}\Sigma(r)^{1/2}}_{op}^4}$ for $\Delta_{cov} \gg C/\delta^2$. Then, 
\[
\cA_{rob} \sststile{O(s\log \kappa)}{w'} \Set{w'(C_r) w'(C_{r'}) \leq O(\delta^{1/3})}\mper
\]
For the special case of $k=2$, we have:
\[
\cA_{rob} \sststile{O(s)}{w'} \Set{w'(C_1) w'(C_{2}) \leq O(\delta^{1/3})}\mper
\] \label{lem:intersection-bounds-from-rel-frob-sep-robust}
\end{lemma}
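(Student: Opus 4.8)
The plan is to replay the proof of Lemma~\ref{lem:intersection-bounds-from-rel-frob-sep} essentially verbatim, working throughout with the proxy variables $w_i' = w_i m_i \1(y_i = x_i)$ in place of $w_i$, and inserting an application of Lemma~\ref{lem:replace-X-by-X'} at every point where a quantity phrased in terms of the indeterminates $x_i'$ must be compared against the ``good sample'' guarantees of Lemma~\ref{lem:typical-samples-good-robust}, which are stated for the original uncorrupted points $x_i$. Since the relative Frobenius separation condition is invariant under invertible linear transformations, we substitute $x_i' \mapsto \Sigma(r')^{-1/2} x_i'$ and assume WLOG (as in the non-robust proof) that $\Sigma(r')$ has full rank and equals $I$, so the hypothesis reads $\Norm{\Sigma(r) - I}_F^2 \geq \Delta_{cov}^2$ with $\Sigma(r)$ now denoting $\Sigma(r')^{-1/2}\Sigma(r)\Sigma(r')^{-1/2}$. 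Setting $Q = \Sigma(r) - I$, the argument is the same variance-mismatch scheme: a lower bound on $\E_w(Q-\E_w Q)^2$ that grows with $w'(C_r)^2 w'(C_{r'})^2\Norm{\Sigma(r)-I}_F^4$, coupled to an upper bound from certifiable hypercontractivity of $\hat{C}$.

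\textbf{Lower bound.} We keep the observation $\E_{C_r}Q - \E_{C_{r'}}Q = \Norm{\Sigma(r)-I}_F^2 \geq \Delta_{cov}^2$, a statement about constants of the good sample requiring no change. To transfer it into a bound on $\E_w(Q-\E_w Q)^2$, we restrict the $w$-weighted double sum to pairs $i,j$ both in $C_r$ (resp. both in $C_{r'}$) and both matched, i.e. with $w_i' w_j' = 1$; for such pairs Lemma~\ref{lem:replace-X-by-X'} gives $w_i' w_j' Q(x_i' - x_j') = w_i' w_j' Q(x_i - x_j)$, and the good-sample bounded-variance guarantee then controls the deviation of $Q(x_i - x_j)$ from $\E_{C_r}Q$. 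Combining the two cases through Lemma~\ref{lem:lower-bounding-sums-SoS} with $A = w'(C_r)$, $B = w'(C_{r'})$ yields the proxy analog of Lemma~\ref{lem:Large-Intersection-Implies-High-Variance-frobenius}:
\[
4\,\E_w(Q-\E_w Q)^2 + 2\,\E_{C_r}(Q-\E_{C_r}Q)^2 + 2\,\E_{C_{r'}}(Q-\E_{C_{r'}}Q)^2 \;\geq\; w'(C_r)^2 w'(C_{r'})^2 \Norm{\Sigma(r)-I}_F^4,
\]
and the last two terms are $\leq O(C)\Norm{\Sigma(r)^{1/2}}_{op}^2\Norm{Q}_F^2$ by the good-sample hypercontractivity property.

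\textbf{Upper bound and combination.} The bounded-variance constraints $\cA_6$ (phrased over $x_i'$, hence needing no translation) give $\E_w(Q-\E_w Q)^2 \leq C\Norm{\Pi(w) Q \Pi(w)}_F^2$; we iterate the SoS contraction inequality $\Norm{AB}_F^2 \leq \Norm{A}_{op}^2\Norm{B}_F^2$ (Lemma~\ref{lem:contraction-property}) and substitute the rough spectral bound on $\Sigma(w)$ from Lemma~\ref{lem:rough-bound-spectral-upper-robust}, using the threshold polynomial $\cJ$ of Lemma~\ref{lem:poly-approximate-threshold} (via Lemma~\ref{lem:consequences-of-threshold-polynomials}) to implement the ``if $w'(C_r)$ is large'' conditioning that is otherwise not SoS. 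This reproduces the proxy analog of Lemma~\ref{lem:variance-upper-bound-hypercontractivity} at SoS degree $O(s(\delta)\log\kappa/\delta^2)$. Multiplying the lower bound by the suitable power of $\cJ(w'(C_r))(w'(C_r)-\delta) + \delta\rho$, feeding in the upper bound, applying the SoS Almost Triangle Inequality (Fact~\ref{fact:sos-almost-triangle}), setting $\rho = \kappa^{-s}$, and finally invoking the SoS cancellation lemma (Lemma~\ref{lem:cancellation-SoS-constant-RHS}) together with $\Delta_{cov} \gg Cs(\delta)/\delta^2$ gives $w'(C_r)^3 w'(C_{r'})^3 \leq O(\delta)$, hence $w'(C_r) w'(C_{r'}) \leq O(\delta^{1/3})$. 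For $k=2$ one replaces Lemma~\ref{lem:rough-bound-spectral-upper-robust} by its sharper two-component version (bounding $v^\top\Sigma(w)v$ by $O(1)$ times $v^\top\Sigma(1)v + v^\top\Sigma(2)v$ instead of the maximum over all $k$ clusters), which permits $\rho = O(1)$ and removes the $\log\kappa$ factor, exactly as in the spectral case; this is the analog of Lemma~\ref{lem:variance-upper-bound-hypercontractivity-Two-Components}.

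\textbf{Main obstacle.} The delicate step is the variance lower bound: one must verify, inside low-degree SoS, that the centering constant $\E_w Q$ (defined through the indeterminates $x_i'$, and weighted by $w$ rather than $w'$) can be compared against $\E_{C_r}Q$ and $\E_{C_{r'}}Q$ using only pairs with $w_i'w_j' = 1$, with the error from passing $w \to w'$ and $x_i' \to x_i$ absorbed into the $O(k\epsilon)$ slack. This is exactly where the identities $w_i'^2 = w_i'$ and $w_i'(x_i'-x_i)=0$ of Lemma~\ref{lem:replace-X-by-X'}, together with $w_i \geq w_i' \geq 0$, must be applied carefully; unlike the spectral case, here $Q$ has nonzero trace and the mismatch rests on a difference of means rather than on anti-concentration, so the bookkeeping of which centered quadratic form is compared to which constant must be done with some care. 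Everything else is syntactically identical to Section~\ref{sec:intersection-bounds-frobenius}.
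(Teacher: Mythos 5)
Your proposal is correct and matches the route the paper intends — the paper gives no detailed proof of this lemma, just ``follows via similar strategy,'' and that strategy is exactly yours: replay the non-robust argument of Lemma~\ref{lem:intersection-bounds-from-rel-frob-sep} with proxy variables $w_i'$, pass from $x_i'$ to $x_i$ via the identities of Lemma~\ref{lem:replace-X-by-X'}, and substitute the proxy rough spectral bound of Lemma~\ref{lem:rough-bound-spectral-upper-robust} (with the $k=2$ sharpening giving the constant-degree version). Two small corrections: the bound $\E_{C_r}(Q-\E_{C_r}Q)^2 \leq C\Norm{\Sigma(r)^{1/2}}^2_{op}\Norm{Q}_F^2$ on the two constant terms comes from the good sample's certifiably bounded-variance property, not from hypercontractivity; and the ``main obstacle'' you flag is illusory — the centering polynomial $\E_w Q$ appears only inside squared deviations and cancels in the key inequality $(\E_{C_r}Q-\E_w Q)^2 + (\E_{C_{r'}}Q-\E_w Q)^2 \geq \tfrac{1}{2}\Norm{\Sigma(r)-I}_F^4$ fed to Lemma~\ref{lem:lower-bounding-sums-SoS}, so no $O(k\epsilon)$ slack is needed here (that loss is incurred only later, when converting $w'(C_r)$ back to $w(C_r)$ in Lemma~\ref{lem:intersection-bounds-from-separation-robust}).
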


Combining the above  three bounds yields Lemma~\ref{lem:intersection-bounds-from-separation-robust}.

\section{Fully Polynomial Algorithm via Recursive Partial Clustering} \label{sec:better-algo}
In this section, we describe our fully polynomial time algorithm and prove Theorem~\ref{thm:main-clustering-general-main}.

\begin{theorem}[Precise form of Theorem~\ref{thm:main-clustering-general-main}] \label{thm:main-section}
Let $\eta,\epsilon  \leq k^{-\Omega(k)}$. 
Let $\Delta \geq \poly(\eta/2^k)^k$.
Let $X$ be an i.i.d. sample from $\Delta$-separated mixture of $k$ reasonable distributions $\{\cD_r\}_{r \leq k}$ with parameters $\{\mu(r),\Sigma(r)\}_{r\leq k}$ with true clusters $C_1, C_2, \ldots, C_k$ of size $n/k$ each. 
Let $Y$ be an $\epsilon$-corruption of $X$.
Then, there exists an algorithm, that with probability $\geq 0.99$ over the draw of the sample and its random choices, takes input $Y$ and outputs a clustering $\hat{C}_1, \hat{C}_2,\ldots, \hat{C}_k$ such that there exists a permutation  $\pi:[k]\rightarrow [k]$ satisfying:
\[
\min_{i\leq k}\frac{k}{n}|\hat{C}_i \cap C_{\pi(i)}| \geq 1-O(k^{O(k)}(\eta+\epsilon))\mper
\]
\end{theorem}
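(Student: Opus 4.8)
The plan is to bootstrap the spread-\emph{dependent} outlier-robust clustering algorithm of Theorem~\ref{thm:main-robust-clustering-section} into a spread-\emph{independent} one by a recursion on the number of components, using three ingredients already in hand: (i) the partial cluster recovery guarantee of Lemma~\ref{lem:partial-cluster-recovery}, which says that running the simple rounding (Algorithm~\ref{algo:rounding-for-pseudo-distribution-robust}) with a pseudo-distribution of degree $O(s(\poly(\eta/k))\poly(k/\eta))$ \emph{independent of $\log\kappa$} already forces $\pE_{\tzeta}[ww^{\top}]$ to be \emph{partially} block-diagonal, so the output $\hat C_1,\dots,\hat C_k$ admits a nontrivial partition $[k]=S\sqcup L$ with $\bigcup_{i\in S}\hat C_i$ and $\bigcup_{j\in L}\hat C_j$ each equal, up to an $O(k^{O(1)}(\epsilon+\eta))$ fraction of points, to a union of true clusters; (ii) the constant-degree verification procedure of Lemma~\ref{lem:soundness-verification-main}, which accepts a purported cluster iff it is close to a genuine one; and (iii) simple bookkeeping of how the outlier fraction and the misclassification error degrade down the recursion.

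Concretely, the algorithm first runs the constant-degree rounding to obtain $\hat C_1,\dots,\hat C_k$. Since it does not know the partition $S\sqcup L$ from Lemma~\ref{lem:partial-cluster-recovery}, it enumerates all $2^k$ candidate bipartitions of $[k]$; for each one it views the two induced subsamples of $Y$ as equi-weighted $\Delta$-separated mixtures of strictly fewer than $k$ reasonable distributions, corrupted by some larger but still controlled fraction of outliers, and recurses on each. The recursion bottoms out at $k'=1$, where the whole corrupted subsample is returned as a single cluster. Each root-to-leaf path through this tree yields one candidate clustering of $Y$ into $k$ parts; since the tree has depth $\le k$ and each node spawns $O(2^{k'})$ subproblems, there are at most $2^{O(k^2)}$ candidate clusterings, each produced in time $n^{O(s(\poly(\eta/k))\poly(k/\eta))}$. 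Finally, the algorithm runs the constant-degree verification of Lemma~\ref{lem:soundness-verification-main} on every cluster of every candidate and outputs any candidate all of whose clusters are certified: completeness of verification ensures the ``honest'' branch — the one that always guesses the correct $S$ — survives, and soundness ensures any surviving candidate is close to the true clustering. The total running time is $n^{\poly(k/\eta)}\cdot 2^{O(k^2)} = n^{\poly(k/\eta)}$, with no dependence on $\kappa$.

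The main obstacle is the error accounting across the recursion, which must be carried out in tandem with checking that the separation hypothesis is preserved at every level. Passing to a sub-mixture on the $S$-side of the split, the points the parent call misclassified become adversarial outliers for the subproblem, and the original $\epsilon n$ outliers may concentrate on whichever side is smaller; hence the effective corruption fraction satisfies $\epsilon' \le \poly(k)\cdot(\epsilon_{\mathrm{current}}+\eta_{\mathrm{current}})$. Over $\le k$ levels this compounds multiplicatively to $O(k^{O(k)}(\epsilon+\eta))$, which is exactly the claimed misclassification bound; but for the recursion to be valid we must maintain $\epsilon' \le O(1/k')$ and, more delicately, that $\Delta$ still exceeds the separation $\Delta_{rob}$ required by Theorem~\ref{thm:main-robust-clustering-section} and Lemma~\ref{lem:partial-cluster-recovery} at the (slightly degraded) accuracy parameter $\eta' = \Theta(\eta)$ even after $k$ rounds of degradation. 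This is precisely why the theorem needs $\eta,\epsilon \le k^{-\Omega(k)}$ and $\Delta \ge \poly(\eta/2^k)^k$.

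A secondary subtlety is that most of the $2^k$ guessed bipartitions at any node are \emph{not} realizable as unions of true clusters; on those branches the recursion descends into incoherent subproblems and may return arbitrary clusters. One must therefore argue that the verification filter reliably rejects every such spurious candidate — this is the soundness half of Lemma~\ref{lem:soundness-verification-main}, and it is exactly the part whose proof (in contrast with the simultaneous intersection bounds, which are stated contrapositively and can cost a $\log\kappa$ factor) can be carried out with a pseudo-distribution of degree independent of $\kappa$, by deriving a contradiction directly from the assumption that a purported cluster has appreciable overlap with two distinct true clusters.
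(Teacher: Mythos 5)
Your proposal matches the paper's own proof in both structure and substance: the paper's Algorithm~\ref{algo:polynomial-rounding-for-pseudo-distribution-robust} is exactly the recursive scheme you describe — constant-degree rounding via Lemma~\ref{lem:partial-cluster-recovery} to get a partial clustering, brute-force enumeration of all $2^k$ candidate bipartitions with recursion on each side, bottoming out at a single cluster, and the constant-degree verification of Lemma~\ref{lem:soundness-verification-main} used as the filter (the paper runs it inside the base case and prunes failing branches early, whereas you collect all candidates and filter at the end, but this is a cosmetic reorganization). Your error accounting ($\tau \mapsto \eta + O(\poly(k)\tau)$ compounding over $\le k$ levels to $O(k^{O(k)}(\eta+\epsilon))$), your running-time recurrence giving $2^{O(k^2)} \cdot n^{\poly(k/\eta)}$, and your observation that the soundness half of the verification lemma is what disposes of spurious branches — and that it is special precisely because it can be done at $\kappa$-independent degree — are all exactly the points the paper makes.
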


\paragraph{Discussion} In Section~\ref{sec:outlier-robust-clustering}, we proved that our simple rounding (Algorithm~\ref{algo:rounding-for-pseudo-distribution-robust}) of any pseudo-distribution $\tzeta$ of degree $\Omega( s(\poly(\eta/k)) \log (\kappa))$ consistent with $\cA_{rob}$ produces an approximately correct clustering of any $\epsilon$-corruption $Y$ of a good sample $X$. In this section, we will establish two somewhat curious technical facts about Algorithm~\ref{algo:rounding-for-pseudo-distribution-robust} and the constraints $\cA_{rob}$ to show Theorem~\ref{thm:main-clustering-general-main}.

\begin{enumerate}
	\item \emph{All is not lost in constant degree} (Lemma~\ref{lem:partial-cluster-recovery}). When the rounding in Algorithm~\ref{algo:rounding-for-pseudo-distribution-robust} is run on a pseudo-distribution $\tzeta$ of degree $O(s(\poly(\eta/k)))$ consistent with $\cA_{rob}$, it still contains non-trivial information about the true clusters and in particular can be used to construct a \emph{partial clustering}.
	\item \emph{Verification can be done in constant degree} (Lemma~\ref{lem:soundness-verification-main}). While we cannot show that degree $O( s(\poly(\eta/k)))$ is enough to \emph{find} a clustering, we will prove that it is enough to \emph{verify} a purported approximate clustering.
\end{enumerate}
These facts let us use a slightly more complicated recursive clustering algorithm combined with a verification subroutine to obtain an outlier-robust clustering algorithm with no dependence on the spread $\kappa$ in the running time. 


\paragraph{Algorithm} Our algorithm is the following recursive clustering subroutine that we invoke with the input corrupted sample $Y$ and outlier parameter $\epsilon$. The base case of the recursion uses a verification subroutine that confirms if a subset of $n/k$ samples is close to a true cluster. The main recursive step employs the exact same rounding of the pseudo-distribution that we used in  Algorithm~\ref{algo:rounding-for-pseudo-distribution-robust}. 


\begin{mdframed}
  \begin{algorithm}[Recursive Partial Clustering]
    \label{algo:polynomial-rounding-for-pseudo-distribution-robust}\mbox{}
    \begin{description}
    \item[Given:]
        A subsample $Y' \subseteq Y$ of size $j n/k$ for $j \in [k]$. A outlier parameter $\tau > 0$ and a accuracy parameter $\eta > 0$. 
    \item[Output:]
      A partition of $Y'$ into an approximately correct clustering $\hat{C}_1, \hat{C}_2, \ldots, \hat{C}_j$.
    \item[Operation:]\mbox{}
    \begin{enumerate}
    	\item \textbf{Base Case:} If $|Y'| = n/k$, accept if Algorithm~\ref{algo:verification-subroutine} applies to $Y'$ with outlier parameter $\tau$ accepts. Otherwise output fail.
   		\item \textbf{SDP Solving:} Find a pseudo-distribution $\tzeta$ satisfying $\cA_{rob}$ minimizing $\Norm{\pE[w]}_2^2$ with number of components set to $j$ and outlier parameter set to $\tau$. If no such pseudo-distribution exists, output fail.
      	\item \textbf{Rounding:} Let $M = \pE_{w \sim \tzeta} [ww^{\top}]$. 
       		\begin{enumerate} \item Choose a uniformly random row $i$ of $M$.
        		\item Choose $\ell = O(k \log (k/\eta))$ rows of $M$ uniformly at random and independently. 
       			\item For each $i \leq \ell$, let $\hat{C}_i$ be the indices of the columns $j$ such that $M(i,j) \geq \eta/\poly(k)$.
            \item Let $\hat{C}_{\ell+1} = [n] \setminus \cup_{i \leq \ell} \hat{C}_i$.
       		\end{enumerate}
    	\item \textbf{Brute-Force Search Over Partial Clusterings:} For each subset $S \subseteq [\ell+1]$, recursively run two instances of Algorithm~\ref{algo:polynomial-rounding-for-pseudo-distribution-robust} with inputs $\cup_{i \in S} \hat{C}_i$, $\cup_{i \not \in S} \hat{C}_i$ respectively with outlier parameters $\eta + O(k^3\tau)$ for both runs.
    	\item If either run fails, output fail and return. Otherwise output the union of clusters returned by the two runs of the algorithm.
      \end{enumerate}
    \end{description}
  \end{algorithm}
\end{mdframed}

\paragraph{Analysis of Algorithm.}
The analysis of our algorithm is based on the following two key pieces. 
The first shows that Algorithm~\ref{algo:rounding-for-pseudo-distribution-robust}, when run with a pseudo-distribution $\tzeta$ of degree $O(s(\poly(\eta/k)))$ consistent with $\cA_{rob}$ recovers a \emph{partial} clustering of the input sample. An (approximate) partial clustering is a non-trivial split of $Y$ into (approximate) unions of clusters. 

\begin{definition}[Partial Clustering]
A $\tau$-approximate partial clustering of $Y  =  C_1 \cup C_2 \cup \ldots C_k \subseteq \R^d$ described by a partition of $Y$ into $P_1 \cup P_2$ such that there exists $S \subseteq [k]$, $0 < |S| < k$ satisfying $\frac{|P_1 \cap \cup_{i \in S} C_i|}{|\cup_{i \in S} C_i|},\frac{|P_2 \cap \cup_{i \not \in S} C_i|}{|\cup_{i \not \in S} C_i|} \geq 1-\tau$. 
\end{definition}

The following lemma analyzes the output of Algorithm~\ref{algo:polynomial-rounding-for-pseudo-distribution-robust} when run with a $\tau$-corrupted mixture of $k' \leq k$ reasonable distributions. We will use it to analyze all runs of Algorithm~\ref{algo:polynomial-rounding-for-pseudo-distribution-robust}.


\begin{lemma}[Outlier-Robust Partial Cluster Recovery]
\label{lem:partial-cluster-recovery}
Let $X$ be a good sample from a $\Delta$-separated mixture of reasonable distributions with parameters $\{\mu(r),\Sigma(r)\}_{r \leq k}$ and true clusters $C_1, C_2, \ldots, C_{k'}$ of size $\frac{n}{k}$ each.
Let $Y$ be a $\tau$-corruption of $X$.
Then, whenever $\Delta \geq \poly(\eta/{k'})^{k'}$, Algorithm~\ref{algo:polynomial-rounding-for-pseudo-distribution-robust} with probability at least $1-2^{-\Omega(k)}$ recovers a clustering $\hat{C}_1, \hat{C}_2,\ldots, \hat{C}_{k'}$ such that there exists a partition $G_S \cup G_L = [k]$ such that for $P_1 = \cup_{j \in G_S} \hat{C}_j$ and $P_2 = \cup_{j \in G_L} \hat{C}_j$ form a $\eta+O(k^3\tau)$-approximate partial clustering of $Y$.
\end{lemma}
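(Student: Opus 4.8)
\textbf{Proof proposal for Lemma~\ref{lem:partial-cluster-recovery}.}

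The plan is to argue that running the simple rounding (Algorithm~\ref{algo:rounding-for-pseudo-distribution-robust}) with a pseudo-distribution of degree independent of $\kappa$ still yields enough structure in $M = \pE_{\tzeta}[ww^{\top}]$ to split $Y$ into a union of true clusters. The key conceptual point, as flagged in the overview, is that the $\log \kappa$ dependence in Lemma~\ref{lem:intersection-bounds-from-spectral-separation-robust} really comes from the quantity $B = \max_{i\le k'} \tfrac{v^\top \Sigma(i) v}{v^\top \Sigma(r')v}$, and this is large only when $\Sigma(r')$ is \emph{small} in direction $v$ relative to the maximizer. So first I would fix, for each pair $r \ne r'$ that is spectrally (or mean, or relative-Frobenius) separated, the separating direction $v_{rr'}$, and define $S \subseteq [k']$ to be the set of indices $i$ such that $v^\top \Sigma(i) v$ is within a $\poly(\eta/k')$-factor of $\max_{i' \le k'} v^\top \Sigma(i') v$ for the direction realizing the ``worst'' spectral separation. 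For this $S$, the relevant ratio $B$ in Lemma~\ref{lem:intersection-bounds-from-spectral-separation-robust} is $\poly(k'/\eta)$ rather than $\kappa$, so the intersection bound $\cA_{rob} \sststile{O(s(\delta)^2)}{} \Set{w'(C_r)w'(C_{r'}) \leq O(\delta^{1/3})}$ holds at \emph{constant} degree whenever $r \in S$ and $r' \notin S$ (or one reduces to mean / relative-Frobenius separation, which already give $\kappa$-free bounds). The existence of such a nontrivial $S$ with $0 < |S| < k'$ follows by a pigeonhole/layering argument: order the clusters by the value $v^\top \Sigma(i) v$ in the extremal direction; consecutive ``layers'' that are far apart in this ordering are automatically spectrally separated at the needed scale, and one of the layer-boundaries gives the split. (If \emph{no} pair is spectrally separated in a way that forces large $\kappa$, then $\kappa$ itself is $\poly(k'/\eta)$ and the original Theorem~\ref{thm:main-robust-clustering-section} already works with constant degree, so we may take $S$ arbitrary of size $1$.)

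Given such an $S$, the argument proceeds exactly as in the proof of Theorem~\ref{thm:main-robust-clustering-section}: feasibility of the SDP holds because setting $w = \1(C_r)$ for any true $r$ is a valid solution (Lemma~\ref{lem:typical-samples-good-robust}), so the optimum has $\norm{\pE_{\tzeta}[w]}_2^2 = n/k$ by the same Cauchy--Schwarz computation, forcing $\pE_{\tzeta}[w_i] = 1/k$ for all $i$ and hence $M$ has entries in $[0, 1/k]$ with row sums $1/(k^2)$ and diagonal $1/k$. Using the constant-degree proxy intersection bounds (Lemma~\ref{lem:intersection-bounds-from-separation-robust} restricted to the pairs crossing $S$), plus the $|C_i' \cap C_i| \geq (1-2k\epsilon)|C_i|$ matching bound (Lemma~\ref{lem:intersection-X-X'}), one gets $\pE_{\tzeta}[w(C_r)w(C_{r'})] \leq O(k\epsilon + k^2 \delta^{1/3})$ for every $r \in S, r' \notin S$, hence $\sum_{r \in S, r' \notin S} \E_{i \in C_r}\E_{j \in C_{r'}} M(i,j) = O(\eta + k^3\tau)$ after choosing $\delta$ small. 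Now the rounding step: sampling $\ell = O(k\log(k/\eta))$ random rows and thresholding at $\eta/\poly(k)$, a row drawn from $C_r$ with $r \in S$ will (with probability $1 - O(1/k^2)$ over the row, by Markov) have its super-threshold columns lie almost entirely inside $\cup_{r \in S} C_r$, and symmetrically for rows from $\cup_{r' \notin S}$; a standard coupon-collector argument shows that with probability $1 - 2^{-\Omega(k)}$ the $\ell$ sampled rows hit every true cluster, so $\hat C_1, \ldots, \hat C_\ell, \hat C_{\ell+1}$ can be grouped into $G_S$ (those landing in $\cup_{r\in S}$) and $G_L$ (the rest), giving $P_1 = \cup_{j \in G_S}\hat C_j$ and $P_2 = \cup_{j \in G_L}\hat C_j$ with each side capturing a $(1 - \eta - O(k^3\tau))$-fraction of the corresponding union of true clusters. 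That is precisely the claimed $\eta + O(k^3\tau)$-approximate partial clustering.

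The main obstacle I expect is making the $S$-selection step fully rigorous at \emph{constant SoS degree} and uniformly over all three separation types: the argument that the worst-case ratio $B$ can be bounded by $\poly(k'/\eta)$ after passing to a carefully chosen layer $S$ needs the separating direction $v$ and the threshold defining $S$ to be chosen in a way that is ``oblivious'' to which clusters $\hat C$ ends up overlapping — since the SoS proof must be a single low-degree certificate, not a case analysis. Concretely, I would need a lemma (an elaboration of Remark~\ref{remark:difficulty-rough-upper-bound} turned on its head) saying: for any $\Delta$-separated mixture there is a partition $S \cup \bar S$ of $[k']$ such that \emph{every} crossing pair is separated by a direction $v$ with $v^\top \Sigma(r')v \geq \poly(\eta/k') \cdot \max_i v^\top \Sigma(i)v$, obtained by greedily peeling off the clusters of smallest variance in each ``hard'' direction. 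The delicate part is that different crossing pairs may want different directions $v$, so one must either show a single $S$ works for a \emph{common} rescaling, or run the peeling inside the recursion (which is fine, since Algorithm~\ref{algo:polynomial-rounding-for-pseudo-distribution-robust} brute-forces over all subsets $S \subseteq [\ell+1]$ anyway — so the algorithm need not know $S$, only its \emph{existence} must be proved). Once that combinatorial lemma is in place, the rest is a routine adaptation of the Theorem~\ref{thm:main-robust-clustering-section} analysis with $\kappa$ replaced by $\poly(k'/\eta)$ throughout.
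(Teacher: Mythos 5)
Your proposal is essentially the paper's proof of the key sub-lemma (Lemma~\ref{lem:sim-intersection-bipartition-bound}): the paper also splits into (i) the case where no pair is spectrally separated, so $\kappa$ is itself bounded by $\Delta$ and Lemma~\ref{lem:intersection-bounds-from-separation-robust} applies at constant degree, and (ii) the case where some pair is spectrally separated, in which one fixes the witnessing direction $v$, orders the clusters by $v^\top \Sigma(i)v$, and finds by pigeonhole the largest gap index $j$ such that $\Delta_{rob}\,v^\top\Sigma(j)v \le v^\top\Sigma(j+1)v$; the bipartition is then given by the indices below and above $j$. The downstream rounding and coupon-collector argument you sketch matches the paper's as well.

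The one place where you are too cautious is the flagged obstacle that ``different crossing pairs may want different directions $v$.'' The layering argument you already proposed resolves this: once you order by $v^\top \Sigma(i)v$ in the \emph{single} direction $v$ and cut at a $\Delta_{rob}$-gap, \emph{every} cross-pair $(r,r')$ with $r$ below the gap and $r'$ above automatically satisfies $\Delta_{rob}\, v^\top\Sigma(r)v \le v^\top\Sigma(r')v$, i.e.\ becomes spectrally separated in that one direction $v$, regardless of how $r$ and $r'$ were originally separated (mean, spectral, or Frobenius). Moreover the ratio $B = \max_i v^\top\Sigma(i)v / v^\top\Sigma(r')v$ for any such $r'$ above the gap is bounded by $\Delta_{rob}^{k'}$, since $j$ was chosen to be the \emph{last} gap, so no chain of more than $k'$ sub-$\Delta_{rob}$ ratios can occur above it. Thus a single $v$ and a single $S$ do the work; no per-pair direction or common rescaling is needed, and no additional combinatorial lemma is required. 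Everything else you wrote is a routine adaptation of the Theorem~\ref{thm:main-robust-clustering-section} rounding analysis, as you say.
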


The next step is a \emph{verification subroutine} that, in polynomial (degree depending only on $k,\eta$) time verifies if a given subset of $n/k$ samples intersects in a true cluster in $(1-\tau)$ fraction of points. 

\begin{lemma}[Verification Subroutine] \label{lem:soundness-verification-main}
Let $X$ be a good sample from a $\Delta$-separated mixture of reasonable distribution with parameters $\{\mu(r),\Sigma(r)\}_{r \leq k}$ and equal-size true clusters $C_1, C_2, \ldots, C_k$.
Let $Y$ be a $\tau$-corruption of $X$.
Let $\hat{C} \subseteq Y$ be such that $\max_{j \leq k} \frac{k}{n}  |\hat{C} \cap C_j| < 1-2k\sqrt{\tau}$.
Then, Algorithm~\ref{algo:verification-subroutine} rejects on input $\hat{C}$.
On the other hand, if $\exists r \leq k$ such that $\frac{k}{n} |\hat{C} \cap C_r| \geq 1-\tau$,  Algorithm~\ref{algo:verification-subroutine} accepts on input $\hat{C}$.
\end{lemma}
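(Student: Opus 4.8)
The plan is to design Algorithm~\ref{algo:verification-subroutine} to solve an SDP that searches for a pseudo-distribution $\tzeta$ over proxy indicator variables $w$ (and associated matching variables $m$, $X'$, etc., exactly as in $\cA_{rob}$), but with one extra family of \emph{verification constraints} asserting that the purported cluster $\hat{C}$ has large overlap with the subset indicated by $w$ — concretely, a constraint of the form $\frac{k}{n}\sum_{i \in \hat{C}} w_i \geq 1 - O(\sqrt{\tau})$ together with the usual subset, parameter, anti-concentration, hypercontractivity and bounded-variance constraints. The subroutine accepts iff a constant-degree pseudo-distribution consistent with this system exists. \textbf{Completeness} is the easy direction: if $\frac{k}{n}|\hat{C}\cap C_r| \geq 1-\tau$ for some $r$, then (using that $X$ is a good sample and $|C_r' \cap C_r| \geq (1-2k\epsilon)|C_r|$ via Lemma~\ref{lem:replace-X-by-X'} and Lemma~\ref{lem:intersection-X-X'}) the integral solution $w = \1(C_r')$, $m = \1(y_i = x_i)$, $X' = X$ satisfies every constraint including the verification constraint, and a genuine distribution is in particular a pseudo-distribution of every degree.

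\textbf{Soundness} is the heart of the argument. Suppose $\max_{j\leq k}\frac{k}{n}|\hat{C}\cap C_j| < 1 - 2k\sqrt{\tau}$, and suppose for contradiction a low-degree pseudo-distribution $\tzeta$ consistent with the verification system exists. The key point — and the reason this only needs \emph{constant} SoS degree, in contrast to the $\log\kappa$-dependent degree of the simultaneous intersection bounds — is that here we argue in the \emph{direct} (not contrapositive) direction: we get to \emph{assume} the intersection with $w$ is large and derive a contradiction, so the awkward conditional/threshold reasoning (Lemma~\ref{lem:poly-approximate-threshold}, Lemma~\ref{lem:consequences-of-threshold-polynomials}) that forced the $\log\kappa$ blowup in Section~\ref{sec:clustering} is unnecessary. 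First I would show that the verification constraint, combined with $\sum_i w_i = n/k$, forces $\pE_{\tzeta}\!\big[\sum_{j} w'(C_j)\mathbf{1}[j\in\hat C\text{-dominant}]\big]$ to be spread across at least two true clusters $C_r, C_{r'}$: since no single $C_j$ contains more than a $1-2k\sqrt\tau$ fraction of $\hat C$, and $w'$ overlaps $\hat C$ in $\geq 1-O(\sqrt\tau)$ fraction while being a subset of size $n/k$, pigeonholing over the $k$ clusters yields indices $r \neq r'$ with $\pE_\tzeta[w'(C_r)], \pE_\tzeta[w'(C_{r'})] \geq \Omega(\sqrt\tau/k)$, hence $\pE_\tzeta[w'(C_r) w'(C_{r'})] > 0$ is bounded below by a constant depending only on $k,\tau$ (after dividing by the crude upper bound $1$). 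Then I would invoke the \emph{lower-bound} halves of the spectral/mean/relative-Frobenius analyses — Lemma~\ref{lem:spectral-lower-bound-intersection-clustering-robust}, the mean-separation lower bound of Lemma~\ref{lem:intersection-bounds-from-mean-separation-robust}, and Lemma~\ref{lem:Large-Intersection-Implies-High-Variance-frobenius} — each of which is a \emph{constant-degree} $\cA_{rob}$-proof that $w'(C_r)w'(C_{r'})$ times a parameter-gap quantity lower-bounds the variance of an appropriate degree-$\leq 2$ polynomial under $\cD_w$; combined with the matching \emph{upper} bounds — Lemma~\ref{lem:anti-conc-upper-bound-w-robust} and Lemma~\ref{lem:variance-upper-bound-hypercontractivity} — but now applied \emph{without} the threshold-polynomial gadget (since we no longer need a case split on whether $w(C_r)$ is small, as $w(C_r)$ is already known to be bounded below), one derives that the parameter gap $\Delta$ must be $O(\poly(k/\eta))$, contradicting the hypothesis $\Delta \geq \poly(\eta/k)^k$. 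Applying $\tzeta$-soundness (Fact~\ref{fact:sos-soundness}) to these constant-degree SoS proofs shows no consistent $\tzeta$ exists, so the subroutine rejects.

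\textbf{Main obstacle.} The delicate step is verifying that the upper-bound lemmas (Lemma~\ref{lem:anti-conc-upper-bound-w-robust}, Lemma~\ref{lem:variance-upper-bound-hypercontractivity}) can indeed be rerun at constant degree once the threshold gadget is stripped out. Their original statements multiply through by $\cJ(w(C_r))(w(C_r)-\delta)+\delta\rho$ precisely to handle the regime $w(C_r) \leq \delta$; here the verification constraint guarantees $\pE_\tzeta[w(C_r)] = \Omega(\sqrt\tau/k)$, but a pseudo-distribution can still place mass on $w(C_r)$ values below $\delta$, so one must either choose $\delta = o(\sqrt\tau/k)$ and argue directly, or redo the SoS derivation multiplying by the honest SoS polynomial $w(C_r)^2$ in place of the threshold approximator. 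I expect this bookkeeping — re-deriving the upper bounds on $v^\top\Sigma(w)v$ (and on $\E_w(Q-\E_w Q)^2$ in the Frobenius case) with $w(C_r)^2$ instead of $\cJ$, and checking the final cancellation (Lemma~\ref{lem:cancellation-SoS-constant-RHS}) still goes through at constant degree — to be the technically fussiest part, though conceptually routine given the machinery already developed. A secondary point to handle carefully is the $2k\sqrt\tau$ versus $\tau$ gap in the statement: the factor $\sqrt\tau$ (rather than $\tau$) is exactly the slack needed to absorb, via $\cA_{rob}$'s matching constraints, the $O(k\epsilon)$-type loss when translating between $\hat C \subseteq Y$, the proxy subset indicated by $w'$, and the true clusters $C_j$ of the uncorrupted sample $X$.
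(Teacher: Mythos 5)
Your algorithm design differs from the paper's in a way that matters. The paper's verification system $\cV(\hat{C})$ contains \emph{no} subset-selection variables $w$: the purported cluster $\hat{C}$ is treated as fixed input data, and the only indeterminates are the proxy samples $X'$ (with matching variables $m$) that are required to agree with $\hat{C}$ in a $(1-\tau)$ fraction of positions and to form a reasonable distribution with uniform weights. You instead retain the $w$-indicators from $\cA_{rob}$ and add the side constraint $\frac{k}{n}\sum_{i\in\hat{C}} w_i \geq 1-O(\sqrt{\tau})$. That decision reintroduces exactly the indeterminacy the paper's formulation is designed to avoid, and it creates a genuine gap in your soundness argument.

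The gap is this step: from $\pE_{\tzeta}[w'(C_r)] \geq \Omega(\sqrt{\tau}/k)$ and $\pE_{\tzeta}[w'(C_{r'})] \geq \Omega(\sqrt{\tau}/k)$ you infer that $\pE_{\tzeta}[w'(C_r)\,w'(C_{r'})]$ is bounded below. This is false for pseudo-distributions: $\pE[AB]$ need bear no relation to $\pE[A]\pE[B]$. Even for an honest distribution the implication fails (a mixture placing half its mass on $w'(C_r)=1, w'(C_{r'})=0$ and half on the reverse has both marginals equal to $1/2$ but $\pE[w'(C_r)w'(C_{r'})]=0$). Once $w(C_r)$ is a polynomial indeterminate, any lower bound you want to invoke must itself be an SoS consequence of the system, and your verification constraint only controls $\sum_r w'(C_r)$, not the individual or pairwise quantities.

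The paper's way around this is precisely the observation it flags ("in the verification setting, $\hat{C}$ is not a variable in our constraint system"). Because $\hat{C}$ is fixed, the overlap fractions $\frac{k}{n}|\hat{C}\cap C_r|$ are \emph{scalars}, and the pigeonhole over them is ordinary arithmetic, not an SoS derivation: if no $C_j$ covers more than $(1-2k\sqrt{\tau})$ of $\hat{C}$, one finds $r\neq r'$ with both scalar overlaps at least $2\sqrt{\tau}$. Lemma~\ref{lem:intersection-X-X'-verification} then turns these scalars into low-degree $\cV$-provable inequalities $m'(C_r)^2 \geq 2\tau$ and $m'(C_{r'})^2 \geq 2\tau$ with a \emph{constant} on the right-hand side. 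That constant is what licenses the multiply-and-cancel steps in Lemmas~\ref{lem:spectral-separated-refutation}, \ref{lem:mean-separated-refutation}, \ref{lem:frob-separated-refutation} without the threshold-polynomial gadget, producing the refutation $\cV \vdash \{-1\geq 0\}$ at degree $O(s(\tau/C))$. Your "Main obstacle" paragraph correctly senses that stripping out $\cJ$ is the delicate point, but the fix is not to multiply by $w(C_r)^2$ (which remains a polynomial you cannot lower-bound) — it is to remove $w$ altogether so the intersection sizes are numbers.
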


We can complete the analysis of Algorithm~\ref{algo:polynomial-rounding-for-pseudo-distribution-robust} and prove Theorem~\ref{thm:main-section} using the above results:

\begin{proof}[Proof of Theorem~\ref{thm:main-section}]
We run Algorithm~\ref{algo:polynomial-rounding-for-pseudo-distribution-robust} with input $Y$ and initial outlier parameter $\tau = \epsilon$. Let's track the outlier parameters in the recursive calls - in each recursive call, $\tau \rightarrow \eta + O(k\tau)$.  Since the depth of our recursive calls is at most $k$, $\tau = O(k^k \eta + k^{3k} \epsilon)$ throughout the algorithm. 

Let's bound the running time of the algorithm. The base case requires running the verification algorithm that needs $n^{O(s(\poly(\tau/C))}$ time for $\tau = O(k^k(\eta + \epsilon))$. Each run of the algorithm makes at most $2^{k}$ recursive calls to instances with number of components reduced by at least $1$ and needs to solve an SDP that needs $n^{O(s(\eta/k))}$ time. Thus, the running time follows the recurrence: $T(j) \leq 2^k T(j-1) + n^{O(s(\eta/k))}$. Thus  the running time $T(k) \leq 2^{k^2} T(1) = 2^{k^2} n^{O(s(\poly(\eta/k))}$. 

Finally, let's confirm the correctness of the procedure. First, we show that if the algorithm doesn't fail, then it outputs a correct approximate clustering $\hat{C}_1, \hat{C}_2, \ldots, \hat{C}_k$ of $Y$. It's immediate that the algorithm always produces a partitioning of $Y$ into subsets of size  $n/k$ each. Further, each $\hat{C}_i$ must cause Algorithm~\ref{algo:verification-subroutine} to accept (base case of Algorithm~\ref{algo:polynomial-rounding-for-pseudo-distribution-robust}). From Lemma~\ref{lem:soundness-verification-main}, it must hold for each $i$, $\hat{C}_i$ some cluster $C_{\pi(i)}$ in $1-\tau$ fraction of the  $n/k$ samples for $\tau = O(k^{2k} \eta + k^{2k} \epsilon)$. Finally, observe that if $\tau  \ll 1/k$ then,  $C_{\pi(i)} \neq  C_{\pi(j)}$ for $i \neq j$. Thus, $\pi$ must be a permutation of $[k]$. This finishes the proof.

What remains is to argue that when run with $\epsilon$-corruption $Y$ of a good sample $X$, Algorithm~\ref{algo:polynomial-rounding-for-pseudo-distribution-robust} does not output fail with probability at least $0.99$. For this, we need to exhibit a choice of $S \subset[k]$ for each recursive call for which the algorithm does not fail. Observe, our algorithm never outputs fail if the input $Y'$ intersects $(1-\tau)$ fraction of samples in some union of true clusters. This is guaranteed by Lemma~\ref{lem:partial-cluster-recovery} with probability at least $1-2^{-\Omega(k)}$. By a union bound, this guarantee holds for the output of all rounding steps incurred by making the choices of $S$ above with  probability at least $0.99$. Thus, we must arrive at subsets $\hat{C}_i$ that are $(1-\tau)$-intersecting with some true cluster for $\tau = O(k^{k} \eta + k^{k} \epsilon) \ll 1/k^2$. By the completeness of our verification subroutine (Lemma~\ref{lem:soundness-verification-main}), all $\hat{C}_i$  produced via these  choices cause the verification algorithm to accept. 
This completes the proof.
\end{proof}

\subsection{Partial Cluster Recovery}
In this section, we prove Lemma~\ref{lem:partial-cluster-recovery}. 
The crux of the proof is the following intersection bound that finds a bipartition of clusters and proves that the simultaneous intersection of $\hat{C}$ (searched for in $\cA_{rob}$ via $w$-variables) with the two pieces of the bipartition is small. Note that this gets  us a \emph{weaker} guarantee than the inter-cluster  simultaneous intersection bounds proven in Sections~\ref{sec:clustering} and ~\ref{sec:outlier-robust-clustering} with the upshot that the degree of the SoS proof here does not depend on $\kappa$, the spread of the mixture. 

\begin{lemma}[Simultaneous Intersections Bounds Across Bipartition] \label{lem:sim-intersection-bipartition-bound}
Let $X, Y$ be as in the setting of Lemma~\ref{lem:partial-cluster-recovery} with true clusters $C_1, C_2, \ldots, C_{k'}$ with $\eta = O(1/k)$ and $\Delta = \Delta_{rob}^k = \poly(\eta/k')^{k'}$ where  $\Delta_{rob}$ is the separation requirement in Lemma~\ref{lem:intersection-bounds-from-separation-robust}.
There exists a partition $S \cup L =[k']$ such that $|S| < k'$ satisfying:
\[
\cA \sststile{}{} \Set{ \sum_{r \in S, r' \in L} w(C_r) w(C_{r'}) \leq  O(k^2\delta^{1/3} + k \eta) }\mper
\]
\end{lemma}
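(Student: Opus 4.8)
\textbf{Proof proposal for Lemma~\ref{lem:sim-intersection-bipartition-bound}.} The plan is to build the bipartition $S \cup L = [k']$ by a ``threshold on spectral separation'' argument that mirrors Remark~\ref{remark:difficulty-rough-upper-bound} but exploits it constructively. First I would recall that, by Definition~\ref{def:separated-mixture-model}, every pair of distinct clusters $C_r, C_{r'}$ is separated in one of the three senses. For pairs that are mean-separated or relative-Frobenius-separated, Lemmas~\ref{lem:intersection-bounds-from-separation-robust-proxy} (via Lemmas~\ref{lem:intersection-bounds-from-mean-separation-robust} and~\ref{lem:intersection-bounds-from-rel-frob-sep-robust}) already give $\cA_{rob} \sststile{O(s(\delta))}{w'} \Set{w'(C_r)w'(C_{r'}) \leq O(\delta^{1/3})}$ with \emph{no} dependence on $\kappa$ in the degree, so these pairs are harmless regardless of how the bipartition is chosen. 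The only potentially expensive pairs are the spectrally-separated ones, and for those the key observation (already flagged in the ``partial block diagonal'' discussion after Lemma~\ref{lem:partial-cluster-recovery}) is that the SoS degree in Lemma~\ref{lem:intersection-bounds-from-spectral-separation-robust} is governed by $\log(2B)$ where $B = \max_{i \le k'} \frac{v^\top \Sigma(i) v}{v^\top \Sigma(r') v}$ — and this is $O(1)$ precisely when $v^\top \Sigma(r') v$ is within a constant factor of $\max_i v^\top \Sigma(i) v$.

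The construction of $S$ then proceeds greedily / by extremality. For each spectrally-separated pair with witness direction $v$, order the clusters by $v^\top \Sigma(i) v$ and let $r^*$ achieve the maximum. Take $L$ to consist of the clusters $C_i$ with $v^\top \Sigma(i) v$ ``large'' (comparable to the max) and $S$ the rest; more carefully, I would define $S$ to be the set of clusters that are spectrally \emph{dominated} — those $C_r$ for which there is some $C_{r'}$ with $v^\top\Sigma(r')v \geq \Delta_{\spe} v^\top \Sigma(r) v$ and $v^\top\Sigma(r')v$ near-maximal among all $k'$ clusters. Then $S \neq \emptyset$ (the cluster attaining the minimum of $v^\top\Sigma(i)v$ is in it whenever some spectral separation exists at all) and $S \neq [k']$ (the max-variance cluster is in $L$), so $0 < |S| < k'$. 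If \emph{no} pair is spectrally separated, then all pairwise ratios $\frac{v^\top\Sigma(i)v}{v^\top\Sigma(j)v}$ are bounded, hence $\kappa = O(\poly(k'))$ is already small, and Lemma~\ref{lem:intersection-bounds-from-separation-robust} itself (with its $\log\kappa$ in the degree now harmless) gives the full pairwise bound, from which the bipartition bound is immediate by summing. For the generic case, for each $r \in S$ and $r' \in L$: either the separating witness for $(r,r')$ has $v^\top\Sigma(r')v$ near-maximal (so $B = O(1)$ and Lemma~\ref{lem:intersection-bounds-from-spectral-separation-robust} applies at constant degree), or the separation is mean/Frobenius type and Lemma~\ref{lem:intersection-bounds-from-separation-robust-proxy} applies. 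Summing $w(C_r)w(C_{r'})$ over the at most $k^2$ cross-pairs and converting from $w'$ to $w$ via Lemma~\ref{lem:intersection-X-X'} (which costs $O(k\epsilon)$, here folded into the $O(k\eta)$ slack since $\epsilon \le \eta$) yields $O(k^2 \delta^{1/3} + k\eta)$.

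The step I expect to be the main obstacle is making the case split on ``$v^\top\Sigma(r')v$ is near-maximal'' genuinely go through \emph{uniformly} for all cross-pairs simultaneously, using a single bipartition $S \cup L$. Different spectrally-separated pairs have different witness directions $v$, and a cluster that looks ``small-variance'' in direction $v_1$ may look ``large-variance'' in direction $v_2$; so one cannot naively define $S$ from a single $v$. The resolution I would pursue is to define $S$ via a \emph{global} criterion: fix the direction $v_0$ maximizing the worst-case ratio $\max_{i,j}\frac{v^\top\Sigma(i)v}{v^\top\Sigma(j)v}$ (i.e.\ the direction realizing $\kappa$), let $r_0 = \argmax_i v_0^\top\Sigma(i)v_0$, and put $r \in S$ iff $v_0^\top\Sigma(r)v_0 \le \Delta_{\spe}^{-1} \cdot v_0^\top\Sigma(r_0)v_0 \cdot \kappa^{1/k}$ (a sliding threshold), choosing among the $k'$ natural thresholds $\Delta_{rob}^j$ one at which $S$ is nonempty and proper — by pigeonhole over the $k'$ ``shells'' such a threshold exists with a constant-factor gap on each side, so every cross-pair $(r,r')$ has $v_0^\top\Sigma(r')v_0 / v_0^\top\Sigma(r)v_0 \geq \Delta_{rob}$, i.e.\ $C_r, C_{r'}$ are spectrally separated in direction $v_0$ with $B = O(1)$ relative to that shell structure. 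This is exactly the mechanism by which $\Delta = \Delta_{rob}^k$ (a $k$-fold tower, matching the hypothesis) suffices: one power of $\Delta_{rob}$ per shell. Verifying that this single $v_0$-based bipartition handles \emph{every} cross-pair — including pairs separated only in mean or Frobenius sense, where $v_0$ need not be the separating witness but the $\kappa$-independent Lemma~\ref{lem:intersection-bounds-from-separation-robust-proxy} bounds still apply — is the technical heart, and is where I would spend the bulk of the write-up.
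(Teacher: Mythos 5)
The premise anchoring your first paragraph is false: Lemmas~\ref{lem:intersection-bounds-from-mean-separation-robust} and~\ref{lem:intersection-bounds-from-rel-frob-sep-robust} do \emph{not} give $\kappa$-independent SoS degree for $k > 2$. Both state $\cA_{rob} \sststile{O(s\log\kappa)}{w'}\Set{w'(C_r)w'(C_{r'})\leq O(\delta^{1/3})}$ in general, with the $\kappa$-free $O(s)$ degree only in their $k=2$ special cases. The $\log\kappa$ enters through the rough spectral upper bound on $\Sigma(w)$ (Lemma~\ref{lem:rough-bound-spectral-upper-robust}), which controls $v^{\top}\Sigma(w)v$ against the largest variance among \emph{all} components, not just the two at hand, and this rough bound is used in the mean and Frobenius arguments exactly as it is in the spectral one. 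So your plan of treating mean/Frobenius pairs as ``free'' and only thresholding for spectrally separated pairs does not produce a $\kappa$-independent degree; your final paragraph inherits the same flaw when it appeals to ``the $\kappa$-independent Lemma~\ref{lem:intersection-bounds-from-separation-robust-proxy} bounds'' for cross-pairs whose primary separation is mean or Frobenius.

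The idea your write-up misses, which is the crux of the paper's argument, is that the gap construction makes \emph{every cross-pair spectrally separated along a single direction $v$}, regardless of what type of separation the pair carried in the original mixture. Concretely: in the case where some pair is spectrally separated with witness direction $v$, sort the clusters so that $v^{\top}\Sigma(1)v \leq \cdots \leq v^{\top}\Sigma(k')v$ and let $j$ be the \emph{largest} index with $\Delta_{rob}\,v^{\top}\Sigma(j)v \leq v^{\top}\Sigma(j+1)v$; set $S = [j]$, $L = \{j+1,\ldots,k'\}$. Then for every $r \leq j < j+1 \leq r'$ one has $v^{\top}\Sigma(r)v \leq v^{\top}\Sigma(j)v \leq \Delta_{rob}^{-1}v^{\top}\Sigma(r')v$, so $(r,r')$ is $\Delta_{rob}$-spectrally separated along $v$; and by maximality of $j$ the consecutive ratios above $j$ are all below $\Delta_{rob}$, giving $v^{\top}\Sigma(k')v \leq \Delta_{rob}^{k'}\,v^{\top}\Sigma(j+1)v$, so the parameter $B$ in Lemma~\ref{lem:intersection-bounds-from-spectral-separation-robust} is at most $\Delta_{rob}^{k'}$ for every cross-pair. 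Therefore only the spectral-separation lemma is ever invoked across the cut, at degree $O(sk'\log\Delta_{rob})$, and the mean/Frobenius lemmas never appear. Your pigeonhole-over-shells along $v_0$ is the same construction in disguise, but its whole point is precisely this: the shell gap \emph{manufactures} spectral separation for every cross-pair along one direction, so that the $\kappa$-dependent lemmas are never needed. Your write-up instead routes around this with the incorrect claim that the mean/Frobenius lemmas are already $\kappa$-free. Your Case 1 aside (no spectral separation at all $\Rightarrow \kappa \leq \Delta$, so Lemma~\ref{lem:intersection-bounds-from-separation-robust} applies with an acceptable degree) is correct and matches the paper.
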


\begin{proof}

We break the proof into two cases.

\textbf{Case 1}: No pair of clusters $C_{r},C_{r'}$ is spectrally separated.
In this case, for every direction $v$, either $v^{\top} \Sigma(i) v = 0$ for all $i \leq k'$ or $\frac{v^{\top} \Sigma(r) v}{v^{\top} \Sigma(r') v} \leq \Delta \leq O(s(\poly(\eta/k))^k)$ for all $r,r'$. Thus, in particular, the spread $\kappa \leq \Delta$. Applying Lemma~\ref{lem:intersection-bounds-from-separation-robust} and plugging in the upper bound on $\kappa$ immediately yields that for every $1 \leq r < r' \leq k'$
\Anote{the corruption in this section is being defined as $\eta$. }
\[
\cA \sststile{O( {k'}^2 s^2 \poly \log (s))}{w} \Set{\sum_{r \neq r'} w(C_r) w(C_{r'}) \leq O(k'\tau) + \eta}\mper
\]

Thus, in this case, we recover every cluster approximately and thus can set $S$ and $L$ to be any non-trivial partition (that is, both $S$ and $L$ are non-empty) and finish the proof.

\textbf{Case 2}:
There exist $r,r'$ such that $C_r$ and $C_{r'}$ that are spectrally separated.
Then there is a direction $v$ such that $\Delta_{rob}^k v^{\top} \Sigma(r) v \leq  v^{\top} \Sigma(r') v$.
Consider an ordering of the true clusters along the direction $v$, renaming cluster indices if needed, such that $v^{\top} \Sigma(1) v \leq v^{\top}\Sigma(2)v \leq \ldots v^{\top}\Sigma(k')v$.
Then, clearly, $v^{\top}\Sigma(k')v \geq \Delta_{rob} v^{\top}\Sigma(r)v$.

Let $j \leq k'$ be the largest integer such that $\Delta_{rob}v^{\top}\Sigma(j)v \leq v^{\top}\Sigma(j+1)v$.
Observe that since we are in Case 2, such a $j$ exists. Further, observe that since $j$ is defined to be the largest index which incurs separation $\Delta_{rob}$, all indices in $[j,k']$ have spectral bound at most $\Delta_{rob}$ and thus $\frac{v^{\top} \Sigma(k')v}{v^{\top}\Sigma(j)v} \leq \Delta_{rob}^{{k'}}$.
Applying Lemma~\ref{lem:intersection-bounds-from-spectral-separation-robust} with the above direction $v$ to every $r <j$ and $r' \geq j$ and observing that the parameter $B$ in each case is at most $\frac{v^{\top} \Sigma(k')v}{v^{\top}\Sigma(j)v} \leq \Delta_{rob}^{k'}$ yields:

\[
\cA \sststile{ O({k'}^2 s^2 \poly \log (s) )}{} \Set{w(C_r)w(C_{r'}) \leq O(k'\tau) + \eta}\mper
\]

Adding up the above inequalities over all $r \leq j-1$ and $r' \geq j+1$ and taking $S = [j-1]$, $T = [k']\setminus [j-1]$ yields the claim.
\end{proof}

We are now ready to prove Lemma~\ref{lem:partial-cluster-recovery}.

\begin{proof}[Proof of Lemma~\ref{lem:partial-cluster-recovery}]

We will prove that whenever $\Delta \geq \Delta_{rob}^k = \poly(\eta/k)^k$, Algorithm~\ref{algo:polynomial-rounding-for-pseudo-distribution-robust}, when  run with input $Y$ recovers a collection $\hat{C}_1, \hat{C}_2,\ldots, \hat{C}_{\ell}$ of subsets of indices such that there is a partition $S \cup L = [\ell]$, $0 < |S| <\ell$ satisfying:
\begin{equation}
\min\Biggl\{ \frac{k}{n} |\hat{C}_i \cap \cup_{j \in S} C_j|, \frac{k}{n} |\hat{C}_i \cap \cup_{j \in L} C_j| \Biggr\} \leq \eta + O(k^3 \tau)\mper \label{eq:partial-recovery-clusterwise}
\end{equation}

This suffices to complete the proof:
Split $[\ell]$ into two groups $G_S, G_L$ as follows.
For each $i$, let $j = \argmax_{r \in [\ell]} \frac{k}{n} |\hat{C}_i \cap C_r|$.
If $j \in S$, add it to $G_S$, else add it to $G_L$.
Observe that this process is well-defined. To see this, suppose $j \in S$. Let $j' \in L$. Then, applying Lemma~\ref{lem:partial-cluster-recovery} and using that $\eta + O(k\tau) \ll 1/k$ and that $\frac{k}{n} |\hat{C}_i \cap \cup_{r \in S} C_r| \geq \frac{k}{n} |\hat{C}_i \cap C_j| \geq 1/k$, we have that: $\frac{k}{n} |\hat{C}_i \cap \cup_{j' \in L} C_{j'}| < 1/k$.

We are now ready to verify the first claim. The second follows immediately from the first.
For each $i \in G_S$, we have that $\frac{k}{n} |\hat{C}_i \cap \cup_{j \in L} C_j| \leq \eta + O(k\tau)$.
Adding up these inequalities for all $i \in S$ yields that $\frac{k}{n} |P_1 \cap \cup_{j \in L} C_j| \leq |S| \Paren{\eta + O(k\tau)}$. Using that $|P_1| = |S| \frac{n}{k}$ and $S, L$ form a partition of $[k]$ completes the proof.

We now go ahead and establish \eqref{eq:partial-recovery-clusterwise}. Let $\tzeta$ be a pseudo-distribution satisfying $\cA$ of degree $\poly(k/\eta)$. Let $M = \pE_{\tzeta}[ww^{\top}]$. Reasoning similarly as in the proof of Theorem~\ref{thm:main-clustering-section}, we have:

\begin{enumerate}
  \item $1/k \geq M(i,j) \geq 0$ for all $i,j$,
  \item $M(i,i) = 1/k$ for all $i$,
  \item $\E_{j \sim [n]} M(i,j) = \frac{1}{k^2}$ for every $i$.
\end{enumerate}
For an $\eta'$ to be chosen later, call an entry of $M$ large if it exceeds $\eta'/k^2$. For each $i$, let $B_i$ be the set of large entries in row $i$ of $M$. Then, using (3) and (1) above gives that $|B_i| \geq (1-k\eta')n/k$ for each $1 \leq  i \leq n$. Next, call a row $i$ ``good'' if $\frac{k}{n} \min \{ \abs{\cup_{r \in L} C_r \cap B_i}, \abs{\cup_{r' \in S} C_{r'} \cap B_i}\} \leq 100 k^2\eta' + O(k^3 \tau)$. Let us estimate the fraction of rows of $M$ that are good. 

Towards that goal, let's apply Lemma~\ref{lem:sim-intersection-bipartition-bound} with $\eta = \eta'/2k$ and $\delta = {\eta'}^3/8k^6$. Then, using Fact~\ref{fact:sos-completeness}, we obtain $\sum_{r \in S, r' \in L}\E_{i \in C_r} \E_{j \in C_{r'}} M(i,j) \leq \sum_{r'\neq r} \E_{i \in C_r} \E_{j \in C_{r'}} \pE[w_i w_j] = \pE[w(C_r)w(C_{r'})] \leq \eta'+O(k\tau)$. Using Markov's inequality $1-1/100k^2$ over the uniformly random choice of $i$, $\E_{ j \in C_{r'}} M(i,j) \leq 100 k^2\eta' + O(k^3 \tau)$. Thus, $1-1/100 k^2$ fraction of the rows of $M$ are good.

Next, let $R$ be the set of $100 k \log k/\eta'$ rows sampled in the run of the algorithm and set $\hat{C}_i = B_i$ for every $i \in R$. The probability that all of them are good is then at least $(1-1/100k^2)^{k \log k/\eta'} \geq 1-\eta'\log k/100k$. Let's estimate the probability that $|\cup_{i \in R} \hat{C}_i| \geq (1-1/k^{10})n$. The chance that a given point $t \in B_i$ for a uniformly random $B_i$ is at least $(1-k\eta')/k$. Thus, the chance that $t \not \in \cup_{i \in R}B_i$ is at most $(1-1/2k)^{100 k \log k/\eta'} \leq \eta'/k^{50}$. Thus, the expected number of $t$ that are not covered by $\cup_{i \in R} \hat{C}_i$ is at most $n \eta' /k^{50}$. Thus, by Markov's inequality, with probability at least $1-1/k^{10}$, $1-\eta'/k^{40}$ fraction of $t$ are covered in $\cup_{i \in R} \hat{C}_i$. 

Let's now condition on the events that 1) each of the $100 k \log k/\eta'$ rows $R$ sampled are good and 2) $|\cup_{i \in R} \hat{C}_i| \geq (1-\eta'/k^{40})n$.  By the above computations and a union bound, this event happens with probability at least $1-\eta'/k^{10}$. Let $\hat{C}_{\ell+1} =[n] \setminus \cup_{i \leq \ell} \hat{C}_i$ be the set of indices that are not covered in $\cup_{i \in R} \hat{C}_i$. Then, $\cup_{i \leq \ell+1} \hat{C}_i$ is a partition of $[n]$.

We will show that the following way of grouping this partition into two buckets: $R_L = R \cap \cup_{i \in L} C_i$ and $R_S = R \setminus R_L$ satisfies the requirements of the lemma. To see this, note that $\abs{\cup_{i \in R_L} \hat{C}_i \cap \cup_{i \in S} C_i} \leq n/k 100 k^3 \eta' +O(k^3\tau)$. Similarly, $\abs{\cup_{i \in R_S} \hat{C}_i \cup P \cap \cup_{i \in L} C_i} \leq n/k 100 k^3 \eta' + |P| \leq n/k (100 k^3 \eta' + \eta' k^{-40})+ O(k^3\tau)$. 

Choosing $\eta' \leq \eta/k^{10}$ completes the proof. 

\end{proof}

\subsection{Verification Algorithm} \label{sec:verification}
In this section, we prove Lemma~\ref{lem:soundness-verification-main}. 
We first describe our verification algorithm that involve computing (if one exists) a pseudo-distribution consistent with a system of constraints that verifies the properties of being close to a reasonable distribution for a given input subset $\hat{C}$ of size $n/k$ of $Y$.

We first describe the verification constraint system $\cV = \cV(\hat{C})=\cV_1\cup \cV_2 \cup\cV_3 \cup \cV_4  \cup \cV_5$ that is closely related to those used in Sections~\ref{sec:clustering} and~\ref{sec:outlier-robust-clustering}. Covariance constraints introduce a matrix valued indeterminate intended to be the square root of $\Sigma$.
\begin{equation}
\text{Covariance Constraints: $\cV_1$} = 
  \left \{
    \begin{aligned}
      &
      &\Pi
      &=UU^{\top}\\
      &
      &\Pi^2
      &=\Sigma\\
    \end{aligned}
  \right \}
\end{equation}
The intersection constraints force that $X'$ be close to $X$.
\begin{equation}
\text{Intersection Constraints: $\cV_2$} = 
  \left \{
    \begin{aligned}
      &\forall i\in [n'],
      & m_i^2
      & = m_i\\
      &&
      \textstyle\sum_{i\in[n']} m_i
      &= (1-\tau) n'\\
      &\forall i \in [n'],
      &m_i (y_i-x'_i)
      &= 0
    \end{aligned}
  \right \}
\end{equation}
The parameter constraints create indeterminates to stand for the covariance $\Sigma$ and mean $\mu$ of $\hat{C}$ (indicated by $w$).
\begin{equation}
\text{Parameter Constraints: $\cV_3$} = 
  \left \{
    \begin{aligned}
      &
      &\frac{1}{n'}\sum_{i = 1}^{n'} \Paren{x'_i-\mu}\Paren{x'_i-\mu}^{\top}
      &= \Sigma\\
      &
      &\frac{1}{n'}\sum_{i = 1}^{n'} x'_i
      &= \mu\\
    \end{aligned}
  \right \}
\end{equation}


Finally, we enforce certifiable anti-concentration and hypercontractivity of $\hat{C}$.
\begin{equation}
\text{Certifiable Anti-Concentration : $\cV_4$} =
  \left \{
    \begin{aligned}
      &
      &\frac{1}{{n'}^2}\sum_{i,j=  1}^{n'} q_{\tau/C,2\Sigma}^2\left(\Paren{x'_i-x'_j},v\right)
      &\leq 2^{s(\tau/C)} \tau \Paren{v^{\top}\Sigma v}^{s(\tau/C)}
     \end{aligned}
    \right\}
 \end{equation}

\begin{equation}
\text{Certifiable Hypercontractivity : $\cV_5$} = 
  \left \{
    \begin{aligned}
     &\forall j \leq 2s(\tau/C),
     &\frac{1}{{n'}^2} \sum_{i,\ell \leq n'} Q(x'_i-x'_\ell)^{2j}
     &\leq (Cj)^{2j}2^{2j}\Norm{\Pi Q \Pi}_F^{2j}
    \end{aligned}
  \right \}
\end{equation}
\text{Certifiable Bounded Variance: $\cV_6$} = 
\begin{equation}
  \left \{
    \begin{aligned}
     &\forall j \leq 2s,
     &\frac{k^2}{n^2} \sum_{i,\ell \leq n} w_i w_\ell \Paren{Q(x'_i-x'_\ell)-\frac{k^2}{n^2} \sum_{i,\ell \leq n} w_i w_\ell Q(x'_i-x'_\ell)}^{2}
     &\leq C \Norm{\Pi Q \Pi}_F^{2}\mper
    \end{aligned}
  \right \}
\end{equation}

\begin{mdframed}
  \begin{algorithm}[Verification Subroutine]
    \label{algo:verification-subroutine}\mbox{}
    \begin{description}
    \item[Given:]
        A purported cluster $Y = \hat{C}$ of size $n'= \frac{n}{k}$.
    \item[Output:]
        Accept or Reject.
    \item[Operation:]\mbox{} Accept iff $\exists$ a pseudo-distribution $\tzeta$ of degree $4s(\tau/C)$ consistent with $\cV(\hat{C})$.
    \end{description}
  \end{algorithm}
\end{mdframed}

\paragraph{Analysis of Verification Subroutine}

Let $m_i' = m_i \cdot \1 (y_i = x_i)$ for every $i$. 
Define $m'(C_i)  = \frac{k}{n} \sum_{j \in C_i} m'_j$ for every $i$. 

Our proof of Lemma~\ref{lem:soundness-verification-main} will rely on the following three lemmas that give a degree $O(s(\tau/C))$ refutation of $\cV(\hat{C})$ whenever $\hat{C}$ intersects at least two clusters appreciably. The proofs follow the same conceptual plan of combining an upper and lower bound on the variance of $v^{\top}\Sigma v$ as in Sections~\ref{sec:clustering} and ~\ref{sec:outlier-robust-clustering}. The key difference,  as we suggested earlier, is  that the degree of the proof is a fixed constant (instead of growing with $\log \kappa$). The proof  exploits the fact that in the verification setting, $\hat{C}$  is \emph{not} a variable in our constraint system.

\begin{lemma}[SoS Refutation from Simultaneous Intersection with Spectrally Separated Components]  \label{lem:spectral-separated-refutation}
Let $X$ be a good sample from a $\Delta$-separated reasonable distribution with parameters $\{\mu(r),\Sigma(r)\}_{r \leq k'}$ and true clusters $C_1, C_2, \ldots, C_{k'}$ of size $\frac{n}{k}$ each.
Let $Y$ be a $\tau$-corruption of $X$.
Let $\hat{C} \subseteq Y$ be a subset of size $\frac{n}{k}$.
Suppose $C_r,C_{r'}$ are $\Delta$-spectrally separated and $\frac{k}{n} |\hat{C} \cap C_r|, \frac{k}{n} |\hat{C} \cap C_{r'}|\geq 2\sqrt{\tau}$. Then, whenever $\Delta \geq \frac{1}{\tau^6}$, 
Then,
\[
\cV(\hat{C}) \sststile{4s(\tau/C)}{} \Set { -1 \geq 0}\mper
\]

\end{lemma}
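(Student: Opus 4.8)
\textbf{Proof proposal for Lemma~\ref{lem:spectral-separated-refutation}.}

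The plan is to mirror the structure of the ``real-world'' argument in Lemma~\ref{lem:real-world-argument} from Section~\ref{sec:clustering}, but in the verification setting where $\hat{C}$ is fixed data rather than an indeterminate. The crucial simplification is that since $\hat{C}$ is fixed, the quantities $\frac{k}{n}|\hat{C}\cap C_r|$ and $\frac{k}{n}|\hat{C}\cap C_{r'}|$ are \emph{constants} (bounded below by $2\sqrt{\tau}$ by hypothesis), so the case analysis ``$w(C_r)\le\delta$ versus $w(C_r)>\delta$'' that forced the $\log\kappa$-degree threshold-polynomial trick is simply unnecessary here — we already know which case we are in. First I would introduce the proxy variables $m_i' = m_i\cdot\1(y_i=x_i)$ exactly as in Section~\ref{sec:outlier-robust-clustering}, so that by Lemma~\ref{lem:replace-X-by-X'} (adapted to $\cV$) we have $\cV(\hat{C})\sststile{2}{} \{{m_i'}^2 = m_i'\}\cup\{m_i'(x_i'-x_i)=0\}$, and by the analog of Lemma~\ref{lem:intersection-X-X'}, $\cV(\hat{C})\sststile{2}{}\{\sum_i m_i' \ge (1-2\tau)n'\}$. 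Combining this with $\frac{k}{n}|\hat{C}\cap C_r|\ge 2\sqrt{\tau}$ gives a degree-$2$ SoS proof that the proxy-restricted intersection sizes $\frac{k}{n}\sum_{i\in C_r} m_i'$ and $\frac{k}{n}\sum_{i\in C_{r'}} m_i'$ are each at least, say, $\sqrt{\tau}$ (using that at most $2\tau n'$ points of $\hat{C}$ are unmatched, and $\sqrt\tau \gg 2\tau$).

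Next I would establish the two key inequalities on $v^\top\Sigma v$, where $v$ is the spectrally separating direction with $\Delta\, v^\top\Sigma(r)v \le v^\top\Sigma(r')v$ (WLOG $v^\top\Sigma(r)v\le v^\top\Sigma(r')v$). For the lower bound: since $C_r,C_{r'}$ are $2s$-certifiably anti-concentrated by the ``good sample'' property (this carries over verbatim from Lemma~\ref{lem:typical-samples-good}), the difference distribution of matched points drawn from $\hat{C}\cap C_r$ and $\hat{C}\cap C_{r'}$ is anti-concentrated with covariance $\Sigma(r)+\Sigma(r')$ in direction $v$. Running the same argument as in Lemma~\ref{lem:spectral-lower-bound-intersection-clustering-robust} — but now with the constant proxy-intersection fractions in place of the $w'(C_r)$ polynomials — yields $\cV(\hat{C})\sststile{4s(\tau/C)}{} \{ (v^\top\Sigma v)^{s} \ge \tau\cdot (\delta')^{2s}(v^\top(\Sigma(r)+\Sigma(r'))v)^{s} - C\delta'(v^\top(\Sigma(r)+\Sigma(r'))v)^{s}\}$ for $\delta'=\tau/C$; since $\tau\gg C\delta'=\tau$... here I would set $\delta' = \tau/C$ with $C$ large enough and note $\tau\cdot\text{(const)} - C\delta' = \Omega(\tau)$, giving $(v^\top\Sigma v)^s \ge \Omega(\tau)(\delta')^{2s}(v^\top\Sigma(r')v)^s$. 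For the upper bound: the constraint set $\cV_4$ enforces anti-concentration of $\hat{C}$ itself (there is no $w$ here — the sum runs over \emph{all} of $\hat{C}$), and combining with the certifiable subgaussianity of $C_r$ exactly as in Lemma~\ref{lem:anti-conc-upper-bound-w}, using that $\frac{k}{n}\sum_{i\in C_r}m_i'\ge\sqrt\tau$ as the ``weight'', gives $\cV(\hat{C})\sststile{4s(\tau/C)}{}\{(v^\top\Sigma v)^s \le \frac{1}{\tau^{O(s)}}(v^\top\Sigma(r)v)^s\}$ (no threshold polynomial needed since $\sqrt{\tau}$ is a genuine constant lower bound, not an indeterminate, so we can simply divide).

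Finally I would combine the two bounds: $\Omega(\tau)(\delta')^{2s}(v^\top\Sigma(r')v)^s \le (v^\top\Sigma v)^s \le \tau^{-O(s)}(v^\top\Sigma(r)v)^s \le \tau^{-O(s)}\Delta^{-s}(v^\top\Sigma(r')v)^s$, where the last step uses spectral separation. Since $v^\top\Sigma(r')v>0$, cancelling it (a genuine positive constant, so an honest division is allowed) gives $\Omega(\tau)(\tau/C)^{2s}\le \tau^{-O(s)}\Delta^{-s}$, i.e. $\Delta^s \le \tau^{-O(s)}$, i.e. $\Delta \le 1/\tau^{O(1)}$ — contradicting the hypothesis $\Delta\ge 1/\tau^6$ once the constant in $O(1)$ is pinned down (tracking constants carefully, one wants the exponent to come out below $6$). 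This contradiction is an SoS-derivable one: the chain of inequalities above, each a valid SoS consequence of $\cV(\hat{C})$, terminates in a numerical falsehood, so $\cV(\hat{C})\sststile{4s(\tau/C)}{}\{-1\ge 0\}$, possibly after invoking Lemma~\ref{lem:cancellation-SoS-constant-RHS} to go from the $s$-th-power inequality to a clean constant-degree contradiction. The main obstacle I anticipate is bookkeeping the exponents of $\tau$ precisely enough that the required separation comes out as $1/\tau^6$ rather than some larger power — in particular making sure the anti-concentration parameter $s(\tau/C)$ and the various factors of $\tau^{\Theta(s)}$ from the anti-concentration moment bounds combine correctly, and that cancelling the common factor $(v^\top\Sigma(r')v)^s$ (which could be arbitrarily small) is done via Fact~\ref{fact:univariate-interval}/Lemma~\ref{lem:cancellation-SoS-constant-RHS} rather than an illegal division inside SoS; since here $\hat C$ is fixed data, this last point is actually routine, which is precisely why the constant-degree proof goes through.
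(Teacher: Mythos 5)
Your proposal is correct and follows the same strategy as the paper: introduce the proxy variables $m_i'$, use Lemma~\ref{lem:intersection-X-X'-verification} to get an SoS-provable lower bound $m'(C_r)^2, m'(C_{r'})^2 \geq 2\tau$, sandwich $(v^{\top}\Sigma v)^s$ between a lower bound from anti-concentration of the true clusters and an upper bound from $\cV_4$ combined with subgaussianity of $C_r$, and then derive a numerical contradiction by cancelling the fixed positive constants $(v^{\top}\Sigma(r)v)^s$ and powers of $\tau$ --- exactly the step that is legal here because $\hat{C}$, $C_r$, $C_{r'}$ are data rather than indeterminates, which is the key structural point you correctly isolate. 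The only stylistic difference is in the lower-bound step: the paper applies anti-concentration to two-fold differences of matched $C_{r'}$ points alone and bounds that sum above by $(v^{\top}\Sigma v)^s$ via the $\cV_5$ hypercontractivity constraint, which is slightly leaner than your four-fold $(C_r,C_{r'})$-convolution (it avoids having to control degree-$4s$ moments of the $X'$ indeterminates), but both routes work and land at the same $O(s(\tau/C))$ degree.
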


\begin{lemma}[SoS Refutation from Simultaneous Intersection with Mean Separated Components]\label{lem:mean-separated-refutation}
Let $X$ be a good sample from a $\Delta$-separated reasonable distribution with parameters $\{\mu(r),\Sigma(r)\}_{r \leq k'}$ and true clusters $C_1, C_2, \ldots, C_{k'}$ of size $\frac{n}{k}$ each.
Let $Y$ be a $\tau$-corruption of $X$.
Let $\hat{C} \subseteq Y$ be a subset of size $\frac{n}{k}$.
Suppose $C_r,C_{r'}$ are $\Delta$-mean separated and $\frac{k}{n} |\hat{C} \cap C_r|, \frac{k}{n} |\hat{C} \cap C_{r'}|\geq 2\sqrt{\tau}$. Then, whenever $\Delta \geq \frac{1}{\tau^6}$, 
Then,
\[
\cV(\hat{C}) \sststile{4s(\tau/C)}{} \Set { -1 \geq 0}\mper
\]

\end{lemma}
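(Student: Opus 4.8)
The plan is to mirror the proof of the spectral‑separation refutation (Lemma~\ref{lem:spectral-separated-refutation}), replacing the anti‑concentration machinery by the degree‑$2s$ polynomial‑concentration (hypercontractivity/sub‑gaussianity) and bounded‑variance machinery that we used for mean separation in Section~\ref{sec:clustering}. Concretely, we work with the proxy variables $m_i' = m_i\cdot\1(y_i=x_i)$ of Lemma~\ref{lem:replace-X-by-X'}, and set $\alpha = \frac{k}{n}|\hat C\cap C_r|$, $\beta=\frac{k}{n}|\hat C\cap C_{r'}|$, both $\geq 2\sqrt{\tau}$. Since $\hat C$ itself is \emph{not} a variable of $\cV(\hat C)$ (only $x'_i$, $m_i$, $\mu$, $\Sigma$, $\Pi$ are), $\alpha,\beta$ are honest constants. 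Let $v$ be a unit vector with $\Iprod{\mu(r)-\mu(r'),v}^2 \geq \Delta v^\top(\Sigma(r)+\Sigma(r'))v$. I would first derive inside $\cV(\hat C)$, using $\cV_3$ (which defines $\mu,\Sigma$ from the $x'_i$) together with $m_i'(x'_i-x_i)=0$ from Lemma~\ref{lem:replace-X-by-X'}, that the $w$-analog of the lower bound in the proof of Lemma~\ref{lem:intersection-bounds-from-mean-separation} holds: the sub‑sample indicated by $m'$ contains an $\alpha$-fraction genuinely from $C_r$ and a $\beta$-fraction from $C_{r'}$, whose empirical means along $v$ are (by goodness of $X$, Lemma~\ref{lem:typical-samples-good-robust}) within $O(\sqrt{v^\top\Sigma(r)v})$ of $\iprod{\mu(r),v}$ and $\iprod{\mu(r'),v}$ respectively. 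Applying Lemma~\ref{lem:lower-bounding-sums-SoS} with $A=\alpha,B=\beta$, $C=\iprod{\mu(r)-\mu,v}^{2s}$, $D=\iprod{\mu(r')-\mu,v}^{2s}$, together with the SoS almost‑triangle inequality (Fact~\ref{fact:sos-almost-triangle}) and the certifiable sub‑gaussianity of $C_r,C_{r'}$ (a consequence of the hypercontractivity part of goodness), yields
\[
\cV(\hat C)\sststile{4s}{}\Set{ (Cs)^s (v^\top\Sigma v)^s \geq c\,\alpha\beta\,\Delta^{2s}\Paren{(v^\top\Sigma(r)v)^s+(v^\top\Sigma(r')v)^s} - (v^\top\Sigma(r)v)^s-(v^\top\Sigma(r')v)^s }
\]
for an absolute constant $c$; this is the ``variance lower bound'' step.

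The second ingredient is a matching \emph{upper} bound on $(v^\top\Sigma v)^s$, and here is the place where the verification setting is genuinely easier than in Section~\ref{sec:clustering}: we do not need the threshold polynomial $\cJ$ or the $\log\kappa$-loss, because $\alpha$ is a fixed number bounded below by $2\sqrt\tau$, so we can just multiply through by powers of $\alpha^{-1}$ without any SoS case analysis. Using $\cV_4$ (certifiable anti‑concentration of $\hat C$ at parameter $\tau/C$) in exactly the way Lemma~\ref{lem:anti-conc-upper-bound-w} uses $\cA_4$ — but now summing over the genuine $C_r$-part of $\hat C$ whose empirical covariance along $v$ is $\leq (1+o(1))v^\top\Sigma(r)v$ by goodness — I would obtain
\[
\cV(\hat C)\sststile{4s}{}\Set{ \alpha^{2}\,(v^\top\Sigma v)^s \leq \Paren{\tfrac{Cs}{(\tau/C)^2}}^s (v^\top\Sigma(r)v)^s },
\]
and symmetrically with $r'$ and $\beta$. (One can also avoid anti‑concentration entirely and instead use the bounded‑variance constraint $\cV_6$ applied to the rank‑one $Q=vv^\top$, whichever is cleaner; both give an upper bound on $(v^\top\Sigma v)^s$ by a constant multiple of $\min\{(v^\top\Sigma(r)v)^s,(v^\top\Sigma(r')v)^s\}$ up to the $\tau^{-O(s)}$ factor, since the remaining $(1-\alpha)$-fraction of $\hat C$ only contributes nonnegatively.) Plugging this upper bound into the lower bound above, cancelling $(v^\top\Sigma(r)v)^s$ (WLOG the smaller one) via the SoS cancellation lemma (Lemma~\ref{lem:cancellation-SoS-constant-RHS}), and using $\alpha,\beta\geq 2\sqrt\tau$ and $\Delta\geq 1/\tau^6$ forces $1 \leq O(\tau^{3})\cdot\Delta^{-\Omega(s)}\cdot\tau^{-O(s)} < 1$, a contradiction, i.e. $\cV(\hat C)\sststile{4s(\tau/C)}{}\{-1\geq 0\}$, which is the claim.

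The main obstacle I anticipate is purely bookkeeping rather than conceptual: tracking the interplay of the three ``slacks'' — the $\tau^{-O(s)}$ from anti‑concentration/bounded variance, the $\sqrt\tau$ lower bounds on $\alpha,\beta$, and the $(Cs)^s$ sub‑gaussian moment factor — against the $\Delta^{2s}$ gain from separation, so that the final numeric inequality closes with the stated $\Delta\geq 1/\tau^6$ (the exponent $6$ presumably comes from needing $\alpha\beta\geq 4\tau$ to beat an $O(\tau^{3})$-type loss, squared). I would also need to be slightly careful that goodness of $X$ (Lemma~\ref{lem:typical-samples-good-robust}) supplies the empirical‑mean closeness $\Iprod{\hat\mu(r)-\mu(r),v}^2\leq 0.1\,v^\top\Sigma(r)v$ and the one‑ and two‑sample certifiable hypercontractivity/sub‑gaussianity statements in the exact SoS‑certified form needed, but these are already in the definition of a good sample. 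Everything else is a line‑by‑line transliteration of the mean‑separation argument of Section~\ref{sec:clustering} with $w(C_r)\mapsto\alpha$, $w(C_{r'})\mapsto\beta$, $w_i\mapsto m_i'$, and $x_i\mapsto x'_i$ (linked via Lemma~\ref{lem:replace-X-by-X'}), so I do not expect any new SoS primitive to be required.
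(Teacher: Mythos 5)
The paper itself provides no proof for this lemma beyond the remark that it ``follows via a similar argument as above,'' i.e.\ by transposing the spectral refutation (Lemma~\ref{lem:spectral-separated-refutation}). Your plan — transplant the mean-separation variance lower bound from the proof of Lemma~\ref{lem:intersection-bounds-from-mean-separation} into the verification system, pair it with the anti-concentration upper bound in the style of Lemma~\ref{lem:anti-conc-upper-bound-w}, work with the proxy variables $m_i'$, and observe that the $\cJ/\log\kappa$ machinery is unnecessary because the intersection fractions are bounded below by constants — is exactly the right template, and the $\Delta \geq 1/\tau^6$ bookkeeping closes as you describe.

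However, there is a genuine technical error in how you handle $\alpha$ and $\beta$. You are right that $\alpha=\frac{k}{n}|\hat C\cap C_r|$ and $\beta=\frac{k}{n}|\hat C\cap C_{r'}|$ are fixed real numbers, but those numbers are \emph{not} the coefficients that appear in the SoS derivation. When you expand $\frac{1}{n'}\sum_i\iprod{x_i'-\mu,v}^{2s}$ and restrict to the genuine $C_r$ part, the substitution $x_i'\mapsto x_i$ is only licensed by the constraint $m_i'(x_i'-x_i)=0$, so the resulting coefficient is the SoS polynomial $m'(C_r)=\frac{1}{n'}\sum_{i\in C_r} m_i'$, not $\alpha$. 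Since $m_i$ is an indeterminate of $\cV(\hat C)$, a pseudo-distribution is free to set $m_i=0$ on genuine $C_r$-points, so in general one only has $m'(C_r)\leq\alpha$, not equality. The inequality you actually need is the SoS lower bound $m'(C_r)^2\geq \alpha - 2\tau \geq 2\tau$, which is supplied by Lemma~\ref{lem:intersection-X-X'-verification} — a lemma your proposal never cites (you cite Lemma~\ref{lem:replace-X-by-X'} and Lemma~\ref{lem:typical-samples-good-robust}, which serve different roles). The corrected derivation should apply Lemma~\ref{lem:lower-bounding-sums-SoS} with $A=m'(C_r)$, $B=m'(C_{r'})$ (or their squares), lower-bound the resulting product $m'(C_r)m'(C_{r'})$ via the $\geq 2\tau$ bounds, and likewise replace $\alpha^2$ by $m'(C_r)^2-\tau$ in the upper bound — precisely as the spectral refutation proof does. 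As written, the displayed inequalities with $\alpha\beta$ and $\alpha^2$ as coefficients are not derivable in the system.

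A secondary issue: your parenthetical that ``one can also avoid anti-concentration entirely and instead use the bounded-variance constraint $\cV_6$'' is not correct. To upper-bound $(v^\top\Sigma v)^s$ by a $\tau^{-O(s)}$ multiple of $(v^\top\Sigma(r)v)^s$, you need a statement that the $C_r$-matched sub-sample of $\hat C$ contributes a non-negligible fraction of $\hat C$'s variance along $v$, and this is a moment \emph{lower bound} that only anti-concentration ($\cV_4$) supplies. The heuristic that ``the remaining $(1-\alpha)$-fraction contributes nonnegatively'' gives a lower bound on $v^\top\Sigma v$, not an upper bound, so $\cV_6$ (a 4th-moment \emph{upper} bound) cannot substitute here.
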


\begin{lemma}[SoS Refutation from Simultaneous Intersection with Frobenius Separated Components]\label{lem:frob-separated-refutation}
Let $X$ be a good sample from a $\Delta$-separated reasonable distribution with parameters $\{\mu(r),\Sigma(r)\}_{r \leq k'}$ and true clusters $C_1, C_2, \ldots, C_{k'}$ of size $\frac{n}{k}$ each.
Let $Y$ be a $\tau$-corruption of $X$.
Let $\hat{C} \subseteq Y$ be a subset of size $\frac{n}{k}$.
Suppose $C_r,C_{r'}$ are $\Delta_{cov}$-relative Frobenius separated and $\frac{k}{n} |\hat{C} \cap C_r|, \frac{k}{n} |\hat{C} \cap C_{r'}|\geq 2\sqrt{\tau}$. Then, whenever $\Delta_{cov} \geq \frac{1}{\tau^6}$, 
Then,
\[
\cV(\hat{C}) \sststile{4s(\tau/C)}{} \Set { -1 \geq 0}\mper
\]
\end{lemma}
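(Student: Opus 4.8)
The plan is to mirror the structure already established for the non-robust Frobenius intersection bound (Lemma~\ref{lem:intersection-bounds-from-rel-frob-sep}) but to exploit the key simplification afforded by the verification setting: here $\hat{C}$ is a \emph{fixed} subset of $Y$, not a variable in the constraint system, so the quantities $\frac{k}{n}|\hat C\cap C_r|$ and $\frac{k}{n}|\hat C\cap C_{r'}|$ are honest numbers bounded below by $2\sqrt\tau$ rather than polynomials that must be manipulated inside SoS. First I would set up the proxy variables: let $m_i'=m_i\cdot\1(y_i=x_i)$ and, as in Lemma~\ref{lem:replace-X-by-X'}, derive from $\cV_2$ that $\cV(\hat C)\sststile{2}{}\{{m_i'}^2=m_i'\}\cup\{m_i'(x_i'-x_i)=0\}$ and (Lemma~\ref{lem:intersection-X-X'}) that $\sum_i m_i'\ge(1-2\tau)n'$. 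Combined with the fixed intersection lower bounds, this yields constants $w'(C_r)=\frac{k}{n}\sum_{i\in \hat C\cap C_r}m_i'\ge 2\sqrt\tau-O(\tau)\ge\sqrt\tau$, so the ``intersection mass'' $w'(C_r)w'(C_{r'})$ is bounded below by a constant $\ge\tau$ rather than being an indeterminate.

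Next I would run the variance-mismatch argument after the linear change of variables making $\Sigma(r')=I$, so relative Frobenius separation reads $\|\Sigma(r)-I\|_F^2\ge\Delta_{cov}^2\|\Sigma(r)^{1/2}\|_{op}^4$. Take $Q=\Sigma(r)-I$. The lower-bound side is essentially Lemma~\ref{lem:Large-Intersection-Implies-High-Variance-frobenius} applied with the (now constant) masses $w'(C_r),w'(C_{r'})$ in place of the $w$-polynomials: one gets $\cV(\hat C)\sststile{4}{}\{\,4\E_w(Q-\E_wQ)^2+2\E_{C_r}(Q-\E_{C_r}Q)^2+2\E_{C_{r'}}(Q-\E_{C_{r'}}Q)^2\ge \tau^2\|\Sigma(r)-I\|_F^4\,\}$, using that the means of $Q$ on $C_r$ and $C_{r'}$ differ by exactly $\|\Sigma(r)-I\|_F^2$. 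The two cluster-variance terms are $\le C\|\Sigma(r)^{1/2}\|_{op}^2\|Q\|_F^2$ by the certifiable-bounded-variance guarantee of good samples (Lemma~\ref{lem:typical-samples-good}). The upper bound on $\E_w(Q-\E_wQ)^2$ comes from the certifiable bounded-variance constraint in $\cV_6$ together with the SoS contraction lemma (Lemma~\ref{lem:contraction-property}): $\E_w(Q-\E_wQ)^2\le C\|\Pi Q\Pi\|_F^2\le C\|\Pi\|_{op}^2\|Q\|_F^2$, and now the crucial point is that we need only a \emph{constant-degree} (no $\cJ$-threshold, no $\log\kappa$) bound on $\|\Pi v\|_2$ because $\hat C$ is fixed: a plain spectral upper bound on $\Sigma=\Pi^2$ of the uniform distribution on $\hat C$ in terms of the separated clusters' covariances, obtained by one application of the anti-concentration constraint $\cV_4$ at parameter $\tau/C$ restricted to the honest subsets $\hat C\cap C_r$, suffices. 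Stringing the lower and upper bounds together gives $\tau^2\|\Sigma(r)-I\|_F^4\le 2^{O(s(\tau/C))}\tfrac{1}{\tau^{O(s(\tau/C))}}\|\Sigma(r)^{1/2}\|_{op}^{O(s(\tau/C))}\|\Sigma(r)-I\|_F^{2}$ after SoS cancellation (Lemma~\ref{lem:cancellation-SoS-constant-RHS}); dividing through and invoking $\|\Sigma(r)-I\|_F^2\ge\Delta_{cov}^2\|\Sigma(r)^{1/2}\|_{op}^4$ with $\Delta_{cov}\ge1/\tau^6$ forces the right-hand side strictly below the left, i.e.\ $\cV(\hat C)\sststile{4s(\tau/C)}{}\{-1\ge0\}$.

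The main obstacle I anticipate is handling the spectral upper bound on $\Sigma(w)$ cleanly in constant degree when the fixed set $\hat C$ could, in principle, pick up mass from clusters other than $C_r,C_{r'}$ whose covariances in direction $v$ are enormous — the exact difficulty flagged in Remark~\ref{remark:difficulty-rough-upper-bound}. The resolution exploits that $\hat C$ has size exactly $n/k$ and is a concrete subset of $Y$, so $\Sigma(w)=\frac{k}{n}\sum_{i\in\hat C}(x_i-\mu)(x_i-\mu)^\top$ is, up to the $\tau$-fraction of outliers, a subset average of honest points, and certifiable anti-concentration of the uniform distribution on $\hat C$ (which we are \emph{hypothesizing} via $\cV_4$, to be refuted) combined with the fact that $\hat C\cap C_r$ is a genuine anti-concentrated piece of $C_r$ bounds $v^\top\Sigma(w)v$ by $O(v^\top\Sigma(r)v)/\tau^{O(1)}$ directly — no maximum over all $k$ components enters because, unlike in the peeling algorithm, we are not trying to find $\hat C$ but to contradict the assumption that it is both anti-concentrated and intersects two separated clusters. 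Once this constant-degree rough bound is in hand, the rest is the same computation as Lemma~\ref{lem:variance-upper-bound-hypercontractivity} with $\rho$ taken to be an absolute constant, exactly as in the $k=2$ specialization (Lemma~\ref{lem:variance-upper-bound-hypercontractivity-Two-Components}). I would then remark that the mean-separated and spectrally-separated cases (Lemmas~\ref{lem:mean-separated-refutation} and \ref{lem:spectral-separated-refutation}) follow by the identical substitution into the constant-degree versions of the arguments in Sections~\ref{sec:clustering}--\ref{sec:outlier-robust-clustering}, and that Lemma~\ref{lem:soundness-verification-main} follows by combining all three: if $\max_j \frac{k}{n}|\hat C\cap C_j|<1-2k\sqrt\tau$ then by pigeonhole two distinct clusters each capture $\ge2\sqrt\tau$ mass, they are $\Delta$-separated in one of the three senses, and the corresponding refutation lemma shows no consistent pseudo-distribution exists, so the algorithm rejects; completeness (accepting when some $\frac{k}{n}|\hat C\cap C_r|\ge1-\tau$) is witnessed by the actual uniform distribution on $\hat C$, whose relevant certifiable properties are inherited from $C_r$ up to the $\tau$-fraction of discrepancy via the good-sample guarantees.
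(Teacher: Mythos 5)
Your proposal is correct and follows essentially the same route as the paper's proof: set up the proxy variables $m_i'$ and use Lemma~\ref{lem:intersection-X-X'-verification} to certify that $m'(C_r)^2,m'(C_{r'})^2 \geq \tau$ (so the threshold polynomial $\cJ$ is unnecessary); run the variance-mismatch lower bound with $Q=\Sigma(r)-I$ after putting $\Sigma(r')$ in isotropic position; bound the two true-cluster variance terms via the bounded-variance property of good samples; bound $\E_{X'}(Q-\E_{X'}Q)^2$ via the $\cV_6$ constraint plus the contraction lemma, feeding in the constant-degree spectral bound on $\Sigma(X')$ that comes from $\cV_4$ anti-concentration and the anti-concentration of $C_r$ with the $m'(C_r)^2\geq\tau$ certificate; and finally cancel and invoke $\Delta_{cov}\geq 1/\tau^6$. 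Your diagnosis of why the $\log\kappa$ degree loss disappears — $\hat C$ is a fixed input rather than an SoS indeterminate, so its intersection with $C_r$ is a known number and the ``rough'' spectral bound needs only $v^\top\Sigma(r)v$, not a max over all $k$ components — is exactly the point the paper makes, and the claimed degree $O(s(\tau/C))$ and failure threshold match.
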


\begin{proof}[Proof of Lemma~\ref{lem:soundness-verification-main}]

Let $j$ be the maximizer of $|\hat{C} \cap C_r|$ over all $r \leq k'$. Then, $|\hat{C} \cap C_j| \geq 1/k$.
Let $j'$ be the maximizer of $|\hat{C} \cap C_r|$ over all $r \neq j$. Then, $|\hat{C} \cap C_{j'}|\leq |\hat{C} \cap C_j|$.
Then, observe that $\frac{k}{n} |\hat{C} \cap C_{j'}| \geq 2k\sqrt{\tau}/k \geq 2 \sqrt{\tau}$.

Applying Lemmas~\ref{lem:spectral-lower-bound-intersection-clustering},~\ref{lem:mean-separated-refutation} and~\ref{lem:frob-separated-refutation} for each of the three possible ways that $C_i$ and $C_j$ could be separated, we obtain that:

\[
\cV \sststile{4s(\tau/C)}{} \Set{-1  \geq 0}\mper
\]

This immediately implies that there's no degree $\geq 4s(\tau/C)$ pseudo-distribution  $\tzeta$ consistent with  $\cV(\hat{C})$ -for if there was one, then the above inequality yields a contradiction. This completes the proof of the first part. 

For the  second  part, observe that setting $X'$ to be the cluster closest (and thus  $1-\tau$-intersecting) to $\hat{C}$ immediately completes  the proof.

\end{proof}

\paragraph{Sum-of-Squares Refutation of Reasonableness of Bad Clusters}
We now prove Lemmas~\ref{lem:spectral-separated-refutation},~\ref{lem:mean-separated-refutation} and~\ref{lem:frob-separated-refutation}. The proof of these lemmas closely resembles our proofs of the simultaneous intersection bounds in Sections~\ref{sec:clustering} and ~\ref{sec:outlier-robust-clustering}. So it may appear somewhat confusing as to how we can get the SoS  proofs  to work in degrees that do not depend on $\kappa$. The key difference is that, informally speaking, here we already ``know'' that two clusters have large intersection with  a purported bad cluster $\hat{C}$ (which is given to us, not a variable) and our goal is to obtain a contradiction from the axioms that $\hat{C}$ satisfies $\cV$ in low-degree SoS. Such a difference, while inconsequential in ``ordinary math'', is key to obtaining the stronger degree bounds that do not depend on $\kappa$ in this section. 

We will use the following result in all the three proofs. 

\begin{lemma}[Matching with Original Uncorrupted Samples] \label{lem:intersection-X-X'-verification}
Suppose $\frac{1}{n'} |\hat{C} \cap C_r|,\frac{1}{n'} |\hat{C} \cap C_{r'}|\geq 2 \sqrt{\tau}$.
Let $m'(C_r) = \frac{1}{n'} \sum_{i \leq C_r} m_i'$. 
Then,
\[
\cV \sststile{4s(\tau/C)}{} \Set{m'(C_r)^2 \geq \frac{k}{n}|C_{r} \cap \hat{C}| - 2\tau \geq 2\tau}\mper
\] 

\end{lemma}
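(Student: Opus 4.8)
The plan is to argue entirely within low-degree sum-of-squares over the indeterminates $m_i$ (and hence over $m_i' = m_i \cdot \1(y_i = x_i)$, where the indicators $\1(y_i = x_i)$ are \emph{constants} determined by the fixed input $Y$ and the — unknown but fixed — good sample $X$). First I would record the two basic SoS consequences of the intersection constraints $\cV_2$: from $\{m_i^2 = m_i\}$ we get $\cV \sststile{2}{} \{0 \leq m_i \leq 1\}$, hence also $\cV \sststile{2}{} \{ {m_i'}^2 = m_i'\}$ and $\cV \sststile{2}{}\{0 \leq m_i' \leq 1\}$; and from $\sum_i m_i = (1-\tau)n'$ together with $m_i \geq m_i'$ (which holds since $\1(y_i=x_i) \in \{0,1\}$) plus $\sum_i \1(y_i \neq x_i) \leq \tau n'$ we obtain $\cV \sststile{2}{} \{\sum_{i \leq n'} (1 - m_i') \leq 2\tau n'\}$, exactly as in Lemma~\ref{lem:intersection-X-X'}.

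Next I would lower bound $m'(C_r) = \frac{1}{n'}\sum_{i \in C_r} m_i'$. Write $\frac{k}{n}|C_r \cap \hat C| = \frac{1}{n'}\sum_{i \in C_r \cap \hat C} 1$; here $C_r \cap \hat C$ is a fixed index set since both $C_r$ (true cluster, fixed) and $\hat C$ (the given purported cluster) are fixed. For each such index $i$ we have the SoS identity $m_i' = 1 - (1-m_i')$, so
\[
\cV \sststile{2}{} \Set{ m'(C_r) \geq \frac{1}{n'} \sum_{i \in C_r \cap \hat C} m_i' \geq \frac{1}{n'}|C_r \cap \hat C| - \frac{1}{n'}\sum_{i \leq n'}(1-m_i') \geq \frac{k}{n}|C_r \cap \hat C| - 2\tau }.
\]
Using the hypothesis $\frac{k}{n}|C_r \cap \hat C| \geq 2\sqrt{\tau}$ and $\tau$ small (so $2\sqrt{\tau} - 2\tau \geq \sqrt\tau \geq 2\tau$ for $\tau \leq 1/16$, say), this gives $\cV \sststile{2}{}\{m'(C_r) \geq 2\tau\}$, and since $m'(C_r)$ is (a positive combination of) SoS quantities we may square: $\cV \sststile{2}{}\{m'(C_r)^2 \geq (\frac{k}{n}|C_r\cap\hat C| - 2\tau)^2\}$. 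To land on the stated bound $m'(C_r)^2 \geq \frac{k}{n}|C_r\cap\hat C| - 2\tau$, I would simply observe that for a scalar SoS quantity $a$ with $\cV \sststile{2}{}\{1 \geq a \geq 2\tau\}$ one has $a^2 = a \cdot a \geq a \cdot 2\tau$... hmm, that goes the wrong way; instead use $a^2 \geq a - \tfrac14$ always, or more directly note $\frac{k}{n}|C_r \cap \hat C| \leq 1$ so $(\frac{k}{n}|C_r\cap\hat C| - 2\tau)^2 \geq (\frac{k}{n}|C_r\cap\hat C|-2\tau)\cdot(\frac{k}{n}|C_r\cap\hat C| - 2\tau)$, and since both factors lie in $[0,1]$ and the first is $\geq 2\sqrt\tau - 2\tau \geq \sqrt{\frac{k}{n}|C_r\cap\hat C|}\cdot$(something)\dots. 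The clean route is: $m'(C_r)^2 \geq (\tfrac{k}{n}|C_r\cap\hat C|-2\tau)^2$, and separately $\tfrac{k}{n}|C_r\cap\hat C| - 2\tau \geq 2\sqrt\tau - 2\tau$, so it suffices that $(2\sqrt\tau-2\tau)^2 = 4\tau(1-\sqrt\tau)^2 \geq 2\tau$, i.e. $(1-\sqrt\tau)^2 \geq 1/2$, true for $\tau$ a small enough constant — but this only yields $m'(C_r)^2 \geq 2\tau$, not the sharper $\geq \frac{k}{n}|C_r\cap\hat C|-2\tau$.

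The main obstacle, then, is this last bookkeeping step: getting $m'(C_r)^2$ (rather than $m'(C_r)$) to dominate $\frac{k}{n}|C_r \cap \hat C| - 2\tau$. The resolution I expect to use is that in all three applications (Lemmas~\ref{lem:spectral-separated-refutation},~\ref{lem:mean-separated-refutation},~\ref{lem:frob-separated-refutation}) what is actually needed is a lower bound on $m'(C_r)^2 m'(C_{r'})^2$ of the form $\gtrsim \tau$, which the weaker conclusion $m'(C_r)^2 \geq 2\tau$ already supplies; alternatively, since $\frac{k}{n}|C_r\cap\hat C| \geq 2\sqrt\tau$ we have $\frac{k}{n}|C_r\cap\hat C| - 2\tau \leq \frac{k}{n}|C_r\cap\hat C| \leq \bigl(\frac{k}{n}|C_r\cap\hat C| - 2\tau + 2\tau\bigr)$ and one checks $a - 2\tau \leq (a-2\tau)^2 + 2\tau$ whenever $a - 2\tau \notin (0,1)$ or $a-2\tau \geq$ golden-ratio-type threshold — here $a - 2\tau \geq 2\sqrt\tau - 2\tau \geq \sqrt\tau$, and $\sqrt\tau \leq \tau + 1/4$\dots. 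In short, I would restate the lemma's conclusion in the weaker but sufficient form $\cV \sststile{4s(\tau/C)}{}\{m'(C_r)^2 \geq 2\tau\}$ and $\{m'(C_r) \geq \frac{k}{n}|C_r\cap\hat C| - 2\tau\}$, which is all that the downstream refutation arguments consume, and note that the degree is $O(1)$ (indeed $2$) so the claimed degree $4s(\tau/C)$ is a gross over-estimate that causes no harm. The only genuinely non-routine ingredient is the observation that $m_i' = m_i\cdot\1(y_i=x_i)$ is well-defined as a polynomial in the program variables because $\1(y_i=x_i)$ is a fixed scalar, which is exactly the device already introduced in Lemma~\ref{lem:replace-X-by-X'}.
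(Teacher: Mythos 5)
Your proof is correct and follows the same approach as the paper's: restrict the argument of Lemma~\ref{lem:intersection-X-X'} to the subset $\hat C \cap C_r$, giving $m'(C_r) \geq \frac{k}{n}|C_r \cap \hat C| - 2\tau$ in degree-$2$ SoS, and then use the hypothesis $\frac{k}{n}|C_r\cap\hat C| \geq 2\sqrt\tau$ with $\tau$ small. You have also correctly spotted that the \emph{stated} middle inequality $m'(C_r)^2 \geq \frac{k}{n}|C_r\cap\hat C| - 2\tau$ does not literally follow from this argument: it lower-bounds $m'(C_r)$, not its square, and squaring a quantity in $[0,1]$ can only decrease it. What the argument actually delivers --- and what the downstream refutation lemmas (Lemmas~\ref{lem:spectral-separated-refutation},~\ref{lem:mean-separated-refutation},~\ref{lem:frob-separated-refutation}) actually consume, namely that $m'(C_r)^2 - \tau \geq \tau$ --- is $m'(C_r)^2 \geq \bigl(\tfrac{k}{n}|C_r\cap\hat C| - 2\tau\bigr)^2 \geq (2\sqrt\tau - 2\tau)^2 = 4\tau(1-\sqrt\tau)^2 \geq 2\tau$ for $\tau$ a small enough constant. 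Your proposal to restate the conclusion as $m'(C_r) \geq \frac{k}{n}|C_r\cap\hat C| - 2\tau$ together with $m'(C_r)^2 \geq 2\tau$ is the right fix, your observation that the proof degree is $O(1)$ (so the stated $4s(\tau/C)$ is a harmless over-estimate) is correct, and the device of treating $\1(y_i=x_i)$ as fixed $\{0,1\}$ coefficients --- so that $m_i' \leq m_i$ and ${m_i'}^2 = m_i'$ are degree-$2$ SoS consequences of $m_i^2 = m_i$ --- is exactly that of Lemma~\ref{lem:replace-X-by-X'}.
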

\begin{proof}
Reasoning as in Lemma~\ref{lem:intersection-X-X'}, we obtain that for any subset $\hat{C}' \subseteq \hat{C}$, we have: 
Then, 
\[
\cV(\hat{C}) \sststile{2}{m'} \Set{\frac{1}{n'}\sum_{i \in \hat{C}'} m_i' \geq |\hat{C}'|-2\tau }\mper
\]

Applying this to subsets $\hat{C}' = \hat{C} \cap C_r$ yields:
\[
\cV \sststile{4s(\tau/C)}{} \Set{m'(C_r)^2 \geq \frac{k}{n}|C_{r} \cap \hat{C}| - 2\tau \geq 2\tau}\mper
\] 

\end{proof}



\begin{proof}[Proof of Lemma~\ref{lem:spectral-separated-refutation}]
WLOG, assume $\Delta v^{\top}\Sigma(r)v \leq v^{\top} \Sigma(r')v$ for some $v \in \R^d$. The proof follows by from combining certifiable anti-concentration constraints $\cV_4$, certifiable anti-concentration of $C_r$ and  Lemma~\ref{lem:intersection-X-X'-verification}. We will use $\cV$ to denote $\cV(\hat{C})$ in the proof below.

Using certifiable anti-concentration of $C_{r'}$:
\begin{equation} \label{eq:verification-anti-conc-1}
\cV \sststile{4s(\tau/C)}{} \Biggl\{ (C\tau)^{2s} \Paren{m'(C_{r'})^2 - \tau} \Paren{v^{\top} \Sigma(r') v^{\top}}^s \leq  \frac{1}{{n'}^2} \sum_{i,j \leq C_{r'}} m_i'm_j' \iprod{x_i'-x_j',v}^{2s} \leq  \Paren{v^{\top} \Sigma(m) v}^s \Biggr\} 
\end{equation}

Similarly, using certifiable anti-concentration constraints $\cV_4$:
\begin{equation} \label{eq:verification-anti-conc-2}
\cV \sststile{4s(\tau/C)}{} \Biggl\{ \Paren{m'(C_r)^2 - \tau} \Paren{v^{\top} \Sigma(m) v^{\top}}^s \leq \Paren{\frac{1}{\tau^2}}^s \Paren{v^{\top} \Sigma(r) v}^s \Biggr\} 
\end{equation}

Plugging in the estimates from Lemma~\ref{lem:intersection-X-X'-verification} in \eqref{eq:verification-anti-conc-1} and \eqref{eq:verification-anti-conc-2}, and rearranging yields:

\[
\cV \sststile{4s(\tau/C)}{} \Set{\tau^2 (C\tau)^{2s} \Paren{v^{\top} \Sigma(r') v^{\top}}^s  \leq \tau\Paren{v^{\top} \Sigma(m) v}^s \leq \Paren{\frac{1}{\tau^2}}^s \Paren{v^{\top} \Sigma(r) v}^s}\mper
\]

Dividing throughout by $\Paren{v^{\top} \Sigma(r) v}^s$ yields:
\[
\cV \sststile{4s(\tau/C)}{} \Set{\tau^2 (C\tau)^{4s} \Delta^{s}\leq 1}\mper
\]

Using that $\Delta^s \geq 2 \frac{1}{\tau^{6}}$ and subtracting out $1$ from both sides above yields:
\[
\cV \sststile{4s(\tau/C)}{}  \Set{-1 \geq 0}\mper
\]

\end{proof}

The proof of Lemma~\ref{lem:mean-separated-refutation} follows via a similar argument as above. 
We now proceed to the proof of Lemma~\ref{lem:frob-separated-refutation}. 

\begin{proof}[Proof of Lemma~\ref{lem:frob-separated-refutation}]

As in the proof of Lemma~\ref{lem:Large-Intersection-Implies-High-Variance-frobenius}, for the sake of the analysis, we first apply the linear transformation $y_i \rightarrow \Sigma(r')^{-1/2}y_i$. Let $Q = \Sigma(r)-I$.

From an argument similar to Lemma~\ref{lem:Large-Intersection-Implies-High-Variance-frobenius}, we can obtain:
\begin{equation}\label{eq:lower-bound-frob-verification}
\begin{split}
\cV \sststile{}{} \Biggl\{ 2\E_{X'} \paren{Q-\E_{X'}Q}^2 + 2  \E_{C_r} (Q-\E_{C_r}Q)^2 & + 2 \E_{C_{r'}} (Q-\E_{C_{r'}}Q)^2 \\
& \geq m'(C_{r})^2 m'(C_{r'})^2 \Norm{\Sigma(r')^{-1/2}\Sigma(r)\Sigma(r')^{-1/2}-I}_F^4 \Biggr\}
\end{split}
\end{equation}

Reasoning as in Section~\ref{sec:intersection-bounds-frobenius}, and using Lemma~\ref{lem:mean-variance-general-polynomials}:
$\E_{C_r} (Q-\E_{C_r}Q)^2 \leq (C-1) \Norm{\Sigma(r')^{-1/2}\Sigma(r)^{1/2}Q\Sigma(r)^{1/2}\Sigma(r')^{-1/2}}_F^2 \leq \Norm{\Sigma(r')^{-1/2}\Sigma(r)^{1/2}}^2_{op} \Norm{Q}_F^2$. Similarly, $\E_{C_{r'}} (Q-\E_{C_r}Q)^2 \leq \Norm{Q}_F^2$.

For the upper bound on $\E_{X'} \paren{Q-\E_{X'}Q}^2$, our proof is similar to that of Lemma~\ref{lem:variance-upper-bound-hypercontractivity} but leverages the argument in the proof of Lemma~\ref{lem:spectral-separated-refutation} to obtain a degree bound independent of $\kappa$ (without relying on the uniform polynomial approximator for the threshold):

From our bounded-variance constraints, we have:
\begin{equation}
\cA \sststile{4}{\Pi,m} \Set{ \E_{X'} (Q-\E_{X'} Q)^2  \leq C \Norm{\Pi Q\Pi}_F^2}\mper \label{eq:variance-polynomials}
\end{equation}

We will now apply Lemma~\ref{lem:contraction-property} in order to bound the RHS above. Towards that, reasoning as in Lemma~\ref{lem:spectral-separated-refutation}, we have:
\[
\cA \sststile{4s(\tau/C)}{} \Biggl\{ \Paren{v^{\top} \Sigma(X')v}^s \leq \frac{1}{\tau^{2s+2}} \Paren{v^{\top} \Sigma(r) v}^s \Biggr\}\mper
\]

Substituting $v \rightarrow \Sigma(r')^{\dagger/2}v$ yields:
\[
\cA \sststile{4s(\tau/C)}{} \Biggl\{ \Paren{v^{\top} \Sigma(r')^{\dagger/2}\Sigma(X')\Sigma(r')^{\dagger/2}v}^s \leq \frac{1}{\tau^{2s+2}} \Paren{v^{\top}\Sigma(r')^{\dagger/2} \Sigma(r) \Sigma(r')^{\dagger/2} v}^s \Biggr\}\mper
\]

Proceeding as in the proof of Lemma~\ref{lem:variance-upper-bound-hypercontractivity}, we can now obtain:
\begin{equation}\label{eq:upper-bound-frobenius-verification}
\cA \sststile{4s(\tau/C)}{} \Set{ \E_{X'} (Q-\E_{X'} Q)^{2s(\tau)}   \leq \frac{1}{\tau^{2s+2}} \Norm{\Sigma(r')^{-1/2}\Sigma(r)\Sigma(r')^{-1/2}-I}_F^2}\mper
\end{equation}

Combining \eqref{eq:lower-bound-frob-verification} and \eqref{eq:upper-bound-frobenius-verification} and the SoS almost triangle inequality (Fact~\ref{fact:sos-almost-triangle}) we obtain:

\begin{equation*}
\begin{split}
\cV \sststile{4s(\tau/C)}{}\Biggl\{ m'(C_{r})^{2s(\tau/C)} m'(C_{r'})^{2s(\tau/C)} & \Norm{\Sigma(r')^{-1/2}\Sigma(r)\Sigma(r')^{-1/2}-I}_F^{4s(\tau/C)} \\
& \leq 2^{3s} \frac{1}{\tau^{2s+2}}\Norm{\Sigma(r')^{-1/2}\Sigma(r)\Sigma(r')^{-1/2}-I}_F^2 \Biggr\}
\end{split}
\end{equation*}
Using the separation condition with the fact that $\Delta \geq \frac{3}{\tau^6}$ yields via an argument similar to that in the proof of Lemma~\ref{lem:spectral-separated-refutation}:

\[
\cV \sststile{4s(\tau/C)}{}  \Set{ -1 \geq 0}\mper
\]

\end{proof}


\section{Outlier-Robust Covariance Estimation in Frobenius Distance} \label{sec:param_recovery}

In this section, we give an outlier-robust algorithm for estimating covariances in relative Frobenius distance (i.e. Frobenius distance after putting one of the distribution in isotropic position). Our algorithm is same as the one employed in~\cite{DBLP:journals/corr/abs-1711-11581} to obtain outlier-robust algorithms for estimating mean and covariance in spectral distance for all certifiably Sub-gaussian distributions. 

Our stronger error bounds hold for distributions with certifiable hypercontractive degree 2 polynomials. This is a strictly stronger assumption (and thus a smaller class of distributions) than certifiable subgausianity considered in ~\cite{DBLP:journals/corr/abs-1711-11581}. As pointed out in ~\cite{DBLP:journals/corr/abs-1711-11581} (see discussion in the last paragraph of page 6 for a simple counter-example), certifiable Sub-gaussianity is provably insufficient to obtain the stronger relative Frobenius errors guarantees. 

Our proof approach is similar to that of~\cite{DBLP:journals/corr/abs-1711-11581}  - the key difference being that we rely on certifiable hypercontractivity (instead of the weaker certifiable Sub-gaussianity) and use the contraction lemma (Lemma~\ref{lem:contraction-property}).  


\begin{theorem}[Robust  Parameter Estimation for Certifiably Hypercontractive and Bounded-Variance Distributions] \label{thm:param-estimation-main}
Fix an $\epsilon > 0$ small enough fixed constant so that $Ct\epsilon^{1-4/t} \ll 1$\footnote{This notation means that we needed $Ct\epsilon^{1-2/t}$ to be at most $c_0$ for some absolute constant $c_0 > 0$}. For every even $t \in \N$, there's an algorithm that takes input $Y$ be an $\epsilon$-corruption of a sample $X$ of size $n \geq n_0 = d^{O(t)}/\epsilon^2$ from a $2t$-certifiably $C$-hypercontractive distribution with certifiably $C$-bounded variance with unknown mean $\mu_*$ and covariance $\Sigma_*$ respectively and in time $n^{O(t)}$ outputs an estimate $\hat{\mu}$ and $\hat{\Sigma}$ satisfying:
\begin{enumerate}
\item $\Norm{\Sigma^{-1/2}(\mu_*-\hat{\mu})}_2 \leq O(Ct)^{1/2} \epsilon^{1-1/t}$, 
\item $(1-\eta) \Sigma_* \preceq \hat{\Sigma} \preceq (1+\eta) \Sigma_*$ for $\eta \leq O(Ck) \epsilon^{1-2/t}$, and, 
\item $\Norm{\Sigma_*^{-1/2} \hat{\Sigma} \Sigma_*^{-1/2}-I}_F \leq (Ct) O(\epsilon^{1-1/t})$.
\end{enumerate}
In particular, by choosing $t = O(\log(1/\epsilon))$ results in the error bounds of $\tilde{O}(\epsilon)$ in all the three inequalities above. 
\end{theorem}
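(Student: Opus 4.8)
\textbf{Proof proposal for Theorem~\ref{thm:param-estimation-main}.}

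The plan is to follow the SoS-based robust estimation framework of~\cite{DBLP:journals/corr/abs-1711-11581}, but replacing their use of certifiable subgaussianity with the stronger certifiable hypercontractivity of degree-$2$ polynomials (plus certifiably bounded variance), which is precisely what buys us the dimension-independent relative Frobenius guarantee in item~(3). First I would set up the standard constraint system: introduce indicator variables $w_1,\dots,w_n$ with $w_i^2=w_i$ and $\sum_i w_i=(1-\epsilon)n$, together with indeterminates $X'=\{x_i'\}$ matching $Y$ on the support of $w$ (i.e.\ $w_i(y_i-x_i')=0$), and define $\hat\mu=\frac1{(1-\epsilon)n}\sum_i w_i x_i'$ and $\hat\Sigma=\frac1{(1-\epsilon)n}\sum_i w_i (x_i'-\hat\mu)(x_i'-\hat\mu)^\top$ as the natural moment polynomials; we additionally impose the certifiable-hypercontractivity axioms $\E_{w}(x'^\top Q x' - \E_w x'^\top Q x')^{2t}\le (Ct)^{2t}\|\hat\Sigma^{1/2}Q\hat\Sigma^{1/2}\|_F^{2t}$ and the bounded-variance axiom at degree $2$. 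Since typical samples from the true distribution are certifiably hypercontractive (Lemma-type concentration arguments as in Lemma~\ref{lem:typical-samples-good}, using $n\ge d^{O(t)}/\epsilon^2$), the intended solution $w=\1(X\cap Y)$ is feasible, so the SDP is satisfiable; the algorithm computes a degree-$O(t)$ pseudo-distribution $\tzeta$ consistent with these constraints and outputs $\hat\mu=\pE_{\tzeta}[\hat\mu]$ and $\hat\Sigma=\pE_{\tzeta}[\hat\Sigma]$.

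The core of the analysis is an SoS identifiability argument. Let $X^*$ denote the true (uncorrupted) sample with empirical mean $\mu_*$ and covariance $\Sigma_*$ (which by concentration are within the claimed errors of the distributional parameters). Both the intended solution $w^*=\1(X^*)$ and the found solution $w$ agree on a $(1-2\epsilon)$-fraction of coordinates — call the overlap $w\wedge w^*$ — and this overlap has $w$-mass $\ge 1-O(\epsilon)$ of each set. The key steps, carried out inside low-degree SoS, are: (i) on the overlap, the empirical first and second moments of $x'$ are close to both $(\mu_*,\Sigma_*)$ and $(\hat\mu,\hat\Sigma)$ up to the mass of the symmetric difference; (ii) to control that symmetric-difference contribution to the covariance in \emph{relative Frobenius norm}, apply certifiable hypercontractivity: for an $\epsilon$-fraction subset, the contribution of $\Iprod{\text{deviation of } Q(x')}{}$ is bounded via Hölder's inequality for pseudo-distributions (Fact~\ref{fact:pseudo-expectation-holder}) by $\epsilon^{1-2/t}\cdot(Ct)\cdot\|\Sigma_*^{1/2}Q\Sigma_*^{1/2}\|_F$; choosing $Q=\Sigma_*^{-1/2}(\hat\Sigma-\Sigma_*)\Sigma_*^{-1/2}$ and using the SoS contraction inequality $\|AB\|_F^2\le\|A\|_{op}^2\|B\|_F^2$ (Lemma~\ref{lem:contraction-property}) to pass between $\|\Sigma_*^{1/2}Q\Sigma_*^{1/2}\|_F$ and $\|Q\|_F=\|\Sigma_*^{-1/2}\hat\Sigma\Sigma_*^{-1/2}-I\|_F$ yields, after cancellation (Lemma~\ref{lem:cancellation-SoS-constant-RHS}), $\|\Sigma_*^{-1/2}\hat\Sigma\Sigma_*^{-1/2}-I\|_F\le (Ct)\,O(\epsilon^{1-1/t})$, which is item~(3). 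Item~(1) follows from the analogous (and easier) argument applied to linear polynomials, using that certifiable hypercontractivity of degree-$2$ polynomials implies certifiable subgaussianity (as noted in Definition~\ref{def:nice_distribution}), giving the $\epsilon^{1-1/t}$ bound on $\|\Sigma_*^{-1/2}(\mu_*-\hat\mu)\|_2$. Item~(2) is a direct consequence of item~(3): $\|\Sigma_*^{-1/2}\hat\Sigma\Sigma_*^{-1/2}-I\|_F\le\eta$ implies $\|\Sigma_*^{-1/2}\hat\Sigma\Sigma_*^{-1/2}-I\|_{op}\le\eta$, hence $(1-\eta)\Sigma_*\preceq\hat\Sigma\preceq(1+\eta)\Sigma_*$ with $\eta=O(Ct)\epsilon^{1-2/t}$ (the mild loss of $t$ vs $k$ here just reflects the exponent bookkeeping).

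The main obstacle I anticipate is the relative-Frobenius bound in item~(3): the naive application of hypercontractivity controls $\E_w(Q(x')-\E_wQ(x'))^2$ in terms of $\|\hat\Sigma^{1/2}Q\hat\Sigma^{1/2}\|_F^2$, but we need the bound phrased in terms of $\Sigma_*$ (the \emph{true} covariance), and we do not a priori know that $\hat\Sigma$ and $\Sigma_*$ are spectrally comparable — that is what we are trying to prove. The resolution, mirroring~\cite{DBLP:journals/corr/abs-1711-11581} but with the contraction lemma, is to run the argument in two passes: first use the weaker certifiable-subgaussianity consequence to get a crude spectral bound $\hat\Sigma\preceq\poly(1/\epsilon)\Sigma_*$ (or even just boundedness), which lets us replace $\hat\Sigma$ by $\Sigma_*$ in the Frobenius expression at the cost of a factor absorbed into the $\epsilon^{1-2/t}$ slack, and then bootstrap to the sharp relative-Frobenius bound via Lemma~\ref{lem:contraction-property}. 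Finally, setting $t=O(\log(1/\epsilon))$ makes $\epsilon^{1-1/t}=\epsilon^{1-o(1)}=\tilde O(\epsilon)$ and $Ct=O(\log(1/\epsilon))$, so all three error bounds become $\tilde O(\epsilon)$, at the cost of the running time $n^{O(t)}=d^{O(\log^2(1/\epsilon))}$ noted in the introduction. The bit-complexity of all SoS proofs is $n^{O(t)}$, so the approximate feasibility of the SDP (Fact~\ref{fact:eff-pseudo-distribution}) and soundness (Fact~\ref{fact:sos-soundness}) suffice to make the rounding rigorous.
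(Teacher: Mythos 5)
Your proposal follows essentially the same approach as the paper: identical constraint system (indicators $w_i$, guessed sample $X'$, certifiable hypercontractivity and bounded-variance axioms), the same decomposition of $\iprod{\Sigma-\Sigma_*,Q}$ into mean-shift and symmetric-difference contributions, the same use of SoS H\"older followed by certifiable hypercontractivity to get the dimension-independent $\|Q\|_F$ bound, the same bootstrapping via the crude spectral bound from~\cite{DBLP:journals/corr/abs-1711-11581} to replace the pseudo-covariance with $\Sigma_*$ inside the contraction Lemma~\ref{lem:contraction-property}, and the same cancellation step. The only cosmetic difference is that you derive item~(2) as a corollary of item~(3) via $\|\cdot\|_{op}\le\|\cdot\|_F$, whereas the paper cites the spectral bound of~\cite{DBLP:journals/corr/abs-1711-11581} directly — both are fine, and the paper's route needs only certifiable subgaussianity for (1) and (2), reserving the heavier hypercontractivity machinery for (3).
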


We consider the following system $\cA\seteq\cA_{Y,\e}$ of quadratic equations in scalar-valued variables $w_1,\ldots,w_n$ and vector-valued variables $x'_1,\ldots,x'_n$,
\begin{equation}
  \cA_{Y,\e}\colon
  \left \{
    \begin{aligned}
      &&
      \textstyle\sum_{i=1}^n w_i
      &= (1-\e) \cdot n\\
      &\forall i\in [n].
      & w_i^2
      & =w_i \\
      &
      &\Pi 
      &=UU^{\top}\\
      &
      &\Pi^2
      &=\Sigma\\
      &\forall i\in [n].
      & w_i \cdot (y_i - x'_i)
      & = 0\\
      &
      &\frac{1}{n} \sum_{i \leq n}  x'_i
      &=\mu\\
      &
      &\frac{1}{n}\sum_{i \leq n} (x'_i -\mu)(x'_i-\mu)^{\top}
      &=\Sigma\\
      &
      &\Paren{\frac{1}{n} \sum_{i \leq n} \left((x'_i-\mu)^{\top}Q(x'_i-\mu)\right)^{2t}} 
      &\leq (Ct)^{2t} \Paren{\frac{1}{n} \sum_{i \leq n} ((x'_i-\mu)^{\top}Q(x'_i-\mu))^2}^{t}\\
      &
      &\Paren{\frac{1}{n} \sum_{i \leq n} \left((x'_i-\mu)^{\top}Q(x'_i-\mu)\right)^{2}} 
      &\leq C \Norm{\Pi Q \Pi}_F^2\mper
    \end{aligned}
  \right \}
\end{equation}

\begin{mdframed}
  \begin{algorithm}[Parameter Estimation Algorithm]
    \label[algorithm]{alg:moment-estimation-program}\mbox{}
    \begin{description}
    \item[Given:]
      $\e$-corrupted sample $Y=\set{y_1,\ldots,y_n}\subseteq \R^d$ of a certifiably hypercontractive distribution $D_0$ over $\R^d$
    \item[Estimate:]
      Mean $\mu_*$ and Covariance $\Sigma_*$ of $D_0$.
    \item[Operation:]\mbox{}
      \begin{enumerate}
      \item 
        Find a level-$O(t)$ pseudo-distribution $\tzeta$ that satisfies $\cA_{Y,\e}$.
      \item
        Output estimates $\hat{\mu} = \pE[\mu]$ and $\hat{\Sigma} = \pE[\Sigma]$.
      \end{enumerate}
    \end{description}    
  \end{algorithm}
\end{mdframed}

\paragraph{Analysis of Algorithm}
Corollaries 4.6 and 4.7 in~\cite{DBLP:journals/corr/abs-1711-11581} show the following low-degree sum-of-squares proofs of certifiability of mean and covariance under spectral distance. 

\begin{equation} \label{eq:covariance-estimation}
\cA_{Y,\e} \sststile{O(t)}{\Sigma, u} \Set{ (1-\eta)u^{\top}\Sigma_* u \leq  \Iprod{u,\Sigma u} \leq (1+\eta) u^{\top}\Sigma_* u} \mcom
\end{equation}
for some $\eta \leq O(Ct) \epsilon^{1-2/t}$, and, 
\begin{equation} \label{eq:mean-estimation}
\cA_{Y,\e} \sststile{O(t)}{\mu, u} \Set{\Iprod{u,\mu-\mu_*} \leq \eta \iprod{u,\Sigma_*u}^{1/2}}\mcom
\end{equation}
for some $\eta = O(\sqrt{Ct} \epsilon^{1-1/t})$. 

We will rely on these to show:
\begin{lemma}[Certifiability in Relative Frobenius Distance]
\label{lem:parameter_prox}
For any $t \in \mathbb{N}$, 
\begin{equation} \label{eq:covariance-frob-estimation}
\cA_{Y,\epsilon} \sststile{4t}{\Sigma} \Biggl\{ \Norm{\Sigma_*^{-1/2}\Sigma \Sigma_*^{-1/2}}_F^{2} \leq \eta \Biggr\}
\end{equation}
where $\eta = \Paren{(Ct)^{2} O(\epsilon^{4-4/t}) + (Ct)^{2} O(\epsilon^{2-4/t})}$. 
\end{lemma}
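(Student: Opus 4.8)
\textbf{Proof plan for Lemma~\ref{lem:parameter_prox}.}
The goal is to upgrade the spectral-distance certifiability bounds \eqref{eq:covariance-estimation} and \eqref{eq:mean-estimation} to a relative Frobenius bound on $\Sigma - \Sigma_*$ (the statement as typeset writes $\Norm{\Sigma_*^{-1/2}\Sigma\Sigma_*^{-1/2}}_F$, but what is surely meant is $\Norm{\Sigma_*^{-1/2}(\Sigma-\Sigma_*)\Sigma_*^{-1/2}}_F$, i.e. $\Norm{\Sigma_*^{-1/2}\Sigma\Sigma_*^{-1/2}-I}_F$). The plan is to work with the matrix-valued indeterminate $Q$ as a ``test direction'' and evaluate the hypercontractivity and bounded-variance axioms of $\cA_{Y,\epsilon}$ at the specific (non-indeterminate, data-independent) choice $Q = \Sigma_*^{-1/2} R \Sigma_*^{-1/2}$ for an arbitrary fixed symmetric matrix $R$; after the SoS manipulation we will take $R = \Sigma_*^{-1/2}(\pE[\Sigma]-\Sigma_*)\Sigma_*^{-1/2}$ (a constant matrix once we pseudo-expect), which will turn the bound into a statement purely about $\pE[\Sigma]$.

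First I would write $Q(x) = (x-\mu)^\top Q (x-\mu)$ and compute, using only the parameter constraints $\frac1n\sum_i (x_i'-\mu)(x_i'-\mu)^\top = \Sigma$, that $\frac1n\sum_i Q(x_i'-\mu) = \Tr(\Sigma Q)$ as an SoS identity of degree $2$. Then the bounded-variance axiom of $\cA_{Y,\epsilon}$ gives
\[
\cA_{Y,\epsilon} \sststile{4}{\Sigma} \Set{\frac1n\sum_{i\leq n}\Paren{Q(x_i'-\mu)-\Tr(\Sigma Q)}^2 \leq C\Norm{\Pi Q\Pi}_F^2},
\]
and since $\cA_{Y,\epsilon}$ forces $\Pi^2 = \Sigma$ we can replace $\Norm{\Pi Q\Pi}_F^2$ by $\Norm{\Sigma^{1/2}Q\Sigma^{1/2}}_F^2$, which by the SoS contraction lemma (Lemma~\ref{lem:contraction-property}) and the spectral bound \eqref{eq:covariance-estimation} (which shows $\Sigma \preceq (1+\eta)\Sigma_*$ inside SoS) is at most $(1+\eta)^2\Norm{\Sigma_*^{1/2}Q\Sigma_*^{1/2}}_F^2 = (1+\eta)^2\Norm{R}_F^2$ for our choice of $Q$. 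The second ingredient is a lower bound on the same variance quantity: here I would use the corruption structure, namely $w_i(y_i - x_i') = 0$ and $\sum_i w_i = (1-\epsilon)n$, together with the hypercontractivity axiom, to show in low-degree SoS that the empirical mean of $Q(\cdot)$ over the $(1-\epsilon)$-fraction of uncorrupted points is close (up to $O((Ct)\epsilon^{1-2/t})\Norm{R}_F$ error, using Hölder as in the proofs of \eqref{eq:covariance-estimation}) to $\Tr(\Sigma_* Q) = \Tr(R\Sigma_*^{-1/2}\Sigma_*\Sigma_*^{-1/2}\cdot \ldots)$; combined with $\frac1n\sum_i Q(x_i'-\mu) = \Tr(\Sigma Q)$ this forces $|\Tr((\Sigma-\Sigma_*)Q)|$ to be controlled by (variance)$^{1/2}\cdot\epsilon^{1/2}$ plus lower-order terms. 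Substituting $Q = \Sigma_*^{-1/2}R\Sigma_*^{-1/2}$ makes $\Tr((\Sigma-\Sigma_*)Q) = \Iprod{\Sigma_*^{-1/2}(\Sigma-\Sigma_*)\Sigma_*^{-1/2}, R}$, and then the final move is to pseudo-expect, set $R = \pE[\Sigma_*^{-1/2}(\Sigma-\Sigma_*)\Sigma_*^{-1/2}]$ so that $\Iprod{\pE[\Sigma_*^{-1/2}(\Sigma-\Sigma_*)\Sigma_*^{-1/2}],R} = \Norm{R}_F^2$, giving $\Norm{R}_F^2 \leq \eta$ after dividing through, with $\eta$ of the claimed order $(Ct)^2 O(\epsilon^{2-4/t})$ (the $\epsilon^{4-4/t}$ term coming from the cross-terms between the $\epsilon^{1-2/t}$ spectral slack and itself).

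The main obstacle I expect is the bookkeeping in the lower-bound step: controlling $\frac1n\sum_i Q(x_i'-\mu)$ versus $\frac1n\sum_{i: \text{good}} Q(x_i'-\mu)$ requires bounding the contribution of the $\epsilon n$ possibly-corrupted coordinates, and to do this inside SoS one must multiply the hypercontractivity inequality (a degree-$2t$ statement in $Q$) by the indicators $w_i$ and use SoS Hölder (Fact~\ref{fact:sos-holder}) to trade the $2t$-th moment control for an $\epsilon^{1-2/t}$ factor on the deviation — exactly the maneuver already carried out in \cite{DBLP:journals/corr/abs-1711-11581} for the spectral bounds \eqref{eq:covariance-estimation}, but now applied to the degree-$2$ polynomial $Q(x)$ rather than a linear form, so the bounded-variance axiom (rather than certifiable subgaussianity) is what keeps $\Norm{\Sigma^{1/2}Q\Sigma^{1/2}}_F$ in play as the natural scale. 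A secondary subtlety is making sure all the intermediate inequalities are genuinely low-degree in the scalar/vector indeterminates $w, x'$ (degree $O(t)$) so that Fact~\ref{fact:sos-soundness} applies when we pseudo-expect; since every step is either a parameter-constraint identity, an axiom of $\cA_{Y,\epsilon}$, the contraction lemma, or SoS Hölder, the total degree stays $O(t)$ and the substitution $Q = \Sigma_*^{-1/2}R\Sigma_*^{-1/2}$ is just a linear change of the $Q$-slot and costs nothing.
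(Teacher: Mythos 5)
Your proposal follows essentially the same strategy as the paper: reduce to $\Sigma_* = I$ by an affine transformation, use the matrix indeterminate $Q$ as a test direction, expand the quadratic-form averages using the corruption indicators $w_i$, trade the removed $\epsilon$-fraction for an $\epsilon^{1-O(1/t)}$ factor via SoS H\"older, and push the $\Norm{\Pi Q\Pi}_F$ factor down to $\Norm{Q}_F$ via the contraction lemma combined with the spectral bound \eqref{eq:covariance-estimation}. You also correctly diagnose the typo in the statement ($\Norm{\Sigma_*^{-1/2}\Sigma\Sigma_*^{-1/2}-I}_F$ is intended). Two remarks. First, your phrase ``a lower bound on the same variance quantity'' is a misstatement---what you go on to describe is a bound on the mean discrepancy $\Tr((\Sigma-\Sigma_*)Q)$ using the variance as an \emph{upper}-bound tool in Cauchy--Schwarz, which is the correct idea; the paper does the same thing more bluntly by writing $\Iprod{\Sigma-I,Q}^{2t}$ as a power of $\frac1n\sum_i\bigl(Q(x_i'-\mu)-Q(x_i-\mu_*)\bigr)$ and splitting off the $(1-w_i)$ pieces directly. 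Second, and more substantively, your final step deviates from the paper's: you pseudo-expect and then pick $R=\pE[\Sigma_*^{-1/2}(\Sigma-\Sigma_*)\Sigma_*^{-1/2}]$ as a \emph{constant} matrix, which proves a statement about the rounded estimate $\pE[\Sigma]$ rather than the SoS derivation $\cA_{Y,\epsilon}\sststile{4t}{\Sigma}\{\cdot\}$ that the lemma asserts. The paper instead substitutes $Q\to\Sigma_*^{-1/2}\Sigma\Sigma_*^{-1/2}-I$ (a polynomial in the \emph{indeterminate} $\Sigma$) into the inequality $\Iprod{\Sigma-I,Q}^{2t}\leq\eta\Norm{Q}_F^{2t}$, which makes both sides powers of $\Norm{\Sigma_*^{-1/2}\Sigma\Sigma_*^{-1/2}-I}_F$, and then cancels with the SoS cancellation lemma (Lemma~\ref{lem:cancellation-sos}) to stay inside the proof system. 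Your route is sound and in fact matches what Theorem~\ref{thm:param-estimation-main} ultimately does with the lemma; it just collapses the lemma and the theorem into one step, so strictly speaking it does not establish the degree-$4t$ SoS certificate that the lemma claims, only its pseudo-expected consequence.
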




We now conclude with proving the parameter proximity lemma: 

\begin{proof}[Proof of Lemma \ref{lem:parameter_prox}]

To show \eqref{eq:covariance-frob-estimation}, we begin
by applying the linear transformation $y \rightarrow \Sigma_*^{-1/2}y$ so as to simplify notation.

In the following, we use that $\sum_{i\leq n} (x_i' - \mu )^{\top} Q (\mu-\mu_*) = \sum_{i\leq n} (\mu-\mu_* )^{\top} Q (x_i'-\mu)  = 0$ and apply the the SoS Cauchy-Schwarz inequality (Fact~\ref{fact:sos-holder}) and guarantee for the mean estimation above (note that we are in the space  where $\Sigma_* = I$ after the affine  transform), to obtain:

\begin{equation}
\label{eqn:bound_mean}
\begin{split}
\cA_{Y,\epsilon} \sststile{4t}{Q, \mu} \Biggl\{ \frac{1}{n} \sum_{i\leq n} \Paren{(\mu-\mu_*)^{\top} Q(\mu-\mu_*)}^{2t}&\leq \Norm{\mu-\mu_*}_2^{4t} \Norm{Q}_F^{2t}\\
&\leq {(Ct)}^{2t} O(\epsilon^{4t-4}) \|Q\|^{2t}_F \Biggr\}\mper
\end{split}
\end{equation}
where the last inequality follows from the mean closeness bound in \eqref{eq:mean-estimation}. 
Using that $\Sigma$ is the covariance of $X'$ while $I$, the covariance of $X$ along with the SoS almost triangle inequality and the bound in \eqref{eq:mean-estimation}, we have:
\begin{equation}
\begin{split}
\cA \sststile{4t}{\mu,w, Q} \Biggl\{  & \Iprod{\Sigma-I,Q}^{2t} \\
&= \Paren{\frac{1}{n} \sum_{i\leq n} \Paren{Q(x'_i-\mu) - Q(x_i-\mu_*)}}^{2t}\\
&\leq 2^{2t} \Paren{\frac{1}{n} \sum_{i\leq n} \Paren{Q(x'_i-\mu_*) - Q(x_i-\mu_*)}}^{2t}  + 2^{2t}  \Paren{Q(\mu-\mu_*)}^{2t}\\
&\leq 2^{2t} \Paren{\frac{1}{n} \sum_{i\leq n} \Paren{Q(x'_i-\mu_*) - Q(x_i-\mu_*)}}^{2t}  + 2^{2t} (Ct)^{2t} O(\epsilon^{4t-4}) \Norm{Q}_F^{2t}\\
&= 2^{4t} \Paren{\frac{1}{n} \sum_{i\leq n} (1-w_i) \Paren{Q(x'_i-\mu_*)}}^{2t} + 2^{4t} \Paren{\frac{1}{n} \sum_{i\leq n} (1-w_i) Q(x_i-\mu_*)}^{2t} \\
& + 2^{2t} (Ct)^{2t} O(\epsilon^{4t-4}) \Norm{Q}_F^{2t} \Biggr\}\mper \\
\end{split}
\label{eq:main-calc-parama-estimation}
\end{equation}

Applying SoS Hölder's inequality to the first term above, using that $\cA_{Y,\e} \sststile{}{} \Set{(1-w_i)^2 = (1-w_i)}$, along with the certifiable hypercontractivity, we obtain
\begin{align*}
\cA_{Y,\epsilon} \sststile{4t}{\mu, w} \Biggl\{ \Paren{\frac{1}{n} \sum_{i\leq n} (1-w_i) \Paren{Q(x'_i-\mu_*)}}^{2t} &\leq \Paren{\frac{1}{n} \sum_{i\leq n} (1-w_i)^{2t}}^{2t-1} \Paren{\frac{1}{n} \sum_{i\leq n} \Paren{Q(x'_i-\mu_*)}^{2t}}\\
&\leq \epsilon^{2t-1} (Ct)^{2t} \Paren{\frac{1}{n} \sum_{i\leq n} \Paren{Q(x'_i-\mu_*)}^2}^{t}\\
&\leq \epsilon^{2t-1} (Ct)^{2t} \Norm{\Sigma_*^{-1/2} \Pi Q \Pi \Sigma_*^{-1/2}}^{2t}_F\\
&\leq \epsilon^{2t-1} (Ct)^{2t} t^t (Ct)^{2t} \epsilon^{2t-4} \Norm{Q}_F^{2t} \Biggr\}\mcom
\end{align*}
where in the third inequality, we invoked bounded variance constraint and in the 4th inequality we invoked Lemma~\ref{lem:contraction-property} along with \eqref{eq:covariance-estimation}. 

Similarly, we can bound $\Paren{\frac{1}{n} \sum_{i\leq n} (1-w_i) (x_i-\mu_*)^{\top} Q(x_i-\mu_*)}^{2t}$ using certifiable hypercontractivity of $X$ (the samples from the true distribution) as follows: 
\begin{align*}
\cA_{Y,\epsilon} \sststile{4t}{\mu, w, Q} \Biggl\{ \Paren{\frac{1}{n} \sum_{i\leq n} (1-w_i) \Paren{Q(x_i-\mu_*)}}^{2t} &\leq \Paren{\frac{1}{n} \sum_{i\leq n} (1-w_i)^{2t}}^{2t-1} \Paren{\frac{1}{n} \sum_{i\leq n} \Paren{Q(x_i-\mu_*)}^{2t}}\\
&\leq \epsilon^{2t-1} (Ct)^{2t} \Paren{\frac{1}{n} \sum_{i\leq n} \Paren{Q(x_i-\mu_*)}^2}^{t}\\
&\leq \epsilon^{2t-1} (Ct)^{2t} \Norm{ Q  }^{2t}_F\Biggr\}\mcom
\end{align*}
Plugging into \eqref{eq:main-calc-parama-estimation} and applying \eqref{eq:covariance-estimation} , we get 
\begin{align*}
\cA_{Y,\epsilon} \sststile{4t}{\Sigma, Q} \Biggl\{ \Iprod{\Sigma-I,Q}^{2t} &\leq  \Paren{(Ct)^{2t} O(\epsilon^{4t-4}) + (Ct)^{t} O(\epsilon^{2t-1})} \Norm{Q}_F^{2t} \Biggr\}\mcom
\end{align*}
Substituting $Q = \Sigma^{-1/2}\Sigma \Sigma^{-1/2}-I$ and using \eqref{eq:covariance-estimation} again, 
\begin{align*}
\cA_{Y,\epsilon} \sststile{4t}{\Sigma} \Biggl\{ \Norm{\Sigma^{-1/2}\Sigma \Sigma^{-1/2}-I}_F^{4t} &\leq  \Paren{(Ct)^{2t} O(\epsilon^{4t-4}) + (Ct)^{t} O(\epsilon^{2t-1})} \Norm{\Sigma^{-1/2}\Sigma \Sigma^{-1/2}-I}_F^{2t} \Biggr\}\mcom
\end{align*}

Applying Lemma~\ref{lem:cancellation-sos} with $a = \Norm{\Sigma^{-1/2}\Sigma \Sigma^{-1/2}-I}_F^{2t}$ yields the lemma.

\end{proof}

It's easy to finish the proof of Theorem~\ref{thm:param-estimation-main} from here.

\begin{proof}[Proof of Theorem~\ref{thm:param-estimation-main}.]
Then, by an argument similar to proof of Theorem 1.2 in~\cite{DBLP:journals/corr/abs-1711-11581}, $\pE[\Sigma]$ satisfies the third guarantee in Theorem~\ref{thm:param-estimation-main}. Let $\tzeta$  be the degree-$O(\ell)$ pseudo-distribution output by our algorithm above. Then, our estimator for the covariance is simply $ \hat{\Sigma} = \expecf{\tzeta}{\Sigma}$. From Lemma \ref{lem:parameter_prox} it follows that 
\[
\cA_{Y,\epsilon} \sststile{4t}{\Sigma, Q} \Biggl\{ \Iprod{\Sigma-I,Q}^{2t} \leq  \eta \Norm{Q}_F^{2t} \Biggr\}
\]
where $\eta =  ((Ct) \epsilon^{1-4/t})$.
Therefore, for any $Q$, we have, $\expecf{\tzeta}{\Iprod{\Sigma-I,Q}^{2t}} \leq \eta \Norm{Q}^{2t}_F$. Then, using Cauchy-Schwarz for pseudo-distributions we have 
\begin{equation}
\label{eqn:rounding_final}
\begin{split}
\Paren{\Iprod{\Sigma_*^{-1/2} \expecf{\tzeta}{\Sigma} \Sigma_*^{-1/2}-I,Q}}^{2} & = \Paren{\expecf{\tzeta}{\Iprod{\Sigma_*^{-1/2} \Sigma \Sigma_*^{-1/2}-I,Q}}}^2 \\
&\leq \expecf{\tzeta}{\Iprod{\Sigma_*^{-1/2} \Sigma \Sigma_*^{-1/2}-I,Q}^{2}}  \\
& \leq \eta \Norm{Q}^{2}_F
\end{split}
\end{equation}
Setting $Q = \Sigma_*^{-1/2} \Sigma \Sigma_*^{-1/2}-I$, yields the claim. 
\end{proof}


\section{Reasonable Distributions} \label{sec:reasonable-distributions}
In this section, we recall known results that imply that Gaussian distributions and affine transforms of uniform distribution on the unit sphere are reasonable. 
\paragraph{Certifiable Hypercontractivity of Degree 2 Polynomials}

\begin{definition}[Certifiable Hypercontractivity]
Let $\cD$ be a distribution on $\R^d$. For an even $h$, $\cD$ is said to have $h$-certifiably $C$-hypercontractive degree $2$ polynomials if for $P$ - a $d \times d$ matrix-valued indeterminate, 
\[
\E_{x \sim \cD} \iprod{P,x^{\otimes 2}}^h \leq (Ch)^{h} (\E x^{\top}Px^2)^{h/2}\mper
\]
\end{definition}

Gaussian distributions satisfy $h$-certifiable $1$-hypercontractive. 

We will need the following result that follows from~\cite{MR3376479-Kauers14}:
\begin{fact}[Hypercontractivity of Degree-$2$ Polynomials of Gaussians]
\label{fact:certifiable_hypercontractivity_gaussians}
The standard normal distribution, $\cN(0,I)$, is $h$-certifiable $1$-hypercontractive. 
\end{fact}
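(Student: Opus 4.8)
The plan is to reduce, by an elementary split of $x^\top Q x$ into a constant part and a mean-zero part, to the Gaussian $(2,h)$-hypercontractive inequality for mean-zero quadratic forms, and then to invoke the SoS hypercontractivity machinery of Kauers, O'Donnell, Tan and Zhou~\cite{MR3376479-Kauers14} for that inequality. Concretely: since $x^\top Q x = x^\top\big(\tfrac{Q+Q^\top}{2}\big)x$ as a polynomial identity in $x$, the substitution rule lets us assume $Q$ symmetric. First I would write $x^\top Q x = \tr(Q) + p_Q(x)$ with $p_Q(x) := x^\top Q x - \tr(Q)$; then, as polynomial identities in the entries of $Q$, $\E_{x\sim\cN(0,I)}p_Q = 0$ and $\E_{x\sim\cN(0,I)}p_Q^2 = 2\Norm{Q}_F^2$, so $V := \E_{x\sim\cN(0,I)}(x^\top Q x)^2 = (\tr Q)^2 + 2\Norm{Q}_F^2$. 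Since $h$ is even, the SoS almost-triangle inequality (\cref{fact:sos-almost-triangle}) gives, pointwise in $x$ and $Q$, $(x^\top Qx)^h = (\tr(Q)+p_Q(x))^h \le 2^{h-1}\big((\tr Q)^h + p_Q(x)^h\big)$; and taking the $x$-expectation preserves being SoS in $Q$ (the $x$-expectation of any square $(\sum_\alpha c_\alpha(Q)x^\alpha)^2$ equals $c(Q)^\top M\, c(Q)$ for the PSD Gaussian moment matrix $M$, hence is again a sum of squares in $Q$), so $\E(x^\top Qx)^h \le 2^{h-1}\big((\tr Q)^h + \E p_Q^h\big)$ has a degree-$h$ SoS proof in $Q$.

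Next I would bound the two terms. For the constant part, $(\tr Q)^h = \big((\tr Q)^{h/2}\big)^2$ and $V - (\tr Q)^2 = 2\Norm{Q}_F^2$ is manifestly a sum of squares, so $(\tr Q)^2 \le V$ and hence, squaring up (an SoS step, both sides being SoS-nonnegative), $(\tr Q)^h \le V^{h/2}$. For the fluctuation part I need the Gaussian $(2,h)$-hypercontractivity of mean-zero quadratic forms \emph{as an SoS statement}: a degree-$h$ SoS proof in the entries of $Q$ of $\E_{x\sim\cN(0,I)} p_Q(x)^h \le (C'h)^h\big(\E_{x\sim\cN(0,I)} p_Q(x)^2\big)^{h/2}$; this is exactly the type of hypercontractive inequality shown to admit low-degree SoS proofs in~\cite{MR3376479-Kauers14}. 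Invoking it for Gaussian space and degree $2$, and using $\E p_Q^2 = 2\Norm{Q}_F^2 \le V$ (so $(\E p_Q^2)^{h/2}\le V^{h/2}$ in SoS), gives $\E p_Q^h \le (C'h)^h V^{h/2}$, and combining, $\E(x^\top Qx)^h \le 2^{h-1}(1+(C'h)^h)\,V^{h/2} \le (Ch)^h V^{h/2}$ for an absolute constant $C$ — which already suffices for all uses of the Fact in this paper; to obtain the sharp constant ($C=1$) one applies the SoS hypercontractive estimate of~\cite{MR3376479-Kauers14} directly to $x^\top Qx$ rather than routing through the lossy triangle-inequality split.

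The one substantive input is therefore the SoS hypercontractivity of the quadratic form, and it is worth recording where a self-contained argument stalls. Wick's formula expands $\E_{x\sim\cN(0,I)} p_Q^h$ as a nonnegative combination $\sum_{\mu} d_\mu\prod_j\tr(Q^{\mu_j})$ over partitions $\mu\vdash h$ with all parts $\ge 2$, where $d_\mu$ counts perfect matchings of $2h$ points with the corresponding cycle type and $\sum_\mu d_\mu \le (2h-1)!!$. Each trace power sum obeys SoS bounds: $0 \le \tr(Q^{2m}) = \Norm{Q^m}_F^2 \le \Norm{Q}_F^{2m}$ by iterating the one-line SoS inequality $\Norm{AB}_F^2 \le \Norm{A}_F^2\Norm{B}_F^2$, and $\tr(Q^{2m+1})^2 \le \Norm{Q}_F^{2(2m+1)}$ by SoS Cauchy–Schwarz. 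For a partition all of whose parts are even, or whose odd parts coincide in pairs, these immediately yield $\prod_j\tr(Q^{\mu_j}) \le \Norm{Q}_F^{h}$ in SoS. The obstruction — and the reason the elementary route fails and one needs~\cite{MR3376479-Kauers14} — is a partition with two \emph{distinct} odd parts $\ge 3$ (smallest: $\mu=(3,5)$ at $h=8$): the monomial $\tr(Q^3)\tr(Q^5)$ is sign-indefinite, and although $\big(\tr(Q^3)\tr(Q^5)\big)^2 \le \Norm{Q}_F^{16}$ has an SoS proof, passing from this to an SoS proof of $\Norm{Q}_F^{8} - \tr(Q^3)\tr(Q^5)\ge 0$ would amount to dividing by the SoS polynomial $\Norm{Q}_F^{8}+\tr(Q^3)\tr(Q^5)$, which is not a legal SoS operation. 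Circumventing this sign/division obstruction is precisely the contribution of~\cite{MR3376479-Kauers14}, which supplies the missing SoS certificate and completes the proof.
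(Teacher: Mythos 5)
The paper states this Fact with no proof of its own, citing \cite{MR3376479-Kauers14} (equivalently \cite{DBLP:conf/soda/KauersOTZ14}) as its source, and your argument ultimately bottoms out at exactly that same citation, so the two approaches coincide in substance. Your elementary preprocessing --- centering $x^\top Q x$ as $\tr Q + p_Q(x)$, the SoS almost-triangle inequality, and the careful justification that $\E_x$ of an SoS expression in $(x,Q)$ is again SoS in $Q$ via the PSD Gaussian moment matrix --- is correct but, as you yourself note, unnecessary and constant-lossy (it gives $C\approx 2$, not the stated $C=1$), since the cited SoS hypercontractive estimate applies to the uncentered quadratic form directly; the closing Wick-expansion obstruction analysis ($\tr(Q^3)\tr(Q^5)$ at $h=8$) is instructive commentary on why the citation is genuinely needed, but it is not part of the proof.
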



Since this is a fact about degree 2 polynomials, as stated, non-standard Gaussian distributions do not have certifiably hypercontractive degree 2 polynomials.

\begin{lemma}[Certifiable Hypercontractivity Under Sampling]
\label{lem:cert_hyper_sampling}
Let $\cD$ be a $1$-sub-gaussian, $h$-certifiably $c$-hypercontractive distribution over $\mathbb{R}^d$. Let $\cS$ be a set of $n = \Omega( (hd)^{8h})$ i.i.d. samples from $\cD$. Then, with probability at least $1-1/\poly(n)$, the uniform distribution on $\cS$ is $h$-certifiably $(2c)$-hypercontractive.
\end{lemma}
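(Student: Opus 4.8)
The plan is to transfer the degree-$h$ sum-of-squares identity witnessing hypercontractivity of $\cD$ to the empirical distribution on $\cS$, using the slack in the constant (from $c$ to $2c$, i.e.\ a factor $2^{h}$) as a budget to absorb sampling error. Since $x^{\top}Qx$ depends only on the symmetric part of $Q$, we may restrict to symmetric $Q$ and set $q \defeq \operatorname{vec}(Q)$, $v \defeq \operatorname{vec}(xx^{\top})$, so that $x^{\top}Qx = \iprod{v,q}$, $\E_{\cD}(x^{\top}Qx)^{2} = q^{\top}M_{\cD}q$ with $M_{\cD} \defeq \E_{\cD}vv^{\top}$, and $\E_{\cD}(x^{\top}Qx)^{h} = \iprod{T_{\cD},q^{\otimes h}}$ with $T_{\cD} \defeq \E_{\cD}v^{\otimes h}$; write $M_{\cS},T_{\cS}$ for the empirical analogues. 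By hypothesis, $\sigma_{\cD}(Q) \defeq (ch)^{h}(q^{\top}M_{\cD}q)^{h/2} - \iprod{T_{\cD},q^{\otimes h}}$ is a sum of squares of degree $\le h$ in $q$, and our goal is the analogous statement with $M_{\cS},T_{\cS}$ and constant $2c$.

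The probabilistic input is standard concentration. Since $\cD$ is $1$-sub-gaussian, all entries of $M_{\cD}$ and $T_{\cD}$ are moments of $\cD$ of order $\le 2h$, bounded by $(O(h))^{O(h)}$, so by matrix/tensor Bernstein inequalities, $n \ge (hd)^{\Omega(h)}$ samples suffice for, with probability $1-1/\poly(n)$: (i) a two-sided relative spectral approximation $(1-\e)M_{\cD} \preceq M_{\cS} \preceq (1+\e)M_{\cD}$ --- here it is crucial that $v$ always lies in $\operatorname{range}(M_{\cD})$, so $M_{\cS}$ has the same range and the relative bound is meaningful even though $M_{\cD}$ is rank-deficient on symmetric $Q$ in general; and (ii) $\norm{T_{\cS}-T_{\cD}}_{F}$ is at most $\e$ times $\lambda_{\min}^{+}(M_{\cD})^{h/2}$, where $\lambda_{\min}^{+}$ is the smallest nonzero eigenvalue. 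This is where the $(hd)^{8h}$ sample bound is spent. Now write
\[
\sigma_{\cS}(Q) = \sigma_{\cD}(Q) + \bigl((2ch)^{h}-(ch)^{h}\bigr)(q^{\top}M_{\cD}q)^{h/2} + (2ch)^{h}\bigl((q^{\top}M_{\cS}q)^{h/2}-(q^{\top}M_{\cD}q)^{h/2}\bigr) - \iprod{T_{\cS}-T_{\cD},q^{\otimes h}}.
\]
The first term is a sum of squares by hypothesis; the second is a positive constant times a power of a positive semidefinite quadratic form, hence a sum of squares --- this is the budget. For the third term, (i) gives the degree-$2$ sum-of-squares inequality $q^{\top}M_{\cS}q \ge (1-\e)q^{\top}M_{\cD}q$, and raising it to the power $h/2$ --- via $a^{h/2}-b^{h/2}=\sum_{k}a^{k}(a-b)b^{h/2-1-k}$, the sum-of-squares multiplication rule, and the fact that $q^{\top}M_{\cS}q,q^{\top}M_{\cD}q$ are sums of squares --- shows this term is bounded below, in sum-of-squares, by $-O(\e h)(2ch)^{h}(q^{\top}M_{\cD}q)^{h/2}$, absorbed by the budget once $\e$ is a small enough multiple of $1/h$.

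I expect the fourth term $E(Q)\defeq\iprod{T_{\cS}-T_{\cD},q^{\otimes h}}$ to be the main obstacle. Because $T_{\cS},T_{\cD}$ are supported on $(\operatorname{range}M_{\cD})^{\otimes h}$, $E(Q)$ depends on $q$ only through $Pq$, $P$ the projector onto $\operatorname{range}M_{\cD}$; the sum-of-squares Cauchy--Schwarz (Lagrange) identity then gives $E(Q)^{2}\preceq \norm{T_{\cS}-T_{\cD}}_{F}^{2}(q^{\top}Pq)^{h}$, and combining with the degree-$2$ sum-of-squares inequality $q^{\top}Pq \le \lambda_{\min}^{+}(M_{\cD})^{-1}q^{\top}M_{\cD}q$ (raised to the power $h/2$) and (ii) yields $E(Q)^{2}\preceq \e^{2}\bigl((q^{\top}M_{\cD}q)^{h/2}\bigr)^{2}$. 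The delicate point is upgrading this two-sided bound on $E^{2}$ to the one-sided bound $\pm E(Q)\preceq C_{h}\e\,(q^{\top}M_{\cD}q)^{h/2}$ in sum-of-squares, which does not follow formally from $g^{2}-E^{2}\succeq 0$. The fix I have in mind is to re-express $E$ as a genuine degree-$h$ polynomial in the new indeterminates $w\defeq M_{\cD}^{1/2}q$, so that the budget polynomial is literally $\norm{w}^{h}$, and then invoke certifiable hypercontractivity of $\cD$ a second time --- applied separately to $T_{\cD}$ and $T_{\cS}$, each of whose associated polynomial is sum-of-squares-dominated by a $\norm{w}^{h}$-type bound --- so that $E$, a difference of two such dominated polynomials with nearly identical coefficients, is sum-of-squares-sandwiched by $\pm C_{h}\e\,\norm{w}^{h}$. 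Once $E(Q)$ is absorbed, all three error contributions are dominated by the budget $\bigl((2ch)^{h}-(ch)^{h}\bigr)(q^{\top}M_{\cD}q)^{h/2}$ for $\e$ small enough as a function of $h$ (which $n=(hd)^{8h}$ accommodates), so $\sigma_{\cS}(Q)$ is a sum of squares of degree $\le h$ --- exactly $h$-certifiable $(2c)$-hypercontractivity of the uniform distribution on $\cS$. Apart from this obstacle, the remaining work --- the concentration estimates and the constant bookkeeping --- is routine.
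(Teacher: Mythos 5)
Your decomposition of $\sigma_{\cS}$ and the budget idea are in the right spirit, but the step you yourself flag --- upgrading $E(Q)^2\le\e^2(q^\top M_{\cD}q)^h$ to a one-sided bound $\pm E(Q)\preceq C_h\e\,(q^\top M_{\cD}q)^{h/2}$ --- is a genuine gap, and your proposed fix (invoking certifiable hypercontractivity of $\cD$ ``a second time'' and hoping a difference of two dominated polynomials with nearly identical coefficients inherits the sandwich) does not go through: in SoS, $A\preceq B$ and $A'\preceq B$ give no control whatsoever on $A-A'$, so nothing about the empirical polynomial being ``close'' to the population one can be extracted this way without a direct quantitative handle on the coefficient difference. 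The Cauchy--Schwarz route is also not the efficient one: you do not need to pass through $E^2$ at all.

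The paper sidesteps this cleanly by observing that $E(Q)=\iprod{T_{\cS}-T_{\cD},q^{\otimes h}}$ is itself a \emph{quadratic form} in the degree-$(h/2)$ monomial vector $q^{\otimes h/2}$: reshape $N\defeq T_{\cS}-T_{\cD}$ as a symmetric $d^h\times d^h$ matrix so that $E(Q)=(q^{\otimes h/2})^{\top}N(q^{\otimes h/2})$, then Fact~\ref{fact:operator_norm} (the SoS operator-norm inequality, together with the substitution rule) directly gives the degree-$2h$ SoS proof $\pm E(Q)\le\norm{N}_2\Norm{q^{\otimes h/2}}_2^2=\norm{N}_2\norm{Q}_F^h$; it remains only to show $\norm{N}_2\le (ch)^h$ with high probability, which the paper does by a crude entrywise moment bound and union bound (matrix Bernstein, as you suggest, would also work). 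Notice this bound is against $\norm{Q}_F^h$, not $(q^\top M_{\cD}q)^{h/2}$ --- the paper implicitly works with the normalized form $\E_{\cD}(x^\top Qx)^h\le(ch)^h\norm{Q}_F^h$, which avoids the range/pseudoinverse bookkeeping you introduce via $\lambda_{\min}^+(M_{\cD})$ and is equivalent up to constants once $M_{\cD}\preceq O(1)\cdot I$ on symmetric traceless $Q$. With that one change --- operator-norm-on-reshaping rather than Frobenius-norm-plus-Cauchy--Schwarz --- your decomposition closes; the relative spectral control of $M_{\cS}$ versus $M_{\cD}$ becomes unnecessary, since the RHS is just $\norm{Q}_F^h$ throughout and the budget $((2ch)^h-(ch)^h)\norm{Q}_F^h$ absorbs the single error term.
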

\begin{proof}
Since $\cD$ is $h$-certifiably $c$-hypercontractive, 
\begin{equation*}
\sststile{2h}{P} \Set{ \E_{x \sim \cD} \left[  \left\langle P, x^{\otimes 2} \right\rangle^{h} \right] \leq (ch)^h \|P\|^h_F }
\end{equation*}
Since for any matrices $M$ and $N$, $\langle M, N \rangle^{h} = \langle M^{\otimes h} , N^{\otimes h} \rangle$ using the substitution rule, 
\begin{equation}
\label{eqn:original_hypercontractive}
\sststile{2h}{P} \Set{   \left\langle P^{\otimes h} , \E_{x \sim \cD} \left[ x^{\otimes 2h} \right]  \right\rangle \leq (ch)^h \|P\|^h_F }
\end{equation}
Let $\cD'$ be the uniform distribution over samples from $\cD$. Then,  
\begin{equation*}
\E_{x \sim \cD'} \left[  \left\langle P, x^{\otimes 2} \right\rangle^{h} \right]  = \left\langle P^{\otimes h} , \E_{x \sim \cD'} \left[ x^{\otimes 2h} \right]  \right\rangle 
\end{equation*}
Let $M =  \E_{x \sim \cD'} \left[ x^{\otimes 2h} \right]    - \E_{x \sim \cD} \left[ x^{\otimes 2h} \right] $.
Therefore, assuming that $\|M\|_{2} \leq (ch)^h$, using Fact \ref{fact:operator_norm}  with the substitution rule, we can conclude
\begin{equation}
\label{eqn:bounding_operator_norm}
\sststile{2h}{P} \Set{ \left| \left\langle P^{\otimes h}, M \right\rangle\right| \leq (ch)^h \|P\|^{h}_F }
\end{equation}
Observe, we can then rewrite \eqref{eqn:original_hypercontractive} as follows : 
\begin{equation*}
\sststile{2h}{P} \Set{   \left\langle P^{\otimes h} , \E_{x \sim \cD'} \left[ x^{\otimes 2h} \right] -M \right\rangle \leq (ch)^h \|P\|^h_F }
\end{equation*}
Rearranging and using \ref{eqn:bounding_operator_norm}, we can conclude 
\begin{equation*}
\sststile{2h}{P} \Set{   \left\langle P^{\otimes h} , \E_{x \sim \cD'} \left[ x^{\otimes 2h} \right] \right\rangle \leq 2(ch)^h \|P\|^h_F }
\end{equation*}
Therefore, it remains to show $\|M\|_{2} \leq (ch)^h$. Let $x^{(1)},x^{(2)}, \ldots x^{(n)}$ be $n$ iid samples from $\cD$. Then, observe
\begin{equation*}
M_{i_1,\ldots, i_{2h}} = \left[\E_{x\sim\cD'} x^{\otimes 2h} \right]_{i_1, \ldots i_{2h} }-  \left[\E_{x\sim\cD} x^{\otimes 2h} \right]_{i_1, \ldots i_{2h} } = \frac{1}{n} \sum_{\ell\in[n]} \Paren{x^{(\ell)}_{i_1}x^{(\ell)}_{i_2}\ldots x^{(\ell)}_{i_{2h}}  -  \E_{x\sim \cD}  \left[ x_{i_1} x_{i_2}\ldots x_{i_{2h}} \right]}\mper
\end{equation*}

Let $Z_{\ell} = \Paren{x^{(\ell)}_{i_1}x^{(\ell)}_{i_2}\ldots x^{(\ell)}_{i_{2h}}}$. Then,  $M_{i_1,\ldots, i_{2h}}$ is an average of independent random variables $\bar{Z}_{\ell} = Z_{\ell} -\E[Z_{\ell}]$ for $\ell \in [n]$. We will estimate moments of $\sum_{\ell \leq n} \bar{Z}_\ell$ in order to order to obtain upper bounds on the deviation probabilities. 

Towards that we observe the following:
$\E \left[\Paren{\frac{1}{n} \sum_{\ell \in [n]} \bar{Z}_{\ell}}^{2t}\right] = \frac{1}{n^{2t}} \sum_{r_1, r_2,\ldots, r_{2t}} \E\left[\prod_{j \in [2t]} \bar{Z}_{r_j}\right]$. If $\E[\prod_{j \in [2t]} \bar{Z}_{r_j}] \neq 0$, then, each $\bar{Z}_{r_j}$ must appear even number of times in the product. Thus, the number of distinct $\bar{Z}_{r_j}$ in the product are at most $t$. Thus, the number of non-zero terms in the above sum is at most $n^{t} (2t)^{2t}$.
Next, for any non-zero term in the above sum, using the AM-GM inequality,
\begin{equation}
\label{eqn:prod_bound_am_gm}
\E\left[\prod_{i \in [2t]} \bar{Z}_{r_j}\right] \leq \frac{1}{(2t)^{2t}} \E\left[ \left( \sum_{i \in [2t]} \bar{Z}_{r_j} \right)^{2t} \right] \leq \frac{1}{(2t)} \sum_{i \in [2t]} \E[\bar{Z}_{r_j}^{2t}] 
\end{equation}
By Jensen's inequality, $(\E[Z_{r_i}])^{2t} \leq \E[Z_{\ell}^{2t}]$ and thus  $\E\left[\bar{Z}_{r_j} ^{2t}\right]  \leq 2^{2t} (\E[Z_{r_j}^{2t}] +  (\E[ Z_{r_j})^{2t}]) \leq 2^{2t+1} \E[Z_{r_j}^{2t}]$. Then,
\begin{equation}
\label{eqn:term_wise_bound}
\begin{split}
\E\left[Z_{r_j}^{2t}\right] = \E\left[\Paren{x^{(r_j)}_{i_1}x^{(r_j)}_{i_2}\ldots x^{(r_j)}_{i_{2h}}}^{2t}\right] & \leq \E\left[ \Paren{\frac{1}{2h} \sum_{k \in[2h]} \Paren{x^{(r_i)}_{i_k}}^{2h}}^{2t}\right] \\
& \leq \frac{1}{2h} \sum_{k\in [2h]}  \E\left[\Paren{x^{(r_i)}_{i_k}}^{4ht}\right] \\
& \leq (4ht)^{2ht}
\end{split}
\end{equation}
 where the first inequality uses the AM-GM inequality, the second uses Jensen's inequality and the final inequality uses the $1$-subgaussianity of $x^{(r_j)}_{i_j}$. 
Combining \eqref{eqn:prod_bound_am_gm} and  \eqref{eqn:term_wise_bound} 
\begin{equation*}
\E \left[\Paren{\frac{1}{n} \sum_{\ell \in [n]} \bar{Z}_{\ell}}^{2t}\right] \leq \frac{1}{2t n^{2t} } \cdot n^t (2t)^{2t} \cdot (4ht)^{2ht} \leq n^{-t} (2t)^{2t-1} (4ht)^{2ht}
\end{equation*}
Using Chebyshev's inequality, 
\begin{equation*}
\Pr\left[ \Big| \frac{1}{n} \sum_{\ell \in [n]} \bar{Z}_{\ell} \Big| > \eta \right] \leq \frac{ \E \left[\Paren{\frac{1}{n} \sum_{\ell \in [n]} \bar{Z}_{\ell}}^{2t} \right]  }{\eta^{2t}} \leq \frac{(2t)^{2t-1} (4ht)^{2ht}}{\eta^{2t} n^{t} }
\end{equation*}
Setting $t > 2h \log d$ and $\eta = (ch/d^2)^{h}$ yields that whenever $n \geq n_0 = \Omega\left( \frac{d^{4h}}{c^{2h}} h^{9h} \log^{2h+2}(d) \right)$,  $|M_{i_1,i_2,\ldots i_{2h}}| \leq \eta$ with probability at least $1-1/d^{4h}$.  By a union bound over the $d^{2h}$ entries of $M$, we have that all entries of $M$ are at most $\eta$ with probability at least $1 - d^{-2h}$. We can then easily bound the operator norm of $M$ by $d^{2h} \cdot (ch/d^2)^h = (ch)^h$, which completes the proof. 


\end{proof}


\paragraph{Certifiable Anti-Concentration}

\begin{lemma}[Certifiable Anti-Concentration of Gaussians, Theorem 5.5 \cite{bakshi2020list}]
\label{lem:cert_anti_conc_gaussian}
Given $0<\delta \leq 1/2$, there exists $s = O\left(\frac{\log^{5}(1/\delta)}{\delta^{2}} \right)$ such that the Gaussian distribution and the uniform distribution on the unit sphere is $s$-certifiably $(C, \delta)$-anti-concentrated. 
\end{lemma}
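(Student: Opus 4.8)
This statement is Theorem~5.5 of \cite{bakshi2020list}, building on \cite{DBLP:journals/corr/abs-1905-05679,DBLP:journals/corr/abs-1905-04660}; the plan below is how one would prove it from scratch. The first move is to reduce both polynomial inequalities of Definition~\ref{def:certifiable-anti-concentration} to a single univariate construction. Since $q_{\delta,\Sigma}$ and the two inequalities are homogeneous in $v$ once powers of the sum-of-squares form $v^{\top}\Sigma v$ are cleared, it suffices---via the substitution rule \eqref{eq:sos-substitution} applied with $t\mapsto \iprod{x,v}/\sqrt{v^{\top}\Sigma v}$ together with the multiplication rule \eqref{eq:sos-addition-multiplication-rule}---to build an even univariate polynomial $p=p_\delta$ of degree $2s$ with $s=O(\log^5(1/\delta)/\delta^2)$ such that \textbf{(a)} $t^{2s}+\delta^{2s}p(t)^2-\delta^{2s}\ge 0$ for all $t\in\R$, hence a degree-$2s$ sum of squares by Fact~\ref{fact:univariate}, which after homogenization gives the first constraint; and \textbf{(b)} $\E_{t}\,p(t)^2\le C\delta$ when $t$ is either a standard Gaussian or the one-dimensional marginal of a uniform point on $\mathbb S^{d-1}$ rescaled to unit variance, which gives the second constraint---because for any mean-$0$ distribution $D$ with covariance $\Sigma$, $\E_{x\sim D}$ of a polynomial in $\iprod{x,v}$ depends only on the law of $\iprod{x,v}/\sqrt{v^{\top}\Sigma v}$, which is exactly $\cN(0,1)$ for the Gaussian and the density $\propto(1-t^2)^{(d-3)/2}$ (rescaled), a $(1\pm o(1))$-approximation of $\cN(0,1)$, for the sphere.

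The core of the plan is to construct $p$ by a chained, damped Chebyshev approximation, as in \cite{DBLP:journals/corr/abs-1905-05679}. Think of $p$ as a polynomial proxy for the indicator of the anti-concentration band $\{|t|\le\delta\}$ in the normalized variable. Starting from a damped Chebyshev approximant of $\sign$, taking a difference of two shifts, squaring, and renormalizing, one gets $p$ with $0\le p(t)\le 1$ on a controlled interval $[-B,B]$, $p(t)\ge 1-(t/\delta)^{2s}$ for $|t|\le\delta$, and $p(t)=O(\delta)$ uniformly for $\delta\le|t|\le B$, where $B=\widetilde O(1/\delta)$ is large enough that the Gaussian (resp.\ Beta) mass beyond $B$, weighted by the polynomially bounded $p(t)^2$ there, is still $O(\delta)$. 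Granting such $p$: for~\textbf{(a)}, $t^{2s}$ alone dominates $\delta^{2s}$ once $|t|\ge\delta$, and for $|t|\le\delta$ the bound $p(t)^2\ge 1-(t/\delta)^{2s}$ finishes it; for~\textbf{(b)}, the region $|t|\le\delta$ contributes $O(\delta)$ (interval mass $O(\delta)$, $p\le1$), the region $\delta\le|t|\le B$ contributes $O(\delta)$ (uniform bound $p=O(\delta)$), and $|t|>B$ contributes $O(\delta)$ by the choice of $B$. The degree inflates to $O(\log^5(1/\delta)/\delta^2)$ precisely because pushing $\E p^2$ down to $O(\delta)$, rather than the $\Theta(\sqrt\delta)$ a one-shot approximant of the band indicator gives, forces an additional round of boosting/composition that simultaneously sharpens the transition at scale $\delta$, widens the controlled interval, and tightens the polylogarithmic slack. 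For the sphere, no new polynomial is needed: one only uses that its marginal is anti-concentrated at scale $\delta$ with probability $O(\delta)$ and has sub-Gaussian tails, both immediate from the explicit density.

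The hard part is exactly the quantitative approximation in the previous paragraph: producing one polynomial $p$ of degree $\widetilde O(1/\delta^2)$ that simultaneously (i) is uniformly $O(\delta)$-close to the band indicator on the controlled interval, (ii) has $L^2(\cN(0,1))$-mass $O(\delta)$ and not merely $O(\sqrt\delta)$, and (iii) satisfies the pointwise lower bound near the origin that keeps inequality~\textbf{(a)} an honest polynomial inequality after homogenization in $v$. A naive Chebyshev bump meets (i) and (iii) at degree $\widetilde O(1/\delta)$ but only yields $L^2$-mass $\Theta(\sqrt\delta)$, and closing that gap while keeping the degree near $1/\delta^2$ is the technical payload of \cite{bakshi2020list}. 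Everything downstream---the sum-of-squares lift through Fact~\ref{fact:univariate} and Fact~\ref{fact:univariate-interval}, the substitution and multiplication rules, and the transfer from $\cN(0,1)$ to the uniform distribution on $\mathbb S^{d-1}$---is routine bookkeeping.
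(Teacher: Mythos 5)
This lemma is a verbatim import of Theorem~5.5 from \cite{bakshi2020list} and is not proved in the present paper (it is invoked as a black box in Section~\ref{sec:reasonable-distributions} to show Gaussians and affine images of the sphere are reasonable), so there is no in-paper proof to compare your attempt against. Your reconstruction of the cited argument is broadly faithful: the reduction to a single univariate polynomial in the normalized variable $t = \iprod{x,v}/\sqrt{v^{\top}\Sigma v}$, the chained damped Chebyshev approximation of a band/sign indicator, certifying the pointwise constraint~(a) by Fact~\ref{fact:univariate} after homogenization, and bounding $\E_{t}\,p(t)^2$ against $\cN(0,1)$ (resp.\ the sphere's one-dimensional marginal, which is a bounded distortion of $\cN(0,1)$) by splitting into band, controlled window, and tail. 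You are also right to flag exactly where the real work lives and to decline to fill it in.

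One quantitative claim in your narration I would push back on: you assert that a naive Chebyshev bump at degree $\widetilde O(1/\delta)$ yields $L^2(\cN(0,1))$-mass $\Theta(\sqrt\delta)$, and that an extra boosting round to close the gap to $O(\delta)$ is what drives the degree to $\widetilde O(1/\delta^2)$. But a bump with plateau and transition both of width $\Theta(\delta)$ already contributes only $\Theta(\delta)$ from the band itself; the thing that actually forces the degree up is the tension between (i) keeping $p$ bounded on a window $[-B,B]$ with $B = \Theta(\sqrt{\log(1/\delta)})$ so that the Gaussian tail beyond $B$ beats the polynomial's Chebyshev growth, and (ii) simultaneously retaining the pointwise lower bound near $0$ that constraint~(a) needs after homogenization. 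This is a quibble about your account of \cite{bakshi2020list}, not a gap relative to the paper, which proves nothing here.
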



\begin{lemma}[Certifiable Anti-Concentration under Sampling, Lemma 5.8 \cite{bakshi2020list}]
\label{lem:cert_anti_conc_sampling}
Let $\cD$ be $s$-certifiably $(c, \delta)$-anti-concentrated Sub-Exponential distribution over $\mathbb{R}^d$. Let $\cS$ be a set of $n = \Omega( (sd\log(d))^{s})$ i.i.d. samples from $\cD$. Then, with probability at least $1-1/\poly(n)$, the uniform distribution on $\cS$ is $s$-certifiably $(2c,\delta)$-anti-concentrated.
\end{lemma}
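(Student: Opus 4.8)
The plan is to follow the template of the proof of Lemma~\ref{lem:cert_hyper_sampling}: I would start from the degree-$2s$ sum-of-squares proof of anti-concentration for $\cD$ and absorb the (small) discrepancy between the population and empirical low-degree moment tensors into the anti-concentration constant, using Fact~\ref{fact:operator_norm} and the substitution rule to turn a spectral-norm moment bound into an SoS-certifiable one.

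First I would unpack Definition~\ref{def:certifiable-anti-concentration}: $2s$-certifiable $(c,\delta)$-anti-concentration of $\cD$ is the conjunction of two degree-$2s$ SoS proofs in the indeterminate $v$. The first inequality, $\iprod{x,v}^{2s} + \delta^{2s} q_{\delta,\Sigma}(x,v)^2 \geq \delta^{2s}(v^\top \Sigma v)^{2s}$, is a \emph{pointwise} statement: for each fixed $x$ it is an SoS identity in $v$ whose only dependence on the distribution is through the covariance $\Sigma$. Since $\cD$ is isotropic, $\Sigma = I$; by standard matrix concentration (the sub-exponential analog of Fact~\ref{fact:empirical_cov}) the empirical covariance $\hat\Sigma$ of $\cS$ satisfies $\Norm{\hat\Sigma - I}_2 \leq \gamma$ with $\gamma = \delta^{O(1)}$ and probability $1-1/\poly(n)$ once $n = \Omega((sd\log d)^s)$, and $\Set{(1-\gamma)I \preceq \hat\Sigma \preceq (1+\gamma)I}$ admits a degree-$2$ SoS proof via Fact~\ref{fact:operator_norm}. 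Substituting this into the distribution-independent SoS identity for the first inequality (both inside $q_{\delta,\Sigma}$ and in the right-hand side) and absorbing the $(1\pm\gamma)^{2s}$ factors gives the first anti-concentration inequality for the uniform distribution on $\cS$, with $\Sigma$ replaced by $\hat\Sigma$ and a slightly worse constant.

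Next I would handle the second inequality, $\E_{x\sim\cS} q_{\delta,\hat\Sigma}(x,v)^2 \leq 2c\delta(v^\top\hat\Sigma v)^{2s}$. I would write $q_{\delta,\Sigma}(x,v)^2$ as a polynomial of degree $O(s)$ in $x$ (and $O(s)$ in $v$) whose coefficients are fixed polynomials in the entries of $\Sigma$, so that $\E_{x\sim\cS} q_{\delta,\Sigma}^2$ and $\E_{x\sim\cD} q_{\delta,\Sigma}^2$ are polynomials in $v$ whose coefficient vectors are linear images of the degree-$O(s)$ moment tensors of the two distributions. Exactly as in Lemma~\ref{lem:cert_hyper_sampling}, a moment/Chebyshev estimate using the $1$-sub-exponentiality of $\cD$ shows that with probability $1-1/\poly(n)$ the entrywise --- hence flattened-spectral-norm --- discrepancy between these moment tensors is at most $\delta^{O(1)} d^{-O(s)}$ once $n = \Omega((sd\log d)^s)$; contracting this tensor bound against $(vv^\top)^{\otimes O(s)}$ and applying Fact~\ref{fact:operator_norm} with the substitution rule yields a degree-$O(s)$ SoS proof that $\big|\E_{x\sim\cS} q_{\delta,\Sigma}^2 - \E_{x\sim\cD} q_{\delta,\Sigma}^2\big| \leq c\delta(v^\top\Sigma v)^{2s}$. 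Adding this to the SoS proof of the second inequality for $\cD$, then replacing $\Sigma$ by $\hat\Sigma$ throughout using the covariance-closeness SoS proof above, gives $(2c,\delta)$-anti-concentration of the uniform distribution on $\cS$.

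The hard part will be the bookkeeping around the dependence of the witnessing polynomial $q_{\delta,\Sigma}$ on $\Sigma$: since $q$ is normalized by $v^\top\Sigma v$, the perturbation $\Sigma \to \hat\Sigma$ must be pushed simultaneously through $q$, through the left-hand expectation, and through the right-hand side, and each such perturbation must be verified to be SoS-certifiable at degree $O(s)$ --- this is exactly where the factor-of-two slack in the constant is spent. The concentration step itself is routine, but it must measure the moment-tensor error in the flattened spectral norm (rather than merely entrywise with a union bound over $d^{O(s)}$ entries) in order to keep the sample complexity at $(sd\log d)^s$; supplying that norm bound is precisely the role of the moment-method computation imported from Lemma~\ref{lem:cert_hyper_sampling}.
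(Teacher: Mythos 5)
The paper does not give a proof of this lemma: it is cited verbatim as Lemma~5.8 of~\cite{bakshi2020list} and used as a black box, so there is no in-paper proof to compare your argument against directly. The nearest analogue in this paper is the certifiable-anti-concentration step inside the proof of Lemma~\ref{lem:typical-samples-good} (Section~\ref{sec:feasibility}), and that argument is worth comparing against because it contradicts the caution you raise in your final paragraph: the paper there controls the discrepancy between the population and empirical $O(s)$-degree moment tensors \emph{entrywise}, union bounds over the $d^{O(s)}$ entries, and still lands on $n = \Omega((sd\log d)^s)$. The trick is that the per-entry bound is a small \emph{multiplicative} error (the paper sets $\epsilon$ to a $2^{-2s}$ fraction of each target moment), so the aggregated error factors out as a $(1+2^{-2s})^s = O(1)$ multiplicative loss that is absorbed into the anti-concentration constant --- no passage through an operator-norm or flattened-spectral-norm bound is needed. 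Your worry that entrywise-plus-union-bound cannot reach $(sd\log d)^s$ is therefore unfounded; that is precisely what the paper's own feasibility argument does.

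On the structure of your proof itself: the second half (transferring the second anti-concentration inequality $\E q_{\delta,\Sigma}^2 \leq C\delta(v^\top\Sigma v)^{2s}$ via moment-tensor concentration, in the style of Lemma~\ref{lem:cert_hyper_sampling}) is the right idea, but your treatment of the \emph{first} inequality is overcomplicated and slightly off-target. The inequality $\iprod{x,v}^{2s} + \delta^{2s}q_{\delta,\Sigma}(x,v)^2 \geq \delta^{2s}(v^\top\Sigma v)^{2s}$ is a pure polynomial tautology in the indeterminates $x$ and $v$: it encodes the fact that the univariate approximator $p_\delta$ dominates the $0$-$1$ indicator of the ``core'' interval, and it admits a degree-$O(s)$ SoS proof for \emph{every} PSD matrix $\Sigma$ simultaneously, with no reference to any distribution. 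Consequently, when you pass to the uniform distribution on $\cS$ you simply instantiate the witnessing polynomial at $\Sigma = \hat\Sigma$ and the first inequality holds verbatim; there is no need to perturb $\Sigma \to \hat\Sigma$ inside an existing SoS certificate and absorb $(1\pm\gamma)^{2s}$ factors, and attempting to do so invites exactly the bookkeeping pain you flag at the end. The only place the covariance-closeness bound $\Norm{\hat\Sigma - \Sigma}_2 \leq \gamma$ (via the sub-exponential analogue of Fact~\ref{fact:empirical_cov}) is genuinely needed is to relate the normalizations $(v^\top\hat\Sigma v)^{2s}$ and $(v^\top\Sigma v)^{2s}$ when combining the sampled second inequality with the population one.
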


\paragraph{Bounded Variance of Degree-$2$ Polynomials.}

Recall that we say that a zero mean distribution $\cD$ with covariance $\Sigma$ has certifiably $C$-bounded variance degree $2$ polynomials if $\sststile{2}{Q} \Set{\E_{x \sim \cD} (x^{\top}Qx-\E_{x \sim \cD} x^{\top}Qx)^2 \leq C \Norm{\Sigma^{1/2}Q\Sigma^{1/2}}_F^2}.$ 

\begin{lemma}[Bounded Variance of Degree 2 Polynomials of 4-wise independent distributions]
Let $\cD$ be an isotropic, 4-wise independent distribution on $\R^d$. Then, $\cD$ has certifiably $3$-bounded variance degree $2$ polynomials. That is, 
\[
\sststile{2}{Q} \Set{\E_{\cD} \Paren{x^{\top}Qx - \E_{\cD} x^{\top}Qx}^2 \leq 3 \Norm{Q}_F^2}\mper
\]
\end{lemma}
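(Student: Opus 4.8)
The plan is to reduce the claim to a finite-dimensional linear-algebra fact. Since $\cD$ is isotropic we have $\Sigma = I$ and $\Sigma^{1/2}Q\Sigma^{1/2} = Q$, so it suffices to exhibit a degree-$2$ SoS proof of $\sststile{2}{Q}\Set{\E_{\cD}(x^\top Q x - \E_{\cD}x^\top Q x)^2 \le 3\Norm{Q}_F^2}$ in the matrix indeterminate $Q$. The key observation is that, once the $x$'s are averaged out, \emph{both} sides are quadratic forms in the $d^2$ scalar entries $\{Q_{ij}\}$ of $Q$; hence a degree-$2$ SoS proof is precisely a decomposition of the difference as a sum of squares of \emph{linear} forms in the $Q_{ij}$'s, and such a decomposition exists iff the difference, viewed as a quadratic form in the $Q_{ij}$'s, is positive semidefinite (eigendecompose the associated symmetric Gram matrix; the substitution rule \eqref{eq:sos-substitution} then turns the resulting linear forms into polynomials in $Q$). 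So the whole task collapses to one moment computation followed by a completion of squares.

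Next I would compute the left-hand side explicitly. Expanding $\paren{x^\top Q x}^2 = \sum_{i,j,k,\ell}Q_{ij}Q_{k\ell}\,x_i x_j x_k x_\ell$ and using $4$-wise independence together with $\E x_i = 0$ and $\E x_i^2 = 1$: the expectation of $x_i x_j x_k x_\ell$ vanishes unless every value in $\{i,j,k,\ell\}$ occurs with even multiplicity, it equals $\E x_i^4$ on the full diagonal $i=j=k=\ell$, and it equals $1$ in each of the three pair-partition cases. Combined with $\E x^\top Q x = \Tr(Q)$ (again by isotropy), this yields the identity
\[
\E_{\cD}\paren{x^\top Q x - \E_{\cD}x^\top Q x}^2 \;=\; \sum_i \paren{\E x_i^4 - 1}\,Q_{ii}^2 \;+\; \sum_{i\ne j}\paren{Q_{ij}^2 + Q_{ij}Q_{ji}}\mper
\]
Subtracting this from $3\Norm{Q}_F^2 = 3\sum_i Q_{ii}^2 + 3\sum_{i\ne j}Q_{ij}^2$ and regrouping the $(i,j)$ and $(j,i)$ contributions gives the explicit sum-of-squares decomposition
\[
3\Norm{Q}_F^2 - \E_{\cD}\paren{x^\top Q x - \E_{\cD}x^\top Q x}^2 \;=\; \sum_i \paren{4 - \E x_i^4}\,Q_{ii}^2 \;+\; \sum_{i<j}\Paren{(Q_{ij}-Q_{ji})^2 + Q_{ij}^2 + Q_{ji}^2}\mper
\]
The off-diagonal part is manifestly a sum of squares, and the diagonal coefficients $4 - \E x_i^4$ are nonnegative because under the paper's standing $1$-sub-gaussian normalization each marginal satisfies $\E x_i^4 \le 4$ by a standard tail integral (indeed $\E x_i^4 = 1$ for the $\pm 1$-valued $4$-wise independent families that arise in practice, leaving extra slack). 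Reading the two displays off as a degree-$2$ SoS certificate in $Q$ completes the proof; note that no symmetrization of $Q$ is needed, since the expansion already furnishes a valid decomposition for an arbitrary (not necessarily symmetric) $Q$.

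The only step that needs care is the bookkeeping in the moment expansion — correctly enumerating which index patterns survive $4$-wise independence and attaching to each the right bilinear monomial in the $Q$-entries, so that the "cross" terms $Q_{ij}Q_{ji}$ (and not just $Q_{ij}^2$) appear with the correct sign — and confirming that the hypothesis, via the $1$-sub-gaussian (or $\pm1$) normalization, genuinely controls $\E x_i^4$; I do not anticipate any further obstacle, as the remaining content is routine completion of squares.
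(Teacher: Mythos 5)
Your proof is correct and follows the same conceptual skeleton as the paper's: for degree-$2$ polynomials in the entries of $Q$, both sides of the target inequality are quadratic forms, so an SoS certificate is equivalent to a PSD comparison of the two Gram matrices. The paper states this more tersely, invoking Fact~\ref{fact:operator_norm} to get $\E_{\cD}\iprod{xx^{\top}-I,Q}^2 \le \bignorm{\E(xx^{\top}-I)(xx^{\top}-I)^{\top}}_2\norm{Q}_F^2$ and then simply asserting the spectral bound $\E(xx^{\top}-I)(xx^{\top}-I)^{\top}\preceq 3\,I\otimes I$ for any $4$-wise independent isotropic $\cD$. You instead compute that Gram form explicitly, arriving at $\sum_i(\E x_i^4-1)Q_{ii}^2 + \sum_{i\ne j}Q_{ij}^2 + \sum_{i\ne j}Q_{ij}Q_{ji}$, and then exhibit the square decomposition of the slack; the off-diagonal block $(Q_{ij}-Q_{ji})^2+Q_{ij}^2+Q_{ji}^2$ is exactly what makes the asserted $3\,I\otimes I$ bound go through, so your proof in effect supplies the missing verification of the paper's spectral claim.

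One thing your explicit route surfaces that the paper elides: the diagonal coefficients $4-\E x_i^4$ must be nonnegative, and $4$-wise independence plus isotropy alone does \emph{not} bound $\E x_i^4$ (e.g.\ three-point marginals with a large atom at the origin can have unit variance and arbitrarily large fourth moment). You correctly flag that some normalization is needed and appeal to the paper's standing $1$-subgaussian convention, under which the tail integral gives $\E x_i^4 \le 4$. This is the right fix, with the small caveat that $\cN(0,1)$ is only $\sqrt{2}$-subgaussian under the paper's $2\exp(-t^2/\sigma^2)$ convention, so the reason the Gaussian case works is really $\E x_i^4 = 3 < 4$ directly rather than the generic subgaussian estimate. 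Either way the needed hypothesis is a bound $\E x_i^4 \le 4$, and your proof has the advantage of making visible exactly where that hypothesis enters; the paper's one-line proof has the same implicit requirement but does not acknowledge it.
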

\begin{proof}
By viewing $xx^{\top}$ and $I \in \R^{d \times d}$ as $d^2$ dimensional vectors, and using that $\E_{y \sim \cD} (yy^{\top}-I)(yy^{\top}-I)^{\top} \preceq 3I \otimes I$ for any $4$-wise independent, isotropic distribution, we have:
\begin{multline}
\sststile{2}{Q} \Biggl\{ \E_{\cD} \Paren{x^{\top}Qx - \E_{\cD} x^{\top}Qx}^2 = \E_{\cD} \Iprod{xx^{\top}-I,Q}^2 \leq \Norm{\E_{x \sim \cD}(xx^{\top}-I)(xx^{\top}-I)^{\top}}_2 \Norm{Q}_F^2 \\\leq 3\Norm{I \otimes I}_2 \Norm{Q}_F^2 = 3 \Norm{Q}_F^2 \Biggr\}\mper
\end{multline}




\end{proof}

The uniform distribution on $\sqrt{d}$-radius sphere in $d$ dimensions is not $4$-wise independent. However, the above proof only requires that $\E (y^{\otimes 2}-I)(y^{\otimes 2}-I)^{\top} \preceq C I \otimes I$. For the uniform distribution on the sphere, notice that $i,j,k,\ell$-th entry of this matrix is non-zero iff the indices are in have two repeated indices and in that case, by negative correlation of the $x_i^2$ and $x_j^2$ on the sphere, it holds that $\E x_i^2 x_j^2 \leq 1$. Thus, $\E (y^{\otimes 2}-I)(y^{\otimes 2}-I)^{\top} \preceq 3 I \otimes I$ for $y$ uniformly distribution on the $\sqrt{d}$-radius unit sphere. The above proof thus also yields:

\begin{corollary}
Let $y$ be uniform on $\sqrt{d}$-radius sphere in $d$ dimensions. Then, $y$ has certifiably $3$-bounded variance degree $2$ polynomials. 
\end{corollary}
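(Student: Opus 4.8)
The plan is to reduce the corollary to the single matrix inequality that drove the proof of the preceding lemma --- namely $\E_{y}\bigl[(y^{\otimes 2}-I)(y^{\otimes 2}-I)^{\top}\bigr]\preceq 3\,I\otimes I$ --- and then apply the operator-norm SoS fact verbatim. First I would check that $y=\sqrt d\,u$ with $u$ uniform on $S^{d-1}$ is isotropic: by rotational symmetry $\E[y_iy_j]=\delta_{ij}\E[y_1^2]$ and $\sum_i y_i^2=d$ forces $\E[y_1^2]=1$, so $\Sigma=\E[yy^{\top}]=I$ and hence $\Sigma^{1/2}Q\Sigma^{1/2}=Q$. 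Thus the claim to prove is exactly $\E_y(y^{\top}Qy-\E_y y^{\top}Qy)^2\le 3\Norm{Q}_F^2$ with a degree-$2$ SoS proof in $Q$. Using $\E_y y^{\top}Qy=\langle I,Q\rangle=\tr Q$ and $y^{\top}Qy=\langle yy^{\top},Q\rangle$, this polynomial identity in $Q$ equals $\E_y\langle yy^{\top}-I,Q\rangle^2$, which after vectorizing $Q\mapsto q=\mathrm{vec}(Q)$ reads $q^{\top}Mq\le 3\Norm q_2^2$ for $M:=\E_y[(y^{\otimes2}-I)(y^{\otimes2}-I)^{\top}]=\E_y[aa^{\top}]$ with $a=\mathrm{vec}(yy^{\top}-I)$.

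Second I would bound $M$. Since $yy^{\top}-I$ is symmetric, $q^{\top}Mq=\E_y\langle yy^{\top}-I,Q\rangle^2$ depends only on the symmetric part of $Q$, so $\Norm M_2=\sup_{A=A^{\top},\,\Norm A_F=1}\mathrm{Var}_y[y^{\top}Ay]$. Writing $y=\sqrt d\,u$ and using the standard sphere identity $\E_u[(u^{\top}Au)^2]=\tfrac{(\tr A)^2+2\Norm A_F^2}{d(d+2)}$ for symmetric $A$ together with $\E_u[u^{\top}Au]=\tr(A)/d$, one gets $\mathrm{Var}_y[y^{\top}Ay]=d^2\,\mathrm{Var}_u[u^{\top}Au]=\tfrac{2d}{d+2}\Norm A_F^2-\tfrac{2}{d+2}(\tr A)^2\le 2\Norm A_F^2\le 3\Norm A_F^2$, so $M\preceq 3\,I\otimes I$. (This is the entry-wise computation hinted at in the paragraph above: $M_{(ij),(k\ell)}=\E[y_iy_jy_ky_\ell]-\delta_{ij}\delta_{k\ell}$ is nonzero only when the four indices pair up, each surviving entry is either $\E[y_i^4]=\tfrac{3d}{d+2}\le 3$ or $\E[y_i^2y_j^2]=\tfrac{d}{d+2}\le 1$ by negative correlation of $y_i^2,y_j^2$ on the sphere, and a Gershgorin row-sum bound gives $\Norm M_2\le 3$; the variance route above is just a cleaner packaging of the same estimate.)

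Finally, the SoS step is immediate and identical to the $4$-wise-independent case. Since $M\preceq 3I$ we may write $3I-M=R^{\top}R$, whence $3\Norm Q_F^2-\E_y\langle yy^{\top}-I,Q\rangle^2=q^{\top}(3I-M)q=\Norm{Rq}_2^2$ is a sum of squares of linear forms in the entries of $Q$; equivalently this is Fact~\ref{fact:operator_norm} applied to $M$ with the substitution $v\mapsto\mathrm{vec}(Q)$, which keeps the degree at $2$ via the substitution rule. Therefore $\sststile{2}{Q}\Set{\E_y(y^{\top}Qy-\E_y y^{\top}Qy)^2\le 3\Norm{Q}_F^2}$, i.e.\ $y$ has certifiably $3$-bounded variance degree-$2$ polynomials. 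The only genuinely computational point is the fourth-moment/variance estimate on the sphere (the displayed identity for $\E_u[(u^{\top}Au)^2]$), which is standard; everything else is bookkeeping around vectorization and the observation --- already used for $4$-wise independent distributions --- that the argument needs nothing about the distribution beyond the PSD bound on $M$.
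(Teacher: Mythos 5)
Your main argument is correct and follows exactly the same two-step structure as the paper's terse proof of this corollary (the remark in the paragraph just above it): reduce to the single PSD bound $M \coloneqq \E_y\bigl[(y^{\otimes 2}-I)(y^{\otimes 2}-I)^{\top}\bigr] \preceq 3\, I\otimes I$, then apply the SoS operator-norm fact (Fact~\ref{fact:operator_norm}) with $v = \mathrm{vec}(Q)$ to obtain a degree-$2$ certificate. Where you depart, usefully, is in how you establish the operator-norm bound on $M$: the paper gives only a brief entry-wise remark appealing to negative correlation, whereas you use the exact sphere identity $\E_u\bigl[(u^{\top}Au)^2\bigr] = \frac{(\tr A)^2 + 2\Norm{A}_F^2}{d(d+2)}$ and conclude $\mathrm{Var}_y[y^{\top}Ay] = \frac{2d}{d+2}\Norm{A}_F^2 - \frac{2}{d+2}(\tr A)^2 \le 2\Norm{A}_F^2$ for symmetric $A$. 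This is cleaner, fully rigorous, and in fact yields the sharper constant $\Norm{M}_2 \le 2$, which the paper's sketch does not deliver.

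The one flaw is the parenthetical claim that a Gershgorin row-sum bound also gives $\Norm{M}_2 \le 3$. This is false for large $d$. In row $(ii)$ of $M$, the diagonal entry is $M_{(ii),(ii)} = \E[y_i^4] - 1 = \frac{2d-2}{d+2}$, and the $d-1$ nonzero off-diagonal entries in that row are $M_{(ii),(jj)} = \E[y_i^2 y_j^2] - 1 = -\frac{2}{d+2}$ for $j\neq i$, so the Gershgorin disk for that row reaches up to $\frac{2d-2}{d+2} + (d-1)\cdot\frac{2}{d+2} = \frac{4d-4}{d+2}$, which exceeds $3$ once $d > 10$ and tends to $4$. (The $(ij)$ rows with $i\neq j$ are fine: they give $\frac{2d}{d+2} < 2$.) So the entry-wise/Gershgorin route only gives $\Norm{M}_2 \le 4$ as stated, and it is the variance identity that actually secures the constant $3$ (indeed $2$). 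You should either drop the Gershgorin parenthetical or correct it: it is not ``a cleaner packaging of the same estimate'' but a strictly weaker one, and the paper's own entry-wise remark is, as written, insufficient to reach the stated constant without the variance computation you supply.
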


\begin{lemma}[Linear Invariance] \label{lem:linear-invariance-bounded-variance}
Let $x$ be a random variable with an isotropic distribution $\cD$ on $\R^d$ with certifiably C-bounded variance degree $2$ polynomials. Let $A \in \R^{d \times d}$ be an arbitrary $d \times d$ matrix. Then, the random variable $x'=Ax$ also has certifiably C-bounded variance degree $2$ polynomials. 
\end{lemma}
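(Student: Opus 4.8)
The plan is to obtain the required degree-$2$ sum-of-squares proof for $x' = Ax$ by directly transporting the one hypothesized for $\cD$ through the linear change of variables $x \mapsto Ax$, using the substitution rule \eqref{eq:sos-substitution}. First I would record the two elementary facts about the pushforward distribution $\cD'$ of $x'$: since $\cD$ is isotropic it has mean $0$ and covariance $I$, so $x' = Ax$ has mean $0$ and covariance $\Sigma' \seteq A A^{\top}$, which is precisely the covariance appearing in Definition~\ref{def:bounded_variance} applied to $\cD'$. (When $A$ is singular one simply uses the PSD square root $(\Sigma')^{1/2}$, which still satisfies $(\Sigma')^{1/2}(\Sigma')^{1/2} = \Sigma'$.)

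The key steps, in order, are as follows. First, start from the hypothesis
\[
\sststile{2}{Q} \Set{ \E_{x \sim \cD} \Paren{x^{\top}Qx - \E_{x \sim \cD} x^{\top}Qx}^2 \leq C \Norm{Q}_F^2 }\mper
\]
Second, apply the substitution rule with the map $H$ sending the matrix-valued indeterminate $Q$ to $A^{\top} Q A$ (concretely, replacing each scalar indeterminate $Q_{ij}$ by the linear form $\sum_{k,\ell} A_{ki} Q_{k\ell} A_{\ell j}$); since $H$ is linear, $\deg(H) = 1$, so the proof degree stays $2$, yielding a degree-$2$ sum-of-squares proof of
\[
\E_{x \sim \cD} \Paren{x^{\top}(A^{\top}QA)x - \E_{x \sim \cD} x^{\top}(A^{\top}QA)x}^2 \leq C \Norm{A^{\top}QA}_F^2\mper
\]
Third, rewrite the left-hand side using $x^{\top}(A^{\top}QA)x = (Ax)^{\top}Q(Ax)$ and the fact that the law of $Ax$ for $x \sim \cD$ is the law of $x'$, so the left side equals, coefficient-by-coefficient in $Q$, the variance polynomial $\E_{x' \sim \cD'}\Paren{(x')^{\top}Q x' - \E_{x' \sim \cD'}(x')^{\top}Q x'}^2$. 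Fourth, rewrite the right-hand side using cyclicity of the trace, which gives the polynomial identity $\Norm{A^{\top}QA}_F^2 = \Tr\Paren{A^{\top}QAA^{\top}Q^{\top}A} = \Tr\Paren{\Sigma' Q \Sigma' Q^{\top}} = \Norm{(\Sigma')^{1/2}Q(\Sigma')^{1/2}}_F^2$. Substituting these two identities into the transported inequality gives exactly the statement that $\cD'$ has certifiably $C$-bounded variance degree $2$ polynomials, completing the proof.

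I do not expect a genuine obstacle here; the argument is essentially bookkeeping. The two points meriting a line of care are: the degree accounting in the substitution rule (harmless, because $Q \mapsto A^{\top}QA$ is a linear substitution and so does not raise the degree past $2$), and the fact that $\Norm{A^{\top}QA}_F^2$ and $\Norm{(\Sigma')^{1/2}Q(\Sigma')^{1/2}}_F^2$ must coincide \emph{as polynomials in the entries of $Q$}, not merely numerically for each fixed $Q$ — which is precisely what the identity $\Norm{M}_F^2 = \Tr(MM^{\top})$ together with trace cyclicity delivers.
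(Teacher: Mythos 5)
Your proof is correct and follows essentially the same route as the paper: substitute $Q \mapsto A^{\top}QA$ and use trace cyclicity to identify $\Norm{A^{\top}QA}_F^2$ with $\Norm{(\Sigma')^{1/2}Q(\Sigma')^{1/2}}_F^2$. Your write-up is in fact slightly more careful than the paper's (you track the transpose $Q^{\top}$ in the Frobenius-norm expansion rather than implicitly assuming $Q$ symmetric, and you make the SoS substitution-rule bookkeeping explicit), but the underlying argument is identical.
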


\begin{proof}
The covariance of $x'$ is $AA^{\top} =\Sigma$, say. Let $\Sigma^{1/2}$ be the PSD square root of $\Sigma$. The proof follows by noting that ${x'}^{\top}Qx' = (Ax)^{\top}Q(Ax) = x^{\top} (A^{\top}QA) x^{\top}$ and that $\Norm{A^{\top}QA}_F^2 = \tr(A^{\top}QAA^{\top}QA) = \tr(AA^{\top}QAA^{\top}Q) = \tr(\Sigma Q\Sigma Q) = \tr(\Sigma^{1/2}Q\Sigma^{1/2} \Sigma^{1/2}Q\Sigma^{1/2}) = \Norm{\Sigma^{1/2}Q\Sigma^{1/2}}_F^2$. 
\end{proof}

\begin{lemma}[Bounded Variance Under Sampling] \label{lem:bounded-variance-under-sampling}
Let $\cD$ be have degree $2$ polynomials with certifiably C-bounded variance and be $8$-certifiably $C$-subgaussian. Let $X$ be an i.i.d. sample from $\cD$ of size $n \geq n_0 = O(C^4) d^{16}$. Then, with probability at least $0.99$ over the draw of $X$, the uniform distribution on $X$ has degree $2$ polynomials with certifiable $2C$-bounded variance. 
\end{lemma}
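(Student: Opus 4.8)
The plan is to work at the level of the $d^2 \times d^2$ ``second-moment covariance'' matrices, since both sides of the bounded-variance inequality are homogeneous quadratic forms in the matrix indeterminate $Q$. Since a degree-$2$ sos proof of a homogeneous quadratic inequality in $Q$ is exactly a psd certificate for its Gram matrix, the hypothesis that $\cD$ has certifiably $C$-bounded variance degree $2$ polynomials is, after assuming (as we may, invoking the linear-invariance Lemma~\ref{lem:linear-invariance-bounded-variance} for the general case) that $\cD$ is isotropic so $\E_{\cD}xx^\top = I$ and $\E_{\cD} x^\top Q x = \tr Q$, precisely the matrix inequality $T \preceq C\,I_{d^2}$ for $T \defeq \E_{\cD}\mathrm{vec}(xx^\top - I)\,\mathrm{vec}(xx^\top - I)^\top$. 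Writing $\hat\cD$ for the uniform distribution on $X$, with empirical mean $\hat\mu$ and empirical covariance $\hat\Sigma$, the goal is the degree-$2$ sos inequality $\sststile{2}{Q}\Set{\E_{\hat\cD}\Paren{x^\top Q x - \E_{\hat\cD} x^\top Q x}^2 \leq 2C\Norm{\hat\Sigma^{1/2} Q \hat\Sigma^{1/2}}_F^2}$.

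First I would record three elementary sos facts and chain them. (i) Setting $\bar M \defeq \frac1n\sum_i x_i x_i^\top$, the cross term $\frac1n\sum_i \Iprod{Q, x_i x_i^\top - \bar M}$ vanishes, giving the identity $\frac1n\sum_i \Iprod{Q, x_i x_i^\top - I}^2 = \E_{\hat\cD}\Paren{x^\top Q x - \E_{\hat\cD} x^\top Q x}^2 + \Iprod{Q, \bar M - I}^2$, hence $\sststile{2}{Q}\Set{\E_{\hat\cD}\Paren{x^\top Q x - \E_{\hat\cD} x^\top Q x}^2 \le \mathrm{vec}(Q)^\top \hat N\, \mathrm{vec}(Q)}$ where $\hat N \defeq \frac1n\sum_i \mathrm{vec}(x_i x_i^\top - I)\,\mathrm{vec}(x_i x_i^\top - I)^\top$. (ii) Applying Fact~\ref{fact:operator_norm} to the symmetric matrix $\hat N - T$ with $v = \mathrm{vec}(Q)$ (via the substitution rule) and then the hypothesis $T \preceq C\,I_{d^2}$ yields $\sststile{2}{Q}\Set{\mathrm{vec}(Q)^\top \hat N\, \mathrm{vec}(Q) \le \Paren{C + \Norm{\hat N - T}_2}\Norm{Q}_F^2}$. (iii) Since $\mathrm{vec}(\hat\Sigma^{1/2}Q\hat\Sigma^{1/2}) = (\hat\Sigma^{1/2}\otimes\hat\Sigma^{1/2})\,\mathrm{vec}(Q)$ and $\hat\Sigma^{1/2}\otimes\hat\Sigma^{1/2} \succeq \lambda_{\min}(\hat\Sigma)\,I_{d^2}$, the psd certificate for the corresponding Gram matrix gives $\sststile{2}{Q}\Set{\Norm{Q}_F^2 \le \lambda_{\min}(\hat\Sigma)^{-2}\Norm{\hat\Sigma^{1/2}Q\hat\Sigma^{1/2}}_F^2}$. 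Chaining (i)–(iii) reduces the lemma to two purely probabilistic claims holding with probability $\geq 0.99$ for $n \ge n_0$: $\Norm{\hat N - T}_2 \le C/4$, and $\lambda_{\min}(\hat\Sigma) \ge 0.9$; the latter follows from Fact~\ref{fact:empirical_mean} and Fact~\ref{fact:empirical_cov} already for $n \gg d^2$ (using $\hat\Sigma = \bar M - \hat\mu\hat\mu^\top$), and then $C + C/4 \le 2C\cdot 0.9^2$ closes the chain.

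The main obstacle is the concentration bound $\Norm{\hat N - T}_2 \le C/4$. Writing $z \defeq \mathrm{vec}(xx^\top - I) \in \R^{d^2}$, this is empirical-covariance concentration $\Norm{\frac1n\sum_i z_i z_i^\top - \E zz^\top}_2 \le C/4$ for a random vector $z$ that is only degree $2$ in the sub-gaussian $x$, hence heavy-tailed with merely polynomially-bounded moments. I would prove it by the moment method used in the proof of Lemma~\ref{lem:cert_hyper_sampling}: bound each of the $d^4$ entries of $\hat N - T$ by a Chebyshev/$2t$-th-moment estimate for an average of $n$ i.i.d. mean-zero terms, where the $8$-certifiable $C$-sub-gaussianity hypothesis is exactly what controls the relevant moments (degree at most $8$ in $x$) of a single term; a union bound over the $d^4$ entries together with $\Norm{\hat N - T}_2 \le d^2 \max_{a,b}|(\hat N - T)_{a,b}|$ then gives the claim once $n \ge n_0 = O(C^4) d^{16}$. (The polynomial rather than exponential moment control is what forces the $d^{16}$ sample complexity; a sharper matrix-concentration argument would improve the exponent but is not needed.) Composing the resulting isotropic statement with Lemma~\ref{lem:linear-invariance-bounded-variance} then handles arbitrary (non-isotropic) $\cD$.
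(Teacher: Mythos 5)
Your proposal is correct and follows essentially the same route as the paper: reduce to the isotropic case by linear invariance, recast the degree-$2$ sos inequality as a psd condition on the $d^2\times d^2$ second-moment covariance matrix, and show the empirical version is spectrally close to the population one by entry-wise Chebyshev bounds (with moments up to degree $8$ in $x$ controlled by the $8$-certifiable sub-gaussianity) plus a union bound over $d^4$ entries. You are somewhat more explicit than the paper about the bookkeeping steps (i)--(iii) — the centering identity, the passage from the spectral bound to the stated inequality with $\hat\Sigma^{1/2}Q\hat\Sigma^{1/2}$ in place of $Q$, and the need for $\lambda_{\min}(\hat\Sigma)$ to be bounded below — but these are routine and the core probabilistic argument coincides.
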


\begin{proof}
Using Lemma~\ref{lem:linear-invariance-bounded-variance}, we can assume that $\cD$ is isotropic. Arguing as in the proof of Lemma~\ref{lem:cert_hyper_sampling}, it is enough to upper-bound the spectral norm $\Norm{\frac{1}{n}\sum_i (x_i^{\otimes 2}-I)(x_i^{\otimes 2}-I)^{\top}-\E_{x \sim \cD} (x^{\otimes 2}-I)(x^{\otimes 2}-I)^{\top}}_2$ by $C$ (with probability $0.99$ over the draw of $X$). We do this below:

By applying certifiable $C$-bounded variance property to $Q = vv^{\top}$ where $e_i$ are standard basis vectors in $\R^d$, we have that $\E (\iprod{x_i,v}^2-\E \iprod{x_i,v}^2)^2 \leq C \Norm{v}_2^4$ and thus, $\E \iprod{x_i,v}^4 \leq (1+C) \Norm{v}_2^4$. By an application of the AM-GM inequality, we know that for every $i,j,k,\ell$, $(\iprod{x,e_i}^2\iprod{x,e_j}^2 \iprod{x,e_k}^2 \iprod{x,e_\ell})^2 \leq \iprod{x,e_i}^8 + \iprod{x,e_j}^8 + \iprod{x,e_k}^{8} + \iprod{x,e_\ell}^8$. Thus, the variance of every entry of the matrix $\E x^{\otimes 4}$ is bounded above by $4(8C)^4 = O(C^4)$. Thus, by Chebyshev's inequality, any given entry of $\frac{1}{n} x_i^{\otimes 4}-\E_{x \sim \cD} x^{\otimes 4}$ is upper-bounded by $O(C^2)d^4/\sqrt{n}$ with probability at least $1-1/(100d^4)$. By a union bound, all entries of this tensor are upper-bounded by $O(C^2)d^4/\sqrt{n}$ with probability at least $0.99$. Thus, the Frobenius norm of this tensor is at most $d^8 O(C^2)/\sqrt{n}$. Since $n \geq n_0 = O(C^4)d^{16}$, this bound is at most $C/2$. Thus, we obtain that with probability at least $0.99$, $\Norm{\frac{1}{n}\sum_i (x_i^{\otimes 2}-I)(x_i^{\otimes 2}-I)^{\top}-\E_{x \sim \cD} (x^{\otimes 2}-I)(x^{\otimes 2}-I)^{\top}}_2 \leq 2\Norm{\frac{1}{n}\sum_i x_i^{\otimes 4}-\E_{x \sim \cD} x^{\otimes 4}}_F \leq C$. 

\end{proof}

The above three lemmas immediately yield that Gaussian distributions, linear transforms of uniform distribution on unit sphere, discrete product sets such as the Boolean hypercube and any 4-wise independent zero-mean distribution has certifiably $C$-bounded variance degree $2$ polynomials. 
\section{Sum-of-Squares Toolkit} \label{sec:sos-toolkit}
In this section,  we give low-degree SoS proofs of some inequalities that we use repeatedly in our arguments. 

The following is an SoS version of the following simple matrix analytic inequality: for any matrices $A,B$, $\Norm{AB}_F^2 \leq \Norm{A}_{op}^2 \Norm{B}_F^2$. We give a constant degree SoS proof of this inequality (with $O(1)$ factor loss) by relying on certifiable hypercontractivity of Gaussians. 

\begin{lemma}[Contraction and Frobenius Norms]
Let $A,B$ be $d \times d$ matrix valued indeterminates. Let $\beta$ be a scalar-valued indeterminate.
Then, 
\[
\Set{\beta \Paren{v^{\top}A^{\top}Av}^{t} \preceq \Delta \norm{v}_2^{2t}} \sststile{}{} \Set{ \beta \Norm{AB}_F^{2t} \leq \Delta t^t \Norm{B}_F^{2t}} \mcom
\]
and
\[
\Set{\beta \Paren{v^{\top}AA^{\top}v}^{t} \preceq \Delta \norm{v}_2^{2t}} \sststile{}{} \Set{ \beta\Norm{BA}_F^{2t} \leq \Delta t^t \Norm{B}_F^{2t}} \mcom
\]
\label{lem:contraction-property}
\end{lemma}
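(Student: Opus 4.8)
The plan is to reduce the matrix inequality $\Norm{AB}_F^{2t} \leq \Delta t^t \Norm{B}_F^{2t}$ (given $\beta(v^\top A^\top A v)^t \preceq \Delta \norm{v}_2^{2t}$) to a statement about a single quadratic form evaluated on a Gaussian vector, and then invoke certifiable hypercontractivity of Gaussians (Fact~\ref{fact:certifiable_hypercontractivity_gaussians}). First I would rewrite $\Norm{AB}_F^2$ using columns: if $b_1,\dots,b_d$ are the columns of $B$, then $\Norm{AB}_F^2 = \sum_{j} \norm{A b_j}_2^2 = \sum_j b_j^\top A^\top A b_j$. Introduce a standard Gaussian vector $g \sim \cN(0,I_d)$; then $\E_g \iprod{g, b_j}^2 = \norm{b_j}_2^2$ and more generally, viewing $M = A^\top A$, we have $\sum_j b_j^\top M b_j = \E_g \sum_j \iprod{g,b_j}^2 \cdot (\text{something})$ — actually the cleaner route is: $\Norm{AB}_F^2 = \Tr(B^\top M B) = \E_{g}\bracpackage$ ... let me instead use $\Tr(B^\top M B) = \E_g\, g^\top (B M B^\top)\, g$ is not quite it either. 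The identity I want is $\Tr(B^\top M B) = \E_{g\sim\cN(0,I)} (g^\top B^\top \! M^{1/2})(M^{1/2} B g)$, i.e. writing $h = M^{1/2} B g$... This is getting circular; the robust approach that the paper clearly intends (given the phrasing ``relying on certifiable hypercontractivity of Gaussians'') is the following.

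Consider the quadratic form $Q(g) = g^\top B^\top A^\top A B g = \norm{AB g}_2^2$ for $g \sim \cN(0, I_d)$. Then $\E_g Q(g) = \Tr(B^\top A^\top A B) = \Norm{AB}_F^2$. By certifiable hypercontractivity of the Gaussian applied to the matrix $P = B^\top A^\top A B$ (which is a specific polynomial expression in the indeterminates $A, B$, so the SoS proof composes via the substitution rule~\eqref{eq:sos-substitution}), we get
\[
\sststile{}{A,B}\ \Set{ \Paren{\E_g Q(g)}^{t} \leq \E_g Q(g)^t \leq (Ct)^t \Paren{\E_g Q(g)^2}^{t/2} }
\]
— wait, hypercontractivity bounds the $h$-th moment by the second moment, so it goes the wrong direction for an \emph{upper} bound on $(\E Q)^t$ in terms of $\Norm B_F$. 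Let me reconsider: actually the right tool is simpler. Write $Q(g) = \norm{A B g}_2^2$ and note $B g$ is a Gaussian with covariance $B B^\top$, so $Q(g) = \norm{A h}_2^2$ where $h \sim \cN(0, BB^\top)$. We have $\E Q = \Norm{AB}_F^2$. Now $\E Q(g)^t = \E \norm{ABg}_2^{2t}$, and by the hypercontractivity/Gaussian-concentration of the degree-2 polynomial $\norm{ABg}_2^2$ we would bound $\E Q^t \leq t^{O(t)} (\E Q)^t$; but we need the \emph{reverse}. So the correct move: use the pointwise SoS bound $\norm{AB g}_2^2 \leq$ (something involving $g^\top g$) is false in general. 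The genuinely correct approach, and the one matching ``$t^t$ loss,'' is:
\[
\Norm{AB}_F^{2t} = \Paren{\E_g \norm{ABg}_2^2}^t \leq \E_g \norm{ABg}_2^{2t}
\]
by SoS Cauchy–Schwarz/power-mean (Fact~\ref{fact:pseudo-expectation-holder} style, or just Jensen in SoS form on the Gaussian measure), and then bound $\norm{ABg}_2^{2t}$. We have $\norm{ABg}_2^{2} = g^\top B^\top A^\top A B g$, and by the hypothesis $\beta (v^\top A^\top A v)^t \preceq \Delta\norm v^{2t}$ applied with $v = Bg$: this gives $\beta \norm{ABg}_2^{2t} = \beta (g^\top B^\top A^\top A B g)^t \leq \Delta \norm{Bg}_2^{2t}$. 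Taking $\E_g$: $\beta \E_g \norm{ABg}_2^{2t} \leq \Delta \E_g \norm{Bg}_2^{2t}$, and $\E_g\norm{Bg}_2^{2t} = \E_g (g^\top B^\top B g)^t \leq t^t \Norm{B}_F^{2t}$ by certifiable hypercontractivity of the Gaussian (Fact~\ref{fact:certifiable_hypercontractivity_gaussians}, $C=1$, applied to $P = B^\top B$, using $\E (g^\top P g)^t \leq t^t (\E g^\top P g)^2$... again a second-moment bound, $(\E g^\top Pg)^t$ vs $(\text{2nd moment})^{t/2}$ — here $\E g^\top P g = \Tr P = \Norm B_F^2$ and the second moment is $O(\Norm{B^\top B}_F^2) \le O(\Norm B_F^4)$, so $\E(g^\top Pg)^t \le (Ct)^t \Norm B_F^{2t}$, which is exactly the needed direction).

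So the key steps in order: (1) via the substitution rule, instantiate the anti-concentration-type hypothesis $\beta(v^\top A^\top A v)^t \preceq \Delta\norm v^{2t}$ at $v = Bg$ to obtain, as an SoS proof in $A,B,g$, that $\beta\,(g^\top B^\top A^\top A B g)^t \leq \Delta\, \norm{Bg}_2^{2t}$; (2) take Gaussian expectation over $g$ (this is a linear operation preserving SoS proofs since $g$ is a "real" variable being averaged, and the expectation of an SoS polynomial in the remaining indeterminates is SoS), giving $\beta\, \E_g(g^\top B^\top A^\top A B g)^t \leq \Delta\, \E_g\norm{Bg}_2^{2t}$; (3) lower-bound the left side by $\beta\Norm{AB}_F^{2t}$ using SoS power-mean / Hölder (Fact~\ref{fact:pseudo-expectation-holder} or Fact~\ref{fact:sos-holder}) to get $(\E_g\, g^\top P g)^t \le \E_g (g^\top P g)^t$ for the PSD form $P = B^\top A^\top A B$; (4) upper-bound the right side by $t^t\Norm B_F^{2t}$ using certifiable hypercontractivity of the Gaussian (Fact~\ref{fact:certifiable_hypercontractivity_gaussians}) applied to $B^\top B$, together with the direct computation $\E_g\, g^\top B^\top B g = \Norm B_F^2$ and the bound $\E_g(g^\top B^\top Bg)^2 = \Norm B_F^4 + 2\Norm{B^\top B}_F^2 \le 3\Norm B_F^4$. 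Chaining these via the transitivity and multiplication rules~\eqref{eq:sos-transitivity},~\eqref{eq:sos-addition-multiplication-rule} yields $\beta\Norm{AB}_F^{2t} \le \Delta t^t \Norm B_F^{2t}$ (absorbing the $3^{t/2}$ into the constant implicit in $t^t$). The second statement, $\Norm{BA}_F^{2t} \le \Delta t^t\Norm B_F^{2t}$ under $\beta(v^\top A A^\top v)^t \preceq \Delta\norm v^{2t}$, follows identically by transposing: $\Norm{BA}_F = \Norm{A^\top B^\top}_F$ and $\norm{A^\top B^\top g}_2^{2} = g^\top B A A^\top B^\top g$, apply the hypothesis at $v = B^\top g$.

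The main obstacle I anticipate is step (2)–(3): making precise, \emph{within} the SoS proof system, the operation of "taking Gaussian expectation" and the accompanying power-mean inequality $(\E_g\, g^\top P g)^t \le \E_g(g^\top Pg)^t$. The cleanest way to handle this rigorously is not to average over a literal random $g$ but to replace $\E_g[\cdot]$ throughout by the finite empirical average over a sample set $X$ of Gaussians of size $n = d^{O(t)}$ large enough that the relevant moment tensors up to degree $2t$ are well-approximated (as is done in Lemma~\ref{lem:cert_hyper_sampling} and the "good sample" framework); then the hypercontractivity statement is the certifiable hypercontractivity of the \emph{uniform distribution on $X$}, the power-mean inequality becomes the SoS Hölder inequality (Fact~\ref{fact:sos-holder}) for finite sums, and the identity $\frac1n\sum_i g_i^\top B^\top B g_i \approx \Norm B_F^2$ holds up to $o(1)$ error. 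Alternatively, and more slickly, one observes that the chain only ever uses that the Gaussian measure is certifiably hypercontractive and has the right second moments — both of which pass through the substitution rule — so one can state the whole argument as a composition of SoS proofs with no explicit sampling, treating $\E_g$ as shorthand for "integrate the polynomial identity against the (fixed, known) Gaussian moment tensor," which is a legitimate $\R$-linear combination of monomials in $A,B$. I would write it the second way for brevity, remarking that the sampling version is what is literally invoked elsewhere.
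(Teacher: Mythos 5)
Your final argument is exactly the paper's proof: the paper starts from the identity $\Norm{M}_F^2 = \E_g\Norm{Mg}_2^2$ for $g\sim\cN(0,I)$, applies SoS H\"older to get $\paren{\E_g\Norm{ABg}_2^2}^t \leq \E_g\Norm{ABg}_2^{2t}$, substitutes $v=Bg$ into the hypothesis, and then invokes certifiable hypercontractivity of Gaussian quadratic forms to bound $\E_g\Norm{Bg}_2^{2t}$ by $t^t\Norm{B}_F^{2t}$ --- which is steps (1)--(4) of your outline in the same order. Your extra bookkeeping ($\E_g(g^\top B^\top B g)^2 = \Norm{B}_F^4 + 2\Norm{B^\top B}_F^2 \leq 3\Norm{B}_F^4$, absorbing $3^{t/2}$ into $t^t$) is actually more careful than the paper's chain, which implicitly passes from the hypercontractivity bound (stated in terms of the second moment $\E(g^\top P g)^2$) to $(\E\, g^\top P g)^t$ without spelling this out; and your concern about the legitimacy of ``taking $\E_g$'' inside an SoS derivation is a fair one that the paper elides --- your resolution (the Gaussian moment tensor up to degree $2t$ is a fixed object, so $\E_g$ is just a known $\R$-linear map on coefficients, compatible with the SoS calculus) is the right one. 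The only thing worth cleaning up if you write this: drop the long false-start paragraph and go directly to the four-step chain.
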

\begin{proof}
We prove the first conclusion. The proof of the second one is similar.

We start by observing that for any matrix valued indeterminate $M$, $\sststile{2}{M} \Set{ \Norm{M}_F^2 = \E_g \Norm{Mg}_2^2}$ where the expectation is with respect to $g \sim\cN(0,I)$.

We thus have: 
\begin{equation}
\begin{split}
\Set{\beta \Paren{v^{\top}A^{\top}Av}^{t} \leq \Delta \norm{v}_2^{2t}} \sststile{}{} \Biggl\{\beta \Paren{\Norm{AB}_F^2}^t = \Paren{\E_g \Norm{ABg}_2^2}^t
&\leq \beta \E_g \Norm{ABg}_2^t\\
&= \E_g \Paren{(Bg)^{\top} \Paren{\beta A^{\top} A} (Bg)}^t\\
&\leq \Delta \E_g \Norm{Bg}_2^{2t}\\
&\leq t^t \Delta \Paren{\E_g \Norm{Bg}_2^2}^t\\
&= t^t \Delta \Norm{B}_F^{2t} \Biggr\}\mper
\end{split}
\end{equation}  
Here, the first inequality follows by using the SoS Hölder's inequality, the second one uses the constraint satisfied by $A^{\top} A$ with the substituting $v  = Bg$ and finally,  the last inequality relies on certifiable hypercontractivity of quadratic forms of Gaussians.
This completes the proof.

\end{proof}

The following two lemmas allow us to ``cancel out'' common factors from both sides of an inequality in low-degree SoS. 

\begin{lemma}[Cancellation within SoS, Constant RHS] \label{lem:cancellation-SoS-constant-RHS}
Let $a$ be an indeterminate. Then, 
\[
\Set{a^{2t} \leq 1} \sststile{2t}{a} \Set{a^2 \leq 1}\mper
\]
\end{lemma}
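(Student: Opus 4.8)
The goal is to prove that $\{a^{2t} \leq 1\} \sststile{2t}{a} \{a^2 \leq 1\}$, i.e., from the single polynomial axiom $1 - a^{2t} \geq 0$ we must derive $1 - a^2 \geq 0$ via a degree-$2t$ sum-of-squares proof. My plan is to exhibit an explicit SoS identity witnessing this implication. Since this is a univariate polynomial statement, one could in principle appeal to \cref{fact:univariate} or \cref{fact:univariate-interval}, but the cleaner route is to write down the decomposition directly, because we need to control the degree precisely at $2t$.

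The key algebraic observation is the telescoping factorization
\[
1 - a^2 = (1-a^{2t}) - a^2(1 - a^{2t-2}) = (1-a^{2t}) - a^2(1-a^2)(1 + a^2 + a^4 + \cdots + a^{2t-4}),
\]
or more symmetrically, I would use the identity $1 - a^{2t} = (1-a^2)(1 + a^2 + \cdots + a^{2t-2})$. From this, $1 - a^2 = \frac{1}{t}\big( (1-a^{2t}) + (1-a^2)(\,t - 1 - a^2 - a^4 - \cdots - a^{2t-2}\,)\big)$ after rearranging; the point is to peel off one copy of the axiom $1-a^{2t}$ and absorb the remainder into something manifestly a sum of squares times $(1-a^2)$ — but that is circular since $1-a^2$ is exactly what we want. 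Instead, the correct strategy is: multiply the desired conclusion through by a high power so the axiom can be used. Concretely, I would first establish $\{a^{2t}\leq 1\}\sststile{2t}{a}\{a^{2}\le 1\}$ by the following chain. Write $1 - a^2 = (1-a^{2t}) + (a^{2t} - a^2) = (1-a^{2t}) + a^2(a^{2t-2}-1) = (1-a^{2t}) - a^2(1-a^{2t-2})$. Then recursively, $1 - a^{2t-2} = (1-a^{2t}) + a^{2t-2}(a^{-2}\cdot a^{2} - 1)$... this does not terminate cleanly.

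The clean argument instead goes the other direction: I claim the identity
\[
1 - a^2 \;=\; \tfrac{1}{t}(1 - a^{2t}) \;+\; \tfrac{1}{t}(1-a^2)^2\big(t-1 + (t-2)a^2 + (t-3)a^4 + \cdots + a^{2t-4}\big).
\]
This can be verified by expanding: $(1-a^2)\sum_{j=0}^{t-1} a^{2j} = 1 - a^{2t}$, so $\sum_{j=0}^{t-1}a^{2j} = \frac{1-a^{2t}}{1-a^2}$, and then $t - \sum_{j=0}^{t-1}a^{2j} = \sum_{j=0}^{t-1}(1-a^{2j}) = (1-a^2)\sum_{j=1}^{t-1}\sum_{i=0}^{j-1}a^{2i} = (1-a^2)\big((t-1)+(t-2)a^2+\cdots+a^{2t-4}\big)$. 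Multiplying by $(1-a^2)$: $t(1-a^2) - (1-a^{2t}) = (1-a^2)^2\big((t-1)+\cdots+a^{2t-4}\big)$, which is exactly the claimed identity after dividing by $t$. Now the right-hand side is a nonnegative combination of the axiom $(1-a^{2t})$ with coefficient $\tfrac{1}{t}$ (an SoS — in fact a nonnegative scalar, which is trivially SoS) plus the product of the square $(1-a^2)^2$ with the polynomial $(t-1)+(t-2)a^2+\cdots+a^{2t-4}$, which is itself a sum of squares since all its coefficients are positive and only even powers of $a$ appear (so it equals $\sum_j c_j (a^j)^2$ with $c_j \geq 0$). Every term has degree at most $2t$: the term $(1-a^2)^2 \cdot a^{2t-4}$ has degree $2t$, and $1-a^{2t}$ has degree $2t$. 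Hence this is a valid degree-$2t$ SoS derivation, completing the proof.

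The main obstacle — really the only place requiring care — is getting the degree bookkeeping exactly right so the proof has degree $2t$ and not $2t+2$ or more, and verifying the combinatorial identity for the coefficients $(t-1, t-2, \ldots, 1)$. I would double-check the identity at small $t$ (e.g. $t=2$: $1-a^2 = \tfrac12(1-a^4) + \tfrac12(1-a^2)^2$, and indeed $\tfrac12(1-a^4)+\tfrac12(1-2a^2+a^4) = \tfrac12(2 - 2a^2) = 1-a^2$ ✓). Once the identity is confirmed, the SoS structure is immediate and the lemma follows.

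\begin{proof}
It suffices to exhibit a degree-$2t$ sum-of-squares proof of $1-a^2 \geq 0$ from the axiom $1-a^{2t}\geq 0$. We use the polynomial identity
\[
1 - a^2 \;=\; \frac{1}{t}\,(1 - a^{2t}) \;+\; \frac{1}{t}\,(1-a^2)^2\sum_{j=0}^{t-2}(t-1-j)\,a^{2j}\mper
\]
To verify it, note $(1-a^2)\sum_{j=0}^{t-1}a^{2j} = 1-a^{2t}$, hence
\[
t(1-a^2) - (1-a^{2t}) = (1-a^2)\Paren{t - \sum_{j=0}^{t-1}a^{2j}} = (1-a^2)\sum_{j=0}^{t-1}(1-a^{2j})\mper
\]
Since $1 - a^{2j} = (1-a^2)\sum_{i=0}^{j-1}a^{2i}$, we get $\sum_{j=0}^{t-1}(1-a^{2j}) = (1-a^2)\sum_{j=0}^{t-2}(t-1-j)a^{2j}$, which after dividing by $t$ gives the displayed identity.

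Now observe that $\frac{1}{t}$ is a nonnegative scalar (hence a sum of squares), so $\frac{1}{t}(1-a^{2t})$ is a nonnegative scalar multiple of the axiom. The polynomial $\sum_{j=0}^{t-2}(t-1-j)a^{2j} = \sum_{j=0}^{t-2}(t-1-j)(a^{j})^2$ has only nonnegative coefficients on even powers, so it is a sum of squares; multiplying by the square $(1-a^2)^2$ and the scalar $\frac1t$ keeps it a sum of squares. Finally, every monomial on the right-hand side has degree at most $\max\{2t,\ 4 + 2(t-2)\} = 2t$. Therefore the identity above is a degree-$2t$ sum-of-squares derivation of $1-a^2 \geq 0$ from $\{a^{2t}\leq 1\}$, which is the claim.
\end{proof}
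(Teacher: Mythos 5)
Your proof is correct. The paper's own argument is to invoke the SoS AM-GM inequality (Fact~\ref{fact:sos-am-gm}) to derive $a^2 \leq \frac{1}{t}a^{2t} + 1 - \frac{1}{t}$ at degree $2t$, then substitute the axiom $a^{2t}\leq 1$. You arrive at exactly the same intermediate inequality — your identity rearranges to $1-a^2 - \frac{1}{t}(1-a^{2t}) = \frac{1}{t}(1-a^2)^2\sum_{j=0}^{t-2}(t-1-j)a^{2j} \geq 0$ — but you derive it by an explicit polynomial computation and exhibit the SoS certificate directly, rather than citing AM-GM. Incidentally, Fact~\ref{fact:sos-am-gm} as stated gives $\bigl(\frac{1}{t}\sum f_i\bigr)^t \geq \prod f_i$, which does not literally yield the paper's inequality $a^2 \leq \frac{1}{t}a^{2t} + 1 - \frac{1}{t}$ under the substitution $f_1=a^2,\,f_2=\cdots=f_t=1$; the paper is being a bit informal there. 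Your explicit telescoping decomposition sidesteps this, is fully self-contained, and has the degree bookkeeping verified cleanly, so it is in fact a slightly tighter rendering of the same argument.
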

\begin{proof}
Applying the SoS AM-GM inequality (Fact~\ref{fact:sos-am-gm}) with $f_1 = a^2$, $f_2 = \ldots = f_t = 1$, we get:
\[
\sststile{2t}{a} \Set{a^2 \leq a^{2t}/t + 1-1/t} \mper
\]
Thus, 
\[
\Set{a^{2t} \leq 1} \sststile{2t}{a} \Set{a^2 \leq 1/t + 1-1/t = 1}\mper
\]
\end{proof}

\begin{lemma}[Cancellation Within SoS]
Let $a,C$ be indeterminates. Then, 
\[
\Set{a \geq 0 } \cup \Set{a^t \leq C a^{t-1}} \sststile{2t}{a,C} \Set{a^{2t} \leq C^{2t}}\mper
\]
\label{lem:cancellation-sos}
\end{lemma}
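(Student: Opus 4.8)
The plan is to produce an explicit degree-$2t$ sum-of-squares certificate. First rewrite the second hypothesis $\{a^t \le Ca^{t-1}\}$ as $\{a^{t-1}(C-a)\ge 0\}$ and set $P := C-a$; then, besides $a\ge 0$, the constraint system yields the derived inequalities $a^m P \ge 0$ for every integer $m\ge t-1$ (multiply $a^{t-1}P\ge 0$ by the nonnegative power $a^{m-(t-1)}$, using $a\ge 0$), along with the always-available $a^i\ge 0$ and $P^{2j}\ge 0$. Now expand
\[
C^{2t}-a^{2t} \;=\; (a+P)^{2t}-a^{2t} \;=\; \sum_{k=1}^{2t}\binom{2t}{k}P^k a^{2t-k},
\]
and bound the summand for each $k$. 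If $k$ is even, then $2t-k$ is even and $\binom{2t}{k}P^k a^{2t-k}=\binom{2t}{k}\bigl(P^{k/2}a^{(2t-k)/2}\bigr)^2$ is a perfect square. If $k$ is odd with $k\le t+1$, then $\binom{2t}{k}P^k a^{2t-k}=\binom{2t}{k}\bigl(P^{(k-1)/2}\bigr)^2\,a^{\,t+1-k}\,(a^{t-1}P)$, a product of a square, a nonnegative power of $a$, and the constraint $a^{t-1}P\ge 0$, hence derivably nonnegative. Each of these pieces has degree exactly $2t$.

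What is left are the odd $k$ with $t+1<k\le 2t-1$; these are the crux. For such $k$, both $k-1$ and $2t-k-1$ are even, so $P^{k-1}a^{2t-k-1}$ is a perfect square, and the idea is to absorb $\binom{2t}{k}P^k a^{2t-k}=P^{k-1}a^{2t-k-1}\cdot(aC-a^2)$ into a combination $\alpha\,P^{k-1}a^{2t-k+1}+\beta\,P^{k+1}a^{2t-k-1}$ of the flanking even-index (hence nonnegative, perfect-square) terms, using that $\alpha u^2+\binom{2t}{k}uv+\beta v^2$ is a positive semidefinite quadratic form once $\binom{2t}{k}^2\le 4\alpha\beta$, with $\alpha,\beta$ chosen as fractions of $\binom{2t}{k-1},\binom{2t}{k+1}$. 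For the values of $t$ relevant to the applications this is painless: for $t=2$ there are no such ``bad'' terms and one has the degree-$4$ identity
\[
C^{4}-a^{4} \;=\; (C-a)^{4} \;+\; 6\,(aC-a^{2})^{2} \;+\; \bigl(4(C-a)^{2}+4a^{2}\bigr)(aC-a^{2}),
\]
which uses only $aC-a^{2}=a^{t-1}C-a^{t}\ge 0$; for $t=3$ the single bad term is handled by the one extra completion $6P^{5}a+P^{6}+9P^{4}a^{2}=P^{4}(P+3a)^{2}$.

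The main obstacle is the bookkeeping of this redistribution uniformly in $t$: once there are several bad odd indices they compete for the coefficient ``mass'' $\binom{2t}{k'}$ of the same flanking even indices, and a naive scheme that borrows only from immediate neighbors runs out of room. I expect the remedy is to spread each bad term $\binom{2t}{k}P^k a^{2t-k}$ over several pairs of flanking even terms $(k-d,\,k+d)$ with $d$ odd and $k+d\le 2t$, via repeated weighted AM--GM, which reduces the whole matter to a purely combinatorial feasibility statement about distributing the binomial coefficients $\binom{2t}{\cdot}$ while keeping every contribution of degree at most $2t$. Checking that feasibility is the one step that requires genuine care; everything else is routine application of the inference rules.
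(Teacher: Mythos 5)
Your proposal is not a proof: the odd indices $k$ with $t+1 < k \le 2t-1$ are exactly where the work lies, and you explicitly defer them to a "purely combinatorial feasibility statement about distributing the binomial coefficients" that you do not verify. That is a genuine gap, not a loose end. Checking $t=2$ and $t=3$ by hand does not settle it; no scheme of weights $\alpha_k,\beta_k$ valid for all $t$ is exhibited, and the obvious candidate ($\alpha=\binom{2t}{k-1}$, $\beta=\binom{2t}{k+1}$) does not obviously suffice — log-concavity of binomials gives $\binom{2t}{k}^2 \ge \binom{2t}{k-1}\binom{2t}{k+1}$, which is the \emph{wrong} direction for the discriminant condition $\binom{2t}{k}^2 \le 4\alpha\beta$, so the slack of the factor $4$ must be argued for, and on top of that several bad indices can compete for mass from the same flanking even index. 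As written, the step "everything else is routine" conceals the one claim that actually needs a proof.

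The paper avoids all of this by never expanding $C^{2t}-a^{2t}$ directly. It first records the degree-$2$ fact that $y^2-x^2=(y-x)^2+2(yx-x^2)$, i.e.\ $\{x^2\le yx\}$ certifies $\{x^2\le y^2\}$ at degree $2$; substituting $x=a^t$, $y=C^t$ reduces the lemma to certifying $a^{2t}\le C^t a^t$ at degree $2t$, which is then obtained by iterating the hypothesis $a^t\le Ca^{t-1}$. The structural reason this works, phrased in your language, is that the residual quantity $C^ta^t-a^{2t}=a^t\bigl(C^t-a^t\bigr)=\sum_{k=1}^{t}\binom{t}{k}P^k a^{2t-k}$ has its $k$-range truncated at $t$ by the extra factor $a^t$: every term now has $2t-k\ge t>t-1$, so the constraint $a^{t-1}P\ge 0$ can be extracted from each odd-$k$ summand (leaving a perfect square times, possibly, a single factor of the constraint $a\ge 0$), and each even-$k$ summand is already a perfect square — there are no bad terms and no redistribution is needed. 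Your approach gets stuck precisely because you skip this truncation and expand $(a+P)^{2t}-a^{2t}$, producing terms with $k$ up to $2t$, where $a^{2t-k}$ is of too low degree to absorb $a^{t-1}$. Peeling off $(C^t-a^t)^2$ before expanding is the idea you are missing; with it, the rest really is routine.
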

\begin{proof}
We first prove the case of $t = 2$. We have:

\[
\sststile{2}{a,C} \Set{a^2 = (a-C/2 + C/2)^2 \leq 2 (a-C/2)^2 + 2 (C/2)^2 }\mper
\]
And,
\[
\Set{a^2 \leq C a} \sststile{2}{a,C}  \Set{(a-C/2)^2 \leq C^2/4}\mper
\]
Thus,
\[
\Set{a^2 \leq C a} \sststile{2}{a,C} \Set{a^2 \leq C^2}\mper
\]

Consider now the general case. Iteratively using $\Set{a^t \leq C a^{t-1}}$ yields:
\[
\Set{a \geq 0 } \cup \Set{a^t \leq C a^{t-1}} \sststile{2t}{a,C} \Set{a^{2t} \leq a^{t-2} a^t  C^2 \leq a^{t-3} a^t  C^3 \ldots \leq a^t C^t}\mper
\]
Applying the special case of $t = 2$ above to the indeterminate $a^t$ now yields:
\[
\Set{a \geq 0} \Set{a^t \leq C a^{t-1}} \sststile{2t}{a,C} \Set{a^{2t} \leq C^{2t}}\mper
\]
\end{proof}

\section*{Acknowledgment}
We thank Boaz Barak, Ryan O'Donnell, Venkat Guruswami, Rajesh Jayaram, Gautam Kamath, Roie Levin, Jerry Li, Pedro Paredes, Nicolas Resch and David Woodruff for illuminating discussions related to this project. We thank Sam Hopkins for suggesting that the techniques from~\cite{bakshi2020list} might be relevant for Outlier-Robust Clustering and Misha Ivkov and Peter Manohar for pointing out typos in a previous version of this paper. 

We thank an anonymous reviewer for pointing out an issue with the proof of a lemma in a previous version of this paper that bounds the variance of degree $2$ polynomials (such lemmas now appear in Section~\ref{sec:reasonable-distributions}).



\phantomsection
  \addcontentsline{toc}{section}{References}
  \bibliographystyle{amsalpha}
  \bibliography{bib/mathreview,bib/dblp,bib/custom,bib/scholar,bib/custom2}  
\appendix

\section{Total Variation vs Parameter Distance for Gaussian Distributions} \label{Sec:tv-vs-param-gaussian}
\begin{proposition}[Parameter Closeness Implies TV Closeness for Gaussian Base Model] \label{prop:tv-vs-param-for-gaussians}
Fix $\Delta > 0$ and let $\mu,\mu'$ and $\Sigma, \Sigma' \succ 0$ satisfy:
\begin{enumerate}
\item \textbf{Mean Closeness: } for all $v \in \R^d$, $\Norm{\Paren{\mu - \mu'},v}^2_2 \leq \Delta^2 v^{\top} (\Sigma + \Sigma')v$.
\item \textbf{Spectral Closeness: } for all $v \in \R^d$ $\frac{1}{\Delta^2} v^{\top} \Sigma v \leq v^{\top} \Sigma' v \leq \Delta^ 2 v^{\top} \Sigma(r')v$. 
\item \textbf{Relative Frobenius Closeness: } $\Norm{ \Sigma^{\dagger/2} \Sigma' \Sigma^{\dagger/2} -I}_F^2 \leq \Delta^2 \cdot \Norm{ \Sigma^{\dagger} \Sigma'}^2_2$.
\end{enumerate}
Then, $\dtv(\cN(\mu,\Sigma), \cN(\mu',\Sigma')) \leq 1- \exp(-O(\Delta^2 \log \Delta))$.
\end{proposition}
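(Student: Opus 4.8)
The plan is to route everything through the Hellinger distance, since Fact~\ref{fact:hellinger-gaussians} gives an exact formula for it between Gaussians. Write $M=\tfrac12(\Sigma+\Sigma')$ and let $\alpha=1-h(\cN(\mu,\Sigma),\cN(\mu',\Sigma'))^2$ be the Hellinger affinity, so that by Fact~\ref{fact:hellinger-gaussians}
\[
\alpha=\frac{\det(\Sigma)^{1/4}\det(\Sigma')^{1/4}}{\det(M)^{1/2}}\cdot\exp\Paren{-\tfrac18(\mu-\mu')^{\top}M^{\dagger}(\mu-\mu')}\mper
\]
The Hellinger--total-variation inequality quoted in the excerpt gives $\dtv\le h\sqrt{2-h^2}=\sqrt{1-\alpha^2}\le 1-\tfrac12\alpha^2$, so it suffices to prove $\alpha\ge\exp(-O(\Delta^2\log\Delta))$; then $\dtv\le 1-\tfrac12\exp(-O(\Delta^2\log\Delta))=1-\exp(-O(\Delta^2\log\Delta))$. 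Thus the whole proof reduces to lower bounding the mean factor and the determinant factor of $\alpha$ separately. As a preliminary observation, conditions~1--3 with $\Delta$ finite force $\Sigma$ and $\Sigma'$ to have the same range space, so after restricting to that subspace we may treat both as positive definite; condition~3 is already written with pseudo-inverses for exactly this reason.

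The mean factor is easy: apply condition~1 with the test vector $v=M^{\dagger}(\mu-\mu')$. Since $\Iprod{\mu-\mu',v}=(\mu-\mu')^{\top}M^{\dagger}(\mu-\mu')$ and $v^{\top}(\Sigma+\Sigma')v=2v^{\top}Mv=2(\mu-\mu')^{\top}M^{\dagger}(\mu-\mu')$, the inequality $\Iprod{\mu-\mu',v}^2\le\Delta^2 v^{\top}(\Sigma+\Sigma')v$ collapses to $(\mu-\mu')^{\top}M^{\dagger}(\mu-\mu')\le 2\Delta^2$, so the exponential factor is at least $\exp(-\Delta^2/4)$.

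For the determinant factor I would simultaneously diagonalize: set $R=\Sigma^{\dagger/2}\Sigma'\Sigma^{\dagger/2}$ with eigenvalues $\lambda_1,\dots,\lambda_r>0$. Using $\det(\Sigma')=\det(\Sigma)\prod_i\lambda_i$ and $\det(M)=\det(\Sigma)\prod_i\tfrac{1+\lambda_i}{2}$ one gets
\[
\frac{\det(\Sigma)^{1/4}\det(\Sigma')^{1/4}}{\det(M)^{1/2}}=\prod_{i=1}^{r}\Paren{\frac{2\sqrt{\lambda_i}}{1+\lambda_i}}^{1/2}\mcom
\]
so the task is to upper bound $\sum_i\log\tfrac{1+\lambda_i}{2\sqrt{\lambda_i}}=\sum_i\log\cosh\Paren{\tfrac12\log\lambda_i}$. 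Spectral closeness (condition~2) gives $\lambda_i\in[\Delta^{-2},\Delta^2]$, hence $\lvert\log\lambda_i\rvert\le 2\log\Delta$ and $\log\cosh(\tfrac12\log\lambda_i)\le\log\Delta$ for \emph{every} $i$; relative Frobenius closeness (condition~3) together with condition~2 gives $\sum_i(\lambda_i-1)^2=\Norm{R-I}_F^2\le\Delta^2\Norm{R}_{op}^2$. I would then split the eigenvalues: for the ``near'' ones with $\lvert\lambda_i-1\rvert\le 1/2$ use the elementary estimate $\log\cosh(\tfrac12\log\lambda)\le\tfrac12(\lambda-1)^2$ (valid there because $\lvert\log\lambda\rvert\le 2\lvert\lambda-1\rvert$ on $[1/2,3/2]$ and $\log\cosh x\le x^2/2$), and for the ``far'' ones — of which there are at most $4\sum_i(\lambda_i-1)^2$ by a Markov argument — use the uniform cap $\log\Delta$. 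Summing, $\sum_i\log\cosh(\tfrac12\log\lambda_i)=O\big((\log\Delta)\cdot\sum_i(\lambda_i-1)^2\big)$, which is polynomial in $\Delta$ times $\log\Delta$, and a careful accounting of how conditions~2 and~3 interact gives the $O(\Delta^2\log\Delta)$ in the statement; combined with the mean factor this yields $-\log\alpha\le O(\Delta^2\log\Delta)$ and hence the proposition.

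The genuinely delicate point — the main obstacle — is precisely this last estimate: a naive second-order Taylor bound on $\log\cosh$ around $\lambda=1$ is useless for eigenvalues of $R$ that are far from $1$, since such an eigenvalue contributes like $(\lambda-1)^2$ which is as large as $\Delta^4$. The fix is structural: the spectral bound (condition~2) caps the \emph{individual} contribution of any eigenvalue at $O(\log\Delta)$, while the relative Frobenius bound (condition~3) controls \emph{how many} eigenvalues can be far from $1$, and only the interplay of the two keeps $-\log\alpha$ under control. Everything else — the diagonalization identity, the Hellinger--TV passage, and the scalar inequalities — is routine.
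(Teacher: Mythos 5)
Your proposal takes essentially the paper's route: lower bound the Hellinger affinity via the exact Gaussian formula, handle the mean factor by testing the mean-closeness hypothesis against $v=M^{\dagger}(\mu-\mu')$, and control the determinant ratio by splitting the eigenvalues $\lambda_i$ of $\Sigma^{\dagger/2}\Sigma'\Sigma^{\dagger/2}$ into a near-$1$ regime bounded by a quadratic estimate against $\sum_i(\lambda_i-1)^2$ and a far regime whose count is controlled by the Frobenius condition and whose per-eigenvalue contribution is capped by $\log\Delta$ via the spectral condition. Your symmetric threshold $|\lambda_i-1|\le 1/2$ is in fact a small improvement over the paper's treatment, which splits at $\lambda_i<1.5$ and then applies the Taylor bound $\log(1+\beta)\ge\beta-\beta^2/2$ to the entire small regime; that inequality reverses for $\beta<0$ and fails badly for $\beta$ near $-1$, i.e.\ for $\lambda_i$ near $0$, whereas your version routes such eigenvalues through the uniform $\log\Delta$ cap (which holds at both ends of the spectrum by the spectral-closeness condition) and sidesteps the issue entirely.
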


\begin{proof}[Proof of Lemma~\ref{prop:tv-vs-param-for-gaussians}] 

We will work with the distributions after applying the transformation $x \rightarrow \Sigma^{-1/2}x$ to the associated random variables. Since $\dtv$ is invariant under affine transformations, this is WLOG. The transformation produces distributions $\cN(\mu_1, I)$ and $\cN(\Sigma^{-1/2} \mu', \Sigma^{-1/2} \Sigma' \Sigma^{-1/2})$ for $\mu_1 = \Sigma^{-1/2} \mu$, $\mu_2 =\Sigma^{-1/2} \mu' $ and $\Sigma_2 = \Sigma^{-1/2} \Sigma' \Sigma^{-1/2}$.

We will first bound the Hellinger distance between the two distributions above. Recall that $h= h(\cN(\Sigma^{-1/2} \mu, I),\cN(\Sigma^{-1/2} \mu', \Sigma^{-1/2} \Sigma' \Sigma^{-1/2}))$ satisfies:
\[
h(\cN(\mu_1,I), \cN(\mu_2,\Sigma_2))^2 =1 - \frac{\det(\Sigma_2)^{1/4}}{\det\left( \frac{I+\Sigma_2}{2}\right)^{\frac{1}{2}}} \exp\left(-\frac{1}{8} (\mu_1 - \mu_2)^{\top}\Paren{\frac{I + \Sigma_2}{2}}^{-1} (\mu_1 - \mu_2) \right)\mper
\]
We will estimate the RHS of the expression above to bound the Hellinger distance.

From the mean closeness condition, we have:

\[
\iprod{\mu_1-\mu_2,v}= \langle \mu - \mu', \Sigma^{-1/2} v \rangle \leq \sqrt{\log 1/\eta} \sqrt{ v^{\top} (I + \Sigma_2) v } \mper
\]

Plugging in $v = \Paren{\frac{I + \Sigma_2}{2}}^{-1} (\mu_1 - \mu_2)$ gives:
\[
\iprod{\mu_1-\mu_2,\frac{I + \Sigma_2}{2}^{-1} (\mu_1 - \mu_2)} \leq  2/\eta \sqrt{ v^{\top} \Paren{\frac{I + \Sigma_2}{2}}^{-1} v } \mcom
\]
or, \[ \iprod{\mu_1-\mu_2,\Paren{\frac{I + \Sigma_2}{2}}^{-1} (\mu_1 - \mu_2)} \leq 4 1/\eta^2  \mper\]

And thus,
\[
\exp\left(-\frac{1}{8} (\mu_1 - \mu_2)^{\top} \Paren{ \frac{I + \Sigma_2}{2}}^{-1} (\mu_1 - \mu_2) \right) \geq \exp{\Paren{-1/2\eta^2}}\mper
\]

Thus, we have:
\[
h \leq 1- \frac{\det(\Sigma_2)^{\frac{1}{4}}}{\det \Paren{ \frac{\II+ \Sigma_2}{2} }^{1/2}} \exp{\Paren{-1/2\eta^2}}\mper
\]

Let $\lambda_1 \geq \lambda_2 \geq \cdots \lambda_d$ be eigenvalues of $\Sigma_2$.
From the spectral closeness condition, observe that each $\frac{1}{\eta} \geq \lambda_1 \geq \cdots \lambda_d \geq \eta$.

Then,
\[
\frac{\det(\Sigma_2)^{\frac{1}{4}}}{\det \Paren{ \frac{\II+ \Sigma_2}{2} }^{1/2}} = \frac{\Pi_{i \leq d} \lambda_i^{1/4}} { \Pi_{i \leq d} \Paren{\frac{1+\lambda_i}{2}}^{1/2}}\mper
\]
Thus,
\begin{equation}
\log(1/(1-h)) \leq \frac{1}{2} \log (1/\eta) + \frac{1}{2} \sum_{i\in [d]}\log\left(\frac{1+\lambda_i}{2\sqrt{\lambda_i}}\right)\mper \label{eq:hellinger-bound}
\end{equation}

We break the second term in the RHS above based on the magnitude of the eigenvalues $\lambda_i$s.
Let's first bound the contribution to this term coming from eigenvalues $\lambda_i \geq 1.5$ - let's call these the \emph{large} eigenvalues of $\Sigma_2$.

Next, observe that the Relative Frobenius Closeness condition gives us that $\|\II - \Sigma_2 \|^2_F \leq (1/\eta^2)$. Thus, $\sum_{i \in [d]}(1- \lambda_i)^2 = \|\II - \Sigma_2 \|^2_F \leq (1/\eta^2)$, the number of large eigenvalues is at most $4/\eta^2$. Further, for every large eigenvalue $\lambda_i$, $1+\lambda_i \leq 2\lambda_i$. Thus,

\begin{equation*}
\sum_{i: \lambda_i \text{ is large }} \log\left(\frac{1+\lambda_i}{2\sqrt{\lambda_i}}\right) \leq \sum_{i\in \cE} \log\left(\sqrt{\lambda_i} \right) \leq \frac{2}{\eta}\cdot \log(1/\eta)
\end{equation*}
where the last step uses that $\lambda_i \leq 1/\eta$.

Let's now consider all the remaining \emph{small} eigenvalues that satisfy $\eta \leq\lambda_i < 1.5$. Then, we can write $\lambda_i = 1+\beta_i$ such that $-(1-\eta)\leq \beta_i \leq 0.5$. Then, we have
\begin{equation*}
\begin{split}
\sum_{i:\lambda_i \leq 1.5} \log\left(\frac{1+\lambda_i}{2}\right) + \frac{1}{2}\log\left(\frac{1}{\lambda_i}\right) &= \sum_{i \in \cE'} \log\left(1+\frac{\beta_i}{2}\right) - \frac{1}{2}\log\left(1+\beta_i\right) \\
&\leq \sum_{i:\lambda_i \leq 1.5} \frac{\beta_i}{2} - \frac{\beta_i}{2} + \frac{\beta_i^2}{4}\\
& = \sum_{i:\lambda_i \leq 1.5} \frac{(1-\lambda_i)^2}{4} \leq \frac{1}{4\eta^2}
\end{split}
\end{equation*}
using the bound $\sum_{i} (1-\lambda_i)^2 \leq \frac{1}{\eta^2}$ in the last inequality.
Plugging this estimate back in \eqref{eq:hellinger-bound} yields $h \geq 1- \exp (-O(1/\eta^2 \log (1/\eta))$.

To finish the proof, we observe that $\dtv(p,q) \leq h(p,q) \sqrt{2-h(p,q)} \leq  1-\exp (-O(1/\eta^2 \log (1/\eta))$.

\end{proof}

\section{Typical Samples are Good with High Probability}
\label{sec:feasibility}

\begin{proof}[Proof of Lemma~\ref{lem:typical-samples-good}] 
We begin with the empirical mean condition. 
For any fixed $\ell$, $C_{\ell}$ contains samples from a $1$-Sub-gaussian distributions and thus it follows from Fact \ref{fact:empirical_mean} that with probability at least $1-(1/\delta)$,
\begin{equation*}
\Iprod{\mu_{\ell} - \hat{\mu}_{\ell} , \Sigma^{\dagger/2}_{\ell} v}^2 = v^{\top} \Sigma^{\dagger/2}_{\ell}(\mu_{\ell} - \hat{\mu}_{\ell})(\mu_{\ell} - \hat{\mu}_{\ell})^{\top}\Sigma^{\dagger/2}_{\ell} v \leq  \left( \frac{ k r + \log(1/\delta) k}{n} \right)  v^T v
\end{equation*} 
Since $n_0= \Omega( (k\log(rk) + kr))$, we can substitute $v \to \Sigma^{1/2}_{\ell}v$ to get 
\begin{equation*}
\Iprod{\mu_{\ell} - \hat{\mu}_{\ell} , \Sigma^{\dagger/2}_{\ell} \Sigma^{1/2}_{\ell} v}^2  \leq  1.01  v^T\Sigma_{\ell} v
\end{equation*} 
Observe, $\Iprod{\mu_{\ell} - \hat{\mu}_{\ell} , \Sigma^{\dagger/2}_{\ell} \Sigma^{1/2}_{\ell} v} = \Iprod{\Sigma^{\dagger/2}_{\ell} \Sigma^{1/2}_{\ell} (\mu_{\ell} - \hat{\mu}_{\ell}),  v} = \Iprod{\mu_{\ell} - \hat{\mu}_{\ell},  v}$, where the last equality follows from observing that $\mu_{\ell} - \hat{\mu}_{\ell}$ lies in the subspace spanned by $\Sigma_{\ell}$. 
Union bound over failure events for all  $\ell \in [k]$ and thus with probability at least $1-1/\poly(k)$, for all $\ell \in [k]$, $\Iprod{\mu_{\ell} - \hat{\mu}_{\ell} ,  v}^2  \leq  1.01  v^T\Sigma_{\ell} v$.

Similarly,
using Fact \ref{fact:empirical_cov} for i.i.d. samples from a $1$-Sub-gaussian distribution, it follows that for a fixed $\ell \in [k]$, with probability at least $1-1/d^{10}$,
\[
\left(1- c\sqrt{\frac{rk\log(k)}{n}}\right)\Sigma_{\ell} \preceq \hat{\Sigma}_{\ell} \preceq \left(1+c\sqrt{\frac{rk\log(k)}{n}}\right) \Sigma_{\ell}\] 
for fixed constants $c$. Union bounding over $\ell \in [k]$, and observing that $n_0 = \Omega(rk\log(k)/2^{2s})$ with probability at least $1-1/k^{8}$ for all $\ell \in [k]$, 
\begin{equation}
\label{eqn:sample_complexity_cov}
\left(1-\frac{1}{2^{2s}}\right)\Sigma_{\ell} \preceq \hat{\Sigma}_{\ell} \preceq \left(1+\frac{1}{2^{2s}}\right) \Sigma_{\ell}
\end{equation}
for any $s>2$, 
which concludes the empirical covariance condition.
\noindent By definition of a ``nice'' distribution, we know that the points in $C_{\ell}$ are drawn i.i.d. from a $s$-certifiably $(C,\delta)$-anti-concentrated distribution denoted by $\cD(\mu_\ell, \Sigma_{\ell})$ and thus for all $\eta$,
\[
\sststile{2s}{v} \Set{
\expecf{x,y \sim \cD(\mu_\ell, \Sigma_{\ell}) }{ q^2_{\eta, \Sigma_{\ell}} \Paren{\iprod{ x-y,v} }} \leq C\eta \Paren{v^{\top}\Sigma_{\ell} v}^{s}}
\]
Consider the substitution $v \to \Sigma^{\dagger/2} v$. Then,
\[
\sststile{2s}{v} \Set{
\expecf{x,y \sim \cD(\mu_\ell, \Sigma_{\ell}) }{ q^2_{\eta, \Sigma_{\ell}} \Paren{\Iprod{\Sigma^{\dagger/2}_{\ell}(x-y),v} }} \leq C\eta \Norm{v}^{2s}_2}
\] 
Since $q_{\eta, \hat{\Sigma}}$ is a degree-$s$ even polynomial, $q^2_{\eta, \hat{\Sigma}}(z) = \sum_{i\in[s]}c_{i}z^{2i}$ and thus using the substitution rule, 
\Anote{still assuming even-ness of q.}
\begin{equation}
\label{eqn:cert_anit_conc_dist}
\sststile{2s}{v} \Set{\sum_{j \in [s]} c_i \left\langle \E_{ x, y \sim \cD(\mu_\ell, \Sigma_{\ell})} \Paren{\Sigma^{\dagger/2}_{\ell}(x-y)}^{\otimes 2j} , v^{\otimes 2j} \right\rangle \leq C\eta \Norm{v}^{2s}_2} 
\end{equation}
Let $\cD$ be the true distribution and $\cD'$ be the uniform distribution over $n$ samples from $\cD$.
We can rewrite the above expression by adding and subtracting $\E_{ x, y \sim \cD'} \Paren{\Sigma^{\dagger/2}_{\ell}(x-y)}^{\otimes 2j}$ as follows:
\begin{equation}
\label{eqn:cert_anit_conc_dist2}
\begin{split}
\sststile{2s}{v}\Bigg\{  \frac{k^2}{n^2}\sum_{\substack{ i\neq j \in C_\ell}}  q^2_{\eta, \hat{\Sigma}(r)} \Paren{x_i-x_j,\Sigma^{\dagger/2} v } & \leq   \sum_{j \in [s]} c_i \left\langle \E_{ x, y \sim \cD} \Paren{\Sigma^{\dagger/2}_{\ell}(x-y)}^{\otimes 2j} - \E_{ x, y \sim \cD'} \Paren{\Sigma^{\dagger/2}_{\ell}(x-y)}^{\otimes 2j} , v^{\otimes 2j} \right\rangle \\
& + C\eta \Norm{v}^{2s}_2 \Bigg\}
\end{split}
\end{equation}
By definition of a reasonable distribution, we know that $\Sigma^{\dagger/2}(x-y)$ is certifiably hypercontractive (and thus subgaussian with covariance bounded by identity).
Then, using concentration of polynomials of sub-exponential random variables, for all $i_1,i_2 \in [d^j]$, 
\begin{equation*}
\begin{split}
 \Pr_{x\sim \cD}\Bigg[ \Big|\expecf{x,y\sim\cD(\mu_\ell, \Sigma_{\ell})}{((x-y)^{\otimes j})_{i_1} ((x-y)^{\otimes j})_{i_2}} & -  \expecf{x,y\sim\cD(\mu_\ell, \hat{\Sigma}_{\ell})}{((x-y)^{\otimes j})_{i_1} ((x-y)^{\otimes j})_{i_2}}\Big|  > \epsilon \Bigg] \\
& \leq \exp\left(-\left(\frac{ \epsilon n }{\expec{x,y}{((x-y)^{\otimes j})_{i_1} ((x-y)^{\otimes j})_{i_2}}^2}\right)^{\frac{1}{2s}} \right)
\end{split}
\end{equation*}

Setting $\epsilon = \expecf{x,y\sim\cD(\mu_\ell, \Sigma_{\ell})}{((x-y)^{\otimes j})_{i_1} ((x-y)^{\otimes j})_{i_2}}/2^{2s}$, and union bounding over $d^s$ entries, we can bound error probability by 
$d^{2s} \exp\left(-\left(\frac{ n }{(2d)^{O(s)}}\right)^{\frac{1}{2s}} \right)$.
Therefore, setting $n = \Omega( (s d\log(d))^s)$ suffices and substituting $v \to \Sigma^{1/2}v$, we have  with probability $1-1/\poly(d)$,
\begin{equation}
\label{eqn:cert_anit_conc_dist3}
\sststile{2s}{v} \Set{ \frac{k^2}{n^2}\sum_{\substack{ i\neq j \in C_\ell}}  q^2_{\eta, \hat{\Sigma}(r)} \Paren{x_i-x_j,v } \leq  \left(1 + \frac{1}{2^{2s}}\right)^s  \sum_{j \in [s]} c_i \left\langle \E_{ x, y \sim \cD(\mu_\ell, \Sigma_{\ell})} (x-y)^{\otimes 2j}, v^{\otimes 2j} \right\rangle + C\eta \Paren{v^{\top}\Sigma_{\ell} v}^{2s}_2} 
\end{equation}
Applying the definition of certifiable anti-concentration again, and using the spectral closeness from Eqn \eqref{eqn:sample_complexity_cov}, we can conclude 
\begin{equation}
\label{eqn:cert_anit_conc_dist3}
\sststile{2s}{v} \Set{ \frac{k^2}{n^2}\sum_{\substack{ i\neq j \in C_\ell}}  q^2_{\eta, \hat{\Sigma}(r)} \Paren{x_i-x_j,v } \leq  10C\eta \Paren{v^{\top} \hat{\Sigma}_{\ell} v}^{2s}_2} 
\end{equation}
\Anote{ The trick of first substituting $\Sigma^{\dagger/2}v$ and then substituting $\Sigma^{1/2}v$, then observing that $\Sigma^{\dagger/2}\Sigma^{1/2} (x_1 - x_2) = (x_1 - x_2)$ leads to a sos proof of the above without breaking v into the component in the subspace etc.  }
A similar proof applies to $4$-tuples and yields the second property for anti-concentration. 



Since for all $\ell \in [k]$, $\cD(\mu_\ell, \Sigma_{\ell})$ is also $s$-certifiably $C$-hypercontractive, 
\begin{equation}
\label{eqn:cert_hyper_contractivity_dist}
\sststile{2s}{Q} \Set{ \E_{x, y \sim \cD(\mu_\ell, \Sigma_{\ell})} \left[  ( (x-y)^{\top} Q (x-y) )^{s} \right] \leq (Cs)^s \E_{x \sim \cD(\mu_\ell, \Sigma_{\ell})} \left[  ((x-y)^{\top} Q (x-y))^2 \right]^{s/2} }
\end{equation}
Substituting $Q = \Sigma^{\dagger/2} Q \Sigma^{\dagger/2}$ and observing $(x-y)^{\top} \Sigma^{\dagger/2} Q \Sigma^{\dagger/2} (x-y) = \Iprod{ \Sigma^{\dagger/2}(x-y)(x-y)^{\top}\Sigma^{\dagger/2}, Q} = \Iprod{ \Paren{\Sigma^{\dagger/2}(x-y)}^{\otimes 2}, Q} $, we have

\begin{equation}
\label{eqn:cert_hyper_contractivity_dist2}
\sststile{2s}{Q} \Set{ \E_{x,y \sim \cD(\mu_\ell, \Sigma_{\ell})} \left[ \left( \left\langle (\Sigma^{\dagger/2}(x-y) )^{\otimes 2} ,Q \right\rangle  \right)^{s} \right] \leq (Cs)^s \E_{x \sim \cD(\mu_\ell, \Sigma_{\ell})} \left[  ((x-y)^{\top} \Sigma^{\dagger/2} Q \Sigma^{\dagger/2} (x-y))^2 \right]^{s/2} }
\end{equation}
Observing that $\expecf{x,y\sim \cD}{(x-y)}=0$, we can apply Lemma \ref{lem:mean-variance-general-polynomials} to derive
\begin{equation}
\label{eqn:cert_hyper_contractivity_dist2}
\sststile{2s}{Q} \Set{   \left( \left\langle \E_{x,y \sim \cD(\mu_\ell, \Sigma_{\ell})} (\Sigma^{\dagger/2}(x-y) )^{\otimes 2s} ,Q^{\otimes s} \right\rangle  \right)  \leq (Cs)^{2s} \|  Q  \|^2_F  }
\end{equation}
Let $\cD$ represent the true distribution and $\cD'$ represent the uniform distribution over pairs $(x_i,x_j)$ sampled from $\cD$. Then, adding and subtracting $\Iprod{\E_{x,y \sim \cD'} (\Sigma^{\dagger/2}(x-y) )^{\otimes 2s} , Q^{\otimes s}}$, we have

\begin{equation}
\label{eqn:cert_hyper_contractivity_dist2}
\sststile{2s}{Q} \Set{  \frac{k^2}{n^2} \sum_{\substack{ i\neq j \in C_\ell}} \left((x-y)^{\top} \Sigma^{\dagger/2} Q \Sigma^{\dagger/2} (x-y)\right)^s   \leq |\Delta| + (Cs)^{2s} \| Q  \|^2_F  }
\end{equation}
where $\Delta = \Iprod{\E_{x,y \sim \cD'} (\Sigma^{\dagger/2}(x-y) )^{\otimes 2s} , Q^{\otimes s}} - \Iprod{\E_{x,y \sim \cD} (\Sigma^{\dagger/2}(x-y) )^{\otimes 2s} , Q^{\otimes s}}$. Using Lemma \ref{lem:mean-variance-general-polynomials}, we can bound $\Delta$ by $C^s \Norm{Q}^{2s}_F$, to obtain

\begin{equation}
\label{eqn:cert_hyper_contractivity_dist2}
\sststile{2s}{Q} \Set{  \frac{k^2}{n^2} \sum_{\substack{ i\neq j \in C_\ell}} \left((x-y)^{\top} \Sigma^{\dagger/2} Q \Sigma^{\dagger/2} (x-y)\right)^s   \leq  (2Cs)^{2s} \| Q  \|^2_F  }
\end{equation}
Substituting $Q \to \Sigma^{1/2}_{\ell} Q \Sigma^{1/2}_{\ell}$, and observing that $\Sigma^{1/2}_{\ell}\Sigma^{\dagger/2}_{\ell}(x_i - x_j) = (x_i - x_j)$, we can conclude 
\begin{equation}
\label{eqn:cert_hyper_contractivity_dist2}
\sststile{2s}{Q} \Set{  \frac{k^2}{n^2} \sum_{\substack{ i\neq j \in C_\ell}} \left((x-y)^{\top}  Q  (x-y)\right)^s   \leq  (2Cs)^{2s} \| \Sigma^{1/2}_{\ell} Q \Sigma^{1/2}_{\ell}  \|^2_F  }
\end{equation}

A similar argument holds for $4$-tuples of samples. The final claim about certifiably bounded variance property follows by a similar bound on the empirical moments of the distribution along with Lemma~\ref{lem:bounded-variance-under-sampling}. This concludes the proof.  
\end{proof}
\newcommand{\thr}{\mathsf{thr}}

\section{Polynomial Approximators for Thresholds} \label{sec:poly-approx-threshold}

We will use elementary approximation theory to construct the polynomial.  
\begin{fact}[Jackson's Theorem] \label{fact:jackson}
Let $f:[-1,1]\rightarrow \R$ be continuous. 
Let the modulus of continuity of $f$ be defined as $\omega(\delta) = \sup_{x,y \in [-1,1]} \Set{|f(x)-f(y)| \leq \delta}$ for every $\delta > 0$.
Then, for every $b$, there's a degree $b$ polynomial $p$ such that for every $x \in [-1,1]$, 
\[ 
|p(x)-f(x)| \leq 6 \omega(1/b)\mper
\]
\end{fact}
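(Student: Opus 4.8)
The plan is to prove Jackson's theorem by the classical route: pass from algebraic approximation on $[-1,1]$ to trigonometric approximation of an even periodic function, and then convolve against the Jackson kernel. Fix an integer $b\geq 1$ and a continuous $f:[-1,1]\to\R$, and set $g(\theta)=f(\cos\theta)$, an even continuous $2\pi$-periodic function. Since $\Abs{\cos\theta_1-\cos\theta_2}\leq\Abs{\theta_1-\theta_2}$ and the modulus of continuity is nondecreasing, the modulus of continuity $\omega_g$ of $g$ satisfies $\omega_g(\delta)\leq\omega(\delta)$ for all $\delta>0$. Suppose we can produce an \emph{even} trigonometric polynomial $T(\theta)=\sum_{k=0}^{b}a_k\cos(k\theta)$ with $\Norm{g-T}_\infty\leq 6\,\omega_g(1/b)$. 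Writing $\cos(k\theta)=T_k(\cos\theta)$ for the Chebyshev polynomials $T_k$, the algebraic polynomial $p(x)=\sum_{k=0}^{b}a_kT_k(x)$ has degree at most $b$, and for every $x\in[-1,1]$, taking $\theta=\arccos x\in[0,\pi]$ gives $\Abs{p(x)-f(x)}=\Abs{T(\theta)-g(\theta)}\leq 6\,\omega_g(1/b)\leq 6\,\omega(1/b)$. So the whole content is the trigonometric Jackson inequality with an even approximant.

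For the trigonometric step I would use the Jackson kernel. Let $m-1=\lfloor b/2\rfloor$, so that $2(m-1)\leq b$ and $m\geq b/2$. The function $\bigl(\tfrac{\sin(mt/2)}{\sin(t/2)}\bigr)^{4}$ is a nonnegative trigonometric polynomial of degree $2(m-1)\leq b$; normalize it by a positive constant $\lambda_m$ so that $J_m(t)=\lambda_m\bigl(\tfrac{\sin(mt/2)}{\sin(t/2)}\bigr)^{4}$ obeys $\tfrac{1}{2\pi}\int_{-\pi}^{\pi}J_m(t)\,dt=1$. Two elementary estimates drive the argument: (i) $\lambda_m\leq C\,m^{-3}$, obtained by bounding $\int_{-\pi}^{\pi}\bigl(\sin(mt/2)/\sin(t/2)\bigr)^{4}\,dt$ from below by restricting to $\Abs{t}\leq 1/m$, where $\sin(mt/2)/\sin(t/2)\gtrsim m$; and (ii) $\tfrac{1}{2\pi}\int_{-\pi}^{\pi}\Abs{t}\,J_m(t)\,dt\leq C'/m$, obtained by splitting the integral at $\Abs{t}=1/m$, using $\Abs{\sin(mt/2)}\leq 1$ and $\Abs{\sin(t/2)}\geq\Abs{t}/\pi$ on $\Abs{t}\leq\pi$, so that the fourth power makes $\int_{1/m}^{\pi}\Abs{t}^{-3}\,dt$ converge and contribute $O(m^{2})$, which combined with (i) gives $O(1/m)$. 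Now define $T(\theta)=\tfrac{1}{2\pi}\int_{-\pi}^{\pi}g(\theta-t)\,J_m(t)\,dt$; this is a trigonometric polynomial of degree at most $2(m-1)\leq b$, and it is even since both $g$ and $J_m$ are even. Using $\tfrac{1}{2\pi}\int J_m=1$ and the standard dilation bound $\omega_g(\lambda\delta)\leq(1+\lambda)\,\omega_g(\delta)$ (telescoping for integer $\lambda$, then rounding up), with $\delta=1/m$,
\[
\Abs{T(\theta)-g(\theta)}\leq\frac{1}{2\pi}\int_{-\pi}^{\pi}\omega_g(\Abs{t})\,J_m(t)\,dt\leq\omega_g(1/m)\cdot\frac{1}{2\pi}\int_{-\pi}^{\pi}(1+m\Abs{t})\,J_m(t)\,dt\leq(1+C')\,\omega_g(1/m).
\]
Finally, since $m\geq b/2$, monotonicity and the dilation bound give $\omega_g(1/m)\leq\omega_g(2/b)\leq 3\,\omega_g(1/b)$, so $\Norm{g-T}_\infty\leq 3(1+C')\,\omega_g(1/b)$; with the sharp normalization of the fourth-power kernel one has $C'\leq 1$, yielding the constant $6$.

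The main obstacle is precisely the pair of integral estimates (i) and (ii) for the Jackson kernel, together with the constant-chasing needed to land on the value $6$ rather than merely ``some absolute constant.'' Everything else is routine: the $x=\cos\theta$ substitution and the transfer of moduli of continuity, the fact that convolution against a degree-$\leq b$ kernel produces a degree-$\leq b$ trigonometric polynomial, the evenness of the kernel, and the dilation inequality for $\omega_g$. If one only needs $\Abs{p(x)-f(x)}\leq C\,\omega(1/b)$ for an unspecified absolute $C$, the above is essentially complete as written; obtaining exactly the stated constant $6$ requires the careful optimization of $\lambda_m$ and of the $\Abs{t}$-moment of the fourth-power kernel, which I would either carry out explicitly by tracking the elementary inequalities above with optimal split points, or invoke from the classical literature on Jackson-type estimates.
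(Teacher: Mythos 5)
The paper does not actually prove this statement: Fact~\ref{fact:jackson} is imported as a black-box classical result (Jackson's theorem) from the approximation-theory literature, so there is no in-paper argument to compare against. Your proposal is the standard classical proof, and its structure is sound: the substitution $x=\cos\theta$ and the transfer of the modulus of continuity via $|\cos\theta_1-\cos\theta_2|\le|\theta_1-\theta_2|$, the conversion of an even trigonometric approximant back to an algebraic polynomial through Chebyshev polynomials, the degree count $2(m-1)\le b$ for the fourth-power Jackson kernel, the evenness of the convolution, the two kernel estimates $\lambda_m=O(m^{-3})$ and $\frac{1}{2\pi}\int|t|J_m(t)\,dt=O(1/m)$, and the dilation bound $\omega_g(\lambda\delta)\le(1+\lambda)\omega_g(\delta)$ are all correct as you use them (in fact $\omega_g(2/b)\le 2\,\omega_g(1/b)$ by subadditivity, slightly better than your factor $3$). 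The only incomplete piece is the one you flag yourself: landing on the constant $6$ rather than an unspecified absolute constant requires a sharper evaluation of the kernel's first absolute moment (e.g., via Cauchy--Schwarz against its second moment, which can be computed exactly) than the crude split at $|t|=1/m$, or else a citation to the classical treatments (Cheney, Rivlin) where the constant $6$ is worked out. Two minor remarks: the paper's displayed definition of $\omega(\delta)$ is garbled and you silently used the correct one, which is the intended reading; and the paper's sole application of this fact, in Lemma~\ref{lem:poly-approximate-threshold}, is insensitive to the exact constant (any absolute constant would do after adjusting the degree $q$), so your argument as written already suffices for everything the paper needs.
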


The following lemma gives an ``amplifying polynomial'' as in ~\cite{DBLP:journals/eccc/DiakonikolasGJSV09} and is an easy consequence of Chernoff bounds.

\begin{fact}[Claim 4.3 in \cite{DBLP:journals/eccc/DiakonikolasGJSV09}] \label{fact:amplification}
Let $A_q(u) = \sum_{j\geq q/2} {q \choose j} \Paren{\frac{1+u}{2}}^j  \Paren{\frac{1-u}{2}}^{q-j}$. Then, $A_q$ is a degree $q$ polynomial that satisfies:
\begin{enumerate}
\item $A_q(u) \in [1-e^{q/6},1]$ for all $u \in [3/5,1]$,
\item $A_q(u) \in [0,e^{-q/6}]$ for all $u \in [-1,-3/5]$,
\item $A_q(u) \in [0,1]$ for all $u \in [-1,1]$.
\end{enumerate}

\end{fact}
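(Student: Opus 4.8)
The plan is to read $A_q$ off as a binomial tail probability and then quote standard concentration bounds. Fix $u\in[-1,1]$ and set $p=\frac{1+u}{2}\in[0,1]$, and let $X\sim\mathrm{Bin}(q,p)$, i.e.\ $X=\sum_{i=1}^q Y_i$ with $Y_i$ i.i.d.\ $\mathrm{Bernoulli}(p)$. Then $\Pr[X=j]=\binom{q}{j}p^j(1-p)^{q-j}=\binom{q}{j}\Paren{\tfrac{1+u}{2}}^{j}\Paren{\tfrac{1-u}{2}}^{q-j}$, so $A_q(u)=\Pr[X\ge q/2]$. Viewed as a function of $u$, each summand is a polynomial of degree at most $q$ and there are at most $q+1$ of them, so $A_q$ is a polynomial of degree at most $q$, which gives the degree claim. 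Since $A_q(u)$ is a probability for every $u$ with $p\in[0,1]$, i.e.\ every $u\in[-1,1]$, property (3) is immediate.

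For property (1), I would observe that $u\ge 3/5$ forces $p=\frac{1+u}{2}\ge 4/5$, so $\E[X]=qp\ge 4q/5>q/2$ and the event $\{X<q/2\}$ is a lower-tail deviation of at least $qp-q/2\ge 3q/10$ from the mean. Hoeffding's inequality then gives $1-A_q(u)=\Pr[X<q/2]\le\exp\paren{-2(3/10)^2 q}=\exp(-0.18\,q)\le e^{-q/6}$, i.e.\ $A_q(u)\in[1-e^{-q/6},1]$ for all $u\in[3/5,1]$ (the intended bound, with exponent $-q/6$). One may equally use the relative-entropy Chernoff bound $\Pr[X\le q/2]\le\exp\paren{-q\,D(\tfrac12\|p)}$ with $D(\tfrac12\|p)=\tfrac12\ln\frac{1}{4p(1-p)}$, which is increasing in $p$ on $(\tfrac12,1)$ and hence minimized over $[\tfrac45,1]$ at $p=\tfrac45$, where it equals $\tfrac12\ln\tfrac{25}{16}>\tfrac16$. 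Property (2) is the mirror image: $u\le-3/5$ forces $p\le 1/5$, so $\E[X]=qp\le q/5<q/2$, the event $\{X\ge q/2\}$ is an upper-tail deviation of at least $q/2-qp\ge 3q/10$, and the same Hoeffding (or relative-entropy) bound yields $A_q(u)=\Pr[X\ge q/2]\le e^{-q/6}$.

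There is no real obstacle here; the only point that takes a moment is verifying that the concentration exponent beats the prescribed constant $\tfrac16$ uniformly over the relevant window of $p$, which reduces either to the elementary numerical inequality $2(3/10)^2>\tfrac16$ (Hoeffding version) or to $\tfrac12\ln(25/16)>\tfrac16$ together with the one-line monotonicity of $D(\tfrac12\|\cdot)$ (Chernoff version). I would write the Hoeffding version for brevity and self-containedness.
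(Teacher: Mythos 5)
Your proof is correct and uses exactly the Chernoff/Hoeffding approach that the paper alludes to (the paper itself does not prove this fact but merely cites Claim 4.3 of Diakonikolas et al.\ and remarks that it is ``an easy consequence of Chernoff bounds''). You also correctly observe that item (1) should read $[1-e^{-q/6},1]$; the sign of the exponent as printed in the statement is a typo.
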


\begin{proof}[Proof of Lemma~\ref{lem:poly-approximate-threshold}]
Let $\thr:[0,1]\rightarrow [0,1]$ be any function that is $0$ on $[0,c]$, $1$ on $[2c,1]$

Consider the piecewise linear function $f:[0,1] \rightarrow [0,1]$ such that $f(x) = 0$ whenever $|x| \leq c$, $f(x) = 1$ for $|x|\geq 2c$ and $f(x) = \frac{(x-c)}{c}$ otherwise. Then, $f$ is continuous. Further, the modulus of continuity, $\omega(\delta)$ for $f$ is at most $\frac{1}{c\delta}$.

Taking $q = 25/c$ and applying Fact~\ref{fact:jackson} yields a polynomial $J(t)$ of degree at most $q$ such that:
\[
\max_{t \in [-1,1]} |J(t)-f(t)| \leq 1/4\mper
\]
We now "amplify"  this polynomial to get the final construction. 

Let $p(t) = \Paren{A_r(8/5J(t)-4/5)}^2$ for $r = 15 \log(1/\eta)$. Then, the argument of $A_r$ in $p(t)$ lies in $[3/5,1]$ whenever $t \geq 2c$ and in $[-1,-3/5]$ whenever $t \in [0,c]$. Thus, applying Fact~\ref{fact:amplification} yields that:
\[
\sup_{t\in [0,c]\cup [2c,1]} |p(t) - \thr(t)| \leq 2 e^{-r/6} \leq \eta\mper
\]
\end{proof} 

\section{TV-Close Subgaussian Distributions with Arbitrarily Far Parameters} \label{sec:subgaussian-no-TV-vs-param}

We give a simple example of a pair of (one-dimensional) subgaussian distributions that are $(1-\eta)$-close in TV-distance for some $\eta <1/2$ while have an arbitrarily separated variances. 

For $i =1 ,2$, let $\cD_i$ be the distribution on $\R$ that outputs $0$ with probability $\eta<1/2$ and a sample from Gaussian $\cN(0,\sigma_i^2)$ otherwise. Observe that $\cD_1,\cD_2$ are clearly $2$-subgaussian: $\E_{\cD_i} x^{2} = (1-\eta)\sigma_i^2$ while for every $t$, $\E_{\cD_i} x^{2t} \leq \Paren{\frac{1}{(1-\eta)}}^{t} \Paren{\E_{\cD_i} x^2}^t$. Thus, both $\cD_1, \cD_2$ are $C = \frac{1}{(1-\eta)} \leq 2$-subgaussian. Further, since $\Pr_{\cD_i} [x=0] \geq \eta$, it's immediate that $\dtv(\cD_1, \cD_2)\leq (1-\eta)$. However, since we can choose $\sigma_1,\sigma_2$ arbitrary, the variances of $\cD_1,\cD_2$ are arbitrarily far.

Observe, however, that both $\cD_1,\cD_2$ are \emph{not} anti-concentrated in the construction above. 
Observe, further that when $\eta$ gets close to $1$ (instead of $\leq 1/2$), the constant $C$ in Sub-gaussianity blows-up. Thus, if we fix $C$ before-hand and look at all C-subgaussian distributions, then we can hope to prove TV-closeness implies parameter closeness when TV distance is small enough but not when it's close to $1$.



\end{document}